\documentclass[review=false]{jfp-epi}
\usepackage{jfp-cup2epi}

\usepackage{cleveref}
\usepackage{hyperref}
\usepackage{caption}

\usepackage{soul}
\sethlcolor{yellow}

\let\originalleft\left
\let\originalright\right
\renewcommand{\left}{\mathopen{}\mathclose\bgroup\originalleft}
\renewcommand{\right}{\aftergroup\egroup\originalright}

\input{AGTdefinitions/definitions.sty}
\usepackage{comment}
\usepackage{amsthm}
\usepackage{array}
\usepackage{mathrsfs} 
\usepackage{graphicx}
\usepackage{tensor}
\usepackage{wrapfig}
\usepackage{boondox-cal}
\usepackage{paralist}
\usepackage{minibox}

\newcommand{\ExEndSymbol}{$\square$}

\newtheorem{lemma}{Lemma}
\newtheorem{definition}{Definition}
\newtheorem{theorem}{Theorem}
\theoremstyle{definition}
\newtheorem{ExampleNoQED}{Example}

\newenvironment{example}
  {\begin{ExampleNoQED}}
  {\hfill\ExEndSymbol\end{ExampleNoQED}}

\PassOptionsToPackage{names,dvipsnames}{xcolor}

\DeclareMathAlphabet{\mathsf}{OT1}{LibertinusSans-LF}{m}{n}
\SetMathAlphabet{\mathsf}{bold}{OT1}{LibertinusSans-LF}{bx}{n}

\declaretheorem[style=remark,numbered=no]{case}

\usepackage{xspace}
\renewcommand{\ie}{i.e.\@\xspace}
\newcommand{\aka}{a.k.a.\@\xspace}
\newcommand{\wrt}{w.r.t.\@\xspace}
\renewcommand{\eg}{e.g.\@\xspace}

\newenvironment{talign*}
 {\csname align*\endcsname}
 {\endalign}

\usepackage[colorinlistoftodos]{todonotes}

\newcommand{\store}{s}

\DeclareDocumentCommand{\reftype}{O{\ttt} O{\Int}}{\{\nu: #2~|~ #1\}}

\newcommand{\ad}{~\text{and}~}

\newcommand{\crpsymbol}{\gamma_{p}}
\newcommand{\crtsymbol}{\gamma_{\tau}}
\newcommand{\crtasymbol}{\gamma_{T}}

\DeclareDocumentCommand{\crp}{m}{\crpsymbol (#1)}
\DeclareDocumentCommand{\crt}{m}{\crtsymbol (#1)}
\DeclareDocumentCommand{\crta}{m}{\crtasymbol (#1)}

\newcommand{\change}[1]{{#1}}

\newcommand{\extends}[1]{{ #1}}

\definecolor{evdcolor}{HTML}{F26035}
\definecolor{evcolor}{HTML}{F282B4}
\definecolor{sourcecolor}{HTML}{0071BC}
\definecolor{targetcolor}{HTML}{F26035}
\definecolor{changecolor}{HTML}{00CC99}
\definecolor{proposalcolor}{HTML}{008000}

\newcommand{\trg}[1]{\ensuremath{\textcolor{targetcolor}{#1}}}
\newcommand{\src}[1]{\ensuremath{\textcolor{sourcecolor}{\mathsf{#1}}}}

\newcommand{\colora}[1]{{\color{red}{#1}}}
\newcommand{\colorb}[1]{{\color{blue}{#1}}}

\DeclareDocumentCommand{\grel}{}{\rel_{\text{AGT}} }
\DeclareDocumentCommand{\ggprec}{}{\gprec_{AGT} }

\DeclareDocumentCommand{\ev}{O{} O{}}{{\color{evcolor}\varepsilon^{#2}_{#1}} }
\DeclareDocumentCommand{\evp}{O{} O{}}{{\color{evcolor}\varepsilon'^{#2}_{#1}} }
\DeclareDocumentCommand{\evpp}{O{} O{}}{{\color{evcolor}\varepsilon''^{#2}_{#1}} }

\NewDocumentCommand{\evbind}{O{1}}{\mathit{bind}_{
  \if 1#1 l \else r \fi
}}

\renewcommand{\nred}{~-->~}

\DeclareDocumentCommand{\enred}{O{\store}}{\overset{#1}{\nred}}
\DeclareDocumentCommand{\ered}{O{\store}}{\overset{#1}{\red}}

\NewDocumentCommand{\ctx}{O{}}{\textcolor{SymPurple}{\mathsf{ctx}}_{#1}}
\renewcommand{\t}{\ensuremath{t}\xspace}

\newcommand{\eq}{%
\mathrel{
 \!
\settowidth{\@tempdima}{--}%
\resizebox{\@tempdima}{\height}{=}%
\!
}
}

\newcommand{\Formula}{\oblset{Formula}\xspace}

\newcommand{\cp}{\widetilde{p}}

\definecolor{StoreGreen}{HTML}{5E720A}
\definecolor{SymPurple}{HTML}{460A72}
\definecolor{EqBlue}{HTML}{003587}

\newcommand{\rtype}{{\mathsf{Real}}}
\newcommand{\btype}{{\mathsf{Bool}}}

\newcommand{\pty}[1][]{\src{\rho_{#1}}}

\newcommand{\ptyp}[1][]{\src{\rho'_{#1}}}
\newcommand{\p}[1][]{\trg{\varrho_{#1}}}
\newcommand{\pp}[1][]{\trg{\varrho'_{#1}}}
\newcommand{\ps}[1][]{p_{#1}}
\newcommand{\psp}[1][]{p'_{#1}}

\newcommand{\alphai}[1][]{\varrho_{#1}}
\DeclareDocumentCommand{\elaborate}{m m m}{ #1 : {#3} \leadsto \textcolor{targetcolor}{#2}}
\DeclareDocumentCommand{\ela}{m m m}{ #1 : #3 \leadsto #2 }
\DeclareDocumentCommand{\elalign}{m m m}{ #1 : #3 \leadsto& #2 }

\DeclareDocumentCommand{\doubleparent}{m}{\{\hspace{-6pt}\{\hspace{1pt}#1\hspace{1pt}\}\hspace{-6pt}\}}
\DeclareDocumentCommand{\d}{ m O{} }{\doubleparent{#1 ~|~ #2}}
\DeclareDocumentCommand{\ds}{ m }{\doubleparent{#1}}
\DeclareDocumentCommand{\dt}{ m}{\doubleparent{#1}}
\DeclareDocumentCommand{\db}{ m m O{} }{\dctx[#1][\doubleparent{#2 ~|~ #3}]}
\DeclareDocumentCommand{\dbb}{ m m }{\dctx[#1][\doubleparent{#2}]}
\DeclareDocumentCommand{\dr}{ m m O{} }{\dctx[#1][\doubleparent{#2 ~|~ #3}]}

\DeclareDocumentCommand{\since}{ m }{ \because #1 }
\DeclareDocumentCommand{\ol}{ m }{ \overline{#1}  }

\DeclareDocumentCommand{\so}{ m }{ \therefore #1 }
\DeclareDocumentCommand{\eq}{ m }{ = #1 }
\DeclareDocumentCommand{\sentence}{m}{ \text{#1} }
\DeclareDocumentCommand{\jtf}{m m}{ {#1} \vdash {#2} }
\DeclareDocumentCommand{\sugarconj}{m m}{ {#1} \raisebox{0.5ex}{$\land$} \hspace{-8.5pt} \land {#2} }
\DeclareDocumentCommand{\octx}{m m}{  {#2} }
\DeclareDocumentCommand{\egtys}{O{} O{}}{{\color{evcolor}\varepsilon^{#2}_{#1}} }
\DeclareDocumentCommand{\egtysp}{O{} O{}}{{\color{evcolor}\varepsilon'^{#2}_{#1}}}
\DeclareDocumentCommand{\eegtys}{O{} O{}}{{\color{evcolor}\varepsilon'^{#2}_{#1}}}
\DeclareDocumentCommand{\eegtysp}{O{} O{}}{{\color{evcolor}\varepsilon''^{#2}_{#1}}}
\DeclareDocumentCommand{\egtyspp}{O{} O{}}{{\color{evcolor}\varepsilon''^{#2}_{#1}}}

\DeclareDocumentCommand{\gtys}{O{} O{}}{\trg{{\sigma}^{#2}_{#1}}}
\DeclareDocumentCommand{\gtysp}{O{} O{}}{\trg{{\delta}^{#2}_{#1}}}
\DeclareDocumentCommand{\ggtys}{O{} O{}}{\trg{{\sigma'}^{#2}_{#1}}}
\DeclareDocumentCommand{\ggtysp}{O{} O{}}{\trg{{\delta'}^{#2}_{#1}}}
\DeclareDocumentCommand{\gtyspp}{O{} O{}}{\trg{{\sigma''}^{#2}_{#1}}}
\DeclareDocumentCommand{\gtya}{O{} O{}}{\trg{{\mu}^{#2}_{#1}}}
\DeclareDocumentCommand{\gtyb}{O{} O{}}{\trg{{\nu}^{#2}_{#1}}}
\DeclareDocumentCommand{\gtyap}{O{} O{}}{\trg{{\mu'}^{#2}_{#1}}}
\DeclareDocumentCommand{\gtybp}{O{} O{}}{\trg{{\nu'}^{#2}_{#1}}}
\DeclareDocumentCommand{\gtas}{O{} O{}}{{\color{blue} {\mu}^{#2}_{#1}}}
\DeclareDocumentCommand{\gtbs}{O{} O{}}{{\color{blue} {\nu}^{#2}_{#1}}}
\DeclareDocumentCommand{\gta}{O{} O{}}{\trg{{\mu}^{#2}_{#1}}}
\DeclareDocumentCommand{\gtb}{O{} O{}}{\trg{{\nu}^{#2}_{#1}}}

\DeclareDocumentCommand{\sgtys}{O{} O{}}{{\color{sourcecolor}{\sigma}^{#2}_{#1}}}
\DeclareDocumentCommand{\sgtysp}{O{} O{}}{{\color{sourcecolor}{\delta}^{#2}_{#1}}}
\DeclareDocumentCommand{\sggtys}{O{} O{}}{{\color{sourcecolor}{\sigma'}^{#2}_{#1}}}
\DeclareDocumentCommand{\sggtysp}{O{} O{}}{{\color{sourcecolor}{\delta'}^{#2}_{#1}}}
\DeclareDocumentCommand{\sgtyspp}{O{} O{}}{{\color{sourcecolor}{\sigma''}^{#2}_{#1}}}
\DeclareDocumentCommand{\sgtya}{O{} O{}}{{\color{sourcecolor}{\mu}^{#2}_{#1}}}
\DeclareDocumentCommand{\sgtyb}{O{} O{}}{{\color{sourcecolor}{\nu}^{#2}_{#1}}}
\DeclareDocumentCommand{\sgtyap}{O{} O{}}{{\color{sourcecolor}{\mu'}^{#2}_{#1}}}
\DeclareDocumentCommand{\sgtybp}{O{} O{}}{{\color{sourcecolor}{\nu'}^{#2}_{#1}}}

\DeclareDocumentCommand{\bga}{O{}}{\Gamma}

\DeclareDocumentCommand{\ga}{O{} O{}}{\gamma_{#1} {#2} }

\DeclareDocumentCommand{\tys}{O{} O{}}{\tau^{#2}_{#1}}
\DeclareDocumentCommand{\tysp}{O{} O{}}{\tau'^{#2}_{#1}}
\DeclareDocumentCommand{\tyspp}{O{} O{}}{\tau''^{#2}_{#1}}
\DeclareDocumentCommand{\tya}{O{} O{}}{T^{#2}_{#1}}
\DeclareDocumentCommand{\tyap}{O{} O{}}{T'^{#2}_{#1}}
\DeclareDocumentCommand{\tyb}{O{} O{}}{S^{#2}_{#1}}

\DeclareDocumentCommand{\vty}{O{}}{ty({#1})}
\DeclareDocumentCommand{\gtyc}{O{} O{}}{\zeta^{#2}_{#1}}
\DeclareDocumentCommand{\gtyd}{O{} O{}}{{\color{evdcolor} \xi^{#2}_{#1}}}
\DeclareDocumentCommand{\gtye}{O{} O{}}{{\color{evdcolor} \xi'^{#2}_{#1}}}
\DeclareDocumentCommand{\gtyf}{O{} O{}}{{\color{evdcolor} \xi''^{#2}_{#1}}}
\DeclareDocumentCommand{\phii}{O{} O{}}{\Phi^{#1}}
\DeclareDocumentCommand{\phi}{O{} O{}}{\Phi^{#1}_{#2}}
\DeclareDocumentCommand{\pphi}{O{}}{\Phi_{#1}}
\DeclareDocumentCommand{\pphip}{O{}}{\Phi'_{#1}}
\DeclareDocumentCommand{\pphipp}{O{}}{\Phi''_{#1}}
\DeclareDocumentCommand{\phip}{O{} O{}}{\Phi'^{#1}_{#2}}
\DeclareDocumentCommand{\phipp}{O{} O{}}{\Phi''^{#1}_{#2}}
\DeclareDocumentCommand{\rctx}{O{} O{}}{ {#1} \vartriangleright {#2}}
\DeclareDocumentCommand{\rrctx}{O{} O{}}{ {#2} }
\DeclareDocumentCommand{\sctx}{O{} O{}}{}
\DeclareDocumentCommand{\mctx}{O{} O{} O{}}{ {#1} |-{#3}}
\DeclareDocumentCommand{\dctx}{O{} O{}}{ {#1} \triangleright {#2} }
\DeclareDocumentCommand{\phty}{m m}{ #1 \triangleright #2 }
\DeclareDocumentCommand{\mmap}{O{} O{}}{\mathcal{M}^{#2}_{#1}}
\DeclareDocumentCommand{\coupling}{O{} O{} O{} O{}}{\exists {#1}.~{#2} ~ \land
~ {#3} ~ \land ~ {#1} > 0  => {#4}}
\DeclareDocumentCommand{\cp}{O{}}{C {#1}}
\DeclareDocumentCommand{\justify}{O{} O{} O{} O{}}{{#1} |- {#2} {#3} {#4}}
\DeclareDocumentCommand{\dset}{O{} O{}}{\trg{\mathcal{V}^{#2}_{#1}}}
\DeclareDocumentCommand{\ift}{m m}{ \text{If} ~ #1 ~ \text{then} ~ #2}
\newcommand{\iSet}{\mathcal{I}}
\newcommand{\jSet}{\mathcal{J}}
\newcommand{\lSet}{\mathcal{L}}

\newcommand{\kSet}{\mathcal{K}}
\newcommand{\iiSet}{\mathcal{I'}}
\newcommand{\jjSet}{\mathcal{J'}}

\newcommand{\kSett}{\mathcal{K}}

\newcommand{\ome}{\Omega}
\newcommand{\asc}[2]{#1 :: #2}
\newcommand{\reoderSym}{\stackrel{\mathclap{\normalfont\mbox{\scriptsize r}}}{=}}
\newcommand{\reoder}[2]{#1 \reoderSym #2}
\newcommand{\initReorder}[2]{#1 \, \mathbin{\begin{turn}{90}$\hspace{-0.2em}=$\end{turn}} #2}

\DeclareDocumentCommand{\lett}{m m m}{\<let> #1 = #2 \<in> #3}
\DeclareDocumentCommand{\letnl}{m m m}{\begin{block}
\<let> #1 = #2 \<in>\\ #3
\end{block}}
\DeclareDocumentCommand{\lettt}{m m m O{\set{\overline{\gtya[i]}}}}{\<let> #1 = #2 \<in>^{#4} #3}
\DeclareDocumentCommand{\add}{m m}{#1 \; {+} \;  #2}

\DeclareDocumentCommand{\spsum}{O{\ps}}{\oplus_{#1}}
\DeclareDocumentCommand{\psum}{O{ \jtf{\phi}{\p[1],\p[2]}}}{\oplus_{#1}}
\DeclareDocumentCommand{\pssum}{O{\pty}}{\src{\oplus_{#1}}}
\DeclareDocumentCommand{\ppsum}{O{\phi} O{\p[1]} O{\p[2]}}{ \trg{\tensor*[_{#2}]{\oplus}{^{#1}_{#3}}} }

\DeclareDocumentCommand{\leftvar}{}{\mathcal{l}}
\DeclareDocumentCommand{\rightvar}{}{\mathcal{r}}

\DeclareDocumentCommand{\ssum}{O{}}{\sum\limits_{#1}}
\DeclareDocumentCommand{\lft}{m}{\leftvar_{#1}}
\DeclareDocumentCommand{\rgt}{m}{\rightvar_{#1}}
\DeclareDocumentCommand{\j}{O{}}{k_{#1}}
\DeclareDocumentCommand{\jp}{O{}}{k'_{#1}}
\DeclareDocumentCommand{\jpp}{O{}}{k''_{#1}}
\DeclareDocumentCommand{\jj}{O{}}{j_{#1}}
\DeclareDocumentCommand{\jjp}{O{}}{j'_{#1}}
\DeclareDocumentCommand{\jjpp}{O{}}{j''_{#1}}
\DeclareDocumentCommand{\kp}{O{}}{k'_{#1}}
\DeclareDocumentCommand{\kpp}{O{}}{k''_{#1}}
\DeclareDocumentCommand{\kk}{O{}}{k_{#1}}
\DeclareDocumentCommand{\kkp}{O{}}{k'_{#1}}
\DeclareDocumentCommand{\kkpp}{O{}}{k''_{#1}}
\DeclareDocumentCommand{\v}{O{} O{}}{v_{#1}}
\DeclareDocumentCommand{\vp}{O{}}{v'_{#1}}
\DeclareDocumentCommand{\vpp}{O{}}{v''_{#1}}
\DeclareDocumentCommand{\t}{O{}}{t_{#1}}
\DeclareDocumentCommand{\tp}{O{}}{t'_{#1}}
\DeclareDocumentCommand{\tpp}{O{}}{t''_{#1}}
\DeclareDocumentCommand{\m}{O{}}{m_{#1}}
\DeclareDocumentCommand{\mp}{O{}}{m'_{#1}}
\DeclareDocumentCommand{\mpp}{O{}}{m''_{#1}}
\DeclareDocumentCommand{\n}{O{}}{n_{#1}}
\DeclareDocumentCommand{\np}{O{}}{n'_{#1}}
\DeclareDocumentCommand{\npp}{O{}}{n''_{#1}}
\DeclareDocumentCommand{\w}{O{}}{w_{#1}}
\DeclareDocumentCommand{\wp}{O{}}{w'_{#1}}
\DeclareDocumentCommand{\wpp}{O{}}{w''_{#1}}
\DeclareDocumentCommand{\u}{O{}}{u_{#1}}
\DeclareDocumentCommand{\diverge}{O{} O{}}{{#1} ~\textcolor{EqBlue}{\Uparrow } ~{#2}}
\DeclareDocumentCommand{\converge}{O{} O{}}{{#1} ~\textcolor{EqBlue}{\Downarrow} ~ {#2}}
\DeclareDocumentCommand{\elab}{O{}}{ \textcolor{targetcolor}{#1} }
\DeclareDocumentCommand{\jreds}{O{} O{}}{ ~{\Downarrow_{#1}^{#2}}~ }

\DeclareDocumentCommand{\pt}{m O{1}}{#1^{#2}}

\DeclareDocumentCommand{\iproj}{m m}{ #1 | #2}

\DeclareDocumentCommand{\projv}{m}{ #1.\alpha }
\DeclareDocumentCommand{\projl}{m}{ #1.\leftvar }

\DeclareDocumentCommand{\projr}{m}{ #1.\rightvar }

\DeclareDocumentCommand{\sproj}{m m}{ #1(#2) }

\newcommand{\rel}{\sim}

\DeclareDocumentCommand{\errort}{m}{ \textup{\textbf{error}}_{#1} }

\mathlig{|-t}{~{\vdash}~}
\mathlig{|-d}{~{\vdash}~}
\mathlig{|-ss}{~{\vdash_s}~}

\DeclareDocumentCommand{\evd}{O{} O{}}{ {\color{evdcolor} \xi^{#2}_{#1}} }
\DeclareDocumentCommand{\evdp}{O{} O{}}{ {\color{evdcolor} \xi'^{#2}_{#1}} }
\DeclareDocumentCommand{\evdpp}{O{} O{}}{ {\color{evdcolor} \xi''^{#2}_{#1}} }

\NewDocumentCommand{\rlV}{m}{\mathcal{V} \llbracket #1 \rrbracket}
\NewDocumentCommand{\relVV}{O{k} m m}{\mathcal{V}_{#1} \llbracket #2 \gprec #3 \rrbracket}
\NewDocumentCommand{\relV}{O{\denv} m m m m O{\gsmap} O{X} O{Y}}{(#2 ,#3) \in \relVV{#1}{#4}{#5}[#6]{#7}{#8}}
\NewDocumentCommand{\relVU}{O{\denv} m m m O{\gsmap}}{(#2 ,#3) \in \mathcal{V}_{#1}\llbracket #4 \rrbracket}

\NewDocumentCommand{\rlT}{m}{\mathcal{T} \llbracket #1 \rrbracket}
\NewDocumentCommand{\relTT}{O{k} m m}{\mathcal{T}_{#1} \llbracket #2 \gprec #3 \rrbracket}

\NewDocumentCommand{\rlG}{m}{\mathcal{G}\llbracket #1 \rrbracket}

\NewDocumentCommand{\relT}{O{\denv} m m m m O{\gsmap}}{(#2 ,#3) \in \mathcal{T}_{#1}\llbracket #4;#5 | #6 \rrbracket}
\NewDocumentCommand{\relTS}{O{\denv} m m m m}{\relT[#1]{#2}{#3}{#4}{#5}[S]}
\NewDocumentCommand{\relTI}{O{\denv} m m m m}{\relT[#1]{#2}{#3}{#4}{#5}[I]}

\NewDocumentCommand{\inatom}{m m m m m O{} O{} O{\gsmapp}}{(#1 ,#2) \in \mathsf{Atom}\llbracket#3;#4|#5\rrbracket}

\newcommand{\Atom}[1]{\mathsf{Atom}[#1]}

\newcommand{\functype}[2]{#1 -> #2}

\newcommand{\rappro}[4]{#1 |- #2  \approx #3 : #4}

\newcommand{\nreds}[2]{~{\Downarrow^{#1}_{#2}}~}

\DeclareDocumentCommand{\snreds}{O{} m O{}}{~{^{#1}_{#3}{\Downarrow}_{s}^{#2} }~}

\newcommand{\sV}[1][]{\mathscr{V}_{#1}}
\newcommand{\sVp}[1][]{\mathscr{V'}_{#1}}

\newcommand{\V}[1][]{\trg{\mathscr{V}_{#1}}}
\newcommand{\Vp}[1][]{\trg{\mathscr{V}'_{#1}}}
\newcommand{\Vpp}[1][]{\trg{\mathscr{V}''_{#1}}}

\newcommand{\slang}{SPLC\xspace}
\newcommand{\glang}{GPLC\xspace}
\newcommand{\tlang}{TPLC\xspace}

\newcommand{\plang}{\ensuremath{\lambda_{\oplus}}\xspace}

\NewDocumentCommand{\var}{m m}{\alpha_{#1#2}}
\NewDocumentCommand{\varr}{O{}}{\alpha_{#1}}

\NewDocumentCommand{\cww}{O{}}{\omega_{#1}}
\NewDocumentCommand{\cwwp}{O{}}{\omega'_{#1}}
\NewDocumentCommand{\cwwpp}{O{}}{\omega''_{#1}}
\NewDocumentCommand{\cw}{O{} m m}{\omega_{#1}(#2, #3)}
\NewDocumentCommand{\fcw}{O{} m m}{\omega_{#1#2#3}}
\NewDocumentCommand{\cwp}{O{} m m}{\omega'_{#1}(#2, #3)}
\NewDocumentCommand{\cwpp}{O{} m m}{\omega''_{#1}(#2, #3)}
\NewDocumentCommand{\ccw}{O{} m m}{r_{#1#2#3}}

\newcommand{\supp}{\mathsf{supp}}

\newcommand{\CC}[1][]{\mathscr{C}_{#1}}

\newcommand{\DA}{\mathscr{A}}
\newcommand{\DB}{\mathscr{B}}

\NewDocumentCommand{\cjudg}{m m m m}{#1 |- #2 #3 #4}
\NewDocumentCommand{\clift}{m}{\widetilde{#1}}

\NewDocumentCommand{\cjudgext}{m m m m m m m}{{#1}^{#7} |- {#2}^{#5}  ~{#3}~ {#4}^{#6}}

\newcommand{\liftT}[1]{{\color{black}\lceil} #1 {\color{black}\rceil}}
\newcommand{\liftD}[1]{{\color{black}\lceil} #1 {\color{black}\rceil}}
\newcommand{\liftE}[1]{\lceil #1 \rceil}
\NewDocumentCommand{\liftP}{m O{\cww[i]}}{{\lceil} #1 {\rceil_{#2}}}

\DeclareDocumentCommand{\ssub}{m m m}{\mathit{sub}(#1, #2, #3)}

\newcommand{\tm}[1][]{\trg{m_{#1}}}
\newcommand{\tmp}[1][]{\trg{m'_{#1}}}
\newcommand{\tn}[1][]{\trg{n_{#1}}}
\newcommand{\tnp}[1][]{\trg{n'_{#1}}}
\DeclareDocumentCommand{\tv}{O{} O{}}{\trg{v^{#2}_{#1}}}
\DeclareDocumentCommand{\tvp}{O{} O{}}{\trg{v'^{#2}_{#1}}}
\DeclareDocumentCommand{\tw}{O{} O{}}{\trg{w^{#2}_{#1}}}
\newcommand{\twp}[1][]{\trg{w'_{#1}}}
\newcommand{\tu}[1][]{\trg{u_{#1}}}
\newcommand{\tup}[1][]{\trg{u'_{#1}}}
\newcommand{\tx}[1][]{\trg{x_{#1}}}
\newcommand{\ty}[1][]{\trg{y_{#1}}}
\newcommand{\ttb}[1][]{\trg{b_{#1}}}
\newcommand{\ttr}[1][]{\trg{r_{#1}}}

\newcommand{\sm}[1][]{\src{m_{#1}}}
\newcommand{\smp}[1][]{\src{m'_{#1}}}
\newcommand{\srcn}[1][]{\src{n_{#1}}}
\newcommand{\srcnp}[1][]{\src{n'_{#1}}}
\newcommand{\sn}[1][]{\src{n_{#1}}}
\newcommand{\sv}[1][]{\src{v_{#1}}}
\newcommand{\svp}[1][]{\src{v'_{#1}}}
\newcommand{\sw}[1][]{\src{w_{#1}}}

\newcommand{\sx}[1][]{\src{x_{#1}}}
\newcommand{\ssb}[1][]{\src{b_{#1}}}
\newcommand{\ssr}[1][]{\src{r_{#1}}}

\newcommand{\DType}{\oblset{DType}}
\newcommand{\GDType}{\oblset{GDType}}
\newcommand{\GFDType}{\FGDType}
\newcommand{\DValue}{\oblset{DValue}}

\newcommand{\GProb}{\oblset{GProb}}
\newcommand{\TVar}{\oblset{TVar}}

\newcommand{\FGType}{\oblset{FSType}}

\newcommand{\FGDType}{\oblset{FDType}}
\newcommand{\GFType}{\FGType}

\newcommand{\fgdt}{formula distribution types\xspace}
\newcommand{\sfgdt}{formula distribution type\xspace}
\newcommand{\fgt}{formula simple types\xspace}
\newcommand{\sfgt}{formula simple type\xspace}
\newcommand{\Fgt}{Formula simple types\xspace}
\newcommand{\Fgdt}{Formula distribution types\xspace}

\newcommand{\ft}{formula type\xspace}

\newcommand{\static}[1]{#1}

\newcommand{\satisfiablee}[1]{\mathit{sat}\left(#1\right)}
\newcommand{\satisfiable}[2]{ \satisfiablee{#1 \land #2}}

\newcommand{\FV}{\mathit{FV}}
\newcommand{\TV}[1]{\mathit{TV}\left(#1\right)}

\newcommand{\couplingLiftName}{\mathit{L}}
\DeclareDocumentCommand{\couplingLift}{m m m}{\couplingLiftName_{#3}\left(#1, #2\right)}
\DeclareDocumentCommand{\couplingLiftExt}{m m m m m m}{\couplingLiftName_{#3}^{#6}\left(#1^{#4}, #2^{#5}\right)}

\newcommand{\witnessname}{\mathit{W}}
\DeclareDocumentCommand{\witness}{m m m}{\witnessname_{#3}(#1, #2)}

\usepackage{amsmath,nicefrac}
\usepackage{leftidx}

\allowdisplaybreaks

\usepackage{tikz}
\usetikzlibrary{shapes, automata, shadows, arrows, positioning, fit,calc}

\tikzstyle{circ} = [draw, rectangle, rounded corners,  text centered, fill= white!20, text height=0.5em, text width=1.5em]
\tikzstyle{circbigger} = [draw, rectangle, rounded corners,  text centered, fill= white!20, text height=0.5em, text width=2.0em]
\tikzstyle{circinvisible} = [draw=white, circle, fill= white!0, text centered, fill= white!20, text height=1.7em, text width=2.5em, inner sep=0pt]
\tikzstyle{line} = [draw, -latex]
\tikzstyle{arrow} = [thick, ->, >=stealth]
\tikzstyle{process} = [rectangle, minimum width=3cm, minimum height=1cm, text centered, draw=black, fill=orange!30]
\tikzstyle{nodePrec} = [rectangle, minimum width=3.5cm, minimum height=0.7cm, text centered, draw=black, fill=white!30]
\tikzstyle{circc} = [circle, fill, inner sep=1.5pt]

\definecolor{darkblue}{HTML}{011480}
\definecolor{teal}{HTML}{21999D}

\lstdefinelanguage{scala}{
    morekeywords={},
    otherkeywords={=>,!,:=,+},
    sensitive=true,
    breaklines=true,
    breakatwhitespace=true,
    morecomment=[l]{//},
    morecomment=[n]{/*}{*/},
    morestring=[b]",
    morestring=[b]',
    morestring=[b]""",
    escapeinside={(*}{*)}, 
    moredelim=**[is][{\btHL}]{`}{`},
    morekeywords=[5]{def,val,if,else, data, match, with, let, in},
    keywordstyle=[5]{\bfseries\color{darkblue}},
    morekeywords=[6]{isUserActive,login,localCall,externalCall,server1,server2,new,eval,criticalTask, foo},
    keywordstyle=[6]{\color{teal}}, 
    commentstyle=\itshape\color{gray}
  } 

\begin{document}

\journaltitle{JFP}
\cpr{Cambridge University Press}
\doival{10.1017/xxxxx}

\lefttitle{A Gradual Probabilistic Lambda Calculus}
\righttitle{Journal of Functional Programming}

\totalpg{\pageref{lastpage01}}
\jnlDoiYr{2025}

\title{A Gradual Probabilistic Lambda Calculus}

\author{Wenjia Ye}
\orcid{0000-0002-3968-6201} 
\affiliation{%
  \institution{National University of Singapore}
  \city{Singapore}
  \country{Singapore}
  \authoremail{yewenjia@connect.hku.hk}
}

\author{Matías Toro}
\orcid{0000-0002-5315-0198} 
\affiliation{%
  \institution{Computer Science Department (DCC), University of Chile, Chile \\ %
  Millennium Institute of Foundational Research on Data (IMFD)}
  \city{Santiago}
  \country{Chile}
  \authoremail{mtoro@dcc.uchile.cl}
}

\author{Federico Olmedo}
\orcid{0000-0003-0217-6483} 
\affiliation{%
  \institution{Computer Science Department (DCC), University of Chile, Chile \\ %
  Millennium Institute of Foundational Research on Data (IMFD)}
  \city{Santiago}
  \country{Chile}
  \authoremail{folmedo@dcc.uchile.cl}
 }

\renewcommand{\thefootnote}{}
\footnotetext{This work has been partially sponsored by ANID DFG 220011,  ANID FONDECYT Iniciación 11250054, and ANID Millennium Science Initiative Program code ICN17\_002.}
\renewcommand{\thefootnote}{\arabic{footnote}}

\begin{abstract}
Probabilistic programming languages \change{have} recently gained a lot of attention, in particular due to their applications in domains such as machine learning and differential privacy. To establish invariants of interest, many such \change{languages} include some form of static checking in the form of type systems. However, adopting such a type discipline can be cumbersome or overly conservative.

Gradual typing addresses this problem by supporting a smooth transition between static and dynamic checking, and has been successfully applied for languages with different constructs and type abstractions. Nevertheless, its benefits have never been explored in the context of probabilistic languages.

In this work, we present and formalize \glang, a gradual source probabilistic lambda calculus. \glang includes a binary probabilistic choice operator and allows programmers to gradually introduce/remove static type---and probability---annotations. The static semantics of \glang heavily relies on the notion of probabilistic couplings, as required for defining several relations, such as consistency, precision, and consistent transitivity. The dynamic semantics of \glang is given via elaboration to the target language \tlang, \change{which features a distribution-based semantics interpreting programs as probability distributions over final values.} \change{Regarding the language metatheory}, we establish that \tlang---and therefore also \glang---is \emph{type safe} and satisfies two of the \change{so-called} \emph{refined criteria} for gradual languages,  \change{namely, that it is a conservative extension of a fully static variant and that it satisfies the gradual guarantee, behaving smoothly with respect to type precision.} 
 
\end{abstract}

% \keywords{Type Systems, Gradual Typing, Probabilistic Lambda Calculus}

\maketitle

%!TEX root = ../main.tex

\section{Introduction}

In a nutshell, \emph{probabilistic programming languages} are
traditional programming languages that, on top of their regular
constructs, offer the possibility of sampling values from probability
distributions~\citep{DBLP:conf/icse/GordonHNR14,DBLP:journals/corr/abs-1809-10756}. They find applications in a wealth of different areas,
ranging from more traditional application domains such as randomized
algorithms~\citep{Motwani:1995} and cryptography~\citep{Goldwasser:1984}
to more novel application domains such as differential
privacy~\citep{DBLP:journals/fttcs/DworkR14} and machine
learning~~\citep{Ghahramani:2015,DBLP:conf/sigsoft/ClaretRNGB13}. These
latter have led to a remarkable resurgence of probabilistic
programming in the past years, with the development of a growing
number of new probabilistic programming systems~\citep{DBLP:conf/aistats/LeBW17,DBLP:conf/ilp/Pfeffer10,10.5555/3023476.3023503,dippl,DBLP:conf/aplas/Kiselyov16,DBLP:conf/iclr/TranHSB0B17}.

To establish certain invariants of interest, programming languages
traditionally incorporate some form of \emph{typing}, \change{backed up} by a
type-checking phase. Depending on the moment in which type
checking occurs, it is classified either as \emph{static}---when
taking place during compilation---, or as \emph{dynamic}---when it
takes place during runtime---, each having its own strengths
and weaknesses. Concretely, programming languages with static typing
allow detecting errors (\ie invariant violations) at an early stage,
but are not flexible enough for rapid prototyping. On the other hand,
programming languages with dynamic typing accommodate 
changes better, but present slower runtimes.

\emph{Gradual typing}~\citep{siekTaha:sfp2006} represents an effective
alternative for integrating the benefits of static and dynamic typing
at the same time, by allowing a smooth transition all along the
spectrum. To do so, it introduces \emph{imprecise} (\aka
\emph{gradual}) types, which represent types possibly partially
known at compile time. Imprecise types can range from fully precise
static types (such as $\rtype -> \btype $), to the fully unknown (or
imprecise) type, written $\?$, with partially precise types (such as
$\rtype ->\?$) in-between. At compile time, a
gradual language type-checks programs optimistically, based on the
notion of type consistency, \change{(\eg, accepting the application of a function expecting an argument of type $\rtype -> \?$ and receiving an argument of type $\? -> \btype$,)} while the runtime is responsible for detecting (and reporting)
any violation of such assumptions \change{(\eg, if the received argument happens to have concrete type $\btype -> \btype$)}.

Gradual typing has been successfully applied to programming languages
with diverse constructs and typing disciplines. Some relevant features
include first-class classes~\citep{takikawaAl:oopsla2012}, mutable
references~\citep{siekTaha:sfp2006,hermanAl:hosc10,Siek2015MonotonicRF,toroTanter:scp2020},
effects as primitives~\citep{banadosAl:jfp2016}, tagged and untagged
unions~\citep{toroTanter:sas2017}, labeling
operations (for reasoning about information flow)~\citep{disneyFlanagan:stop2011,fennellThiemann:csf2013,toroAl:toplas2018,azevedo:lics2020}, and algebraic data types~\citep{malewskiAl:oopsla2021}, shape checking for ML~\citep{Hattori23,migeed24}. However, it
is an open question whether the benefits of gradual typing carry over
to \emph{probabilistic} programming languages.

In this work, we give a positive answer to this question by, on the
one hand, designing, to the best of our knowledge, the first gradual
probabilistic language and, on the other hand, establishing a set of
metatheoretic results, natural to all gradual languages.

\change{First}, we present \slang, a probabilistic $\lambda$-calculus that
extends ordinary $\lambda$-calculus with a (binary) probabilistic
choice operator and acts as the static end of our gradual language.
It features a big-step semantics \change{relating programs to the probability distribution of final values} and to better accommodate the
derivation of the gradual variant, its type system presents some
distinguished features such as the presence of ascriptions, partial
functions $\dom$ and $\cod$ over types, and explicit type
equality. Furthermore, equality over types is semantic \change{(instead of syntactic)}.

Second, we introduce \glang, our source gradual language, \change{whose} derivation from \slang is justified using the Abstracting Gradual Typing (AGT)
methodology, a systematic approach for deriving gradual languages based
on abstract interpretation~\citep{garciaAl:popl2016}. For the
so-derived notions of type consistency and type and term precision, we
also provide alternative---more amenable to automation---characterizations, based on the notion of probabilistic couplings~\citep{DBLP:journals/corr/abs-1103-4577}. In effect, probabilistic \change{couplings} are a fundamental ingredient behind all our technical development.

\change{Notably, \glang allows unknown probabilities not only at the type level, but also at the term level, in probabilistic choices. This yields an increased expressivity and flexibility---characteristic of all gradual languages---and also the opportunity to leverage the language for program refinement.}

Third, we define the dynamic semantics of our gradual language by
translating \glang into the target gradual language \tlang. \change{The runtime semantics of \tlang  incorporates the required evidence to confirm or discard the optimistic assumptions made by  \glang type system. In turn, this requires adapting gradual types to encode unknown probabilities through symbolic variables, which are constrained by well-formedness conditions.}
        
Finally, \change{to formally validate our overall language design,} we establish three fundamental properties of \glang. First, we prove that it is a conservative extension of
\slang. Second, we show that it satisfies type safety. Lastly, we show
that it behaves smoothly with respect to precision, \change{a} property known as
the \emph{gradual guarantee}~\citep{siekAl:snapl2015}. 

\change{Altogether, this provides the first steps toward the theoretical foundations of gradual probabilistic programming, and serves as a starting point for developing gradual variants of more specialized, domain-specific, probabilistic languages, \eg, as used for differential privacy~\citep{Reed:2010}.}

\subsubsection*{Paper Organization} The rest of the paper is organized as follows. Section~\ref{sec:overview} discusses the motivation and some key design decisions and challenges behind our probabilistic gradual language. Section~\ref{sec:static}  presents the probabilistic lambda calculus (\slang) acting as the static end of the gradualization. Section~\ref{sec:source} develops the source gradual language (\glang) and Section~\ref{sec:target} the target language (\tlang), together with the metatheory. Section~\ref{sec:relwork} overviews the related work and Section~\ref{sec:conclusion} concludes. Proofs of the main results can be found in the supplementary material. 

\medskip
% \smallskip
\noindent\textit{Note.} This article employs color coding to present information effectively and is best consumed using an electronic device or color printer.

%!TEX root = ../main.tex

\section{Overview}\label{sec:overview}
We \change{next} discuss the motivation behind a \emph{gradual} probabilistic language
through a concrete use case and summarize some key aspects and challenges
behind the design of our gradual probabilistic language.

\subsection{A Gradual Probabilistic Language: Why?}

Assume we must develop a web application for a company, in particular,
the login endpoint. To authenticate a user, we must verify that the
user remains active in the company, information that is provided by an
external web service (exposed by a foreign library). As usual, we
support both production and development modes, where in development
mode we replace the external web service with a local function,
conveniently defined for developing and testing purposes.

Under these requirements, we quickly prototype the following (untyped)
program, written in a \textsc{Scala}-like language:
\begin{lstlisting}
def isUserActive(user) = 
    if (prod) externalCall(user) 
    else localCall(user)

def login(user, pass) = 
    if (isUserActive(user)) /* test password */ 
    else false
\end{lstlisting}

\subsubsection*{Probabilistic modeling}
The company now requires that the \lstinline|login| endpoint have a
$95\%$ uptime (availability). However, after some research, we learn
that the external web service \lstinline|externalCall| has only a
$90\%$ uptime, returning a \lstinline|503 Service Unavailable| error
when down. Written in such an untyped language, the above program is
unable to capture this uptime information, let alone detect the
impossibility to comply with the login requirements.

As a first step to address this problem, we can adopt a typed
language that includes \emph{distribution types}. Loosely speaking,
distribution types represent probability
distributions of ``simpler'' types, and can crisply model uptime information. For instance, 
the return type of \lstinline|externalCall| shall now be represented by
\lstinline[mathescape]|{Bool${}^{\frac{90}{100}}$, Error503${}^{\frac{10}{100}}$}|, and the return type of
\lstinline|login| by
\lstinline[mathescape]|{Bool${}^{\frac{95}{100}}$, Error503${}^{\frac{5}{100}}$}|. Furthermore, we can implement a
\lstinline|localCall| function compatible with the uptime
requirement of the \lstinline|login| endpoint, as follows:
\begin{lstlisting}[mathescape]
def localCall(user:$\;$String): {Bool(*${}^{\frac{95}{100}}$*), Error503(*${}^{\frac{5}{100}}$*)} =  
    true (*$\psum[\frac{95}{100}]$*) error503 
\end{lstlisting}
A program of the form  \lstinline[mathescape]|m $\psum[p]$ n| is known as a \emph{probabilistic
  choice} between  \lstinline[mathescape]|m| and  \lstinline[mathescape]|n|, and behaves like  \lstinline[mathescape]|m| with probability
 \lstinline[mathescape]|p| and like  \lstinline[mathescape]|n| with probability  \lstinline[mathescape]|1-p|.\footnote{A probabilistic
  choice \lstinline[mathescape]|m $\:\psum[p]\:$ n| can
be readily simulated (in an approximate manner) by all programming languages
that include the commonplace primitive \lstinline|random()|, returning an
(approximately) uniform value in the $[0,1]$ interval. It suffices
to take program \lstinline[mathescape]|if (random() $\;$ <= p) $\ $ m else n|.}

\subsubsection*{Limitations of static typing}
Adopting a static typing for our probabilistic language would be
cumbersome, as it would require inserting type annotations everywhere,
or else extending the language with a type inference mechanism. In either
case, the static typing can be overly conservative, rejecting (at compile time)
programs that may indeed go right at runtime. For example, declaring
the return types of functions  \lstinline|externalCall| and
\lstinline|localCall| as argued above
(\lstinline[mathescape]|{Bool${}^{\frac{90}{100}}$, Error503${}^{\frac{10}{100}}$}| and
\lstinline[mathescape]|{Bool${}^{\frac{95}{100}}$, Error503${}^{\frac{5}{100}}$}|,  respectively),
would render function \lstinline|isUserActive| ill-typed as the two
branches of the conditional in the function body would have different
types. Note that, even though the uptime of the external web service is
incompatible with the uptime requirements of the login endpoint, we
still would like to have a program that is able to execute in development mode
(and in production mode, with minor modifications in typing annotations,
if uptime requirements are reconciled).

\subsubsection*{Gradual typing to the rescue}
Gradual typing addresses this problem by supporting a smooth transition
between static and dynamic typing, introducing imprecision on
static types via the unknown annotation $\?$.\footnote{A fully untyped
  program is considered to have unknown annotations everywhere.}
Intuitively, an unknown type (resp. probability) $\?$ represents any
type (resp. probability). For example, using gradual \change{(distribution)}
types we can partially annotate the program to assert only a subset of
function uptimes:
\begin{lstlisting}[mathescape]
val externalCall: ? -> {Bool(*${}^{\frac{90}{100}}$*), Error503(*${}^{\frac{10}{100}}$*)} = ...

def isUserActive(user:$\;$?): ? = 
    if (prod) externalCall(user) :: ? 
    else localCall(user) :: ?

def login(user:$\;$?, pass:$\;$?): {Bool(*${}^{\frac{95}{100}}$*), Error503(*${}^{\frac{5}{100}}$*)} = 
    if (isUserActive(user)) ...
\end{lstlisting}
\change{To render  \lstinline|isUserActive| well-typed, we also had to ascribe both its conditional branches to the unknown type (written \lstinline|:: ?|),  since the conditional branches have different (fully static) types.}

The type checker of a gradual language treats type equality
optimistically, through the notion of \emph{consistency}. Consistency
between gradual types tests the plausibility of equality between any
of the static types they represent. For instance, gradual type
 \lstinline|? -> Bool| is consistent with
 \lstinline|Int -> ?|, written
\lstinline[mathescape]|? -> Bool $\sim$ Int -> ?|, because (during runtime) they can both
represent, \eg, the fully static type  \lstinline|Int -> Bool|.

In view of this
optimistic treatment of equality, the above program is accepted
statically as the unknown type $\?$ is (trivially) consistent with every other type. If 
the application is in development mode, then the \lstinline|login|
endpoint runs successfully. On the contrary, if the application is in
production mode, a runtime error is raised. This is because, even
though  \lstinline[mathescape]|{Bool${}^{\frac{90}{100}}$, Error503${}^{\frac{10}{100}}$} $\sim$ ?| and
\lstinline[mathescape]|? $\sim$ {Bool${}^{\frac{95}{100}}$, Error503${}^{\frac{5}{100}}$}|, 
\lstinline[mathescape]|{Bool${}^{\frac{90}{100}}$, Error503${}^{\frac{10}{100}}$} $\not\sim$ {Bool${}^{\frac{95}{100}}$, Error503${}^{\frac{5}{100}}$}|. 
Said otherwise, consistency is not transitive. Therefore, gradual
\change{languages incorporate}  runtime checks to detect any \change{potential} violation of the optimistic
assumptions performed statically during type checking.

Finally\change{,} note that we can increase the program precision, \eg, changing
the return type of \lstinline|isUserActive| \change{from \lstinline|?|}  to 
\lstinline[mathescape]|{Bool${}^{\frac{95}{100}}$, Error503${}^{\frac{5}{100}}$}|, which would make the program
ill-typed, failing thus at compile time. 
Intuitively, this is because the declared return type of \lstinline|isUserActive| would not match the inferred type \lstinline|?| of its body.

\change{Besides for this enhanced expressivity, one can also employ our \emph{gradual} probabilistic language for program refinement purposes. To illustrate this application, assume that the external service \lstinline|externalCall| is now required to have an uptime of \emph{at least} 95\%. This can be modelled by declaring its return type as} \lstinline[mathescape]|{Bool${}^{\frac{95}{100}}$, Bool${}^{?}$, Error503${}^{?}$}|\change{. In contrast to the above example where  \lstinline|?| represented unknown \emph{types}, here, both occurrences of \lstinline|?| represent unknown (possibly different) \emph{probabilities}, which together with \lstinline[mathescape]|$\frac{95}{100}$| must sum up to 1.

Furthermore, assume that the external service originally relied on a single server of 90\% uptime (\lstinline|server1|) to keep track of active users. To reach the desired uptime of (at least) 95\%, the service provider decides to buy a new---very costly---server of 98\% uptime (\lstinline|server2|). A naive implementation of the service would simply dispense with \lstinline|server1| and rely only on \lstinline|server2| to respond queries. However, this would negatively impact on \lstinline|server2| liftime, diminishing the return of the performed investment. To avoid this problem and still benefit from \lstinline|server1|, a possible solution is to \emph{probabilistically} choose, upon each query, which server will respond to the query. The fundamental question left to answer is whether this design would result in an overall (expected) uptime of at least 95\%. To answer this question, we can consider the following program:}
\begin{lstlisting}[mathescape]
val server1: ? -> {Bool(*${}^{\frac{90}{100}}$*), Error503(*${}^{\frac{10}{100}}$*)} = ...

val server2: ? -> {Bool(*${}^{\frac{98}{100}}$*), Error503(*${}^{\frac{2}{100}}$*)} = ...

def externalCall(user:$\;$?): {Bool(*${}^{\frac{95}{100}}$*), Bool(*${}^{?}$*), Error503(*${}^{?}$*)} = (*\label{ln:excall} *)
    server1(user) (*$\psum[?]$*) server2(user) (*\label{ln:pchoice} *)
\end{lstlisting}
where symbol \lstinline|?| in the probabilistic choice \lstinline[mathescape]|$\psum[?]$| (line \ref{ln:pchoice}) also represents an unknown probability. Our gradual language correctly typechecks this program and after extracting a set of constraints and checking their satisfiability, the language runtime confirms the feasibility of the proposed \lstinline|externalCall| design. 

To explain this in more detail, let us assume that when queried about any user, \lstinline|server1| (resp.\ \lstinline|server2|) responds \lstinline|true| with probability $\tfrac{90}{100}$ (resp.\ $\tfrac{98}{100}$) and \lstinline|error503| with the complementary probability. The instrumentation of the language runtime introduces symbolic variables to represent all unknown probabilities. In this case, say $\cww[\texttt{Bool}]$, $\cww[\texttt{Error503}]$ and $\cww[\texttt{choice}]$ represent the unknown probabilities encoded by \lstinline|?| respectively in \lstinline[mathescape]| Bool${}^{?}$|, \lstinline[mathescape]|Error503${}^{?}$| (line \ref{ln:excall}) and \lstinline[mathescape]|$\psum[?]$| (line \ref{ln:pchoice}). 

When invoking \lstinline|externalCall| with any user, the runtime semantics determines, after some calculations, that the result will be \lstinline|true| with probability $\frac{90}{100} \cww[\texttt{choice}] 
+ \frac{98}{100} (1 - \cww[\texttt{choice}]) = \frac{98}{100} - \frac{8}{100} \cww[\texttt{choice}]$ and 
\lstinline|error503| with probability $\frac{10}{100} \cww[\texttt{choice}] 
+ \frac{2}{100} (1 - \cww[\texttt{choice}]) = \frac{2}{100} + \frac{8}{100} \cww[\texttt{choice}]$, where the symbolic variables are constrained by formulas $\tfrac{95}{100} + \cww[\texttt{Bool}] + \cww[\texttt{Error503}] ~=~1$ (to ensure that the return type of \lstinline|externalCall| is  well-formed), $\frac{95}{100} +  \cww[\texttt{Bool}] = \frac{98}{100} - \frac{8}{100} \cww[\texttt{choice}]$ and $\cww[\texttt{Error503}] = \frac{98}{100} + \frac{8}{100} \cww[\texttt{choice}]$ (to ensure that the declared return type of \lstinline|externalCall| is consistent with the computed type of its body).\footnote{Formally, the instrumentation of the runtime semantics yields a handful of further constraints, but altogether they are equivalent to the considered subset.} The language runtime will confirm that these constraints are indeed satisfiable, and the invocation to \lstinline|externalCall| will then complete successfully. In effect, any probability $\cww[\texttt{choice}] \leq \frac{37.5}{100}$ yields a valid solution to the constraints, providing a concrete refinement of \lstinline|externalCall| and, more generally, validating its proposed design.

\subsection{A Second Motivating Example}
We now discuss a second motivating example inspired by the paradigm of approximate computing.

Suppose we want to develop an evaluator for basic arithmetic expressions, following a traditional, deterministic---fully precise---computational model. In a statically typed language, we might sketch the following code:
\begin{lstlisting}[mathescape]
  data Expr = Num Int | Add Expr Expr | ...
  
  def eval (e: Expr) : Int =
      match e with 
      Num n -> n
      Add e1 e2 -> eval e1 + eval e2
      ...
\end{lstlisting}
Now, consider moving to an approximate computation setting~\citep{DBLP:journals/csur/Mittal16b}. The fundamental idea behind this paradigm is to trade a small amount of accuracy for a significant gain in efficiency, in application domains that naturally tolerate approximate results or where input data are already noisy or imprecise. At the hardware level, this can be realized using approximate arithmetic circuits, whose basic operations can be modeled probabilistically~\citep{DBLP:journals/cacm/CarbinMR16,DBLP:conf/oopsla/BostonSGC15}. For example, suppose the addition operator of our underlying hardware has a success probability of only 99\%. After extending the static language with distribution types, we can model this behavior by declaring the addition operation with the following type:
\begin{lstlisting}[mathescape]
  val + : Int -> Int -> {Int(*${}^{\frac{99}{100}}$*), Error(*${}^{\frac{1}{100}}$*)} =  ...
\end{lstlisting}
Nevertheless, we will not be able to assign any valid (return) type to the \lstinline|eval| function because the probability of producing a successful result depends on the (structural) complexity of the input expression. In contrast, within a gradual language, we can assign \lstinline|eval| the following type:
\begin{lstlisting}[mathescape]
  def eval (e: Expr) : {Int${}^{?}$, Error${}^{?}$} = ...  
     /* same as before */
\end{lstlisting}

Moreover, the gradual language allows expressing partial functions that require reliable arguments. For example, suppose we need to define a \lstinline|criticalTask| that, given an arithmetic expression \lstinline|e|, should proceed only if the value of \lstinline|e| can be computed with a given confidence level, say 98\%. If the underlying task is abstracted by function \lstinline|foo|, we can write:
\begin{lstlisting}[mathescape]
  val foo: Int -> ? =  ...

  def criticalTask (e: Expr) : ? =
      foo (eval(e) :: {Int(*${}^{\frac{98}{100}}$*), Error${}^{?}$})
\end{lstlisting}
For any well-typed arithmetic expression \lstinline|e|, the program \lstinline|criticalTask(e)| is also well-typed as \lstinline[mathescape]|{Int${}^{?}$, Error${}^{?}$} $\sim$ {Int${}^{\frac{98}{100}}$, Error${}^{?}$}|. At runtime, the language will invoke \lstinline|foo| if \lstinline|e| is ``simple enough,'' for example a constant, and can be reduced successfully with (at least) a 98\% probability. Otherwise, it will raise a runtime error.

\subsection{Design Decisions and Challenges}
When designing our source (\glang) and target (\tlang) gradual
probabilistic languages, we faced several design decisions and challenges.

\subsubsection*{Where to introduce imprecision}
In most traditional gradual typing \change{calculi}, imprecision is introduced via the unknown type
$\?$. To gain expressivity and flexibility, in this work we allow
imprecision at the type level as well as the probability level. For
example, given the fully static distribution type
\lstinline[mathescape]|{Bool${}^{\frac{9}{10}}$, Error503${}^{\frac{1}{10}}$}|, we can introduce imprecision either in
probabilities, \eg \lstinline[mathescape]|{Bool${}^{\gbox{\?}}$, Error503${}^{\frac{1}{10}}$}|, in the underlying types, \eg  \lstinline[mathescape]|{${\Gbox{\?}}^{\frac{9}{10}}$,  Error503${}^{\frac{1}{10}}$}|, or in both. Note that there is no
need to introduce the unknown distribution as it can be represented by
the gradual distribution type
\lstinline[mathescape]|{${\?}^{\?}$}|. \change{Interestingly}, unknown probabilities are
particularly useful for expressing \emph{probability bounds}. \change{As hinted above}, we can use type
\lstinline[mathescape]|{Bool${}^{\frac{95}{100}}$, Bool$^?$, Error503$^?$}| to represent a service with an uptime of \emph{at least}
95\%. \change{On the other hand, type}
\lstinline[mathescape]|{Bool$^?$,  Error503${}^{\frac{5}{100}}$, Error503$^?$}| \change{models an uptime of \emph{at most} 95\%.}

\subsubsection*{Tracking dependencies of probability annotations}
When dealing with unknown probabilities, as in type
\lstinline[mathescape]|{Bool${}^{\frac{9}{10}}$, Bool${}^{\?}$, Error503${}^{\?}$}|, the gradual language must ensure that
the concrete probabilities they represent induce only
well-defined static distribution types, with a total probability of
$1$. This requirement induces implicit dependencies and gradual
probabilities are thus elaborated to fresh variables ($\cww$)
constrained by formulas, \eg of the form $\frac{9}{10} + \cww[1] + \cww[2] = 1$.

\subsubsection*{Ascribing to distribution types} One of the
fundamental features of \glang is the possibility of ascribing
programs to distribution types. For example, a program $f  = \asc{
  (\lambda x : \? . x)}{ \{ (\rtype-> \?)^{\frac{1}{2}}, (\? ->
  \btype)^{\frac{1}{2}} \} }$ behaves as a function that takes a
number as argument with probability $\tfrac{1}{2}$, and as a function
that returns a Boolean also with probability $\tfrac{1}{2}$. Reducing
an application to $f$ and correctly propagating the respective type
information is not a trivial task. Intuitively, our approach consists
in ``pushing'' the real argument into each (compatible) type in the
distribution. For instance, the reduction of  program $f \;1$
proceeds, informally, as follows: $$f \;1 ~\mapsto^{*}~
\{\asc{(\lambda x : \? . x)}{(\rtype-> \?)^{\frac{1}{2}}}\;1,
\asc{(\lambda x : \? . x)}{(\? -> \Bool)^{\frac{1}{2}} \;1\} }
~\mapsto^{*}~ \{ 1::\?^{\frac{1}{2}}, \error^{\frac{1}{2}} \}$$

\subsubsection*{Couplings as a central tool}
Defining some key relations between distribution types is another
technical challenge. For instance, should we consider 
distribution type 
$\{(\rtype -> \?)^\frac{1}{2}, (\? -> \rtype)^{\frac{1}{2}}\}$
consistent with 
$\{(\? -> \btype)^\frac{1}{3}, (\rtype -> \?)^{\frac{2}{3}}\}$?
Is 
$\{ \rtype^{\frac{1}{2}}, \?^{\frac{1}{2}} \}$
more precise than  
$\{ \btype^{\frac{2}{3}}, \?^{\frac{1}{3}} \}$?
To define these (and other) relations over distribution types we heavily rely on the notion of probabilistic coupling, which yields a canonical lifting from relations over pairs of sets to probability distributions over the sets. 

%!TEX root = ../main.tex

\section{\slang: Static Language}\label{sec:static}
In this section, we present \slang, a statically-typed lambda calculus,
extended with a probabilistic choice operator, which is the starting
point---static end---of our gradualization effort. The static semantics
of \slang is based on that of \plang from~\citep{Lago2017ProbabilisticTB},
with two major differences: \slang features a semantic (rather
than syntactic) equality between types and also allows type
ascriptions. As for the dynamic semantics, \change{programs are interpreted as probability distributions over final values.}

\subsection{Syntax}

\begin{figure}[t]  
\begin{displaymath}
\begin{array}{r@{\hspace{0.3em}}c@{\hspace{0.8em}}l@{\hspace{2em}}l}
   \multicolumn{4}{c}{r\in \mathbb{R}, \quad b\in \mathbb{B}, \quad x \in
  \text{Var}, \quad \ps \in [0,1], \quad \tys \in \Type,  \quad \tya \in \DType}\\[1.5ex]
\tys       & ::=        & \rtype ~|~ \btype ~|~ \tys -> \tya   &
                                                                   \text{(simple types)} \\
  \tya        & ::=        & \d{\tys[i][\ps[i]]}[i \in \iSet] & \text{(distribution types)} \\[1.5ex]
\m, \n          
        & ::= & v ~|~ v \; w ~|~ \lett{x}{\m}{\n} ~|~ \m \spsum \n                    & \text{(terms)}\\ 
  &     & \asc{m}{\tya} ~|~ \asc{v}{\tys} ~|~ \ite{v}{m}{n} ~|~ \add{v}{w}  & \\
  v,w 
        & ::= & x ~|~ r ~|~ b ~|~ (\lambda x:\tys. \m) & \text{(values)}\\
\end{array}
\end{displaymath}
\caption{Syntax of \slang.}
\label{fig:static-syntax}
\end{figure}

The syntax of \slang is presented in Figure~\ref{fig:static-syntax},
comprising its type and term languages.

\subsubsection*{Type language}
The type language contains two (mutually defined) syntactic
categories: simple types ($\Type$) and distribution types ($\DType$). A \emph{simple
  type}, ranged over by $\tys$, can be the type $\rtype$ of real
numbers, the type $\btype$ of Boolean values, or a function type of
the form $\tys -> \tya$, where $\tya$ is a distribution type.  A
\emph{distribution type}, ranged over by $\tya$, is a multi-set of
pairs comprised of a simple type $\tys$ and a probability $\ps$ in the
interval $[0,1]$. Intuitively, we use
$\d{\tys[i][\ps[i]]}[i \in \iSet]$ to denote a distribution type in which
simple type $\tys[i]$ occurs with probability $\ps[i]$, for each $i$
in the (non-empty and finite) \change{sub}set $\iSet$ of the natural numbers. For
instance, distribution type
$\dt{\rtype^{\frac{1}{4}}, \rtype^{\frac{1}{4}},
  \btype^{\frac{1}{2}}}$ represents $\rtype$ with probability
$\frac{1}{4}+\frac{1}{4}=\frac{1}{2}$ and $\btype$ with probability
$\frac{1}{2}$. Notationwise, we sometimes omit the index set $\iSet$
and simply write $\ds{\tys[i][\ps[i]]}$. %
Finally, note that distribution types---as the name suggests---represent
\emph{probability distributions} (over simple types) and therefore,
well-typed programs are associated distribution types whose
probabilities sum up to $1$ (this restriction is formally captured by the
notion of \emph{type well-formedness} defined in Section~\ref{sec:SPLC-typeSystem}).

\subsubsection*{Term language}\label{sec:static_type_system}
Terms, ranged over by $\m, \n$, and values, ranged over by $v, w$, are mutually defined. 
A \emph{term} can be a value $v$, an application $v\;w$ between two
values, a let expression $\lett{x}{\m}{\n}$, a probabilistic choice $\m
\spsum \n$,  a term ascription $\asc{m}{\tya}$, a value ascription
$\asc{v}{\tys}$, a conditional $\ite{v}{m}{n}$, or an addition
$\add{v}{w}$ between two values. Note that terms are defined in
A-normal form~\citep{Sabry93reasoningabout}, which pushes all the
reasoning about probabilities to the $\<let>$construct. Randomization is introduced through probabilistic
choices: program  $\m \spsum \n$ behaves like (\ie reduces to) $m$ with
probability $p$ and like $n$ with probability $1-p$. Finally, a \emph{value} can be a variable $x$, a real number $r$, a Boolean value $b$, or a lambda abstraction $\lambda x:\tys. \m$.

\subsection{Type System}
\label{sec:SPLC-typeSystem}

\begin{figure}[t]
\begin{flushleft}
  \framebox{$\Gamma |-ss \v : \tys, \quad \Gamma |-ss \m : \tya, \quad \Gamma |-ss \sV : \tya $}
\end{flushleft}
  \def \MathparLineskip {\lineskip=1.4ex}
  \begin{mathpar}
  \inference[(Tr)]{}
  {\Gamma |-ss r : \rtype} \and
  \inference[(Tb)]{}
  {\Gamma |-ss b : \btype} \and
  \inference[(Tx)]{\Gamma(x) = \tys}
  {\Gamma |-ss x : \tys} \and
  \inference[(Tv)]{\Gamma |-ss v : \tys}
  {\Gamma |-ss v : \ds{\tys[][1]} } \and
  \inference[(T$\lambda$)]{\Gamma, x:\tys |-ss \m : \tya & \jtf{}{\tys}}
  {\Gamma |-ss \lambda x:\tys. \m : \tys -> \tya} \and
  \inference[(T$::\tys$)]{\Gamma |-ss v : \tysp &  \tysp =_{s} \tys & \jtf{}{\tys} }
  {\Gamma |-ss \asc{v}{\tys} : \ds{\tys[][1]}}  \and
  \inference[(Tapp)]{
    \Gamma |-ss v : \tys[1]  &  
    \Gamma |-ss w :  \tys[2] & \dom(\tys[1]) =_{s} \tys[2] 
  }
  {\Gamma |-ss v\;w : \cod(\tys[1]) } \and
  \inference[(T$\oplus$)]{
    \Gamma |-ss \m : \tya[1]  &  
    \Gamma |-ss \n : \tya[2] & 
  }
  {\Gamma |-ss { \m} \spsum { \n} :  \ps \cdot \tya[1] + (1- \ps) \cdot \tya[2]} \and %
  \inference[(Tlet)]{
    \Gamma |-ss \m : \d{\tys[i][\ps[i]]}[i \in \iSet]  \\
    \forall i\in\iSet.~\Gamma, x : \tys[i] |-ss \n : \tya[i]
  }
  {\Gamma |-ss \lett{x}{\m}{\n} : \sum_{i \in \iSet} \ps[i] \cdot \tya[i]} \and
  \inference[(T$::\tya$)]{\Gamma |-ss \m : \tyap & \tyap =_{s} \tya & \jtf{}{\tya}
  } 
  {\Gamma |-ss \asc{ \m }{\tya} : \tya}\and
  \inference[(T$+$)]{
    \Gamma |-ss v : \tys[1]   &  \tys[1] =_{s} \rtype \\
    \Gamma |-ss w :  \tys[2]  &  \tys[2] =_{s} \rtype \\
  }
  {\Gamma |-ss \add{v}{w} : \ds{\rtype^1} } \and
  \inference[(Tif)]{
    \Gamma |-ss v : \tys  & \tys =_{s} \btype  \\
    \Gamma |-ss m : \tya & 
    \Gamma |-ss n : \tya 
  }
  {\Gamma |-ss \ite{v}{m}{n} : \tya } \and
  \inference[(V)]{
    \forall i \in \iSet. |-ss \v[i] : \tys[i] 
      }
      {\Gamma |-ss \d{ \pt{\v[i]}[\ps[i]]}[i \in \iSet]  : \d{ \tys[i][\ps[i]] }[i \in \iSet] } 
  \and
  \end{mathpar}\\[1.5ex]
  \begin{tabular}{l@{\hspace{1em}}l@{\hspace{2em}}l}
    $\dom:\Type \rightharpoonup \Type$ & $\cod:\Type \rightharpoonup \DType$ &   $\cdot: [0,1] \times\DType \rightarrow \DType$  \\
    $\dom(\tys -> \tya) = \tys $ & $\cod(\tys -> \tya) = \tya $ & $\ps \cdot \d{\tys[i][\ps[i]]}[i \in \iSet] =
    \d{\tys[i][\ps\cdot\ps[i]]}[i \in \iSet] $ \\
    $\dom(\tys)~\text{undef. otherwise} $  &  $ \cod(\tys)~\text{undef. otherwise}$ \\
  \end{tabular}\\[1.5ex] 
  \begin{tabular}{l}
 $+:\DType \times \DType \rightharpoonup \DType$ \\
 $\d{\tys[i][\ps[i]]}[i \in \iSet] + \d{\tys[j][\ps[j]]}[j \in \jSet]
    = \d{\tys[i][\ps[i]]}[i \in \iSet] \union \d{\tys[j][\ps[j]]}[j
    \in \jSet] \quad \text{if~} 
     \ssum[i \in \iSet] \ps[i] + \ssum[j \in \jSet] \ps[j] \leq 1$ 
  \end{tabular} 
  \caption{Type system of \slang.}
  \label{static-typing}
\end{figure}

Figure \ref{static-typing} presents the type system of \slang. Type
rules are defined using a pair of mutually-defined judgments: one for
values and another for computations. Judgment $\Gamma |-ss
\v : \tys$ (resp. $\Gamma |-ss \m : \tya$)  for values (resp. 
computations) denotes that value $\v$ (resp. term $\m$) has simple
type $\tys$ (resp. distribution type $\tya$) under type environment
$\Gamma$, which maps variables to simple types.

Type rules for values are standard, with only a few rules deserving special
attention. For example, rule (Tv) allows assigning a value of
simple type $\tys$ also distribution type $\ds{\tys[][1]}$ (\eg
program $1$ can be typed as $\ds{\Int^1}$). Also, note that rules
(T$\lambda$), (T$::\tys$) and (T$::\tya$) require all program type
annotations to be well-formed. We say that a distribution type
$\d{\tys[i][\ps[i]]}[i \in \iSet]$ is \emph{well-formed}, written
$|- \d{\tys[i][\ps[i]]}[i \in \iSet]$, if
$\sum_{i \in \iSet}\ps[i] = 1$ and simple type $\tys[i]$ is
well-formed for every $i \in \iSet$. A simple type $\tys$ is
\emph{well-formed}, written $|- \tys$, if it is either a base type ($\rtype$
or $\btype$) or a function type $\tys -> \tya$, where $\tys$ and
$\tya$ are well-formed.

A particularity of {\slang}'s type system is that it relies on a
semantic---rather than syntactic---notion of type equality
($=_{s}$), as used in rules (T$::\tys$), (Tapp) and (T$::\tya$). For
example,
$\dt{\rtype^{\colora{\frac{1}{2}}},\btype^{\colorb{\frac{1}{4}}},
  \btype^{\colorb{\frac{1}{4}}} } =_{s}
\dt{\rtype^{\colora{\frac{1}{3}}},\rtype^{\colora{\frac{1}{6}}},\btype^{\colorb{\frac{1}{2}}}}$
because
${\colora{\frac{1}{2}}} = {\colora{\frac{1}{3}} } +
{\colora{\frac{1}{6}}}$ and
${\colorb{\frac{1}{4} + \frac{1}{4} }} =
{\colorb{\frac{1}{2}}}$. Formally, type equality is given by rules:
\begin{mathpar}%
\inference{}{\!\rtype =_{s} \rtype \!} \!\!\and\!\!
\inference{}{\!\btype =_{s} \btype \!} \!\!\and\!\!
\inference{\!\tys[1] =_{s} \tys[2] & \tya[1] =_{s} \tya[2]\! }
{\!\tys[1] -> \tya[1] =_{s} \tys[2] -> \tya[2]\!} \!\!\and\!\!
\inference{\!
  \forall \tys \in \supp(\tya[1]) \cup \supp(\tya[2]).\ \tya[1](\tys) = \tya[2](\tys) \!}
{ \tya[1] =_{s} \tya[2] }
\end{mathpar}

\noindent where $\supp(\tya)$ represents the \emph{support}  of distribution
type $\tya$ defined by $\supp(\tya) = \{\tys ~|~ \tys[][\ps[]] \in \tya \land 
\ps>0\}$ and  $\tya(\tys)$ represents the \emph{probability} that $\tya$ assigns to a simple
type $\tys$, defined by $\d{\tys[i][\ps[i]]}[i \in \iSet](\tys) =
\sum_{i \in \iSet | \tys[i] =_{s} \tys} \ps[i]$.

Following the approach of~\citet{garciaAl:popl2016}, to ease the gradualization process we make all type relations and type functions explicit.
For the (Tapp) rule, we use partial functions $\dom$ and $\cod$ to extract
the domain and codomain of a function type, respectively. Also we make
explicit the fact that the type of the argument should be equal to the domain type of the function.
Rule (T$\oplus$) combines the distribution types $\tya[1]$ and
$\tya[2]$ of sub-expressions $m$ and $n$, by first scaling $\tya[1]$ by
$\ps$ and $\tya[2]$ by $1 - \ps$, and then adding the resulting 
scaled distribution types together. Scaling $\ps\cdot\tya$ is defined
pointwise, \ie by scaling all the probabilities in the distribution
type by $\ps$; the addition of two (sub) distribution types $\tya[1] +
\tya[2]$ is defined as the union of the two multi-sets, provided that
the sum of the resulting probabilities does not exceed $1$. For instance, consider program $1 \oplus_{\frac{1}{3}} \ttt$.
Expression $1$ is typed as $\dt{\rtype^{1}}$ and $\ttt$ as  
$\dt{\btype^{1}}$. After scaling both distribution types and adding them together, the resulting distribution type is
$\frac{1}{3}\cdot\dt{\rtype^{1}} + \frac{2}{3}\cdot\dt{\btype^{1}} = \dt{\rtype^{\frac{1}{3}}, \btype^{\frac{2}{3}}}$.
Rule (Tlet) propagates the type of $m$ to $n$ as follows.
If $m$ has a distribution type $\tya$, then for each type and probability $\tys[][\ps] \in \tya$, $n$ is type-checked under an extended environment where $x$ is typed as $\tys$. The resulting type of the let expression is computed by adding each distribution type of $n$ scaled by its corresponding $\ps$. 
\extends{Rule (T+) ensures that addition is performed between two numeric values.
Rule (Tif) requires that both branches of the conditional expression have the same type.}
Finally, rule (V) is discussed in the next section as it assigns types to so-called distribution values, which are the result of evaluating terms.

Note that, as expected, well-typed terms are assigned well-formed types, only. 
\begin{lemma}[Type well-formedness]\label{lemma:welltyped-wellform-static2}
For every value $\v$, every term $\m$, every simple type $\tys \in \Type$, every distribution type $\tya \in \DType$ and every environment $\Gamma$ (mapping variables to simple types), 
  \begin{enumerate}
  \item If $\:\Gamma |-ss \v : \tys$, then $\justify{\tys}$.
  \item If\ $\:\Gamma |-ss \m : \tya$, then $\justify{\tya}$. 
  \end{enumerate}   
\end{lemma}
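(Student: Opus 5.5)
We proceed by simultaneous induction on the typing derivations of $\Gamma |-ss \v : \tys$ and $\Gamma |-ss \m : \tya$. The rules for values and for computations are mutually dependent: (Tv) uses a value judgment to conclude a computation judgment, and (T$\lambda$) uses a computation judgment to conclude a value judgment. For this reason both items have to be proved together.

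First, the statement must be strengthened with a hypothesis on $\Gamma$. Rule (Tx) returns $\Gamma(x)$ unchanged, so we assume $\Gamma$ is \emph{well-formed}, i.e.\ $\justify{\Gamma(x)}$ for every $x \in \dom(\Gamma)$. This is the property we will carry through the induction. The quantification over every environment in the lemma should be read as ranging over well-formed environments. Closed terms, with the empty environment, satisfy this trivially.

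Next we check that every rule that extends the environment preserves well-formedness:
\begin{itemize}
\item In (T$\lambda$), the premise $\jtf{}{\tys}$ is explicit.
\item In (Tlet), the extension is $x : \tys[i]$ for $\tys[i][\ps[i]] \in \tya$, where $\tya$ is the type of $m$. The induction hypothesis applied to the first premise gives $\justify{\tya}$, hence $\justify{\tys[i]}$ for each $i$.
\end{itemize}

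The value cases then go as follows:
\begin{itemize}
\item (Tr) and (Tb) are immediate.
\item (Tx) follows from well-formedness of $\Gamma$.
\item (T$\lambda$) combines the premise $\jtf{}{\tys}$ with the induction hypothesis $\justify{\tya}$ on the body.
\end{itemize}

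The computation cases are:
\begin{itemize}
\item (Tv) and (T$::\tys$) yield $\ds{\tys[][1]}$, whose total mass is $1$. Its component is well-formed by the induction hypothesis in (Tv) and by the premise in (T$::\tys$).
\item (T$::\tya$) is immediate from its premise.
\item (T$+$) yields the well-formed type $\ds{\rtype^1}$.
\item (Tif) follows directly from the induction hypothesis on $m$.
\item (Tapp): by the induction hypothesis $\tys[1]$ is well-formed. Since $\dom(\tys[1])$ is defined, $\tys[1]$ must be a function type $\tys -> \tya$, and well-formedness of function types gives $\justify{\tya} = \justify{\cod(\tys[1])}$.
\end{itemize}

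The only genuinely quantitative cases are (T$\oplus$) and (Tlet).

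\emph{(T$\oplus$).} The multiset $\ps \cdot \tya[1] + (1-\ps)\cdot \tya[2]$ has components drawn from $\tya[1]$ and $\tya[2]$, which are well-formed by the induction hypothesis. Its total mass is $\ps \cdot 1 + (1-\ps)\cdot 1 = 1$, using that scaling multiplies the total mass and that the union adds masses; note also that $\ps \in [0,1]$.

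\emph{(Tlet).} The mass of $\sum_{i} \ps[i] \cdot \tya[i]$ is $\sum_i \ps[i] \cdot 1 = 1$. Here the induction hypothesis is applied to each premise $\Gamma, x:\tys[i] |-ss \n : \tya[i]$, which is legitimate because the extended environment is well-formed, as shown above. In both cases the partial addition $+$ is defined precisely because the total mass does not exceed $1$, so no side condition can fail.

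Finally, rule (V) for distribution values falls outside the two items of the lemma, which concern values $\v$ and terms $\m$. If it is included, well-formedness follows from the induction hypothesis on each closed $\v[i]$ together with the requirement that distribution values have probabilities summing to $1$; this depends on how distribution values are defined later in the paper.

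I expect the main obstacle to be organisational rather than computational. The two delicate points are the strengthening to well-formed $\Gamma$ and the justification that (Tlet) may invoke the induction hypothesis under an extended environment, which requires first obtaining well-formedness of $m$'s type. Once these are settled, the mass calculations for (T$\oplus$) and (Tlet) are routine, using only that $\cdot$ scales the total mass and $+$ adds it.
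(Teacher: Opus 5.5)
Your proof is correct and follows the same route as the paper: induction on the typing derivation, with (T$\oplus$) and (Tlet) settled by $p + (1-p) = 1$ and $\sum_i p_i = 1$, and the remaining cases read off from the rule premises or the induction hypothesis. Your explicit restriction to well-formed environments, threaded through (T$\lambda$) and (Tlet), is the invariant the paper only gestures at in its variable case (``variables come from lambda and let terms with well-formed types''); it is genuinely needed, because for an arbitrary $\Gamma$ rule (Tx) can return an ill-formed type.
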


\subsection{Dynamic Semantics}

\begin{figure}[t]
  \begin{flushleft}
    \framebox{$ m \snreds{}  \sV $}
    \end{flushleft}
    \vspace{-2em}
    \begin{displaymath}
      \begin{array}{r@{\hspace{0.3em}}c@{\hspace{0.8em}}l@{\hspace{0.8em}}l}
         \sV & ::= &  \dt{ \pt{\v[i]}[\ps[i]] \mid i \in \iSet} & \text{(distribution values)}
      \end{array}
    \end{displaymath}\\
  \def \MathparLineskip {\lineskip=1.4ex}    
\begin{mathpar}
\inference{ }
  { v \snreds{} \dt{ \pt{v} } } \and 
\inference{ m[v/x]  \snreds{}    \sV }
{(\lambda x: \tys. m)\; v \snreds{}    \sV } \and
\inference{{m} \snreds{} \sV[1] &  {m} \snreds{} \sV[2] }
{{m} \spsum {n} \snreds{} \ps \cdot \sV[1] + (1- \ps) \cdot \sV[2] }\and
\inference{ m \snreds{}   \dt{\pt{\v[i]}[\ps[i]] \mid i \in \iSet } \and
 \forall i \in \iSet.\ n[ \v[i] /x] \snreds{}   \sV[i]
}
{\lett{x}{m}{n} \snreds{}     \ssum[i \in \iSet] \ps[i] \cdot \sV[i] }
 \and
\inference{ }
{ v :: \tys \snreds{} \dt{ \pt{v} } } \and 
\inference{ m \snreds{}  \sV }
{ m :: \tya \snreds{}  \sV } \and 
\inference{r_3 = r_1 + r_2}
{ \add{r_1}{r_2} \snreds{} \dt{ \pt{r_3} } } 
\and 
\inference{\m \snreds{}  \sV  }
{ \ite{\true}{\m}{\n} \snreds{}  \sV} 
\and 
 \inference{\n \snreds{}  \sV  }
{ \ite{\false}{\m}{\n} \snreds{}  \sV
 } 
\end{mathpar}
\caption{Runtime semantics of \slang.}
\label{fig:static-distribution-reduction}
\end{figure}

\change{We endow \slang with a big-step \emph{distribution-based} semantics that relates programs to probability distributions over final values~\citep{DBLP:journals/ita/LagoZ12}, following a call-by-value reduction strategy. Concretely, judgment $\m \snreds{} \sV$ denotes that expression $\m$ reduces to a \emph{distribution value} $\sV$, \ie a probability distribution over values. The reduction relation is formally defined in Figure~\ref{fig:static-distribution-reduction}.

 A value $\v$ reduces to a Dirac distribution, \ie a distribution that assigns probability $1$ to $\v$ (and 0 to any other value). A function application reduces by substituting the argument for the variable binder in the function body. 
 A probabilistic choice first reduces its pair of branches and then returns the weighted sum of the so-obtained distribution values. The scaling and addition operators for distribution values are defined analogously to those for type distributions (Fig.~\ref{static-typing}).
  For example, program $(1 \oplus_{\frac{1}{2}} 2)
  \oplus_{\frac{2}{3}} \ttt$ reduces to distribution value 
  $\dt{ 1^{\frac{1}{3}}, 2^{\frac{1}{3}}, \ttt^{\frac{1}{3}} }$.
The reduction of a $\<let>\!$--expression $\lett{x}{\m}{\n}$ is more involved and proceeds as follows. First, subterm $\m$ is reduced to a distribution value $\ds{\pt{\v[i]}[\ps[i]] \mid i \in \iSet }$. Second, subterm $\n$ is reduced by substituting each $v_i$ (\ie each possible outcome of $m$) for $x$, resulting in distribution values $\sV[i]$.  The entire $\<let>\!$--expression then reduces to the weighted sum $\ssum[i \in \iSet] \ps[i] \cdot \sV[i]$.  %
 Finally, ascribed terms reduce by removing type ascription.
  
}

As expected, \slang is \emph{type safe}, meaning that every well-typed closed program \change{reduces to a distribution value.} 
Formally, this follows from three
results of \glang that we establish in Section~\ref{sec:source} (Theorem~\ref{theorem:staticeq}) and Section~\ref{sec:target} (Theorems~\ref{lemma:typesafety} and~\ref{theorem:dyneq}).

To develop the forthcoming metatheory, we need to assign types not only to \slang programs but also to distribution values. To this end, we follow  rule (V) from Figure~\ref{static-typing}: distribution value $\d{ \pt{\v[i]}[\ps[i]]}[i \in \iSet]$ is assigned distribution type $\d{ \tys[i][\ps[i]] }[i \in \iSet]$ provided each $\v[i]$ is assigned simple type $\tys[i]$ (under an empty environment).
%!TEX root = ../main.tex

\section{\glang: Gradual Source Language}\label{sec:source}
\change{We now present \glang, our gradual source probabilistic language.  
First, we introduce \glang
syntax, specifying, in particular,
where we support (im)precision (Section~\ref{sec:GPCL-syntax}). Second, we present \glang type system and define consistency, discussing why a naive approach to consistency is bound to fail (Section~\ref{sec:type-system}), and also provide alternative---more amenable to implementation---characterizations (Section~\ref{sec:src-consistency}). Third, we define type and term precision, proving that
\glang  satisfies the gradual guarantee and that its type system conservatively 
extends that of \slang (Section~\ref{sec:src-precision}). The dynamic semantics of \glang is defined through an 
elaboration to a target language, discussed in
Section~\ref{sec:target}.}\footnote{In the remainder, we use  \textcolor{sourcecolor}{blue color} for source languages (\glang) and \textcolor{targetcolor}{red color} for target languages (\tlang).} 

\subsection{Syntax} 
\label{sec:GPCL-syntax}
\begin{figure}[t]
\begin{displaymath}
\begin{array}{r@{\hspace{0.3em}}c@{\hspace{0.8em}}l@{\hspace{2em}}l}
  \multicolumn{4}{c}{\ssr\in \mathbb{R},\quad \ssb \in \mathbb{B}, \quad \sx \in \text{Var}, \quad \ps \in [0,1],\quad  \change{\pty \in \GProb}, \quad \sgtys \in \GType, \quad \sgtya \in \GDType}\\[1.5ex]
  \pty     & ::= & \ps ~|~ \Gbox{\?}  &  \text{(gradual probabilities)}\\
  \sgtys,\sgtysp        & ::=        & \rtype ~|~ \btype ~|~ \sgtys ->
                                       \sgtya ~|~ \Gbox{\?}  &
                                                               \text{(gradual simple types)} \\
  \sgtya,\sgtyb        & ::=        & \d{\sgtys[i][\pty[i]]}[i \in
                                      \iSet] & \text{(gradual distribution types)} \\[1.5ex]
  \sm,\srcn           
        & ::= & \sv ~|~ \sv \; \sw ~|~ \src{\lett{x}{\sm}{\srcn}} ~|~
                \src{\sm \: \pssum \: \srcn}                         & \text{(terms)}\\ 
        &     & \src{\asc{\sm}{\sgtya}} ~|~ \src{\asc{\sv}{\sgtys}} ~|~ \src{\ite{\sv}{\sm}{\srcn}} ~|~ \src{\add{\sv}{\sw}}  & \\
  \sv,\sw 
        & ::= & \sx ~|~ \ssr ~|~ \ssb ~|~ \src{\lambda x:\sgtys. \sm}
                                      & \text{(values)} \\
\end{array}
\end{displaymath}
\caption{Syntax of \glang.}
\label{fig:source-syntax}
\vspace{-1em}
\end{figure}

The syntax of \glang is presented in Figure
\ref{fig:source-syntax}. We introduce imprecision in the language by
extending probabilities and simple types with the unknown annotation $\?$.
The unknown probability $\?$ represents any probability in the interval
$[0,1]$, and similarly, the unknown simple type $\?$ represents any static
simple type. We do not need an (explicit) unknown distribution type as
it can already be encoded by the singleton distribution type $\ds{\?^\?}$ (of
unknown simple type, with unknown probability). Notationwise, we use
$\pty$ to range over gradual probabilities \change{($\GProb$)}, $\sgtys$, $\sgtysp$ to
range over simple gradual types ($\GType$), and $\sgtya, \sgtyb$ to range over
gradual distribution types \change{($\GDType$)}.

\subsubsection*{Design driven by AGT}
To justify some of the design decisions behind \glang, we follow, in
parallel, the Abstracting Gradual Typing (AGT)
methodology~\citep{garciaAl:popl2016}. In short, the idea behind AGT is
that starting from a specification of the meaning of gradual types in
terms of sets of static types, we can systematically derive all
relevant notions of the gradual language, which by construction will 
enjoy a set of desired properties (to be discussed
later). Unfortunately, some of the so-obtained definitions turn out
not to be very amenable to implementation. To address this limitation,
we also derive alternative (equivalent) definitions, with a more
operational flavor.

As just hinted, we start providing the meaning of gradual types and probabilities
via concretization functions that map 
\change{\emph{gradual} simple types, distribution types and probabilities 
to non-empty sets of \emph{static} simple types, distribution types and probabilities, respectively}.
\begin{align*}
    \crpsymbol \colon [0,1] \cup \{\? \}  \rightarrow  \mathcal{P}([0,1])\\  
      \crp{\?}
      &~=~
      [0,1]
      \\
      \crp{\ps}&~=~\{
      \ps\} \\[1ex]
      \crtasymbol \colon \GDType \rightarrow  \mathcal{P}(\DType) \\
     \crta{ \d{\sgtys[i][\pty[i]]}[i \in \iSet] } 
    &~=~ \bigl\{ \d{ \tys[i][\ps[i]] }[i \in \iSet] \mid
       \forall i \in \iSet.~ \tys[i] \in \crt{\sgtys[i]} \land 
       \ps \in \crp{\pty[i]}\, \bigr\} \\[1ex]
      \crtsymbol \colon \GType \rightarrow  \mathcal{P}(\Type) \\
      \crt{\sgtys -> \sgtya} &~=~ 
     \bigl\{ \tys -> \tya \mid \tys \in \crt{\sgtys} \land \tya \in
        \crta{\sgtya} \bigr\} \\
        \crt{\?} &~=~
      \Type \\
       \crt{\rtype}
      &~=~ \{ \rtype  \}
      \\
      \crt{\btype} &~=~
      \{ \btype  \}%
    \end{align*}

The concretization functions crisply capture the intuition behind
imprecision: The meaning of the unknown gradual probability is any
probability in the interval $[0,1]$ and the meaning of the unknown gradual simple
type is any static simple type. The meaning of a gradual
distribution type is computed inductively, by computing the meaning of
both gradual simple types and gradual probabilities.

\subsection{Type System}
\label{sec:type-system}
\begin{figure}
\begin{small}
\begin{flushleft}
\framebox{$\Gamma |-t \sv: \sgtys$, $\quad\Gamma |-d \sm: \sgtya$}
\end{flushleft}
 \def \MathparLineskip {\lineskip=1.4ex}
\begin{mathpar}
\inference{}
{\Gamma |-t \ssr : \rtype} \and
\inference{}
{\Gamma |-t \ssb : \btype} \and
\inference{\Gamma(\sx) = \sgtys}
{\Gamma |-t \sx : \sgtys} \and
\inference{\Gamma |-t \sv : \sgtys}
{\Gamma |-t \sv : \ds{\sgtys[][1]} } \and
\inference{\Gamma, \sx:\sgtys |-t \sm : \sgtya & \jtf{}{\sgtys}}
{\Gamma |-t \src{\lambda \sx:\sgtys. \sm} : \sgtys -> \sgtya} \and
\inference{\Gamma |-t \sv : \sgtys &  \sgtys \rel  \sgtysp &  \jtf{}{\sgtysp} }
{\Gamma |-t \src{\asc{\sv}{\sgtysp}} : \ds{\sgtysp[][1]}}  \and
\inference{
  \Gamma |-t \sv : \sgtys  &  
  \Gamma |-t \sw :  \sgtysp &   \sgtysp \rel  \cdom(\sgtys)
}
{\Gamma |-d \sv \; \sw : \ccod(\sgtys)} \and
\inference{
  \Gamma |-d \sm : \sgtya  &  
  \Gamma |-d \srcn : \sgtyb &  
}
{\Gamma |-d \src{{\sm}\, \pssum\,  {\srcn}} : \pty \cdot \sgtya + (1 {-} \pty) \cdot \sgtyb} \and
\inference{
  \Gamma |-d \sm : \d{\sgtys[i][\pty[i]]}[i \in \iSet]  \and
  \forall i\in\iSet.~\Gamma, \sx : \sgtys[i] |-d \srcn : \sgtya[i]
}
{\Gamma |-d \src{\lett{ \sx }{\sm}{\srcn}} : \sum_{i \in \iSet} \pty[i] \cdot \sgtya[i]} \and
\inference{\Gamma |-d \sm : \sgtya & \sgtya \rel  \sgtyb  & \jtf{}{\sgtyb}
} 
{\Gamma |-d \src{\asc{ \sm}{\sgtyb}} : \sgtyb} \\
  \inference{
    \Gamma |-t \sv : \sgtys & \sgtys \rel \rtype   &  
    \Gamma |-t \sw : \sgtysp & \sgtysp \rel \rtype 
  }
  {\Gamma |-d \src{\add{\sv}{\sw}} : \ds{\rtype^1} } \and
  \inference{
    \Gamma |-t \sv : \sgtys & \sgtys \rel \btype \\
    \Gamma |-d \sm : \sgtya & 
    \Gamma |-d \srcn : \sgtya
  }
  {\Gamma |-d \src{\ite{\sv}{\sm}{\srcn}} : \sgtya } 
\end{mathpar} %
\begin{tabular}{ll}
  $\cdom:\GType \rightharpoonup \GType$ & $\ccod:\GType \rightharpoonup \GDType$  \\
  $\cdom(\sgtys -> \sgtya) = \sgtys $ & $\ccod(\sgtys -> \sgtya) = \sgtya $ \\
  $\cdom(?) = \? $  &  $ \ccod(?) = \ds{\?^\?}$  \\
  $\cdom(\sgtys)~\text{undef.~otherwise} $  &  $
                                              \ccod(\sgtys)~\text{undef.~otherwise}$  \\[1.5ex]

  $\pty[1] \mathbin{\mathit{op}} \pty[2] = 
   {\begin{cases}
    \ps[1] \mathbin{\mathit{op}} \ps[2] & \pty[1] \in \rtype \land \pty[2] \in \rtype \\
    \? & \text{otherwise}
   \end{cases}}$ & $\mathbin{\mathit{op}} \in \{\cdot, -\}$\\[1.5ex]  
  $\pty \cdot \d{\sgtys[i][\pty[i]]}[i \in \iSet] = 
  \d{\sgtys[i][\ps\cdot\ps[i]]}[i \in \iSet]$ & %
\end{tabular} 
\end{small}
\caption{Type system of  \glang.}
\label{fig:source-type-system}
\vspace{-1em}
\end{figure}

Figure~\ref{fig:source-type-system} shows the type system of \glang,
which is obtained from the type system of the static language
(Figure~\ref{static-typing}) by replacing static elements with their
gradual counterpart. Let us briefly describe these liftings. The
lifting $\cdom \colon \GType \rightharpoonup \GType$ (resp.~$\ccod \colon \GType \rightharpoonup \GDType$) of type function $\dom$
is standard: $\cdom(\sgtys -> \sgtya) = \sgtys$, $\cdom(\?) = \?$, and
$\cdom$ is undefined elsewhere. Function $\cod$ is defined analogously. The lifting of the minus (resp. product) operation between probabilities, also denoted by $-$ (resp. $\cdot$), returns $\?$ if either of the operands is $\?$ (and behaves as expected, otherwise). The lifting of the scaling of distribution types,
also \change{denoted} by $\cdot$, is defined pointwise, in terms of the lifting
of the product between probabilities:
$\pty \cdot \d{\sgtys[i][\pty[i]]}[i \in \iSet] = 
  \d{\sgtys[i][\pty\cdot\pty[i]]}[i \in \iSet]$.
The lifting of the sum between distribution types, also denoted by $+$,  coincides with the
original operation (see Fig.~\ref{static-typing}).\footnote{Formally,
  side condition $\sum_{i \in \iSet} \pty[i] + \sum_{j \in \jSet}
  \pty[j]\leq 1$ is defined following the AGT approach, \ie it holds if there exist concretizations
  $\ps[i] \in \crp{\pty[i]}$ and  $\ps[j] \in \crp{\pty[j]}$ such that 
  $\sum_{i \in \iSet} \ps[i] + \sum_{j \in \jSet}
  \ps[j] \leq 1$.} 

The lifting of type equality, called type
\emph{consistency} and denoted by $\rel$ in \glang, plays a fundamental role in
gradual languages. It allows soundly handling the notion of
(im)precision, which is conveniently introduced via type
ascriptions. For example, program $\src{1} \pssum[\frac{1}{2}] \src{\ttt} ::
\ds{\?^\?} :: \ds{\rtype^{\frac{2}{3}}, \btype^{\frac{1}{3}}}$ is
(optimistically) accepted by the gradual type system of
\glang because $\ds{\rtype^{\frac{1}{2}}, \btype^{\frac{1}{2}}} \sim
\ds{\?^\?}$ and $\ds{\?^\?} \sim \ds{\rtype^{\frac{2}{3}},
  \btype^{\frac{1}{3}}}$. Following AGT, we define type consistency by the existential lifting of type equality~$=_{s}$:
\begin{definition}[Type consistency, by AGT]\label{def:cosistAGT}
 For any pair of gradual simple types $\sgtys, \sgtysp \in \GType$ and any
 pair of gradual distribution types $\sgtya, \sgtyb \in  \GDType$, we
 define:
 \begin{enumerate}
  \item $\sgtys \grel \sgtysp$ iff $\ \exists \tys[1] \in
    \crt{\sgtys}, \tys[2] \in \crt{\sgtysp}. \ \tys[1] =_{s} \tys[2]$,
  \item $\sgtya \grel \sgtyb$ iff $\ \exists \tya[1] \in
    \crta{\sgtya}, \tya[2] \in \crta{\sgtyb}. \ \tya[1] =_{s} \tya[2]$.  
 \end{enumerate}
\end{definition}
\noindent In words, two gradual types are consistent if there exist
static simple types in their concretizations that are equal. The
problem with this definition is that it is not practical, as it can
\change{depend on} sets of \change{infinitely many} types. For gradual
simple types, this can be partially addressed by stating that $\?$ is
consistent with every other gradual simple type, but for gradual
distribution types the problem is more challenging as probabilities
must also be taken into account.

\subsection{Consistency, Refined}
\label{sec:src-consistency}

We are thus interested in an \emph{inductive} definition of
consistency. To illustrate the main idea behind our alternative
characterization, consider the pair of gradual distribution types
\begin{itemize}
  \item $\sgtya \:=\: \dt{\pt{(\rtype -> \dt{\?^{\change{1}}}) }[\frac{1}{2}], \pt{(\? ->
           \dt{\rtype^{\change{1}}})}[\frac{1}{2}]}$, and
  \item $\sgtyb \:=\: \dt{\pt{(\? -> \dt{\btype^{\change{1}}}) }[\frac{1}{3}], \pt{(\rtype ->
    \dt{\?^{\change{1}}})}[\frac{2}{3}]}$
\end{itemize}
represented on the left and right hand
side of Figure~\ref{fig:splitting} (for concreteness, we assume that
the elements of $\sgtya$ and $\sgtyb$ are enumerated by index set
$\iSet = \{1,2\}$, thus, \eg, $\pt{(\rtype -> \dt{\?}) }[\frac{1}{2}]$
corresponds to the simple type of index $1$ in
$\sgtya$). Intuitively, $\sgtya$ and $\sgtyb$ will be consistent iff there exists a \emph{splitting} of the probabilities
$\tfrac{1}{2}, \tfrac{1}{2}$ from $\sgtya$ and $\tfrac{1}{3}, \tfrac{2}{3}$ from $\sgtyb$ that relates
the simple types in $\sgtya$ with the simple types in and $\sgtyb$ as follows:
\begin{enumerate}
\item  Type $\rtype -> \dt{\?^{\change{1}}}$ in $\sgtya$ is consistent with both $\? -> \dt{\btype^{\change{1}}}$ and 
$\rtype -> \dt{\?^{\change{1}}}$ in $\sgtyb$. This means that $\frac{1}{2}$, the
probability of $\rtype -> \dt{\?^{\change{1}}}$ in $\sgtya$, must be split into
two, \ie $\frac{1}{2} = \var{1}{1} + \var{1}{2}$, where $\var{1}{1}$
(resp.~$\var{1}{2}$)
represents the probability of relating $\rtype -> \dt{\?^{\change{1}}}$, the first
simple type in $\sgtya$, with $\? -> \dt{\btype^{\change{1}}}$ (resp.~$\rtype ->
\dt{\?^{\change{1}}}$), the first (resp.~second) simple
type in $\sgtyb$.
\item Type  $\? -> \dt{\rtype^{\change{1}}}$ in $\sgtya$ is consistent only with
  $\rtype -> \dt{\?^{\change{1}}}$  in  $\sgtyb$. Therefore, the probability of 
$\? -> \dt{\rtype^{\change{1}}}$ in $\sgtya$ need not be split, leading to $\frac{1}{2} = \var{2}{2}$.
\item Similarly, type  $\? -> \dt{\btype^{\change{1}}}$ in $\sgtyb$ is consistent
  only with $\rtype -> \dt{\?^{\change{1}}}$ in  $\sgtya$, so $\frac{1}{3} = \var{1}{1}$.

\item Finally, type $\rtype -> \dt{\?^{\change{1}}}$ in $\sgtyb$ is consistent with
  $\rtype -> \dt{\?^{\change{1}}}$ and $\? -> \dt{\rtype^{\change{1}}}$ in $\sgtya$, resulting in  $\frac{2}{3} = \var{1}{2} + \var{2}{2}$.
\end{enumerate}

\begin{figure}[t]
\begin{small}
\begin{center}
\begin{tikzpicture}
  \node[draw, circle,  fill= white!20, inner sep=1pt] (p1) {1};
ß  \node[sloped, left = 3pt of p1] {$(\rtype -> \ds{\?^{\change{1}}}  )^{\frac{1}{2}}$};
  \node[sloped, above = 2pt of p1] {};

  \node[draw, circle, fill= white!20, inner sep=1pt, right=140pt of p1, yshift=18pt] (p2) {1};
  \node[sloped, right = 3pt of p2] {$(\?-> \ds{\btype^{\change{1}}} )^{\frac{1}{3}}$};
  \node[sloped, above = 2pt of p2] {};

  \node[draw, circle, fill= white!20,inner sep=1pt, below=8pt of p1, yshift=-18pt] (p3) {2};
  \node[sloped, left = 3pt of p3] {$(\?-> \ds{\rtype^{\change{1}}} )^{\frac{1}{2}}$};
  \node[sloped, above = 2pt of p3] {};

  \node[draw, circle, fill= white!20, inner sep=1pt, right=140pt of p3, yshift=18pt] (p4) {2};
  \node[sloped, right = 3pt of p4] {$(\rtype -> \ds{\?^{\change{1}}} )^{\frac{2}{3}}$};

  \node[draw, circle, fill, inner sep=1.0pt, right=50pt of p1, yshift=18pt] (p5) {};
  \node[sloped, above = 3pt of p5] {};

  \node[draw, circle, fill, inner sep=1.0pt, right=90pt of p1, yshift=18pt] (p8) {};
  \node[sloped, above = 3pt of p5] {};

  \node[draw, circle, fill, inner sep=1.0pt, right=50pt of p1, yshift=-8pt] (p6) {};
  \node[sloped, above = 3pt of p6] {};
  \node[draw, circle, fill, inner sep=1.0pt, right=90pt of p1, yshift=-8pt] (p9) {};
  \node[sloped, above = 3pt of p6] {};

  \node[draw, circle, fill, inner sep=1.0pt, right=50pt of p3] (p7) {};
  \node[sloped, above = 3pt of p7] {};
  \node[draw, circle, fill, inner sep=1.0pt, right=90pt of p3] (p10) {};
  \node[sloped, above = 3pt of p7] {};

  \draw[arrow] (p1) to (p5);
  \draw[arrow] (p2) to (p8);
  \draw[arrow] (p1) to (p6);
  \draw[arrow] (p4) to (p9);
  \draw[arrow] (p4) to (p10);
  \draw[arrow] (p3) to (p7);
  \draw[] (p5) edge  node[sloped, anchor=center, above] {$\var{1}{1} = \nicefrac{1}{3}$} (p8);
  \draw[] (p6) edge  node[sloped, anchor=center, above] {$\var{1}{2} = \nicefrac{1}{6}$} (p9);
  \draw[] (p7) edge  node[sloped, anchor=center, above] {$\var{2}{2} = \nicefrac{1}{2}$} (p10);
\end{tikzpicture} 
\end{center}
\end{small}\vspace{-1ex}
\caption{Probability splitting to justify consistency between gradual distribution types.}
\label{fig:splitting}
\end{figure}
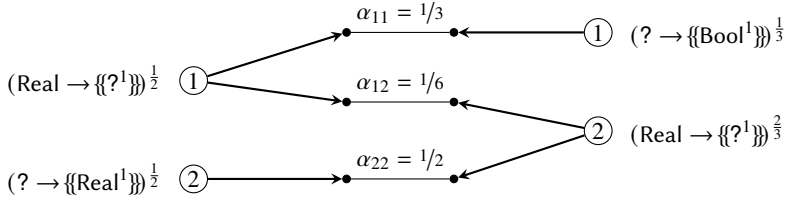

Since the system of four equations so derived is feasible, witnessed
\eg by solution $\var{1}{1} = \frac{1}{3}$, $\var{1}{2} = \frac{1}{6}$
and $\var{2}{2} = \frac{1}{2}$, we can conclude that $\sgtya$ and
$\sgtyb$ are consistent. This thought process constitutes a lifting of the
consistency relation between simple types to distribution types
through \emph{couplings}~\citep{DBLP:journals/corr/abs-1103-4577}, a tool
that has already been exploited \eg in the context of probabilistic
bisimulation~\citep{DBLP:journals/njc/SegalaL95} and verification of
cryptographic properties~\citep{Barthe:2009}.

\begin{definition}[Relation lifting]
\label{def:coupling}
Assume that  $\DA = \d{\pt{a_i}[\ps[i]]}[i \in \iSet]$ and $ \DB =
\d{\pt{b_j}[ \change{q_j} ] }[j \in \jSet]$ are multi-set representations of
discrete probability distributions over sets $A$ and $B$, respectively
(that is, $a_i \in A$ and $\ps[i] \in [0,1]$ for all $i \in \iSet$, $b_j \in B$
and $\change{q_j} \in [0,1]$ for all $j \in \jSet$, and $\sum_{i \in
  \iSet}\change{p_i} =   \sum_{j \in
  \jSet} \change{q_j} = 1$). Moreover, let $R \subseteq A \times B$ be a
relation between $A$ and $B$. We say that $\DA$ and $\DB$ are related
by the \emph{lifting} of $R$, written \change{$\couplingLift{\DA}{\DB}{R}$},
iff there exist  $\CC = \d{\var{i}{j} \in [0,1]}[i \in \iSet \land j \in \jSet]$
such that for all $i \in \iSet$ and all $j \in \jSet$,
\begin{enumerate}
\item\label{cond:coupling} $\ps[i] = \sum_{j \in \jSet} \var{i}{j} \land \ps[j] = \sum_{i
    \in \iSet} \var{i}{j}$, 
\item\label{cond:relation}  $ \var{i}{j} > 0 => a_i \mathbin{R} b_j$
\end{enumerate}
We write ${\cjudg{\CC}{\DA}{\,R\,}{\DB}}$ to denote
that $\CC$ is a witness of the relation \change{$\couplingLift{\DA}{\DB}{R}$}, \ie to denote the conjunction between conditions \ref{cond:coupling}  and \ref{cond:relation} above. Moreover, any $\CC$ satisfying (only) condition~\ref{cond:coupling} is called a
\emph{coupling} between $\DA$ and $\DB$.  
\end{definition}

\subsubsection*{Type equality via couplings}%
To make a uniform treatment of type equality ($=_{s}$) in 
\slang and type consistency ($\rel$) in \glang, we start by
redefining the type equality in \slang in terms of couplings, via
relation~$=$: 
\begin{mathpar}
\inference{}{\rtype = \rtype} \and
\inference{}{\btype = \btype} \and
\inference{\tys[1] = \tys[2] & \tya[1] = \tya[2] }
{\tys[1] -> \tya[1]= \tys[2] -> \tya[2]} \and
\inference{
{
  \change{\couplingLift{\tya[1]}{\tya[2]}{=}}%
  }
}
{ \tya[1] = \tya[2] }
\end{mathpar}
The last rule above says that two distribution types are equal if
there exists a coupling that justifies the lifting of equality on
simple types to distribution types (note that the $=$ symbol in the
rule premise refers to equality over simple types, while the $=$
symbol in the conclusion refers to equality over distribution
types). Using this rule we can, \eg, derive that
$\ds{\rtype^{\frac{1}{2}}, \btype^{\frac{1}{2}}} =
\ds{\rtype^{\frac{1}{3}}, \rtype^{\frac{1}{6}}, \btype^{\frac{1}{2}}}$
because the set of formulas  
$\tfrac{1}{2} = \var{1}{1} + \var{1}{2}$, $\tfrac{1}{2} = \var{2}{3}$,
$\tfrac{1}{3} = \var{1}{1}$, $\tfrac{1}{6} = \var{1}{2}$ and $\tfrac{1}{2} = \var{2}{3}$ is
satisfiable (by solution $\var{1}{1} = \frac{1}{3}$, $\var{1}{2} = \frac{1}{6}$, $\var{2}{3}
=\frac{1}{2}$).  

As expected, this alternative definition of equality is equivalent to
the original from Section~\ref{sec:static}.

\begin{restatable}[Alt. characterization of equality]{lemma}{couplingeq}\label{couplingeq}
  For all pairs of simple types $\tys[1], \tys[2] \in \Type $ and
  distributions types $\tya[1], \tya[2] \in \DType$,
  \begin{enumerate}
    \item $\tys[1] =_{s} \tys[2] \;\; \text{iff} \; \;  \tys[1] = \tys[2]$.
    \item $\tya[1] =_{s} \tya[2] \; \; \text{iff} \; \;  \tya[1] = \tya[2]$.
  \end{enumerate} 
\end{restatable}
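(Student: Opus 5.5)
We prove both items simultaneously by induction on the structure of types: the pair of simple types is handled by induction on $\tys[1]$, and the distribution case reduces to the simple-type statement on the (structurally smaller) types in the supports. For base types and function types the two relations have literally the same rules, so those cases follow directly from the induction hypothesis. The only real content is therefore the distribution rule. There we must show that equality of aggregated probabilities on every $=_{s}$-class coincides with the existence of a coupling supported on $=$-related pairs.

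By the induction hypothesis, on the simple types occurring in $\tya[1]$ and $\tya[2]$ the relations $=_{s}$ and $=$ coincide. Before using this, we first check that $=_{s}$ is an equivalence relation on simple types; this follows by induction, since the distribution rule is defined by pointwise equality of sums, which is reflexive, symmetric and transitive. Let $\tya[1] = \d{\tys[i][\ps[i]]}[i \in \iSet]$ and $\tya[2] = \d{\tys[j][q_j]}[j \in \jSet]$. Partition the index sets $\iSet$ and $\jSet$ into the equivalence classes $E$ of $=_{s}$ on the types that occur.

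\emph{($\Leftarrow$)} Suppose $\CC$ witnesses $\couplingLift{\tya[1]}{\tya[2]}{=}$. Fix a class $E$ and sum condition~1 over the $i$ in $E$, giving $\sum_{i\in E}\ps[i] = \sum_{i\in E}\sum_{j}\var{i}{j}$. By condition~2, every nonzero $\var{i}{j}$ with $i\in E$ has $j\in E$, so this equals $\sum_{i\in E, j\in E}\var{i}{j}$. The symmetric computation gives the same quantity as $\sum_{j\in E} q_j$. Hence $\tya[1](\tys)=\tya[2](\tys)$ for every $\tys$ in the union of the supports. Types not in any support contribute $0$ on both sides, which is harmless.

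\emph{($\Rightarrow$)} Conversely, suppose the class masses agree: $P_E := \sum_{i\in E}\ps[i] = \sum_{j\in E}q_j$. We build the coupling class by class using the product construction. For $i,j\in E$ with $P_E>0$ set $\var{i}{j} = \ps[i] q_j / P_E$, and set $\var{i}{j}=0$ otherwise. Then $\sum_j \var{i}{j} = \ps[i]\cdot P_E/P_E = \ps[i]$, and symmetrically $\sum_i \var{i}{j} = q_j$. Each $\var{i}{j}$ lies in $[0,1]$. Nonzero entries only relate indices of the same class, so they satisfy $\tys[i] = \tys[j]$ by the induction hypothesis.

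The main obstacle is the bookkeeping around zero-probability entries and classes. A class with $P_E=0$ on one side must have zero mass on the other side too, which the hypothesis guarantees only for types in the supports. So I would check separately that indices with $\ps[i]=0$ or $q_j=0$ are harmlessly assigned all-zero rows and columns. I also need to confirm that the definition of lifting requires both distributions to sum to $1$. If the lemma is meant for possibly non-well-formed types, the argument above never uses total mass $1$, since the marginal conditions are checked class by class, so it goes through as is.
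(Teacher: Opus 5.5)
Your proposal is correct and follows the paper's proof. The forward direction uses the same classwise product coupling $\omega_{ij} = p_i q_j / P_E$, which is the paper's $(p_i \cdot q_j)/p_\tau$, and the converse sums the marginal conditions over each $=_{s}$-class exactly as the paper does. Your additional checks fill in details the paper leaves implicit (it simply calls the simple-type case trivial): the mutual induction, that $=_{s}$ is an equivalence relation, and the zero-mass classes.
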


Armed with this new definition of equality based on couplings we proceed to define consistency.

\subsubsection*{Type consistency, a straightforward approach}
A straightforward approach to define consistency in \glang consists in
(rule-wise) lifting the definition of type equality $=$ in \slang, and
extending it with rules stating that $\?$ is consistent with any gradual
simple type. An excerpt of the resulting set of rules would be:
\begin{mathpar}
\inference{}{\? \sim \sgtys} \and
\inference{}{\rtype \sim \rtype} \and
\inference{\change{\sgtys[1] \sim \sgtys[2]} & \sgtya[1] \sim \sgtya[2] }
{\sgtys[1] -> \sgtya[1]\sim \sgtys[2] -> \sgtya[2]} \and
\inference{{%
  \change{\couplingLift{\sgtya[1]}{\sgtya[2]}{\sim}}%
  }
}
{ \sgtya[1] \sim \sgtya[2] }
\end{mathpar}
According to this definition (in particular, by the last rule),
establishing the consistency, \eg, between gradual distribution types
$\d{\sgtys[i][\pty[i]]}[i \in \iSet]$ and
$\d{\sgtys[j][\pty[j]]}[j \in \jSet]$ requires exhibiting a coupling
between them. The problem here is that probabilities in either of the
distribution types can be only partially known, that is, $\pty[i]$
could be $\?$ for some $i \in \iSet$, rendering formula
$\sum_{j \in \jSet} \var{i}{j} = \?$ even ill-defined. 
\change{
A first approach to tackle this problem would be lifting these
formulas (from the static setting) to the gradual setting. Note,
however, that this lifting cannot be done for each formula
independently because the ``same'' $\?$  will probably occur in
multiple formulas.  We should then lift all related formulas at the
same time, but this still suffers from scoping problems because unknown
probabilities (represented by $\?$) must remain visible outside the
formulas lifting:  At runtime, we need to carry witness information
about consistency, where gradual probabilities ``flow'' across
reductions (see the $\<let>\!\!$ and probabilistic choice reduction rules in Fig.~\ref{fig:static-distribution-reduction}).%
}
\change{To tackle this
problem}, we introduce (fresh) symbolic variables representing unknown
probabilities, and analyze the existence of couplings for this
symbolic representation of distribution types.

\subsubsection*{Type consistency via symbolic liftings}
To represent unknown probabilities as (free) variables in the
set of formulas defining the lifting of consistency (from gradual
simple types to gradual distribution types), we extend the syntax of
\glang with \emph{\fgt} \change{($\FGType$)} and \emph{\fgdt} \change{($\FGDType$)} as shown in Figure~\ref{fig:formula-types}. %
\begin{figure}[t]
\begin{displaymath}
\begin{array}{r@{\hspace{0.3em}}c@{\hspace{0.8em}}l@{\hspace{3em}}l}
  \multicolumn{4}{c}{\change{\gtys,\gtysp \in \FGType,  \quad \gta,\gtb \in \FGDType, \quad \cww \in \TVar, \quad \phii \in \Formula}}\\[1.5ex]
  \cww & ::= & \pr{\varr, \leftvar, \rightvar} & \text{(tagged variables)}\\
  \p & ::= &  \cww ~|~ r & \text{(symbolic probabilities)} \\
  \gtys,\gtysp        & ::=        & \rtype ~|~ \btype ~|~ \gtys -> \gtya ~|~ \?  & \text{(\fgt)} \\
  \gta,\gtb      & ::=        & \db{\phi}{\gtys[i][\gbox{\p[i]}]}[i \in \iSet] & \text{(\fgdt)} \\
  \phii       & ::=         & \varphi = \varphi ~|~ \varphi \leq \varphi ~|~ \phii \land \phii & \text{(formulas)} \\
    \varphi       & ::=         & \p   ~|~ \varphi + \varphi ~|~\varphi - \varphi ~|~ \varphi \cdot \varphi ~|~ \varphi/\varphi & \text{(expressions)} \\
\end{array}
\end{displaymath}

  \caption{Formula types.}
  \label{fig:formula-types}
\end{figure}
Intuitively, a \ft is the same as an ordinary gradual
type, except that (1) unknown probabilities are replaced by variables,
and (2) distribution types are guarded by formulas (like in refinement types). Formally, a
\emph{symbolic probability} $\p$ is either a constant $r$ or a tagged
variable. 
A tagged variable $\cww$ represents a symbolic variable
$\varr$ (to be interpreted over the $[0,1]$ interval), which for convenience is tagged by a pair of natural numbers $\leftvar$ and $\rightvar$. %
For simplicity, we adopt the following notation conventions. First, we use $\projv{\cww}, \projl{\cww}$ and $\projr{\cww}$ to access the first, second and third component of $\cww$, respectively. Second, given $\cww = \pr{\varr, i, j}$, we use $\cw{i}{j}$ as a shorthand for $\varr$. Third, in order not to clutter formulas, we sometimes write $\cww$ for $\projv{\cww}$ (\eg, $\cww[1] + \cww[2] = 1$ for $\projv{\cww[1]} + \projv{\cww[2]} = 1$).  Finally, when clear from the context, we refer to tagged variables simply as variables. 
An \emph{expression} $\varphi$ \change{represents} either a symbolic probability
$\p$ or algebraic operations (addition, subtraction, multiplication
or division) between symbolic probabilities. A \emph{formula} $\phii$
is either a comparison between two expressions or the conjunction of
other two formulas. \emph{\Fgt} are defined similarly to gradual
simple types (including the $\?$ type), except that the codomain of
function types are \fgdt.  \Fgdt are now multi-sets of pairs of \fgt
and symbolic probabilities, closed under a formula~$\phii$. %

\change{Let us introduce some handy notation for the rest of the presentation. First, given formula $\phii$, we use $\FV(\phii)$ to denote the set of (free) tagged variables occurring in $\phii$. Second, given a set of symbolic probabilities $\{\p[i]  \mid i \in \iSet \}$, we use $\TV{\{\p[i]  \mid i \in \iSet \}}$ to denote the subset of tagged variables, only (\ie the result of filtering out static probabilities). Lastly, given formula $\phii$ over tagged variables $\cww[1], \ldots, \cww[n]$, we use $\satisfiablee{\phii}$ to denote that $\phii$ is satisfiable, \ie as a shorthand for $\exists \cww[1], \ldots, \exists \cww[n]. \: \phii$.} 

There is a canonical lifting from gradual types to formula gradual
types. For example, the gradual distribution type
$\dt{\pt{\Int}[\frac{1}{3}], \pt{\Bool}[\?], \pt{\?}[\?]}$ can be
represented by the \sfgdt $\dctx[\phii] \dt{\pt{\Int}[\cww[1]],
  \pt{\Bool}[\cww[2]], \pt{\?}[\cww[3]]}$, where
$
\phii ~=~ \cww[1] = \frac{1}{3} \land \cww[2] \in [0,1] \land \cww[3] \in [0,1]
\land  \cww[1] + \cww[2] + \cww[3] = 1$, and $\p \in [r_1,r_2]$ is syntactic sugar for $r_1 \leq \p \land \p
\leq r_2$. Formally, the lifting is captured by three mutually
recursive functions that act over gradual simple types, gradual
probabilities and gradual distribution types respectively as follows:
\begin{align*}
    \liftT{\cdot}: \GType -> \FGType \\
    \liftT{\rtype} &~=~ \rtype \\
      \liftT{\btype} &~=~ \btype \\
    \liftT{\?} &~=~ \? \\
       \liftT{\sgtys -> \sgtya} &~=~ \liftT{\sgtys}
    -> \liftD{\sgtya}\\[1ex]
    \liftP{\cdot}[\cdot]:  \GProb \times \TVar -> \Formula \\
     \liftP{\ps}[\cww] &~=~ (\cww = \ps) \\ 
     \liftP{\?}[\cww] &~=~ (\cww \in
    [0,1] )\\[1ex]
    \liftD{\cdot}: \GDType -> \FGDType \\
     \liftD{\d{\sgtys[i][\pty[i]]}[i \in \iSet]} &~=~ 
  \dctx[ \Big(\bigwedge_{i \in \iSet} \liftP{\pty[i]} \land \sum_{i \in
                                                   \iSet} \cww[i] =
                                                   1\Big)
                                                   ][\d{\liftT{\sgtys[i]}^{\cww[i]}}[i
                                                   \in \iSet]] \\
    & \qquad \qquad \text{where $\cww[i] =
    \pr{\varr[i], i,i}$ and $\varr[i]$ is fresh}
  \end{align*}

\change{To give the inductive definition of type consistency (and other forthcoming notions), we require a variant of the traditional notion of coupling. This variant differs from the traditional definition (see Def.~\ref{def:coupling}) in that it operates over \emph{symbolic} probability distributions, where probabilities are given by logical variables rather than concrete numbers, and these variables are subject to given constraints.}

\change{
\begin{definition}[Coupling over symbolic distributions]
\label{def:symcoupling}
Assume that  $\DA = \d{\pt{a_i}[\ps[i]]}[i \in \iSet]$ and $ \DB =
\d{\pt{b_j}[ q_j] }[j \in \jSet]$ are (multi-set representations of)
symbolic discrete probability distributions over sets $A$ and $B$. Moreover, let $R \subseteq A \times B$ be a
relation between $A$ and $B$. Given $\CC = \d{\var{i}{j}}[i \in \iSet \land j \in \jSet]$, constraint $\psi_{1}$ over  $\ps[i]$, and constraint $\psi_{2}$ over $q_j$, we use  $\cjudgext{\CC}{\DA}{\,R\,}{\DB}{\psi_{1}}{\psi_{2}}{}$ to denote the conjunction of conditions \ref{cond:coupling}  and \ref{cond:relation} from Def.~\ref{def:coupling} (\ie that $\CC$ is a ``traditional'' coupling between $\DA$ and $\DB$), together with $\psi_{1} \land \psi_{2}$. We also use $\couplingLiftExt{\DA}{\DB}{R}{\psi_{1}}{\psi_{2}}{}$ to denote formula  $\exists \{ \ps[i] \mid i \in \iSet\} \cup \{ q_j \mid j \in \jSet\} \cup \{ \var{i}{j} \mid i \in \iSet \land j \in \jSet \}. \ \cjudgext{\CC}{\DA}{\,R\,}{\DB}{\psi_{1}}{\psi_{2}}{}$.
\end{definition}%
}

\change{Note that $\couplingLiftExt{\DA}{\DB}{R}{\psi_{1}}{\psi_{2}}{}$ requires the existence not only of a coupling $\CC$, but also of concretizations of (symbolic distributions) $\DA$ and $\DB$, respectively satisfying $\psi_1$ and $\psi_2$. Typically, $\psi_1$ and $\psi_2$ will require that probabilities sum up to $1$. Like in Definition~\ref{def:symcoupling}, in the rest of the presentation we allow ourselves some abuse of notation and write   
$\exists \{x_1, x_2, \ldots, x_n\}$ as a shorthand for $\exists x_1. \: \exists x_2. \: \ldots \exists x_n$, and similarly for $\forall \{x_1, x_2, \ldots, x_n\}$.

Armed with the above notion of relation lifting, we can define the lifting of any relation $R$ over gradual simple types to gradual distribution types as:
\[
  \couplingLift{\db{\phi[][1]}{\gtys[i][\p[i]]}[i \in \iSet]}{\db{\phi[][2]}{\gtys[j][\p[j]]}[j \in \jSet]}{R} \quad\text{iff}\quad \couplingLiftExt{\d{\gtys[i][\p[i]]}[i \in \iSet]} {\d{\gtys[j][\p[j]]}[j \in \jSet]}{R}{\phi[][1]}{\phi[][2]}{}
 \]
} 
Now, we can readily provide an inductive characterization of consistency, by simply lifting the
definition of equality:

\begin{definition}[Type consistency, inductively]\label{def:consistency-source}
\change{The consistency relation $\sim$ between gradual types
  and formula distribution types ($\gtya, \gtyb \in
  \FGDType$) is defined as follows:}
\begin{mathpar}
\inference{}{\rtype \sim \rtype} \and
\inference{}{\btype \sim \btype} \and
\inference{}{\sgtys \sim \?} \and
\inference{}{\? \sim \sgtys} \\ \and 
\inference{\sgtys[\change{1}] \sim \sgtys[\change{2}] & \sgtya[1] \sim \sgtya[2] }
{\sgtys[1] -> \sgtya[1]\sim \sgtys[2] -> \sgtya[2]} \and
\inference{
 \liftD{\sgtya[1]} \sim \liftD{\sgtya[2]}
}
{ \sgtya[1] \sim \sgtya[2] } 
\and
\inference{
 {%
   \change{\couplingLift{\gtya}{\gtyb}{\sim}}}%
 }
{\gtya \sim \gtyb}
\and
\end{mathpar}
\end{definition}

As expected, the inductive definition of consistency \change{(Def.~\ref{def:consistency-source})} coincides with
the one yielded by AGT \change{(Def.~\ref{def:cosistAGT})}:
\begin{restatable}[Equivalence of consistencies]{lemma}{agtconsistencyeq}\label{agtconsistencyeq}
  For any pair of gradual simple types $\sgtys, \sgtysp \in \GType$
  and any pair of gradual distribution types $\sgtya, \sgtyb \in \GDType$,
\begin{enumerate}
  \item $\sgtys \grel \sgtysp \;\; \text{iff} \; \;  \sgtys \rel \sgtysp$.
  \item $\sgtya \grel \sgtyb \; \; \text{iff} \; \;  \sgtya \rel \sgtyb$.
\end{enumerate} 
\end{restatable}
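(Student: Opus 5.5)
We prove both items simultaneously by structural induction on the gradual types, mutually over simple types and distribution types. Throughout, we use Lemma~\ref{couplingeq} to replace $=_{s}$ by the coupling-based equality $=$, so that $\grel$ becomes: there exist concretizations related by $=$. The base cases are immediate. For $\rtype,\btype$, the concretizations are singletons. If either side is $\?$, then $\sgtys \rel \?$ holds by rule, and on the AGT side we pick any $\tys[1]\in\crt{\sgtys}$ (concretizations are non-empty) and take $\tys[2]=\tys[1]\in\Type=\crt{\?}$. Mismatched head constructors fail on both sides.

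For function types $\sgtys[1]->\sgtya[1]$ versus $\sgtys[2]->\sgtya[2]$, the concretization is the product of the concretizations of domain and codomain. Static equality of arrows splits componentwise, so the AGT relation holds iff it holds on domains and on codomains separately. The induction hypothesis for simple types and for distribution types then closes this case.

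The main case is distribution types, $\sgtya=\d{\sgtys[i][\pty[i]]}[i\in\iSet]$ and $\sgtyb=\d{\sgtysp[j][\pty[j]]}[j\in\jSet]$. Unfolding the definitions, $\sgtya\rel\sgtyb$ means there exist values $\ps[i]$ satisfying $\liftP{\pty[i]}$ and summing to $1$, values $q_j$ likewise, and $\var{i}{j}\ge 0$ with matching marginals such that $\var{i}{j}>0$ implies $\sgtys[i]\rel\sgtysp[j]$. Note that $\liftP{\pty}$ with a value $\ps$ is exactly $\ps\in\crp{\pty}$.

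For ($\Leftarrow$), we take witnesses $\tya[1]=\d{\tys[i][\ps[i]]}\in\crta{\sgtya}$ and $\tya[2]=\d{\tysp[j][q_j]}\in\crta{\sgtyb}$ with $\tya[1]=\tya[2]$, via coupling $\CC$. The probabilities $\ps[i],q_j$ instantiate the symbolic variables, and $\CC$ serves as the coupling. Whenever $\var{i}{j}>0$ we have $\tys[i]=\tysp[j]$, hence $\sgtys[i]\grel\sgtysp[j]$, and the induction hypothesis gives $\sgtys[i]\rel\sgtysp[j]$. One technical point: if the concretization has $\sum\ps[i]\neq1$, then $\tya[1]$ is not a distribution and the coupling conditions force the totals to agree. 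I will check that concretizations used in $\grel$ are implicitly well-formed, or else argue that both totals equal the common total $t$ and rescale by $1/t$, which preserves both the coupling and positivity.

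For ($\Rightarrow$), the difficulty is that different pairs $(i,j)$ with $\var{i}{j}>0$ yield, via the induction hypothesis, possibly \emph{different} static witnesses $\tau_{ij}=\tau'_{ij}$. A single $\tys[i]$, however, must be chosen per index $i$. I address this by \emph{refining the multisets}. Build $\tya[1]=\d{\tau_{ij}^{\var{i}{j}}\mid \var{i}{j}>0}$ and $\tya[2]=\d{\tau'_{ij}{}^{\var{i}{j}}\mid\var{i}{j}>0}$, related by the diagonal coupling, which gives $\tya[1]=\tya[2]$. This does not directly lie in $\crta{\sgtya}$, since the index set changed. It does once concretization is read up to $=_{s}$, or once we show that $\grel$ is invariant under splitting an entry $\sgtys^{\pty}$ into $\sgtys^{\pty'},\sgtys^{\pty''}$. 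This invariance is the step I expect to be the main obstacle. It fails for a fixed static probability split into two static halves, so I will first try the alternative of choosing, for each $i$, a \emph{common} witness. That alternative requires strengthening the induction hypothesis to a ``most general unifier'' statement: if $\sgtys\rel\sgtysp[1]$ and $\sgtys\rel\sgtysp[2]$, we want a shared $\tys\in\crt{\sgtys}$ equal to elements of both concretizations. This is false in general (e.g.\ $\?$ against $\rtype$ and $\btype$), so the correct fix is the refinement approach, with the invariance justified via semantic equality $=_{s}$ absorbing the split.
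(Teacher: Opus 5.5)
There is a genuine gap in the direction from $\rel$ to $\grel$ on distribution types (the one you label $(\Rightarrow)$; your two labels are swapped relative to the statement). You never close it, and under the paper's literal definitions it cannot be closed. The example you use to rule out the common-witness strengthening is already a counterexample to the lemma.

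Take the paper's own instance $\ds{\rtype^{\frac{1}{2}},\btype^{\frac{1}{2}}} \rel \ds{\?^{\?}}$, witnessed by $\var{1}{1}=\var{2}{1}=\frac{1}{2}$ with $\rtype\rel\?$ and $\btype\rel\?$. Because $\crtasymbol$ keeps the index set and picks one static type per index, every element of $\crta{\ds{\?^{\?}}}$ is a one-entry multiset $\ds{\tys[][\ps]}$. None of these is $=_{s}$ to $\ds{\rtype^{\frac{1}{2}},\btype^{\frac{1}{2}}}$, so $\grel$ fails. Item~1 fails as well, via arrow codomains.

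Appealing to ``$=_{s}$ absorbing the split'' does not help. The relation $=_{s}$ only compares static multisets and does not enlarge $\crta{\sgtya}$; even the $=_{s}$-closure of $\crta{\ds{\?^{\?}}}$ contains only multisets semantically equal to a single $\tys[][\ps]$. The split-invariance of $\grel$ you would need is false exactly for entries whose simple type is $\?$.

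Your refinement construction is the right argument, but for a corrected concretization. In that version an entry $\sgtys[][\pty]$ may denote a finite multiset $\ds{\tys[k][\ps[k]]}$ with all $\tys[k]\in\crt{\sgtys}$ and $\sum_k \ps[k]\in\crp{\pty}$. With that definition your diagonal coupling is a valid witness, and the converse goes through by summing the fine-grained coupling over the split indices. The paper's own proof does not confront this either. It copies the weights and checks only the marginal equations, never choosing static types or verifying the support condition $\var{i}{j}>0 \Rightarrow \tys[i]=\tysp[j]$. That condition is exactly where the one-type-per-index constraint bites.

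Your other direction also has a loose end at your ``technical point''. The paper's $\crtasymbol$ imposes no total-mass constraint, so $\grel$ can be witnessed by non-distributions. Rescaling by $1/t$ is unavailable because a static probability is pinned by $\crp{\ps}=\{\ps\}$. For instance, $\ds{\rtype^{\frac{1}{2}},\btype^{\?}} \grel \ds{\rtype^{\frac{1}{2}},\rtype^{\?}}$ holds by instantiating both $\?$ to $0$. Yet these types are not $\rel$-related: the lifted formula forces mass $\frac{1}{2}$ on the $\btype$ entry, which is consistent with neither entry on the right.

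So this direction holds only once concretizations are restricted to well-formed static distribution types (total mass $1$). That restriction has to be imposed, not ``checked''. With it, your argument is correct: the witness probabilities instantiate the symbolic variables, the coupling from Lemma~\ref{couplingeq} serves as the symbolic coupling, and $\tys[i]=\tysp[j]$ together with the induction hypothesis yields the support condition.
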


\subsubsection*{Type well-formedness}
Another relevant aspect of \glang type system is that, like \slang type system, programs are assigned well-formed types, only. The definition of well-formedness for gradual types is similar to that of static types, except that a gradual distribution type is well-formed iff  it is \emph{plausible} (rather than certain) that its underlying probabilities sum up to $1$, and moreover, all its tagged variables occur in the closing formula:
\begin{definition}[Type well-formedness]\label{def:well-formed-source}
The well-formedness of gradual and formula types (denoted by symbol $\vdash$) is defined as follows:
\begin{mathpar}
  \inference{}{\jtf{}{\rtype}} \and
  \inference{}{\jtf{}{\btype}} \and
  \inference{}{\jtf{}{\?}} \and
  \inference{ \jtf{}{\sgtys} & \jtf{}{\sgtya}}
  {\jtf{}{\sgtys -> \sgtya}} \and
  \inference{ \jtf{}{\liftD{\sgtya}}
   } 
   { \jtf{}{\sgtya} }
   \and
  \inference{
   {\TV{\{\p[i]  \mid i \in \iSet \}} \subseteq \FV(\phi)}  &  \satisfiable{\phi}{\sum_{i \in \iSet} \p[i] = 1} &
   \forall i \in \iSet. \jtf{}{\gtys[i]}  \!
  }
  { \jtf{}{ \phty{\phi[][]}{ \d{\gtys[i][\p[i]]}}[i \in \iSet] } } \and 
  \end{mathpar}    
\end{definition}

Note that while the first line of rules defines well-formedness for both gradual simple and gradual distribution types, the second line defines well-formedness for formula distribution types, only. Well-formedness for formula simple types follows the same rules as for gradual simple types (first four rules above).

\begin{lemma}[Type well-formedness]\label{lemma:welltyped-wellform-source2}
  For any value $\sv$, any term $\sm$, any gradual simple type $\sgtys \in \GType$ and gradual distribution type $\sgtya \in \GDType$ from \glang, and any environment $\Gamma$ (mapping variables to gradual simple types),%
\begin{enumerate}
  \item If $\: \Gamma |-t \sv : \sgtys $, then $\justify{\sgtys}$.
  \item If $\; \Gamma |-d  \sm : \sgtya $, then $\justify{\sgtya}$.
  \end{enumerate}
\end{lemma}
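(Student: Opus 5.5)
The proof will go by simultaneous induction on the typing derivations of $\Gamma |-t \sv : \sgtys$ and $\Gamma |-d \sm : \sgtya$, under the standing hypothesis that the environment is well-formed, i.e.\ $\justify{\Gamma(\sx)}$ for every $\sx$ in its domain. This holds vacuously for the empty environment, and it is preserved by the only rule that extends the environment with an annotation, the $\lambda$ rule, because that rule has premise $\justify{\sgtys}$. For $\<let>$, the extension $\sx : \sgtys[i]$ is justified by the induction hypothesis on $\sm$: well-formedness of $\d{\sgtys[i][\pty[i]]}[i \in \iSet]$ implies well-formedness of each $\sgtys[i]$. Before starting, I will unfold the well-formedness of a gradual distribution type $\d{\sgtys[i][\pty[i]]}[i \in \iSet]$ via its lifting $\liftD{\cdot}$. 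The closing formula $\bigwedge_i \liftP{\pty[i]} \land \sum_i \cww[i] = 1$ mentions all the fresh variables, so the $\TV{\cdot} \subseteq \FV(\cdot)$ condition is automatic. The satisfiability condition then reduces to: there exist $\ps[i] \in \crp{\pty[i]}$ with $\sum_i \ps[i] = 1$, and each $\sgtys[i]$ is well-formed. I will use this characterization throughout.

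The value cases are routine. Constants give base types. Variables are covered by the environment hypothesis. A $\lambda$-abstraction takes its domain from the premise $\justify{\sgtys}$ and its codomain from the induction hypothesis on the body. A value ascription's type comes from its premise. The promotion rule yields $\ds{\sgtys[][1]}$, which is well-formed using the concretization $1$ and the induction hypothesis. Term ascription is immediate from its premise. Addition gives $\ds{\rtype^1}$. The conditional follows from the induction hypothesis on either branch.

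Application needs an auxiliary fact: if $\justify{\sgtys}$ and $\ccod(\sgtys)$ is defined, then $\justify{\ccod(\sgtys)}$. When $\sgtys = \?$ we get $\ds{\?^\?}$, concretized by $1$. When $\sgtys$ is an arrow, well-formedness of the codomain is part of the arrow's well-formedness.

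The main work lies in the probabilistic choice and $\<let>$ rules, where the result is a sum of scaled gradual distribution types using the lifted operations. For choice, suppose the premises have well-formed types $\sgtya$ and $\sgtyb$, witnessed by concretizations summing to $1$. If $\pty$ is a concrete $\ps$, then $\ps \cdot \sgtya + (1-\ps) \cdot \sgtyb$ has entries whose probabilities are either $\ps \cdot q_i$ or $\?$ (when the entry was $\?$). I concretize them by $\ps$ times the chosen witness, which sums to $\ps + (1-\ps) = 1$. If $\pty = \?$, all entries become $\?$, and I pick any $\ps \in [0,1]$ with the same construction. The key point is that the lifted product $\pty \cdot \pty[i]$ has concretization containing $\ps \cdot q_i$ for all $\ps \in \crp{\pty}$ and $q_i \in \crp{\pty[i]}$. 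Well-formedness of the simple components is inherited. For $\<let>$, I take witnesses $q_i$ for $\sm$'s type and, for each $i$, witnesses for $\sgtya[i]$. The weighted combination $\sum_i q_i \cdot (\text{witness of } \sgtya[i])$ again sums to $1$. I also need to check that the side condition of $+$ (sum $\le 1$, existentially) holds, which it does via these same witnesses.

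The expected obstacle is this compositional argument. The lifted $+$ and $\cdot$ lose correlation information, since each $\?$ is independent. So I must verify that the chosen concrete witnesses are jointly compatible. This holds because the choice is made entrywise and the target constraint is just the single linear equation $\sum = 1$.
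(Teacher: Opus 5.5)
Your proof is correct and follows the same route as the paper's: induction on the typing derivation with the same case split. The paper's variable case (``variables come from lambda and let terms with well-formed types'') is your well-formed-environment hypothesis, left implicit there, and its choice and let cases (which just assert that the lifted probabilities sum to $1$) are terse versions of your explicit witness constructions. Your added hypothesis is genuinely necessary, not a mere convenience: with $\Gamma = x : \rtype \rightarrow \ds{\rtype^{1/3}}$ the statement as literally phrased already fails at the variable rule.
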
 

An appealing property of the operator $\liftD{\cdot}$ lifting gradual distribution types to formula distribution types is that it preserves well-formedness:

\begin{lemma}[Preservation of type well-formedness]\label{lemma:wellformed-lifting2}
  For any gradual simple type $\sgtys \in \GType$, and any gradual distribution type $\sgtya \in \GDType$,
\begin{enumerate}
    \item If $\justify{\sgtys}$, then $\justify{\liftT{\sgtys}}$. 
    \item If $\justify{\sgtya}$, then $\justify{\liftD{\sgtya}}$. 
\end{enumerate}
  \end{lemma}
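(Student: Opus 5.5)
We prove both items simultaneously by structural induction on gradual types, mutually for simple and distribution types, since $\liftT{\cdot}$ and $\liftD{\cdot}$ are mutually recursive and well-formedness of function types refers to well-formedness of codomains.

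\emph{Simple types.} For $\rtype$, $\btype$ and $\?$, the lifting is the identity, and well-formedness follows from the axioms of Definition~\ref{def:well-formed-source}. For $\sgtys -> \sgtya$, inversion of $\justify{\sgtys -> \sgtya}$ gives $\justify{\sgtys}$ and $\justify{\sgtya}$. The induction hypotheses then give $\justify{\liftT{\sgtys}}$ and $\justify{\liftD{\sgtya}}$, and the function rule for formula simple types yields $\justify{\liftT{\sgtys} -> \liftD{\sgtya}}$.

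\emph{Distribution types.} Let $\sgtya = \d{\sgtys[i][\pty[i]]}[i \in \iSet]$. The only rule concluding $\justify{\sgtya}$ has premise $\justify{\liftD{\sgtya}}$, so item 2 holds by inversion. The one subtlety is that $\liftD{\cdot}$ picks fresh variables, so the statement should be read up to renaming of the fresh $\cww[i]$. We therefore check that well-formedness of a formula distribution type is invariant under consistent renaming of its tagged variables. This holds because the side conditions $\TV{\cdot} \subseteq \FV(\cdot)$ and $\satisfiablee{\cdot}$ are stable under bijective renaming, and the component simple types are lifted deterministically except for the fresh names inside nested codomains.

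If instead the intended definition derives $\justify{\sgtya}$ directly, namely that some concretizations $\ps[i] \in \crp{\pty[i]}$ sum to $1$ and each $\justify{\sgtys[i]}$ holds, then we check the three premises of the formula rule for $\liftD{\sgtya} = \dctx[\phii][\d{\liftT{\sgtys[i]}^{\cww[i]}}[i \in \iSet]]$:
\begin{enumerate}
\item $\TV{\{\cww[i]\}} \subseteq \FV(\phii)$ holds syntactically, since each $\cww[i]$ occurs in its conjunct $\liftP{\pty[i]}[\cww[i]]$.
\item Satisfiability of $\phii \land \sum_i \cww[i] = 1$ follows by assigning $\cww[i] := \ps[i]$. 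Each conjunct $\liftP{\pty[i]}[\cww[i]]$ holds because $\ps[i] \in \crp{\pty[i]}$, whether $\pty[i]$ is $\ps$ (equality) or $\?$ (membership in $[0,1]$), and the sum constraint holds by the choice of the $\ps[i]$.
\item $\justify{\liftT{\sgtys[i]}}$ holds by the induction hypothesis on the smaller types $\sgtys[i]$.
\end{enumerate}

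The main obstacle is definitional rather than mathematical: fixing the exact meaning of well-formedness of a gradual distribution type, including the AGT reading of the sum side condition, and handling the freshness of variables in $\liftD{\cdot}$. Once that is settled, the satisfiability witness in item 2 is the only substantive step.
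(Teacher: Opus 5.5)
Your proposal is correct and follows the paper's argument: item 2 is exactly inversion of the only rule concluding well-formedness of a gradual distribution type, whose premise is well-formedness of its lifting, and item 1 is the routine structural case that the paper simply calls trivial. Your remarks on renaming the fresh variables, and your alternative check under a direct reading of the definition, are harmless extras that the paper's definition does not require.
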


\subsection{Refined Criteria}
\label{sec:src-precision}
The refined criteria for gradual languages~\citep{siekAl:snapl2015}
establish \change{a set of distinguishing} properties for such a class of languages, where two \change{such properties are related to} the static semantics: the \emph{static gradual
  guarantee}, which guarantees that typing is monotone with respect to
imprecision, and the \emph{conservative extension of the static discipline}, which guarantees that every fully-statically-annotated well-typed term in the gradual language is also typeable in the static language (and vice versa).
To establish the first property, the static gradual guarantee for \glang, we first need to  define a notion of precision between types, and \change{subsequently} between terms.

\subsubsection*{Type precision}
\change{AGT casts the definition of type precision in terms of} set containment on the concretization of the gradual types, \change{\ie} $G_1 \gprec G_2$ \change{(meaning that gradual type $G_1$ is at least as precise as gradual type $G_2$)} if and only if $\gamma(G_1) \subseteq \gamma(G_2)$.
Nevertheless, in the presence of  gradual distribution types,
the definition based on set containment is not \change{satisfactory} as it assumes a \emph{syntactic} \change{equality} between set elements. For instance, \change{while} $\ds{\rtype^1} \gprec \ds{\rtype^{\frac{1}{2}}, \rtype^{\frac{1}{2}}}$ is expected to hold since  the involved pair of types are equal \change{(under our semantic view of equality)}, a naive definition of precision would reject this relation. \change{Therefore}, we \change{adopt} an alternative definition \change{of} precision by~\citep{lennonAl:toplas2022}, which can be successfully applied when equality is not syntactic.

\begin{definition}[\change{Type} precision]\label{def:tprecAGT}
  For any pair of gradual simple types $\sgtys, \sgtysp \in \GType$
  and any pair of gradual distribution types $\sgtya, \sgtyb \in \GDType$,
  \begin{enumerate}
  \item
  $\sgtys \ggprec \sgtysp$ if and only if 
  $\, \forall \tys[1] \in \crt{\sgtys}.\ \exists \tys[2] \in
  \crt{\sgtysp}.\ \tys[1] = \tys[2]$.
  \item
   $\sgtya \ggprec \sgtyb$ if and only if  
   $\, \forall \tya[1] \in \crta{\sgtya}. \ \exists \tya[2] \in
   \crta{\sgtyb}. \ \tya[1] = \tya[2]$.
  \end{enumerate}
  \label{agt-precision}
\end{definition}

\begin{figure}[t]
\begin{mathpar}
\inference{}
{\rtype \gprec \rtype} \and
\inference{}
{\btype \gprec \btype} \and
\inference{}
{\sgtys \gprec \? } \and
\inference{
  \sgtys \gprec \sgtysp &
  \sgtya \gprec \sgtyb
}
{\sgtys -> \sgtya \gprec \sgtysp -> \sgtyb } \and
\change{\inference{\liftD{\sgtya[1]} \gprec \liftD{\sgtya[2]}}
{\sgtya[1] \gprec \sgtya[2]}}
\and
{
\inference{
  \forall\: \FV(\phi[][1]).~  \change{\phi[][1]}  \implies \exists~\FV(\phi[][2]) \cup
  \{\cww[ij] ~|~ i \in \iSet \land j \in \jSet\}. %
  \\
  \cjudgext{\d{\cww[ij]}[i \in \iSet \land j \in \jSet]}{\d{\gtys[i][\p[i]]}[i \in \iSet]}{\, \gprec\,}{\d{\gtys[j][\p[j]]}[j \in \jSet]}{\phi[][1]}{\phi[][2]}{}%
}
{\db{\phi[][1]}{\gtys[i][\p[i]]}[i \in \iSet] \gprec
  \db{\phi[][2]}{\gtys[j][\p[j]]}[j \in \jSet]}
}
\end{mathpar}
\caption{Type precision in \glang.}
\label{source-type-precision}
\vspace{-1em}
\end{figure}

\change{Like for consistency, the definition of precision above, despite being sound, is impractical. We thus present an alternative, inductive characterization. This inductive characterization is rather standard, only the case of (gradual and formula) distribution types deserving special attention; see Figure~\ref{source-type-precision}. 
Two gradual distribution types are in precision if their lifting \change{to formula distribution types} are in precision. Precision for \fgdt is slightly different \change{from} consistency. Loosely speaking, \fgdt $\sgtya[1]$ and $\sgtya[2]$ %
  are related by precision iff every solution that makes the probabilities of  $\sgtya[1]$ sum up to 1 can be ``completed'' to form a coupling between  $\sgtya[1]$ and $\sgtya[2]$ that witnesses the liftings of precision. Intuitively, the definition is designed to reproduce the quantifier structure of Definition~\ref{def:tprecAGT}. %
}

\begin{ExampleNoQED}
To illustrate how this new definition of type precision works,
consider the following examples:
\begin{itemize}
 \item
$\ds{\rtype^{\frac{1}{2}}, \?^{\frac{1}{2}}} \gprec \ds{\rtype^{\frac{1}{3}}, \rtype^{\frac{1}{6}}, \?^{\frac{1}{2}}}$ because 
$\fcw{1}{1}+\fcw{1}{2} + \fcw{1}{3} = \frac{1}{2} \land \fcw{2}{3} = \frac{1}{2} \land 
\fcw{1}{1} = \frac{1}{3} \land \fcw{1}{2} = \frac{1}{6} \land \fcw{1}{3}+\fcw{2}{3} = \frac{1}{2}$
is satisfiable by the solution set $\{ \fcw{1}{1} = \frac{1}{3}, \fcw{1}{2} = \frac{1}{6}, 
\fcw{1}{3} = 0 , 
\fcw{2}{3} = \frac{1}{2}  \}$.
\item
$\ds{\rtype^{\frac{1}{2}}, \?^{\frac{1}{2}}} \not\gprec \ds{\btype^{\frac{2}{3}}, \?^{\frac{1}{3}}}$  because 
$\fcw{1}{2} = \frac{1}{2} \land \fcw{2}{2} = \frac{1}{2} \land 
\fcw{1}{2}+ \fcw{2}{2} = \frac{1}{3} $
is not satisfiable.
\item
$\ds{\rtype^{\frac{1}{2}}, \?^{\frac{1}{2}}} \gprec \ds{\rtype^{\frac{1}{3}}, \?^{\frac{2}{3}}}$ because 
$\fcw{1}{1}+\fcw{1}{2} = \frac{1}{2} \land \fcw{2}{2} = \frac{1}{2} 
\land \fcw{1}{1} = \frac{1}{3} \land  
\fcw{1}{2}+\fcw{2}{2} = \frac{2}{3}$ is satisfiable by the 
solution set $\{ \fcw{1}{1} = \frac{1}{3}, \fcw{1}{2} = \frac{1}{6}, 
\fcw{2}{2} = \frac{1}{2} \}$.
% \item
% $\ds{\rtype^{\frac{1}{2}}, \?^{\frac{1}{2}}} \gprec \ds{\rtype^{\frac{1}{3}}, \rtype^{\frac{1}{6}}, \?^{\frac{1}{2}}}$ because 
% $\fcw{1}{1}+\fcw{1}{2} + \fcw{1}{3} = \frac{1}{2} \land \fcw{2}{3} = \frac{1}{2} \land 
% \fcw{1}{1} = \frac{1}{3} \land \fcw{1}{2} = \frac{1}{6} \land \fcw{2}{3} = \frac{1}{2}$
% is satisfiable by the solution set $\{ 0 < \fcw{1}{1} < \frac{1}{3}, 0 < \fcw{1}{2} < \frac{1}{6}, 
% 0 < \fcw{1}{3} < \frac{1}{2}, 0 < \fcw{2}{1} < \frac{1}{3}, 0 < \fcw{2}{2} < \frac{1}{6}  \}$.
% \item
% $\ds{\rtype^{\frac{1}{2}}, \?^{\frac{1}{2}}} \gprec \ds{\rtype^{\frac{1}{3}}, \?^{\frac{2}{3}}}$ because 
% $\{ \fcw{1}{1}+\fcw{1}{2} = \frac{1}{2}, \fcw{2}{2} = \frac{1}{2} , \fcw{1}{2} = \frac{1}{3},
% \fcw{1}{2}+\fcw{2}{2} = \frac{2}{3} \}$ is satisfiable by the 
% solution set $\{ 0 < \fcw{1}{1} < \frac{1}{3} ,  \frac{1}{6} < \fcw{1}{2} < \frac{1}{2}, 
% 0<\fcw{2}{1} < \frac{1}{3}, \frac{1}{6} < \fcw{2}{2} < \frac{1}{2} \}$. 
\item 
$\ds{\rtype^1} \gprec \ds{\?^1}$ because $\fcw{1}{1} = 1$ is satisfiable by the solution set $\{ \fcw{1}{1} = 1 \}$.\hfill\ExEndSymbol
\end{itemize}%
\end{ExampleNoQED}

As already hinted, this inductive definition of precision is equivalent to Definition~\ref{def:tprecAGT}:

\begin{restatable}[Equivalence of type precision]{lemma}{precisioneq}
For any pair of gradual simple types $\sgtys, \sgtysp \in \GType$
  and any pair of gradual distribution types $\sgtya, \sgtyb \in \GDType$,
\begin{enumerate}
  \item $\sgtys \ggprec \sgtysp \text{ iff } \sgtys \gprec \sgtysp$.
  \item $\sgtya \ggprec \sgtyb \text { iff } \sgtya \gprec \sgtyb$.
\end{enumerate}
\end{restatable}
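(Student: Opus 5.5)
We prove both items simultaneously by induction on the structure of the gradual types. For distribution types the induction goes through the inductive rule of Figure~\ref{source-type-precision}: $\sgtya \gprec \sgtyb$ holds iff $\liftD{\sgtya} \gprec \liftD{\sgtyb}$, and the latter unfolds into the $\forall\exists$ coupling condition. For simple types, note that the AGT relation $\ggprec$ uses the coupling-based equality $=$, which by Lemma~\ref{couplingeq} coincides with $=_{s}$.

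\textbf{Simple types.} The cases $\rtype$, $\btype$ and $\sgtysp = \?$ are immediate, since $\crt{\?} = \Type$. If $\sgtys = \?$ and $\sgtysp \neq \?$, neither relation holds. Every concretization of $\sgtysp$ has a fixed head constructor, whereas $\crt{\?}$ contains types of every shape. For arrows $\sgtys[1] -> \sgtya[1]$ and $\sgtys[2] -> \sgtya[2]$, a concrete type $\tys[1] -> \tya[1]$ is equal to some $\tys[2] -> \tya[2]$ iff $\tys[1] = \tys[2]$ and $\tya[1] = \tya[2]$. Since concretization is a product on arrows, the $\forall\exists$ statement splits into the two componentwise $\forall\exists$ statements, and we conclude by the induction hypotheses. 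Mismatched head constructors fail on both sides.

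\textbf{Distribution types.} Write $\sgtya = \d{\sgtys[i][\pty[i]]}[i \in \iSet]$ and $\sgtyb = \d{\sgtysp[j][\ptyp[j]]}[j \in \jSet]$. The key observation is that solutions of the formula in $\liftD{\sgtya}$ (variables $\cww[i]$ with $\cww[i] \in \crp{\pty[i]}$ and $\sum_i \cww[i] = 1$) correspond exactly to the probability parts of the well-formed members of $\crta{\sgtya}$. The simple-type parts are chosen independently per index from $\crt{\sgtys[i]}$.

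For ($\Leftarrow$), fix $\tya[1] = \d{\tys[i][\ps[i]]} \in \crta{\sgtya}$; the $\ps[i]$ satisfy $\phi[][1]$. The inductive rule yields $\cww[ij]$ and probabilities $q_j \in \crp{\ptyp[j]}$ forming a coupling in which $\cww[ij] > 0$ implies $\sgtys[i] \gprec \sgtysp[j]$. By the induction hypothesis, for each such pair and our chosen $\tys[i]$ there is some $\tys[ij] \in \crt{\sgtysp[j]}$ with $\tys[i] = \tys[ij]$. The naive witness $\d{\tysp[j][q_j]}$ needs a single type per index $j$, but index $j$ may be coupled with several $i$ that each demand a different $\tys[ij]$. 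The fix is to exploit the multiset representation and build $\tya[2] = \d{\tys[ij][\cww[ij]]}_{i,j}$, duplicating each index $j$ once per $i$. This requires that $\crta{\sgtyb}$ contain such a refinement, and it does not: the index sets must match. I would therefore treat this as the main obstacle. To address it, I would first check whether $\crta{}$ is intended up to $=$. Failing that, I would prove an auxiliary lemma: if $\cww[ij] > 0$ for several $i$, one can pick a single $\tys[j]$ (in particular when $\sgtysp[j] = \?$) that works for all of them. For non-$\?$ types, however, the choices are forced only at $\?$-subterms, so this lemma needs a recursive merging argument. This is the step I expect to be hardest.

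For ($\Rightarrow$), take any solution of $\phi[][1]$, instantiate each $\sgtys[i]$ by some arbitrary $\tys[i] \in \crt{\sgtys[i]}$, and obtain $\tya[2] \in \crta{\sgtyb}$ with $\tya[1] = \tya[2]$. The equality is witnessed by a coupling, by Lemma~\ref{couplingeq}. Its probabilities $q_j$ satisfy $\phi[][2]$, and it serves as the required $\cww[ij]$. Positivity of $\cww[ij]$ gives $\tys[i] = \tys[j]$. To derive $\sgtys[i] \gprec \sgtysp[j]$ by the induction hypothesis, we need this for \emph{all} $\tys[i] \in \crt{\sgtys[i]}$, yet the coupling may depend on the choice of $\tys[i]$. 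I would handle this by choosing the $\tys[i]$ generically, e.g.\ pairwise non-equal fresh instantiations at $\?$-positions, so that the coupling is forced to respect precision. This needs a small lemma that such generic representatives exist in $\Type$, which holds because $\Type$ is infinite.
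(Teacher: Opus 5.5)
Your ($\Leftarrow$) direction, from $\gprec$ to $\ggprec$, has a gap that cannot be closed. The obstacle you flag is fatal, because with the paper's $\crtasymbol$ this implication is false. Take $\sgtya = \ds{\rtype^{\frac12},\btype^{\frac12}}$ and $\sgtyb = \ds{\?^1}$. The rule of Figure~\ref{source-type-precision} gives $\sgtya \gprec \sgtyb$ through the coupling $\fcw{1}{1} = \fcw{2}{1} = \tfrac12$, since $\rtype \gprec \?$ and $\btype \gprec \?$. However, $\crta{\ds{\?^1}} = \{\ds{\tys[][1]} \mid \tys \in \Type\}$, and an equality coupling between $\ds{\rtype^{\frac12},\btype^{\frac12}}$ and $\ds{\tys[][1]}$ would force $\rtype = \tys = \btype$. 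So $\sgtya \ggprec \sgtyb$ fails. Placing both types under $\rtype \rightarrow \cdot$ refutes item~1 as well. The paper's own example $\ds{\rtype^{\frac12},\?^{\frac12}} \gprec \ds{\rtype^{\frac13},\?^{\frac23}}$ breaks the same way: instantiate the left $\?$ with $\btype$.

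Neither of your remedies helps. Reading $\crtasymbol$ ``up to $=$'' changes nothing, since $\ggprec$ already compares modulo $=$, and $=$ preserves the mass each static type receives (Lemma~\ref{couplingeq}). Here $\rtype$ has mass $\tfrac12$ on the left but $0$ or $1$ in every member of $\crta{\ds{\?^1}}$. Your auxiliary lemma fails on this very instance, where there is a single index $j$, so the coupling cannot be re-chosen. The paper's proof does not supply the missing idea either. It copies the weights ($\cww[ij] := \cwwp[ij]$) and invokes Lemma~\ref{lemma:liftp}, but never checks that one static type per index $j$ can serve every $i$ coupled to it.

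Changing the concretization does not easily rescue the statement either. The natural repair has two parts. First, let $\crtasymbol$ refine entry $j$ into a finite multiset of members of $\crt{\sgtysp[j]}$ whose masses sum to a value in $\crp{\ptyp[j]}$. Second, keep only well-formed static types throughout, including in $\crt{\?}$. The second part is needed anyway: as written, $\ds{\rtype^{\frac13}} \in \crta{\ds{\?^{\?}}}$ equals nothing in $\crta{\ds{\?^1}}$, although $\ds{\?^{\?}} \gprec \ds{\?^1}$.

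This repair makes a witness of the kind you build (one static type per coupled pair $(i,j)$) a genuine member. But it breaks your ($\Rightarrow$) direction. Every well-formed distribution type then lies in the concretizations of both $\ds{\?^1}$ and $\ds{\rtype^{\?},\btype^{\?},(\? \rightarrow \ds{\?^1})^{\?}}$. So $\ggprec$ holds between them, while $\gprec$ fails because $\?$ is below none of the three entries. Your generic-representative argument relies on each entry contributing a single static type, so that finitely many non-$\?$ entries cannot cover $\Type$ up to $=$. The splitting repair removes exactly that property.

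The statement therefore has to be weakened or the definitions realigned. One option is to keep only $\ggprec \Rightarrow \gprec$ for the original $\crtasymbol$, which is what your generic argument targets. An extra lemma will not be enough.
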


\subsubsection*{Term precision}
\begin{figure}[t]
  \def \MathparLineskip {\lineskip=1.4ex}

\begin{mathpar}
\inference{
}
{
   \sx \gprec \sx
} \and
\inference{
}
{
  \ssr \gprec \ssr
} \and
\inference{
}
{
  \ssb \gprec \ssb
} \and
\inference{
\sgtys \gprec \sgtysp &
\sm \gprec \srcn & 
}
{
 \src{(\lambda x:\sgtys. m)} \gprec \src{(\lambda x:\sgtysp. n)}
} \and
\inference{
\sv \gprec \src{\svp } &
\sgtys \gprec \sgtysp
}
{
\src{\asc{\sv}{\sgtys}} \gprec \src{\asc{ \svp }{\sgtysp}}
} \and
\inference{
\sm \gprec \srcn &
\sgtya \gprec \sgtyb
}
{
\src{\asc{ \sm}{\sgtya}} \gprec \src{\asc{ \srcn}{\sgtyb}}
} \and
\inference{}
{\ps \gprec \ps} \and
\inference{}
{\pty \gprec \? } \and
\inference{
\sm \gprec \src{m'} &
\srcn \gprec \src{n'} & \pty \gprec \src{\ptyp}
}
{
\src{\sm\: \pssum\, \srcn} \gprec \src{m' \:\pssum[\ptyp]\, n'}
} 
\and
\inference{
\sv \gprec \src{v'} &
\sw \gprec \src{w'} &
}
{
  \sv \; \sw \gprec \src{v'} \; \src{w'}
} \and
\inference{
\sm \gprec \src{m'} &
\srcn \gprec \src{n'} &
}
{
  \src{\lett{x}{m}{n}} \gprec \src{\lett{x}{m'}{n'}}
} \and
\inference{
\sv \gprec \src{w'} &
\sw \gprec \src{w'} &
}
{
  \src{\add{v}{w}} \gprec \src{\add{v'}{w'}}
} \and
\inference{
\sv \gprec \src{v'} &
\sm \gprec \src{m'} &
\srcn \gprec \src{n'} &
}
{
  \src{\ite{v}{m}{n}} \gprec \src{\ite{v'}{m'}{n'}}
}
\end{mathpar}
\caption{Term precision in \glang.}
\label{source-term-precision}
\vspace{-1em}
\end{figure}
Term precision is the natural lifting of type precision to the space of terms. Its definition is rather standard, by induction in the term structure, as presented in Figure \ref{source-term-precision}.

\subsubsection*{Metatheory}
Armed with the definition of precision, we can now state the two fundamental properties that hold for the static semantics of \glang.
First, typeability is monotone \wrt imprecision:
\begin{restatable}[Static Gradual Guarantee for \glang]{theorem}{sgg}
  \label{theorem:sgg}
For every \change{value $\sv$}, every term $\sm$, every gradual simple type
$\sgtys$ and every gradual
distribution type $\sgtya$ from \glang,
\begin{enumerate}
  \item If~$ |-t \change{\sv} : \sgtys \ad  \sv \gprec \sw$, then there exists
    $\sgtysp$ such that 
    $|-t \sw : \sgtysp \ad \sgtys \gprec \sgtysp$.
  \item If~$ |-d \sm : \sgtya \ad  \sm \gprec \srcn $, then there exists
    $\sgtyb$ such that 
    $|-d \srcn : \sgtyb \ad \sgtya \gprec \sgtyb$.
\end{enumerate}
\end{restatable}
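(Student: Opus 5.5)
We prove both items simultaneously, by mutual induction on the typing derivation of $\sv$ (resp.\ $\sm$), generalized to an arbitrary environment: if $\Gamma |-t \sv : \sgtys$, $\Gamma \gprec \Gamma'$ (pointwise type precision) and $\sv \gprec \sw$, then $\Gamma' |-t \sw : \sgtysp$ for some $\sgtysp$ with $\sgtys \gprec \sgtysp$; and analogously for terms. The closed statement is the instance $\Gamma = \Gamma' = \emptyset$. In each case we invert the term-precision derivation (Figure~\ref{source-term-precision}), which is syntax-directed, so $\sw$ (resp.\ $\srcn$) has the same shape as the original term. We then apply the induction hypotheses to the immediate subterms, obtaining less precise types for them, and reassemble a typing derivation with the same rule.

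To make the reassembly go through, we first establish monotonicity lemmas for the type-level operators. Each is proved using the AGT characterization of precision (Definition~\ref{def:tprecAGT}) together with the equivalence of type precision, so that we may reason on concretizations rather than on the symbolic coupling rule.
\begin{enumerate}
\item Consistency is monotone: if $\sgtys[1] \rel \sgtys[2]$, $\sgtys[1] \gprec \sgtysp[1]$ and $\sgtys[2] \gprec \sgtysp[2]$, then $\sgtysp[1] \rel \sgtysp[2]$, and likewise for distribution types. The static witnesses of consistency (Lemma~\ref{agtconsistencyeq}) are mapped, via the $\forall\exists$ clause of precision, to semantically equal static types in the larger concretizations. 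Transitivity and symmetry of $=$ then yield a witness pair.
\item $\cdom$ and $\ccod$ are monotone, and are defined on $\sgtysp$ whenever they are defined on a more precise $\sgtys$. The only interesting case is $\sgtysp = \?$, which gives $\?$ and $\ds{\?^\?}$.
\item Scaling and sum are monotone: $\pty \gprec \ptyp$, $\sgtya[1] \gprec \sgtyb[1]$ and $\sgtya[2] \gprec \sgtyb[2]$ imply $\pty \cdot \sgtya[1] + (1-\pty)\cdot \sgtya[2] \gprec \ptyp \cdot \sgtyb[1] + (1-\ptyp)\cdot \sgtyb[2]$. The side condition of $+$ is existential over concretizations, so it is preserved when moving to larger concretizations.
\item Precision preserves well-formedness: $\jtf{}{\sgtya}$ and $\sgtya \gprec \sgtyb$ give $\jtf{}{\sgtyb}$. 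This holds because $\crta{\sgtya}$ contains a well-formed static distribution, which is semantically equal to some member of $\crta{\sgtyb}$, and semantic equality preserves total mass.
\item Precision is reflexive and transitive, used to chain the lemmas above.
\end{enumerate}

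With these lemmas, the cases of the induction are as follows.
\begin{itemize}
\item \textbf{Constants and variables} are immediate; variables use $\Gamma \gprec \Gamma'$.
\item \textbf{Abstraction} extends both environments with $\sx:\sgtys$ and $\sx:\sgtysp$, and uses lemma~4 for the annotation.
\item \textbf{Ascriptions} combine the induction hypothesis, lemma~1 (for $\sgtys \rel \sgtys_{asc}$ against the new ascription) and lemma~4; the resulting type is the new ascribed type itself.
\item \textbf{Application} uses lemmas~1 and~2.
\item \textbf{Addition} and \textbf{conditional} use lemma~1 against $\rtype$ and $\btype$. The conditional additionally needs both branches to receive the \emph{same} type. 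If the induction hypothesis gives different $\sgtyb[1]$ and $\sgtyb[2]$, this is a genuine gap. Our fix is to strengthen the claim for terms in a way that yields a canonical least-imprecise type, namely one that is type-directed and unique up to $=$ on lifts. Term typing is syntax-directed apart from the subsumption-free choice of (Tv), so the types obtained for identical branches coincide up to $=$.
\item \textbf{Probabilistic choice} uses lemma~3.
\item \textbf{Let} is the main obstacle. The more precise derivation types $\srcn$ once per element $\sgtys[i]^{\pty[i]}$ of $\sm$'s type $\d{\sgtys[i][\pty[i]]}[i\in\iSet]$. The less precise term instead gets some $\d{\sgtysp[j][\ptyp[j]]}[j\in\jSet]$, whose index set and support may differ, with precision witnessed only by a symbolic coupling. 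For each $j$ we apply the induction hypothesis to $\srcn$ under $\Gamma',\sx:\sgtysp[j]$. To do so we need some $i$ with $\sgtys[i] \gprec \sgtysp[j]$ and $\Gamma, \sx : \sgtys[i] \gprec \Gamma', \sx:\sgtysp[j]$. When no such $i$ exists, or the coupling mass is zero, we must still type $\srcn$. We plan to handle this by first proving that precision witnesses can be normalized: since $\sm \gprec \sm'$, the induction hypothesis lets us choose the less precise type so that its indices align with $\iSet$ up to splitting and merging. Finally, we show $\sum_i \pty[i]\cdot\sgtya[i] \gprec \sum_j \ptyp[j]\cdot\sgtyb[j]$ by composing the outer coupling with the inner ones into a single coupling, checked concretization-wise via Definition~\ref{def:tprecAGT}.
\end{itemize}

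We expect this let case, together with the conditional's same-type requirement, to consume most of the effort.
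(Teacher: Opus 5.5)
Your skeleton is the paper's own. You generalize to open terms with $\Gamma \gprec \Gamma'$, induct on the typing derivation, and discharge side conditions with monotonicity of consistency (Lemma~\ref{sconp}) and environment precision (Lemma~\ref{enp}). However, the two cases you single out are genuine gaps, and your repairs do not close them. Under the rules of Figure~\ref{fig:source-type-system} as written, they are in fact counterexamples to the statement.

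For the conditional, uniqueness of typing is beside the point. The two branches are different terms, made less precise independently, so their unique types need not agree. Take $\lambda x{:}\rtype.\,\ite{\ttt}{x}{1}$, which has type $\rtype \to \ds{\rtype^1}$. The less precise $\lambda x{:}\?.\,\ite{\ttt}{x}{1}$ differs only in the annotation, yet its branches get $\ds{\?^1}$ and $\ds{\rtype^1}$. Since the conditional rule demands one common type for both branches, that term is untypeable. No strengthening of the induction hypothesis helps; the rule itself must change. For instance, you could require only that the branch types be consistent and type the conditional at their meet, which is what the AGT lifting of the equality premise gives. Monotonicity of the meet then does the job.

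The let case fails the same way. The type of the less precise bound term is dictated by its annotations, so you cannot choose it to align with $\iSet$. The let rule then makes the body check under every component of that type, including components to which the precision coupling gives no mass. Since $\ds{\rtype^1} \gprec \ds{\rtype^{\?},\btype^{\?}}$ (all mass on the $\rtype$--$\rtype$ pair), we have $\lett{x}{(\asc{1}{\ds{\rtype^1}})}{(x+1)} \gprec \lett{x}{(\asc{1}{\ds{\rtype^{\?},\btype^{\?}}})}{(x+1)}$. The left term is well typed, but the right one needs $x{:}\btype \vdash x+1$, i.e.\ $\btype \rel \rtype$, which is false. The fully static annotation $\ds{\rtype^1,\btype^0}$ already breaks it.

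Your lemma~4 fails for the same reason: zero-mass components escape precision. For example, $\ds{\rtype^1} \gprec \ds{\rtype^{\?},(\rtype \to \ds{\rtype^{1/2}})^{\?}}$ holds even though the right-hand side is not well formed. This also breaks your ascription and abstraction cases.

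The paper's proof does not resolve any of this. Its let case silently assumes the induction hypothesis yields a type over the same index set $\iSet$ with componentwise precision, and it treats the conditional as immediate. A correct proof needs a modified system, not a cleverer induction: a conditional rule as above, plus a let rule or a distribution-type precision that does not admit unrelated components.
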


Second, the static semantics of \slang and \glang are equivalent for
fully-statically-annotated terms:
\begin{restatable}[Conservative extension of the static semantics]{theorem}{staticeq}~
  For every \change{value $\v$}, every term $\m$, every simple type $\tys$ and every distribution type $\tya$ from \slang,
  \begin{enumerate}
    \item $|-ss \change{\v} : \tys \text{ iff } |-t {\color{sourcecolor} \v} : \src{\tys}$. 
    \item $|-ss \m : \tya \text{ iff } |-d {\color{sourcecolor} \m} : {\color{sourcecolor} {\tya}}$.
  \end{enumerate}
  \label{theorem:staticeq} 
\end{restatable}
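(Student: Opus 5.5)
The proof goes by simultaneous induction on the typing derivations, one direction at a time, mirroring the mutual definition of the value and term judgments. The key auxiliary fact is that on fully static types consistency coincides with static equality. For simple types, $\tys[1] \rel \tys[2]$ iff $\tys[1] =_{s} \tys[2]$. For distribution types, $\tya[1] \rel \tya[2]$ iff $\tya[1] =_{s} \tya[2]$.

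We obtain this from Lemma~\ref{agtconsistencyeq}. Since $\crt{\tys} = \{\tys\}$ and $\crta{\tya} = \{\tya\}$ when no $\?$ occurs, Definition~\ref{def:cosistAGT} collapses to $\tys[1] =_{s} \tys[2]$ (resp.\ $\tya[1] =_{s} \tya[2]$). Lemma~\ref{couplingeq} lets us pass freely between $=_{s}$ and the coupling-based $=$ where convenient. In the same spirit we check that well-formedness agrees on static types. For a static $\tya$, the lifting $\liftD{\tya}$ has closing formula $\bigwedge_i \cww[i] = \ps[i] \land \sum_i \cww[i] = 1$, so it is satisfiable iff $\sum_i \ps[i] = 1$. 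Hence $\jtf{}{\tya}$ in the gradual sense iff it holds in the static sense, by induction on types.

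We also record that the lifted operators agree with the static ones on static arguments. Specifically, $\cdom(\tys) = \dom(\tys)$ and $\ccod(\tys) = \cod(\tys)$, with definedness coinciding, because the $\?$ clauses never fire. The lifted scaling and subtraction on probabilities reduce to ordinary arithmetic when both operands are numbers. The side condition of $+$ becomes the static one, since the concretizations are singletons.

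With these facts, the induction is rule-by-rule. Every \glang rule is obtained from the corresponding \slang rule by replacing $=_{s}$ with $\rel$, $\dom/\cod$ with $\cdom/\ccod$, and the operators with their liftings. For a static term, each premise of one rule therefore translates into the premise of the other, and the induction hypothesis handles subderivations.

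Two points need care. First, subderivations in the gradual direction must only mention static types, so the induction hypothesis applies. We prove alongside the induction that if $\Gamma$ is static and the term is static, then every type assigned by $|-t$ or $|-d$ is static. This holds by inspection, since $\?$ can only be introduced via $\ccod(\?)$, annotations, or lifted operations on $\?$, and none of these arise. Second, in (Tlet) and (T$\oplus$) the result types must coincide syntactically as multisets. This follows from the agreement of the scaling and sum operators.

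I expect the main obstacle to be the consistency/equality collapse at distribution types, and specifically matching the symbolic-coupling formulation with the original semantic equality. I would route it entirely through Lemma~\ref{agtconsistencyeq} and the singleton concretizations rather than reasoning about the tagged-variable formulas directly. That way the only direct work on the formula level is the short well-formedness check above.
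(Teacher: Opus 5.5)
Your proposal is correct and is essentially the paper's proof: induction on the typing derivations, where the one substantive ingredient is that consistency coincides with $=_{s}$ on fully static (well-formed) types. You justify that collapse through Lemma~\ref{agtconsistencyeq} and singleton concretizations, whereas the paper uses Lemma~\ref{couplingeq}; you also make explicit three points the paper's case analysis leaves implicit (agreement of well-formedness, agreement of the lifted operators on static arguments, and staticness of every intermediate type in the gradual-to-static direction), which only tightens the argument.
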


\subsection{Dynamic Semantics}
Traditionally, when designing gradual languages, the runtime semantics are not defined directly over the gradual source language. The program is translated or elaborated into a \emph{cast calculus} program, inserting casts at the boundaries between static and dynamic typing, ensuring at runtime that no static assumptions are violated. If a static assumption is violated, then a runtime error is raised. This cast calculus is usually called the gradual target language.
The dynamic semantics of \glang is no exception: taking inspiration \change{from} AGT, we elaborate \glang into an evidence-based gradual target language, where evidence plays the role of casts that justify consistency judgments. The gradual target language for \glang, \change{dubbed} \tlang, is presented next.

%!TEX root = ../main.tex

\section{\tlang: Gradual Target Language}\label{sec:target}
In this section, we introduce \tlang, an evidence-based target language for \glang.
We start by presenting the static semantics, followed by \change{the
  dynamic semantics, which relates programs to probability
  distributions over values. Finally, we establish type safety and two
  refined criteria for \tlang (dynamic counterparts of the refined criteria
  already established for \glang): the gradual guarantee, and that the language is a conservative extension of \slang, its static counterpart.}

\subsection{Static Semantics}
\begin{figure}[t]
\begin{displaymath}
  \begin{array}{r@{\hspace{0.3em}}c@{\hspace{0.8em}}l@{\hspace{2em}}l}
    \multicolumn{4}{c}{\ttr\in  \mathbb{R},\quad \ttb \in  \mathbb{B},\quad \tx \in \text{Var},\quad \gtys \in \GFType,\quad \gtya \in \GFDType}\\[1.5ex]
    \tm,\tn           
          & ::= & \tv ~|~ \tv \; \tw ~|~ \trg{\lett{x}{\tm}{\tn}} ~|~ \trg{ \tm \ppsum[\phi][\p][\p] \tn } ~|~ \trg{ \asc{\evd \tm}{\gtya} }  & \text{(terms)}\\ 
          &  & \trg{\asc{\ev \tv}{\gtys}} ~|~ \trg{ \ite{\tv}{\tm}{\tn} } ~| ~ \trg{ \add{\tv}{\tw} } ~|~ \errort{\gtya}           & \\
    \tv,\tw 
        & ::= & \tx ~|~ \trg{ \asc{\ev u}{\gtys} } ~|~ \errort{\gtys} & \text{(values)}\\
    \tu & := & \ttr ~|~ \ttb ~|~ \trg{(\lambda x:\gtys. \tm)} & \text{(raw values)}\\[1.5ex]
    \V & ::= &  \dt{\tv[i][\p[i]]~|~ i \in \iSet} & \text{(distribution values)}
  \end{array}
  \end{displaymath}
  \caption{Syntax of \tlang (excerpt).} %
  \label{fig:target-syntax}
 \end{figure}
The static semantics of \tlang differs from \glang in five 
key aspects: (1) we use \fgdt from the beginning, (2) consistency judgments 
are augmented with concrete type information (called \emph{evidence}) that 
justifies judgment validity, (3) explicit ascriptions are incorporated along type derivations to push all consistency judgments to the ascription type rules, (4) ascriptions carry their underlying evidence to justify consistency transitivity \change{at} runtime, and (5) to simplify the reduction rules and proofs, all values are ascribed.

\subsubsection*{Syntax}
Figure \ref{fig:target-syntax} presents the syntax of \tlang. Types are the formula types $\FGType$ and $\FGDType$ from \glang (see Fig.~\ref{fig:formula-types}). Terms are now annotated with \fgt and \fgdt from the previous section. 
The probabilistic choice operator $\trg{\tm \ppsum \tn}$ is now annotated with
variables $\p[1]$ and $\p[2]$ closed by formula $\phi$, corresponding to the probability of taking the left or right branch respectively.
Ascriptions $\trg{\ev \asc{\tv}{\gtys}}$ and $\trg{\evd \asc{\tm}{\gtya}}$ are 
augmented with evidences, where $\ev$ is an evidence for 
a \sfgt consistency judgment, and $\evd$ for a \sfgdt consistency judgment (both kinds of evidence, to be defined in Section~\ref{sec:evidence}). A raw value is either a real number $\ttr$, a constant $\ttb$ or a lambda 
abstraction $\trg{\lambda \tx:\gtys.\tm}$.
As previously mentioned, all values in \glang become ascribed values in \tlang. 
Therefore, a value $\tv$ is either a variable $\tx$, an ascribed raw value $\tu$, or a tagged error
$\errort{\gtys}$.
Note that in contrast to classical gradual approaches, in \tlang error is also a term, and can be either a redex ($\errort{\gtya}$) or a value ($\errort{\gtys}$). 
The main reason for this is to simplify the metatheory when 
accounting for probabilistic branches that may fail during runtime. Errors also carry type information related to the expected type of the expression in order to establish type safety, and can be removed in a real implementation.
\change{Finally, a \emph{distribution value} $\V$ stands for a distribution over values $\tv$.}

\begin{figure}[t]
\begin{flushleft}
\framebox{$\Gamma |-t \change{\tv}: \gtys, \quad \Gamma |-d \tm: \gtya,
\quad \change{ \Gamma |-d \phty{\phii}{\V}: \gtya} $ }
\end{flushleft}
  \def \MathparLineskip {\lineskip=1.4ex}
\begin{mathpar}
\inference[(Gerr$_{\gtys}$)]{
  \change{\jtf{}{\gtys}}
}
{\Gamma |-t \errort{\gtys} : \gtys } \and
\inference[(Gerr$_{\gtya}$)]{
  \change{\jtf{}{\gtya}}
}
{\Gamma |-t \errort{\gtya} : \gtya} \and
\inference[(Gv)]{\Gamma |-t \tv : \gtys}
{\Gamma |-t \tv : \ds{\gtys[][1]} } \and
\inference[(G$\lambda$)]{\Gamma, \tx :\gtys |-t \tm : \gtya  & \jtf{}{\gtys}}
{\Gamma |-t \trg{ \lambda \tx:\gtys. \tm} : \trg{\gtys -> \gtya} } \and
\inference[(G$\mathord{::}\gtys$)]{\Gamma |-t \tv : \gtys &  \ev |-t \gtys \rel \gtysp  & \jtf{}{\gtysp}}
{\Gamma |-t \trg{ \asc{\ev \tv}{\gtysp}} : \ds{\gtysp[][1]}  }  \and
\inference[(Gapp)]{
  \Gamma |-t \tv : \gtys -> \gtya  &  
  \Gamma |-t \tw : \gtys & 
}
{\Gamma |-d \tv\;\tw : \gtya} \and
\inference[(Glet)]{
  \Gamma |-d \tm : \phty{\phi}{\d{\gtys[i][\p[i]]}[i \in \iSet]}   \\
  \forall i\in\iSet.~\Gamma, \tx : \gtys[i] |-d \tn : \gtya[i]
}
{\Gamma |-d \trg{ \lett{\tx}{\tm}{\tn} } : \sugarconj{\phi}{\sum_{i \in \iSet} \p[i] \cdot \gtya[i]}} \and
\inference[(G$\mathord{::}\gtya$)]{\Gamma |-d \tm : \gtya & \evd |-d \gtya \rel \gtyb  & \jtf{}{\gtyb}
}
{\Gamma |-d \trg{ \asc{\evd \tm}{\gtyb} } : \gtyb} \and
  \inference[(G$+$)]{
    \Gamma |-t \tv :  \rtype   &  
    \Gamma |-t \tw :  \rtype 
  }
  {\Gamma |-d \trg{ \add{\tv}{\tw} } : \ds{\rtype^1} } \and
  \inference[(Gif)]{
    \Gamma |-t \tv : \btype \\
    \Gamma |-d \tm : \gtya & 
    \Gamma |-d \tn : \gtya
  }
  {\Gamma |-d \trg{ \ite{\tv}{\tm}{\tn} } : \gtya } \and
\inference[(G$\oplus$)]{
  \Gamma |-d \tm : \gtya  &  
  \Gamma |-d \tn : \gtyb & \satisfiablee{\phi => \p[1]+\p[2] = 1}
}
{\Gamma |-d \trg{ {\tm\:} \ppsum {\: \tn} } : \sugarconj{\phi}{\p[1] \cdot \gtya + \p[2] \cdot \gtyb}} 
\and
\change{
\inference[(GV)]{
  \forall i \in \iSet. |- \tv[i] : \gtys[i] 
}
{\Gamma |-d \phty{\phi}{\d{ \pt{\tv[i]}[\p[i]]}[i \in \iSet]}  : \phty{\phi}{ \d{ \gtys[i][\p[i]] }[i \in \iSet] }} 
}
\end{mathpar}
  \begin{talign*}
    \\[-2ex]
\p \cdot \phty{\phi}{\d{\gtys[i][\p[i]]}[i \in \iSet]} &~=~ 
\phty{\phi \land \bigl(\bigwedge_{i \change{\in \iSet}} \cww[i] = \p \cdot \p[i]\bigr)}
{\d{\gtys[i][\cww[i]]}[i \in \iSet]} \qquad \\ & \text{where }\cww[i] = \pr{\varr[i],
                                                         \projl{\p[i]},
                                                         \projr{\p[i]}}\!, \text{ and }
                                                         \varr[i] \text{ is fresh}\\
\sugarconj{\phi}{\sum_{i \change{\in \iSet}}
  \phty{\phi[][i]}{\d{\gtys[j][\p[j]]}[j \in \jSet_i]}} &~=~
                                                          \phty{\phi
                                                          \land
                                                          \bigl(\bigwedge_{i
                                                          \change{\in \iSet}} \phi[][i]\bigr) \land \bigl(\sum_{i \change{\in \iSet}} \sum_{j \in \jSet_i} \p[ij]  = 1 \bigr)
   }{ \bigcup_{i \change{\in \iSet}} \d{\gtys[j][\p[ij]]}[j \in \jSet_i] }\\
   & \text{where } \p[ij] = \pr{\projv{\p[j]}, \sum_{k=0}^{i-1}|\jSet_k| + \projl{\p[j]}, \sum_{k=0}^{i-1}|\jSet_k| +  \projr{\p[j]}} 
\end{talign*}
\caption{Type system of \tlang.}
\label{fig:target-type-system}
\end{figure}

\subsubsection*{Type System}
The type system of \tlang is presented in Figure~\ref{fig:target-type-system}. 
Compared to \glang, the only rules that use consistency are the \change{ascription} rules (G$\mathord{::}\gtys$) and (G$\mathord{::}\gtya$), making all top-level constructors match in the \change{remaining} type rules.
Rule (G$\oplus$) requires that the fact that formula $\phi$ entails
that probabilities $\p[1]$ and $\p[2]$ sum up to 1 be plausible. Note
that formula $\phi$ is also pushed as part of the constraints of the
resulting type---the weighted sum between the branch types.
Similarly, rule (Glet) scales each $\gtya[i]$ with variable $\p[i]$, so $\phi$ is 
pushed to the resulting distribution type to close the type.
Both of these rules use the $\p \cdot \gtya$ and $\sugarconj{\phi}{\sum_{i} \gtya[i]}$ metafunctions, as described at the bottom of the figure. The latter combines formulas and shifts variable indices accordingly.
Consistency judgments are now justified by some evidence, written 
$\jtf{\ev}{\gtys \rel \gtysp}$ (for simple types) and 
$\jtf{\evd}{\gtya \rel \gtyb}$ (for distribution types).  
Intuitively, evidences $\ev$ and $\evd$ correspond to the most precise type 
information that support the respective consistency judgment; \change{we
elaborate on this in Section~\ref{sec:evidence}.}

The notion of consistency of formula types is defined
in the same way as in \glang:  %
\begin{definition}[Type consistency of formula types] Type consistency
  over simple ($\GFType$) and distribution ($\GFDType$) formula types
  is defines as follows:
  \begin{mathpar}
  \inference{}{\!\!\rtype \sim \rtype \!\!} \!\!\and\!\!
  \inference{}{\!\! \btype \sim \btype \!\!} \!\!\and\!\!
  \inference{}{\!\! \gtys \sim \? \!\!} \!\!\and\!\!
  \inference{}{\!\! \? \sim \gtys \!\!} \!\!\and\!\!
  \inference{\gtys[1] \sim \gtys[2] & \gtya[1] \sim \gtya[2] }
  {\!\! \gtys[1] -> \gtya[1]\sim \gtys[2] -> \gtya[2]\!\!\!} \!\!\and\!\!
    \inference{\!\!\couplingLift{\gtya}{\gtyb}{\sim}\!\!\!}
    {\gtya \sim \gtyb}%
  \end{mathpar}
  \label{def:consistency-target}
  \end{definition}

  The definition of well-formedness is defined identically to Def.~\ref{def:well-formed-source}, and omitted for brevity.
Like in \slang and \glang, all well-typed terms type-check to well-formed formula types:
\begin{lemma}\label{lemma:welltyped-wellform-target2}
For any value $\tv$, any term $\tm$, any formula simple type $\gtys
\in \GFType$ and formula distribution type $\gtya \in \GFDType$ from
\tlang, and any environment $\Gamma$ (mapping variables to formula simple types),
\begin{enumerate}
    \item If $\ \Gamma |-t \tv : \gtys$, then $\justify{\gtys}$.
    \item If $\ \Gamma |-d  \tm : \gtya$, then $\justify{\gtya}$.
\end{enumerate}  
  \end{lemma}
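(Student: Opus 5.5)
We prove both items simultaneously by mutual induction on the typing derivations of $\Gamma |-t \tv : \gtys$ and $\Gamma |-d \tm : \gtya$, also covering judgment (GV) for distribution values. We additionally assume, as an invariant on environments, that every type in $\Gamma$ is well-formed. This invariant is maintained because the only rules extending $\Gamma$ are (G$\lambda$), which checks $\jtf{}{\gtys}$ explicitly, and (Glet), which binds $\tx : \gtys[i]$ where $\gtys[i]$ is a component of a type that is well-formed by the induction hypothesis. Base and easy cases follow directly. Variables use the environment invariant. For (Gerr$_{\gtys}$), (Gerr$_{\gtya}$), (G$\mathord{::}\gtys$) and (G$\mathord{::}\gtya$), well-formedness is a premise; for (G$\mathord{::}\gtys$) we must additionally show that $\ds{\gtysp[][1]}$ is well-formed.

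Two such singleton cases arise: (Gv), and (G$+$) with $\ds{\rtype^1}$. We read the singleton as a formula distribution type with probability the constant $1$ and closing formula $\top$, or as its canonical lifting $\liftD{\cdot}$. Under either reading, the tagged-variable containment condition is trivial or holds by construction. Satisfiability of $1=1$ is immediate, and well-formedness of the component follows from the IH or from $\jtf{}{\rtype}$. Lemma~\ref{lemma:wellformed-lifting2} can be invoked if the lifted reading is chosen. (G$\lambda$) combines the premise $\jtf{}{\gtys}$ with the IH on the body. (Gapp) uses the IH on $\tv$ to get $\jtf{}{\gtys -> \gtya}$, which by inversion gives $\jtf{}{\gtya}$. (Gif) follows from the IH on a branch. (GV) uses the IH for each $\tv[i]$, but also needs satisfiability of $\phi$ together with the sum condition. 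We will argue this holds because distribution values are only ever typed with their intended formula, or else take that well-formedness as an implicit side condition of (GV).

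The real work is in (G$\oplus$) and (Glet), where the result type is built by the metafunctions $\p \cdot \gtya$ and $\sugarconj{\phi}{\sum_i \gtya[i]}$. For each we check the three conditions.
\begin{enumerate}
\item \textbf{Variable containment.} Every tagged variable in the resulting multiset is either a fresh $\cww[i]$ introduced by scaling, which by construction appears in the added conjunct $\cww[i] = \p\cdot\p[i]$, or a re-indexed variable from some $\phi[][i]$. In the latter case the re-indexing only changes the tags $\leftvar,\rightvar$, not the underlying $\varr$, so it still occurs in the conjoined $\phi[][i]$ under the convention that formulas mention $\projv{\cww}$.
\item \textbf{Component well-formedness.} This follows from the IH on the summands.
\item \textbf{Satisfiability.} We combine a solution of $\phi$ (for the outer weights $\p[i]$ summing to $1$) with solutions of each $\phi[][i]$ (inner weights summing to $1$). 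The fresh scaled variables are set to the products, and the resulting total is $\sum_i \p[i]\cdot 1 = 1$.
\end{enumerate}

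The main obstacle is this satisfiability step, because the solutions must be glued together. It requires that the variable sets of $\phi$ and of the various $\phi[][i]$ be disjoint, or consistent where they overlap. In (Glet) all the $\gtya[i]$ come from typing the same term $\tn$ under different environments, so they may share variables with the same $\varr$. Our first attempt is to argue that shared variables are constrained by identical formulas, since the derivations of $\tn$ differ only in the type of $\tx$. A single solution of that common formula then serves every $i$. The fresh scaled variables in $\p[i]\cdot\gtya[i]$ decouple the per-branch probabilities, so no conflict arises in the new variables. If this identical-formula argument proves too delicate, we fall back on the freshness convention and treat variables of different derivations as distinct.

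For (G$\oplus$) the premise only gives satisfiability of $\phi \Rightarrow \p[1]+\p[2]=1$. We will need to strengthen it to satisfiability of $\phi \land \p[1]+\p[2]=1$, which we expect to read as the intended meaning of the premise. From there the same gluing argument applies.
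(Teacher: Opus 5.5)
Your route is the paper's: induction on the typing derivation, one case per rule, with all the content in (G$\oplus$) and (Glet). The paper's proof simply asserts well-formedness in those two cases without any satisfiability argument, so the issues you isolate are ones it relies on silently. Your environment invariant is its remark that variables ``come from lambda and let terms with well-formed types''. Your reading of the (G$\oplus$) premise as $\mathit{sat}(\Phi\land\varrho_1+\varrho_2=1)$ is forced, because the literal premise $\mathit{sat}(\Phi\Rightarrow\varrho_1+\varrho_2=1)$ is witnessed by any assignment that falsifies $\Phi$. For instance, take both branches to be values, $\Phi=(\omega_1=0.7\land\omega_2=0.7)$ and $\varrho_i=\omega_i$: the premise holds, yet the result formula forces the total weight to be both $1.4$ and $1$. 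You can also drop (GV): no term rule has it as a premise, and with no well-formedness side condition its conclusion need not be well-formed.

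The step that would fail is your first attempt at gluing in (Glet). The $\mu_i$ do not share variables only through identical formulas. The derivations of $n$ differ in the type $\sigma_i$ of $x$, and $\mu_i$ can be a piece of $\sigma_i$ itself. Concretely, let $\mu_A=(\omega=\tfrac13\land\omega'=\tfrac23)\triangleright\ds{\rtype^{\omega},\rtype^{\omega'}}$ and $\mu_B=(\omega=\tfrac12\land\omega'=\tfrac12)\triangleright\ds{\rtype^{\omega},\rtype^{\omega'}}$; both are well-formed. Let $m$ be a $\tfrac12$--$\tfrac12$ choice between two abstractions ascribed to $\rtype\to\mu_A$ and $\rtype\to\mu_B$. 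The type of $m$ is well-formed, since its closing formula never mentions $\omega,\omega'$. In $\mathsf{let}\;x=m\;\mathsf{in}\;x\,w$, rule (Gapp) gives $\mu_1=\mu_A$ and $\mu_2=\mu_B$, so the resulting formula contains both $\omega=\tfrac13$ and $\omega=\tfrac12$ and is unsatisfiable.

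So the lemma fails as literally stated, and your freshness convention is not a fallback but a hypothesis you must state up front. It should require that distinct annotations, and distinct distribution subtypes within one annotation, use disjoint variables, and that variables created by the scaling $\varrho\cdot(-)$ are globally fresh. Equivalently, treat closing formulas as binders and let the scaling and sum metafunctions rename apart.

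Even with that convention, gluing per-$\mu_i$ solutions is awkward when one annotation is copied into several $\mu_i$, since its variables must then take a single value. It is cleaner to strengthen the induction hypothesis to one global assignment:
\begin{itemize}
\item for each distribution type occurring in an annotation, fix a solution with total weight $1$ (given by the well-formedness premises of (G$\lambda$), (G$::\sigma$), (G$::\mu$) and the error rules);
\item for each choice, fix a solution of $\Phi\land\varrho_1+\varrho_2=1$;
\item interpret every scaled variable as its defining product;
\item prove that, under this assignment, every distribution type inside a derived type has its closing formula true and weights summing to $1$.
\end{itemize}
The (Glet) and (G$\oplus$) steps then reduce to $\sum_i\varrho_i\sum_k\varrho_{ik}=\sum_i\varrho_i=1$.
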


\subsection{Evidence}\label{sec:evidence}

\change{
Following AGT, evidences are  encoded as pairs of (gradual) types of the form
$\langle G_1, G_2 \rangle$. Intuitively, each type of the pair
corresponds to a type in the gradual judgment that the evidence
shall justify, \eg in $\langle G_1, G_2 \rangle \vdash G'_1
\sim G'_2$, $G_1$ corresponds to $G'_1$ and $G_2$ to
$G'_2$. Furthermore, each type in the evidence is at least as precise
as its corresponding type in the judgment, \ie $G_1 \gprec G_1'$ and $G_2 \gprec G_2'$. 
}
When dealing with consistency (the gradual counterpart of equality), both types in evidence coincide, and therefore evidence is represented by single types, namely
\[
  \egtys~::=~~\gtys \quad \text{(simple evidences)}%
    \qquad\qquad
    \gtyd~::=~~\gtya \quad \text{(distribution evidences)}
\]
A simple evidence $\ev$ (resp.~distribution evidence $\evd$) is just a \sfgt (resp.~\sfgdt) that \change{justifies} a consistency judgment between two \fgt (resp.~two \fgdt).
Formally, an evidence justifies a consistency judgment iff the evidence is at least as precise as each type:
\begin{definition}[Evidence]
For all formula simple types $\ev, \gtys, \gtysp \in \GFType$ and all formula distribution types $\evd, \gtya, \gtyb \in \GFDType$, we define
  \begin{enumerate}
    \item $\jtf{\ev}{\gtys \rel \gtysp}$ iff $\ev \gprec \gtys$ and $\ev \gprec \gtysp$.
    \item $\jtf{\evd}{\gtya \rel \gtyb}$ iff $\evd \gprec \gtya$ and $\evd \gprec \gtyb$.
  \end{enumerate}
\end{definition}
For instance, $\jtf{\rtype -> \btype}{\rtype -> \? \rel \? -> \btype}$. Now we can justify the role of tags in tagged variables. Consider judgment $\jtf{\dctx[\phi][\dt{\gtys[\kk][\cww[\kk]]}]}{\dctx[\phi[][1]][\dt{\gtys[i][\p[i]]}] \rel \dctx[\phi[][2]][\dt{\gtysp[j][\p[j]]}]}$. Tagged variables connect evidence with their underlying types, \ie type $\gtys[\kk]$ justifies that $\gtys[i]$ is consistent with $\gtysp[j]$, because $\projl{\cww[k]} = i$ and $\projr{\cww[k]}=j$, where $\projv{\cww[\kk]}$ is the \emph{weight} of the connection between $\gtys[i]$ and $\gtysp[j]$. Note that a pair of simple types can be connected through multiple evidences.

As usual in gradual languages, consistency is not transitive, \eg $\ds{\rtype^{\frac{2}{3}}, \btype^{\frac{1}{3}}} \sim \ds{\?^{\?}}$ and $\ds{\?^{\?}} \sim \ds{\rtype^{\frac{1}{3}}, \btype^{\frac{2}{3}}}$, but $\ds{\rtype^{\frac{2}{3}}, \btype^{\frac{1}{3}}} \not\sim \ds{\rtype^{\frac{1}{3}}, \btype^{\frac{2}{3}}}$. Therefore, during runtime, evidence is combined to try to justify transitivity. If the combination succeeds, the resulting (and possibly more precise) evidence justifies the resulting judgment from transitivity; otherwise a runtime error is raised.
The combination of evidence is formalized using the \emph{consistent transitivity} operator, which coincides with the meet (least upper bound) operator \wrt the (im)precision order, \ie
\[
  \ev[1] \trans{} \ev[2] ~=~ \ev[1] \meet \ev[2]%
  \qquad\quad\text{and}\qquad\quad
  \evd[1] \trans{} \evd[2] ~=~ \evd[1] \meet \evd[2]
\]

\subsubsection*{Meet Operator}
\change{The meet operator is partial.} For simple types, it is defined by the following clauses:
\begin{mathpar}
\rtype \meet \rtype ~=~ \rtype \and
\btype \meet \btype ~=~ \btype \and
\? \meet \gtys ~=~ \gtys \and
\gtys \meet \? ~=~ \gtys \and
\gtys[1] -> \gtya[1] \meet \gtys[2] -> \gtya[2] ~=~
\gtys[1] \meet \gtys[2] -> \gtya[1] \meet \gtya[2] \and
\end{mathpar}
For distribution types, the definition follows the same approach as used for defining consistency and precision, in terms of the existence of couplings that justify the lifting, but explicitly capturing all witness couplings. Formally,
$$\gtya[1] \meet \gtya[2] ~=~ \witness{\gtya[1]}{\gtya[2]}{\meet}$$
where $\witnessname \colon (\GFType \times \GFType \rightharpoonup \GFType) \times \GFDType \times \GFDType \rightharpoonup \GFDType$ returns (a characterization of) all couplings that witness the lifting, and is defined as: 
\begin{multline*}
\witness{
  \dctx[\phi[][1]][\d{\gtys[i][\cww[i]]}[i \in \iSet]]
  }{
  \dctx[\phi[][2]][\d{\gtysp[j][\cww[j]]}[j \in \jSet]]
}{f} ~~= \\%
\dctx[\phii] \d{\pt{f(\gtys[i], \gtysp[j])}[\cww[ij]]}[ (i,j) \in \iSet \times \jSet \land (\gtys[i], \gtysp[j]) \in \dom(f)]~\qquad \cww[ij] \text{ fresh}
\end{multline*}
provided 
$${\exists \FV(\phi[][1]) \cup \FV(\phi[][2]) \cup \{\cww[ij] \mid  (i,j) \in \iSet \times \jSet \}.\  \phi['][]  \land \phii }$$
where
\begin{align*}
  \phi['][] &=\: \forall  i \in \iSet.\: \forall j \in \jSet.\ \projl{\cww[ij]} = \projl{\cww[i]} \land \projr{\cww[ij]} = \projr{\cww[j]}\\
  \phii & =\:  \cjudgext{\d{\cww[ij]}[(i,j) \in \iSet \times \jSet ]}{\d{\gtys[i][\cww[i]]}[i \in \iSet]}{\,R\,}{\d{\gtysp[j][\cww[j]]}[j \in \jSet]}{\phi[][1]}{\phi[][2]}{}\\
  \gtys[i] \: R \: \gtysp[j] & =\: (\gtys[i], \gtysp[j]) \in \dom(f)\footnotemark
\end{align*}
\footnotetext{Note that $\phii$ can be cast as a \Formula by taking $\kSet = \{(i,j) \mid   (\gtys[i], \gtysp[j]) \in \dom(f) \}$ as the index set of the witness couplings.}
\begin{example}
Let 
\begin{align*}
  \gtya[1] &= \phty{(\cww[1]= \tfrac{1}{2} \land \cww[2]= \tfrac{1}{2})}{\ds{(\rtype -> \?)^{\cww[1]}, (\? -> \rtype)^{\cww[2]} }}\\
  \gtya[2] &= \phty{(\cww[3]= \tfrac{1}{3} \land \cww[4]= \tfrac{2}{3})}{\ds{(\rtype -> \?)^{\cwwp[1]},  (\? -> \rtype)^{\cwwp[2]} } }
\end{align*}
Then $$\gtya[1] \meet \gtya[2] = \phty{\phi}{\ds{(\rtype -> \?)^{\fcw{1}{1}},  (\rtype -> \rtype)^{\fcw{1}{2}}, (\rtype -> \rtype)^{\fcw{2}{1}}, (\? -> \rtype)^{\fcw{2}{2}} }}$$
where
\begin{multline*}
  \phi = (\fcw{1}{1} + \fcw{1}{2} = \cww[1] \land  \fcw{2}{1} + \fcw{2}{2} = \cww[2] \land \fcw{1}{1} + \fcw{2}{1} = \cwwp[1] \land
  \fcw{1}{2} + \fcw{2}{2} = \cwwp[2]) \, \land\\ (\cww[1]= \tfrac{1}{2} \land \cww[2]= \tfrac{1}{2}) \land (\cwwp[1]= \tfrac{1}{3} \land \cwwp[2]= \tfrac{2}{3})
\end{multline*}
since $\phi$ is satisfiable by the solution set $\{ \fcw{1}{2} = \frac{1}{2} - \fcw{1}{1}, \fcw{2}{1} = \frac{1}{3}- \fcw{1}{1}, \fcw{2}{2} =  \fcw{1}{1} + \frac{1}{6} \}$. 
\end{example}

  Importantly, the meet between a pair of types is at least as precise as either of them (and therefore, a valid  evidence for their consistency).
  \begin{lemma}[Reductivity of the meet operator]\label{meet-more-precise2}
    For all formula simple types $\gtys[1], \gtys[2], \gtys[3] \in \GFType$ and all formula distribution types $ \gtya[1],  \gtya[2],  \gtya[3] \in \GFDType$,
    \begin{enumerate}  
      \item $\ift{ \gtys[3] = \gtys[1] \meet \gtys[2],}{  \gtys[3] \gprec \gtys[1] 
      \land  \gtys[3] \gprec \gtys[2]  }$.
    \qquad \qquad
     \item $\ift{ \gtya[3] = \gtya[1] \meet \gtya[2],}{  \gtya[3] \gprec \gtya[1] 
      \land  \gtya[3] \gprec \gtya[2]  }$.
  \end{enumerate}
  \end{lemma}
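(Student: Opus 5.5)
The proof goes by mutual structural induction on the formula types, following the definition of the meet. Part (1) goes by cases on the clause that defines $\gtys[1] \meet \gtys[2]$, and part (2) is used for function codomains. Part (2) in turn appeals to part (1) on the simple types occurring in the supports, so the induction measure is the size of the type.

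For simple types the cases are direct. For $\rtype \meet \rtype$ and $\btype \meet \btype$ we use reflexivity of the base precision rules. For $\? \meet \gtys = \gtys$ we need $\gtys \gprec \?$, which is an axiom, and $\gtys \gprec \gtys$, which is reflexivity of precision; reflexivity is proved by a separate easy induction, whose distribution case uses the diagonal coupling $\cww[ii] = \p[i]$. The case $\gtys \meet \?$ is symmetric. For functions, $(\gtys[1] \meet \gtys[2]) -> (\gtya[1] \meet \gtya[2])$ is related to each $\gtys[k] -> \gtya[k]$ by the function precision rule, using the induction hypotheses of parts (1) and (2).

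For distribution types, let $\gtya[3] = \dctx[\phii \land \phi['][]] \d{\pt{\gtys[i] \meet \gtysp[j]}[\cww[ij]]}$, with $\phii$ containing $\phi[][1]$, $\phi[][2]$ and the marginal equations. We show $\gtya[3] \gprec \gtya[1]$ using the formula precision rule; the other direction is symmetric. Fix any valuation of $\FV$ of the closing formula of $\gtya[3]$ satisfying it. We must exhibit values for $\FV(\phi[][1])$ and a coupling $\d{\cwwp[(ij),k]}$ between $\gtya[3]$ and $\gtya[1]$. For the variables of $\phi[][1]$ we reuse the values given by the valuation, since $\FV(\phi[][1])$ is contained in the closing formula and $\phi[][1]$ holds there. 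For the coupling we set $\cwwp[(ij),k] = \cww[ij]$ if $k = i$ and $0$ otherwise.

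The remaining checks are as follows. The row marginals give $\cww[ij]$ for entry $(i,j)$ of $\gtya[3]$. The column marginals give $\sum_j \cww[ij] = \cww[i]$, which holds by the marginal conditions in $\phii$. Nonzero entries relate $\gtys[i] \meet \gtysp[j] \gprec \gtys[i]$, by the induction hypothesis of part (1). The constraint $\phi[][1]$ holds, as noted. Indices where the meet of the component types is undefined are excluded from the support; because $\phii$ requires the relation condition, their weight is forced to zero, so the marginal sums are unaffected.

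The main obstacle is the quantifier structure of the formula precision rule. It quantifies universally over $\FV(\phi_{3})$ and then existentially over $\FV(\phi[][1])$, while these variable sets overlap. We must argue that the existential over $\FV(\phi[][1])$ can be instantiated with the same values, so that the symbolic probabilities $\cww[i]$ of $\gtya[1]$ in the coupling condition denote the values fixed by the valuation. A secondary point is that the freshness and tagging side conditions $\phi[']$ do not interfere, since they constrain only tags.
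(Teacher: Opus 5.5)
Your proposal is correct and follows essentially the same route as the paper. Part (1) goes by induction on the clauses of the meet. Part (2) exhibits, for each argument, a coupling between the meet and that argument, and discharges its marginals using the marginal equations in the meet's closing formula. Your explicit projection coupling ($\omega'_{(ij),k} = \omega_{ij}$ if $k = i$, and $0$ otherwise) makes precise what the paper's terse weight definitions and marginal computation leave implicit. So do your further checks: the relation condition via part (1), reflexivity for the unknown-type cases, and reusing the valuation for the variables shared with the argument's closing formula.
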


\subsubsection*{Evidence construction via consistent transitivity}

Armed with the definition of evidence and the meet operator, we now present a set of sufficient conditions (in the form of a proof system) for constructing the evidence of consistency judgments: 
\begin{mathpar}
\inference[(wdB)]{
  \ev \in \{\rtype,\btype, \?\} &
  \ev \gprec \gtys & \ev \gprec \gtysp
  }{\ev |- \gtys \sim \gtysp} \and
\inference[(wd$->$)]{
  \ev |- \gtys \sim \gtysp &
  \evd |- \gtya \sim \gtyb
}{\ev -> \evd |- \gtys -> \gtya \sim \gtysp -> \gtyb} \and
{ 
\inference[(wd$\evd$)]{
  \exists\ \FV(\phi[][]) \cup  \FV(\phi[][1]) \cup  \FV(\phi[][2]).\ %
    \phi[][] \land \phi[][1] \land \phi[][2] \land  \phi[][L] \land \phi[][R] \land  \phi[][\sim]\\
   \phi[][L] = \forall i \in \iSet.\ \sum_{\kk|\projl{\cww[k]} = i}
  \cww[\kk] 
  = \p[i] &
   \phi[][R] = \forall j \in \jSet, \sum_{\kk|\projr{\cww[k]} = j}
  \cww[\kk] 
  = \p[j] \\
  \phi[][\sim] = \forall \kk \in \kSett. \cww[\kk] > 0  =>  \gtys[\kk] |- \gtys[\change{\projl{\cww[k]}}] \sim \gtys[\change{\projr{\cww[k]}}]
}
{\dctx[\phi[][]][\d{\gtys[\kk][\cww[\kk]]}[\kk \in \kSett]] |- \db{\phi[][1]}{\gtys[i][\p[i]]}[i \in \iSet] \sim \db{\phi[][2]}{\gtys[j][\p[j]]}[j \in \jSet]}
}
\and
\end{mathpar}
The conditions require that (1) the sum of all the weights connected to a simple type be equal to the probability of that type, and (2) each evidence in the distribution 
evidence of weight greater than zero be well-defined (and thus more precise than the pair of types involved in the consistency judgment).

Appealing to these sufficient conditions, we show that the consistent transitivity operator indeed allows an incremental construction of evidence: 

\begin{restatable}[Evidence from consistent transitivity]{lemma}{transinvariant}
For all formula simple types $\ev[1], \ev[2], \gtys[1], \gtys[2], \gtysp \in \GFType$ and all formula distribution types $\evd[1], \evd[2], \gtya[1], \gtya[2], \gtyb \in \GFDType$,
  \begin{enumerate}
    \item  Let $\jtf{\ev[1]}{\gtys[1] \rel \gtysp}$ and $\jtf{\ev[2]}{\gtysp \rel \gtys[2]}$.
      If $\ev[1] \trans{} \ev[2]$ is defined, then $\jtf{\ev[1] \trans{} \ev[2]}{\gtys[1] \rel \gtys[2]}$.
    \item Let $\jtf{\evd[1]}{\gtya[1] \rel \gtyb}$ and $\jtf{\evd[2]}{\gtyb \rel \gtya[2]}$.
      If $\evd[1] \trans{} \evd[2]$ is defined, then $\jtf{\evd[1] \trans{} \evd[2]}{\gtya[1] \rel \gtya[2]}$.
  \end{enumerate} 
    \label{lemma:transinvariant}
\end{restatable}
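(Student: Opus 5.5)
The plan is to reduce the statement to the reductivity of the meet together with transitivity of precision. Recall that, by definition, $\jtf{\ev}{\gtys \rel \gtysp}$ holds iff $\ev \gprec \gtys$ and $\ev \gprec \gtysp$, and that consistent transitivity is defined as the meet, $\ev[1] \trans{} \ev[2] = \ev[1] \meet \ev[2]$. So for item (1) we must show $\ev[1] \meet \ev[2] \gprec \gtys[1]$ and $\ev[1] \meet \ev[2] \gprec \gtys[2]$. Lemma~\ref{meet-more-precise2} gives $\ev[1] \meet \ev[2] \gprec \ev[1]$ and $\ev[1] \meet \ev[2] \gprec \ev[2]$. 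The hypotheses give $\ev[1] \gprec \gtys[1]$ and $\ev[2] \gprec \gtys[2]$. It therefore suffices to prove that precision $\gprec$ on formula types is transitive; item (2) is then literally the same argument, using the distribution parts of the same lemmas. Note that the middle type $\gtysp$ (resp.~$\gtyb$) is not even needed beyond guaranteeing the evidences exist.

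The real work is the auxiliary lemma that $\gprec$ is transitive on $\GFType$ and $\GFDType$. I would prove it by mutual induction on the structure of the types, following Figure~\ref{source-type-precision}.

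For simple types the cases are routine:
\begin{itemize}
\item the base cases $\rtype$ and $\btype$ are immediate;
\item if the last type is $\?$, the conclusion follows from the rule $\gtys \gprec \?$;
\item the function case follows from the induction hypotheses on domains and codomains.
\end{itemize}

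For formula distribution types, suppose $\phty{\phi[][1]}{\d{\gtys[i][\p[i]]}[i\in\iSet]} \gprec \phty{\phi[][2]}{\d{\gtys[j][\p[j]]}[j\in\jSet]} \gprec \phty{\phi[][3]}{\d{\gtys[k][\p[k]]}[k\in\kSet]}$. I would proceed as follows.
\begin{enumerate}
\item Take any assignment to $\FV(\phi[][1])$ satisfying $\phi[][1]$.
\item The first premise yields values for $\FV(\phi[][2])$ satisfying $\phi[][2]$, together with a coupling $\cww[ij]$ witnessing the lifting of $\gprec$.
\item Feed this assignment of $\FV(\phi[][2])$ to the second premise, which is universally quantified over solutions of $\phi[][2]$. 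This gives values for $\FV(\phi[][3])$ and a coupling $\cwwp[jk]$.
\item Compose the two couplings in the standard way (gluing lemma for couplings): $\cwwpp[ik] = \sum_{j \,:\, \p[j] > 0} \cww[ij]\,\cwwp[jk]/\p[j]$.
\end{enumerate}

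The marginal conditions of the composed coupling follow by summing. The support condition holds because $\cwwpp[ik] > 0$ forces some $j$ with $\cww[ij] > 0$ and $\cwwp[jk] > 0$, hence $\gtys[i] \gprec \gtys[j] \gprec \gtys[k]$. The induction hypothesis on these simple types then gives $\gtys[i] \gprec \gtys[k]$.

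The main obstacle I expect is precisely this distribution case. The quantifier structure ($\forall$ solutions of $\phi[][1]$, $\exists$ solutions of $\phi[][2]$ and coupling) must chain correctly, and this works only because the second premise quantifies over \emph{all} solutions of $\phi[][2]$. One also has to treat indices $j$ with $\p[j] = 0$ carefully: those $j$ simply contribute nothing, since the marginal conditions force $\cww[ij] = 0$ there. A final check is that the induction is well-founded, because the simple types in the support are structurally smaller than the enclosing distribution types.
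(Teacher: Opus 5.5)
Your argument is correct, but it is not the route the paper takes. The paper never invokes reductivity of the meet or transitivity of $\gprec$. It dismisses item 1 as trivial. For item 2, it unfolds $\evd[1] \meet \evd[2]$ into weights $\omega_{k_1k_2}$, each carrying the left tag of the $k_1$-th entry of $\evd[1]$ and the right tag of the $k_2$-th entry of $\evd[2]$. It then uses that the meet's coupling has marginals $\omega_{k_1}$ and $\omega_{k_2}$, and checks by direct summation the tag-marginal premises of (wd$\evd$). Concretely, the weights with left tag $i$ sum to $\sum_{k_1 \,:\, \text{left tag } i}\omega_{k_1} = \varrho_i$, and symmetrically for right tags and $\varrho_j$.

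Your proof instead works from the official, precision-based definition of $\jtf{\evd}{\gtya \rel \gtyb}$. It treats both items uniformly and makes explicit that the middle type plays no role. It also isolates a reusable fact, transitivity of $\gprec$ by gluing couplings, which the paper only performs ad hoc (e.g.\ in the reordering-transitivity lemma). The cost is exactly that auxiliary lemma: the $\forall/\exists$ chaining over formula variables and the zero-weight bookkeeping you describe. In that chaining you tacitly read the two sides' variables as independently interpreted, as the paper also does.

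The paper's computation buys something yours does not. It shows that the composed evidence's tags connect positions of $\gtya[1]$ to positions of $\gtya[2]$ with the right marginals. That is what rule $(\mathit{D}\mathord{::}\gtya)$ actually consumes at runtime, and the precision-based definition cannot see it. On the other hand, the paper checks only those marginals, leaving the support premise of (wd$\evd$) and the formula quantifiers implicit. It also establishes the sufficient (wd$\evd$) conditions rather than the defining precision conditions.
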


\subsection{Dynamic Semantics}\label{sec:dynamics}
We now present the dynamic semantics for \tlang, which relates programs to probability distributions over final values, through a big-step reduction relation (like in \slang). 
\begin{figure}[t]
    \begin{flushleft}
      \framebox{$ \tm \nreds{}{\j} \rctx[\phi] \V$}
      \end{flushleft}
      \vspace{-2em}
      \begin{displaymath}
    \begin{array}{r@{\hspace{0.3em}}c@{\hspace{0.8em}}l@{\hspace{0.8em}}l}
       \V & ::= &  \dt{\tv[i][\p[i]]~|~ i \in \iSet} & \text{(distribution values)}
    \end{array}
  \end{displaymath}
  \begin{mathpar}
  \inference[\trg{(\mathit{Dapp})}]{ \rrctx[\phii[]] \trg{ \asc{\cdom(\ev[])\tv}{\gtysp} } \nreds{}{1} \rctx[\cdot] \dt{\pt{\tw}} &
  \rrctx[\phii[']]  \ssub{\trg{(\asc{\ccod(\ev[])\tm[]}{\gtya})}}{\tw}{\tx}  \nreds{}{\j}  \rctx[\phi] \V }
  {\rrctx[\phii[]] \trg{(\asc{\ev[] (\lambda \tx: \gtysp. \tm)}{\gtys -> \gtya}) \; \tv} \nreds{}{\j+1} \rctx[\phi] \V } \and
  \inference[\trg{(\mathit{D}\oplus)}]{\rrctx[\phi[][]] {\tm} \nreds{}{\j[1]} \rctx[\phi[][1]] \V[1] &
  \rrctx[\phi[][]]  {\tn} \nreds{}{\j[2]} \rctx[\phi[][2]] \V[2] &
  \phip = \phi[][1] \land \phi[][2] \land \phi }
  {\rrctx[\phi[][]]  \trg{ {\tm} \ppsum[\phi] {\tn} } \nreds{}{\j[1]+\j[2]+1} \rctx[\phip] \p[1] \cdot \V[1] + \p[2] \cdot \V[2] }\and
  \inference[\trg{(Dv)}]{}
  {\rrctx[\phii[]] \tv \nreds{}{1} \rctx[\cdot] \dt{\pt{\tv}} }
  \and 
  \inference[\trg{(\mathit{Derr})}]{
    \gtya = \dctx[\phi] \d{\gtys[i][\p[i]]}[i \in \iSet]
  }{
    \rrctx[\phi[]] \errort{\gtya} \nreds{}{1}
    \rctx[\phi] \d{\errort{\gtys[i]}^{\p[i]}}[i  \in \iSet] 
  }
  \and
  \inference[\trg{(\mathit{Dlet})}]{\rrctx[\phii[]] \tm \nreds{}{\j[1]} \rrctx[\phii[']] \d{ \tv[i][\p[i]]}[i \in \iSet] &
   \forall i \in \iSet. ~\rrctx[\phii[']] \ssub{\tn[]}{\tv[i]}{\tx} \nreds{}{\j[2]} \rctx[\phi[][i]] \V[i] 
  }
  { \rrctx[\phii[]] \trg{\lett{\tx}{\tm}{\tn}} \nreds{}{\j[1]+\j[2]+1} \rctx[(\bigwedge_{i \in \iSet} \phi[][i])] \ssum[i \in \iSet] \p[i] \cdot \V[i]  }
  \and
  \inference[\trg{(\mathit{D+})}]{ \ev[1] \trans{} \ev[2] = \ev[3] & \trg{r_3} = \trg{r_1} + \trg{r_2}   }
  { \rrctx[\phii] \trg{\add{\ev[1] \asc{r_1}{\rtype} }{ \asc{\ev[2] r_2}{\rtype} }} \nreds{}{1}  \rctx[\cdot] \dt{\pt{\trg{\asc{\ev[3] r_3}{\rtype}}}}
  }
  \end{mathpar}
    \begin{align*}
    \omit\rlap{$\mathit{sub} :\Term \times \Value \times \Var \rightharpoonup \Term$}\\[-0.5ex]
    \ssub{\tm[]}{\asc{\ev \tu }{\gtys}}{\tx}  & =  \tm[][\asc{\ev \tu}{\gtys} / \tx] \\
    \ssub{\tm[]}{\errort{\gtys}}{\tx} &= \errort{\gtya} \quad
     \text{where } \tx : \gtys |- \tm : \gtya 
    \end{align*}
  \caption{Distribution semantics of \tlang (part 1).}
  \label{fig:target-reduction-dis}
\end{figure}

\begin{figure}
\begin{flushleft}
\framebox{$ \tm \nreds{}{\j} \rctx[\phi] \V$}
\end{flushleft}
\begin{mathpar}
  \inference[\trg{(\mathit{Dit})}]{
    \tm \nreds{}{\j} \rctx[\phi]  \V 
 }{ \rrctx[\phii] \trg{\ite{ \asc{\ev \ttt}{\btype} }{\tm}{\tn}} 
     \nreds{}{\j+1}  \rctx[\phi]  \V
  } \and 
 \inference[\trg{(\mathit{Dif})}]{
    \tn \nreds{}{\j} \rctx[\phi]  \V 
 }{ \rrctx[\phii] \trg{\ite{ \asc{\ev \fff}{\btype} }{\tm}{\tn}} 
     \nreds{}{\j+1}  \rctx[\phi]  \V
  } \and 

\change{
  \inference[\trg{(\mathit{Dmon})}]{ \tm \nreds{}{k} \V}
  { \tm \nreds{}{k+1} \V}
  }
  \and
  \inference[\trg{(\mathit{D}\mathord{::}\gtys)}]{}
  { \rrctx[\phii[]] \trg{\asc{\ev[2](\asc{\ev[1] \tu}{\gtys})}{\gtysp}} \nreds{}{1}  \rctx[\cdot]
    \begin{cases}
      \dt{\pt{\trg{(\asc{\ev[3] \tu }{\gtysp})}}} &  \text{If}~ \ev[1] \trans{} \ev[2] = \ev[3]  \\
      \dt{\pt{\errort{\gtys}}} & \text{otherwise}
    \end{cases}
  } 
  \and 
    \change{
    \inference[\trg{(\mathit{D}\mathord{::}\gtya)}]{
      \rrctx[\phi[][1]] \tm \nreds{}{\kp} \rctx[\phi[][1]] \d{\tv[i][\p[i]]}[i \in \iSet]
      &  |-d \phty{\phi[][1]}{\d{\tv[i][\p[i]]}[i \in \iSet] }: \gtyap \\
      \evd |- \gtya \rel \gtyb
       &
       \gtyb = \phty{\pphi[3]}{ \d{\gtysp[j][\p[j]]}[j \in \jSet]}
      }
      { \rrctx[\phi[][1]] \trg{ (\asc{\evd \tm}{ \gtyb }) } \nreds{}{\kp+1} %
      \begin{cases}
        \rctx[\phi[][2] ] \ssum[\kk \in \kSet] 
        \cww[k] \cdot \V[\kk]  
        & \begin{block}
          \text{If}~ (\initReorder{\gtyap}{\gtya}) \trans{} \evd = \phty{\pphi[2]}{\d{\ev[k][\cww[k]]}[k \in \kSet]}, \\ \text{where}~  \forall k \in \kSet,
           (i = \projl{\cww[k]} \land\\  j = \projr{\cww[k]}) \implies  
           \trg{ (\asc{\ev[\kk] \tv[i]}{\gtysp[j]}) } \nreds{}{1} \rctx[\cdot] \V[\kk] 
        \end{block}\\
           \rctx[\cdot]\dt{\pt{\errort{\gtyb}}} & \text{otherwise}
      \end{cases} 
      } 
    }
\end{mathpar}
\caption{Distribution semantics of \tlang (part 2).}
  \label{fig:target-reduction-dis-2}
\end{figure}  
The distribution semantics is presented in Figures~\ref{fig:target-reduction-dis} and~\ref{fig:target-reduction-dis-2}. Reduction judgment $\rrctx[\phi[]] \tm \nreds{}{\j} \rctx[\phi] \V$ denotes that term $\tm$ reduces to \emph{distribution configuration} $\rctx[\phi] \V$ within $k$ steps, where $\phi$ closes the symbolic probabilities occurring in $\V$. The number of steps of reduction judgments are only required   
to establish the metatheory, and can be removed in a real implementation. Several rules are defined similarly to \slang, but  accounting for the fact that values are always ascribed. 
Rule $\trg{(\mathit{Dapp})}$ first ascribes argument $\tv$ to $\gtysp$, appealing to 
transitivity with the domain of $\ev$ as evidence. After its reduction, the obtained value $\tw$ is 
substituted for $x$ in the body of the function using the auxiliary function $\mathit{sub}$. 
If during $\tv$ reduction transitivity does not hold (and $\tw$ is thus $\errort{\gtysp}$), $\mathit{sub}$ yields term $\errort{\gtya}$, $\gtya$ being the expected distribution type of the application.
Rule $\trg{(\mathit{Dlet})}$ reduces subterm $\tn$ by substituting $\tx$ by all the possible outcomes of $\tm$, using also function $\mathit{sub}$ to properly handle the case where one such outcome is an error.
The so obtained distribution configurations are combined (distribution values via their weighted sum and formulas via their conjunction) to form the final outcome of the $\<let>\!$--expression. 
Rule $\trg{(\mathit{D}\oplus)}$ reduces the pair of branches and combines their results like the $\trg{(\mathit{Dlet})}$ rule, the major difference being that formula $\phi[][]$ is also included in the resulting distribution configuration.
Rule $\trg{(\mathit{Dv})}$ lifts values to (Dirac) distribution values.
\extends{Rule $\trg{(\mathit{D+})}$ combines the evidences of the two numbers and adds them to produce the final result with probability one.
Rule $\trg{(\mathit{Dit})}$ (resp.~$\trg{(\mathit{Dif})}$) applies when the guard of a conditional is true (resp.~false), reducing $\tm$ (resp.~$\tn$) as the final result.}
Rule $\trg{(\mathit{Derr})}$ reduces an error over distribution type $\gtya$ to a distribution of errors over simple types $\gtys[i]$. 
Rule $\trg{(\mathit{Dmon})}$ establishes the monotonicity of the reduction relation with respect to the step index. 
Rule $\trg{(\mathit{D}\mathord{::}\gtys)}$ analyzes whether type of $\tu$ is consistent with $\gtysp$, combining the respective evidences through the consistent transitivity operator. The rule yields either a Dirac distribution of a newly ascribed value (if consistent transitivity succeeds), or (else) an error.

Rule $\trg{(\mathit{D}\mathord{::}\gtya)}$ is the most challenging. Intuitively, it ``pushes'' simple evidences within $\gtyd$ into the outcomes of $\tm$. However, note that pushing every simple evidence within $\gtyd$ into every possible outcome of $\tm$ is not what we want. For instance,  
  given the (informal) program $\ds{\rtype^{\frac{1}{2}}, \btype^{\frac{1}{2}}} (1 \ppsum[][\frac{1}{2}][\frac{1}{2}] \ttt) \trg{::} \allowbreak \ds{\rtype^{\frac{1}{2}}, \btype^{\frac{1}{2}}}$, it would be futile to push evidence $\rtype$ into $\ttt$, or $\btype$ into $1$.
Here, we have two problems to address. First, to determine what evidences must be pushed into what values. Second, to determine the probability of each such combination. We address both problems simultaneously, taking advantage of the consistent transitivity operator, and a subsidiary relation over formula types we introduce next.

Observe that rule $\trg{(\mathit{D}\mathord{::}\gtya)}$ proceeds by first reducing $\tm$  to a value distribution of type $\gtyap$. However, this $\gtyap$ can be (syntactically) different from $\gtya$, the actual type of $\tm$. What we require here is that $\gtyap$ be a \emph{reordering} of $\gtya$. We thus introduce the reordering relation $\reoder{}{}$ over formula types, which (for distribution types) is nothing more than the coupling lifting of the syntactic equality.

\begin{definition}[Reordering]
    The reordering relation $\reoder{}{}$ over formula simple and distribution types is defined by the following clauses:
\begin{mathpar}
 \inference{}{\reoder{\rtype}{\rtype}} \!\!\and\!\!
  \inference{}{\reoder{\btype}{\btype}} \!\!\and\!\!
  \inference{}{\reoder{\?}{\?}} \!\!\and\!\!
  \inference{\reoder{\gtys[2]}{\gtys[1]} & \reoder{\gtya[1]}{\gtya[2]} }
  {\reoder{\gtys[1] -> \gtya[1]}{\gtys[2] -> \gtya[2]}} \!\!\and\!\!
  \inference{
      \couplingLift{\gtya}{\gtyb}{\reoderSym}
    }
    {\reoder{\gtya}{\gtyb}}%
\end{mathpar}

  \label{def:reorder}
\end{definition}

We can construct an initial evidence for reordering judgments similarly to the meet operator:

\begin{definition}[Reordering initial evidence]
The partial operator $\initReorder{}{}$ over formula simple and distribution types is defined as follows:
  \begin{mathpar}
  \initReorder{\rtype}{\rtype} ~=~ \rtype \and
  \initReorder{\btype}{\btype} ~=~ \btype \and
  \initReorder{\?}{\?} ~=~ \? \and
  (\initReorder{\gtys[1] -> \gtya[1]}{\gtys[2] -> \gtya[2]}) ~=~
  (\initReorder{\gtys[1]}{\gtys[2]}) -> (\initReorder{\gtya[1]}{\gtya[2]}) \and
  \initReorder{\gtya[1]}{\gtya[2]} ~=~ \witness{\gtya[1]}{\gtya[2]}{\initReorder{}{}} 
  \end{mathpar}
  \end{definition}

    \begin{lemma}\label{reorderevidence2}
For all formula simple types $\gtys[1], \gtys[2] \in \GFType$ and all formula distribution types $\gtya[1], \gtya[2] \in \GFDType$,
      \begin{enumerate}
        \item 
        $\ift{\, \reoder{\gtys[1]}{\gtys[2]}, }{ \initReorder{ \gtys[1]}{ \gtys[2]} \sentence{ is defined,} \ad \initReorder{ \gtys[1]}{ \gtys[2]} |- \reoder{\gtys[1]}{\gtys[2]} }$.
        \item  
        $\ift{\, \reoder{\gtya[1]}{\gtya[2]}, }{ \initReorder{ \gtya[1]}{ \gtya[2]} \sentence{ is defined,} \ad \initReorder{ \gtya[1]}{ \gtya[2]} |- \reoder{\gtya[1]}{\gtya[2]} }$.
      \end{enumerate}
    \end{lemma}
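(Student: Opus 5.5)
By mutual structural induction on the derivation of the reordering judgment $\reoder{}{}$ (Definition~\ref{def:reorder}), proving both items simultaneously, since function types contain distribution types and distribution types contain simple types. What "evidence for a reordering judgment" means is the analogue of the evidence judgment for consistency. For simple types, $\ev |- \reoder{\gtys[1]}{\gtys[2]}$ holds when $\ev$ is at least as precise as both sides. For distribution types it is given by a (wd$\evd$)-style rule with $\reoderSym$ in place of $\sim$. I will verify the claim against that (wd$\evd$)-like characterization, which only asks for a satisfying assignment of the combined formula.

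The base cases $\rtype$, $\btype$ and $\?$ are immediate. The operator returns the same type, and rule (wdB) applies since every type is at least as precise as itself. For the arrow case $\reoder{\gtys[1] -> \gtya[1]}{\gtys[2] -> \gtya[2]}$, the induction hypotheses give that $\initReorder{\gtys[2]}{\gtys[1]}$ and $\initReorder{\gtya[1]}{\gtya[2]}$ are defined and justify their judgments. One small point needs care: the premise of the arrow rule is flipped on the domain, so I first note that $\reoderSym$ and $\initReorder{}{}$ are symmetric up to swapping the tag components. This follows by a trivial induction, because the coupling lifting of a symmetric relation is symmetric via the transposed coupling. The arrow case then closes by (wd$->$).

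The main case is the distribution case, so take $\gtya[1] = \dctx[\phi[][1]]{\d{\gtys[i][\cww[i]]}[i\in\iSet]}$ and $\gtya[2] = \dctx[\phi[][2]]{\d{\gtysp[j][\cww[j]]}[j\in\jSet]}$. From $\couplingLift{\gtya[1]}{\gtya[2]}{\reoderSym}$ we obtain a satisfying assignment $\sigma$ for $\phi[][1]$, $\phi[][2]$, together with a coupling $\{\var{i}{j}\}$ such that $\var{i}{j}>0$ implies $\reoder{\gtys[i]}{\gtysp[j]}$.

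\begin{itemize}
\item \emph{Definedness of $\witnessname$.} The function $f=\initReorder{}{}$ must have in its domain exactly the pairs related by $\reoderSym$ (restricting to those pairs, by the induction hypothesis on the simple types). Then I set $\cww[ij] := \var{i}{j}$ for pairs in $\dom(f)$. Pairs outside $\dom(f)$ have $\var{i}{j}=0$ anyway, so the marginal equations still hold with the restricted index set. The tag constraint $\phi[']$ is satisfied by construction, because the fresh tagged variables are tagged $(i,j)$. This gives a witness for the existential side condition, so $\witnessname$ is defined.
\item \emph{The evidence judgment.} Using the same assignment extended with $\sigma$, I check the premises of the (wd$\evd$)-style rule. $\phi[][L]$ and $\phi[][R]$ are exactly the marginal conditions of the coupling, since $\projl{\cww[ij]}=i$ and $\projr{\cww[ij]}=j$. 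For $\phi[][\sim]$, every $\cww[ij]>0$ is a pair in $\dom(f)$ with $\reoder{\gtys[i]}{\gtysp[j]}$. The induction hypothesis then yields $\initReorder{\gtys[i]}{\gtysp[j]} |- \reoder{\gtys[i]}{\gtysp[j]}$.
\end{itemize}

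The expected obstacle is this bookkeeping on the existential quantifiers and tags. The formula produced by $\witnessname$ must be shown to be exactly the conjunction required by the evidence rule, modulo renaming of the fresh variables. I will handle it by fixing one assignment at the start of the case and reusing it for every subformula.
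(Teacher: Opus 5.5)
Your definedness argument is correct and is the paper's: the coupling witnessing $\reoder{\gtya[1]}{\gtya[2]}$ is exactly the witness $\witnessname$ needs, because its positive weights fall in $\dom(f)$ for $f=\initReorder{}{}$ by the induction hypothesis. Your symmetry remark for the arrow case is also good: the reordering rule flips the domain but $f$ does not, a point the paper passes over as trivial.

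The gap is in the evidence half. You verify a single-assignment (wd$\evd$)-style condition, but the paper's judgment is precision-based.
\begin{itemize}
\item The text says $\ev \vdash \reoder{\gtys[1]}{\gtys[2]}$ holds when $\ev \gprec \initReorder{\gtys[1]}{\gtys[2]}$. On that reading this half is just reflexivity of $\gprec$.
\item The appendix proof gives the form needed downstream (Lemma~\ref{lemma:rmd} goes through Lemma~\ref{lemma:transinvariant}): $\initReorder{\gtya[1]}{\gtya[2]} \gprec \gtya[1]$ and $\initReorder{\gtya[1]}{\gtya[2]} \gprec \gtya[2]$.
\end{itemize}
Precision of formula distribution types has the shape $\forall \FV(\Phi_1).\ \Phi_1 \implies \exists \ldots$, and a check at one fixed assignment does not reach it. For example, tag $\cww[1],\cww[2]$ by $(1,1)$ and $(2,2)$ and take $\phty{(\cww[1],\cww[2]\in[0,1])}{\ds{\rtype^{\cww[1]},\btype^{\cww[2]}}}$. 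It passes your kind of check against $\ds{\rtype^{\frac{1}{2}},\btype^{\frac{1}{2}}}$ on both sides (take $\cww[1]=\cww[2]=\tfrac12$). Yet it is not more precise than that type (take $\cww[1]=\cww[2]=1$). Your arrow case inherits the problem: showing that $(\initReorder{\gtys[1]}{\gtys[2]})\rightarrow(\initReorder{\gtya[1]}{\gtya[2]})$ is $\gprec$ both sides needs the codomain statement in precision form.

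To repair it, let $\Phi$ be the formula returned by $\witnessname$. Its conjuncts include $\Phi_1$, $\Phi_2$ and the marginal equations $\sum_j \cww[ij]=\cww[i]$ and $\sum_i \cww[ij]=\cww[j]$. Take an arbitrary solution of $\Phi$, not one fixed in advance. The projection coupling that puts weight $\cww[ij]$ on the pair $((i,j),i)$ and $0$ elsewhere then witnesses $\initReorder{\gtya[1]}{\gtya[2]} \gprec \gtya[1]$, and symmetrically for $\gtya[2]$. This is what the paper's coupling computation is aiming at.

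An arbitrary solution may put positive weight on any pair in $\dom(f)$, not only on the pairs your chosen coupling used. So the simple-type hypothesis must cover all of $\dom(f)$. You can either prove that $\dom(f)$ coincides with $\reoderSym$ (your ``exactly'' remark, by the same symmetry induction), or state the evidence half as reductivity of $f$, in the style of Lemma~\ref{meet-more-precise2}.

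This formulation also removes your tag bookkeeping, which is not right as stated. $\witnessname$ tags $\cww[ij]$ with $(\projl{\cww[i]},\projr{\cww[j]})$, which is $(i,j)$ only for identity-tagged inputs such as those produced by $\liftD{\cdot}$. In general, therefore, $\Phi_L$ and $\Phi_R$ need not be the coupling marginals.
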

\noindent Here, $\egtys |- \reoder{\gtys[1]}{\gtys[2]}$ means that evidence $\egtys$ justifies reordering $\reoder{\gtys[1]}{\gtys[2]}$ and holds if $\egtys  \gprec  \initReorder{ \gtys[1]}{ \gtys[2]}$. The notion of evidence for the reordering between distribution types is defined analogously.

Reordering and consistency interact nicely, in that their evidences can be soundly combined: %

  \begin{restatable}{lemma}{rmd}
    For all formula simple types $\ev, \evp, \gtys, \ggtys, \gtysp \in \GFType$ and all formula distribution types $\evd, \evdp, \gtya, \gtyap,  \gtyb \in \GFDType$,
  \begin{enumerate}
    \item 
    $\ift{\jtf{\ev}{\reoder{\gtys}{\ggtys}}, \jtf{\evp}{\ggtys \rel \gtysp} \ad \ev \trans{} \evp \text{ is defined,} }{ \jtf{\ev \trans{} \evp}{\gtys \rel \gtysp}}$.
    \item  
    $\ift{\jtf{\evd}{\reoder{\gtya}{\gtyap}},\jtf{\evdp}{\gtyap \rel \gtyb}  \ad \evd \trans{} \evdp  \text{ is defined,}}{  \jtf{\evd \trans{} \evdp}{\gtya \rel \gtyb}}$.
  \end{enumerate}
  \label{lemma:rmd}
\end{restatable}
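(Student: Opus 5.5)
The plan is to unfold both judgments into statements about precision and then close the gap with two auxiliary facts: transitivity of $\gprec$ and reductivity of the reordering operator $\initReorder{}{}$. Consider item~2; item~1 follows by the same argument restricted to simple types.

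\textbf{Unfolding the hypotheses.} The hypotheses are:
\begin{itemize}
\item $\jtf{\evdp}{\gtyap \rel \gtyb}$, which by definition of evidence means $\evdp \gprec \gtyap$ and $\evdp \gprec \gtyb$;
\item $\jtf{\evd}{\reoder{\gtya}{\gtyap}}$, which means $\evd \gprec (\initReorder{\gtya}{\gtyap})$, where this reordering evidence is defined by Lemma~\ref{reorderevidence2}.
\end{itemize}
Set $\evdpp = \evd \trans{} \evdp = \evd \meet \evdp$. By Lemma~\ref{meet-more-precise2} we have $\evdpp \gprec \evd$ and $\evdpp \gprec \evdp$. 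The goal is $\evdpp \gprec \gtya$ and $\evdpp \gprec \gtyb$.
\begin{itemize}
\item The second goal follows from the chain $\evdpp \gprec \evdp \gprec \gtyb$.
\item The first follows from the chain $\evdpp \gprec \evd \gprec (\initReorder{\gtya}{\gtyap}) \gprec \gtya$.
\end{itemize}

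\textbf{Auxiliary facts.} Two facts remain to be supplied.

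(a) Transitivity of $\gprec$ on formula simple and distribution types. This goes by mutual induction on the types. The simple-type cases are routine, since $\?$ is the top element and arrows are handled componentwise. For distribution types $\gtya[1] \gprec \gtya[2] \gprec \gtya[3]$, fix an arbitrary solution of $\phi[][1]$.
\begin{itemize}
\item The first precision gives a solution of $\phi[][2]$ together with a coupling $\fcw{i}{j}$ witnessing $\gprec$ between the components.
\item Feeding that solution of $\phi[][2]$ into the second precision gives a solution of $\phi[][3]$ and a coupling $\fcw{j}{k}$.
\item Glue the two couplings in the standard way: $\fcw{i}{k} = \sum_{j \mid \p[j] > 0} \fcw{i}{j}\,\fcw{j}{k} / \p[j]$.
\end{itemize}
The marginals of the glued coupling are correct. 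If $\fcw{i}{k} > 0$, some $j$ has both factors positive, so the induction hypothesis on simple types gives $\gtys[i] \gprec \gtys[k]$. The quantifier structure ($\forall$ solutions of $\phi[][1]$, $\exists$ the rest) composes correctly because the intermediate solution is produced existentially and then consumed universally.

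(b) Reductivity of reordering evidence: $(\initReorder{\gtya[1]}{\gtya[2]}) \gprec \gtya[1]$, and symmetrically $\gprec \gtya[2]$. This goes by induction, mirroring Lemma~\ref{meet-more-precise2}.
\begin{itemize}
\item For arrows, use the clause $\initReorder{}{}$ on domains and codomains; the swapped domain premise in the reordering rule is harmless here because the operator itself is defined componentwise.
\item For distributions, the witness type $\witness{\gtya[1]}{\gtya[2]}{\initReorder{}{}}$ has components $\initReorder{\gtys[i]}{\gtysp[j]}$ weighted by $\cww[ij]$, under a formula $\phii$ that includes $\phi[][1]$, $\phi[][2]$ and the coupling equations.
\item Given any solution of $\phii$, take the coupling to $\gtya[1]$ that sends component $(i,j)$ to index $i$ with weight $\cww[ij]$ and gives weight $0$ elsewhere. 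Its marginals hold because $\sum_j \cww[ij] = \p[i]$ is one of the coupling equations in $\phii$. Positive weights relate $\initReorder{\gtys[i]}{\gtysp[j]}$ to $\gtys[i]$, which are in precision by the induction hypothesis.
\end{itemize}

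\textbf{Main obstacle.} I expect (a) to be the main difficulty. Precision for formula distribution types is not a plain coupling lifting: it has a $\forall\exists$ shape over solutions of the guarding formulas, the symbolic probabilities must be instantiated consistently, and division by zero has to be avoided in the gluing. I would first try to state and prove a small gluing lemma for concrete couplings (Definition~\ref{def:coupling}) and then lift it pointwise over solutions. If transitivity of precision is already established elsewhere for \glang, I would simply reuse it for formula types.
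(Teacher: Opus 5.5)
Your proof is correct, but it is organized differently from the paper's.

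\textbf{The paper's route.} The paper dismisses the simple-type case as trivial. For distribution types it simply invokes Lemma~\ref{lemma:transinvariant}, implicitly treating $\evd$ as consistency evidence between $\gtya$ and $\gtyap$ (the content of Lemmas~\ref{reorderevidence} and~\ref{initreorder}). That lemma is in turn proved by a direct marginal computation on the meet's coupling. It uses the tag bookkeeping of the witness: each variable of $\evd \meet \evdp$ inherits its left tag from $\evd$ and its right tag from $\evdp$. This yields the (wd$\evd$)-style marginal equations against the outer types.

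\textbf{Your route.} You use neither that lemma nor the tags. You unfold hypotheses and goal into $\gprec$ facts and close the goal with two precision chains. The ingredients are Lemma~\ref{meet-more-precise2}, reductivity of $\initReorder{}{}$, and transitivity of $\gprec$. Reductivity of $\initReorder{}{}$ is essentially what the appendix proof of Lemma~\ref{reorderevidence} already shows, so you could cite it.

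\textbf{What your route buys.} It isolates an ingredient the paper uses only tacitly. Even to feed $\evd$ into Lemma~\ref{lemma:transinvariant}, one must pass from $\evd \gprec (\initReorder{\gtya}{\gtyap})$ to $\evd \gprec \gtya$. That is a transitivity step, and the paper never states or proves transitivity of $\gprec$ on formula types. Your gluing argument for it is the right proof: instantiate the universal of the first judgment, feed the resulting solution of $\phi[][2]$ to the second, and divide only where $\p[j] > 0$. Restricting the division is harmless because $\fcw{i}{j} \le \p[j]$ by nonnegativity. The argument relies on the same implicit convention as the paper: existential witnesses reuse the outer values of tagged variables shared between guarding formulas.

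\textbf{Trade-off.} The paper's route is shorter because it reuses existing lemmas. Yours is self-contained and independent of the tag machinery.
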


Returning to rule $\trg{(\mathit{D}\mathord{::}\gtya)}$, observe that evidence  $(\initReorder{\gtyap}{\gtya}) \trans{} \evd$  addresses both of the mentioned problems: every simple evidence $\ev[k][]$ in $(\initReorder{\gtyap}{\gtya}) \trans{} \evd$ connects a simple type from $\gtyap$ with a simple type from $\gtyb$, via its corresponding weight $\cww[k]$. %
To form the final distribution value, evidence $\ev[k]$ is pushed to the ascription of value $\tv[\projl{\cww[k]}]$ with type $\gtysp[\projr{\cww[k]}]$. After reducing these terms, the obtained distribution values are combined (through a weighted sum) to yield the final distribution value. %

\begin{example}
  To illustrate this process, consider an expression 
  $\evd \tm \trg{::} \, \dt{?^{\frac{2}{3}}, \rtype^{\frac{1}{3}} }$ such that $|-d  \tm : \gtya$, where
  
  \[
  \begin{array}{c}
  \evd  \:=\: \phty{\bigl(\cw[1]{1}{1}=\tfrac{1}{6} \land \cw[2]{1}{2}=\tfrac{1}{3} \land \cw[3]{2}{1} = \tfrac{1}{2}\bigr)}{\dt{\rtype^{\cww[1]} , \rtype^{\cww[2]}, \btype^{\cww[3]}}}\\
  \evd |- \gtya \sim \gtyb %
  \qquad \qquad
  \gtya \:=\: \dt{\rtype^{\frac{1}{2}}, \btype^{\frac{1}{2}}} 
  \qquad \qquad
  \gtyb \:=\: \ds{?^{\frac{2}{3}}, \rtype^{\frac{1}{3}} }     
  \end{array}
  \]
  For simplicity, we consider only concrete probabilities, thus omitting formulas, and also omitting some trivial ascriptions. 

  Now, say $\tm \nreds{} {k} \rctx[\cdot] \V$ and $|-d \V : \gtyap$, where
  \[
  \begin{array}{c}
  \V \:=\: \dt{((\btype) \trg{\ttt} :: \btype)^{\frac{1}{2}}, ((\rtype) \trg{1} :: \rtype)^{\frac{1}{2}}}
  \qquad \qquad 
  \gtyap \:=\: \dt{\btype^{\frac{1}{2}}, \rtype^{\frac{1}{2}}}  
  \end{array}
  \]
  Letting  $\evdp = (\initReorder{\gtyap}{\gtya})$, we have
  $$\evdp = \phty{ \bigl(\cwp[1]{1}{2} = \tfrac{1}{2} \land \cwp[2]{2}{1} = \tfrac{1}{2}\bigr) }{ \dt{\rtype^{\cwwp[1]}, \btype^{\cwwp[2]} } }$$
  Types and evidence are illustrated in Figure~\ref{fig:example-red}. Notice that $\jtf{\evdp \trans{} \evd}{\gtyap \rel \gtyb}$, where
  $$\evdp \trans{} \evd = \phty{\bigl(\cwpp[1]{1}{1}=\tfrac{1}{2} \land \cwpp[2]{2}{1}=\tfrac{1}{6} \land \cwpp[3]{2}{2} = \tfrac{1}{3}\bigr)}{\ds{\btype^{\cwwpp[1]} , \rtype^{\cwwpp[2]}, \rtype^{\cwwpp[3]}}}$$
  Finally, the whole expression $\evd \tm \trg{::} \, \dt{?^{\frac{2}{3}}, \rtype^{\frac{1}{3}} }$ reduces to
  $$\dt{((\btype) \trg{\ttt} :: \?)^{\frac{1}{2}}, 
  ((\rtype) \trg{1} :: \rtype)^{\frac{1}{6}}, 
  ((\rtype) \trg{1} :: \rtype)^{\frac{1}{3}} }$$
in $k+1$ steps.
\end{example}

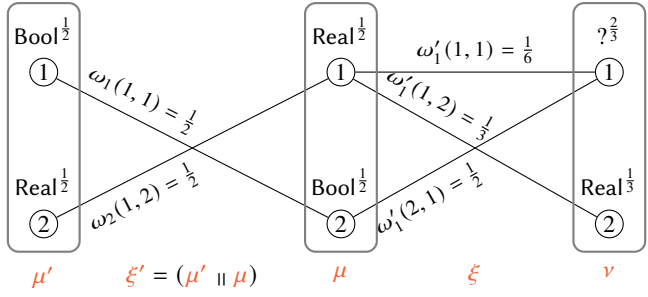
\begin{figure}[t]
\begin{small}
\begin{center}
\begin{tikzpicture}[colgroup/.style={draw=gray, rounded corners, thick, inner sep=5pt}]
  \node[draw, circle,  fill= white!20, inner sep=1pt] (p1) {1};
  \node[sloped, above = 2pt of p1] {$\btype^{\frac{1}{2}}$};

  \node[draw, circle,  fill= white!20, inner sep=1pt, right=100pt of p1] (p21) {1};
  \node[sloped, above = 2pt of p21] {$\rtype^{\frac{1}{2}}$};

  \node[draw, circle,  fill= white!20, inner sep=1pt, right=200pt of p1] (p31) {1};
  \node[sloped, above = 2pt of p31] {$?^{\frac{2}{3}}$};

  \node[draw, circle,  fill= white!20, inner sep=1pt, below=46pt of p1] (p2) {2};
  \node[sloped, above = 2pt of p2] {$\rtype^{\frac{1}{2}}$};

  \node[draw, circle,  fill= white!20, inner sep=1pt, right=100pt of p2] (p22) {2};
  \node[sloped, above = 2pt of p22] {$\btype^{\frac{1}{2}}$};

  \node[draw, circle,  fill= white!20, inner sep=1pt, right=200pt of p2] (p32) {2};
  \node[sloped, above = 2pt of p32] {$\rtype^{\frac{1}{3}}$};

  \draw[] (p1) edge  node[sloped, anchor=center, above, xshift=-25, font=\footnotesize] {$\cw[1]{1}{1}=\frac{1}{2}$} (p22);
  \draw[] (p2) edge  node[sloped, anchor=center, below, xshift=-25, font=\footnotesize] {$\cw[2]{1}{2}=\frac{1}{2}$} (p21);

  \draw[] (p21) edge  node[sloped, anchor=center, above, font=\footnotesize] {$\cwp[1]{1}{1}=\frac{1}{6}$} (p31);
  \draw[] (p21) edge  node[sloped, anchor=center, above, xshift=-20, font=\footnotesize] {$\cwp[1]{1}{2}=\frac{1}{3}$} (p32);
  \draw[] (p22) edge  node[sloped, anchor=center, below, xshift=-25, font=\footnotesize] {$\cwp[1]{2}{1}=\frac{1}{2}$} (p31);

  \node[inner sep=0pt, minimum height=0pt, above = 15pt of p1] (top1) {};
  \node[colgroup, fit=(top1)(p2), inner xsep=8pt] (col1) {};
  \node[inner sep=0pt, minimum height=0pt, above = 15pt of p21] (top21) {};
  \node[colgroup, fit=(top21)(p22), inner xsep=8pt] (col2) {};
  \node[inner sep=0pt, minimum height=0pt, above = 15pt of p31] (top31) {};
  \node[colgroup, fit=(top31)(p32), inner xsep=8pt] (col3) {};

  \node[below=2pt of col1] (mu) {$\gtyap$};
  \node[below=2pt of col2] (mup) {$\gtya$};
  \node[below=2pt of col3] (mux) {$\gtyb$};
  \node at ($(mu)!0.5!(mup)$) {$\evdp =(\initReorder{\gtyap}{\gtya})$};
  \node at ($(mup)!0.5!(mux)$) {$\evd$};

\end{tikzpicture}  
\end{center}
\end{small}\vspace{-1ex}
\caption{Illustration of types and evidence for reduction example.}
\label{fig:example-red}
\end{figure}

\subsection{Elaboration}
\label{sec:elaboration}

\begin{figure}[t]

\begin{flushleft}
  \minibox[frame]{$\elaborate{\Gamma |-t \change{\sv}}{\change{\tv}}{\sgtys}$\\ $\elaborate{\Gamma |-d \sm}{\tm}{\sgtya}$}
\end{flushleft}
  \def \MathparLineskip {\lineskip=1.8ex}
  \begin{mathpar}
    \inference[\src{(E\ssr)}]{ \ev = \rtype \meet \rtype }
    {\ela{\Gamma |-t \ssr }{\asc{\ev \ttr }{\rtype}}{\rtype}} \and
    \inference[\src{(E\ssb)}]{\ev = \btype \meet \btype}
    {\ela{\Gamma |-t \ssb }{\asc{\ev \ttb }{\btype}}{\btype}} \and
    \inference[\src{(E\sx)}]{\Gamma(\sx) = \sgtys}
    {\ela{\Gamma |-t \sx }{ \tx }{\sgtys}} \and
    \inference[\src{(Eapp)}]{
      \ela{\Gamma |-t \sv}{\tv}{\sgtys}  &  
      \ela{\Gamma |-t \sw}{\tw}{\sgtysp} &  \sgtysp  \rel \cdom(\sgtys) \\
      \ev[1] = \liftT{\sgtysp} \meet \liftT{ \cdom(\sgtys) }  & \ev[2] = \liftT{\sgtys} \meet \liftT{\cdom(\sgtys)->\ccod(\sgtys)}
    }
    {\ela{\Gamma |-d \sv \; \sw}{ 
      \trg{\letnl{ \tx }{\asc{\ev[1] \tw }{ \liftD{\cdom(\sgtys)} } }{ \lett{\ty}{\asc{\ev[2] \tv}{ \liftT{\cdom(\sgtys) -> \cod(\sgtys)} }}{\ty \;\tx} } } }{\ccod(\sgtys)}
    } \and
    \inference[\src{(E\oplus)}]{
      \ela{\Gamma |-d \sm}{\tm}{\sgtya} &  
      \ela{\Gamma |-d \srcn}{\tn}{\sgtyb} &
      \evd[1] = \liftD{\sgtya} &
      \evd[2] = \liftD{\sgtyb} \\
      \cww[1], \cww[2] \; \text{fresh} &
      \liftP{\pty}[\cww[1]] = \phi[][1] & \liftP{(1-\pty)}[\cww[2]] = \phi[][2] 
      \\ \phi = \phi[][1] \land \phi[][2] \land (\cww[1] + \cww[2] = 1) &
      \evd = \jtf{\phi}{(\cww[1] \cdot \evd[1] + \cww[2] \cdot \evd[2])} \meet \liftD{\pty \cdot \sgtya + (1- \pty) \cdot \sgtyb} 
    }
    {\ela{\Gamma |-d \src{ { \sm} \pssum { \srcn} } }{ \trg{\asc{\evd {\tm} \ppsum[\phi][\cww[1]][\cww[2]] {\tn}}{ \liftD{\pty \cdot \sgtya + (1- \pty) \cdot \sgtyb} } } }{ 
      \pty \cdot \sgtya + (1- \pty) \cdot \sgtyb}} \and
    \inference[\src{(Elet)}]{
      \ela{\Gamma |-d \sm}{\tm}{\d{\sgtys[i][\pty[i]]}[i \in \iSet]}  &
      \forall i\in\iSet.~\ela{\Gamma, \sx : \sgtys[i] |-d \srcn}{\tn}{\sgtya[i]} \\
      \cww \; \text{fresh} &  \evd = \sum_{i \in \iSet} \liftP{\pty[i]}[\cw{i}{i}] \cdot \liftD{\sgtya[i]} \meet \liftD{\sum_{i \in \iSet} \pty[i] \cdot \sgtya[i]} 
    }
    {\elaborate{\Gamma |-d \src{\lett{\sx}{\sm}{\srcn}} }{ 
      \trg{ \asc{ \evd \lett{\tx}{\tm}{\tn}}{ \liftD{\sum_{i \in \iSet} \pty[i] \cdot \sgtya[i]} }  } }{
        \sum_{i \in \iSet} \pty[i] \cdot \sgtya[i]}} \and
    \inference[\src{(E\lambda)}]{\ela{\Gamma, \sx :\sgtys |-t \sm}{\tm}{\sgtya} 
    & \ev = \liftT{\sgtys -> \sgtya} 
    & \jtf{}{\sgtys} }
    {\ela{\Gamma |-t \src{\lambda \sx : \sgtys. \sm} }{  \trg{ \asc{\ev \lambda \tx: \liftT{\sgtys}. \tm}{ \liftT{\sgtys -> \sgtya}} } }{\sgtys -> \sgtya}} \and
    \inference[\src{(E\mathord{::}\sgtya)}]{\elaborate{\Gamma |-d \sm }{\sm }{\sgtya} &  \sgtya \rel  \sgtyb & \evd = \liftD{\sgtya} \meet \liftD{\sgtyb}  & \jtf{}{\sgtyb}
    }
    {\elaborate{\Gamma |-d \src{\asc{ \sm}{\sgtyb} } }{ \trg{ \asc{\evd \tm}{ \liftD{\sgtyb} } } }{\sgtyb} } \and
    \inference[\src{(E\mathord{::}\sgtys)}]{\ela{\Gamma |-t \sv}{\tv}{\sgtys} &  \sgtys \rel \sgtysp & \ev = \liftT{\sgtys} \meet \liftT{\sgtysp}  & \jtf{}{\sgtysp} }
    {\ela{\Gamma |-t \src{\asc{\sv}{\sgtysp}}}{\asc{\ev \tv}{ \liftT{\sgtysp} }}{\ds{\sgtysp[][1]}}}  \and
    \inference[\src{(E\sv)}]{\ela{\Gamma |-t \sv}{\tv}{\sgtys} }
    {\ela{\Gamma |-t \sv}{\tv}{\ds{\sgtys[][1]}} } \and
    \inference[\src{(E+)}]{
      \ela{\Gamma |-t \sv}{\tv}{\sgtys}   & \sgtys \rel \rtype   &  
      \ela{\Gamma |-t \sw}{\tw}{\sgtysp} & \sgtysp \rel \rtype \\
      \ev[1] = \liftT{\sgtys} \meet \rtype & \ev[2] = \liftT{\sgtysp} \meet \rtype
    }
    {\ela{\Gamma |-d \src{\add{\sv}{\sw}}}{\trg{  \lett{\tx}{ \asc{\ev[1] \tv }{\rtype} }{ \lett{\ty}{ \asc{\ev[2] \tw }{\rtype} }{\add{\tx}{\ty}}  } }}{\ds{\rtype^1}}  } \and
    \inference[\src{(Eif)}]{
      \ela{\Gamma |-t \sv}{\tv}{\sgtys} & \sgtys \rel \btype \\
      \ela{\Gamma |-d \sm }{\tm}{\sgtya} & 
      \ela{\Gamma |-d \srcn }{\tn}{\sgtya} & \ev= \liftT{\sgtys} \meet \btype
    }
    { \ela{\Gamma |-d \src{\ite{\sv}{\sm}{\srcn}}}{
      \trg{\lett{\tx}{ \asc{\ev \tv }{\btype}}{ \ite{\tx}{\tm}{\tn}} }
    }{ \sgtya}  } \and

  \end{mathpar}

\caption{Elaboration from \glang to \tlang.}
\label{fig:elaboration}
\end{figure}

As previously mentioned, the runtime semantics of \glang\ \change{ is} given via translation to the \tlang target language.
Figure~\ref{fig:elaboration} presents the type-driven elaboration rules from \glang to \tlang.
Judgment \change{$\elaborate{\Gamma |-t \sv}{\tv}{\sgtys}$ (resp.\ $\elaborate{\Gamma |-d \sm}{\tm}{\sgtya}$)}, denotes the elaboration of \change{value $\tv$ (resp.\ term $\tm$)} from \change{value $\sv$ (resp.\ term $\sm$)}, where \change{$\sv$ (resp.  $\sm$)} is typed \change{$\sgtys$ (resp. $\sgtya$)} under environment $\Gamma$. For simplicity, we write $\elaborate{\sv}{\tv}{\sgtys}$ (resp.\ $\elaborate{\sm}{\tm}{\sgtya}$) as a shorthand for $\elaborate{\cdot |-t \sv}{\tv}{\sgtys}$ (resp.\ $\elaborate{\cdot |-t \sm}{\tm}{\sgtya}$).
\extends{Rules $\src{(Ex)}$ and $\src{(Ev)}$ elaborate variables and values to themselves, without any modification.}
Rule $\src{(E\lambda)}$ and the ones for elaborating other values, elaborate by inserting ascriptions to their types.
The initial evidence between two gradual types is computed using the meet of the lifted types.
Rule $\src{(Eapp)}$ inserts ascriptions in both the function and argument to make top-level constructor match using A-normal form.
\extends{Rules $\src{(E\mathord{::}\sgtya)}$, $\src{(E\mathord{::}\sgtys)}$, and $\src{(Elet)}$ generate initial evidence to justify the consistency judgment between the annotated types and the inner types.}
Rule $\src{(E\oplus)}$ is designed to carefully deal with the probability annotations. First, it generates two fresh variables: $\cww[1]$ for annotation $\pty$, and $\cww[2]$ for the complement $1 - \pty$.\footnote{For $\cww[1]$ and $\cww[2]$ we do not care about the indexes because they do not flow into evidences. For this reason we can arbitrarily choose $0$ and $0$ as default values.} Second, these two fresh variables are used to generate two formulas $\phi[][1]$ and $\phi[][2]$ by lifting $\pty$ and $1 - \pty$. Third, the annotation formula $\phi$ is computed by combining $\phi[][1]$ and $\phi[][2]$, with the extra requirement that the sum of the probability variables must be one (in case $\pty = \?$).
Finally, we insert an extra ascription to relate variables $\cww[1]$ and $\cww[2]$ with the fresh variables obtained from lifting the type of the expression (as they will be different).
\extends{Rule $\src{(E+)}$ inserts ascriptions of type $\rtype$ along with corresponding evidences that justify the relationship between the value types and $\rtype$ for both operands. Similarly, rule $\src{(Eif)}$ inserts ascriptions and evidences for the condition value.}

\begin{example}
To illustrate the elaboration procedure, consider the program 
$$\src{\ttt}\pssum[\frac{1}{2}]\src{1}$$ 
The source values $\src{\ttt}$ and $\src{1}$ are elaborated as
\begin{align*}
  \elalign{\cdot |-d \src{\ttt}}{\trg{\ev[1]\asc{\ttt}{\btype}}}{\ds{\btype^{1}}}\\
  \elalign{\cdot |-d \src{1}}{\trg{\ev[2]\asc{1}{\rtype}}}{\ds{\rtype^{1}}}
\end{align*}
where  $\ev[1] = \btype$ and $\ev[2] = \rtype$, respectively. 
Then,
\begin{align*}
  \evd[1] &=  \liftD{\ds{\btype^{1}}} = \dbb{(\cwwp[1] = 1)}{\btype^{\cwwp[1]}}\\
  \evd[2] &=  \liftD{\ds{\rtype^{1}}} = \dbb{(\cwwp[2] = 1)}{\rtype^{\cwwp[2]}}
\end{align*}
for $\cwwp[1]$ and $\cwwp[2]$ fresh. We know that 
\begin{align*}
  \phi[][1] &= \liftP{\frac{1}{2}}[\cww[1]] = (\cww[1] = \frac{1}{2})\\
  \phi[][2] &= \liftP{1-\frac{1}{2}}[\cww[1]] = (\cww[2] = \frac{1}{2})
\end{align*}

for $\cww[1]$ and $\cww[2]$ fresh. Then 
\begin{align*}
  \phi &= \cww[1] = \frac{1}{2} \land \cww[2] = \frac{1}{2} \land 
\cww[1] + \cww[2] = 1\\
  \evd &= 
(\dbb{\phi \land \phip[][1]}{\btype^{\cwwpp[1]},\rtype^{\cwwpp[2]}}) \meet \liftD{\ds{\btype^{\frac{1}{2}},\rtype^\frac{1}{2}}}\\
    &= (\dbb{\phi \land \phip[][1]}{\btype^{\cwwpp[1]},\rtype^{\cwwpp[2]}}) \meet \dbb{\phip[][2]}{\btype^{\cww[3]},\rtype^{\cww[4]}}
\end{align*}
where $\phip[][1] = (\cwwpp[1] = \cwwp[1]\cdot\cww[1], \cwwpp[2] = \cwwp[2]\cdot\cww[2], \cwwp[1] = 1,\cwwp[2] = 1)$, and $\phip[][2] = (\cww[3] = \frac{1}{2},\cww[4] = \frac{1}{2})$, for $\cwwpp[1]$, $\cwwpp[2]$, $\cww[3]$, and $\cww[4]$ fresh. Evidence $\evd$ further simplifies to 
$$\evd = \dbb{\phi \land \phip[][1] \land \phip[][2] \land (\cww[11] = \cwwpp[1] = \cww[3], \cww[22] = \cwwpp[2] = \cww[4])}{\btype^{\cww[11]},\rtype^{\cww[22]}}$$
which is satisfiable, thus elaborating the whole program to $\evd \trg{(\ev[1]\asc{\ttt}{\btype})} \ppsum[\phi][\cww[1]][\cww[2]]\trg{(\ev[2]\asc{1}{\rtype})}$.
\end{example}

To conclude, we establish that the elaboration rules preserve typing:

\begin{restatable}[Elaboration Preserve Types]{theorem}{elaprvty}\label{theorem:ep}
For every value $\sv$, term $\sm$, simple type  $\sgtys$, distribution type $\sgtya$ from \glang, and any environment $\Gamma$ (mapping variables to gradual simple types),
 \begin{enumerate}
 \item If $\;\Gamma |-d \sv : \sgtys$, then \change{there exists value $\tv$ in \tlang such that} $\ela{ \Gamma |-d \sv }{\tv}{\sgtys}$ and $\liftE{ \Gamma } |-d \tv : \liftT{\sgtys}$.
 \item If $\;\Gamma |-d \sm : \sgtya$, then \change{there exists term $\tm$ in \tlang such that} $\ela{ \Gamma |-d \sm }{\tm}{\sgtya}$ and $\liftE{ \Gamma } |-d \tm : \liftT{\sgtya}$.
 \end{enumerate}
Here, $\liftE{ \Gamma }$ is the \change{pointwise} lifting \change{of $\:\Gamma$}.
\end{restatable}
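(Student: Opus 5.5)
We argue by simultaneous induction on the \glang typing derivations $\Gamma |-t \sv : \sgtys$ and $\Gamma |-d \sm : \sgtya$. In each case we pick the elaboration rule of Figure~\ref{fig:elaboration} whose conclusion matches the last typing rule of Figure~\ref{fig:source-type-system}, and then rebuild a \tlang typing derivation under $\liftE{\Gamma}$. The elaboration rules are syntax directed and mirror the typing rules one for one. The only side conditions that are not immediate are the definedness of the meets used to build evidence. So each case splits into (a) showing that the meet is defined and is valid evidence for the required consistency judgment, and (b) checking that the ascribed type is exactly the lifting of the source type.

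For (a) we rely on two facts: every consistency premise $\sgtys \rel \sgtysp$ (resp.\ $\sgtya \rel \sgtyb$) of the source derivation lifts to $\liftT{\sgtys} \rel \liftT{\sgtysp}$ (resp.\ $\liftD{\sgtya} \rel \liftD{\sgtyb}$), and consistent types have a defined meet. The lifting of consistency holds directly for distribution types, since Definition~\ref{def:consistency-source} defines $\sgtya \rel \sgtyb$ through $\liftD{\sgtya} \rel \liftD{\sgtyb}$. For simple types it follows by a routine induction. Definedness of the meet is proved by induction on the consistency derivation. In the coupling case, the witness $\CC$ for $\couplingLift{\gtya}{\gtyb}{\sim}$ supplies values for the fresh $\cww[ij]$, and the relation $R = \dom(\meet)$ holds whenever $\var{i}{j} > 0$ by the induction hypothesis on simple types. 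Hence the existential side condition of $\witness{\cdot}{\cdot}{\meet}$ is satisfiable. Once the meet is defined, Lemma~\ref{meet-more-precise2} gives that it is more precise than both operands, which is exactly $\jtf{\ev}{\cdot \rel \cdot}$. The premise $\jtf{}{\liftD{\sgtyb}}$ of (G$::\gtya$) comes from Lemma~\ref{lemma:wellformed-lifting2}.

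For (b), the value cases (E$\ssr$), (E$\ssb$), (E$\sx$), (E$\lambda$), (E$::\sgtys$) and (E$\sv$) are direct. For (E$\lambda$) we use the induction hypothesis under $\liftE{\Gamma}, \tx:\liftT{\sgtys}$, and the evidence $\liftT{\sgtys -> \sgtya}$ is trivially more precise than itself. For (Eapp), the inner $\<let>$s are typed by (Glet) with singleton distributions, and the application is typed by (Gapp) at $\liftD{\ccod(\sgtys)}$. We must check that the resulting (Glet) type is, up to the harmless conjunction of trivial formulas, the same formula type. For (E$\oplus$) and (Elet), the inner \tlang term has type $\jtf{\phi}{(\cww[1]\cdot\liftD{\sgtya}+\cww[2]\cdot\liftD{\sgtyb})}$ (resp.\ the lifted weighted sum). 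The outer ascription to $\liftD{\pty\cdot\sgtya+(1-\pty)\cdot\sgtyb}$ then forces the final type to be the lifting, provided the meet in the rule is defined. This requires proving that the inner type is consistent with the lifted source type. Concretely, we exhibit a coupling: the diagonal coupling pairing the $i$-th component on each side, with weights equal to the common value of the scaled probabilities, chosen from any concretization witnessing well-formedness of the source type (Lemma~\ref{lemma:welltyped-wellform-source2}). For (G$\oplus$) we must also discharge $\satisfiablee{\phi => \cww[1]+\cww[2]=1}$, which is immediate because $\phi$ contains that conjunct. The (E$+$) and (Eif) cases are analogous to (Eapp).

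We expect the main obstacle to be the (E$\oplus$)/(Elet) step: proving that the symbolic inner type and the freshly lifted source type admit a satisfying assignment of their combined formulas together with the coupling constraints. The difficulty is that the fresh variables on the two sides are independent, and the lifting of $1-\pty$ for $\pty=\?$ yields an unconstrained $\?$. Our first attempt is to instantiate all variables from a single concretization of $\pty$ and of the branch types, set the $\?$-lifted variable to $1-\ps$, and use the diagonal coupling. We then verify the scaling conjuncts $\cww[i]=\p\cdot\p[i]$ by direct arithmetic.
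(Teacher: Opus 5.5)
Your proposal follows the paper's proof of this result (Lemma~\ref{ep}) essentially step for step. The shared steps are: induction on the typing derivation; lifting of the consistency premises (Lemma~\ref{elaborate-prv-consistency}); definedness of the meet for consistent types (Lemma~\ref{consistency-defined}); Lemma~\ref{meet-more-precise} to turn each meet into evidence; and a diagonal coupling for the outer ascription in the $\oplus$ and let cases (Lemmas~\ref{lift-formula-defined} and~\ref{lift-formula-consistency}). For that last step, your explicit satisfying assignment, built from one concretization of $\pty$ and of the branch types, is sounder than the paper's justification. The paper's Lemmas~\ref{lift-formula-time} and~\ref{lift-formula-sum} assert formula equivalences that do not hold in general; only the one-directional implication you actually use is valid.

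There is, however, a step that fails, and it fails equally in the paper's proof. Rule (Elet) requires a \emph{single} target body $\tn$ that elaborates $\srcn$ under every environment $\Gamma, \sx : \sgtys[i]$. Your induction hypothesis only gives one elaboration per environment. Because elaboration is type-directed, these elaborations differ in general. For example, in $\src{\lett{x}{1 \pssum[\frac{1}{2}] \ttt}{\asc{x}{\?}}}$ the body elaborates to $\trg{\asc{\rtype\, \tx}{\?}}$ under $x : \rtype$ but to $\trg{\asc{\btype\, \tx}{\?}}$ under $x : \btype$. So this closed, well-typed term has no elaboration at all, and the existence claim fails for it. Your remark that the elaboration rules mirror the typing rules one for one is exactly where this is hidden. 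Closing the gap requires changing the (Elet) rule itself, not just a more careful proof.
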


\subsection{Type Safety and Gradual Guarantee}
\label{sec:properties}
We can now establish several properties about \glang, based on the elaboration to \tlang. \change{To this end, we start highlighting that even though our static language (\slang) is terminating, when introducing unknown types, one can encode statically well-typed programs that diverge, rendering our gradual languages (\glang and \tlang) non-terminating. The emblematic program illustrating this phenomenon is the omega term $\Omega = (\lambda x:?. x\;x ) (\lambda x: ? . x\;x)$.}

In the remainder of this section, for simplicity, given $\sm$ we write $\converge[\sm][\phty{\pphi}{\dset}]$ if 
$\ela{ |-d \sm}{\tm}{\gtya}$ \change{(for some $\gtya$ and $\tm$)} and \change{there exists $k$ such that} $ \rrctx[\cdot] \tm \nreds{}{\change{k}} \rctx[\phi] \V$ (for some $\phi$ and $\V\,$), and $\diverge[\sm]$ if 
$\ela{ |-d \sm}{\tm}{\gtya}$ \change{(for some $\gtya$ and $\tm$)} and \change{there exists no such $k$}. \change{In the former case we say that $\sm$ (similarly, $\tm$) \emph{terminates}, reducing to a distribution value, while in the latter case we say that  $\sm$ (similarly, $\tm$) \emph{diverges} (also denoted by $\rrctx[\cdot] \tm \Uparrow $). Formally, this terminology corresponds to the notion of \emph{certain} termination, where, intuitively, a program is considered terminating if there is a bound on the length of all its executions.

We note that probabilistic programs support more general notions of termination, such as \emph{almost-sure} termination, which intuitively allows diverging executions, but requires them to have an overall null probability. Support for reasoning about this class of programs is left as future work.}

\subsubsection*{Type Safety}
Following~\citep{lennonAl:toplas2022}, we model errors ($\errort{\gtys}$ and $\errort{\gtya}$) as expressions, thereby simplifying the statement of type safety, as we do not need to reason about error separately. Type safety for \glang then states that if a term $\sm$ is well-typed, then it either reduces to a distribution value of an equivalent type, or diverges:

\begin{restatable}[Type safety \change{for \glang}]{theorem}{typesafetyx}
  \label{lm:typesafety}
For every term $\sm$ and gradual distribution type $\sgtya$ from \glang, if $\empty |-d \sm : \sgtya$ then either
  \begin{enumerate}
      \item $  \converge[\sm][\rctx[\phi] \V ]$, $|- \rctx[\phi] \V : \gtya$ and $ \reoder{\gtya}{\liftD{\sgtya}}$ \change{for some $\phi$, $\V$ and $\gtya$}, or 
      \qquad \qquad \item $\diverge[\sm]$.
    \end{enumerate}
  \label{lemma:typesafety}
\end{restatable}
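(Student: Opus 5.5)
The plan is to reduce the statement to a type-safety result for \tlang and then transport it back to \glang via elaboration. First, from $\empty |-d \sm : \sgtya$, Theorem~\ref{theorem:ep} gives a target term $\tm$ with $\ela{|-d \sm}{\tm}{\sgtya}$ and $|-d \tm : \liftD{\sgtya}$. The dichotomy is then immediate: either some $k$ with $\rrctx[\cdot] \tm \nreds{}{k} \rctx[\phi]\V$ exists (case 1), or none does, which is exactly $\diverge[\sm]$ (case 2). So all the content lies in case 1. There I have to show that whenever a closed well-typed $\tm$ with $|-d \tm : \gtyb$ reduces to $\rctx[\phi]\V$, we get $|- \rctx[\phi]\V : \gtya$ for some $\gtya$ with $\reoder{\gtya}{\gtyb}$. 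Instantiating $\gtyb = \liftD{\sgtya}$ finishes the proof.

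I would prove this preservation lemma by induction on the derivation of $\nreds{}{k}$, with a simultaneous statement for values: ascribed values keep a type related by $\reoderSym$ to the original. I also need two auxiliary facts, both proved first.
\begin{itemize}
\item A substitution lemma that accounts for $\mathit{sub}$: substituting an error value yields $\errort{\gtya}$ at the right type, which is well-formed by Lemma~\ref{lemma:welltyped-wellform-target2}.
\item Compatibility of $\reoderSym$ with the metafunctions $\p\cdot\gtya$ and $\sugarconj{\phi}{\sum_i\gtya[i]}$. Formally, if $\reoder{\gtya[i]}{\gtyb[i]}$ for each $i$, then the combined types are again reorderings; the couplings are glued block-diagonally, relabelling indices as in the $\sum$ definition.
\end{itemize}

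The cases then go as follows.
\begin{itemize}
\item (Dv), (Derr), (D$+$), (Dit), (Dif) and (Dmon) are direct, using the identity coupling for reordering.
\item (D$\oplus$) and (Dlet) use the induction hypotheses on subterms together with the compatibility fact. For (Dlet), the induction hypothesis on $\tm$ gives outcome values typed $\gtys[i]'$ with $\reoder{\gtys[i]'}{\gtys[i]}$. Before applying the hypothesis to each branch, I need that reordered simple types can be substituted for the variable; I would establish this as a reordering-closure lemma for typing.
\item (Dapp) combines the induction hypothesis for the argument ascription with the substitution lemma; its typing is justified by Lemma~\ref{lemma:transinvariant}.
\item (D$::\gtys$) uses Lemma~\ref{lemma:transinvariant}: if $\ev[1]\trans{}\ev[2]$ is defined, it justifies $\gtys\rel\gtysp$, so the new ascription is well typed; otherwise the error value is typable at $\gtysp$.
\end{itemize}

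The main obstacle is (D$::\gtya$). By the induction hypothesis, $\tm$ yields $\V$ typed $\gtyap$ with $\reoder{\gtyap}{\gtya}$. Lemma~\ref{reorderevidence2} then makes $\initReorder{\gtyap}{\gtya}$ defined and a valid evidence, and Lemma~\ref{lemma:rmd} shows that the composite $(\initReorder{\gtyap}{\gtya})\trans{}\evd = \phty{\pphi[2]}{\d{\ev[k][\cww[k]]}}$ justifies $\gtyap\rel\gtyb$. Each $\ev[k]$ therefore justifies $\gtys[\projl{\cww[k]}]' \rel \gtysp[\projr{\cww[k]}]$, so each pushed ascription is well typed with type $\ds{\gtysp[\projr{\cww[k]}]^1}$.

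The result $\sum_k \cww[k]\cdot\V[k]$ then has a type that is a multiset indexed by $k$. Its reordering to $\gtyb$ is witnessed by the coupling that sends index $k$ to $\projr{\cww[k]}$ with weight $\cww[k]$. The marginal conditions of this coupling are exactly $\phi_R$ from rule (wd$\evd$), so they follow from the evidence formula. The delicate part is showing that the closing formula $\pphi[2]$ entails the formula of $\gtyb$ and keeps everything satisfiable. For this I would use the satisfiability side condition built into $\witnessname$, together with the well-formedness lemmas. If the transitivity fails, the error branch is typable at $\gtyb$ by (Gerr$_{\gtya}$).
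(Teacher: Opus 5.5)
Your proposal follows essentially the same route as the paper's proof. The paper also elaborates via Theorem~\ref{theorem:ep} to obtain $\vdash \tm : \liftD{\sgtya}$ and then argues by strong induction on the reduction, with a case analysis over the rules. It uses the substitution lemma (Lemma~\ref{subtyp}) and reorder transitivity (Lemma~\ref{rtrans}) for the let, choice and application cases, and Lemma~\ref{reorderevidence2} together with Lemma~\ref{lemma:rmd} to type the pushed ascriptions in the $(\mathit{D}\mathord{::}\gtya)$ case. The points you flag as delicate are left implicit in the paper's proof rather than handled by a different argument, so making them explicit only completes the same proof: closure of typing under reordered variable types in $(\mathit{Dlet})$, and joint satisfiability of the glued and composed formulas.
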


\begin{figure}[t]
    \def \MathparLineskip {\lineskip=1.4ex}  
  \begin{mathpar}
    { 
    \inference[\trg{(\mathord{\gprec}\V)}]{
  \forall\: \FV(\phi[][1]).~  \change{\phi[][1]}  \implies \exists~\FV(\phi[][2]) \cup
  \{\cww[ij] ~|~ i \in \iSet \land j \in \jSet\}. %
  \\
  \cjudgext{\d{\cww[ij]}[i \in \iSet \land j \in \jSet]}{\d{\tv[i][\p[i]]}[i \in \iSet]}{\, \gprec\,}{\d{\tv[j][\p[j]]}[j \in \jSet]}{\phi[][1]}{\phi[][2]}{}%
}
      {
       \phty{\phi[][1]}{\dt{\tv[i][\p[i]] ~|~ i \in \iSet}}   \gprec 
       \phty{\phi[][2]}{\dt{\tv[j][\p[j]] ~|~ j \in \jSet}}
      } 
    }
     \\ \and 
    \inference{}
    {
      \tx \gprec \tx
    } \and
    \inference{
    }
    {
      \ttr \gprec \ttr
    } \and
    \inference{
    }
    {
      \ttb \gprec \ttb
    } \and
    \inference{
      |-d \tm : \gtysp &  \gtys \gprec \gtysp
    }
    {
      \errort{\gtys} \gprec \tm
    } \and
    \inference{
      |-d \tm : \gtyb &   \gtya \gprec \gtyb
    }
    {
      \errort{\gtya} \gprec \tm
    } \and
    \inference{
      \gtys \gprec \gtysp &
      \tm \gprec \tmp & 
    }
    {
      \trg{(\lambda \tx:\gtys. \tm)} \gprec \trg{(\lambda \tx:\gtysp. \tmp)}
    } \and
    \inference{
    \ev \gprec \evp &
    \tv \gprec \tvp &
    \gtys \gprec \gtysp
    }
    {
     \trg{ \asc{\ev \tv}{\gtys} } \gprec \trg{\asc{\evp \tvp}{\gtysp}
    }} \and
    \inference{
    \evd \gprec \evdp &
    \tm \gprec \tn &
    \gtya \gprec \gtyb
    }{
      \trg{\asc{\evd \tm}{\gtya}} \gprec \trg{\asc{\evdp \tn}{\gtyb}}
    } \and
    
    \inference[\trg{(\mathord{\gprec}\oplus)}]{
       \tm  \gprec \tmp  &
      \tn \gprec \tnp
     & %
     \forall \FV(\phi[][1]). \phi[][1] => \phi[][2]
     }
    {
      \trg{\tm \ppsum[\phi[][1]] \tn}  \gprec 
      \trg{\tmp \ppsum[\phi[][2]] \tnp} 
    } \and
    \inference{
      \tv \gprec \tvp &
      \tw \gprec \twp &
    }
    {
      \tv \; \tw \gprec \tvp \; \twp 
    } \and
    \inference{
      \tm \gprec \tmp &
      \tn \gprec \tnp &
    }
    {
      \trg{\lett{\tx}{\tm}{\tn}}  \gprec \trg{\lett{\tx}{\tmp}{\tnp}} }
    \and
    \inference{
    \tv \gprec \tw &
    \tw \gprec \twp &
    }
    {
      \trg{\add{\tv}{\tw}} \gprec \trg{\add{\tvp}{\twp}}
    } \and
    \inference{
    \tv \gprec \tvp &
    \tm \gprec \tmp &
    \tn \gprec \tnp &
    }
    {
    \trg{\ite{\tv}{\tm}{\tn}} \gprec \trg{\ite{\tvp}{\tmp}{\tnp}}
   } 
    \end{mathpar}
  \caption{Term precision of \tlang.}
  \label{fig:term-precision}
  \end{figure}

  \subsubsection*{Dynamic Gradual Guarantee} To establish the dynamic gradual guarantee (DGG) for \glang, we start by establishing the DGG for \tlang. \change{This requires defining the notion of type and term precision for \tlang. Type precision is defined in the same way as for \glang (see Figure.~\ref{source-type-precision}).} Term precision is the natural lifting of type precision to terms, and is defined in Figure.~\ref{fig:term-precision}. 
  Rule $\trg{(\mathord{\gprec}\V)\,}$ relates two value distribution by lifting the precision relation on values to distributions via couplings, \change{similarly to type precision}. \change{Like in~\citep{newAl:popl2019,lennonAl:toplas2022}}, an $\errort{\gtys}$ is more precise than any term provided $\gtys$ is more precise than the term type. %
Rule $\trg{(\mathord{\gprec}\oplus)}$ relates two probabilistic choices if the corresponding subterms are in precision relation, and \change{most} importantly, the formula in the more precise probabilistic choice entails the formula in the less precise probabilistic choice. \change{The notion of entailment over \Formula is rather standard, and thus omitted. Finally, note that the \change{pair of} probabilistic choices share the same variable names after alpha renaming.} %
\change{To illustrate this rule, assume that we want to show that a probabilistic choice with a static probability $\frac{1}{2}$ is more precise than the one that we obtain replacing the static probability with $\?$. The rule application would then generate, as premise, the entailment $(\cww[1] = \frac{1}{2} \land \cww[2] = \frac{1}{2} \land \cww[1] + \cww[2] = 1) => (\cww[1] \in [0,1] \land \cww[2] \in [0,1] \land \cww[1] + \cww[2] = 1)$, with $\cww[1], \cww[1]$ universally quantified}. The \change{remaining} rules are standard.

\change{Having defined the notion of precision, the major pending challenge to establish the DGG is to prove that evidence combination is monotone with respect to imprecision:}

\begin{restatable}[Monotonicity of evidence \change{combination}]{lemma}{mono}
For all formula simple types $\ev[1], \ev[2], \ev[3], \ev[4] \in \GFType$ and all formula distribution types $\evd[1], \evd[2], \evd[3], \evd[4] \in \GFDType$,
  \begin{enumerate}
    \item $\ift{\ev[1] \gprec \ev[2] , \ev[3] \gprec \ev[4] ~\text{and}~ \ev[1] \trans{} \ev[3] ~ \sentence{is defined},}{\sentence{so is } \ev[2] \trans{} \ev[4] ~\text{and}~
      \ev[1] \trans{} \ev[3] \gprec \ev[2] \trans{} \ev[4]}$.
    \item $\ift{\evd[1] \gprec \evd[2] , \evd[3] \gprec \evd[4] ~\text{and}~ \evd[1] \trans{} \evd[3] ~ \sentence{is defined,}}{ \sentence{so is } \evd[2] \trans{} \evd[4] ~\text{and}~
      \evd[1] \trans{} \evd[3] \gprec \evd[2] \trans{} \evd[4]}$.
  \end{enumerate}
  \label{lemma:mono}
\end{restatable}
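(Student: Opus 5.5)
Both parts are proved together by mutual structural induction on the derivations of $\ev[1] \gprec \ev[2]$ and $\ev[3] \gprec \ev[4]$ (resp.\ of the distribution-level precision judgments), following the clauses of the meet operator, since $\trans{}$ coincides with $\meet$. Part~1 is the standard monotonicity of meet for simple types. If either $\ev[2]$ or $\ev[4]$ is $\?$, say $\ev[4] = \?$, then $\ev[2] \trans{} \ev[4] = \ev[2]$ is defined. By Lemma~\ref{meet-more-precise2}, $\ev[1] \meet \ev[3] \gprec \ev[1] \gprec \ev[2]$, and we conclude by transitivity of precision, which we establish as an auxiliary lemma. For base types the only surviving case is the trivial one. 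For function types, definedness of $\ev[1] \meet \ev[3]$ forces both to be arrows (or one of them to be $\?$, which is handled as above). Then $\ev[2],\ev[4]$ are arrows or $\?$, and we apply the induction hypothesis of part~1 to the domains and of part~2 to the codomains, closing with the arrow rule of Figure~\ref{source-type-precision}.

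Part~2 is the core. Write $\evd[1] = \phty{\phi[][1]}{\d{\gtys[i][\cww[i]]}[i\in\iSet]}$, $\evd[3] = \phty{\phi[][3]}{\d{\gtys[k][\cww[k]]}[k\in\kSet]}$, and similarly $\evd[2]$ over index set $\jSet$ and $\evd[4]$ over index set $\lSet$. Definedness of $\evd[1]\meet\evd[3]$ gives a satisfying assignment of $\phi[][1]\land\phi[][3]$ together with a coupling $\cww[ik]$ supported on pairs where $\gtys[i]\meet\gtys[k]$ is defined. The precision hypotheses then give, for every model of $\phi[][1]$ (resp.\ $\phi[][3]$), a model of $\phi[][2]$ and a coupling $\cww[ij]$ with $\cww[ij]>0 \Rightarrow \gtys[i]\gprec\gtys[j]$ (resp.\ $\cww[kl]$ for $\evd[3]$ and $\evd[4]$). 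To show that $\evd[2]\meet\evd[4]$ is defined, we instantiate these at the model obtained from definedness and build a coupling on $\jSet\times\lSet$ by composing through $\iSet\times\kSet$ via the usual gluing construction:
\[
\cww[jl] \;=\; \sum_{i,k} \frac{\cww[ij]\,\cww[ik]\,\cww[kl]}{\cww[i]\,\cww[k]},
\]
with terms of zero denominator omitted. The marginals of $\cww[jl]$ are those of $\evd[2]$ and $\evd[4]$ by construction. Whenever $\cww[jl]>0$, some $i,k$ contribute, and the induction hypothesis of part~1 applied to $\gtys[i]\gprec\gtys[j]$, $\gtys[k]\gprec\gtys[l]$, and the definedness of $\gtys[i]\meet\gtys[k]$ shows that $\gtys[j]\meet\gtys[l]$ is defined and that $\gtys[i]\meet\gtys[k]\gprec\gtys[j]\meet\gtys[l]$.

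It remains to prove $\evd[1]\meet\evd[3]\gprec\evd[2]\meet\evd[4]$ using the precision rule of Figure~\ref{source-type-precision}. This rule quantifies universally over models of the formula of $\evd[1]\meet\evd[3]$, that is, over all admissible values of $\cww[ik]$ together with the underlying $\phi[][1],\phi[][3]$ variables. For each such model we perform the same gluing as above: it yields values for the variables of $\evd[2]\meet\evd[4]$ (the $\phi[][2],\phi[][4]$ variables and the $\cww[jl]$). It also yields a precision coupling between the $(i,k)$ and $(j,l)$ entries, namely the four-index weight $\cww[ij]\cww[ik]\cww[kl]/(\cww[i]\cww[k])$, whose positive entries are justified by the part~1 inequality just obtained. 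Because this argument is carried out pointwise in the model, it respects the $\forall\exists$ quantifier structure.

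\textbf{Main obstacle.} The hardest part is the bookkeeping in the gluing step. The division by $\cww[i]\cww[k]$ has to be handled when the denominator is $0$. The two existentials for $\evd[1]\gprec\evd[2]$ and $\evd[3]\gprec\evd[4]$ must be chosen from the same model. Finally, the tag constraint $\phi[']$ on $\projl{}$ and $\projr{}$ must be checked for the fresh variables. I would first isolate a standalone lemma stating that couplings compose (the ``gluing lemma'') over symbolic distributions and preserve composite relations, and then derive both definedness and precision as instances of it.
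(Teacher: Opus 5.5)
Your proposal is correct and follows the paper's proof: the same split into definedness of $\evd[2] \trans{} \evd[4]$, witnessed by exactly the paper's glued weight $\sum_{i,k}\omega_{ij}\,\omega_{ik}\,\omega_{kl}/(\omega_i\,\omega_k)$, followed by a four-index coupling giving precision between the two meets. The only deviation is that last witness: you take the summand of the gluing, whose marginals are $\omega_{ik}$ and $\omega_{jl}$ immediately, whereas the paper (indexing $\evd[1],\evd[3]$ by $i,j$ and $\evd[2],\evd[4]$ by $i',j'$) uses $\omega_{ij}\,\omega_{i'j'}\,\omega_{ii'}\,\omega_{jj'}/(\alpha_i\,\alpha_j\,\alpha_{i'}\,\alpha_{j'})$, whose marginal computation silently drops $\omega_{i'j'}$ from the sum and fails in general (e.g.\ when both precision couplings are diagonal), so your choice is the one that actually goes through.
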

%

\iffalse
\paragraph*{Proof sketch.} \mt{Remove this??} For simple evidences, the proof proceeds by routine induction. For distribution evidences, the proof requires some coupling combinations. Assume  $\CC[12]^{\gprec}  |- \evd[1] \gprec \evd[2]$, $\CC[34]^{\gprec} |- \evd[3] \gprec \evd[4]$,  $\CC[13]^{\trans{}}  |- \evd[1] \trans{} \evd[3]$. To prove that $\evd[2] \trans{} \evd[4]$ is defined, from  $\CC[12]^{\gprec}$ we build a coupling  $\CC[21]^{\sqsupseteq} |- \evd[2] \sqsupseteq \evd[1]$ and then use $\CC[21]^{\sqsupseteq} \odot \CC[13]^{\trans{}}  \odot \CC[34]^{\gprec}$ as witness coupling, where the coupling composition operator $\odot$ is defined as:
%
\[
\CC[1] \odot \CC[2] = \ds{\cww[k_1k_2](i,j) ~|~ \cww[k_1k_2](i,j) = \frac{\cww[k_1](i,h)\cww[k_2](h,j)}{\sum\limits_{i,k_1 ~|~ \cww[k_1](i,h)}\cww[k_1](i,h)}, \cww[k_1](i,h) \in \CC[1], \cww[k_2](h,j) \in \CC[2]}
\]
%
To justify that $\evd[1] \trans{} \evd[3] \gprec \evd[2] \trans{} \evd[4]$, we start from coupling $\CC[13]^{\trans{}}$ and transform its first dimension (the one typically iterated by index $\iSet$) following the associations stated by coupling $\CC[12]^{\gprec}$ and its second dimension (typically iterated by index $\jSet$) following the associations stated by coupling $\CC[34]^{\gprec}$.
\fi

Now we can establish the DGG for \tlang: reduction is monotone with respect to imprecision.

\begin{restatable}[Dynamic gradual guarantee for \tlang]{theorem}{target-dgg}
  For all terms $\tm, \tn$ and all formula distribution types $\gtya, \gtyb$ from \tlang  such that
  $\tm \gprec  \tn$, $\change{ |-d \tm : \gtya }$ and $\change{ |-d \tn : \gtyb}$,
  \begin{enumerate}
    \item \change{If} $\,\tm \nreds{}{k_1}\!\rctx[\phi[][\change{1}]] \dset[1]$\change{,} 
    then there exist $\phi[][2]$ and $\dset[2]$ such that $\tn \nreds{}{\change{\change{k_2}}}\! \rctx[\phi[][2]] \change{\dset[2]}$ and $\phty{\phi[][\change{1}]}{\dset[1]} \gprec \phty{\phi[][\change{2}]}{\dset[\change{2}]}$. 
    \qquad \item 
    \change{If} $\,\rrctx[\phi[][1]] \tm \Uparrow$\change{,} then
    $\rrctx[\phi[][2]] \tn \Uparrow$.
   \end{enumerate}
  \label{theorem:dgg}
\end{restatable}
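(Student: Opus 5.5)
The proof of part (1) goes by induction on the step index $k_1$ of the derivation $\tm \nreds{}{k_1} \rctx[\phi[][1]] \dset[1]$, with an inner case analysis on the last rule used. The analysis is also guided by the shape of the precision derivation $\tm \gprec \tn$. Throughout, the induction hypothesis is strengthened to open substitutions: it is stated for $\tm[][\tv/\tx] \gprec \tn[][\tvp/\tx]$ whenever $\tv \gprec \tvp$. This requires a substitution lemma for precision, which also covers the auxiliary function $\mathit{sub}$ in the error case. The base cases are routine.
\begin{itemize}
\item $\trg{(Dv)}$: take the Dirac coupling.
\item $\trg{(Derr)}$: here $\tm = \errort{\gtya}$, so the precision rule for errors gives $\gtya \gprec \gtyb$, the type of $\tn$. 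We then need that a well-typed $\tn$ reduces to something whose value distribution is less precise than the distribution of errors. This is not immediate. I would handle it with an auxiliary lemma: $\errort{\gtya} \gprec \tn$ is preserved under reduction of $\tn$. Its proof uses type safety (Theorem~\ref{lemma:typesafety}) for the target, together with the observation that an error value is $\gprec$ any value of less precise type. When $\tn$ diverges, part (1) is vacuous for the right side only if we read it as "if $\tn$ terminates". I would check the intended statement and, if needed, restrict this case to the terminating one.
\item $\trg{(Dmon)}$: immediate from the induction hypothesis.
\end{itemize}

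The structural cases are $\trg{(Dlet)}$, $\trg{(D\oplus)}$, $\trg{(Dit)}$/$\trg{(Dif)}$ and $\trg{(Dapp)}$. For each, we invert $\tm \gprec \tn$, apply the induction hypothesis to the premises, and reassemble.
\begin{itemize}
\item $\trg{(D\oplus)}$: the induction hypothesis yields couplings witnessing $\V[1] \gprec \Vp[1]$ and $\V[2] \gprec \Vp[2]$. Scaling them by $\p[1],\p[2]$ and taking their union gives a coupling for the weighted sums. The entailment $\phi[][1] \Rightarrow \phi[][2]$ from rule $\trg{(\mathord{\gprec}\oplus)}$ supplies the quantifier pattern $\forall\FV(\phi[][1]).\,\phi[][1] \Rightarrow \exists\ldots$ demanded by $\trg{(\mathord{\gprec}\V)}$.
\item $\trg{(Dlet)}$: this is similar, but the coupling $\CC$ between the outcomes $\tv[i]$ of $\tm$ and the outcomes $\tvp[j]$ of $\tn$ has to be composed with the couplings obtained from the induction hypothesis on each pair $\tn[][\tv[i]/\tx] \gprec \tnp[][\tvp[j]/\tx]$ for which $\cww[ij]>0$. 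The composed weights are $\cww[ij]\cdot\omega^{ij}_{kl}$, and we sum them. Both $\trg{(Dlet)}$ and $\trg{(D\oplus)}$ therefore need a small lemma stating that precision of value distributions is closed under convex combinations and under this "bind" construction.
\item $\trg{(Dapp)}$: invert the ascriptions on the function value. Lemma~\ref{lemma:mono} makes the $\cdom$/$\ccod$ evidences on the right less precise than those on the left, and the case then reduces to the ascription case and the induction hypothesis.
\item $\trg{(D+)}$ and $\trg{(D::\gtys)}$: use Lemma~\ref{lemma:mono}. If $\ev[1]\trans{}\ev[2]$ is defined, then so is the right-hand combination, and it is less precise. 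If the left side errors, the result holds via the error precision rule.
\end{itemize}

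The main obstacle is $\trg{(D::\gtya)}$. On the left, the evidence $(\initReorder{\gtyap}{\gtya})\trans{}\evd$ is a distribution evidence $\phty{\pphi[2]}{\d{\ev[k][\cww[k]]}}$. Its weights index pushes of $\ev[k]$ into the outcome $\tv[\projl{\cww[k]}]$ at target type $\gtysp[\projr{\cww[k]}]$.

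The first step applies the induction hypothesis to the inner term, which gives outcome distributions related by a coupling $\CC_v$. Next, we show that reordering evidence is monotone: if $\gtyap_1 \gprec \gtyap_2$ via $\CC_v$ and $\gtya_1 \gprec \gtya_2$, then $\initReorder{\gtyap_1}{\gtya_1} \gprec \initReorder{\gtyap_2}{\gtya_2}$. This is proved by transporting witness couplings, as in the proof of Lemma~\ref{lemma:mono}. Lemma~\ref{lemma:mono} then makes the right-hand combined evidence defined and less precise, via some coupling $\CC_e$ between the indices $k$ and $k'$.

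The delicate point is compatibility. The tags $\projl{\cdot}$ of $k$ and $k'$ must be related through $\CC_v$, and the tags $\projr{\cdot}$ must be related through the precision of the ascribed types. This requires choosing $\CC_e$ so that it factors through $\CC_v$ and the type coupling. I would obtain such a $\CC_e$ by the coupling-composition construction $\CC_1\odot\CC_2$ with conditional weights $\cww[k_1](i,h)\cww[k_2](h,j)/\sum_i \cww[k_1](i,h)$, and verify the marginals. Once $\CC_e$ is fixed, each pair of pushed ascriptions reduces in one step with related results, by the $\trg{(D::\gtys)}$ case. The weighted sums are then related by the bind lemma. The error branch on the left is handled by the error precision rule.

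Part (2) follows from part (1) by contraposition. If $\tn$ terminated in some $k_2$ steps, the symmetric "backward" direction of the same simulation would bound the reductions of $\tm$. Concretely, I would prove alongside part (1) that if $\tn \nreds{}{k} \ldots$ then $\tm$ either terminates or its derivations are bounded by $k$. Every rule of $\tm$ matches a rule of $\tn$ except when $\tm$ raises an error earlier, and an error only shortens the derivation. So divergence of $\tm$ forces divergence of $\tn$.
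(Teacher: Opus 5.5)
\textbf{Part (1).} Your plan is essentially the paper's. It is induction on the step index with case analysis on the precision derivation, supported by four ingredients:
\begin{enumerate}
\item a substitution lemma for precision;
\item Lemma~\ref{lemma:mono} for every evidence combination;
\item a composition lemma for weighted sums of related distributions, used in \trg{(\mathit{Dlet})} and \trg{(\mathit{D}\oplus)};
\item for \trg{(\mathit{D}\mathord{::}\gtya)}, monotonicity of the reordering evidence followed by coupling composition.
\end{enumerate}
You are more explicit than the paper about making the index couplings compatible. Two points need correction.

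\textbf{The \trg{(\mathit{Derr})} case.} Your worry here is justified, and re-reading the statement will not settle it: part (1) as stated is false. Let $\tn$ be the elaborated divergent term $\Omega$ with $\vdash \tn : \gtyb$. Then $\errort{\gtyb} \gprec \tn$ by the error rule of term precision, and $\errort{\gtyb}$ terminates in one step by \trg{(\mathit{Derr})}, yet $\tn$ diverges. The paper's proof of this case only records the typing of the error and never looks at how $\tn$ reduces.

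The failure is also not confined to a top-level error, so restricting only this case will not work. Through $\mathit{sub}$, an error outcome on the left turns the body in \trg{(\mathit{Dlet})} or \trg{(\mathit{Dapp})} into $\errort{\gtya}$, while the right side continues with a genuine value. For example, take a let whose bound term performs a cast that fails only on the more precise side, followed by a divergent body. So elaborations of source programs already give counterexamples.

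The fix is to add to the theorem the hypothesis that $\tn$ terminates (or a conclusion that exempts error outcomes on the left), and to carry it through the induction. With that hypothesis, every premise derivation on the right exists. Your auxiliary lemma (type safety plus errors being least precise) then closes the \trg{(\mathit{Derr})} case.

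\textbf{Part (2).} It does not follow from part (1) by contraposition. The contrapositive of (1) says that $\tn \Uparrow$ implies $\tm \Uparrow$, which is the converse of (2). What you need is the backward statement that termination of $\tn$ implies termination of $\tm$; its contrapositive is (2). You do go on to propose exactly this, and the paper proves it as a separate theorem by strong induction on the step index of $\tn$'s derivation.

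That lemma is not just the mirror image of (1). In the \trg{(\mathit{Dlet})} case you must pair each outcome of the left bound term with a related outcome on the right before the induction hypothesis applies to the substituted bodies. \trg{(\mathit{Dlet})} evaluates the body on every outcome, including those of weight zero, but the coupling supplied by (1) only pairs outcomes of positive weight. You therefore need a finer, outcome-by-outcome correspondence there. This is another reason to prove the two directions together, as you suggest.
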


The DGG for \glang is given by first elaborating the source terms to \tlang and then reducing the \tlang terms.

\begin{restatable}[Dynamic gradual guarantee for \glang]{theorem}{source-gg}
For all terms $\sm, \srcn$ and all formula distribution types $\sgtya, \sgtyb$ from \glang such that $\sm \gprec \srcn$, $ |-d \sm : \sgtya$ and $|-d \srcn : \sgtyb$,
  \begin{enumerate}
    \item\label{theorem:DGG1} $ \ift{\, \converge[\sm][\phty{\pphi[1]}{\dset[1]}],}{\text{there exist } \phi[][2] \text{ and } \dset[2] \text{ such that } \converge[\srcn][\phty{\pphi[2]}{\dset[2]}] 
    \ad \phty{\pphi[1]}{\dset[1]} \gprec \phty{\pphi[2]}{\dset[2]}}$. 
    \item\label{theorem:DGG2} $\ift{\, \diverge[\sm],}{\diverge[\srcn]}$.
  \end{enumerate}
  \label{theorem:gg}
\end{restatable}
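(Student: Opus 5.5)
The plan is to reduce the statement to the dynamic gradual guarantee for \tlang (Theorem~\ref{theorem:dgg}), using a precision-preservation property of elaboration as the bridge. Concretely, I would first prove an auxiliary lemma. If $\sm \gprec \srcn$, $\elaborate{\Gamma |-d \sm}{\tm}{\sgtya}$ and $\elaborate{\Gamma' |-d \srcn}{\tn}{\sgtyb}$ with $\Gamma \gprec \Gamma'$ pointwise, then $\tm \gprec \tn$ in the sense of Figure~\ref{fig:term-precision}. The analogous statement holds for values. The proof goes by induction on the derivation of $\sm \gprec \srcn$, inverting the elaboration rules, which are syntax directed, so both sides use the same rule. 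Existence of the two elaborations is given by Theorem~\ref{theorem:ep}, and the Static Gradual Guarantee (Theorem~\ref{theorem:sgg}) supplies $\sgtya \gprec \sgtyb$ for the intermediate types.

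The cases of this lemma are mostly routine. Lifting $\liftT{\cdot}$ and $\liftD{\cdot}$ is monotone for precision, since it preserves the quantifier structure of Figure~\ref{source-type-precision}. Once that is known, each inserted ascription $\trg{\asc{\ev \tv}{\gtys}}$ or $\trg{\asc{\evd \tm}{\gtya}}$ needs its evidence to be related by precision. That evidence is always a meet of lifted types, e.g. $\liftD{\sgtya} \meet \liftD{\sgtyb}$. So I would invoke Lemma~\ref{lemma:mono} (meet is $\trans{}$), which gives monotonicity of meet: if the more precise meet is defined, so is the less precise one, and the results are in precision. For rule $\src{(E\oplus)}$ I must additionally discharge the entailment premise of $\trg{(\mathord{\gprec}\oplus)}$. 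After alpha-renaming the fresh $\cww[1],\cww[2]$ to coincide, this reduces to $\liftP{\pty}[\cww[1]] \implies \liftP{\ptyp}[\cww[1]]$ whenever $\pty \gprec \ptyp$. That is immediate: either the two probabilities are equal, or the right one is $\?$ and yields $\cww[1]\in[0,1]$. The same holds for $1-\pty$. Rules $\src{(Elet)}$, $\src{(Eapp)}$, $\src{(E+)}$ and $\src{(Eif)}$ are handled by combining the induction hypotheses with the same meet-monotonicity argument.

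With the lemma in hand, the theorem follows directly. Suppose $\converge[\sm][\phty{\pphi[1]}{\dset[1]}]$, witnessed by $\elaborate{\sm}{\tm}{\sgtya}$ and $\tm \nreds{}{k} \rctx[\phi[][1]] \dset[1]$. By Theorem~\ref{theorem:ep} we have $\elaborate{\srcn}{\tn}{\sgtyb}$ with $|-d \tm : \liftD{\sgtya}$ and $|-d \tn : \liftD{\sgtyb}$, and the lemma gives $\tm \gprec \tn$. Part~1 of Theorem~\ref{theorem:dgg} then yields $\tn \nreds{}{k_2} \rctx[\phi[][2]] \dset[2]$ with $\phty{\phi[][1]}{\dset[1]} \gprec \phty{\phi[][2]}{\dset[2]}$, i.e. $\converge[\srcn][\phty{\pphi[2]}{\dset[2]}]$. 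For item~\ref{theorem:DGG2}, $\diverge[\sm]$ means $\tm$ diverges, and part~2 of Theorem~\ref{theorem:dgg} gives divergence of $\tn$, i.e. $\diverge[\srcn]$. Elaboration is deterministic up to the choice of fresh variables, so the choice of $\tn$ is immaterial.

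The main obstacle is the meet-monotonicity step for distribution evidences in the precision-of-elaboration lemma. This covers in particular rules $\src{(E\oplus)}$ and $\src{(Elet)}$, whose left operand $\jtf{\phi}{(\cww[1]\cdot\evd[1]+\cww[2]\cdot\evd[2])}$ is built with fresh tagged variables and scaling. There I must first show that scaling and the formula-sum of Figure~\ref{fig:target-type-system} are monotone w.r.t.\ precision, and only then apply Lemma~\ref{lemma:mono}. I would attempt it by constructing the precision witness coupling explicitly: pair each scaled weight $\cww[i] = \p\cdot\p[i]$ on the precise side with the corresponding weight on the imprecise side, reusing the coupling from the induction hypothesis multiplied by $\p$. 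The existential quantifier over the imprecise side's variables would be instantiated by the same values, which is justified by the entailment $\liftP{\pty}\implies\liftP{\ptyp}$ established above.
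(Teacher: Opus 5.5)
Your proposal is correct and takes the same route as the paper. The paper elaborates both terms (Lemma~\ref{ep}) and proves that elaboration preserves precision (Lemma~\ref{epp}), by induction on $\sm \gprec \srcn$ using monotonicity of lifting, the static gradual guarantee, the entailment $\liftP{\pty}[\cww] \implies \liftP{\ptyp}[\cww]$, and monotonicity of consistent transitivity; it then concludes with the dynamic gradual guarantee for \tlang\ (Theorems~\ref{dgga} and~\ref{dggb}), and your explicit monotonicity argument for the scaled and summed evidences in $\src{(E\oplus)}$ and $\src{(Elet)}$ spells out a step the paper only asserts.
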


\begin{figure}[t]
  \change{
    \begin{align*}
      \rlV{\btype} &= \{(b, \trg{\asc{\ev \ttb}{\btype}}) \in \Atom{\btype}  ~|~ b = \ttb \} \\
      \rlV{\rtype} &= \{(r, \trg{\asc{\ev \ttr }{\rtype}}) \in \Atom{\rtype} ~|~ r = \ttr \} \\
      \rlV{\functype{\tys}{\tya}} &= \{(\static{v_1}, \trg{v_2}) \in \Atom{\functype{\tys}{\tya}} ~|~ 
      \forall (\static{v'_1}, \trg{v'_2}) \in \rlV{\tys}.\ 
      (\static{v_1 \; v'_1}, \trg{v_2 \; v'_2}) \in \rlT{\tya} \} \\
       \rlV{\tya} &= \{ (\sV,\V) ~|~ 
       \sV = \d{\pt{\static{v_i}}[\ps[i]]}[i \in \iSet] \land 
       \V = \d{\trg{v^{\ps[j]}_j}}[j \in \jSet] \land
       \exists \evd = \d{\tys[k][\cww[k]]}[k \in \kSet].\ 
       \\
       & (\evd, \sV, \V) \in \Atom{\tya} \land  \forall \cww[k]>0, i = \projl{\cww[k]}, j = \projr{\cww[k]}.\ (\static{v_{i}}, \trg{v_{j}}) \in \rlV{\tys[k]}
        \}  \\
      \rlT{\tya} &= \{ (m_1,\tm[2]) ~|~ 
      m_1 \snreds{*} \sV[1] \land
      \tm[2] \jreds[*][] \V[2] \land  (\sV[1], \V[2]) \in \rlV{\tya} \}\\[1ex]
       \rlG{\bga, x : \tys} &= \{ \gamma [(v,\tvp)/x] \;|\; \ga \in \rlG{\Gamma}
       \land (v,\tvp) \in \rlV{\tys}   \} \qquad \rlG{\cdot} = \{ \emptyset \} \\[1ex]
       \Atom{\tys} &= \{  (v,\tvp) \;|\; |-ss v : \tys \land |-t \tvp : \tys    \} \\[-0.5ex]
       \Atom{\tya} &= \{  (\evd, \sV, \V) \;|\; |-ss \sV : \tya[1] \land |-t \V : \tya[2] \land \jtf{\evd}{\reoder{\tya[1]}{\tya[2]}} \land \reoder{\tya}{\tya[1]} \land \reoder{\tya}{\tya[2]} \} \\[1ex]
       \rappro{\Gamma}{m_1}{\tm[2]}{\tya}
     & \iff 
     \forall (\gamma_1, \gamma_2) \in \rlG{\Gamma}.\ 
      (\gamma_1(m_1), \gamma_2(\tm[2])) \in \rlT{\tya}
      \end{align*}%
      }
      \caption{Logical relation between \slang and \tlang.}
    \label{fig:lr2}
    \end{figure}  

\subsubsection*{Conservative extension of the dynamic semantics}
In Section~\ref{sec:source}, we establish the equivalence between the static semantics of \slang and \glang for fully-statically-annotated terms. 
\change{Here---due to the syntactic differences between both languages---to} establish the equivalence between the dynamic semantics, we use logical relations.
The logical relation between \slang and \tlang is presented in \change{Figure~\ref{fig:lr2}}, and states that two related terms reduce to related distributions. 

Formally, it is defined using three mutually-defined interpretations: one for values ($\rlV{\tys}$), one for distribution values ($\rlV{\tya}$), and another one for terms or computations ($\rlT{\tya}$).

We write $(v_1, \tv[2]) \in \rlV{\tys}$ to denote that values $v_1$ and $\tv[2]$ are related at simple type $\tys$.
Two values are related at type $\tys$ if, first, they type check to $\tys$, written $(v,\tv) \in \Atom{\tys}$.  
Two booleans (resp.\ real numbers) are related when the underlying values are the same.
Two functions are related if their applications to related arguments yield related computations.

Two distribution values $\sV, \V$ are related at a distribution type $\tya$ if, first, there exists a distribution evidence (of fully-static types) $\evd$ that justifies that their types\footnote{\change{The type rule for $\sV$ is defined analogously to $\V$, and can be found in the supplementary material.}} are equivalent to (\ie a reorder of) $\tya$, written $(\evd, \sV, \V) \in \Atom{\tya}$.\footnote{In \tlang, as the lifting of static types $\tya$ always yields distribution types with one-to-one equality formulas, \eg 
$\dt{ \rtype^{\frac{1}{2}}, \btype^{\frac{1}{2}} }$
is lifted as $(\cww[1] = \frac{1}{2} \land \cww[2] = \frac{1}{2}) 
|- \dt{ \rtype^{\cww[1]}, \btype^{\cww[2]} }$, for simplicity, to avoid writing variables, in the rule definition we annotate probabilities as numbers instead (\ie $\pt{\static{v_i}}[\ps[i]]$ and $\trg{v^{\ps[j]}_j}$).}
Second, for all positive probabilities in the evidence (the coupling), the corresponding values must be related at the corresponding type.

Two computations are related if both reduce to related distribution values.
Two value substitutions are related at some type environment, if every variable in the domain of the environment is bound to related values.
Finally, two open terms are related if the substitution to any two related value environment yield related computations.

We can now establish the conservative extension of the dynamic semantics
of  \tlang with respect to \slang for fully-annotated terms.

\begin{restatable}[\change{Dynamic conservative extension of \tlang \wrt \slang}]{theorem}{dyneq}\label{tm:dynamic-cons-ext}
For every term $\m$, simple type $\tys \in \Type$ and distribution type $\tya \in \DType$ from \slang, and every term $\tmp$ from \tlang,
\begin{enumerate}
     \item If $|-ss m : \tys, m \leadsto \tmp  : \tys $, then $\rappro{}{m}{\tmp}{\tys}$. 
     \item If $|-ss m : \tys, m \leadsto \tmp : \tya $, then $\rappro{}{m}{\tmp}{\tya}$. 
  \end{enumerate}
  \label{theorem:dyneq} 
\end{restatable}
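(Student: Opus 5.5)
We prove the statement as the fundamental property of the logical relation of Figure~\ref{fig:lr2}, restricted to terms produced by elaboration. Concretely, we prove the stronger open version: if $\Gamma |-ss \m : \tya$ and $\m$ (read as a \glang term) elaborates to $\tmp$ under $\Gamma$, then $\rappro{\Gamma}{\m}{\tmp}{\tya}$, and similarly for values at simple types. We go by induction on the typing derivation of $\m$, or equivalently on the elaboration derivation. Theorem~\ref{theorem:staticeq} lets us move freely between the \slang and \glang derivations. Lemma~\ref{couplingeq} lets us treat semantic equality $=_s$ as coupling-lifted equality, which is what $\rlV{\tya}$ and $\Atom{\tya}$ use.

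Two preliminary lemmas come first. The first is a \emph{static evidence} lemma: when all types are fully static, every meet in Figure~\ref{fig:elaboration} is defined. Its result is a formula type whose formula pins all variables to concrete probabilities, up to couplings, and which justifies reordering of the involved types. Hence every ascription step (D$\mathord{::}\gtys$) or (D$\mathord{::}\gtya$) on related inputs never produces $\errort{}$. Along the way, consistent transitivity of two static evidences is again defined, because on static types $\sim$ coincides with $=$. The second is a \emph{compositionality} lemma for $\rlV{\tya}$. It says relatedness is closed under reordering of the distribution type, under weighted sums $\ps\cdot\sV[1]+(1-\ps)\cdot\sV[2]$ versus $\p[1]\cdot\V[1]+\p[2]\cdot\V[2]$, and under $\sum_i \ps[i]\cdot\sV[i]$. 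Each is proved by gluing the witnessing couplings $\evd$ blockwise, scaling the weights and reindexing the tags as in the $\sugarconj{\phi}{\sum}$ metafunction. A monotonicity lemma for the step index (Dmon) is also needed, but it is routine.

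With these in hand, the cases of the main induction are mostly direct. Constants hold since $\ev=\rtype\meet\rtype$. For $\lambda$, we use the induction hypothesis for the body with an extended $\rlG{}$ and unfold $\rlV{\functype{\tys}{\tya}}$. For application, the (Eapp) elaboration inserts two lets and ascriptions. Static evidence guarantees that these reduce to the same ascribed values, after which (Dapp) substitutes, and we conclude by the function case of $\rlV{}$ for the related values $\gamma_1(v),\gamma_2(\tv)$. The cases for $\oplus$ and let come from the induction hypotheses plus the compositionality lemma. For let, we must handle the fact that the $\sV[i]$ on the \tlang side are indexed by the outcomes of $\tm$, which are related only through a coupling $\evd$. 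So we apply the induction hypothesis to $\n$ for each coupled pair $(v_i,\tv[j])$ with positive weight, and then recombine with the weights $\cww[k]$. Conditionals and addition proceed by case analysis on the related boolean or real values. For ascriptions $\asc{\m}{\tya}$, the (D$\mathord{::}\gtya$) rule splits the result along $(\initReorder{\gtyap}{\gtya})\trans{}\evd$. We show that this composed coupling, composed with the coupling from the induction hypothesis, yields a witness in $\Atom{\tya}$.

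The main obstacle is exactly these coupling-composition steps in the let and term-ascription cases. There we must build a new static evidence $\evd$ from several couplings, namely the one relating the outcomes, the reordering evidence, and the ascription evidence. We must then check the marginal conditions and that positive-weight pairs stay related. I would handle this with a general gluing lemma: given couplings $C_1$ between $A$ and $B$ and $C_2$ between $B$ and $C$, the composite $c_{ik}=\sum_j c_{1,ij}c_{2,jk}/q_j$ is a coupling between $A$ and $C$ whose support lies in the relational composition. This lemma, together with the fact that relatedness at static types is invariant under $=$, should discharge both cases.
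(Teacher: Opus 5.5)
Your overall architecture matches the paper's. The statement is the closed instance of a fundamental property (Theorem~\ref{fundamentalprop}), proved by induction on the typing derivation with one compatibility argument per construct. It is supported by a lemma that compositions of static evidences are always defined (Lemma~\ref{static-composition-defined}, whose witness is essentially your glued coupling) and by closure of $\rlV{\cdot}$ under weighted sums (Lemma~\ref{setlogic}).

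What is missing is the lemma on which every one of the paper's compatibility cases rests: the Ascription Lemma (Lemma~\ref{lemma:ascription}). It says that if $(v,\tv)\in\rlV{\tys[1]}$ and $\ev \vdash \tys[1]\sim\tys[2]$, then reducing $\asc{\ev\tv}{\tys[2]}$ yields a value still related to $v$ at $\tys[2]$, and analogously for distribution values. You treat ascriptions only as a possible source of $\errort{}$ and of coupling bookkeeping. But in \tlang the behaviour of a function value depends on its evidence, since (Dapp) ascribes the argument with $\cdom(\ev)$ and the body with $\ccod(\ev)$. When (D$\mathord{::}\gtys$) replaces $\asc{\ev[0]\tu}{\tys[1]}$ by $\asc{(\ev[0]\trans{}\ev[1])\tu}{\tys[2]}$, you get a different target function, and its relatedness to the source function must be re-proved. ``Relatedness is invariant under $=$'' only changes the type index, not the target term, so it cannot replace this.

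The gap shows up in three places of your plan.
\begin{itemize}
\item \emph{Application.} Write $\gamma_2(\tw)=\asc{\evp\tu}{\tysp}$ and $\gamma_2(\tv)=\asc{\evpp\tup}{\tys}$. The let-bound $x$ and $y$ then carry the evidences $\evp\trans{}\ev[1]$ and $\evpp\trans{}\ev[2]$; they are not ``the same ascribed values''. Even $\ev[2]$, the meet of two liftings of $\tys$, is not syntactically $\tys$ once the codomain is a distribution type, because $\meet$ introduces fresh coupling variables. So the function clause of $\rlV{\tys}$ for $(\gamma_1(v),\gamma_2(\tv))$ does not apply as stated.
\item \emph{$\lambda$.} The argument reaches the body only after being ascribed with $\cdom(\ev)$. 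Your extended substitution is in $\rlG{\cdot}$ only if that ascription preserves relatedness, and the body's result is then pushed through $\ccod(\ev)$.
\item \emph{(D$\mathord{::}\gtya$).} The glued coupling gives the right marginals, but you must still show that each pushed value $\asc{\ev[k]\tv[i]}{\tys[j]}$ stays related to $v_i$. At function types this is again the missing lemma.
\end{itemize}

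The fix is to prove this lemma by induction on the static type before the main induction. The function case unfolds (Dapp) on both sides. It uses the hypothesis contravariantly at the domain and the distribution version at the codomain. It also needs associativity of $\trans{}$ (Lemma~\ref{lemma:associativity}). Re-ascribing and then applying the old function gives argument evidence $(\evp\trans{}\cdom(\ev[1]))\trans{}\cdom(\ev[0])$, while the new function uses $\evp\trans{}\cdom(\ev[0]\trans{}\ev[1])$, and associativity identifies the two. Your gluing lemma belongs in the distribution part of this lemma.
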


\change{
The proof of Theorem~\ref{tm:dynamic-cons-ext} relies on the fact that the composition of static evidences is always defined:

\begin{lemma}\label{static-composition-defined2}
For all simple types $\tys[1], \tys[2], \tys[3] \in \Type$ and distribution type $\tya[1],\tya[2],\tya[3] \in \DType$ from \slang, and all formula simple types $\ev[1], \ev[2] \in \GFType$ and formula distribution types $\evd[1], \evd[2] \in \GFDType$ from \tlang, 
  \begin{enumerate}
    \item 
    $\ift{\ev[1] |- \tys[1] \rel \tys[2] \ad 
    \ev[2] |- \tys[2] \rel \tys[3], }{ \ev[1] \trans{} \ev[2] }$ is defined, 
    $\ad \ev[1] \trans{} \ev[2] |- \tys[1] \rel \tys[3]$.
    \item  
    $\ift{ \evd[1] |- \tya[1] \rel \tya[2] \ad 
    \evd[2] |- \tya[2] \rel \tya[3], }{\evd[1] \trans{} \evd[2]}$ is defined,
    $\ad \evd[1] \trans{} \evd[2] |- \tya[1] \rel \tya[3]$.
  \end{enumerate}
\end{lemma}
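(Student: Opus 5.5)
The plan is to prove the lemma by mutual structural induction on the static types $\tys[2]$ and $\tya[2]$ (equivalently, on the derivations of the two evidence judgments). The second conclusion, $\ev[1] \trans{} \ev[2] |- \tys[1] \rel \tys[3]$ and its distribution analogue, follows immediately from Lemma~\ref{lemma:transinvariant} once definedness is established. So the real content is definedness of the meet.

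The key observation is that an evidence justifying consistency between \emph{static} types is itself precision-below a static type. From $\ev[1] \gprec \tys[2]$ and $\ev[2] \gprec \tys[2]$ with $\tys[2]$ static (no $\?$ anywhere), I expect to show by induction that $\ev[1]$ and $\ev[2]$ are both static types semantically equal to $\tys[2]$. Here "static" means formula types with no $\?$ and formulas pinning down a unique solution up to the coupling.

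\textbf{Simple types.} For base types, precision forces $\ev[1] = \ev[2] = \rtype$ (or $\btype$), and the meet is defined. For function types, $\ev[1] = \tys -> \evd$ and $\ev[2] = \tysp -> \evdp$ with componentwise precision with respect to $\tys[2] = \tys -> \tya$. The meet is then defined by the induction hypothesis on the domain, together with the distribution case (part 2) on the codomain.

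\textbf{Distribution types.} Unfolding precision as in Fig.~\ref{source-type-precision}, every solution of the formula of $\evd[1]$ extends to a coupling $\CC[1]$ between $\evd[1]$ and $\tya[2]$ witnessing $\gprec$. Similarly, every solution of the formula of $\evd[2]$ extends to a coupling $\CC[2]$ between $\evd[2]$ and $\tya[2]$. Both evidences are well-formed, so their formulas are satisfiable. Fix a solution for each, and the corresponding couplings $\CC[1]$, $\CC[2]$ whose support pairs components related by $\gprec$ with the static simple types of $\tya[2]$.

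I then glue the two couplings through $\tya[2]$ using the standard composition of couplings: for indices $i$ of $\evd[1]$ and $j$ of $\evd[2]$, set
\[
\cww[ij] = \sum_h \frac{\CC[1](i,h)\,\CC[2](h,j)}{\tya[2]_h},
\]
where $\tya[2]_h$ is the probability of the $h$-th component of $\tya[2]$ and terms with $\tya[2]_h = 0$ are dropped. The marginals of this matrix are exactly the probabilities of $\evd[1]$ and $\evd[2]$. Whenever $\cww[ij] > 0$, some $h$ has $\evd[1]_i \gprec \tys_h$ and $\evd[2]_j \gprec \tys_h$. By the induction hypothesis on the strictly smaller static simple type $\tys_h$, the meet $\evd[1]_i \meet \evd[2]_j$ is then defined. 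This exhibits an assignment satisfying the existential formula required by $\witness{\evd[1]}{\evd[2]}{\meet}$, including the tag constraints, which are just index bookkeeping. Hence $\evd[1] \trans{} \evd[2]$ is defined.

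The main obstacle is the distribution case. Two points need care there. First, the relation $R$ used in $\witnessname$ requires $(\gtys[i],\gtysp[j]) \in \dom(\meet)$ only for the positive-weight pairs, and zero-weight pairs can be placed off-support, so this causes no difficulty. Second, the induction measure must strictly decrease when passing from $\tya[2]$ to its components $\tys_h$; I would use the size of $\tya[2]$ as the measure. I expect the coupling-composition construction to be the step needing the most verification: marginal equations, handling of zero denominators, and consistency with the formulas $\phi[][1], \phi[][2]$ carried by the evidences.
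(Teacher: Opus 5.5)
Your argument is correct, modulo an assumption flagged below. It has the paper's skeleton: prove the meet is defined, then obtain the consistency judgment from Lemma~\ref{lemma:transinvariant}. The difference is which coupling you compose through the middle type.

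The paper never uses the precision witnesses. Writing $\omega_h$ for the weights of $\xi_1$ and $\omega_k$ for those of $\xi_2$, it sets $\omega_{hk}=\omega_h\omega_k/p_j$ when the right tag of $\omega_h$ and the left tag of $\omega_k$ both equal $j$, and $0$ otherwise. It then checks the two marginal equations, using that the weights of $\xi_1$ with right tag $j$, and those of $\xi_2$ with left tag $j$, each sum to $p_j$. This is your formula with $\mathscr{C}_1,\mathscr{C}_2$ instantiated to the deterministic couplings read off the tags.

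The tag version is shorter and needs no induction, but it has three weaknesses:
\begin{itemize}
\item The tag identities are the $\Phi_L$/$\Phi_R$ conditions of the sufficient rule (wd$\xi$). They do not follow from the actual definition of evidence ($\xi\gprec\mu$ and $\xi\gprec\nu$).
\item The paper never checks the support condition of $W$, namely that positive-weight pairs have a defined meet. It calls the simple-type part trivial, although its function case already needs the distribution part on codomains.
\item It divides by $p_j$ without treating $p_j=0$.
\end{itemize}
Your route uses only what the definition provides. Your mutual induction on the static middle type is exactly what discharges the support condition, and you handle zero-probability components explicitly.

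\textbf{First caveat.} ``Both evidences are well-formed'' is not among the lemma's hypotheses, and it is genuinely needed. An evidence whose formula is unsatisfiable is vacuously precision-below every formula distribution type, so it justifies any such consistency judgment. Yet it makes the meet undefined, because the meet existentially requires a solution of that formula.
\begin{itemize}
\item State the assumption explicitly and hereditarily, so it also covers the codomains inside components when you invoke the induction hypothesis.
\item It holds where the lemma is used, for evidence obtained as meets of lifted static types. The paper's proof assumes it just as silently, by treating the weights as numbers.
\item The meet quantifies over $\FV(\Phi_1)\cup\FV(\Phi_2)$ jointly. You should therefore say why your two independently fixed solutions agree on any variables the two formulas share, e.g.\ by freshness of evidence variables.
\end{itemize}

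\textbf{Second caveat.} Your ``key observation'' that the evidence formulas pin down a unique solution is unnecessary, and read literally it fails. For example, $\{\rtype^{\omega_1},\rtype^{\omega_2}\}$ under $\omega_1+\omega_2=1$ is precision-below $\{\rtype^{1}\}$. Your distribution case only uses the precision couplings, so drop it.
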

}

%!TEX root = ../main.tex
\section{Discussion about Sampling Semantics}\label{sec:discussion}

To better justify our language's design, we now study the implications of adopting a sampling semantics instead of a distribution-based semantics.

While a distribution-based semantics relates programs to their full  distribution of final values, a sampling semantics (non-deterministically) relates them to individual execution paths and their associated probabilities. Although this approach may seem more appealing, we argue that it is not expressive enough to properly capture the checks that shall be done at runtime (about the optimistic assumptions made during type checking). In particular, the presence of ascriptions to distribution types requires considering all possible traces simultaneously, a task mostly incompatible with a sampling semantics.

% We now consider the possibility of defining an alternative runtime semantics for the language based on sampling. The idea is to execute a program by non-deterministically selecting one execution trace and computing a value with associated probability, rather than computing a full distribution. While such a semantics may seem appealing in terms of runtime simplicity and efficiency, we argue that it is not expressive enough to properly capture, at runtime, the optimistic assumptions done statically during the type checking. In particular, the presence of type ascriptions to distribution types requires evaluating all possible traces simultaneously, which is incompatible with the sampling approach.

\subsection{Motivating Example}

Consider the following program:
\begin{lstlisting}
let x: ? = 1 (*$\psum[\frac{1}{3}]$*) (2 (*$\psum[\frac{1}{3}]$*) true) in
let y: {Int(*${}^{\frac{1}{3}}$*), Bool(*${}^{\frac{2}{3}}$*)} = x in 
y
\end{lstlisting}
which can be alternatively cast through a pair of nested ascriptions:
$$\bigl((1 \psum[\frac{1}{3}] (2 \psum[\frac{1}{3}] \ttt)) ::  \dt{?^1}\bigr) :: \dt{\rtype^{\frac{1}{3}}, \btype^{\frac{2}{3}}}$$
Under the distribution semantics, the term \lstinline[mathescape]|x| evaluates to a distribution of type 
\lstinline[mathescape]|{Int${}^{\frac{1}{3}}$, Int${}^{\frac{2}{9}}$, Bool${}^{\frac{4}{9}}$}|. This is inconsistent with the declared type of \lstinline[mathescape]|y|, which expects a distribution of type  
\lstinline[mathescape]|{Int${}^{\frac{1}{3}}$, Bool${}^{\frac{2}{3}}$}|. Therefore, the program fails at runtime due to a type violation.

In contrast, a sampling semantics models evaluation as the reduction through  individual execution paths. This is represented by (non-deterministic) judgments of the form $\tm \nreds{}{} \rctx[\phi] \tv[][\p]$, which denote that term $\tm$ reduces to value $\tv$ with probability $\p$, under constraint context~$\phi$. To illustrate how this works in our example, let us consider the different execution paths (or traces) of $\tm = 1 \psum[\frac{1}{3}] (2 \psum[\frac{1}{3}] \ttt)$.

\begin{itemize}
  \item First, suppose we take left branch (of the outermost probabilistic choice). Then,
  \[
  \tm \nreds{}{} \rctx[\cdot] 1^{\frac{1}{3}}
  \]
  Since the result is an integer with probability $\frac{1}{3}$, and the expected type is $\dt{\Int^{\frac{1}{3}}, \Bool^{\frac{2}{3}}}$, the reduction succeeds.

  \item Now assume take the right branch, and then the left branch (of the innermost probabilistic choice):
  \[
  \tm \nreds{}{} \rctx[\cdot] 2^{\frac{2}{9}}
  \]
  Again, since $\frac{2}{9} \leq \frac{1}{3}$, this value is within the expected distribution bounds, and the reduction succeeds.

  \item Finally, if we take the right branch of both probabilistic choices, we get:
  \[
  \tm \nreds{}{} \rctx[\cdot] \ttt^{\frac{4}{9}}
  \]
  Similarly, since $\frac{4}{9} \leq \frac{2}{3}$ the term also reduces successfully.
\end{itemize}

Collecting the three traces yields the value distribution $\dt{1^{\frac{1}{3}}, 2^{\frac{2}{9}}, \ttt^{\frac{4}{9}}}$ with inferred type $\dt{\Int^{\frac{5}{9}}, \Bool^{\frac{4}{9}}}$, which is inconsistent with the ascribed type $\dt{\Int^{\frac{1}{3}}, \Bool^{\frac{2}{3}}}$ declared for \lstinline|y|. To detect this inconsistency, a sampling semantics would need to inspect all traces simultaneously. However, this contradicts the very notion of sampling, which evaluates only one path at a time. 

In fact, the only inconsistencies that could be detected at runtime under a sampling semantics would be those induced by single---individual---executions, like in the program below:
\[
(1 \psum[\frac{2}{3}] \ttt :: \dt{?^1}) :: \dt{\Int^{\frac{1}{3}}, \Bool^{\frac{2}{3}}}
\]
Here, taking the left branch yields an integer with probability $\frac{2}{3}$, which readily violates the expected distribution type.

In summary, type ascriptions to distribution types create global constraints over all execution paths, while sampling semantics only refer to individual execution paths. Therefore, the runtime enforcement of distribution types is,  in general, not possible under sampling semantics. Only the static type checking can handle such enforcement, optimistically.

\subsection{Limitations of a Sampling Semantics}

To better understand the limitations of sampling semantics, we now formalize a variant of the distribution semantics that evaluates a single execution trace at a time. The goal is to illustrate the tension between type annotations and runtime behavior, and to identify the specific rule responsible for the loss of soundness.

The idea behind sampling semantics is to follow the structure of the distribution semantics, but to non-deterministically select one value from a distribution instead of computing the full outcome set. Step numbers are omitted because we do not provide a metatheoretical development for this semantics. For readability, we introduce the following syntactic sugar:
\[
\rctx[\phi] \trg{\tv^{\p[1]\p[2]}} \triangleq \rctx[(\phi \land \cww = \p[1]\p[2])] \trg{\tv^{\cww}}
\]
where $\cww$ is a fresh symbolic variable (for simplicity we are ignoring indexes $\leftvar$ and $\rightvar$). We now present the rules of this sampling semantics.

When $\tm$ is already a value $\tv$, the term reduces to itself with probability $1$:
\begin{displaymath}
  \inference[\trg{(Sv)}]{}
  {\rrctx[\phii[]] \tv \nreds{}{} \rctx[\cdot] \trg{\tv[][1]}}
\end{displaymath}

Function application reduces in two steps:
\begin{displaymath}
  \inference[\trg{(\mathit{Sapp})}]{ \rrctx[\phii[]] \trg{ \asc{\cdom(\ev[])\tv}{\gtysp} } \nreds{}{} \rctx[\cdot] \tw[][1] &
  \rrctx[\phii[']]  \ssub{\trg{(\asc{\ccod(\ev[])\tm[]}{\gtya})}}{\tw}{\tx}  \nreds{}{}  \rctx[\phi] \tv^{\p} }
  {\rrctx[\phii[]] \trg{(\asc{\ev[] (\lambda \tx: \gtysp. \tm)}{\gtys -> \gtya}) \; \tv} \nreds{}{} \rctx[\phi] \tv^{\p} }
\end{displaymath}

After the argument is cast to the expected input type of the function and reduces to a value $\tw$ with probability $1$, the body of the function—previously ascribed to the expected return type—is substituted and evaluated as in the distribution semantics. If the substitution results in an error, that is, when $\tw = \errort{\gtys}$ and the substitution $\ssub{\tm[]}{\errort{\gtys}}{\tx}$ yields $\errort{\gtya}$, the evaluation proceeds using the following rule:
\begin{displaymath}
  \inference[\trg{(\mathit{Serr})}]{
    \gtya = \dctx[\phi] \d{\gtys[i][\p[i]]}[i \in \iSet]
  }{
    \rrctx[\phi[]] \errort{\gtya} \nreds{}{}
    \rctx[\phi] \errort{\gtys[i]}^{\p[i]}
  }\
\end{displaymath}
This rule selects one of the possible simple types non-deterministically, since it is not known which choices the term $\tm$ would have taken (the error propagates eagerly and prevents the evaluation of alternative branches).

One of the most interesting rules in the sampling semantics is the choice operator. In contrast to the distribution semantics, which uses a single rule, the sampling semantics introduces two distinct rules depending on which branch is selected:
\begin{mathpar}
	\inference[\trg{(\mathit{S}\oplus l)}]{\rrctx[\phi[][]] {\tm} \nreds{}{} \rctx[\phi[][]] \tv^{\p} }
  {\rrctx[\phi[][]]  \trg{ {\tm} \ppsum[\phi] {\tn} } \nreds{}{} \rctx[\phi] \trg{\tv^{\p[1]\p}}} \and
  \inference[\trg{(\mathit{S}\oplus r)}]{\rrctx[\phi[][]] {\tn} \nreds{}{} \rctx[\phi[][]] \tv^{\p}  }
  {\rrctx[\phi[][]]  \trg{ {\tm} \ppsum[\phi] {\tn} } \nreds{}{} \rctx[\phi] \trg{\tv^{\p[2]\p}}}
\end{mathpar}
Rule $\trg{(S\oplus l)}$ selects the left branch, while rule $\trg{(S\oplus r)}$ selects the right. The entire expression reduces to the value produced by the chosen subterm. The resulting probability is computed by multiplying the probability of selecting the branch (either $\p[1]$ or $\p[2]$) with the probability of reducing to the final value within that branch. 

The rule for \texttt{let} expressions is defined as follows:
\begin{displaymath}
   \inference[\trg{(\mathit{Slet})}]{\rrctx[\phii[]] \tm \nreds{}{} \rrctx[\phii[']] \tv[1][\p[1]] &
   \rrctx[\phii[']] \ssub{\tn[]}{\tv[1]}{\tx} \nreds{}{} \rctx[\phi[][]] \tv[2][\p[2]] 
  }
  { \rrctx[\phii[]] \trg{\lett{\tx}{\tm}{\tn}} \nreds{}{} \rctx[\phi] \tv[2][\p[1]\p[2]]  }
\end{displaymath}
This rule evaluates the bound expression $\tm$ to a value $\tv[1]$ with probability $\p[1]$. Then, it substitutes $\tv[1]$ for variable $\tx$ in the body $\tn$, and evaluates the resulting term to value $\tv[2]$ with probability $\p[2]$. As a result, the entire \texttt{let} expression reduces to $\tv[2]$ with probability $\p[1]\p[2]$

The most complex rule in the sampling semantics, and the one responsible for the lack of soundness, is the reduction of distribution types. It is defined as follows:
\begin{displaymath}
	\inference[\trg{(\mathit{D}\mathord{::}\gtya)}]{
      \rrctx[\phi[][1]] \tm \nreds{}{} \rctx[\phi[][1]] \tv[1][\p[1]]
      &  \cdot |- \tv[1] : \gtys &
      \cww \; \text{fresh} &
      \gtyap = \phty{(\pphi[1] \land \cww = 1-\p[1])}{\dt{\gtys[][\p[1]], ?^{\cww}}}
      \\
      \evd |- \gtya \rel \gtyb
       &
       \gtyb = \phty{\pphi[3]}{ \d{\gtysp[j][\p[j]]}[j \in \jSet]}
      }
      { \rrctx[\phi[][1]] \trg{ (\asc{\evd \tm}{ \gtyb }) } \nreds{}{} %
      \begin{cases}
        \rctx[\phi[][2] ]
       \cdot \tv[2][\trg{\cww[k]}\p[1]]  
        & \begin{block}
          \text{If}~ (\initReorder{\gtyap}{\gtya}) \trans{} \evd = \phty{\pphi[2]}{\d{\ev[k][\cww[k]]}[k \in \kSet]}, \\ \text{where}~  \exists k \in \kSet,
           (\projl{\cww[k]}  = 1 \land\\  j = \projr{\cww[k]}) \implies  
           \trg{ (\asc{\ev[\kk] \tv[1]}{\gtysp[j]}) } \nreds{}{} \rctx[\cdot] \tv[2][1] 
        \end{block}\\
           \rctx[\cdot] \errort{\gtyb}^{\trg{1}} & \text{otherwise}
      \end{cases} 
      } 
\end{displaymath}

The rule starts by reducing the term $\tm$ to a value $\tv[1]$ with probability $\p[1]$. Let $\gtys$ be the type of $\tv[1]$. Since only a single execution trace is observed, we cannot access the types of the values that would have resulted from other branches. Therefore, we optimistically approximate them with an unknown type. The combined type $\gtyap$ consists of $\gtys$ with probability $\p[1]$ and a wildcard type with probability $1 - \p[1]$.

The rest of the rule proceeds similarly to the distribution semantics, with one key difference. Instead of quantifying over all possible simple evidences ($k \in \kSet$), the sampling semantics selects a single $k$ non-deterministically. The simple evidence associated with $\tv[1]$ is then combined with the corresponding $\ev[k]$. 

This overly-optimistic non-deterministic behavior is the main source of unsoundness in the sampling semantics. When distribution types are present, correctness depends on the global distribution over all execution traces. However, sampling semantics only observe one trace at a time, and thus cannot guarantee that type constraints are satisfied in aggregate.

%!TEX root = ../main.tex
\section{Related Work}\label{sec:relwork}

% - Refinement types \citep{lehmannTanter:popl2017}. We do not use gradual refinement types in our design; we use refinement types as representation for evidence.  
% - Cite AGT. We do not to derive the gradual languages but to justify our development.
% - There are many approaches to define runtime semantics of gradual languages: elaboration to a cast calculus, AGT direct semantics, and type-directed runtime semantics (your paper). We use the classical approach by using a cast calculus.
% - Relation with union types and intersection types. This work could be seen as an extension to union types with weights, that represent of the probabilities.

\paragraph*{Gradual typing.}
As previously mentioned, gradual typing has been applied to many type disciplines and language constructs. To the best of our knowledge, gradual typing has not been applied to probabilistic languages, nor to non-deterministic languages. 

\citet{lehmannTanter:popl2017} presented gradual refinement types, which allow a 
smooth transition---and interoperability---between simple types and logically-refined types.
In this work, we use statically-typed refinement types to implement cast/evidence, but we do not support gradual refinement types at the source level.
\citet{phippsEtAl:oopsla2021} present \oblset{TypeWhich}, an approach for automatic type migration, which tries to infer additional or improved type annotations in gradually typed languages. Similarly to this work, \oblset{TypeWhich} also \change{generates} constraints (formulas) during type checking, and relies on an SMT solver to find solutions to their objectives.
\extends{\citet{Hattori23} propose a hybrid gradual type system for detecting shape mismatches in deep learning programs, combining best-effort inference with runtime assertion insertion. They also use formulas solved by SMT solvers, but applied to shape constraints in machine learning rather than probabilistic correctness. \citet{migeed24} extend this direction with a gradually typed tensor calculus that supports reasoning about static migration, shape constraints, and control-flow elimination. Their framework enables more precise shape analysis across dynamic inputs. Separately, \citet{Ye24merge} integrate gradual typing with a merge operator to model dynamic object-oriented features. Interestingly, their use of merge mirrors distribution semantics in spirit, but without probabilities—capturing multiple behaviors in a type-directed, deterministic way.
}

There exist many flavors to define the runtime semantics of gradual languages. The classical 
approach is via a 
translation to a cast calculus
\citep{wadlerFindler:esop2009,siekAl:pldi2015,siekAl:popl10,garcia2013calculating,herman07,hermanAl:hosc10,Siek2009ExploringTD};  
\citet{garciaAl:popl2016} defined the runtime semantics directly in the source language, by mimicking the proof normalization steps done in type safety; and recently, \citet{Ye2021TypeDirectedOS} also presented 
direct dynamic semantics by using type-directed operational semantics (TDOS) \citep{Huang2020ATO}. 
In this work, we follow the classical approach---defining a source and target language (cast calculus)---, where casts are implemented by using evidences from the AGT methodology. 

There has been active work on designing gradual languages that allows the combination/collection of types.
\citet{castagnaLanvin:icfp2017,castagnaAl:popl2019} proposed a theory for gradual set-theoretic types, supporting union, intersection and the unknown type. In parallel, \citet{toroTanter:sas2017}
explored tagged and untagged union types, and
\citet{jaferyDunfield:popl2017} sums types.
Besides many fundamental differences, this work could be seen as a generalization of gradual union types with gradual weights.

\paragraph*{Probabilistic $\lambda$-calculus.}
We can trace the origin of probabilistic $\lambda$-calculus to the work of \cite{SahebDjahromi1978ProbabilisticL}, who present a typed, higher-order calculus. They develop a denotational semantics based on Plotkin's probabilistic powerdomain~\citep{Jones1989APP} and an operational semantics in terms of Markov chains. \cite{Lago2012ProbabilisticOS} survey a variety of operational semantics for a $\lambda$-calculus with a probabilistic choice operator including  small/big-step, inductive/coinductive and call-by-value/name variants. They all fall under the category of distribution-based semantics, relating programs to probability distributions of values. On the contrary, sampling-based semantics interpret probabilistic programs as deterministic programs that are parametrized by the sequence of random choices made during the execution~\citep{DBLP:conf/icfp/BorgstromLGS16,DBLP:conf/esop/MakOPW21}. 

\cite{Ramsey2002StochasticLC} develop a denotational semantics for a stochastic $\lambda$-calculus exploiting the monadic structure of probability distributions. More recently, \cite{DANOS2011966} and \cite{10.1145/3158147} study a denotational semantics for higher-order programs in terms of coherence spaces. Finally, \cite{DBLP:journals/pacmpl/ScibiorKVSYCOMH18} use quasi-Borel spaces \citep{DBLP:conf/lics/HeunenKSY17} as a semantic model for a functional language that combines higher-order programming with continuous distributions.

Different type systems have been developed for probabilistic $\lambda$-calculi, aimed at establishing different program invariants. \cite{Lago2017ProbabilisticTB} use sized types to reason about almost-sure termination of higher-order programs, while  \cite{DBLP:conf/csl/HeijltjesM25} propose a simpler approach, based on simple types, to reason about  termination probabilities. \cite{10.5555/3470152.3470193} develop a type system based on refinement types, to perform complexity analysis of higher-order functional programs. \cite{Reed:2010} (and many subsequent extensions) present a type system for reasoning about program sensitivity, used for establishing differential privacy properties of programs.

%!TEX root = ../main.tex
\section{Conclusion}\label{sec:conclusion}

In this work, we provide a first step into the theoretical foundations of gradual probabilistic programming. We develop \glang, to the best of our knowledge, the first gradual probabilistic language. \change{The language enables an increased flexibility and expressivity, allowing some form of probabilistic specifications and also of program refinement via unknown probabilities in probabilistic choices.}  %
The development of \glang is justified using the AGT methodology. The dynamic semantics of \glang is given via translation 
to an evidence-based calculus, called \tlang, which features a distribution-based dynamic semantics. %
The development of \glang and \tlang heavily relies on the notion of probabilistic coupling, as required for defining several relations and functions, such as type consistency, precision and consistent transitivity.
As for the metatheory, \glang satisfies type safety as well as the refined criteria for gradual languages.

As future work, we plan to explore the addition of more features to the language, such
as subtyping and polymorphism. Introducing subtyping may bring several challenges, such as the use of sub-distributions. Another possible line of future work are the practical aspects of the gradual language, such as efficient handling of evidences in runtime, and space-efficient reduction rules.
% \proposal{
% Furthermore, our work provide a first step into the theoretical foundation 
% of gradual probabilistic programming. The theory has proved already challenging, and we believe that an efficient implementation poses non-trivial ---additional--- challenges, which deserve an independent treatment in the future.
% }

% \paragraph*{Acknowledgements:}
% % \titlenote{This work has been partially sponsored by Hong Kong Research Grants Council projects number 17209520 and 17209821, and ANID FONDECYT project  3200583, Chile.}
% This work has been partially sponsored by the Hong Kong Research Grants Council projects number 17209520 and 17209821, ANID FONDECYT project  3200583, and ANID Millennium Science Initiative Program code ICN17\_002.

\paragraph*{Conflicts of Interest:} 
The author reports no conflict of interest.

\bibliography{_Bib/strings,_Bib/pleiad,_Bib/bib,references,_Bib/common,probability}

@inproceedings{jaferyDunfield:popl2017,
author = {Jafery, Khurram A. and Dunfield, Jana},
title = {Sums of Uncertainty: Refinements Go Gradual},
pages = {804--817},
crossref = {popl2017},
}

@article{castagnaAl:popl2019,
  author = {Giuseppe Castagna and Victor Lanvin and Tommaso Petrucciani and Siek, Jeremy G.},
  title  = {Gradual typing: a new perspective},
  crossref = {popl2019},
  pages = {16:1-16:32},
}

@inproceedings{azevedo:lics2020,
  author = {Azevedo de Amorim, Arthur and Fredrikson, Matt and Jia, Limin},
  title  = {Reconciling Noninterference and Gradual Typing},
  booktitle = {Proceedings of the 2020 Symposium on Logic in Computer Science (LICS 2020)},
  year = 2020,
  month = jul,
}

@article{newAl:popl2019,
  author = {New, Max S. and Licata, Daniel R. and Amal Ahmed},
  title = {Gradual Type Theory},
  crossref = {popl2019},
  pages = {15:1--15:31},
}

@article{castagnaLanvin:icfp2017,
  author = {Castagna, Giuseppe and Lanvin, Victor},
  title = {Gradual Typing with Union and Intersection Types},
  pages = {41:1--41:28},
  crossref = {icfp2017},
}

@inproceedings{siekAl:snapl2015,
  author = {Siek, Jeremy G. and Vitousek, Michael M. and Matteo Cimini and Boyland, John Tang},
  title = {Refined Criteria for Gradual Typing},
  booktitle = {1st Summit on Advances in Programming Languages (SNAPL 2015)},
  pages = {274--293},
  year  = 2015,
  address  = {Asilomar, California, {USA}},
  month  = may,
  publisher = dagstuhl,
  series = lipics,
  volume = 32,

}

@inproceedings{fennellThiemann:csf2013,
  author = {Luminous Fennell and Peter Thiemann},
  title = {Gradual Security Typing with References},
  booktitle = {Proceedings of the 26th Computer Security Foundations Symposium (CSF)},
  pages = {224--239},
  year = 2013,
  month = jun,
}

@inproceedings{disneyFlanagan:stop2011,
  author = {Tim Disney and Cormac Flanagan},
  title = {Gradual information flow typing},
  booktitle = {International Workshop on Scripts to Programs},
  year = 2011,
}

@inproceedings{takikawaAl:oopsla2012,
  author = {Asumu Takikawa and Strickland, T. Stephen and Christos Dimoulas and Sam Tobin-Hochstadt and Matthias Felleisen},
  title  = {Gradual Typing for First-Class Classes},
  crossref = {oopsla2012},
  pages = {793--810},
}

@article{hermanAl:hosc10,
 author = {Herman, David and Tomb, Aaron and Flanagan, Cormac},
 title = {Space-efficient gradual typing},
 journal = hosc,
 volume = {23},
 number = {2},
 month = jun,
 year = {2010},
 pages = {167--189},
 publisher = kap,
}

@inproceedings{siekAl:popl10,
 author = {Siek, Jeremy and Wadler, Philip},
 title = {Threesomes, with and without blame},
 crossref = {popl2010},
 pages = {365--376},
}

@inproceedings{siekAl:pldi2015,
  author       = {Siek, Jeremy and Thiemann, Peter and Wadler, Phil},
  title	       = {Blame and Coercion: Together Again for the First
                  Time},
  crossref     = {pldi2015},
  pages	       = {425--435},
}

@inproceedings{siekTaha:sfp2006,
  author = {Jeremy Siek and Walid Taha},
  title = {Gradual Typing for Functional Languages},
  booktitle = {Proceedings of the Scheme and Functional Programming Workshop},
  year = 2006,
  month = sep,
  pages = {81--92}
}

@inproceedings{wadlerFindler:esop2009,
  author = {Philip Wadler and Findler, Robert Bruce},
  title = {Well-Typed Programs Can't Be Blamed},
  crossref = {esop2009},
  pages = {1--16},
}

@misc{oopsla2021,
  key = {OOPSLA 2021},
  journal = pacmpl,
  volume = 5,
  number = {OOPSLA},
  month = nov,
  year = 2021,
  publisher = acm,
}

@misc{icfp2017,
  key = {ICFP 2017},
  journal = pacmpl,
  volume = 1,
  number = {ICFP},
  month = sep,
  year = 2017,
  publisher = acm,
}

@Proceedings{oopsla2012,
  key = {OOPSLA 2012},
  booktitle =  {Proceedings of the 27th {ACM SIGPLAN} Conference on Object-Oriented Programming Systems, Languages and Applications (OOPSLA 2012)},
  title = 	 {Proceedings of the 27th {ACM SIGPLAN} Conference on Object-Oriented Programming Systems, Languages and Applications (OOPSLA 2012)},
  year =      2012,
  address =      {Tucson, AZ, USA},
  month =        oct,
  publisher =     acm,
}

@proceedings{esop2009,
  key = {ESOP 2009},
  booktitle = {Proceedings of the 18th European Symposium on Programming Languages and Systems (ESOP 2009)},
  title = {Proceedings of the 18th European Symposium on Programming Languages and Systems (ESOP 2009)},
  editor = {Giuseppe Castagna},
  publisher =  sv,
  series = lncs,
  volume = 5502,
  year = 2009,
  address = {York, UK},
}

@proceedings{pldi2015,
  key = {PLDI 2015},
  booktitle = {Proceedings of the 36th {ACM SIGPLAN} Conference on Programming Language Design and Implementation (PLDI 2015)}, 
  title = {Proceedings of the 36th {ACM SIGPLAN} Conference on Programming Language Design and Implementation (PLDI 2015)}, 
  year = 2015,
  publisher = acm,
  month = jun,
  address = {Portland, OR, USA},
}

@proceedings{popl2010,
  key = {POPL 2010},
  booktitle = {Proceedings of the 37th annual {ACM SIGPLAN-SIGACT} Symposium on Principles of Programming Languages (POPL 2010)},
  title = {Proceedings of the 37th annual {ACM SIGPLAN-SIGACT} Symposium on Principles of Programming Languages (POPL 2010)},
  year = 2010,
  address = {Madrid, Spain},
  month = jan,
  publisher = acm,
}

@proceedings{popl2016,
	key = {POPL 2016},
	editor    = {Rastislav Bod{\'{\i}}k and Rupak Majumdar},
  booktitle = {Proceedings of the 43rd {ACM SIGPLAN-SIGACT} Symposium on Principles of Programming Languages (POPL 2016)},
  title = {Proceedings of the 43rd {ACM SIGPLAN-SIGACT} Symposium on Principles of Programming Languages (POPL 2016)},
  year = 2016,
  address = {St Petersburg, FL, USA},
  month = jan,
  publisher = acm,
}

@proceedings{popl2017,
  key = {POPL 2017},
  booktitle = {Proceedings of the 44th {ACM SIGPLAN-SIGACT} Symposium on Principles of Programming Languages (POPL 2017)},
  title = {Proceedings of the 44th {ACM SIGPLAN-SIGACT} Symposium on Principles of Programming Languages (POPL 2017)},
  year = 2017,
  address = {Paris, France},
  month = jan,
  publisher = acm,
}

@misc{popl2019,
  key = {POPL 2019},
  journal = pacmpl,
  volume = 3,
  number = {POPL},
  month = jan,
  year = 2019,
  publisher = acm,
}

@article{malewskiAl:oopsla2021,
  author = {Stefan Malewski and Michael Greenberg and {\'E}ric Tanter},
  title = {Gradually Structured Data},
  crossref = {oopsla2021},
  users = { etanter },
  urldoi = {https://dl.acm.org/doi/10.1145/3485503},
  urlpdf = {http://pleiad.dcc.uchile.cl/papers/2021/malewskiAl-oopsla2021.pdf},
  pages = {126:1--126:28}
}

@article{lennonAl:toplas2022,
title = {Gradualizing the Calculus of Inductive Constructions},
author = {Meven Lennon-Bertrand and Kenji Maillard and Nicolas Tabareau and {\'E}ric Tanter},
journal = toplas,
urlpdf = {http://pleiad.dcc.uchile.cl/papers/2022/lennonAl-toplas2022.pdf},
urldoi = {https://doi.org/10.1145/3495528},
users = { etanter },
year = 2022,
note = {To appear. To be presented at POPL'22.}
}

@article{toroTanter:scp2020,
  author = {Mat{\'i}as Toro and {\'E}ric Tanter},
  title = {Abstracting Gradual References},
  journal = scp,
  publisher = els,
  year = 2020,
  month = oct,
  volume = 197,
  pages = {1--65},
  urlpdf = {http://pleiad.dcc.uchile.cl/papers/2020/toroTanter-scp2020.pdf},
  urldoi = {https://doi.org/10.1016/j.scico.2020.102496},
  users = { mtoro , etanter },
}

@article{toroAl:toplas2018,
  author = {Mat{\'i}as Toro and Ronald Garcia and {\'E}ric Tanter},
  title = {Type-Driven Gradual Security with References},
  journal = toplas, 
  publisher = acm,
  pages = {16:1--16:55},
  volume = 40,
  number = 4,
  year = 2018,
  month = nov,
  users = { etanter , mtoro },
  urlpdf = {http://pleiad.dcc.uchile.cl/papers/2018/toroAl-toplas2018.pdf},
  urldoi = {https://doi.org/10.1145/3229061},
  webnote = {Presented at POPL 2019},
}

@inproceedings{toroTanter:sas2017,
  author = {Mat{\'i}as Toro and {\'E}ric Tanter},
  title = {A Gradual Interpretation of Union Types},
  booktitle = {Proceedings of the 24th Static Analysis Symposium (SAS 2017)},
  year = 2017,
  month = aug,
  pages = {382-404},
  publisher = sv,
  series = lncs,
  volume = 10422,
  address = {New York City, NY, USA},
  users = { etanter , mtoro },
  urlpdf = {http://pleiad.dcc.uchile.cl/papers/2017/toroTanter-sas2017.pdf},
  urldoi = {https://doi.org/10.1007/978-3-319-66706-5_19},
}

@inproceedings{lehmannTanter:popl2017,
  author =  {Nico Lehmann and {\'E}ric Tanter},
  title = {Gradual Refinement Types},
  crossref = {popl2017},
  pages = {775--788},
  users = { etanter , nlehmann },
  urlpdf = {http://pleiad.dcc.uchile.cl/papers/2017/lehmannTanter-popl2017.pdf},
  urldoi = {http://dx.doi.org/10.1145/3009837.3009856},
}

@article{banadosAl:jfp2016,
  author = {Ba{\~n}ados Schwerter, Felipe and Ronald Garcia and {\'E}ric Tanter},
  title = {Gradual Type-and-Effect Systems},
  journal = jfp,
  year = 2016,
  pages = {19:1--19:69},
  publisher = cam,
  urlpdf = {http://pleiad.dcc.uchile.cl/papers/2016/banadosAl-jfp2016.pdf},
  urldoi = {http://dx.doi.org/10.1017/S0956796816000162},
  users = { etanter },
  volume = 26,
  month = sep,
}

@inproceedings{garciaAl:popl2016,
  author = {Ronald Garcia and Clark, Alison M. and {\'E}ric Tanter},
  title = {Abstracting Gradual Typing},
  crossref = {popl2016},
  pages = {429--442},
  users = { etanter },
  urlpdf = {http://pleiad.dcc.uchile.cl/papers/2016/garciaAl-popl2016.pdf},
  urldoi = {http://dx.doi.org/10.1145/2837614.2837670},
  note = {See erratum: https://www.cs.ubc.ca/~rxg/agt-erratum.pdf},
}

@string{acm =   "ACM Press"}

@string{els = "Elsevier"}

@string{cam = "Cambridge University Press"}

@string{sv =    "Springer-Verlag"}

@string{lncs =  "Lecture Notes in Computer Science"}

@string{sigplan = "ACM SIGPLAN Notices"}

@string{kap = "Kluwer Academic Publishers"}

@string{ieee = "IEEE Computer Society Press"}

@string{toplas = "ACM Transactions on Programming Languages and Systems"}

@string{scp = "Science of Computer Programming"}

@string{hosc = "Higher-Order and Sympolic Computation"}

@string{jfp = "Journal of Functional Programming"}

@string{dagstuhl = "Schloss Dagstuhl--Leibniz-Zentrum fuer Informatik"}

@string{lipics = "Leibniz International Proceedings in Informatics (LIPIcs)"}

@string{pacmpl = "Proceedings of the ACM on Programming Languages"}

@string{ lncs    = "LNCS"}

@string{ springer = "Springer"}

@string{ acm = "ACM Press"}

@string{ lics    = "Proc. of LICS"}

@string{ sas = "Proc.\ of SAS"}

@string{ esop = "Proc.\ of ESOP"}

@string{ fose = "Proc.\ of FOSE"}

@string{ pldi = "Proc.\ of PLDI"}

@string{ mfcs = "Proc.\ of MFCS"}

@string{ uai = "Proc.\ of UAI"}

@string{ popl = "Proc.\ of POPL"}

@string{ oopsla = "Proc.\ of OOPSLA"}

@inproceedings{DBLP:conf/icse/GordonHNR14,
  author    = {Andrew D. Gordon and
               Thomas A. Henzinger and
               Aditya V. Nori and
               Sriram K. Rajamani},
  title     = {Probabilistic programming},
  booktitle = {Proceedings of the on Future of Software Engineering, {FOSE} 2014},
  year      = {2014},
  pages     = {167--181},
  publisher = {{ACM}}
}

@book{Motwani:1995,
 author = {Motwani, Rajeev and Raghavan, Prabhakar},
 title = {Randomized Algorithms},
 year = {1995},
 publisher = {Cambridge University Press}
}

@inproceedings{DBLP:conf/sigsoft/ClaretRNGB13,
  author    = {Guillaume Claret and
               Sriram K. Rajamani and
               Aditya V. Nori and
               Andrew D. Gordon and
               Johannes Borgstr{\"o}m},
  title     = {Bayesian Inference using Data Flow Analysis},
  booktitle = {Proceedings of the 9th Joint Meeting on Foundations of Software Engineering},
 series = {ESEC/FSE 2013},
  year      = {2013},
  publisher = {{ACM}},
  pages     = {92-102}
}

@article{Goldwasser:1984,
  title={Probabilistic Encryption},
  author={Goldwasser, Shafi and Micali, Silvio},
  journal = {J. Comput. Sys. Sci.},
  volume=28,
  number=2,
  pages={270--299},
  year=1984,
  publisher={Elsevier}
}

@article{Ghahramani:2015,
  title={Probabilistic Machine Learning and Artificial Intelligence},
  author={Ghahramani, Zoubin},
  journal={Nature},
  volume={521},
  number={7553},
  pages={452--459},
  year={2015},
  publisher={Nature Publishing Group}
}

@inproceedings{Barthe:2009,
	Address = {New York},
	Author = {Barthe, Gilles and Gr{\'e}goire, Benjamin and {Zanella-B{\'e}guelin}, Santiago},
	Booktitle = {36th ACM SIGPLAN-SIGACT Symposium on Principles of
                  Programming Languages},
		  series = {POPL'09},
	Pages = {90-101},
	Publisher = {ACM},
	Title = {Formal Certification of Code-Based Cryptographic Proofs},
	Year = {2009}}

@INPROCEEDINGS{Reed:2010,
  author = {Reed, Jason and Pierce, Benjamin C.},
  title = {Distance Makes the Types Grow Stronger: A Calculus for Differential
	Privacy},
  booktitle = {Proceedings of the 15th ACM SIGPLAN International Conference on
                  Functional Programming},
		  series ={ICFP'10},
  year = {2010},
  pages = {157--168},
  publisher = {ACM}
}

@article{DBLP:journals/fttcs/DworkR14,
  author    = {Cynthia Dwork and
               Aaron Roth},
  title     = {The Algorithmic Foundations of Differential Privacy},
  journal   = {Found. Trends Theor. Comput. Sci.},
  volume    = {9},
  number    = {3-4},
  pages     = {211--407},
  year      = {2014},
  url       = {https://doi.org/10.1561/0400000042},
  doi       = {10.1561/0400000042},
  bibsource = {dblp computer science bibliography, https://dblp.org}
}

@article{DBLP:journals/corr/abs-1809-10756,
  author    = {Jan{-}Willem van de Meent and
               Brooks Paige and
               Hongseok Yang and
               Frank Wood},
  title     = {An Introduction to Probabilistic Programming},
  journal   = {CoRR},
  volume    = {abs/1809.10756},
  year      = {2018},
  url       = {http://arxiv.org/abs/1809.10756},
  eprinttype = {arXiv},
  eprint    = {1809.10756},
  bibsource = {dblp computer science bibliography, https://dblp.org}
}

@inproceedings{DBLP:conf/aistats/LeBW17,
  author    = {Tuan Anh Le and
               Atilim Gunes Baydin and
               Frank D. Wood},
  editor    = {Aarti Singh and
               Xiaojin (Jerry) Zhu},
  title     = {Inference Compilation and Universal Probabilistic Programming},
  booktitle = {Proceedings of the 20th International Conference on Artificial Intelligence
               and Statistics, {AISTATS} 2017, 20-22 April 2017, Fort Lauderdale,
               FL, {USA}},
  series    = {Proceedings of Machine Learning Research},
  volume    = {54},
  pages     = {1338--1348},
  publisher = {{PMLR}},
  year      = {2017},
  url       = {http://proceedings.mlr.press/v54/le17a.html},
  bibsource = {dblp computer science bibliography, https://dblp.org}
}

@inproceedings{DBLP:conf/ilp/Pfeffer10,
  author    = {Avi Pfeffer},
  editor    = {Paolo Frasconi and
               Francesca A. Lisi},
  title     = {Practical Probabilistic Programming},
  booktitle = {Inductive Logic Programming - 20th International Conference, {ILP}
               2010, Florence, Italy, June 27-30, 2010. Revised Papers},
  series    = {Lecture Notes in Computer Science},
  volume    = {6489},
  pages     = {2--3},
  publisher = {Springer},
  year      = {2010},
  bibsource = {dblp computer science bibliography, https://dblp.org}
}

@inproceedings{10.5555/3023476.3023503,
author = {Goodman, Noah D. and Mansinghka, Vikash K. and Roy, Daniel and Bonawitz, Keith and Tenenbaum, Joshua B.},
title = {Church: A Language for Generative Models},
year = {2008},
isbn = {0974903949},
publisher = {AUAI Press},
booktitle = {Proceedings of the Twenty-Fourth Conference on Uncertainty in Artificial Intelligence},
pages = {220–229},
numpages = {10},
location = {Helsinki, Finland},
series = {UAI'08}
}

@misc{dippl,
  title = {{The Design and Implementation of Probabilistic Programming Languages}},
  author = {Goodman, Noah D and Stuhlm\"{u}ller, Andreas},
  year = {2014},
  howpublished = {\url{http://dippl.org}},
  note = {Accessed: 2022-10-17}
}

@inproceedings{DBLP:conf/aplas/Kiselyov16,
  author    = {Oleg Kiselyov},
  editor    = {Atsushi Igarashi},
  title     = {Probabilistic Programming Language and its Incremental Evaluation},
  booktitle = {Programming Languages and Systems - 14th Asian Symposium, {APLAS}
               2016, Hanoi, Vietnam, November 21-23, 2016, Proceedings},
  series    = {Lecture Notes in Computer Science},
  volume    = {10017},
  pages     = {357--376},
  year      = {2016},
  url       = {https://doi.org/10.1007/978-3-319-47958-3\_19},
  doi       = {10.1007/978-3-319-47958-3\_19},
  bibsource = {dblp computer science bibliography, https://dblp.org}
}

@inproceedings{DBLP:conf/iclr/TranHSB0B17,
  author    = {Dustin Tran and
               Matthew D. Hoffman and
               Rif A. Saurous and
               Eugene Brevdo and
               Kevin Murphy and
               David M. Blei},
  title     = {Deep Probabilistic Programming},
  booktitle = {5th International Conference on Learning Representations, {ICLR} 2017,
               Toulon, France, April 24-26, 2017, Conference Track Proceedings},
  publisher = {OpenReview.net},
  year      = {2017},
  url       = {https://openreview.net/forum?id=Hy6b4Pqee},
  bibsource = {dblp computer science bibliography, https://dblp.org}
}

@article{DBLP:journals/ita/LagoZ12,
  author    = {Ugo Dal Lago and
               Margherita Zorzi},
  title     = {Probabilistic operational semantics for the lambda calculus},
  journal   = {{RAIRO} Theor. Informatics Appl.},
  volume    = {46},
  number    = {3},
  pages     = {413--450},
  year      = {2012},
  url       = {https://doi.org/10.1051/ita/2012012},
  doi       = {10.1051/ita/2012012},
  bibsource = {dblp computer science bibliography, https://dblp.org}
}

@article{DBLP:journals/corr/abs-1103-4577,
  author    = {Yuxin Deng and
               Wenjie Du},
  title     = {Logical, Metric, and Algorithmic Characterisations of Probabilistic
               Bisimulation},
  journal   = {CoRR},
  volume    = {abs/1103.4577},
  year      = {2011},
  url       = {http://arxiv.org/abs/1103.4577},
  eprinttype = {arXiv},
  eprint    = {1103.4577},
  bibsource = {dblp computer science bibliography, https://dblp.org}
}

@article{DBLP:journals/njc/SegalaL95,
  author    = {Roberto Segala and
               Nancy A. Lynch},
  title     = {Probabilistic Simulations for Probabilistic Processes},
  journal   = {Nord. J. Comput.},
  volume    = {2},
  number    = {2},
  pages     = {250--273},
  year      = {1995},
  bibsource = {dblp computer science bibliography, https://dblp.org}
}

@article{Lago2017ProbabilisticTB,
  title={Probabilistic Termination by Monadic Affine Sized Typing},
  author={Ugo Dal Lago and Charles Grellois},
  journal={ACM Transactions on Programming Languages and Systems (TOPLAS)},
  year={2017},
  volume={41},
  pages={1 - 65}
}

@inproceedings{herman07,
    author = {David Herman and Aaron Tomb and Cormac Flanagan},
    title = {Space-efficient gradual typing},
    booktitle = {In Trends in Functional Programming (TFP},
    year = {2007}
}

@inproceedings{Ye2021TypeDirectedOS,
  title={Type-Directed Operational Semantics for Gradual Typing},
  author={Wenjia Ye and Oliveira, Bruno C. d. S. and Xuejing Huang},
  booktitle={ECOOP},
  year={2021}
}

@inproceedings{Huang2020ATO,
  title={A Type-Directed Operational Semantics For a Calculus with a Merge Operator},
  author={Xuejing Huang and Oliveira, Bruno C. d. S.},
  booktitle={ECOOP},
  year={2020}
}

@article{Lago2012ProbabilisticOS,
  title={Probabilistic operational semantics for the lambda calculus},
  author={Ugo Dal Lago and Margherita Zorzi},
  journal={ArXiv},
  year={2012},
  volume={abs/1104.0195}
}

@article{Jones1989APP,
  title={A probabilistic powerdomain of evaluations},
  author={C. Jones and Gordon D. Plotkin},
  journal={[1989] Proceedings. Fourth Annual Symposium on Logic in Computer Science},
  year={1989},
  pages={186-195}
}

@inproceedings{Ramsey2002StochasticLC,
  title={Stochastic lambda calculus and monads of probability distributions},
  author={Norman Ramsey and Avi Pfeffer},
  booktitle={POPL '02},
  year={2002}
}

@inproceedings{SahebDjahromi1978ProbabilisticL,
  title={Probabilistic LCF},
  author={Nasser Saheb-Djahromi},
  booktitle={MFCS},
  year={1978}
}

@inproceedings{Siek2015MonotonicRF,
  title={Monotonic References for Efficient Gradual Typing},
  author={Jeremy G. Siek and Michael M. Vitousek and Matteo Cimini and Sam Tobin-Hochstadt and Ronald Garcia},
  booktitle={ESOP},
  year={2015}
}

@inproceedings{Siek2009ExploringTD,
  title={Exploring the Design Space of Higher-Order Casts},
  author={Jeremy G. Siek and Ronald Garcia and Walid Taha},
  booktitle={ESOP},
  year={2009}
}

@inproceedings{garcia2013calculating,
  title={Calculating threesomes, with blame},
  author={Garcia, Ronald},
  booktitle={{Proceedings of the 18th ACM SIGPLAN International Conference on Functional programming}},
  pages={417--428},
  year={2013}
}

@INPROCEEDINGS{Sabry93reasoningabout,
    author = {Amr Sabry and Matthias Felleisen},
    title = {Reasoning about Programs in Continuation-Passing Style},
    booktitle = {LISP AND SYMBOLIC COMPUTATION},
    year = {1993},
    pages = {288--298},
    publisher = {}
}

@article{phippsEtAl:oopsla2021,
  author    = {Luna Phipps{-}Costin and
               Carolyn Jane Anderson and
               Michael Greenberg and
               Arjun Guha},
  title     = {Solver-based gradual type migration},
  journal   = {Proc. {ACM} Program. Lang.},
  volume    = {5},
  number    = {{OOPSLA}},
  pages     = {1--27},
  year      = {2021},
  url       = {https://doi.org/10.1145/3485488},
  doi       = {10.1145/3485488},
  bibsource = {dblp computer science bibliography, https://dblp.org}
}

@article{DANOS2011966,
title = {Probabilistic coherence spaces as a model of higher-order probabilistic computation},
journal = {Information and Computation},
volume = {209},
number = {6},
pages = {966-991},
year = {2011},
issn = {0890-5401},
doi = {https://doi.org/10.1016/j.ic.2011.02.001},
url = {https://www.sciencedirect.com/science/article/pii/S0890540111000411},
author = {Vincent Danos and Thomas Ehrhard}
}

@article{10.1145/3158147,
author = {Ehrhard, Thomas and Pagani, Michele and Tasson, Christine},
title = {Measurable Cones and Stable, Measurable Functions: A Model for Probabilistic Higher-Order Programming},
year = {2017},
issue_date = {January 2018},
publisher = {Association for Computing Machinery},
address = {New York, NY, USA},
volume = {2},
number = {POPL},
url = {https://doi.org/10.1145/3158147},
doi = {10.1145/3158147},
journal = {Proc. ACM Program. Lang.},
month = {dec},
articleno = {59},
numpages = {28}
}

@inproceedings{10.5555/3470152.3470193,
author = {Avanzini, Martin and Lago, Ugo Dal and Ghyselen, Alexis},
title = {Type-Based Complexity Analysis of Probabilistic Functional Programs},
year = {2021},
publisher = {IEEE Press},
booktitle = {Proceedings of the 34th Annual ACM/IEEE Symposium on Logic in Computer Science},
articleno = {41},
numpages = {13},
location = {Vancouver, Canada},
series = {LICS '19}
}

@article{DBLP:journals/csur/Mittal16b,
  author       = {Sparsh Mittal},
  title        = {A Survey of Techniques for Approximate Computing},
  journal      = {{ACM} Comput. Surv.},
  volume       = {48},
  number       = {4},
  pages        = {62:1--62:33},
  year         = {2016},
  url          = {https://doi.org/10.1145/2893356},
  doi          = {10.1145/2893356},
  bibsource    = {dblp computer science bibliography, https://dblp.org}
}

@article{DBLP:journals/cacm/CarbinMR16,
  author       = {Michael Carbin and
                  Sasa Misailovic and
                  Martin C. Rinard},
  title        = {Verifying quantitative reliability for programs that execute on unreliable
                  hardware},
  journal      = {Commun. {ACM}},
  volume       = {59},
  number       = {8},
  pages        = {83--91},
  year         = {2016},
  url          = {https://doi.org/10.1145/2958738},
  doi          = {10.1145/2958738},
  bibsource    = {dblp computer science bibliography, https://dblp.org}
}

@inproceedings{DBLP:conf/oopsla/BostonSGC15,
  author       = {Brett Boston and
                  Adrian Sampson and
                  Dan Grossman and
                  Luis Ceze},
  editor       = {Jonathan Aldrich and
                  Patrick Eugster},
  title        = {Probability type inference for flexible approximate programming},
  booktitle    = {Proceedings of the 2015 {ACM} {SIGPLAN} International Conference on
                  Object-Oriented Programming, Systems, Languages, and Applications,
                  {OOPSLA} 2015, part of {SPLASH} 2015, Pittsburgh, PA, USA, October
                  25-30, 2015},
  pages        = {470--487},
  publisher    = {{ACM}},
  year         = {2015},
  url          = {https://doi.org/10.1145/2814270.2814301},
  doi          = {10.1145/2814270.2814301},
  bibsource    = {dblp computer science bibliography, https://dblp.org}
}

@article{Ye24merge,
author = {Ye, Wenjia and Oliveira, Bruno C. d. S. and Toro, Mat\'{\i}as},
title = {Merging Gradual Typing},
year = {2024},
issue_date = {October 2024},
publisher = {Association for Computing Machinery},
address = {New York, NY, USA},
volume = {8},
number = {OOPSLA2},
url = {https://doi.org/10.1145/3689734},
doi = {10.1145/3689734},
journal = {Proc. ACM Program. Lang.},
month = oct,
articleno = {294},
numpages = {29}
}

@inproceedings{Hattori23,
author = {Hattori, Momoko and Kobayashi, Naoki and Sato, Ryosuke},
title = {Gradual Tensor Shape Checking},
year = {2023},
isbn = {978-3-031-30043-1},
publisher = {Springer-Verlag},
address = {Berlin, Heidelberg},
url = {https://doi.org/10.1007/978-3-031-30044-8_8},
doi = {10.1007/978-3-031-30044-8_8},
booktitle = {Programming Languages and Systems: 32nd European Symposium on Programming, ESOP 2023, Held as Part of the European Joint Conferences on Theory and Practice of Software, ETAPS 2023, Paris, France, April 22–27, 2023, Proceedings},
pages = {197–224},
numpages = {28},
location = {Paris, France}
}

@InProceedings{migeed24,
  author =	{Migeed, Zeina and Reed, James and Ansel, Jason and Palsberg, Jens},
  title =	{{Generalizing Shape Analysis with Gradual Types}},
  booktitle =	{38th European Conference on Object-Oriented Programming (ECOOP 2024)},
  pages =	{29:1--29:28},
  series =	{Leibniz International Proceedings in Informatics (LIPIcs)},
  ISBN =	{978-3-95977-341-6},
  ISSN =	{1868-8969},
  year =	{2024},
  volume =	{313},
  editor =	{Aldrich, Jonathan and Salvaneschi, Guido},
  publisher =	{Schloss Dagstuhl -- Leibniz-Zentrum f{\"u}r Informatik},
  address =	{Dagstuhl, Germany},
  URL =		{https://drops.dagstuhl.de/entities/document/10.4230/LIPIcs.ECOOP.2024.29},
  URN =		{urn:nbn:de:0030-drops-208786},
  doi =		{10.4230/LIPIcs.ECOOP.2024.29}
}

@inproceedings{DBLP:conf/icfp/BorgstromLGS16,
  author       = {Johannes Borgstr{\"{o}}m and
                  Ugo Dal Lago and
                  Andrew D. Gordon and
                  Marcin Szymczak},
  editor       = {Jacques Garrigue and
                  Gabriele Keller and
                  Eijiro Sumii},
  title        = {A lambda-calculus foundation for universal probabilistic programming},
  booktitle    = {Proceedings of the 21st {ACM} {SIGPLAN} International Conference on
                  Functional Programming, {ICFP} 2016, Nara, Japan, September 18-22,
                  2016},
  pages        = {33--46},
  publisher    = {{ACM}},
  year         = {2016},
  url          = {https://doi.org/10.1145/2951913.2951942},
  doi          = {10.1145/2951913.2951942},
  bibsource    = {dblp computer science bibliography, https://dblp.org}
}

@inproceedings{DBLP:conf/esop/MakOPW21,
  author       = {Carol Mak and
                  C.{-}H. Luke Ong and
                  Hugo Paquet and
                  Dominik Wagner},
  editor       = {Nobuko Yoshida},
  title        = {Densities of Almost Surely Terminating Probabilistic Programs are
                  Differentiable Almost Everywhere},
  booktitle    = {Programming Languages and Systems - 30th European Symposium on Programming,
                  {ESOP} 2021, Held as Part of the European Joint Conferences on Theory
                  and Practice of Software, {ETAPS} 2021, Luxembourg City, Luxembourg,
                  March 27 - April 1, 2021, Proceedings},
  series       = {Lecture Notes in Computer Science},
  volume       = {12648},
  pages        = {432--461},
  publisher    = {Springer},
  year         = {2021},
  url          = {https://doi.org/10.1007/978-3-030-72019-3\_16},
  doi          = {10.1007/978-3-030-72019-3\_16},
  bibsource    = {dblp computer science bibliography, https://dblp.org}
}

@article{DBLP:journals/pacmpl/ScibiorKVSYCOMH18,
  author       = {Adam {\'{S}}cibior and
                  Ohad Kammar and
                  Matthijs V{\'{a}}k{\'{a}}r and
                  Sam Staton and
                  Hongseok Yang and
                  Yufei Cai and
                  Klaus Ostermann and
                  Sean K. Moss and
                  Chris Heunen and
                  Zoubin Ghahramani},
  title        = {Denotational validation of higher-order Bayesian inference},
  journal      = {Proc. {ACM} Program. Lang.},
  volume       = {2},
  number       = {{POPL}},
  pages        = {60:1--60:29},
  year         = {2018},
  url          = {https://doi.org/10.1145/3158148},
  doi          = {10.1145/3158148},
  bibsource    = {dblp computer science bibliography, https://dblp.org}
}

@inproceedings{DBLP:conf/lics/HeunenKSY17,
  author       = {Chris Heunen and
                  Ohad Kammar and
                  Sam Staton and
                  Hongseok Yang},
  title        = {A convenient category for higher-order probability theory},
  booktitle    = {32nd Annual {ACM/IEEE} Symposium on Logic in Computer Science, {LICS}
                  2017, Reykjavik, Iceland, June 20-23, 2017},
  pages        = {1--12},
  publisher    = {{IEEE} Computer Society},
  year         = {2017},
  url          = {https://doi.org/10.1109/LICS.2017.8005137},
  doi          = {10.1109/LICS.2017.8005137},
  bibsource    = {dblp computer science bibliography, https://dblp.org}
}

@inproceedings{DBLP:conf/csl/HeijltjesM25,
  author       = {Willem Heijltjes and
                  Georgina Majury},
  editor       = {J{\"{o}}rg Endrullis and
                  Sylvain Schmitz},
  title        = {Simple Types for Probabilistic Termination},
  booktitle    = {33rd {EACSL} Annual Conference on Computer Science Logic, {CSL} 2025,
                  February 10-14, 2025, Amsterdam, Netherlands},
  series       = {LIPIcs},
  volume       = {326},
  pages        = {31:1--31:21},
  publisher    = {Schloss Dagstuhl - Leibniz-Zentrum f{\"{u}}r Informatik},
  year         = {2025},
  url          = {https://doi.org/10.4230/LIPIcs.CSL.2025.31},
  doi          = {10.4230/LIPICS.CSL.2025.31},
  bibsource    = {dblp computer science bibliography, https://dblp.org}
}

\clearpage
\appendix

%!TEX root = ../main.tex
%\tableofcontents
% \clearpage

\section{The Static Language \slang }
This section presents the type 
well-formedness definition (Definition~\ref{def:well-formed-static}), 
complete rules and proofs etc  
of \slang. The static semantics of \slang is shown in Figure~\ref{static-typing2}. 
In this section, we use blue color $(\sm)$ for static term when we prove the conservative 
extension because static source terms are coincide with static terms. 
In all proofs, $\since{}$ means "because" and $\so{}$ means "so".

\begin{figure}[t]
  \begin{small}
  \begin{displaymath}
    \begin{array}{r@{\hspace{0.3em}}c@{\hspace{0.8em}}l@{\hspace{2em}}l}
       \multicolumn{4}{c}{r\in \mathbb{R}, \quad b\in \mathbb{B}, \quad x \in
      \text{Var}, \quad \ps \in [0,1], \quad \tys \in \Type,  \quad \tya \in \DType}\\%[1.5ex]
    \tys       & ::=        & \rtype ~|~ \btype ~|~ \tys -> \tya   &
                                                                       \text{(simple types)} \\
      \tya        & ::=        & \d{\tys[i][\ps[i]]}[i \in \iSet] & \text{(distribution types)} \\%[1.5ex]
    \m, \n          
            & ::= & v ~|~ v \; w ~|~ \lett{x}{\m}{\n} ~|~ \m \spsum \n                    & \text{(terms)}\\ 
      &     & \asc{m}{\tya} ~|~ \asc{v}{\tys} ~|~ \ite{v}{m}{n} ~|~ \add{v}{w}  & \\
      %\mathscr{D}  & ::= & \dt{\pt{m_i}[\p[i]]~|~ i \in \iSet} & \text{term distributions}\\
      %\V & ::= &  \dt{\pt{\v[i]}[\p[i]]~|~ i \in \iSet} & \text{value distributions}\\
      v,w 
            & ::= & x ~|~ r ~|~ b ~|~ (\lambda x:\tys. \m) & \text{(values)}\\
    \end{array}
    \end{displaymath}
  \begin{flushleft}
    \framebox{$\Gamma |-ss \v : \tys$, $\quad \Gamma |-ss \m : \tya,
    \quad \Gamma |-ss \sV : \tya $}
  \end{flushleft}
    \begin{displaymath}
    \begin{array}{r@{\hspace{0.3em}}c@{\hspace{0.8em}}l@{\hspace{0.8em}}l}
       \sV & ::= &  \dt{ \pt{\v[i]}[\ps[i]] ~|~ i \in \iSet} & \text{(distribution values)}
    \end{array}
    \end{displaymath}\\
    \def \MathparLineskip {\lineskip=1.4ex}
    \begin{mathpar}
    \inference[(Tr)]{}
    {\Gamma |-ss r : \rtype} \and
    \inference[(Tb)]{}
    {\Gamma |-ss b : \btype} \and
    \inference[(Tx)]{\Gamma(x) = \tys}
    {\Gamma |-ss x : \tys} \and
    \inference[(Tv)]{\Gamma |-ss v : \tys}
    {\Gamma |-ss v : \ds{\tys[][1]} } \and
    \inference[(T$\lambda$)]{\Gamma, x:\tys |-ss \m : \tya & \jtf{}{\tys}}
    {\Gamma |-ss \lambda x:\tys. \m : \tys -> \tya} \and
    \inference[(T$::\tys$)]{\Gamma |-ss v : \tysp &  \tysp =_{s} \tys & \jtf{}{\tys} }
    {\Gamma |-ss \asc{v}{\tys} : \ds{\tys[][1]}}  \and
    % %
    \inference[(Tapp)]{
      \Gamma |-ss v : \tys[1]  &  
      \Gamma |-ss w :  \tys[2] & \dom(\tys[1]) =_{s} \tys[2] 
    }
    {\Gamma |-ss v\;w : \cod(\tys[1]) } \and
    \inference[(T$\oplus$)]{
      \Gamma |-ss \m : \tya[1]  &  
      \Gamma |-ss \n : \tya[2] & 
    }
    {\Gamma |-ss { \m} \spsum { \n} :  \ps \cdot \tya[1] + (1- \ps) \cdot \tya[2]} \and %\choice{\ps}{\tya[1]}{\tya[2]}
    \inference[(Tlet)]{
      \Gamma |-ss \m : \d{\tys[i][\ps[i]]}[i \in \iSet]  \\
      \forall i\in\iSet.~\Gamma, x : \tys[i] |-ss \n : \tya[i]
    }
    {\Gamma |-ss \lett{x}{\m}{\n} : \sum_{i \in \iSet} \ps[i] \cdot \tya[i]} \and
    \inference[(T$::\tya$)]{\Gamma |-ss \m : \tyap & \tyap =_{s} \tya & \jtf{}{\tya}
    } 
    {\Gamma |-ss \asc{ \m }{\tya} : \tya}\and
    \inference[(T$+$)]{
      \Gamma |-ss v : \tys[1]   &  \tys[1] =_{s} \rtype \\
      \Gamma |-ss w :  \tys[2]  &  \tys[2] =_{s} \rtype \\
    }
    {\Gamma |-ss \add{v}{w} : \ds{\rtype^1} } \and
    \inference[(Tif)]{
      \Gamma |-ss v : \tys  & \tys =_{s} \btype  \\
      \Gamma |-ss m : \tya & 
      \Gamma |-ss n : \tya 
    }
    {\Gamma |-ss \ite{v}{m}{n} : \tya } \and
    \inference[(V)]{
      \forall i \in \iSet. |-ss \v[i] : \tys[i] 
        }
        {\Gamma |-ss \d{ \pt{\v[i]}[\ps[i]]}[i \in \iSet]  : \d{ \tys[i][\ps[i]] }[i \in \iSet] } 
    \and
    \end{mathpar}\\[1.5ex]
    \begin{tabular}{l@{\hspace{3em}}l}
      $\dom:\Type \rightharpoonup \Type$ & $\cod:\Type \rightharpoonup \DType$ \\
      $\dom(\tys -> \tya) = \tys $ & $\cod(\tys -> \tya) = \tya $ \\
      $\dom(\tys)~\text{undef. otherwise} $  &  $ \cod(\tys)~\text{undef. otherwise}$ \\
    \end{tabular}\\[1.5ex] 
    \begin{tabular}{l}
    $\cdot: [0,1] \times\DType \rightarrow \DType$ \\
    $\ps \cdot \d{\tys[i][\ps[i]]}[i \in \iSet] =
      \d{\tys[i][\ps\cdot\ps[i]]}[i \in \iSet] $ \\[1.5ex]
   $+:\DType \times \DType \rightharpoonup \DType$ \\
   $\d{\tys[i][\ps[i]]}[i \in \iSet] + \d{\tys[j][\ps[j]]}[j \in \jSet]
      = \d{\tys[i][\ps[i]]}[i \in \iSet] \union \d{\tys[j][\ps[j]]}[j
      \in \jSet] \quad \text{if~} 
       \ssum[i \in \iSet] \ps[i] + \ssum[j \in \jSet] \ps[j] \leq 1$ 
    \end{tabular} 
    \caption{\slang.}
    \label{static-typing2}
\end{small}
  \end{figure}

  % \begin{figure}[t]
  %   \begin{flushleft}
  %     \framebox{$ m \snreds[h]{p}[k]  v$}
  %     \end{flushleft}
  % \begin{mathpar}
  % \inference{ m[v/x]  \snreds[h]{p}[k]    w }
  % {(\lambda x: \tys. m)\; v \snreds[h]{p}[k+1]    w } \and
  % %
  % \inference{{m} \snreds[h]{\psp}[k] v  }
  % {{m} \spsum {n} \snreds[L \cdot h]{p \cdot \psp}[k+1] v }\and
  % %
  % \inference{{n} \snreds[h]{\psp}[k] v  }
  % {{m} \spsum {n} \snreds[R \cdot h]{(1-p) \cdot \psp}[k+1]  v } \and
  % %
  % \inference{ m \snreds[h_1]{\ps[1]}[k_1]  v \and
  %  n[v/x] \snreds[h_2]{\ps[2]}[k_2]  w
  % }
  % {\lett{x}{m}{n} \snreds[h_1\cdot h_2]{\ps[1] \cdot \ps[2]}[1+k_1+k_2]    w }
  %  \and
  % %
  % \inference{ }
  % { v :: \tys \snreds[\epsilon]{1}[1] v } \and 
  % %
  % \inference{ m \snreds[h]{\ps}[k] v }
  % { m :: \tya \snreds[h]{\ps}[k+1] v } \and 
  % %
  % \inference{ m \snreds[h]{\ps}[k] v }
  % { \ite{\ttt}{m}{n}  \snreds[h]{\ps}[k+1] v } \and
  % %
  % \inference{  n \snreds[h]{\ps}[k] v }
  % { \ite{\fff}{m}{n}  \snreds[h]{\ps}[k+1] v } \and
  % %
  % \inference{  }
  % { \add{r_1}{r_2} \snreds[\epsilon]{\ps}[1] r_3
  % }\quad \text{where $\;
  % r_3 = r_1 + r_2$} 
  % \end{mathpar}
  % \caption{\slang: Sampling semantics}
  % \label{fig:static-sampling-reduction}
  % \end{figure}

\begin{figure}[t]
  \begin{flushleft}
    \framebox{$ m \snreds{}[k]  \sV $}
    \end{flushleft}
    % \begin{displaymath}
    %   \begin{array}{r@{\hspace{0.3em}}c@{\hspace{0.8em}}l@{\hspace{0.8em}}l}
    %      \sV & ::= &  \dt{ \pt{\v[i]}[\ps[i]] ~|~ i \in \iSet} & \text{(distribution values)}
    %   \end{array}
    %   \end{displaymath}\\
\begin{mathpar}
\inference{ }
  { v \snreds{}[1] \dt{ \pt{v} } } \and 
\inference{ m[v/x]  \snreds{}[k]    \sV }
{(\lambda x: \tys. m)\; v \snreds{}[k+1]    \sV } \and
\inference{{m} \snreds{}[k_1] \sV[1] &  {m} \snreds{}[k_2] \sV[2] }
{{m} \spsum {n} \snreds{}[1+k_1+k_2] \ps \cdot \sV[1] + (1- \ps) \cdot \sV[2] }\and
\inference{ m \snreds{}[k_1]   \dt{\pt{\v[i]}[\ps[i]]} \and
 \forall i. n[ \v[i] /x] \snreds{}[k_2]   \sV[i]
}
{\lett{x}{m}{n} \snreds{}[1+k_1+k_2]     \ssum[i \in \iSet] \ps[i] \cdot \sV[i] }
 \and
\inference{ }
{ v :: \tys \snreds{}[1] \dt{ \pt{v} } } \and 
\inference{ m \snreds{}[k]  \sV }
{ m :: \tya \snreds{}[k+1]  \sV } \and 
\inference{  }
{ \add{r_1}{r_2} \snreds{}[1] \dt{ \pt{r_3} }
\; \text{where} \;
r_3 = r_1 + r_2 } \and 
\inference{
  m \snreds{}[k]  \sV
 }
{ \ite{\ttt}{m}{n} \snreds{}[k+1] \sV  
 } 
 \and
 \inference{
  n \snreds{}[k]  \sV
 }
{ \ite{\fff}{m}{n} \snreds{}[k+1] \sV  
 } 
\end{mathpar}
\begin{tabular}{l}
  $\cdot: [0,1] \times\DValue \rightarrow \DValue$ \\
  $\ps \cdot \d{\pt{\v[i]}[\ps[i]]}[i \in \iSet] =
    \d{\pt{\v[i]}[\ps\cdot\ps[i]]}[i \in \iSet] $ \\[1.5ex]
 $+:\DValue \times \DValue \rightharpoonup \DValue$ \\
 $\d{\pt{\v[i]}[\ps[i]]}[i \in \iSet] + \d{\pt{\v[j]}[\ps[j]]}[j \in \jSet]
    = \d{\pt{\v[i]}[\ps[i]]}[i \in \iSet] \union \d{\pt{\v[j]}[\ps[j]]}[j
    \in \jSet] \quad \text{if~} 
     \ssum[i \in \iSet] \ps[i] + \ssum[j \in \jSet] \ps[j] \leq 1$ 
  \end{tabular} 
\caption{\slang: Distribution semantics}
\label{fig:static-dis-reduction}
\end{figure}

\subsection{Type System}

\begin{definition}[Well-formedness of types] 
  \begin{mathpar}
    \inference{}{\jtf{}{\rtype}} \and
    \inference{}{\jtf{}{\btype}} \and
    \inference{ \jtf{}{\tys} & \jtf{}{\tya}}
    {\jtf{}{\tys -> \tya}} \and
    \inference{
     \sum_{i \in \iSet} \ps[i] = 1 & 
     \forall i \in \iSet, \jtf{}{\tys[i]} 
    }
    { \jtf{}{\d{\tys[i][\ps[i]]}[i \in \iSet]} }
    \end{mathpar} 
  \label{def:well-formed-static}
\end{definition}

\begin{definition}[Well-formedness of contexts] 
  \begin{mathpar}
    \inference{}{\jtf{}{ \cdot }} \and
    \inference{ \jtf{}{\tys} }{\jtf{ }{ \Gamma  , x : \tys}} \and
    \end{mathpar} 
  \label{def:well-formed-static-ct}
\end{definition}

\begin{lemma}[Well-formedness (equality)]~ \label{lemma:wellformed-equal-static}
  % \change{
  \begin{enumerate}
  \item If $\justify{\tys[1]} \ad \tys[1] = \tys[2] $ then $\justify{\tys[2]}.$ 
  \item If $\justify{\tya[1]} \ad \tya[1] = \tya[2] $ then $\justify{\tya[2]}.$ 
  \end{enumerate}
  % }
\end{lemma}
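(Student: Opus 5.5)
We prove both items simultaneously by induction on the structure of types, or equivalently on the derivation of $\tys[1] = \tys[2]$ (resp.\ $\tya[1] = \tya[2]$). Here $=$ is understood as the coupling-based equality. By Lemma~\ref{couplingeq} this coincides with $=_{s}$, so either presentation can be used; we pick whichever is more convenient at each step.

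For simple types we invert the last rule of the equality derivation. The cases $\rtype = \rtype$ and $\btype = \btype$ are immediate, since $\tys[2]$ is syntactically a base type and base types are well-formed. In the arrow case we have $\tys[1] = \tys'_1 -> \tya'_1$ and $\tys[2] = \tys'_2 -> \tya'_2$ with $\tys'_1 = \tys'_2$ and $\tya'_1 = \tya'_2$. Inverting $\justify{\tys[1]}$ gives $\justify{\tys'_1}$ and $\justify{\tya'_1}$. The induction hypotheses (item~1 for the domain, item~2 for the codomain) then yield $\justify{\tys'_2}$ and $\justify{\tya'_2}$, and hence $\justify{\tys[2]}$.

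For distribution types, write $\tya[1] = \d{\tys[i][\ps[i]]}[i \in \iSet]$ and $\tya[2] = \d{\tys[j][q_j]}[j \in \jSet]$, and let $\CC = \{\var{i}{j}\}$ be a witness coupling, so that $\ps[i] = \sum_j \var{i}{j}$, $q_j = \sum_i \var{i}{j}$, and $\var{i}{j} > 0 \Rightarrow \tys[i] = \tys[j]$. Two obligations remain.

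\emph{Sum condition.} We compute $\sum_j q_j = \sum_j\sum_i \var{i}{j} = \sum_i \ps[i] = 1$, exchanging finite sums and using $\justify{\tya[1]}$.

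\emph{Well-formedness of each $\tys[j]$.} This is the main subtlety. The coupling only constrains those $\tys[j]$ with some $\var{i}{j} > 0$, i.e.\ those with $q_j > 0$. If $q_j > 0$, pick $i$ with $\var{i}{j} > 0$; then $\tys[i] = \tys[j]$ and $\justify{\tys[i]}$, so the induction hypothesis (item~1 on the structurally smaller $\tys[i]$) gives $\justify{\tys[j]}$. If $q_j = 0$, nothing in the coupling (nor in the $=_{s}$ rule, which ranges only over supports) says anything about $\tys[j]$. So the statement as literally read fails for zero-weight components carrying ill-formed types; for example, an ill-formed $\tys$ with a weight $0$ can be adjoined to $\tya[1]$.

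I would resolve this in one of two ways, trying the first. First, interpret well-formedness of distribution types modulo zero-probability entries, consistent with the semantic, support-based view of types; the rule then only quantifies over $\supp$. Alternatively, strengthen the hypothesis to require $\justify{\tya[2]}$ on zero-weight components, or observe that in every use site the types come from typing derivations. With that adjustment, the positive-weight argument above completes the proof. The induction is well-founded because we always recurse on syntactic subterms of $\tys[1]$/$\tya[1]$.
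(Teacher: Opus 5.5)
Your route is the paper's: exchange the finite sums of the witness coupling to get $\sum_j q_j=\sum_i p_i=1$, then transfer well-formedness to each component via item~1. The difference is the last step. The paper closes with ``by the induction hypothesis, $\forall j.\ \vdash\tau_j$'', but when $q_j=0$ there is no $i$ with $\omega_{ij}>0$, so there is nothing to apply the hypothesis to. You are right that this is a false statement, not just a missing detail. Take $\sigma=\rtype\to\ds{\rtype^{1/2}}$, which is ill-formed. The coupling $\omega_{11}=1$, $\omega_{12}=0$ witnesses $\ds{\rtype^{1}}=\ds{\rtype^{1},\sigma^{0}}$, and $=_s$ holds too since both sides have support $\{\rtype\}$ with mass $1$. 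Yet the right-hand side is not well-formed. Using both sides as codomains of $\rtype\to\cdot$ refutes item~1 as well, so the defect is not confined to item~2.

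Your first repair is sound. Make explicit the one place where the support restriction is used: from $\omega_{ij}>0$ you get $p_i\geq\omega_{ij}>0$, so $\tau_i$ lies in the support of $T_1$ and is well-formed under the modified rule; then item~1 applies.

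The second repair would have to be hereditary, i.e.\ zero-weight components must be well-formed at every nesting depth of $T_2$. Item~1 recurses into domains and codomains, where new zero-weight components can appear.

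Your remark about use sites holds up. In the proof that well-typed terms have well-formed types, the ascription cases take well-formedness directly from the $\vdash$ premises of the ascription rules. So nothing downstream relies on the literal statement.
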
 
\begin{proof}~
  \begin{enumerate}
    \item This case is trivial by the induction hypothesis. 
    \item Suppose $\tya[1] = \phty{\pphi[i]}{\d{ \tys[i][\ps[i]] }[i \in \iSet]} \ad
    \tya[2] =  \phty{\pphi[j]}{\d{ \tys[j][\ps[j]] }[j \in \jSet]}$ \\
    $\since{
      \justify{\tya[1]}
    }\\
    \so{
      \jtf{\pphi[i]}{ \ssum[i] \ps[i] = 1} 
    }\\
    \so{
      \forall i. \justify{\tys[i]}
    }\\
    \since{
      \tya[1] \rel \tya[2]
    }\\
    \so{
      \ssum[i] \ps[ij] = \ps[j]
    }\\
    \so{
      \ssum[j] \ps[ij] = \ps[i] 
    }\\
    \so{
      \ssum[j] \ps[j] 
    }\\
    \eq{
      \ssum[j] \ssum[i] \ps[ij]
    }\\
    \eq{
      \ssum[i] \ssum[j] \ps[ij]
    }\\
    \eq{
      1
    }\\
    \sentence{By the induction hypothesis,} \\
    \so{
      \forall j. \justify{\tys[j]}
    } \\
    \so{
     \justify{\tya[2]}
    }$
  \end{enumerate}
\end{proof}

\begin{lemma}[Well-formed types]~ \label{lemma:welltyped-wellform-static}
  % \change{
  \begin{enumerate}
  \item If $ \Gamma |-ss \v : \tys $ then $\justify{\tys}.$ 
  \item If $ \Gamma |-ss \m : \tya $ then $\justify{\tya}.$
  \end{enumerate}
  % }
\end{lemma}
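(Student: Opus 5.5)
We prove both items simultaneously by induction on the typing derivations of $\Gamma |-ss \v : \tys$ and $\Gamma |-ss \m : \tya$, using the standard auxiliary assumption that the environment is well formed in the sense of Definition~\ref{def:well-formed-static-ct}. This assumption is needed for rule (Tx). It is preserved when rule (T$\lambda$) extends $\Gamma$, since that rule checks $\jtf{}{\tys}$. For rule (Tlet), each $\tys[i]$ added to the environment is well formed because the induction hypothesis on $m$ makes its distribution type well formed, and hence every component. For the closed judgments (V) the environment is empty.

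The value cases are routine. (Tr) and (Tb) give base types, and (Tx) follows from the environment assumption. For (T$\lambda$), the premise $\jtf{}{\tys}$ and the induction hypothesis on the body give $\jtf{}{\tys -> \tya}$. For (Tv), the induction hypothesis gives $\jtf{}{\tys}$, and $\ds{\tys[][1]}$ has total mass $1$. The ascription rules (T$::\tys$) and (T$::\tya$) carry well-formedness of the annotation as an explicit premise, so nothing else is required; alternatively one could invoke Lemma~\ref{lemma:wellformed-equal-static}. (T$+$) returns $\ds{\rtype^1}$. For (Tif), the conclusion is the induction hypothesis on a branch. For (Tapp), the induction hypothesis gives $\jtf{}{\tys[1]}$. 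Since $\cod(\tys[1])$ is defined, $\tys[1]$ is a function type, and inverting the well-formedness rule for arrows yields $\jtf{}{\cod(\tys[1])}$.

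The substantive cases are (T$\oplus$) and (Tlet), where probabilities are combined. For (T$\oplus$), the induction hypothesis gives that $\tya[1]$ and $\tya[2]$ each have mass $1$ with well-formed components. Then $\ps \cdot \tya[1] + (1-\ps)\cdot \tya[2]$ has mass $\ps + (1-\ps) = 1 \le 1$, so the partial sum is defined, and its components are those of $\tya[1]$ and $\tya[2]$. For (Tlet), $\m$ has type $\d{\tys[i][\ps[i]]}[i \in \iSet]$ with $\sum_i \ps[i] = 1$, and each $\tya[i]$ has mass $1$. The total mass of $\sum_i \ps[i]\cdot\tya[i]$ is therefore $\sum_i \ps[i]\cdot 1 = 1$, and all components are well formed.

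The main obstacle is making these mass computations rigorous. Scaling multiplies the total mass by the scalar, and the binary $+$ adds masses, so the iterated sum must be shown defined at every intermediate stage, where the partial masses stay $\le 1$. We would isolate a small lemma: for well-formed $\tya$, the mass of $\ps\cdot\tya$ is $\ps$, and $\sum_i \ps[i]\cdot\tya[i]$ is defined with mass $\sum_i \ps[i]$ whenever this quantity is at most $1$. The lemma is proved by induction on $|\iSet|$. Rule (V) is handled similarly: its mass equals that of the distribution value, which is $1$ for the distribution values arising from the semantics, and each $\tys[i]$ is well formed by the induction hypothesis on values.
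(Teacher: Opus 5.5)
Your proposal is correct and follows the same route as the paper's proof. Both argue by induction on the typing derivation, with the identities $p + (1-p) = 1$ and $\sum_i p_i = 1$ doing the real work in the (T$\oplus$) and (Tlet) cases. Your explicit hypothesis that $\Gamma$ is well formed makes precise the paper's informal remark that variables are bound by $\lambda$- and let-binders with well-formed types, and it is genuinely needed: for $\Gamma = x : \rtype \rightarrow \ds{\rtype^{\frac{1}{2}}}$, rule (Tx) already yields an ill-formed type. Your remark on rule (V) can be dropped. Distribution values are not terms, so (V) never occurs in the derivations the lemma covers, which matters because (V) by itself does not force total mass $1$.
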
 
\begin{proof}~
  \begin{enumerate}
    \item The proof follows by induction on the typing derivation.
    \begin{case}[$v = r, b $]
      $\rtype$ and $\btype$ types are well-formed. 
    \end{case}
    \begin{case}[$v = \lambda x:\tys. \m $]
      $\\
      \since{ \\
        \inference[(T$\lambda$)]{\Gamma, x:\tys |-ss \m : \tya & \jtf{}{\tys}}
        {\Gamma |-ss \lambda x:\tys. \m : \tys -> \tya}
      } \\
      \sentence{
        By the induction hypothesis,
      } \\
      \so{
       \justify{\tya}
      } \\
      \so{
        \justify{\tys -> \tya}
      }$
    \end{case}
    \begin{case}[$ v = x$]
      $ 
      \sentence{
        variables x come from lambda and let terms with well-formed types.
      }
      $
    \end{case}

    \item The proof follows by induction on the typing derivation.
    \begin{case}[$ m = \asc{v}{\tys} $]
      $ \\
      \since{
      \inference[(T$::\tys$)]{\Gamma |-ss v : \tysp &  \tysp =_{s} \tys & \jtf{}{\tys} }
      {\Gamma |-ss \asc{v}{\tys} : \ds{\tys[][1]}} 
      } \\
      % \sentence{
      %   By the induction hypothesis,
      % } \\
      % \so{
      %   \justify{\tysp}
      % }\\
      % \since{
      %   \tysp =_{s} \tys
      % }\\
      % \sentence{By Lemma}~\ref{lemma:wellformed-equal-static} \ad \sentence{Lemma}~\ref{couplingeq}, \\
      \so{
        \justify{\tys}
      }
      $
    \end{case}
    \begin{case}[$ m = \asc{v}{\tys} $]
      $ \\
      \since{
        \inference[(T$::\tya$)]{\Gamma |-ss \m : \tyap & \tyap =_{s} \tya & \jtf{}{\tya}
        } 
        {\Gamma |-ss \asc{ \m }{\tya} : \tya}
      } \\
      % \sentence{
      %   By the induction hypothesis,
      % } \\
      % \so{
      %   \justify{\tyap}
      % }\\
      % \since{
      %   \tyap =_{s} \tya
      % }\\
      % \sentence{By Lemma}~\ref{lemma:wellformed-equal-static} \ad \sentence{Lemma}~\ref{couplingeq}, \\
      \so{
        \justify{\tya}
      }
      $
    \end{case}
    \begin{case}[$ m = v\;w $]
      $ \\
      \since{
        \inference[(Tapp)]{
          \Gamma |-ss v : \tys[1]  &  
          \Gamma |-ss w :  \tys[2] & \dom(\tys[1]) =_{s} \tys[2] 
        }
        {\Gamma |-ss v\;w : \cod(\tys[1]) } 
      } \\
      \sentence{
        By the induction hypothesis,
      } \\
      \so{
        \justify{\tys[1]}
      }\\
      \so{
        \justify{\tys[2]}
      }\\
      \so{
        \justify{\cod(\tys[1])}
      }
      $
    \end{case}
    \begin{case}[$ m = { \m} \spsum { \n} $]
      $ \\
      \since{
        \inference[(T$\oplus$)]{
    \Gamma |-ss \m : \tya[1]  &  
    \Gamma |-ss \n : \tya[2] & 
  }
  {\Gamma |-ss { \m} \spsum { \n} :  \ps \cdot \tya[1] + (1- \ps) \cdot \tya[2]} 
      } \\
      \sentence{
        By the induction hypothesis,
      } \\
      \so{
        \justify{\tya[1]}
      }\\
      \so{
        \justify{\tya[2]}
      }\\
      \since{
        \ps + (1- \ps) = 1
      }\\
      \so{
        \justify{ \ps \cdot \tya[1] + (1- \ps) \cdot \tya[2] }
      } 
      $
    \end{case}
    \begin{case}[$ m = \lett{x}{\m}{\n} $]
      $ \\
      \since{
        \inference[(Tlet)]{
          \Gamma |-ss \m : \d{\tys[i][\ps[i]]}[i \in \iSet]  \\
          \forall i\in\iSet.~\Gamma, x : \tys[i] |-ss \n : \tya[i]
        }
        {\Gamma |-ss \lett{x}{\m}{\n} : \sum_{i \in \iSet} \ps[i] \cdot \tya[i]}
      } \\
      \sentence{
        By the induction hypothesis,
      } \\
      \so{
        \justify{\d{\tys[i][\ps[i]]}[i \in \iSet] }
      }\\
      \so{
        \justify{\tys[i]}
      }\\
      \so{
        \tya[i]
      }\\
      \since{
        \sum_{i \in \iSet} \ps[i] = 1 
      }\\
      \so{
        \justify{ \sum_{i \in \iSet} \ps[i] \cdot \tya[i] }
      } 
      $
    \end{case}
    \begin{case}[$ m = + $]
      $ \\
      \since{
        \inference[(T$+$)]{
            \Gamma |-ss v : \tys[1]   &  \tys[1] =_{s} \rtype \\
            \Gamma |-ss w :  \tys[2]  &  \tys[2] =_{s} \rtype \\
          }
          {\Gamma |-ss \add{v}{w} : \ds{\rtype^1} }
      }\\
      \since{\justify{\rtype} } \\
    \so{\justify{ \ds{\rtype^1} } }
      $
    \end{case}
    \begin{case}[$ m = if $]
      $ \\
      \since{
          \inference[(Tif)]{
          \Gamma |-ss v : \tys  & \tys =_{s} \btype  \\
          \Gamma |-ss m : \tya & 
          \Gamma |-ss n : \tya 
        }
        {\Gamma |-ss \ite{v}{m}{n} : \tya }
      }\\
      \sentence{By the induction hypothesis,}\\
      \so{\justify{\tya}}
      $
    \end{case}
  \end{enumerate}
\end{proof}

\subsection{Conservative Extension}
\begin{figure}[t]
  \begin{align*}
    \rlV{\btype} &= \{(b, \trg{\asc{\ev \ttb}{\btype}}) \in \Atom{\btype}  ~|~ b = \ttb \} \\
    \rlV{\rtype} &= \{(r, \trg{\asc{\ev \ttr }{\rtype}}) \in \Atom{\rtype} ~|~ r = \ttr \} \\
    \rlV{\functype{\tys}{\tya}} &= \{(\static{v_1}, \trg{v_2}) \in \Atom{\functype{\tys}{\tya}} ~|~ 
    \forall (\static{v'_1}, \trg{v'_2}) \in \rlV{\tys}.
    (\static{v_1 \; v'_1}, \trg{v_2 \; v'_2}) \in \rlT{\tya} \} \\
     \rlV{\tya} &= \{ (\sV,\V) ~|~ 
     \sV = \d{\pt{\static{v_i}}[\ps[i]]}[i \in \iSet], 
     \V = \d{\trg{v^{\ps[j]}_j}}[j \in \jSet],
     \exists \evd = \d{\tys[k][\cww[k]]}[k \in \kSet],
     \\
     & (\evd, \sV, \V) \in \Atom{\tya}, \forall \cww[k]>0, i = \projl{\cww[k]}, j = \projr{\cww[k]}. (\static{v_{i}}, \trg{v_{j}}) \in \rlV{\tys[k]}
      \} ) \\
     % \reoder{\tya}{ \dt{\tys[i][\ps[i]]} }, 
     % \reoder{\tya}{\dt{\tys[j][\ps[j]]}}, 
    %  \\
    %  & (\gtya, \d{\pt{\static{v_i}}[\ps[i]]}[i \in \iSet],\d{\trg{v^{\ps[j]}_j}}[j \in \jSet]) \in \Atom{ (\dt{\tys[i][\ps[i]]}, \dt{\tys[j][\ps[j]]} ) }, \\
    % & \exists \{ \fcw{i}{j} | i \in \iSet, j \in \jSet \}. \ssum[i] \fcw{i}{j} = \ps[j], \ssum[j] \fcw{i}{j} = \ps[i].
    % (\fcw{i}{j} >0 => (\static{v_i}, \trg{v_j}) \in \rlV{\tys[i]}
    %   \} ) \\
    %
    \rlT{\tya} &= \{ (m_1,\tm[2]) ~|~ 
    m_1 \snreds{*} \sV[1] \land
    \tm[2] \jreds[*][] \V[2],  (\sV[1], \V[2]) \in \rlV{\tya} \}\\
     \rlG{\cdot} &= \{ \emptyset \} \\
     \rlG{\bga, x : \tys} &= \{ \gamma [(v,\tvp)/x] \;|\; \ga \in \rlG{\Gamma}
     \land (v,\tvp) \in \rlV{\tys}   \} \\
     \Atom{\tys} &= \{  (v,\tvp) \;|\; |-ss v : \tys \land |-t \tvp : \tys    \} \\
     %%
     % \Atom{(\tya[1], \tya[2])} &= \{  (\sV, \V) \;|\; |-ss \sV : \tya[1] \land |-t \V : \tya[2]    \} \\
     \Atom{\tya} &= \{  (\evd, \sV, \V) \;|\; |-ss \sV : \tya[1] \land |-t \V : \tya[2] \land \jtf{\evd}{\reoder{\tya[1]}{\tya[2]}} \land \reoder{\tya}{\tya[i]} \} \\
     \rappro{\Gamma}{m_1}{\tm[2]}{\tya}
   & \iff 
   \forall (\gamma_1, \gamma_2) \in \rlG{\Gamma}
   => (\gamma_1(m_1), \gamma_2(\tm[2])) \in \rlT{\tya} \\
    \end{align*}
      \caption{Logical relation between \slang and \tlang.}
    \label{fig:lr}
    \end{figure}  

We establish the equivalence between the dynamic semantics, by using logical relations between \slang and \tlang terms
in Figure~\ref{fig:lr}. 
Note that the dynamic semantics of \slang are presented using distribution semantics
in Figure~\ref{fig:static-dis-reduction}. 

\begin{lemma}[Associativity of Consistency Transitivity]~ \label{lemma:associativity}
  \begin{enumerate}
    \item $\ift{ 
      \jtf{\ev[1]}{ \gtys[1] \rel \gtys[2] }, 
      \jtf{\ev[2]}{ \gtys[2] \rel \gtys[3] } \ad
      \jtf{\ev[3]}{ \gtys[3] \rel \gtys[4] },
     }{ 
      (\ev[1] \trans{} \ev[2]) \trans{} \ev[3] = \ev[1] \trans{} (\ev[2] \trans{} \ev[3])
      }$
    \item $\ift{
      \jtf{\evd[1]}{ \gtya[1] \rel \gtya[2] }, 
      \jtf{\evd[2]}{ \gtya[2] \rel \gtya[3] } \ad
      \jtf{\evd[3]}{ \gtya[3] \rel \gtya[4] },
     }{ 
      (\evd[1] \trans{} \evd[2]) \trans{} \evd[3] = \evd[1] \trans{} (\evd[2] \trans{} \evd[3])
     }$
  \end{enumerate}
\end{lemma}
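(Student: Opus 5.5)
For simple evidences the plan is a structural induction on the formula simple types, following the clauses of the meet. If some $\ev[i]$ is $\?$, it is the unit of $\meet$, and both sides collapse to the meet of the other two; for example $(\? \meet \ev[2]) \meet \ev[3] = \ev[2] \meet \ev[3] = \? \meet (\ev[2] \meet \ev[3])$. The same holds when the unknown occurs in the middle or last position. If all three are base types, the meet is defined only when they coincide, and then both sides equal that base type. If all three are function types $\ev[i] = \gtys[i] -> \gtya[i]$, both sides unfold to $((\gtys[1]\meet\gtys[2])\meet\gtys[3]) -> ((\gtya[1]\meet\gtya[2])\meet\gtya[3])$, respectively the right-nested version. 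We then conclude by the induction hypothesis on domains and by part 2 on codomains, proving both parts simultaneously. The hypotheses $\jtf{\ev[i]}{\cdots}$ only guarantee that the middle types are shared; definedness of one side transfers to the other because both sides fail exactly when some pair of top-level constructors clash.

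For distribution evidences I unfold $\witness{\cdot}{\cdot}{\meet}$ twice on each side.

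Let $\evd[1] = \phty{\phi[][1]}{\d{\gtys[i][\cww[i]]}[i\in\iSet]}$, $\evd[2] = \phty{\phi[][2]}{\d{\gtys[j][\cww[j]]}[j\in\jSet]}$ and $\evd[3] = \phty{\phi[][3]}{\d{\gtys[l][\cww[l]]}[l\in\lSet]}$. The left-hand side is a distribution indexed by pairs $((i,j),l)$, carrying types $(\gtys[i]\meet\gtys[j])\meet\gtys[l]$ and weights $\cww[(ij)l]$. Its formula asserts the following:
\begin{itemize}
\item the $\cww[ij]$ form a coupling of $\evd[1],\evd[2]$, with $\cww[ij]>0$ only when $\gtys[i]\meet\gtys[j]$ is defined;
\item the $\cww[(ij)l]$ form a coupling of that result with $\evd[3]$, again supported on defined meets.
\end{itemize}
The right-hand side is symmetric, indexed by $(i,(j,l))$.

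I will identify both index sets with triples $(i,j,l)$ and use the first part of the lemma pointwise to equate the simple types. What remains is to show that the two formulas, under the existential closure required for definedness, carry the same constraints on the triple weights $\cww[ijl]$, namely:
\begin{itemize}
\item the three marginal constraints (summing out two indices recovers $\cww[i]$, $\cww[j]$, $\cww[l]$ respectively);
\item the support condition that $\cww[ijl]>0$ implies $\gtys[i]\meet\gtys[j]\meet\gtys[l]$ is defined.
\end{itemize}
The left-hand intermediate variables $\cww[ij]$ are then recovered as $\sum_l \cww[ijl]$, and those of the right-hand side as $\sum_i \cww[ijl]$. The tags $\projl{\cdot}$ and $\projr{\cdot}$ also match, since both sides inherit the left tag from $\evd[1]$ and the right tag from $\evd[3]$.

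The main obstacle is that the two sides are literally different formula types, with different fresh variables, so "$=$" must be read up to this reindexing, i.e. as the reordering relation $\reoderSym$, or as equivalence of formula types up to renaming and elimination of existentially quantified intermediate variables. I would state that equivalence explicitly and then prove the two directions of the formula equivalence.

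Going from the nested coupling to triple weights is immediate: take $\cww[ijl] := \cww[(ij)l]$. The converse direction is the hard step, since a coupling of the right-hand side must be turned into one of the left. For it I would try the gluing construction: given $\cww[i(jl)]$, set $\cww[ij] := \sum_l \cww[i(jl)]$ and keep the same triple weights. The support condition is the point needing care. A positive triple weight forces $\gtys[j]\meet\gtys[l]$ and $\gtys[i]\meet(\gtys[j]\meet\gtys[l])$ to be defined, and by the simple-type part this makes $(\gtys[i]\meet\gtys[j])\meet\gtys[l]$ defined, hence also $\gtys[i]\meet\gtys[j]$. If the equality is required syntactically rather than up to reordering, I would fall back to proving it modulo $\reoderSym$ and note that all uses of the lemma are insensitive to reordering.
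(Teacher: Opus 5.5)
Your proposal is correct and follows essentially the same route as the paper. The paper also treats simple evidences by structural induction. For distribution evidences it identifies the nested weights $\omega_{i(jk)} = \omega_{(ij)k}$, checks the marginal equations, and concludes that the two sides are equal only up to the reordering relation, as you anticipated. Your version is more careful than the paper's on three points it leaves implicit: transferring definedness between the two bracketings, checking the support condition via part 1, and confirming that the left and right tags agree.
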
 
\begin{proof}~
  \begin{enumerate}
    \item The proof follows by induction on evidences and based on the 
    Lemma~\ref{meet-more-precise}.
    \item Suppose $ \evd[1] = \phty{\phi[][i]}{\dt{\gtys[i][\p[i]]}}$,  $ \evd[2] = \phty{\phi[][j]}{\dt{\gtys[j][\p[j]]}}$
    and $ \evd[k] = \phty{\phi[][k]}{\dt{\gtys[k][\p[k]]}}$.\\
    $\since{
      \evd[1] \trans{} \evd[2]
    }\\
    \eq{
      \phty{\phi[][i]}{\dt{\gtys[i][\p[i]]}} \meet \phty{\phi[][j]}{\dt{\gtys[j][\p[j]]}}
    }\\
    \since{
      (\evd[1] \trans{} \evd[2]) \trans{} \evd[3]
    }\\
    \eq{
      \phty{\phi[][ij]}{\d{\gtys[ij][\cww[ij]]}[\gtys[i] \meet \gtys[j] \text{is defined}]} \meet \phty{\phi[][k]}{\dt{\gtys[k][\p[k]]}}
    }\\
    \eq{
      \phty{\phi[][(ij)k]}{\d{\gtys[(ij)k][\cww[(ij)k]]}[\gtys[ij] \meet \gtys[k] \text{is defined}]}
    }\\
    \since{
      \evd[2] \trans{} \evd[3]
    }\\
    \eq{
      \phty{\phi[][j]}{\dt{\gtys[j][\p[j]]}} \meet \phty{\phi[][k]}{\dt{\gtys[k][\p[k]]}}
    }\\
    \since{
      \evd[1] \trans{} (\evd[2] \trans{} \evd[3])
    }\\
    \eq{
      \phty{\phi[][i]}{\dt{\gtys[i][\p[i]]}} \meet \phty{\phi[][jk]}{\d{\gtys[jk][\p[jk]]}[\gtys[j] \meet \gtys[k] \text{is defined}]}
    }\\
    \eq{
      \phty{\phi[][i(jk)]}{\d{\gtys[i(jk)][\cww[i(jk)]]}[\gtys[i] \meet \gtys[jk] \text{is defined}]}
    }\\
    \sentence{By the induction hypothesis,} \\
    {  (\gtys[i] \trans{} \gtys[j]) \trans{} \gtys[k] = 
        \gtys[i] \trans{} (\gtys[j] \trans{} \gtys[k]) }\\
    \so{\\
    \sentence{we need to show:}
    }\\
    {
      \reoder{ \phty{\phi[][i(jk)]}{\dt{\gtys[i(jk)][\cww[i(jk)]]}} }{ \phty{\phi[][(ij)k]}{\dt{\gtys[(ij)k][\cww[(ij)k]]}} }
    }\\
    \since{
      \phty{\phi[][i]}{\dt{\gtys[i][\p[i]]}} \meet \phty{\phi[][j]}{\dt{\gtys[j][\p[j]]}}
    }\\
    \so{
     \ssum[i] \cww[ij] = \p[j]
    }\\
    \so{
     \ssum[j] \cww[ij] = \p[i]
    }\\
    \since{
      \phty{\phi[][ij]}{\dt{\gtys[ij][\cww[ij]]}} \meet \phty{\phi[][k]}{\dt{\gtys[k][\p[k]]}}
    }\\
    \so{
     \ssum[(ij)] \cww[(ij)k] = \p[k]
    }\\
    \so{
     \ssum[k] \cww[(ij)k] = \cww[ij]
    }\\
    \since{
      \phty{\phi[][j]}{\dt{\gtys[j][\p[j]]}} \meet \phty{\phi[][k]}{\dt{\gtys[k][\p[k]]}}
    }\\
    \so{
     \ssum[k] \cww[jk] = \p[j]
    }\\
    \so{
     \ssum[j] \cww[jk] = \p[k]
    }\\
    \since{
      \phty{\phi[][jk]}{ \dt{\gtys[jk][\cww[jk]]} } \meet \phty{\phi[][i]}{\dt{\gtys[i][\p[i]]}}
    }\\
    \so{
     \ssum[(jk)] \cww[i(jk)] = \p[i]
    }\\
    \so{
     \ssum[i] \cww[i(jk)] = \cww[jk]
    }\\
    \sentence{Suppose}\; {\cww[i(jk)] = \cww[(ij)k]} \\
    \so{ \ssum[i] \cww[(ij)k] = \cww[jk] }\\
    \so{ \ssum[j] \ssum[i] \cww[(ij)k] = \ssum[j] \cww[jk] }\\
    \so{ \ssum[j] \ssum[i] \cww[(ij)k] = \ssum[j] \cww[jk] }\\
    \so{ \p[k] = \p[k] }\\
    \\
    \so{ \ssum[jk] \cww[(ij)k] = \p[i] }\\
    \so{ \ssum[i] \ssum[jk] \cww[(ij)k] = \ssum[i] \p[i] }\\
    \so{ 1 = 1 }\\
    \\
    \so{ \ssum[ij] \cww[i(jk)] = \p[k] }\\
    \so{ \ssum[j] \cww[jk] = \p[k]  }\\
    \so{  \p[k] = \p[k]  }\\
     \\
     \so{ \ssum[k] \cww[i(jk)] = \p[ij] }\\
     \so{ \p[i] = \p[i]  }\\
    \so{ \cww[i(jk)] = \cww[(ij)k] holds }\\
    $
    The result holds.
  \end{enumerate}
\end{proof}

\begin{lemma}[Static Reorder transitivity]\label{srtrans}
  $\\$
  \begin{itemize}
    \item 
    $\ift{ \reoder{ \tys[1]}{\tys[2]}, \reoder{ \tys[2]}{ \tys[3] } }{ \reoder{ \tys[1] }{\tys[3] } }$
    \item  
    $\ift{ \reoder{  \dt{\tys[i][\ps[i]]}  }{  \dt{\tys[j][\ps[j]]} }, 
    \reoder{   \dt{\tys[j][\ps[j]]}  }{ \dt{\tys[k][\ps[k]]} } }{
      \reoder{   \dt{\tys[i][\ps[i]]} }{   \dt{\tys[k][\ps[k]]} } }$
  \end{itemize}
\end{lemma}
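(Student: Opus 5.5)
We argue by induction on the structure of $\tys[1]$ (equivalently, on the derivation of $\reoder{\tys[1]}{\tys[2]}$). The two items are proved simultaneously, since function types contain distribution types and the distribution case relies on the simple-type case for the related components.

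For simple types, inversion on $\reoder{\tys[1]}{\tys[2]}$ settles the shape. If $\tys[1]$ is $\rtype$ or $\btype$, then $\tys[2]$ is the same base type, and so is $\tys[3]$; the conclusion follows by the corresponding axiom. If $\tys[1] = \tys -> \tya$, then $\tys[2] = \tysp -> \tyap$ and $\tys[3]$ is again an arrow type. Reading the rule as written, the domain premise is $\reoder{\tysp}{\tys}$, so the domains are related in the reversed direction. The induction hypothesis therefore has to be applied to the reversed pair of domain judgments, which is harmless because we are proving transitivity for every chain. The codomains are handled by the distribution case (item 2).

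For distribution types, assume witnesses $\{\fcw{i}{j}\}$ for $\reoder{\dt{\tys[i][\ps[i]]}}{\dt{\tys[j][\ps[j]]}}$ and $\{\fcw{j}{k}\}$ for $\reoder{\dt{\tys[j][\ps[j]]}}{\dt{\tys[k][\ps[k]]}}$. Here $\sum_j \fcw{i}{j} = \ps[i]$, $\sum_i \fcw{i}{j} = \ps[j] = \sum_k \fcw{j}{k}$, $\sum_j \fcw{j}{k} = \ps[k]$, and every positive weight relates components that are $\reoderSym$. Since all probabilities are concrete reals, we use the standard gluing of couplings:
\[
\ccw{i}{k} \;=\; \sum_{j \,\mid\, \ps[j] > 0} \frac{\fcw{i}{j}\cdot\fcw{j}{k}}{\ps[j]} .
\]
The marginals follow by exchanging sums. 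For instance,
\[
\sum_k \ccw{i}{k} = \sum_{j \mid \ps[j]>0} \frac{\fcw{i}{j}\,\ps[j]}{\ps[j]} = \sum_{j \mid \ps[j]>0} \fcw{i}{j} = \ps[i],
\]
where the last step uses that $\fcw{i}{j} = 0$ whenever $\ps[j] = 0$, because $\fcw{i}{j} \le \sum_i \fcw{i}{j} = \ps[j]$. The symmetric computation gives $\sum_i \ccw{i}{k} = \ps[k]$. For the support condition, $\ccw{i}{k} > 0$ forces some $j$ with $\fcw{i}{j} > 0$ and $\fcw{j}{k} > 0$. Hence $\reoder{\tys[i]}{\tys[j]}$ and $\reoder{\tys[j]}{\tys[k]}$, and item 1 of the induction hypothesis gives $\reoder{\tys[i]}{\tys[k]}$. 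Applying the coupling rule of Definition~\ref{def:reorder} then concludes.

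The main obstacle is handling zero-probability middle components and making sure the induction is well founded. The division in $\ccw{i}{k}$ is only well defined because indices $j$ with $\ps[j] = 0$ are excluded, and excluding them is safe by the bound $\fcw{i}{j} \le \ps[j]$ noted above. For well-foundedness, the appeal to item 1 inside item 2 is on strictly smaller subterms of the outer types, so the mutual induction on type size goes through.
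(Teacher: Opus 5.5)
Your proof is correct and has the same outline as the paper's: simple types are handled structurally, and for distribution types the two witnesses are composed. Your composed witness is different from the paper's, though, and yours is the one that works. The paper sets $\omega_{ik}=\sum_j\omega_{ij}\,\omega_{jk}$ without normalization and claims $\sum_i\sum_j\omega_{ij}\omega_{jk}=\sum_i p_i\,p_k=p_k$. That step is invalid, because the left-hand side equals $\sum_j p_j\,\omega_{jk}$. For instance, take three copies of $\{\rtype^{1/2},\btype^{1/2}\}$ with diagonal witnesses $\omega_{11}=\omega_{22}=\tfrac12$; the unnormalized product then has total mass $\tfrac12$.

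Your normalized gluing $r_{ik}=\sum_{j\mid p_j>0}\omega_{ij}\omega_{jk}/p_j$ has the correct marginals. Discarding the zero-mass middle indices is justified by $\omega_{ij}\le p_j$ and $\omega_{jk}\le p_j$. You also check the support condition via item~1 and treat the arrow case explicitly. The paper omits the support check and dismisses the arrow case as trivial.

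One point to tighten. You open with induction on the structure of $\tau_1$ (equivalently, on the derivation of $\tau_1\reoderSym\tau_2$), but that does not cover the reversed domain step. Write $\tau_1=\sigma_a\to T_a$, $\tau_2=\sigma_b\to T_b$, $\tau_3=\sigma_c\to T_c$. The chain you need is $\sigma_c\reoderSym\sigma_b\reoderSym\sigma_a$, which starts at a subterm of $\tau_3$. Since zero-weight entries of a distribution type may be arbitrary types, $\sigma_c$ need not be smaller than $\tau_1$.

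Either of the following fixes it:
\begin{itemize}
\item Take the total size of the three types (or of the two derivations) as the induction measure, as your last sentence hints.
\item First prove that $\reoderSym$ is symmetric by transposing the witness, then flip the domain chain so it starts at $\sigma_a$.
\end{itemize}
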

\begin{proof}
  $\\$
  \begin{itemize}
    \item (non-distribution types) trivial cases.
    \item (distribution types) \\
    $ 
    \since{\reoder{   \dt{\tys[i][\ps[i]]}  }{  \dt{\tys[j][\ps[j]]} }
    } \\
    \so{\ssum[i] \cww[ij] = \ps[j] \ad \ssum[j] \cww[ij] = \ps[j]}\\
    \since{\reoder{   \dt{\tys[j][\ps[j]]}  }{  \dt{\tys[k][\ps[k]]} }
    } \\
    \so{\ssum[j] \cww[jk] = \ps[k] \ad \ssum[k] \cww[jk] = \ps[j]}\\
    \sentence{we need to show, 
    }\\
    {\ssum[i] \cww[ik] = \ps[k] \ad \ssum[k] \cww[ik] = \ps[i] } \\
    \sentence{Suppose} \; 
    {\cww[ik] = \ssum[j] \cww[ij] \cdot \cww[jk] }\\
    \so{
      \ssum[i] \cww[ik] = 
    }\\ 
    \eq{
      \ssum[i] \ssum[j] \cww[ij] \cdot \cww[jk]
    }\\ 
    \since{
      \ssum[j] \cww[ij] = \ps[i] \ad \ssum[j] \cww[jk] = \ps[k]
    }\\ 
    \so{\\
    \eq{
      \ssum[i] \ps[i] \cdot \ps[k]
    }
    }\\
    \eq{
     \ps[k]
    }\\
    \\
    \so{
      \ssum[k] \cww[ik] = 
    }\\ 
    \eq{
      \ssum[k] \ssum[j] \cww[ij] \cdot \cww[jk]
    }\\ 
    \since{
      \ssum[j] \cww[ij] = \ps[i] \ad \ssum[j] \cww[jk] = \ps[k]
    }\\ 
    \so{\\
    \eq{
      \ssum[k] \ps[i] \cdot \ps[k]
    }
    }\\
    \eq{
     \ps[i]
    }\\
    \sentence{The result holds.}
     $ 
  \end{itemize}
\end{proof}

\begin{lemma}[Lifed Reorder transitivity]\label{lrtrans}
  $~$
  \begin{itemize}
    \item 
    $\ift{ \reoder{ \liftT{\tys[1]}}{ \liftT{\tys[2]} }, \reoder{ \liftT{\tys[2]} }{ \liftT{\tys[3]} } }{
      \reoder{ \liftT{\tys[1]} }{\liftT{\tys[3]}} } 
    <=> \ift{ \reoder{ \tys[1] }{ \tys[2] } , \reoder{ \tys[2] }{ \tys[3]} }{
      \reoder{ \tys[1] }{\tys[3] } }$
    \item  
    $\ift{ \reoder{  \liftD{\dt{\tys[i][\ps[i]]}}  }{  \liftD{\dt{\tys[j][\ps[j]]}} }, 
    \reoder{   \liftD{\dt{\tys[j][\ps[j]]}}  }{ \liftD{\dt{\tys[k][\ps[k]]}} } }{
      \reoder{   \liftD{\dt{\tys[i][\ps[i]]}} }{   \liftD{\dt{\tys[k][\ps[k]]} }} } <=> \\
      \ift{ \reoder{  \dt{\tys[i][\ps[i]]}  }{  \dt{\tys[j][\ps[j]]} }, 
      \reoder{   \dt{\tys[j][\ps[j]]}  }{ \dt{\tys[k][\ps[k]]} } }{
        \reoder{   \dt{\tys[i][\ps[i]]} }{   \dt{\tys[k][\ps[k]]} } }$
  \end{itemize}
\end{lemma}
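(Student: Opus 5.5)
The plan is to read the statement as saying that the lifting $\liftT{\cdot}$ (resp.\ $\liftD{\cdot}$) both preserves and reflects the reordering relation $\reoder{}{}$ on static types. With that reading, the transitivity property for lifted types is equivalent to the one for static types already established in Lemma~\ref{srtrans}. So I would first prove an auxiliary claim: for static $\tys[1],\tys[2]$ we have $\reoder{\liftT{\tys[1]}}{\liftT{\tys[2]}}$ iff $\reoder{\tys[1]}{\tys[2]}$, and likewise for distribution types with $\liftD{\cdot}$. Once this is in place, both sides of each biconditional are statements of the same shape, related by rewriting every hypothesis and the conclusion through the auxiliary claim. Each direction of the ``$\iff$'' is then immediate. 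Moreover, since the right-hand side holds by Lemma~\ref{srtrans}, the left-hand implication holds outright.

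The auxiliary claim goes by simultaneous induction on the structure of static simple and distribution types. The base cases $\rtype,\btype$ are trivial because $\liftT{\cdot}$ is the identity on them. For function types, $\liftT{\tys -> \tya} = \liftT{\tys} -> \liftD{\tya}$, and the reordering rule for arrows decomposes componentwise. The induction hypotheses apply directly; note that the domain premise is stated in swapped order, which is harmless since the relation is symmetric at base and structural levels.

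For distribution types, let $\tya[1] = \d{\tys[i][\ps[i]]}[i\in\iSet]$. Then $\liftD{\tya[1]}$ is $\d{\liftT{\tys[i]}^{\cww[i]}}$ closed under the formula $\bigwedge_i \cww[i]=\ps[i] \land \sum_i \cww[i]=1$. Because the closing formula pins every variable to a constant, the existential quantification over $\FV$ in $\couplingLiftExt{\cdot}{\cdot}{\reoderSym}{\phi[][1]}{\phi[][2]}{}$ collapses: the only admissible concretization assigns $\cww[i]=\ps[i]$, and $\sum_i\ps[i]=1$ already holds by well-formedness. So a symbolic coupling $\{\cww[ij]\}$ exists iff a concrete coupling $\{\var{i}{j}\}$ exists with the same marginals $\ps[i],\ps[j]$ (Definition~\ref{def:coupling}). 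The support condition $\var{i}{j}>0 \Rightarrow \liftT{\tys[i]}\reoderSym\liftT{\tys[j]}$ transfers to $\tys[i]\reoderSym\tys[j]$ by the induction hypothesis on simple types.

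The main obstacle is this distribution case, specifically justifying the collapse of the symbolic quantifiers. I would handle it by unfolding $\liftP{\ps}[\cww] = (\cww=\ps)$ and observing that any satisfying assignment of $\phi[][1]\land\phi[][2]$ is unique on the type variables. The remaining subtlety is bookkeeping: one has to be careful that the fresh variable names and the tags $\leftvar,\rightvar$ introduced by $\liftD{\cdot}$ do not affect the reordering relation, which only inspects the coupling conditions. Once the auxiliary claim is established, I conclude by substituting it into Lemma~\ref{srtrans}.
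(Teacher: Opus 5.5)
Your proposal is correct and follows essentially the same route as the paper. Its proof also reduces to the distribution case and uses the fact that $\liftD{\cdot}$ pins each $\cww[i]$ to $\ps[i]$, so the marginal equations of a coupling between lifted types coincide with those between the static types, which gives the transfer in both directions. Your version is more careful: it makes explicit the induction through nested function types, the transfer of the support condition, and the fact that $\sum_i \ps[i]=1$ is needed for the pinning formula to be satisfiable, all of which the paper leaves implicit or calls trivial.
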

\begin{proof}
  $~$
  \begin{itemize}
    \item (non-distribution types) trivial cases.
    \item (distribution types) \\
    $ 
    \ift{ \reoder{  \liftD{\dt{\tys[i][\ps[i]]}}  }{  \liftD{\dt{\tys[j][\ps[j]]}} }, 
    \reoder{   \liftD{\dt{\tys[j][\ps[j]]}}  }{ \liftD{\dt{\tys[k][\ps[k]]}} } }{
      \reoder{   \liftD{\dt{\tys[i][\ps[i]]}} }{   \liftD{\dt{\tys[k][\ps[k]]} }} } => \\
      \ift{ \reoder{  \dt{\tys[i][\ps[i]]}  }{  \dt{\tys[j][\ps[j]]} }, 
      \reoder{   \dt{\tys[j][\ps[j]]}  }{ \dt{\tys[k][\ps[k]]} } }{
        \reoder{   \dt{\tys[i][\ps[i]]} }{   \dt{\tys[k][\ps[k]]} } } \\
    \since{
      \liftD{\dt{\tys[i][\ps[i]]}} = \phty{\bigwedge \cww[i] = \ps[i]}{\dt{\tys[i][\cww[i]]}} 
    } \\ 
    \since{
      \liftD{\dt{\tys[j][\ps[j]]}} = \phty{\bigwedge \cww[j] = \ps[j]}{\dt{\tys[j][\cww[j]]}} 
    } \\ 
    \since{
      \liftD{\dt{\tys[k][\ps[k]]}} = \phty{\bigwedge \cww[k] = \ps[k]}{\dt{\tys[k][\cww[k]]}} 
    } \\ 
    \sentence{we need to show,} \\
    {\reoder{   \dt{\tys[i][\ps[i]]}  }{  \dt{\tys[k][\ps[k]]} }
    } \\
    {\ssum[i] \cww[ik] = \ps[k] \ad \ssum[k] \cww[ik] = \ps[i]}\\
    \since{\reoder{  \phty{\bigwedge \cww[i] = \ps[i]}{\dt{\tys[i][\cww[i]]}}   }{  \phty{\bigwedge \cww[k] = \ps[k]}{\dt{\tys[k][\cww[k]]}} }
    } \\
    \so{\ssum[i] \cww[ik] = \cww[k] \ad \ssum[k] \cww[ik] = \cww[i]}\\
    \so{\ssum[i] \cww[ik] = \ps[k] \ad \ssum[k] \cww[ik] = \ps[i]} \\
    \sentence{The result holds.} \\
     $ 
     $ 
    \ift{ \reoder{  \liftD{\dt{\tys[i][\ps[i]]}}  }{  \liftD{\dt{\tys[j][\ps[j]]}} }, 
    \reoder{   \liftD{\dt{\tys[j][\ps[j]]}}  }{ \liftD{\dt{\tys[k][\ps[k]]}} } }{
      \reoder{   \liftD{\dt{\tys[i][\ps[i]]}} }{   \liftD{\dt{\tys[k][\ps[k]]} }} } <= \\
      \ift{ \reoder{  \dt{\tys[i][\ps[i]]}  }{  \dt{\tys[j][\ps[j]]} }, 
      \reoder{   \dt{\tys[j][\ps[j]]}  }{ \dt{\tys[k][\ps[k]]} } }{
        \reoder{   \dt{\tys[i][\ps[i]]} }{   \dt{\tys[k][\ps[k]]} } } \\
    \since{
      \liftD{\dt{\tys[i][\ps[i]]}} = \phty{\bigwedge \cww[i] = \ps[i]}{\dt{\tys[i][\cww[i]]}} 
    } \\ 
    \since{
      \liftD{\dt{\tys[j][\ps[j]]}} = \phty{\bigwedge \cww[j] = \ps[j]}{\dt{\tys[j][\cww[j]]}} 
    } \\ 
    \since{
      \liftD{\dt{\tys[k][\ps[k]]}} = \phty{\bigwedge \cww[k] = \ps[k]}{\dt{\tys[k][\cww[k]]}} 
    } \\ 
    \sentence{we need to show,} \\
    {\reoder{ \phty{\bigwedge \cww[i] = \ps[i]}{\dt{\tys[i][\cww[i]]}}   }{  \phty{\bigwedge \cww[k] = \ps[k]}{\dt{\tys[k][\cww[k]]}}  }
    } \\
    {\ssum[i] \cww[ik] = \cww[k] \ad \ssum[k] \cww[ik] = \cww[i]}\\
    \since{\reoder{  \dt{\tys[i][\ps[i]]}   }{  \dt{\tys[k][\ps[k]]} }
    } \\
    \so{\ssum[i] \cww[ik] = \ps[k] \ad \ssum[k] \cww[ik] = \ps[i]}\\
    \so{\ssum[i] \cww[ik] = \cww[k] \ad \ssum[k] \cww[ik] = \cww[i]} \\
    \sentence{The result holds.}
     $ 
  \end{itemize}
\end{proof}

\begin{lemma}[Equality defined]\label{eq-defined}
  $~$
  \begin{enumerate}
    \item 
    $\ift{\reoder{\liftT{\tys[1]}}{\liftT{\tys[2]}} }{\liftT{\tys[1]} \meet \liftT{\tys[2]}}$ is defined.
    \item  
    $\ift{\reoder{\liftD{\tya[1]}}{\liftD{\tya[2]}}}{\liftD{\tya[1]} \meet \liftD{\tya[2]}}$ is defined.
  \end{enumerate}
\end{lemma}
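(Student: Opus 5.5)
The plan is to argue by induction on the structure of the static type, mutually for simple types (item 1) and distribution types (item 2). In both cases we unfold the reordering hypothesis and exhibit the data required by the (partial) meet operator.

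\emph{Simple types.} Since $\liftT{\cdot}$ preserves constructors and the reordering relation is syntax-directed, $\reoder{\liftT{\tys[1]}}{\liftT{\tys[2]}}$ forces both types to have the same head constructor.
\begin{itemize}
\item For the base cases $\rtype$ and $\btype$, the meet is defined by the first two clauses of its definition.
\item For $\tys[1] = \tys -> \tya$ and $\tys[2] = \tysp -> \tyap$, inversion of the arrow rule for reordering gives reorderings of the domains and of the codomains. Here $\liftT{\tys -> \tya} = \liftT{\tys} -> \liftD{\tya}$.
\item The induction hypothesis (items 1 and 2) makes $\liftT{\tys} \meet \liftT{\tysp}$ and $\liftD{\tya} \meet \liftD{\tyap}$ defined, so the arrow clause of the meet is defined.
\item Static types never contain $\?$, so no other case arises.
\end{itemize}

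\emph{Distribution types.} Write $\liftD{\tya[1]} = \phty{\bigwedge_i \cww[i]=\ps[i] \land \sum_i \cww[i] = 1}{\dt{\liftT{\tys[i]}^{\cww[i]}}}$, and similarly $\liftD{\tya[2]}$ with $\cww[j]=\ps[j]$. By definition, $\liftD{\tya[1]} \meet \liftD{\tya[2]} = \witness{\liftD{\tya[1]}}{\liftD{\tya[2]}}{\meet}$. This is defined as soon as the existential formula $\phi['][] \land \phii$ is satisfiable, with $R$ taken to be ``$\liftT{\tys[i]} \meet \liftT{\tys[j]}$ is defined''. From the reordering hypothesis $\couplingLift{\liftD{\tya[1]}}{\liftD{\tya[2]}}{\reoderSym}$ we obtain concrete values for the $\cww[i],\cww[j]$ satisfying the closing formulas, which necessarily equal $\ps[i],\ps[j]$. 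We also obtain a coupling $\d{\fcw{i}{j}}$ whose marginals are $\ps[i]$ and $\ps[j]$, and such that $\fcw{i}{j}>0$ implies $\reoder{\liftT{\tys[i]}}{\liftT{\tys[j]}}$. We reuse exactly these values as the witness, setting the tag constraints $\phi[']$ by choosing $\projl{\cww[ij]}=i$ and $\projr{\cww[ij]}=j$, which is trivially consistent. The marginal conditions and $\phi[][1]\land\phi[][2]$ then hold immediately. For the relational condition, whenever $\fcw{i}{j}>0$ we have $\reoder{\liftT{\tys[i]}}{\liftT{\tys[j]}}$, and the induction hypothesis of item 1 (on the strictly smaller simple types $\tys[i]$, $\tys[j]$) yields that $\liftT{\tys[i]} \meet \liftT{\tys[j]}$ is defined, i.e.\ $(\liftT{\tys[i]},\liftT{\tys[j]}) \in \dom(\meet)$.

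The main obstacle is a mismatch in index sets. The witness formula ranges only over pairs $(i,j)$ in $\dom(\meet)$ (the index set $\kSet$ in the footnote), whereas the reordering coupling ranges over all of $\iSet\times\jSet$. I would handle this by observing that every pair with $\fcw{i}{j}>0$ lies in $\dom(\meet)$, as just shown. Dropping the zero-weight pairs therefore preserves the marginal sums, so the restricted coupling is a valid witness. If needed, I would also appeal to Lemma~\ref{couplingeq} to move between reordering of lifted static types and $=_{s}$, but the direct argument above should suffice.
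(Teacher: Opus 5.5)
Your proposal is correct and follows the paper's own argument. The paper calls item~1 trivial and, for item~2, reuses the coupling from $\reoder{\liftD{\tya[1]}}{\liftD{\tya[2]}}$ as the witness for the meet, checking only the marginal sums; your induction for the relational side condition, and your handling of tags and zero-weight pairs, fills in what the paper leaves implicit. One small point to patch: the reordering rule for arrows is contravariant in the domain, so inversion gives $\reoder{\liftT{\tysp}}{\liftT{\tys}}$, and before applying the induction hypothesis you need the easy symmetry of reordering (or of meet-definedness), obtained by transposing couplings.
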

  \begin{proof}~
    \begin{enumerate}
      \item trivial case.
      \item Suppose $\liftD{\tya[1]} = \phty{\pphi[1]}{ \d{\gtys[i][\p[i]]}[i \in 
      \iSet]}$ \\
       $\ad \liftD{\tya[2]} = \phty{\pphi[2]}{ \d{\gtys[j][\p[j]]}[j \in \jSet]}$ \\
      $
      \sentence{we need to show,} \\
      \so{
        \ssum[i] \cww[ij] = \p[j]
      }\\
      \so{
        \ssum[j] \cww[ij] = \p[i]
      }\\
      \since{\reoder{ \liftD{\tya[1]} }{ \liftD{\tya[2]} }} \\
      \so{
        \ssum[i] \cww[ij] = \p[j]
      }\\
      \so{
        \ssum[j] \cww[ij] = \p[i]
      }\\
      $
      The result holds. 
    \end{enumerate}
  \end{proof}

  \begin{lemma}[Static composition defined]\label{static-composition-defined}
    $~$
    \begin{enumerate}
      \item 
      $\ift{\ev[1] |- \tys[1] \rel \tys[2] \ad 
      \ev[2] |- \tys[2] \rel \tys[3] }{ \ev[1] \trans{} \ev[2] }$ is defined, 
      $\ad \ev[1] \trans{} \ev[2] |- \tys[1] \rel \tys[3]$.
      \item  
      $\ift{ \evd[1] |- \tya[1] \rel \tya[2] \ad 
      \evd[2] |- \tya[2] \rel \tya[3] }{\evd[1] \trans{} \evd[2]}$ is defined,
      $\ad \evd[1] \trans{} \evd[2] |- \tya[1] \rel \tya[3]$.
    \end{enumerate}
  \end{lemma}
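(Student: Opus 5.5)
Both parts go by structural induction on the static types involved: part (1) on $\tys[2]$, and part (2) by reducing to (1) through the coupling description of the meet. Since $\tys[1],\tys[2],\tys[3]$ are static, $\ev[1] |- \tys[1] \rel \tys[2]$ unfolds to $\ev[1] \gprec \tys[1]$ and $\ev[1] \gprec \tys[2]$. Precision below a static type has the same shape as that type, because no $\?$ occurs in it and only $\?$ is above other types. Hence $\ev[1]$ equals $\tys[1]$ and $\tys[2]$ up to reordering of distribution types, and the same holds for $\ev[2]$ with $\tys[2]$ and $\tys[3]$. In particular Lemma~\ref{couplingeq} gives $\tys[1] = \tys[2] = \tys[3]$ semantically.

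For simple types I split on the shape of $\tys[2]$. In the base cases $\rtype,\btype$ we get $\ev[1] = \ev[2] = \rtype$ (resp.\ $\btype$), the meet is defined by its first clauses, and it justifies $\tys[1] \rel \tys[3]$ trivially. In the function case $\ev[1] = \ev[11] -> \evd[12]$ and $\ev[2] = \ev[21] -> \evd[22]$, with component judgments obtained by inverting (wd$->$). The induction hypothesis on the domain and codomain then makes both component meets defined and well-justifying, and we conclude by the arrow clause of $\meet$ together with (wd$->$).

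For distribution types, write $\evd[1] = \phty{\phi[][1]}{\d{\gtys[a][\cww[a]]}[a]}$ and $\evd[2] = \phty{\phi[][2]}{\d{\gtys[b][\cww[b]]}[b]}$. By (wd$\evd$) each weighted component of $\evd[1]$ connects an index $i$ of $\tya[1]$ to an index $j$ of $\tya[2]$, and each component of $\evd[2]$ connects $j$ to an index $k$ of $\tya[3]$. All formulas involved pin variables to concrete probabilities. To show that $W_{\meet}(\evd[1],\evd[2])$ is defined, I exhibit witness couplings by gluing through the shared middle distribution $\tya[2]$, in the style of the proof of Lemma~\ref{srtrans}. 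Take
\[
\cww[ab] = \frac{\cww[a]\cdot\cww[b]}{\ps[j]}
\]
whenever $\projr{\cww[a]} = \projl{\cww[b]} = j$ and $\ps[j] > 0$, and $0$ otherwise. Summing over $b$ gives $\cww[a]$, and summing over $a$ gives $\cww[b]$, so the marginal conditions $\phi[][L]$ and $\phi[][R]$ hold. Whenever $\cww[ab] > 0$, both $\gtys[a]$ and $\gtys[b]$ are static and are reorders of $\tys[2]$'s $j$-th component, so part (1) shows $\gtys[a] \meet \gtys[b]$ is defined. Hence the relation premise of $W$ holds and the meet is defined.

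To show that the result justifies $\tya[1] \rel \tya[3]$, note that $\projl{\cww[ab]} = i$ and $\projr{\cww[ab]} = k$. The marginals over $i$ and $k$ are the $\tya[1]$ and $\tya[3]$ probabilities, by composing the marginals just proved. Each positive component is justified by part (1), so (wd$\evd$) applies. Alternatively, this last step follows directly from Lemma~\ref{lemma:transinvariant} once definedness is known.

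The main obstacle is the gluing step. The normalization by $\ps[j]$ breaks down when $\ps[j] = 0$, so components of zero weight must be dropped, which (wd$\evd$) permits only through its $\cww > 0$ guard. A second difficulty is reconciling the tag bookkeeping ($\projl{}$, $\projr{}$) with the existential formula in $W$, which requires a careful but routine re-indexing argument.
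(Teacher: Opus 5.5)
Your proposal is correct and follows the paper's proof: the paper also builds the glued coupling $\omega_{hk} = \omega_h \cdot \omega_k / p_j$ when the middle tags agree ($0$ otherwise), checks the two marginals to get definedness of the meet, dismisses the simple-type case as trivial, and then invokes Lemma~\ref{lemma:transinvariant} for the justification $\tya[1] \rel \tya[3]$. Your extra care fills in details the paper leaves implicit: the $p_j = 0$ case, discharging the relation premise of $W$ via part (1), and the mutual induction through arrow codomains.
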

    \begin{proof}~
      \begin{enumerate}
        \item trivial case.
        \item Suppose $\tya[1] =  \dt{\tys[i][\ps[i]]}$ \\
        $\tya[2] =  \dt{\tys[j][\ps[j]]}$ \\
        $\tya[3] =  \dt{\tys[k][\ps[k]]}$ \\
        $\evd[1] =  \dt{\ev[h][\cww[h]]}$ \\
        $\evd[2] =  \dt{\ev[k][\cww[k]]}$ \\
        $
        \sentence{we need to show,} \\
        \so{
          \ssum[k] \cww[hk] = \cww[h]
        }\\
        \so{
          \ssum[h] \cww[hk] = \cww[k]
        }\\
        \sentence{Suppose} \;
        {\cww[hk]  = 
         \begin{cases}
          (\cww[h] \cdot \cww[k])/  \ps[\projr{\cww[h]}] & \projr{\cww[h]} = \projl{\cww[k]} \\
          0 & \text{otherwise} 
         \end{cases} } \\
         \so{
          \forall k, \ssum[h | \projr{\cww[h]} = \projl{\cww[k]}] \cww[hk] 
         } \\
         \eq{
          \ssum[h | \projr{\cww[h]} = \projl{\cww[k]} ] (\cww[h] \cdot \cww[k])/  \ps[\projr{\cww[h]}] +  \ssum[h | \projr{\cww[h]} = \projl{\cww[k]} ] 0
         } \\
         \eq{
          \ssum[h | \projr{\cww[h]} = \projl{\cww[k]} ] (\cww[h] \cdot \cww[k])/  \ps[\projr{\cww[h]}]
         }\\
         \since{
          \ssum[h | \projr{\cww[h]} = \projl{\cww[k]} ] \cww[h] 
         }\\
         \eq{
          \ps[\projr{\cww[h]}]
         }\\
         \so{\\
         \ssum[k | \projr{\cww[h]} = \projl{\cww[k]} ] (\cww[h] \cdot \cww[k])/  \ps[\projr{\cww[h]}]
         } \\
          \eq{
          \cww[h]
         } \\
         \\
         \so{
          \forall h, \ssum[k | \projr{\cww[h]} = \projl{\cww[k]}] \cww[hk] 
         } \\
         \eq{
          \ssum[k | \projr{\cww[h]} = \projl{\cww[k]} ] (\cww[h] \cdot \cww[k])/  \ps[\projr{\cww[h]}] +  \ssum[k | \projr{\cww[h]} = \projl{\cww[k]} ] 0
         } \\
         \eq{
          \ssum[k | \projr{\cww[h]} = \projl{\cww[k]} ] (\cww[h] \cdot \cww[k])/  \ps[\projr{\cww[h]}]
         }\\
         \since{
          \ssum[k | \projr{\cww[h]} = \projl{\cww[k]} ] \cww[k] 
         }\\
         \eq{
          \ps[\projr{\cww[h]}]
         }\\
         \so{\\
         \ssum[h | \projr{\cww[h]} = \projl{\cww[k]} ] (\cww[h] \cdot \cww[k])/  \ps[\projr{\cww[h]}]
         } \\
          \eq{
          \cww[k]
         } \\
         \so{\forall k, \ssum[h | \projr{\cww[h]} = \projl{\cww[k]}] \cww[hk] =  \cww[k] \; \forall h, \ssum[k | \projr{\cww[h]} = \projl{\cww[k]}] \cww[hk] = \cww[h] } \\
         \so{
          \evd[1] \trans{} \evd[2] \sentence{is defined.}
         }\\
        \sentence{By Lemma}~\ref{lemma:transinvariant} \\
        \so{ \evd[1] \trans{} \evd[2] |- \tya[1] \rel \tya[3] }\\
        $
        The result holds. 
      \end{enumerate}
    \end{proof}

\begin{lemma}[Ascription Lemma]~ \label{lemma:ascription}
  \begin{enumerate}
    \item $\ift{ (\sv,\tv) \in \rlV{\tys[1]} \ad \jtf{\ev}{\tys[1] \rel \tys[2]} }{ 
      (\asc{\sv}{\tys[2]}, \asc{\ev \tv}{\tys[2]}) \in \rlV{\tys[2]} }$
    \item $\ift{ (\sV,\Vp) \in \rlV{\tya[1]} \ad \jtf{\evd}{\tya[1] \rel \tya[2]} }{ 
      (\asc{\sV}{\tya[2]}, \asc{\evd \Vp}{\tya[2]}) \in \rlT{\tya[2]} }$
  \end{enumerate}
\end{lemma}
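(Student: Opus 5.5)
For both parts I would unfold the definitions of $\rlV{\cdot}$ and $\rlT{\cdot}$ in Figure~\ref{fig:lr}, compute how each side of the ascription reduces, and rebuild the witnesses the relation requires. The key enabling fact is Lemma~\ref{static-composition-defined}: all evidences involved are (liftings of) static types, so consistent transitivity never fails, and $\trg{(\mathit{D}\mathord{::}\gtys)}$ and $\trg{(\mathit{D}\mathord{::}\gtya)}$ always take their non-error branch.

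For part (1), I would proceed by cases on $\tys[1]$; the case $\tys[2]$ is analogous since $\jtf{\ev}{\tys[1]\rel\tys[2]}$ forces equal shapes. From $(\sv,\tv)\in\rlV{\tys[1]}$ we know $\tv = \asc{\ev[1]\tu}{\tys[1]}$ with $\jtf{\ev[1]}{\tys[1]\rel\tys[1]}$. Lemma~\ref{static-composition-defined} then gives that $\ev[1]\trans{}\ev$ is defined and justifies $\tys[1]\rel\tys[2]$. Two things follow:
\begin{itemize}
\item The target $\asc{\ev\tv}{\tys[2]}$ steps to the Dirac distribution of $\asc{(\ev[1]\trans{}\ev)\tu}{\tys[2]}$.
\item The source $\asc{\sv}{\tys[2]}$ steps to the Dirac distribution of $\sv$.
\end{itemize}
In the base cases the raw values coincide, so the pair lies in $\rlV{\rtype}$ or $\rlV{\btype}$. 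The Atom conditions come from typing via (T$::\tys$) and (G$\mathord{::}\gtys$), together with Lemma~\ref{couplingeq} to pass between $=_s$ and $=$. For function types I would take related arguments $(v'_1,\tv[2]')\in\rlV{\tys[2]}$. I would show that the application of the new value reduces exactly as in $\trg{(\mathit{Dapp})}$: the argument is ascribed with $\cdom$ of the composed evidence and the body with $\ccod$. This reduces the goal to part (1) at the domain (in the contravariant direction) and part (2) at the codomain. So parts (1) and (2) must be proved simultaneously, by induction on the structure of the static types.

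For part (2), let $\sV=\d{\pt{v_i}[\ps[i]]}[i\in\iSet]$ and $\Vp=\d{\trg{v_j^{\ps[j]}}}[j\in\jSet]$, with witness $\evd'$ such that $(\evd',\sV,\Vp)\in\Atom{\tya[1]}$. Steps:
\begin{enumerate}
\item The source ascription reduces to $\sV$ itself.
\item On the target side, apply $\trg{(\mathit{D}\mathord{::}\gtya)}$: $\Vp$ reduces to itself by $\trg{(Dv)}$ at its type $\gtyap$, which reorders $\tya[1]$.
\item By Lemma~\ref{reorderevidence2}, $\initReorder{\gtyap}{\tya[1]}$ is defined.
\item By Lemma~\ref{lemma:rmd} and Lemma~\ref{static-composition-defined}, the composition $(\initReorder{\gtyap}{\tya[1]})\trans{}\evd = \d{\ev[k][\cww[k]]}[k\in\kSet]$ is defined.
\item The result is therefore $\ssum[k]\cww[k]\cdot\V[k]$, where each $\V[k]$ comes from reducing $\asc{\ev[k]\tv[\projl{\cww[k]}]}{\tys[\projr{\cww[k]}]}$.
\end{enumerate}
For each $k$ with $\cww[k]>0$, I would apply part (1) to the corresponding component pair. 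This gives a single related value, so $\V[k]$ is a Dirac distribution on a value related to $v_{\projl{\cww[k]}}$.

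The remaining task, and the main obstacle, is to produce a new coupling evidence relating $\sV$ (indexed by $i$) to the resulting multiset (indexed by $k$), and to check the Atom conditions with respect to $\tya[2]$. To build it I would compose the original coupling $\evd'$ (between $i$ and $j$) with the coupling $\cww[k]$ (between $j$, via $\projl{}$, and the outputs) using the same normalized product used in the proof of Lemma~\ref{static-composition-defined}: for each $i$ and $k$, set the weight to $\cww[ij]\cdot\cww[k]/\ps[j]$ where $j=\projl{\cww[k]}$. The marginals then work out by the same calculation as in that proof. A positive weight forces both factors to be positive, which gives the needed relatedness at type $\tys[\projr{\cww[k]}]$. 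Finally, the reorder conditions $\reoder{\tya[2]}{\cdot}$ follow from Lemma~\ref{srtrans}, because the output's type has marginals equal to those of $\tya[2]$.
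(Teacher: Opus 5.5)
Your plan for part (1) matches the paper's: simultaneous induction on the static types, Lemma~\ref{static-composition-defined} to rule out the error branches, and in the arrow case part (1) at the domain with part (2) at the codomain. But the arrow case has a genuine gap at the step you summarize as ``this reduces the goal to part (1) at the domain and part (2) at the codomain''. All you know about $\tv=\asc{\ev[1]\tu}{\tys[1]}$ is how applications of $\tv$ itself behave relationally. By $\trg{(\mathit{Dapp})}$, those applications ascribe the argument with $\cdom(\ev[1])$ and the body with $\ccod(\ev[1])$. The new value $\asc{(\ev[1]\trans{}\ev)\tu}{\tys[2]}$ instead ascribes in one step with $\cdom(\ev[1]\trans{}\ev)$ and $\ccod(\ev[1]\trans{}\ev)$. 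To chain your induction hypotheses with the hypothesis on $\tv$, you must show that these single ascriptions agree with the two-stage ones:
\begin{itemize}
\item For the argument: composing its evidence first with $\cdom(\ev)$ (your contravariant use of part (1)) and then with $\cdom(\ev[1])$ must give the same result as composing once with $\cdom(\ev[1]\trans{}\ev)$.
\item For the body: ascribing with $\ccod(\ev[1]\trans{}\ev)$ must give what you get by ascribing first with $\ccod(\ev[1])$ and then with $\ccod(\ev)$. The first of these two steps is what produces the distribution value related at $\tya[1]$ that part (2) needs as input.
\end{itemize}
Nothing you cite gives either fact. The first is associativity of $\trans{}$, together with the way $\cdom$ and $\ccod$ distribute over composition of arrow evidences. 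The second is its counterpart for $\trg{(\mathit{D}\mathord{::}\gtya)}$. This is exactly where the paper invokes Lemma~\ref{lemma:associativity}. Without that lemma, or one saying the relation is insensitive to replacing a nested pair of ascriptions by a single ascription with composed evidence, the arrow case does not close.

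A smaller slip: the evidence inside $\tv$ justifies $\tys[0]\rel\tys[1]$, where $\tys[0]$ is the type of the raw value, not $\tys[1]\rel\tys[1]$. You need the chain $\tys[0]\rel\tys[1]\rel\tys[2]$ for Lemma~\ref{static-composition-defined} to give a reduct well typed at $\tys[2]$.

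Your part (2), by contrast, is sound and takes a different route from the paper. You run $\trg{(\mathit{D}\mathord{::}\gtya)}$ explicitly and apply part (1) to each component pair that carries positive weight. You then glue just two couplings, normalising by the middle marginal $\ps[j]$:
\begin{itemize}
\item the witness for $(\sV,\Vp)\in\rlV{\tya[1]}$, relating source index $i$ to target index $j$;
\item the coupling recorded in $(\initReorder{\gtyap}{\tya[1]})\trans{}\evd$, relating $j$ to output index $k$.
\end{itemize}
The paper instead derives reorderings on both sides via Lemmas~\ref{srtrans} and~\ref{lrtrans}. It then forms the product $\cww[i'j']\cdot\cww[ii']\cdot\cww[jj']$ of three couplings and checks only the marginal equations. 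Your version buys two things. It actually re-establishes the relatedness clause for positive weights, which the paper never checks. And the normalised gluing really has the required marginals, which an unnormalised triple product does not in general. Because ascription in \slang leaves $\sV$ unchanged, two couplings suffice where the paper re-indexes both sides.

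Fix one piece of wording: you say $\V[k]$ is related to $v_{\projl{\cww[k]}}$, but $\projl{\cww[k]}$ is a target index. What you need, and what your gluing actually uses, is relatedness to every $v_i$ with $\cww[ij]>0$, where $j=\projl{\cww[k]}$.
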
 
\begin{proof}~
  \begin{enumerate}
    \item By induction on types (for evidence compositions, Lemma~\ref{static-composition-defined} is used). \\
     $\tys[1] = \rtype$ and   $\tys[1] = \btype$ are trivial cases. 
     \begin{case}[function types]
      $\\$
      $
      \sentence{ We know that,}\\
      {((\lambda x: \tys[1].\sm),\asc{\ev[0] (\lambda x: \tys[0].\tmp)}{\tys[1] -> \tya[1]}) \in 
        \rlV{ \tys[1] -> \tya[1]}, \jtf{\ev[0]}{\tys[0] -> \tya[0] \rel \tys[1] -> \tya[1]}   }\\
      \sentence{we need to show}\\
      {
        ((\lambda x: \tys[1].\sm), 
        \asc{\ev[0] \trans{} \ev[1] (\lambda x: \tys[0].\tmp)}{\tys[2]->\tya[2]}) 
        \in \rlV{\tys[2]->\tya[2]}, \jtf{\ev[1]}{\tys[1] -> \tya[1] \rel \tys[2]->\tya[2]}        
      } \\
      {
        \forall \sv[1] \tv[2] \in \rlV{\tys[2]},  \\
        (((\lambda x: \tys[1].\sm) \; \sv[1]),
        ((\asc{\ev[0] \trans{} \ev[1] (\lambda x: \tys[0].\tmp)}{\tys[2] -> \tya[2]}) \; \tv[2])) 
        \in \tya[2] 
      } \\
      \sentence{By induction hypothesis,}\\
      \so{\\
        (\asc{\sv[1]}{\tys[1]},\asc{\dom(\ev[1]) \tv[2]}{\tys[1]}) \in \rlT{\tys[1]} }\\
      \so{
        \jtf{\dom(\ev[1])}{\tys[2] \rel \tys[1]}
      }\\
      \so{\\
        \jtf{\ev[2]}{\vty[\tu] \rel \tys[2]}\; , \\
        (\sv[1], (\asc{\ev[2] \trans{} \dom(\ev[1]) \tu }{\tys[1]})) \in \rlT{\tys[1]}
      }\\
      \since{ \\
        ( (\lambda x: \tys[1].\sm),
        \asc{\ev[0] (\lambda x: \tys[0].\tmp)}{\tys[1] -> \tya[1]} ) \in 
        \rlV{ \tys[1] -> \tya[1]}
      }\\
      \so{ \\
        ( (\lambda x: \tys[1].\sm) \; \sv[1], 
        (\asc{\ev[0] (\lambda x: \tys[0].\tmp)}{\tys[1] -> \tya[1]})\; 
        (\asc{\ev[2] \trans{} \dom(\ev[1]) \tu }{\tys[1]}) )
        \in \rlV{\tya[1]} 
      }\\
      \so{\\
        (\lambda x: \tys[1].\sm) \; \sv[1] \nreds{}{*} \sV[1]
      }\\
      \so{\\
        (\asc{\ev[0] (\lambda x: \tys[0].\tmp)}{\tys[1] -> \tya[1]})\; 
        (\asc{\ev[2] \trans{} \dom(\ev[1]) \tu }{\tys[1]})
      } \\
      \so{\\
       \asc{ \cod(\ev[0]) (\tmp [ \asc{\ev[2] \trans{} \dom(\ev[1]) \trans{} \dom(\ev[0]) 
        \tu}{\tys[0]} ]/ \tx) }{ \tya[1]}
      }\\
      \so{\\
        \asc{ \cod(\ev[0]) \Vpp[2] }{ \tya[1]} \nreds{}{*} \V[2]
      }\\
      \so{\\
        (\sV[1],\V[2]) \in \rlV{\tya[1]}
      }\\
      \sentence{By induction hypothesis,}\\
      \so{\\
        (\asc{\sV[1]}{\tya[2]}, \asc{ \cod(\ev[1]) \V[2] }{\tya[2]}) \in \rlV{\tya[2]}
      }\\
      \so{\\
        \asc{\sV[1]}{\tya[2]} \nreds{}{*} \sVp[1]
      }\\
      \so{\\
        \asc{ \cod(\ev[1]) \V[2] }{\tya[2]} \nreds{}{*} \Vp[2]
      }\\
      \so{\\
        (\sVp[1], \Vp[2]) \in \rlV{\tya[2]}
      }\\
      \since{
        \asc{\ev[0] \trans{} \ev[1] (\lambda x: \tys[0].\tmp)}{\tys[2]->\tya[2]}
        \in \rlV{\tys[2]->\tya[2]}
      }\\
      \eq{
      \asc{ ( \dom(\ev[0]) -> \cod(\ev[0]) ) \trans{} ( \dom(\ev[1]) -> \cod(\ev[1]) )
      (\lambda x: \tys[0].\tmp) }{ \tys[2]->\tya[2] }
      }\\
      \eq{
        \asc{ ( \dom(\ev[1]) \trans{} \dom(\ev[0])) -> (\cod(\ev[0])  \trans{} \cod(\ev[1])) 
        (\lambda x: \tys[0].\tmp) }{ \tys[2]->\tya[2] }
        }\\
      \since{
        (\lambda x: \tys[1].\sm) \; \sv[1] \nreds{}{*} \sVp[1]
      }\\
      \so{
        (\asc{ ( \dom(\ev[1]) \trans{} \dom(\ev[0])) -> (\cod(\ev[0])  \trans{} \cod(\ev[1])) 
        (\lambda x: \tys[0].\tmp) }{ \tys[2]->\tya[2] }) \; \tv[2] 
      }\\
      {\nreds{}{*}  
      \asc{(\cod (\ev[0]) \trans{} \cod (\ev[1])) 
      (\tmp [ \asc{\ev[2] \trans{} \dom(\ev[1]) \trans{} \dom(\ev[0]) 
      \tu}{\tys[0]} ]/ \tx) }{\tya[2]}  }\\
      {\nreds{}{*}  
      \asc{(\cod (\ev[0]) \trans{} \cod (\ev[1])) \Vpp[2]}{ \tya[2]} 
      }\\
      \sentence{By associativity lemma~\ref{lemma:associativity},} \\
      \eq{
        \cod (\ev[1]) \trans{} \asc{(\asc{\cod (\ev[0]) \Vpp[2]}{ \tya[1]})}{\tya[2]} 
      }\\
      { \nreds{}{*}
        \cod (\ev[1])  \asc{\V[2]}{\tya[2]}  
      }\\
      {\nreds{}{*}
        \Vp[2]
      }\\
      \since{
        (\sVp[1], \Vp[2]) \in \rlV{\tya[2]}
      }\\
      \so{
        \forall \sv[1] \tv[2] \in \rlV{\tys[2]},  \\
        (((\lambda x: \tys[1].\sm) \; \sv[1]),
        ((\asc{\ev[0] \trans{} \ev[1] (\lambda x: \tys[0].\tmp)}{\tys[2] -> \tya[2]}) \; \tv[2])) 
        \in \tya[2] \\
      }
      \so{
        ((\lambda x: \tys[1].\sm), 
        \asc{\ev[0] \trans{} \ev[1] (\lambda x: \tys[0].\tmp)}{\tys[2]->\tya[2]}) 
        \in \rlV{\tys[2]->\tya[2]}    
      }\\
      $
      The result holds. 
     \end{case}
    \item By induction on types. \\
     Suppose $\sV = \dt{ \pt{\sv[i']}[\ps[i']] }$,
     $\Vp = \dt{ \asc{\ev \tu}{\tys[j'][\ps[j']]} }$, 
     $ \tya[1]= \dt{\tys[l][\ps[l]] } $, $  \tya[2]= \dt{\tys[k][\ps[k]] } $, \\
    %  $\reoder{\dt{\tys[k][\ps[k]] }}{\dt{\tys[i][\ps[i]] }}$ \ad
    %  $\reoder{\dt{\tys[k][\ps[k]] }}{\dt{\tys[j][\ps[j]] }}$ \\
     $\so{
      ( \dt{ \pt{\sv[i']}[\ps[i']] },\dt{ \asc{\ev \tu}{\tys[j'][\ps[j']]} }) \in \rlV{\tya[1]}
     }\\
     \so{
      \reoder{\dt{\tys[i'][\ps[i']] }}{\dt{\tys[l][\ps[l]] }}
     }\\
     \so{
      \reoder{\dt{\tys[j'][\ps[j']] }}{\dt{\tys[l][\ps[l]] }}
     }\\
     \since{
      \asc{ \dt{ \pt{\sv[i']}[\ps[i']] }}{\tya[2]} \nreds{}{*}  \dt{ \pt{\sv[i]}[\ps[i]] }
     }\\
     \since{
      \asc{\evd \dt{ \asc{\ev \tu}{\tys[j'][\ps[j']]} } }{ \tya[2]} \nreds{}{*} 
      \dt{ \asc{\evp \tu}{\tys[j][\ps[j]]} }
     }\\
     \so{
      \reoder{\dt{\tys[k][\ps[k]] }}{\dt{\tys[i][\ps[i]] }}
     }\\
     \so{
      \reoder{\dt{\tys[k][\ps[k]] }}{\dt{\tys[j][\ps[j]] }}
     }\\
     \since{ \tya[1] \rel \tya[2] \ad} ~ \sentence{they are static types.}\\
     \so{ \reoder{\tya[1]}{\tya[2]}}\\ 
     \so{
      \reoder{\dt{\tys[k][\ps[k]] }}{\dt{\tys[l][\ps[l]] }}
     }\\
     \sentence{By Lemma}~\ref{srtrans} \ad \ref{lrtrans}, \\
     \so{
      \reoder{\dt{\tys[i'][\ps[i']] }}{\dt{\tys[i][\ps[i]] }}
     }\\
     \so{
      \reoder{\dt{\tys[j'][\ps[j']] }}{\dt{\tys[j][\ps[j]] }}
     }\\
     \sentence{so we need to show:}\\
     {\ssum[i] \cww[ij] = \ps[j]}\\
     {\ssum[j] \cww[ij] = \ps[i]} \\
     \since{
      ( \dt{ \pt{\sv[i']}[\ps[i']] },\dt{ \asc{\ev \tu}{\tys[j'][\ps[j']]} }) \in \rlV{\tya[1]}
     }\\
     \so{\\
      {\ssum[i'] \cww[i'j'] = \ps[j']}\\
      {\ssum[j'] \cww[i'j'] = \ps[i']}
     }\\
     \since{
      \reoder{\dt{\tys[i'][\ps[i']] }}{\dt{\tys[i][\ps[i]] }}
     }\\
     \so{\\
     {\ssum[i'] \cww[i'i] = \ps[i]}\\
     {\ssum[i] \cww[i'i] = \ps[i']}
    }\\
    \since{
      \reoder{\dt{\tys[j'][\ps[j']] }}{\dt{\tys[j][\ps[j]] }}
    }\\
    \so{\\
      {\ssum[j'] \cww[j'j] = \ps[j]}\\
     {\ssum[j] \cww[j'j] = \ps[j']}
    } \\
    \text{Set}~\cww[ij] = \ssum[i']\ssum[j'] \cww[i'j'] \cdot \cww[ii'] \cdot \cww[jj'] \\
     \so{\ssum[i] \cww[ij]  } \\
     \eq{\ssum[i] \ssum[i']\ssum[j'] \cww[i'j'] \cdot \cww[ii'] \cdot \cww[jj'] }\\ 
     \eq{\ssum[i] \ssum[i'] \ps[i'] \cdot \cww[ii'] \cdot \ps[j] }\\ 
     \eq{\ssum[i] \ps[i] \cdot \ps[j] }\\ 
     \eq{\ps[j] }\\ 
     \\
     \so{\ssum[j] \cww[ij]  } \\
     \eq{\ssum[j] \ssum[i']\ssum[j'] \cww[i'j'] \cdot \cww[ii'] \cdot \cww[jj'] }\\ 
     \eq{\ssum[j] \ssum[i'] \ps[i'] \cdot \cww[ii'] \cdot \ps[j] }\\ 
     \eq{\ssum[j] \ps[i] \cdot \ps[j] }\\ 
     \eq{\ps[i] }\\
     \so{ \\
      {\ssum[i] \cww[ij] = \ps[j]}\\
     {\ssum[j] \cww[ij] = \ps[i]} \\ }
     \so{
     ((\asc{ \dt{ \pt{\sv[i']}[\ps[i']] }}{\tya[2]}),
     \asc{\evd \dt{ \asc{\ev \tu}{\tys[j'][\ps[j']]} } }{ \tya[2] }) \in \tya[2]
     }\\$
     The result holds.
  \end{enumerate}
\end{proof}

\begin{lemma}[Compatibility (x)]~ \label{comx}
  \begin{itemize}
    \item $\ift{ \sx : \tys \in \Gamma}{ \rappro{\Gamma}{\sx}{\tx}{\tys}}$
    \item $\ift{ \sx : \ds{ {\tys[][1]} } \in \Gamma}{ \rappro{\Gamma}{\sx}{\tx}{ \ds{\tys[][1]}  }}$
  \end{itemize}
\end{lemma}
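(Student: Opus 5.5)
The plan is to unfold the definition of $\rappro{\Gamma}{\sx}{\tx}{\ds{\tys[][1]}}$ and check the clause of $\rlT{\cdot}$ directly. The first item (a variable bound at a simple type $\tys$) is immediate from the definition of $\rlG{\cdot}$, so I only treat the second item, where $\Gamma$ binds $\sx$ at the distribution type $\ds{\tys[][1]}$.

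The first step is to fix how $\rlG{\cdot}$ interprets such a binding, since Figure~\ref{fig:lr} only states the clause for simple bindings. Variables are values, so a substitution can only plug in a single value pair. A binding $\sx : \ds{\tys[][1]}$ can therefore only be instantiated by a pair whose Dirac distributions $(\dt{\pt{v}}, \dt{\pt{\tvp}})$ are related in $\rlV{\ds{\tys[][1]}}$. Unfolding that relation, its witness evidence has a single entry of positive weight connecting index $1$ to index $1$. This forces $(v,\tvp) \in \rlV{\tys}$, so I read the clause as ranging over pairs $(v,\tvp) \in \rlV{\tys}$, exactly as for $\sx : \tys$. This reading is the step I expect to be the main obstacle, because the paper leaves it implicit. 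If a different reading is intended, only this step changes.

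Now fix $(\gamma_1,\gamma_2) \in \rlG{\Gamma}$ with $\gamma_1(\sx) = v$, $\gamma_2(\tx) = \tvp$ and $(v,\tvp) \in \rlV{\tys}$. I must show $(v, \tvp) \in \rlT{\ds{\tys[][1]}}$. On the source side, the value rule of Figure~\ref{fig:static-dis-reduction} gives $v \snreds{}[1] \dt{\pt{v}}$. On the target side, rule $\trg{(Dv)}$ gives $\tvp \nreds{}{1} \rctx[\cdot] \dt{\pt{\tvp}}$. It remains to show $(\dt{\pt{v}}, \dt{\pt{\tvp}}) \in \rlV{\ds{\tys[][1]}}$.

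For that last membership I exhibit the witness $\evd = \phty{(\cw{1}{1} = 1)}{\dt{\tys[][\cww]}}$, whose single tagged variable $\cww$ has tags $(1,1)$. Then I check the three parts of $\Atom{\ds{\tys[][1]}}$.
\begin{enumerate}
\item Typing. Rule (V) gives $|-ss \dt{\pt{v}} : \ds{\tys[][1]}$, using $|-ss v : \tys$ from $\Atom{\tys}$. Rule (GV) gives the target typing of $\dt{\pt{\tvp}}$ at the lifted $\ds{\tys[][1]}$, using $|-t \tvp : \tys$.
\item Reordering. Both types reorder to $\ds{\tys[][1]}$ via the trivial coupling $\cw{1}{1} = 1$.
\item Evidence. $\evd$ justifies $\reoder{\ds{\tys[][1]}}{\ds{\tys[][1]}}$. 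By Lemma~\ref{reorderevidence2}, $\initReorder{\ds{\tys[][1]}}{\ds{\tys[][1]}}$ is defined and justifies the reordering, and $\evd$ is at least as precise as it, since its only coupling weight is forced to be $1$.
\end{enumerate}
Finally, the only positive weight is $\cww$, with $\projl{\cww} = \projr{\cww} = 1$. The required condition $(v, \tvp) \in \rlV{\tys}$ is exactly our hypothesis, which concludes the second item.
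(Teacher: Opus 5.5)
Your proof is correct and uses the same definitional unfolding as the paper, which simply declares both items immediate from the definition of $\rlG{\Gamma}$. You additionally spell out the step the paper skips for the second item: both sides reduce in one step to Dirac distributions, and the one-point coupling witnesses $\rlV{\ds{\tys[][1]}}$. One small point you gloss: $\evd \gprec \initReorder{\ds{\tys[][1]}}{\ds{\tys[][1]}}$ also needs $\tys \gprec \initReorder{\tys}{\tys}$ on the simple-type component. That is not syntactic identity when $\tys$ is a function type, since its codomain passes through $\witnessname$, but it follows by a routine induction using diagonal couplings.
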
 
\begin{proof}
  $\\$
  $
  \sentence{We need to show that,} \\
  {
    \rappro{\Gamma}{ \ga[1][(\sx)] }{ \ga[1][(\tx)]}{\tys}, 
    \rappro{\Gamma}{ \ga[1][(\sx)] }{ \ga[1][(\tx)]}{ 
      \ds{ \tys[][1] } }
  } \\
  \sentence{
    which is immediately by the definition of 
  } \; (\ga[1],\ga[2]) \in \rlG{\Gamma} \\
  $
\end{proof}

\begin{lemma}[Compatibility (b)]~ \label{comb}
  \begin{itemize}
    \item  $\rappro{\Gamma}{\ssb}{ \asc{\ev \ttb}{\btype} }{ \btype }$
    \item  $\rappro{\Gamma}{\ssb}{ \asc{\ev \ttb}{\btype} }{ \ds{\btype^{1}} }$
  \end{itemize}
\end{lemma}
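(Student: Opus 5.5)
The plan is to unfold the definition of $\rappro{\Gamma}{\cdot}{\cdot}{\cdot}$ and reduce both claims to membership in $\rlV{\btype}$. Fix an arbitrary pair $(\gamma_1,\gamma_2) \in \rlG{\Gamma}$. Since $\ssb$ and $\asc{\ev \ttb}{\btype}$ are closed, substitution leaves them unchanged. So we must show $(\ssb, \asc{\ev \ttb}{\btype}) \in \rlT{\btype}$ for the first item, and the same membership at type $\ds{\btype^{1}}$ for the second.

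For the first item, I will prove value relatedness and then lift it to the computation level via the value rules.
\begin{enumerate}
\item Show $(\ssb, \asc{\ev \ttb}{\btype}) \in \Atom{\btype}$. The static side holds by rule (Tb). The target side holds by (G$\mathord{::}\gtys$) from $|-t \ttb : \btype$, using that the evidence $\ev = \btype \meet \btype = \btype$ produced by elaboration rule $\src{(E\ssb)}$ satisfies $\jtf{\btype}{\btype \rel \btype}$, together with $\jtf{}{\btype}$.
\item The underlying booleans coincide, so $(\ssb, \asc{\ev \ttb}{\btype}) \in \rlV{\btype}$ by definition.
\item For the computation level, $\ssb \snreds{}[1] \dt{\ssb^{1}}$ by the value rule of \slang, and the target term reduces to the Dirac distribution $\dt{(\asc{\ev \ttb}{\btype})^{1}}$ by $\trg{(Dv)}$.
\end{enumerate}

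It then remains to show that these two Dirac distributions lie in $\rlV{\ds{\btype^{1}}}$. I will take the trivial coupling evidence $\evd = \dctx[(\cw[1]{1}{1} = 1)][\dt{\btype^{\cww[1]}}]$, where the single weight links index $1$ on each side. Then I check $(\evd, \dt{\ssb^{1}}, \dt{(\asc{\ev \ttb}{\btype})^{1}}) \in \Atom{\ds{\btype^{1}}}$, which has three parts:
\begin{enumerate}
\item Both distribution values have type $\ds{\btype^{1}}$, by rule (V) and rule (GV).
\item The reordering $\reoder{\ds{\btype^{1}}}{\ds{\btype^{1}}}$ is witnessed by the coupling with weight $1$, which justifies $\evd$ via Lemma~\ref{reorderevidence2}.
\item The only positive weight connects $\ssb$ and $\asc{\ev \ttb}{\btype}$ at type $\btype$, and these are related by the value-level fact already shown.
\end{enumerate}
This gives the first item. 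The second item then follows immediately, because the distribution-type interpretation reduces to exactly the same Dirac reductions and the same coupling.

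The only step I expect to need care is the bookkeeping in $\Atom{\tya}$: matching the tags $\projl{\cww}$ and $\projr{\cww}$, and presenting the lifted static type with its one-to-one formula as plain probabilities, following the paper's footnote convention. Everything else is definitional unfolding, in the same style as Lemma~\ref{comx}.
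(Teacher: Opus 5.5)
Your proposal is correct and takes essentially the paper's route: the paper's proof is the one-line remark that the claim is trivial because $\ssb = \ttb$, and you make the underlying definitional unfolding explicit (closedness under $\rlG{\Gamma}$, membership in $\Atom{\btype}$, the Dirac reductions via $\trg{(Dv)}$, and the single-weight coupling witnessing $\rlV{\ds{\btype^{1}}}$). Your argument establishes both the value relation at $\btype$ and the computation relation at $\ds{\btype^{1}}$, so it also covers either reading of the first item, whose relation $\rlT{\btype}$ is not formally defined at a simple type.
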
 
\begin{proof}
  Trivial as $\ssb$ = $\ttb$
\end{proof}

\begin{lemma}[Compatibility (r)]~ \label{comr}
  \begin{itemize}
    \item
  $\rappro{\Gamma}{\ssr}{ \asc{\ev \ttr}{\rtype} }{\rtype}$
  \item 
  $\rappro{\Gamma}{\ssr}{ \asc{\ev \ttr}{\rtype} }{ \dt{ \rtype^{1} } }$
 \end{itemize}
\end{lemma}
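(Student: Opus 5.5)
The plan is to unfold the definition of $\rappro{\Gamma}{\ssr}{\asc{\ev \ttr}{\rtype}}{\cdot}$ and discharge both items directly, in the same way as Lemma~\ref{comb}. Fix an arbitrary pair $(\gamma_1,\gamma_2) \in \rlG{\Gamma}$. Since $\ssr$ and $\asc{\ev \ttr}{\rtype}$ are closed, the substitutions leave them unchanged. It therefore suffices to show that $(\ssr, \asc{\ev \ttr}{\rtype}) \in \rlT{\rtype}$ for the first item, and $(\ssr, \asc{\ev \ttr}{\rtype}) \in \rlT{\dt{\rtype^{1}}}$ for the second. For the first item I read $\rlT{\rtype}$ as the computation relation at the Dirac type, so both items reduce to the same check.

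\textbf{Reduction.} By the value rule of Figure~\ref{fig:static-dis-reduction}, $\ssr \snreds{}[1] \dt{\ssr^{1}}$. By rule $\trg{(Dv)}$, $\asc{\ev \ttr}{\rtype} \nreds{}{1} \rctx[\cdot] \dt{(\asc{\ev \ttr}{\rtype})^{1}}$. It then remains to show that these two Dirac distributions are related in $\rlV{\dt{\rtype^{1}}}$.

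\textbf{Atom conditions.} The source side is typed by rule (Tr) and then (V), giving $\dt{\rtype^{1}}$. The target side is typed by (G$\mathord{::}\gtys$), using $\ev = \rtype \meet \rtype = \rtype$ as in rule $\src{(E\ssr)}$. This evidence satisfies $\jtf{\rtype}{\rtype \rel \rtype}$, since $\rtype \gprec \rtype$. The target side is then typed by (GV), again giving $\dt{\rtype^{1}}$.

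\textbf{Witness evidence.} I take $\evd = \dt{\rtype^{\cww}}$ with the single tagged variable $\cww = \pr{\varr,1,1}$ and $\varr = 1$. The trivial coupling shows that $\evd$ justifies $\reoder{\dt{\rtype^1}}{\dt{\rtype^1}}$. This is the one step that needs any care, and it follows from Lemma~\ref{reorderevidence2}: take $\evd$ to be $\initReorder{\dt{\rtype^1}}{\dt{\rtype^1}}$ itself. Reflexivity of $\reoder{}{}$ gives the remaining two Atom conditions, namely that $\dt{\rtype^1}$ is a reordering of each of the two types.

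\textbf{Pointwise condition.} The only positive weight is $\cww = 1$, which connects index $1$ with index $1$. So the pointwise requirement reduces to $(\ssr, \asc{\ev \ttr}{\rtype}) \in \rlV{\rtype}$. This membership holds because both sides have type $\rtype$, so the pair lies in $\Atom{\rtype}$, and because the underlying numerals coincide, $\ssr = \ttr$.

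There is no real obstacle. The only non-mechanical point is exhibiting the distribution evidence and its coupling for the Dirac case, which is immediate.
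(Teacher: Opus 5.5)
Your proposal is correct and is essentially the paper's argument: the paper dismisses the lemma as trivial because $\ssr = \ttr$, and you simply spell out the unfolding it leaves implicit, namely the Dirac reductions on both sides, the trivial one-point coupling as witness evidence, and the Atom conditions with $\ev = \rtype$ (which is in fact the only evidence that makes the target term well typed). Your reading of the simple-type item via the Dirac type is harmless, since your pointwise step also establishes the value-relation reading $(\ssr, \asc{\ev \ttr}{\rtype}) \in \rlV{\rtype}$ directly.
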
 
\begin{proof}
  Trivial as $\ssr$ = $\ttr$
\end{proof}

% \begin{lemma}[Compatibility (app)]~ \label{lemma:comapp}
%   $\ift{ (\sv[1],\tv[2]) \in \rlV{\tys -> \tya} \ad
%   (\sw[1],\tw[2]) \in \rlV{\tys}
%   }{ (\sv[1] \; \sw[1],\tv[2] \; \tw[2]) \in \rlT{\tya}}.$ 
% \end{lemma}
% \begin{proof}
%   $\\$
%   $
%   \since{
%     (\sv[1],\tv[2]) \in \rlV{\tys -> \tya}
%   }\\
%   \so{
%     (\sw[1],\tw[2]) \in \rlV{\tys}
%   }\\
%   \so{
%     (\sv[1] \; \sw[1],\tv[2] \; \tw[2]) \in \rlT{\tya}
%   }\\
%   $
%   The result holds.
% \end{proof}

\begin{lemma}[Compatibility (app)]~ \label{lemma:comapp}
  $\ift{ 
    \rappro{\bga}{\sv}{\tvp}{\tys}, 
      \rappro{\bga}{\sw}{\twp}{\tysp },
    \ev[1] |- \tysp \rel \dom(\tys), 
    \ev[2] |- \tys \rel  \dom(\tys) -> \cod(\tys)
     }{
    \rappro{\bga}{ \sv \; \sw  }{ 
      \lett{ \tx }{ \asc{\ev[1] \twp }{\dom(\tys) }  }{  \lett{ \ty }{ \asc{\ev[2] \tvp }{ \dom(\tys) -> \cod(\tys)} }{ \ty \; \tx} }
     }{ 
      \cod(\tys)
     } }.$
\end{lemma}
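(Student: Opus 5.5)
We unfold the definition of $\rappro{\Gamma}{\cdot}{\cdot}{\cdot}$. Fix $(\gamma_1,\gamma_2) \in \rlG{\Gamma}$. From the two hypotheses we obtain $(\gamma_1(\sv),\gamma_2(\tvp)) \in \rlT{\tys}$ and $(\gamma_1(\sw),\gamma_2(\twp)) \in \rlT{\tysp}$. Since values reduce in one step to Dirac distributions (rule $\trg{(Dv)}$), this yields value-level relatedness $(\gamma_1(\sv),\gamma_2(\tvp)) \in \rlV{\tys}$ and $(\gamma_1(\sw),\gamma_2(\twp)) \in \rlV{\tysp}$. Here we use that a Dirac pair related at $\ds{\tys[][1]}$ forces the single coupling weight to be $1$, so the underlying values are related at $\tys$. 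We must then show that $\gamma_1(\sv)\;\gamma_1(\sw)$ and the elaborated double $\<let>$ reduce to distribution values related at $\cod(\tys)$.

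First, treat the argument ascription. Because $\tysp$, $\dom(\tys)$ and $\tys$ are static, Lemma~\ref{lemma:ascription}(1) applied with $\ev[1] |- \tysp \rel \dom(\tys)$ gives $(\asc{\gamma_1(\sw)}{\dom(\tys)}, \asc{\ev[1]\gamma_2(\twp)}{\dom(\tys)}) \in \rlV{\dom(\tys)}$. The evidence composition inside $\trg{(\mathit{D}\mathord{::}\gtys)}$ is defined by Lemma~\ref{static-composition-defined}, so the target ascription reduces to a Dirac distribution on a value $\tw$ with $(\gamma_1(\sw),\tw) \in \rlV{\dom(\tys)}$. The source side just drops the ascription. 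Similarly, Lemma~\ref{lemma:ascription}(1) with $\ev[2] |- \tys \rel \dom(\tys) -> \cod(\tys)$ gives a target value $\tv$ with $(\gamma_1(\sv),\tv) \in \rlV{\dom(\tys) -> \cod(\tys)}$. Since $\tys$ is a well-typed function type, Lemma~\ref{couplingeq} shows that $\tys$ equals $\dom(\tys)->\cod(\tys)$ up to $=$.

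Next, by the definition of $\rlV{\functype{\tys}{\tya}}$, applied to the related pair $(\gamma_1(\sw),\tw)$, we get $(\gamma_1(\sv)\;\gamma_1(\sw), \tv\;\tw) \in \rlT{\cod(\tys)}$. It then remains to relate the two nested $\<let>$s to $\tv\;\tw$. Rule $\trg{(\mathit{Dlet})}$ with the Dirac outcome of each bound term reduces the outer let to $1\cdot\V$, where $\V$ is the outcome of $\tv\;\tw$ after substitution. The weighted sum is trivial, so the final distribution is exactly that of $\tv\;\tw$, and the formula is unchanged up to conjunction with $\top$. Note that $\mathit{sub}$ acts as ordinary substitution because the outcomes are ascribed raw values, not errors; this again follows from Lemma~\ref{static-composition-defined}. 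Conclude membership in $\rlT{\cod(\tys)}$.

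The main obstacle is the bridge between $\rlT$ and $\rlV$ for values: extracting value-level relatedness from term-level relatedness, and checking that the Dirac configuration produced by $\trg{(\mathit{Dlet})}$ with step indices and formulas matches the one required by $\rlV{\cod(\tys)}$. A second point of care is identifying $\tys$ with $\dom(\tys)->\cod(\tys)$ up to reordering when invoking the function clause. I would handle it by using the $\Atom{}$ condition, which only requires reordering, together with Lemma~\ref{srtrans}.
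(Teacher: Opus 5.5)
Your proposal is correct and follows essentially the same route as the paper. It obtains value-level relatedness of the function and the argument, applies Lemma~\ref{lemma:ascription}(1) to each (with Lemma~\ref{static-composition-defined} ensuring the evidence compositions are defined), and concludes via the function clause of $\rlV{\cdot}$ and the trivial Dirac reductions of the two lets, which the paper leaves implicit. The detours you flag are unnecessary. The paper reads the simple-type hypotheses directly as $(\gamma_1(\sv),\gamma_2(\tvp)) \in \rlV{\tys}$, and since $\dom$ and $\cod$ are only defined on function types, $\dom(\tys) \rightarrow \cod(\tys)$ is syntactically $\tys$, so neither Lemma~\ref{couplingeq} nor any reordering argument is needed.
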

\begin{proof}
  $\\$
  (for evidence compositions, Lemma~\ref{static-composition-defined} is used). \\
  $
  \sentence{
    we need to show that,
    }\\
    {
     \rappro{\bga}{ \ga[1](\sv \; \sw) }{  
      \ga[2](\lett{ \tx }{\asc{\ev[1] \tw }{ \tys[1] } }{ 
        \lett{\ty}{\asc{\ev[2] \tv}{ \tys[1] -> \tya }}{\ty \;\tx}})
        }{ \tya}
    } \\
    \since{
      \rappro{\bga}{\sv}{\tv}{\tys[1] -> \tya}
    } \\
    \since{
      \rappro{\bga}{\sw}{\tw}{\tys[2]}
    } \\
    \so{
      (\sv, \tv) \in \rlV{\tys[1]}
    }\\
    \so{
      (\sw, \tw) \in \rlV{\tys[2]}
    }\\
    \sentence{By Lemma~\ref{lemma:ascription},} \\
    \so{ (\sw,\asc{\ev[1] \tw }{ \tys[1] }) \in \rlT{\tys[1]} }\\
    \so{ (\sv ,\asc{\ev[2] \tv}{ \tys[1] -> \tya}) \in \rlT{\tys[1] -> \tya } }\\
    \so{ (\sw,\asc{\ev[3] \trans{} \ev[1] \tu[1] }{ \tys[1] }) \in  \rlV{\tys[1]} }\\
    \so{ (\sv ,\asc{\ev[4] \trans{} \ev[2] \tu[2] }{ \tys[1] -> \tya}) \in \rlT{\tys[1] -> \tya } }\\
  $
  The result holds.
\end{proof}

\begin{lemma}[Compatibility (oplus)]~ \label{lemma:comoplus}
  $\ift{ \rappro{\bga}{\sm[1]}{\tmp[1]}{\tya[1]}, 
  \rappro{\bga}{\sm[2]}{\tmp[2]}{\tya[2] }
  }{
    \rappro{\bga}{\sm[1] \spsum \sm[2] }{ \asc{\evd \tmp[1] \spsum \tmp[2]}{ 
      \ps \cdot \tya[1] + (1 - \ps) \cdot \tya[1] } }{ 
      \ps \cdot \tya[1] + (1 - \ps) \cdot \tya[2]  }}.$
\end{lemma}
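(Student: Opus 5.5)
We unfold the definition of $\rappro{}{}{}{}$: fix related substitutions $(\gamma_1,\gamma_2)\in\rlG{\Gamma}$; we must show that $\gamma_1(\sm[1]\spsum\sm[2])$ and $\gamma_2$ applied to the ascribed target choice are related in $\rlT{\ps\cdot\tya[1]+(1-\ps)\cdot\tya[2]}$. From the two hypotheses, instantiated at $(\gamma_1,\gamma_2)$, we get reductions $\gamma_1(\sm[1])\snreds{*}\sV[1]$, $\gamma_2(\tmp[1])\jreds[*][]\V[1]$ with $(\sV[1],\V[1])\in\rlV{\tya[1]}$, and similarly $\sV[2],\V[2]$ for $\tya[2]$. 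By the static choice rule, the source side reduces to $\ps\cdot\sV[1]+(1-\ps)\cdot\sV[2]$. On the target side, rule $\trg{(\mathit{D}\oplus)}$ reduces the unascribed choice to $\cww[1]\cdot\V[1]+\cww[2]\cdot\V[2]$, where the formula fixes $\cww[1]=\ps$ and $\cww[2]=1-\ps$ (the probability is static, so $\liftP{\ps}$ is an equation). We then show the intermediate pair $(\ps\cdot\sV[1]+(1-\ps)\cdot\sV[2],\ \cww[1]\cdot\V[1]+\cww[2]\cdot\V[2])$ is in $\rlV{\ps\cdot\tya[1]+(1-\ps)\cdot\tya[2]}$.

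For this we build the witnessing evidence from the two given ones. If $\evd[1]=\d{\tys[k][\cww[k]]}[k\in\kSet_1]$ and $\evd[2]=\d{\tys[k][\cww[k]]}[k\in\kSet_2]$ witness the two relatedness facts, we take their disjoint union with weights scaled by $\ps$ and $1-\ps$, reindexing the tags by shifting the second block by the sizes of the first distribution values (exactly as the metafunction $\sugarconj{\phi}{\sum}$ shifts indices). The marginal conditions of the coupling are linear in the weights, so scaled marginals sum correctly; the reordering conditions $\reoder{\tya}{\tya[i]}$ in $\Atom{\cdot}$ likewise combine by juxtaposing couplings scaled by $\ps$, $1-\ps$. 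The pointwise requirement ``$\cww[k]>0$ implies related values'' is inherited, since a scaled weight is positive only if the original was (when $\ps=0$ or $1$ the corresponding block simply vanishes).

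The remaining step is to discharge the outer ascription $\evd$ inserted by the elaboration: we apply the Ascription Lemma~\ref{lemma:ascription}(2) to the intermediate related pair, with $\evd$ justifying consistency between the computed type $\sugarconj{\phi}{\cww[1]\cdot\tya[1]+\cww[2]\cdot\tya[2]}$ and the lifted annotation; since both are liftings of equal static types, Lemma~\ref{eq-defined} and Lemma~\ref{static-composition-defined} ensure the evidence compositions in rule $\trg{(\mathit{D}\mathord{::}\gtya)}$ are defined, so no error arises. On the source side the ascription is semantically transparent, so the result lands in $\rlT{\ps\cdot\tya[1]+(1-\ps)\cdot\tya[2]}$.

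The main obstacle I expect is the bookkeeping in the middle step: the target distribution values carry symbolic probabilities and tagged variables, and one must check that the reindexed tags $\projl{\cdot},\projr{\cdot}$ of the combined evidence really point to the right entries of the concatenated source and target distribution values, and that the resulting formula is satisfiable with the numeric values $\ps,1-\ps$. I would handle this by working throughout with the numeric reading of formulas allowed by the footnote on lifted static types, which reduces it to an explicit index-shift argument.
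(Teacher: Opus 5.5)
Your proposal is correct and follows the same route as the paper: instantiate both hypotheses at related substitutions, combine the branch results, and discharge the outer ascription $\evd$ with the Ascription Lemma~\ref{lemma:ascription}(2). The paper's proof skips your middle step and applies the Ascription Lemma directly to the weighted sums. Your explicit scaled, index-shifted union of the two witnessing evidences, which is essentially a two-point instance of Lemma~\ref{setlogic}, fills in the relatedness of the intermediate pair that the paper leaves implicit.
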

\begin{proof}
    $\\$
    $
    \sentence{we need to show,}  \\
    {
      (\ga[1](\sm[1] \spsum \sm[2]), \ga[2](\asc{\evd \tmp[1] \spsum \tmp[2]}{ 
        \ps \cdot \tya[1] + (1 - \ps) \cdot \tya[1] })) \in \rlT{\ps \cdot \tya[1] + (1 - \ps) \cdot \tya[1]}
    }\\
    \eq{
      (\ga[1](\sm[1]) \spsum \ga[1](\sm[2]),
      \asc{\evd \ga[2](\tmp[1]) \spsum \ga[2](\tmp[2])}{ 
        \ps \cdot \tya[1] + (1 - \ps) \cdot \tya[2] } ) \in 
      \rlT{\ps \cdot \tya[1] + (1 - \ps) \cdot \tya[2]}
    }\\   
    \since{\rappro{\bga}{\sm[1]}{\tmp[1]}{\tya[1]}}\\
    \so{ \ga[1](\sm[1]) \nreds{}{*} \sV[1] }\\
    \so{ \ga[2](\tmp[1]) \nreds{}{*} \V[2] }\\
    \so{
      (\sV[1], \V[2]) \in \tya[1]
    }\\
    \since{\rappro{\bga}{\sm[2]}{\tmp[2]}{\tya[2] }}\\
    \so{ \ga[1](\sm[2]) \nreds{}{*} \sVp[1] }\\
    \so{ \ga[2](\tmp[2]) \nreds{}{*} \Vp[2] }\\
    \so{
      (\sVp[3], \Vp[4]) \in \tya[2]
    }\\
    \sentence{By Lemma} \; \ref{lemma:ascription} \\ 
    \so{
      ( \ps \cdot \sV[1] + (1 - \ps) \cdot \sVp[3], 
      \asc{\ps \cdot \V[2] + (1 - \ps) \cdot \Vp[4]}{\ps \cdot \tya[1] + (1 - \ps) \cdot \tya[2]} )
      \in \rlV{ \ps \cdot \tya[1] + (1 - \ps) \cdot \tya[2] }
    }\\ $
    The result holds. 
\end{proof}

\begin{lemma}[Compatibility (lambda)]~ \label{lemma:comabs}
  $\ift{\rappro{\bga, x:\tys}{\sm}{\tmp}{\tya}, \ev |- \tys -> \tya \rel \tys -> \tya
  }{
    \rappro{\bga}{ \lambda x :\tys. \sm}{ \asc{\ev \lambda x :\tys. \tmp}{\tys -> \tya}  }{\tya}
    }.$
\end{lemma}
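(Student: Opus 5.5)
The plan is to unfold the definition of $\rappro{\bga}{\cdot}{\cdot}{\cdot}$ and reduce the claim to a statement about closed values. Note first that the conclusion should be read at the function type $\tys -> \tya$, not at $\tya$: a $\lambda$-abstraction is a value, so the relation we must establish is the value interpretation $\rlV{\tys -> \tya}$ (lifted to a Dirac distribution when read at $\ds{(\tys -> \tya)^1}$).

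\textbf{Step 1 (reduce to closed values).} Fix $(\ga[1],\ga[2]) \in \rlG{\bga}$. Substitution commutes with the binder, so $\ga[1](\lambda x:\tys.\sm) = \lambda x:\tys.\ga[1](\sm)$, and similarly on the target side. Both sides are values, so by the value rules they reduce in one step to Dirac distributions. It therefore suffices to show
\[
\bigl(\lambda x:\tys.\ga[1](\sm),\ \asc{\ev \lambda x:\tys.\ga[2](\tmp)}{\tys -> \tya}\bigr) \in \rlV{\tys -> \tya}.
\]
The distribution-level membership follows from the identity coupling $\ds{(\tys -> \tya)^{\cww}}$ with $\cww = 1$, which witnesses $\reoder{}{}$ trivially.

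\textbf{Step 2 (the Atom condition).} We need both components to type at $\tys -> \tya$. On the source side use (T$\lambda$). On the target side use (G$\lambda$) followed by (G$\mathord{::}\gtys$) with the given $\ev$. This requires $\ga[1](\sm)$ and $\ga[2](\tmp)$ to be well typed under $x:\tys$. I would take this from the implicit typing premises of $\rappro{\cdot}{\cdot}{\cdot}{\cdot}$, which are always available in the compatibility lemmas, together with a standard substitution lemma for closing environments.

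\textbf{Step 3 (the application clause).} Take arbitrary $(\sv[1],\tv[2]) \in \rlV{\tys}$. On the source side, $(\lambda x:\tys.\ga[1](\sm))\;\sv[1]$ reduces to $\ga[1](\sm)[\sv[1]/x]$. On the target side, $\trg{(\mathit{Dapp})}$ first reduces $\asc{\cdom(\ev)\tv[2]}{\tys}$. Since $\tv[2] = \asc{\evp \tu}{\tys}$ and every type involved is static, Lemma~\ref{static-composition-defined} shows that $\evp \trans{} \cdom(\ev)$ is defined. Hence we get a value $\tw$ that, by Lemma~\ref{lemma:ascription}(1), is still related to $\sv[1]$ at $\tys$; it is not an error.

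We then extend the substitutions to $\ga[1][(\sv[1])/x]$ and $\ga[2][(\tw)/x]$, which lie in $\rlG{\bga, x:\tys}$. The hypothesis $\rappro{\bga, x:\tys}{\sm}{\tmp}{\tya}$ then gives that $\ga[1](\sm)[\sv[1]/x]$ and $\ga[2](\tmp)[\tw/x]$ reduce to distributions $(\sV[1],\V[2]) \in \rlV{\tya}$.

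\textbf{Step 4 (the main obstacle).} The target body is actually wrapped as $\asc{\ccod(\ev)\,\cdot}{\tya}$, so we still need to pass through this final distribution ascription. I would apply Lemma~\ref{lemma:ascription}(2) with $\jtf{\ccod(\ev)}{\tya \rel \tya}$, which follows from $\ev |- \tys -> \tya \rel \tys -> \tya$ by inverting (wd$->$). On the source side the ascription is absent, but a static ascription to the same type is semantically transparent: ascribed terms reduce by dropping the ascription. So the lemma's conclusion for $\asc{\sV[1]}{\tya}$ transfers to $\sV[1]$ itself.

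I expect the delicate points to be these. First, checking that the reordering witnesses in $\Atom{\tya}$ compose, which uses Lemma~\ref{srtrans} and Lemma~\ref{lrtrans}. Second, matching the step indices of the big-step relation $\trg{(\mathit{Dapp})}$, which $\trg{(\mathit{Dmon})}$ absorbs.
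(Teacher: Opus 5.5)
Your proposal is correct and follows essentially the same route as the paper's proof. Both unfold to the application clause of $\rlV{\tys \rightarrow \tya}$, which is what the paper actually establishes despite the statement's $\tya$; beta-reduce on the source; run $\trg{(\mathit{Dapp})}$ on the target, using Lemma~\ref{static-composition-defined} and Lemma~\ref{lemma:ascription}(1) for the argument; apply the hypothesis to the extended substitutions for the body; and close with Lemma~\ref{lemma:ascription}(2) for the $\ccod(\ev)$ ascription. Your extra bookkeeping (the Atom typing condition, the Dirac lifting, step indices) is simply omitted in the paper and does not change the argument.
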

\begin{proof}
    $\\$
    (for evidence compositions, Lemma~\ref{static-composition-defined} is used). \\
    $
    \sentence{we need to show,}  \\
    {
       (\lambda x :\tys. \ga[1](\sm),
        \asc{\ev \lambda x :\tys. \ga[2](\tmp)}{\tys -> \tya}) \in \rlT{\tya}
    }\\
    \eq{
    \forall (\sv[2], \tvp[2]) \in \rlV{\tys}, 
    (\lambda x :\tys. \ga[1](\sm) \; \sv[2],
    \asc{\ev \lambda x :\tys. \ga[2](\tmp)}{\tys -> \tya} \; \tvp[2]) \in \rlT{\tya}
    }\\
    \since{
      \lambda x :\tys. \ga[1](\sm) \; \sv[2] \nreds{}{*} \ga[1](\sm) [\sv[2] / x ]
    }\\
    \since{ 
      \asc{\ev \lambda x :\tys. \ga[2](\tmp)}{\tys -> \tya} \; \tvp[2] \nreds{}{*} 
      \cod \asc{ (\ev) \ga[2](\tmp) [ \dom \asc{(\ev[0] \trans{} \ev) \tu[2]}{\tys} / x ]}{\tya}  
    }\\
    \sentence{By Lemma} \; \ref{lemma:ascription} \\
    \so{
      (\sv[2],\dom \asc{(\ev[0] \trans{} \ev) \tu[2]}{\tys} ) \in \rlV{\tys}
    } \\
    \since{
      \rappro{\bga, x:\tys}{\sm}{\tmp}{\tya}
    }\\
    \so{
      (\ga[1] [ x/ \sv[2] ](\sm), \ga[2] [ x/ \dom \asc{(\ev[0] \trans{} \ev) \tu[2]}{\tys} ](\sm))
      \in \rlT{\tya}
    } \\
    \eq{
      (\ga[1] (\sm) [ x/ \sv[2] ], \ga[2] (\sm) [ x/ \dom \asc{(\ev[0] \trans{} \ev) \tu[2]}{\tys} ])
      \in \rlT{\tya}
    }\\
    \so{
      \ga[1](\sm) [\sv[2] / x ]  \nreds{}{*} \sV[1]
    }\\
    \so{
      \cod \asc{ (\ev) \ga[2](\tmp) [ \dom \asc{(\ev[0] \trans{} \ev) \tu[2]}{\tys} / x ]}{\tya}   
      \nreds{}{*} \asc{  \cod (\ev) \V[2]}{\tya} 
    }\\
    \since{
      (\sV[1], \V[2]) \in \rlV{\tya}
     }\\
     \sentence{By Lemma} \; \ref{lemma:ascription} \\
     \so{
      \asc{  \cod (\ev) \V[2]}{\tya} \nreds{}{*} \Vp[2]
     }\\
     \so{
      (\sV[1], \Vp[2]) \in \rlV{\tya}
     }\\$
     The result holds. 
\end{proof}

\begin{lemma}[Compatibility (if)]~ \label{lemma:comif}
  $\ift{ \rappro{\bga}{\sv}{\tvp}{\btype},
    \rappro{\bga}{\sm[1]}{\tmp[1]}{\tya}, 
  \rappro{\bga}{\sm[2]}{\tmp[2]}{\tya },
  \ev |- \btype \rel \btype
  }{
    \rappro{\bga}{ \ite{\sv}{ \sm[1]}{ \sm[2]}  }{ 
      \lett{x}{ \asc{\ev \tvp }{\btype}  }{  \ite{ x }{ \sm[1]}{ \sm[2]} }
     }{ 
     \tya
     } }.$
\end{lemma}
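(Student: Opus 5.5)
The proof unfolds the definition of $\rappro{\bga}{\cdot}{\cdot}{\tya}$. We fix an arbitrary pair of related substitutions $(\ga[1],\ga[2]) \in \rlG{\bga}$. It then suffices to show that $\ite{\ga[1](\sv)}{\ga[1](\sm[1])}{\ga[1](\sm[2])}$ and $\lett{x}{\asc{\ev \ga[2](\tvp)}{\btype}}{\ite{x}{\ga[2](\tmp[1])}{\ga[2](\tmp[2])}}$ are related in $\rlT{\tya}$. Substitution commutes with every construct here, and $x$ is chosen fresh.

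\textbf{Step 1: the guard.} From $\rappro{\bga}{\sv}{\tvp}{\btype}$ we obtain $(\ga[1](\sv),\ga[2](\tvp)) \in \rlV{\btype}$. By the definition of $\rlV{\btype}$, this means $\ga[1](\sv)$ is a Boolean $b$ and $\ga[2](\tvp) = \asc{\evp \ttb}{\btype}$ with $b = \ttb$.

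Next we apply the Ascription Lemma~\ref{lemma:ascription}(1) with $\jtf{\ev}{\btype \rel \btype}$. This gives $(\asc{b}{\btype}, \asc{\ev \ga[2](\tvp)}{\btype}) \in \rlV{\btype}$. Concretely, rule $(\mathit{D}\mathord{::}\gtys)$ reduces the ascription to the Dirac distribution $\dt{\pt{\asc{(\evp \trans{} \ev) \ttb}{\btype}}}$. By Lemma~\ref{static-composition-defined}, $\evp \trans{} \ev$ is always defined for static types, so the error branch never arises and the ascribed value carries the same boolean $\ttb$.

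\textbf{Step 2: the let.} Rule $(\mathit{Dlet})$ substitutes this single value for $x$ with probability $1$. The target body becomes $\ite{\asc{(\evp\trans{}\ev)\ttb}{\btype}}{\ga[2](\tmp[1])}{\ga[2](\tmp[2])}$. Because the distribution is Dirac, the weighted sum $\sum_i \p[i]\cdot\V[i]$ collapses to $1 \cdot \V = \V$, and the resulting formula is just the one of the branch.

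\textbf{Step 3: the branches.} We split on $b$. If $b = \ttt$, the source reduces by the $\ttt$-rule of Fig.~\ref{fig:static-dis-reduction} to whatever $\ga[1](\sm[1])$ reduces to. The target reduces by $(\mathit{Dit})$ to whatever $\ga[2](\tmp[1])$ reduces to. The hypothesis $\rappro{\bga}{\sm[1]}{\tmp[1]}{\tya}$ gives $\ga[1](\sm[1]) \snreds{*} \sV[1]$ and $\ga[2](\tmp[1]) \jreds[*][] \V[2]$ with $(\sV[1],\V[2]) \in \rlV{\tya}$. These are exactly the distributions the whole terms produce, with step counts increased by a constant, which $\rlT{\tya}$ (using $*$) ignores. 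The case $b = \fff$ is symmetric, using $(\mathit{Dif})$ and the hypothesis for $\sm[2]$.

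\textbf{Main obstacle.} The delicate point is Step 2. We must check that the let-bound Dirac distribution collapses cleanly: that the formula context and the weighted sum $1 \cdot \V$ produce a distribution configuration syntactically (or up to reordering) equal to $\V$. We must also check that the evidence witnessing $(\sV,\V)\in\rlV{\tya}$ is preserved. I would handle this by a small auxiliary observation: scaling by $1$ and a singleton sum preserve membership in $\rlV{\tya}$, reusing the same coupling evidence $\evd$ from the $\Atom{\tya}$ clause. The remaining work is routine bookkeeping of reduction rules.
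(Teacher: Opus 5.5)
Your proposal is correct and follows the same route as the paper's proof: fix related substitutions, extract the shared boolean $b=\ttb$ from $\rlV{\btype}$, and case-split on it, invoking the hypotheses for $\sm[1]$ and $\sm[2]$ to obtain related distributions. It is in fact more careful than the paper, which silently skips the $(\mathit{D}\mathord{::}\gtys)$ step on the guard (where Lemma~\ref{static-composition-defined} excludes the error branch) and the collapse of the Dirac let-binding; your intermediate appeal to Lemma~\ref{lemma:ascription} is not needed, since your direct computation of that step already does the work.
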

\begin{proof}
  $\\$
  $
  \sentence{we need to show,}  \\
  {
    (\ga[1](\ite{\ssb}{ \sm[1]}{ \sm[2]}), 
    \ga[2](\lett{x}{ \asc{\ev \tvp }{\btype}  }{  \ite{ x }{ \sm[1]}{ \sm[2]} })) 
    \in \rlT{\tya}
  }\\
  \eq{
    ((\ite{\ssb}{ \ga[1](\sm[1])}{ \ga[1](\sm[2]}), 
    \lett{x}{ \asc{\ev[0] \ttb }{\btype}  }{  \ite{ x }{ \ga[2](\sm[1]}}{ \ga[2](\sm[2])})
    \in \rlT{ \tya }
  }\\
  \since{\rappro{\bga}{\sm[1]}{\tmp[1]}{\tya}}\\
  \so{ \ga[1](\sm[1]) \nreds{}{*} \sV[1] }\\
  \so{ \ga[2](\tmp[1]) \nreds{}{*} \V[2] }\\
  \so{
    (\sV[1], \V[2]) \in \tya
  }\\
  \since{\rappro{\bga}{\sm[2]}{\tmp[2]}{\tya }}\\
  \so{ \ga[1](\sm[2]) \nreds{}{*} \sV[3] }\\
  \so{ \ga[2](\tmp[2]) \nreds{}{*} \V[4] }\\
  \so{
    (\sVp[3], \Vp[4]) \in \tya
  }\\
  \sentence{
   if 
  }\; \ssb = \ttb = \text{true},\\
  \so{
    (\ite{\ssb}{ \ga[1](\sm[1])}{ \ga[1](\sm[2]}) \nreds{}{} \sV[1]
  }\\
  \so{
    \lett{x}{ \asc{\ev[0] \ttb }{\btype}  }{  \ite{ x }{ \ga[2](\tm[1])}}{ \ga[2](\tm[2])}
    \nreds{}{} \V[2]
  }\\
  \sentence{
   if 
  }\; \ssb = \ttb = \text{false},\\
  \so{
    (\ite{\ssb}{ \ga[1](\sm[1])}{ \ga[1](\tm[2])}) \nreds{}{} \sV[3]
  }\\
  \so{
    \lett{x}{ \asc{\ev[0] \ttb }{\btype}  }{  \ite{ x }{ \ga[2](\tm[1])}}{ \ga[2](\sm[2])}
    \nreds{}{} \V[4]
  }\\
  $
  The result holds. 
\end{proof}

\begin{lemma}[Logical composition]\label{setlogic}
  $ \ift{ ~ \coupling[\cw{l}{k}][\ssum[l] \cw{l}{k} = \ps[k]][\ssum[k] 
  \cw{l}{k} = \ps[l]][(\sV[l], \V[k]) \in \rlV{ \tya[i] }] \ad 
  \tya[i] = \dt{\tys[i][\ps[i]]} =   \dt{\tys[l][\ps[l]]} =  \dt{\tys[k][\ps[k]]}
  }{
  (\ssum[l] \ps[l] \cdot \sV[l], \ssum[k] \ps[k] \cdot \V[k]) 
  \in \rlV{\ssum[i] \ps[i] \cdot \tya[i] } } $.
\end{lemma}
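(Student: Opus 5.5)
The plan is to unfold the definition of $\rlV{\cdot}$ at the distribution type $\ssum[i] \ps[i] \cdot \tya[i]$ and build the two required ingredients directly from the hypothesis: an evidence coupling $\evd$ witnessing $\Atom{\ssum[i] \ps[i] \cdot \tya[i]}$, and the pointwise relatedness of the values it connects. Write $\sV[l] = \dt{\pt{\sv[l,a]}[q_{l,a}]}$ and $\V[k] = \dt{\tv[k,b][r_{k,b}]}$. Each pair $(\sV[l],\V[k])$ with $\cw{l}{k}>0$ is related at $\tya[i]$, so by definition there is an evidence $\evd[lk] = \dt{\tys[h][\cww[h]]}$ (a coupling between the types of $\sV[l]$ and $\V[k]$) such that positive-weight entries connect values related at $\rlV{\tys[h]}$.

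The candidate global evidence is the \emph{mixture} $\evd = \ssum[l,k] \cw{l}{k} \cdot \evd[lk]$. Its entries connect $\sv[l,a]$ with $\tv[k,b]$, with weight $\cw{l}{k}\cdot\cww[h]$ where $\projl{\cww[h]}=a$ and $\projr{\cww[h]}=b$; the tags are shifted exactly as in the $\sugarconj{\phi}{\sum}$ metafunction so that they index into the flattened multisets $\ssum[l]\ps[l]\cdot\sV[l]$ and $\ssum[k]\ps[k]\cdot\V[k]$. First check the marginals: summing over all entries touching $\sv[l,a]$ gives $\ssum[k]\cw{l}{k}\, q_{l,a}$ (using that $\evd[lk]$ has left marginal $q_{l,a}$) $= \ps[l]\, q_{l,a}$ by $\ssum[k]\cw{l}{k}=\ps[l]$, which is precisely the weight of $\sv[l,a]$ in $\ssum[l]\ps[l]\cdot\sV[l]$; symmetrically for the right marginal using $\ssum[l]\cw{l}{k}=\ps[k]$. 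Second, pointwise relatedness is inherited: a positive entry has $\cw{l}{k}>0$ and $\cww[h]>0$, so the connected values are in $\rlV{\tys[h]}$ by the hypothesis on $(\sV[l],\V[k])$.

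It remains to show the $\Atom$ conditions: that the types of the two mixtures are related by reordering under $\evd$ and are reorderings of $\ssum[i]\ps[i]\cdot\tya[i]$. Typing of the mixtures follows from rule (V)/(GV) since each component is typed. For the reordering to $\ssum[i]\ps[i]\cdot\tya[i]$, I would use the hypothesis $\tya[i] = \dt{\tys[l][\ps[l]]} = \dt{\tys[k][\ps[k]]}$ to align indices, then compose couplings as in Lemma~\ref{srtrans} (the product construction $\sum_j \cww[ij]\cww[jk]$, normalised where needed) together with Lemma~\ref{lrtrans} to pass through liftings.

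The main obstacle I expect is this index bookkeeping: the hypothesis conflates the three index sets $i$, $l$, $k$, and the reordering from each $\sV[l]$'s type to $\tya[i]$ must be scaled and summed consistently with the $\cw{l}{k}$ weights. I would handle it by first proving a small auxiliary fact, that reordering is preserved under scaling and finite sums (mixture of couplings is a coupling of mixtures), and then applying it componentwise, with Lemma~\ref{srtrans} closing the chain.
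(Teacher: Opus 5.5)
Your proposal is correct and follows essentially the paper's route: the paper also builds the combined coupling by multiplying the outer weight $\omega(l,k)$ with the inner coupling that witnesses $(\mathscr{V}_l,\mathscr{V}_k)$'s relatedness. It then checks both marginals using $\sum_l \omega(l,k)=p_k$ and $\sum_k \omega(l,k)=p_l$, and inherits pointwise relatedness from the positive entries, just as you do. Your explicit treatment of the $\mathsf{Atom}$/reordering obligations (mixtures of reorder couplings closed off by transitivity) is more careful than the paper's, which simply asserts the corresponding type equalities.
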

\begin{proof}
 $\since{ \sV[l] = \dt{\pt{\sv[ll']}[\psp[ll']]} \land \V[k] = \dt{\pt{\tv[kk']}[\psp[kk']]} } \\
 \since{
  \tya[i] =  \dt{\tys[ii'][\psp[ii']]}
 } \\
  \since{\ssum[ll'] \ps[l] \cdot \sV[li] =  \dt{\pt{\sv[ll']}[\ps[l] \cdot \psp[ll']]} 
  \land 
  \ssum[kk'] \ps[k] \cdot \sV[lj] = \dt{\pt{\tv[kk']}[\ps[k] \cdot \psp[kk']]} } \\
   \so{ \text{we need to show the following :} } \\
   \so{ \\
     \exists \cwp{ll'}{kk'}. \ssum[ll'] \cwp{ll'}{kk'} = \ps[k] \cdot \psp[kk'] \land \ssum[kk'] \cwp{ll'}{kk'} = \ps[l] \cdot \psp[ll'] 
     \land \cwp{ll'}{kk'} > 0 \\
     => (\sv[ll'] , \tv[kk']) \in \rlV{\ssum[i]  \ps[i] \cdot \tya[i] }  } \\
  {\ad \dt{\tys[i][\ps[i] \cdot \psp[ii']]} =   \dt{\tys[l][\ps[l]\cdot \psp[ll']]} =  \dt{\tys[k][\ps[k] \cdot \psp[kk'] ]} } \\
   \since{ (\sV[l], \V[k]) \in \rlV{\tya[i]}   } \\
   \so{\coupling[\cwpp{ll'}{kk'}][\ssum[ll'] \cwpp{ll'}{kk'} =  \psp[kk']][\ssum[kk'] \cwpp{ll'}{kk'} = \psp[ll']][(\sv[ll'] , \tv[kk']) \in \rlV{\tya[i] }  ] } \\
   \so{ Suppose ~ \cwp{ll'}{kk'} = \cw{ll'}{kk'} \cdot \ps[lk] } \\
   \so{ \ssum[ll'] \cwp{ll'}{kk'} \cdot \ps[lk]  } \\
   \eq{ \ssum[l] \ps[lk] \cdot \ssum[l']  \cwp{ll'}{kk'} } \\
   \since{\ssum[l]  \ps[lk] = \ps[k], \ssum[l']  \cwp{ll'}{kk'} = \ps[kk'] } \\
   \sentence{then} \\
   \eq{ \ps[k] \cdot \psp[kk'] } \\
   \so{ \ssum[kk'] \cwp{ll'}{kk'} \cdot \ps[lk]  } \\
   \eq{ \ssum[k] \ps[lk] \cdot \ssum[k']  \cwp{ll'}{kk'} } \\
   \since{\ssum[k]  \ps[lk] = \ps[l], \ssum[k']  \cwp{ll'}{kk'} = \ps[ll'] } \\
   \sentence{then} \\
   \eq{ \ps[l] \cdot \psp[ll'] } \\
   \so{\ad \dt{\tys[i][\ps[i] \cdot \psp[ii']]} =   \dt{\tys[l][\ps[l]\cdot \psp[ll']]} =  \dt{\tys[k][\ps[k] \cdot \psp[kk'] ]} } \\
   \so{ \\
   \exists \cwp{ll'}{kk'}. \ssum[ll'] \cwp{ll'}{kk'} = \ps[k] \cdot \psp[kk'] \land \ssum[kk'] \cwp{ll'}{kk'} = \ps[l] \cdot \psp[ll'] 
   \land \cwp{ll'}{kk'} > 0 \\
   => (\sv[ll'] , \tv[kk']) \in \rlV{\ssum[i]  \ps[i] \cdot \tya[i] }  } \\
   \so{
    (\ssum[l] \ps[l] \cdot \sV[l], \ssum[k] \ps[k] \cdot \V[k]) 
    \in \rlV{\ssum[i] \ps[i] \cdot \tya[i] }
   }$
 \end{proof}

\begin{lemma}[Compatibility (let)]~ \label{lemma:comlet}
  $\ift{\rappro{\bga}{\sm[1]}{\tmp[1]}{ \d{\tys[i][\ps[i]]}[i \in \iSet] }, 
  \rappro{\bga, x : \tys[i] }{\sm[2]}{\tmp[2]}{\tya[i] }
  }{
    \rappro{\bga}{ \lett{x}{\sm[1]}{\sm[2]}   }{ \asc{\evd \lett{x}{\tmp[1]}{\tmp[2]}}{
      \sum_{i \in \iSet} \ps[i] \cdot \tya[i]
    }
    }{\sum_{i \in \iSet} \ps[i] \cdot \tya[i] } }.$
\end{lemma}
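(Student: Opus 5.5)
We follow the pattern of the preceding compatibility lemmas (Lemmas~\ref{lemma:comoplus} and~\ref{lemma:comif}). We unfold $\rappro{\bga}{\cdot}{\cdot}{\cdot}$ and fix $(\ga[1],\ga[2]) \in \rlG{\bga}$. Substitution commutes with $\<let>$ and with the ascription, so it suffices to show
\[
(\lett{x}{\ga[1](\sm[1])}{\ga[1](\sm[2])},\ \asc{\evd \lett{x}{\ga[2](\tmp[1])}{\ga[2](\tmp[2])}}{\textstyle\sum_{i} \ps[i]\cdot\tya[i]}) \in \rlT{\textstyle\sum_{i} \ps[i]\cdot\tya[i]}.
\]

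\textbf{Step 1: the bound terms.} From the first hypothesis, $\ga[1](\sm[1]) \snreds{*} \sV = \d{\pt{\sv[l]}[\ps[l]]}[l]$ and $\ga[2](\tmp[1]) \jreds[*][] \V = \d{\tv[k][\ps[k]]}[k]$. Moreover, there is a static coupling evidence $\d{\tys[h][\cww[h]]}[h]$ such that $\cww[h]>0$ implies $(\sv[\projl{\cww[h]}], \tv[\projr{\cww[h]}]) \in \rlV{\tys[h]}$. The typings of $\sV$ and $\V$ are reorderings of $\d{\tys[i][\ps[i]]}[i\in\iSet]$.

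\textbf{Step 2: the body.} For each related pair $(\sv[l],\tv[k])$ at some $\tys[i]$, extend the substitutions to $(\ga[1][x\mapsto \sv[l]], \ga[2][x\mapsto \tv[k]]) \in \rlG{\bga, x:\tys[i]}$. The second hypothesis then yields $\ga[1](\sm[2])[\sv[l]/x] \snreds{*} \sV[l]$ and $\ga[2](\tmp[2])[\tv[k]/x] \jreds[*][] \V[k]$ with $(\sV[l],\V[k]) \in \rlV{\tya[i]}$. The values in $\V$ are ascribed raw values (not errors), because they are related to static values through $\Atom{\cdot}$. Hence $\mathit{sub}$ is ordinary substitution, and rules (Dlet) and the static let rule give the weighted sums $\ssum[l]\ps[l]\cdot\sV[l]$ and $\ssum[k]\ps[k]\cdot\V[k]$.

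\textbf{Step 3: combining the weighted sums.} Lemma~\ref{setlogic} (logical composition) combines these into
\[
(\ssum[l]\ps[l]\cdot\sV[l],\ \ssum[k]\ps[k]\cdot\V[k]) \in \rlV{\textstyle\sum_i \ps[i]\cdot\tya[i]}.
\]
The coupling it requires is exactly the coupling from Step 1. Its type side condition holds because the type of $x$ in each branch is $\tys[i]$ up to reordering, and relatedness at reordered types is justified by Lemmas~\ref{srtrans} and~\ref{lrtrans}.

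\textbf{Step 4: the outer ascription.} Finally we discharge $\evd$. The target $\<let>$ is typed by a reordering of $\sum_i\ps[i]\cdot\tya[i]$, and $\evd$ justifies consistency with the static type $\sum_i \ps[i]\cdot\tya[i]$. Part~2 of Lemma~\ref{lemma:ascription} then shows that ascribing the related distribution values yields related computations. Evaluating the static ascription just strips it, and the target ascription succeeds by Lemma~\ref{static-composition-defined}: composition of static evidences, including $\initReorder{\cdot}{\cdot}$ via Lemma~\ref{eq-defined}, is always defined, so no error arises.

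We expect the main obstacle to be Step 3. There the per-branch couplings in $\rlV{\tya[i]}$ must be glued with the outer coupling into one coupling for the sum. The hard part is the index bookkeeping: $l$ and $k$ range over outcomes, not over $\iSet$, and several outcome pairs may map to the same $\tys[i]$. We would set the combined weight to $\cww[h]$ times the inner weight, mirroring the proof of Lemma~\ref{setlogic}, and check the two marginal equations by summing first the inner and then the outer weights.
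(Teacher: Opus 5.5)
Your proposal is correct and follows the paper's proof essentially step for step. Both fix related substitutions, use the first hypothesis for the bound terms, apply the second hypothesis to each related outcome pair, combine the weighted sums with Lemma~\ref{setlogic}, and discharge the outer ascription with Lemma~\ref{lemma:ascription}. The one difference is where the coupling fed to Lemma~\ref{setlogic} comes from: the paper builds it from the two type reorderings via Lemma~\ref{srtrans}, while you reuse the relatedness witness from the first hypothesis, which is the coupling that actually guarantees the positive-weight premise (related body results) Lemma~\ref{setlogic} requires.
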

\begin{proof}
  $\\$
  $
  \sentence{we need to show,}  \\
  {
    (\ga[1](\lett{x}{ \sm[1]}{ \sm[2]}), 
    \ga[2](  \asc{\evd \lett{x}{ \tmp[1] }{  \tmp[2]} }{ \sum_{i \in \iSet} \ps[i] \cdot \tya[i] } ) ) 
    \in \rlT{\sum_{i \in \iSet} \ps[i] \cdot \tya[i]}
  }\\
  \eq{
    ((\lett{x}{ \ga[1](\sm[1])}{ \ga[1](\sm[2])}), 
    ( \asc{\evd \lett{x}{ \ga[2](\tmp[1]) }{  \ga[2](\tmp[2]) }}{\sum_{i \in \iSet} \ps[i] \cdot \tya[i]} )) 
    \in \rlT{\sum_{i \in \iSet} \ps[i] \cdot \tya[i]}
  }\\
  \since{\rappro{\bga}{\sm[1]}{\tmp[1]}{\tya[1]}}\\
  \so{ \ga[1](\sm[1]) \nreds{}{*} \sV[1] }\\
  \so{ \ga[2](\tmp[1]) \nreds{}{*} \V[2] }\\
  \so{
    (\sV[1], \V[2]) \in \d{\tys[i][\ps[i]]}[i \in \iSet] 
  }\\
  \sentence{Suppose}\; \sV[1] = \dt{\pt{\sv[l]}[\ps[l]]},
  \V[2] = \dt{\pt{\tvp[\kk]}[\ps[\kk]]}\\
  \so{
  \sentence{for any}\; \sv[l], \exists \tvp[\kk] \ad 
   (\sv[l],\tvp[\kk] ) \in \rlV{\tys[i]} } \\
   \so{
    \reoder{\dt{\tys[l][\ps[l]]}}{\dt{\tys[i][\ps[i]]}} \ad
    \reoder{\dt{\tys[\kk][\ps[\kk]]}}{\dt{\tys[i][\ps[i]]}}
   }\\
   \so{
    \ga[1](\lett{x}{ \sm[1]}{ \sm[2]}) \nreds{}{} \ga[1](\sm[2]) [\sv[l] /x]
   }\\
   \so{
    \ga[1](\lett{x}{ \tmp[1]}{ \tmp[2]}) \nreds{}{} \ga[2](\tmp[2]) [\tvp[\kk] /x]
   }\\
  \since{\rappro{\bga, x : \tys[i]}{\sm[2]}{\tmp[2]}{\tya[2] }}\\
  \so{ \ga[1] [\sv[l] / x] (\sm[2]) \nreds{}{*} \sVp[l] }\\
  \eq{ \ga[1] (\sm[2]) [\sv[l] / x] \nreds{}{*} \sVp[l] }\\
  \so{ \ga[2] [\tvp[\kk] / x] (\tmp[2]) \nreds{}{*} \Vp[\kk] }\\
  \eq{ \ga[2] (\tmp[2])[\tvp[\kk] / x] \nreds{}{*} \Vp[\kk] }\\
  \so{ (\sVp[l],\Vp[\kk]) \in \rlV{ \tya[i] }  }\\
  \since{
    \reoder{\dt{\tys[l][\ps[l]]}}{\dt{\tys[i][\ps[i]]}} \ad
    \reoder{\dt{\tys[\kk][\ps[\kk]]}}{\dt{\tys[i][\ps[i]]}}
   }\\
   \sentence{By Lemma}~\ref{srtrans}, \\
  \so{ 
    \reoder{\dt{\tys[l][\ps[l]]}}{\dt{\tys[\kk][\ps[\kk]]}} 
  }\\
  \so{ 
    \ssum[l] \ps[l\kk] = \ps[\kk] 
  }\\
  \so{ 
    \ssum[\kk] \ps[l\kk] = \ps[l] 
  }\\
  \sentence{By Lemma}~\ref{setlogic}, \\
  \so{
    ( \ssum[l \in \lSet] \ps[l] \cdot \sVp[l],
    \ssum[\kk \in \kSett] \ps[\kk] \cdot \Vp[\kk]) \in 
    \rlV{ \ssum[i \in \iSet] \ps[i] \cdot \tya[i] }
  }\\
  \since{
    \ga[1](\sm[2]) [\sv[l] /x] \nreds{}{} \sVp[l]
  }\\
  \since{
    \ga[2](\tmp[2]) [\tvp[\kk] /x] \nreds{}{} \Vp[\kk]
  }\\
  \sentence{By Lemma} \; \ref{lemma:ascription} \\
  \so{
    (\sVp[l], \asc{\evd \Vp[\kk]}{ \sum_{i \in \iSet} \ps[i] \cdot \tya[i]}) \in
    \rlT{\sum_{i \in \iSet} \ps[i] \cdot \tya[i]}
  }\\
  \so{
    (\ga[1](\lett{x}{ \sm[1]}{ \sm[2]}), 
    \ga[2](  \asc{\evd \lett{x}{ \tmp[1] }{  \tmp[2]} }{ \sum_{i \in \iSet} \ps[i] \cdot \tya[i] } ) ) 
    \in \rlT{\sum_{i \in \iSet} \ps[i] \cdot \tya[i]}
  }\\
  $
  The result holds.
\end{proof}

\begin{lemma}[Compatibility (add)]~ \label{lemma:comadd}
  $\ift{ 
  \rappro{\bga}{\sv}{\tvp}{\rtype}, 
    \rappro{\bga}{\sw}{\twp}{\rtype },
  \ev[1] |- \rtype \rel \rtype, 
  \ev[2] |- \rtype \rel \rtype
   }{
  \rappro{\bga}{ \add{ \sv}{ \sw}  }{ 
    \lett{x}{ \asc{\ev[1] \tvp }{\rtype}  }{  \lett{y}{ \asc{\ev[2] \twp }{\rtype} }{ x + y} }
   }{ 
    \dt{\rtype}
   } }.$
\end{lemma}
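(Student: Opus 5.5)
Following the pattern of the previous compatibility lemmas (in particular Lemma~\ref{lemma:comif} and Lemma~\ref{lemma:comapp}), I would fix an arbitrary pair of related substitutions $(\gamma_1,\gamma_2) \in \rlG{\Gamma}$ and show that $\gamma_1(\add{\sv}{\sw})$ and the $\gamma_2$-image of the elaborated $\<let>$ term are related in $\rlT{\dt{\rtype}}$, \ie $\rlT{\ds{\rtype^1}}$.

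First, from the hypotheses $\rappro{\bga}{\sv}{\tvp}{\rtype}$ and $\rappro{\bga}{\sw}{\twp}{\rtype}$ I obtain $(\gamma_1(\sv),\gamma_2(\tvp)) \in \rlV{\rtype}$ and $(\gamma_1(\sw),\gamma_2(\twp)) \in \rlV{\rtype}$. Values reduce in one step to Dirac distributions, so the relation at type $\rtype$ is available as a value relation. By the definition of $\rlV{\rtype}$, this means $\gamma_1(\sv) = r_1$ and $\gamma_2(\tvp) = \asc{\ev[3] r_1}{\rtype}$ for some real $r_1$ and evidence $\ev[3]$. Similarly, $\gamma_1(\sw) = r_2$ and $\gamma_2(\twp) = \asc{\ev[4] r_2}{\rtype}$.

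Second, I would trace the reduction of the target term. The ascription $\asc{\ev[1](\asc{\ev[3] r_1}{\rtype})}{\rtype}$ reduces by rule $\trg{(\mathit{D}\mathord{::}\gtys)}$. Since all evidences involved justify static judgments $\rtype \rel \rtype$, Lemma~\ref{static-composition-defined} guarantees that $\ev[3] \trans{} \ev[1]$ is defined. Hence the ascription yields the Dirac distribution $\dt{\asc{\ev[5] r_1}{\rtype}}$ rather than an error, and likewise for $r_2$ with some $\ev[6]$. Rule $\trg{(\mathit{Dlet})}$ then substitutes these values via $\mathit{sub}$; no error case arises. Finally, rule $\trg{(\mathit{D+})}$ fires, again with a defined composition $\ev[5] \trans{} \ev[6]$ by Lemma~\ref{static-composition-defined}. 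The result is $\dt{\asc{\ev[7] r_3}{\rtype}}$ with $r_3 = r_1 + r_2$. On the source side, $\add{r_1}{r_2} \snreds{} \dt{r_3}$.

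Third, I would conclude $(\dt{r_3},\dt{\asc{\ev[7] r_3}{\rtype}}) \in \rlV{\ds{\rtype^1}}$. For this I take the trivial coupling $\ds{\rtype^{\cww}}$ with weight $1$ connecting the single outcomes. The $\mathsf{Atom}$ conditions hold because both sides type at $\ds{\rtype^1}$ (the target via (G$+$)/(GV)), and reordering of $\ds{\rtype^1}$ with itself is justified (Lemma~\ref{reorderevidence2}). The only related pair is $(r_3, \asc{\ev[7] r_3}{\rtype}) \in \rlV{\rtype}$, which holds since the underlying reals coincide.

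The main obstacle I expect is the bookkeeping in the second step, namely ensuring the ascription steps never produce $\errort{\rtype}$. This is exactly where Lemma~\ref{static-composition-defined} is needed, and it requires checking that the inner evidences $\ev[3],\ev[4]$ obtained from $\rlV{\rtype}$ indeed justify $\rtype \rel \rtype$, which follows from the $\mathsf{Atom}$ typing condition and (G$\mathord{::}\gtys$).
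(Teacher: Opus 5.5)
Your proposal is correct and takes the same route as the paper's proof: instantiate with related substitutions, use $\rlV{\rtype}$ to identify the underlying reals on both sides, reduce both terms to the Dirac distribution of the sum, and conclude relatedness. You are more careful than the paper's proof, which skips the ascription/$\<let>$ reduction steps and never checks that the evidence compositions are defined; you discharge that explicitly via Lemma~\ref{static-composition-defined} and also exhibit the trivial coupling needed for the distribution-level relation.
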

\begin{proof}
  $
  \sentence{we need to show,}\\
  { 
  (\ga[1]( \add{ \sv}{ \sw}), \ga[2]( \lett{y}{ \asc{\ev[2] \twp }{\rtype} }{ x + y} ))
  \in \rlV{\dt{\rtype^{1}}} 
  } \\
  \sentence{by the typing rule,}\\
  \eq{
    (\ga[1]( \add{ \ssr[1] }{ \ssr[2] }), \ga[2]( \lett{x}{ \asc{\ev[1] \ttr[1] }{\rtype}  }{  \lett{y}{ \asc{\ev[2] \ttr[2] }{\rtype} }{ x + y} } ))
  \in \rlV{\dt{\rtype^{1}}} 
  }\\
  \since{
    \rappro{\bga}{\sv}{\tvp}{\rtype}
  }\\
  \so{
    (\ssr[1], \asc{\ev[1] \ttr[1] }{\rtype}) \in \rlV{\rtype}
  }\\
  \so{
    \ssr[1] = \ttr[1]
  }\\
  \since{
    \rappro{\bga}{\sw}{\twp}{\rtype }
  }\\
  \so{
    (\ssr[2], \asc{\ev[2] \ttr[2] }{\rtype}) \in \rlV{\rtype}
  }\\
  \so{
    \ssr[2] = \ttr[2]
  }\\
  \since{
    \add{\ssr[1]}{\ssr[2]} \nreds{} \dt{ \pt{\ssr[3]} }
  }\\
  \since{
    \add{ \asc{\ev[1] \ttr[1]}{\rtype} }{ \asc{\ev[2] \ttr[2]}{\rtype} } \nreds{} \dt{ \pt{\ttr[3]} }
  }\\
  \so{\ssr[3] = \ttr[3]} \\
  \so{
    (\ga[1]( \add{ \ssr[1] }{ \ssr[2] }), \ga[2]( \lett{x}{ \asc{\ev[1] \ttr[1] }{\rtype}  }{  \lett{y}{ \asc{\ev[2] \ttr[2] }{\rtype} }{ x + y} } ))
    \in \rlV{\dt{\rtype^{1}}} 
  }
  $
\end{proof}

\begin{theorem}[Fundamental property]~ \label{fundamentalprop} 
  \begin{enumerate}
    \item  $\ift{ G |-ss \src{\sm} : \tys \ad \ela{ \Gamma |-d \sm }{\tm}{\tys} }{ 
      \rappro{\Gamma}{\sm}{\tm}{\tys}
    }$
    \item  $\ift{ G |-ss \src{\sm} : \tya \ad \ela{ \Gamma |-d \sm }{\tm}{\tya} }{ 
      \rappro{\Gamma}{\sm}{\tm}{\tya}
    }$
  \end{enumerate}
\end{theorem}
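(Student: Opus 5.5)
The proof goes by simultaneous induction on the \slang typing derivation of $\src{\sm}$ (equivalently, on the elaboration derivation, since for fully static terms the two derivations are in one-to-one correspondence by Theorem~\ref{theorem:staticeq}). In each case we invert the elaboration rule of Figure~\ref{fig:elaboration} that produced $\tm$, apply the induction hypothesis to the premises, and close the case with the matching compatibility lemma.

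\begin{itemize}
\item Variables are handled by Lemma~\ref{comx}.
\item Constants $\ssr$ and $\ssb$ are handled by Lemmas~\ref{comr} and~\ref{comb}.
\item Abstractions are handled by Lemma~\ref{lemma:comabs}, using the induction hypothesis under the extended environment $\Gamma, x:\tys$.
\item Applications are handled by Lemma~\ref{lemma:comapp}.
\item Probabilistic choice is handled by Lemma~\ref{lemma:comoplus}.
\item $\<let>$ is handled by Lemma~\ref{lemma:comlet}. The induction hypothesis is applied to $\src{\sm}$ at $\d{\tys[i][\ps[i]]}[i \in \iSet]$, and to $\srcn$ at each $\tya[i]$ under $\Gamma, x:\tys[i]$.
\item Conditionals are handled by Lemma~\ref{lemma:comif}.
\item Addition is handled by Lemma~\ref{lemma:comadd}.
\item The value-to-distribution rule (E$\sv$) follows from the value case. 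A related pair at $\tys$ yields a related pair of Dirac distributions at $\ds{\tys[][1]}$, witnessed by the trivial coupling with a single weight equal to $1$.
\item The ascription cases (E$::\sgtys$) and (E$::\sgtya$) follow from the induction hypothesis together with the Ascription Lemma~\ref{lemma:ascription}. Reduction of the \slang ascription simply erases it.
\end{itemize}

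The main side obligation is that every evidence appearing in the side conditions of these compatibility lemmas is actually available. When all types are static, the meets $\liftT{\tys[1]} \meet \liftT{\tys[2]}$ and $\liftD{\tya[1]} \meet \liftD{\tya[2]}$ computed in the elaboration are defined: Lemma~\ref{couplingeq} turns $=_{s}$ into the coupling-based $=$, Lemma~\ref{eq-defined} gives definedness, and reductivity of the meet (Lemma~\ref{meet-more-precise2}) gives that the resulting evidence justifies the required consistency judgments. The hypotheses of Lemma~\ref{lemma:comapp} use $\dom$ and $\cod$, and these coincide with $\cdom$ and $\ccod$ on static types. The (E$\oplus$) and (Elet) rules produce outer ascriptions of the form $\phi \vdash \cww[1]\cdot\evd[1] + \cww[2]\cdot\evd[2]$ (respectively a sum over $i$). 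For these we check that, with static probabilities, the constraint formulas are satisfiable and pin every variable to the corresponding static value, so the lifted type is a reordering of the declared one.

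The step I expect to be the main obstacle is the $\<let>$ case, together with the type-index mismatch between the two parts of the statement. The theorem is stated with $\tys$ in part (1) but with $\Gamma |-ss \src{\sm} : \tya$ in part (2). In Lemma~\ref{lemma:comlet} the two sides reduce $\src{\sm}$ to distributions indexed differently: for each $\sv[l]$ there are possibly several related $\tvp[\kk]$, and the weights must be recombined. I would handle this by taking the coupling $\cw{l}{k}$ from the relatedness of the outcomes of $\src{\sm}$. That coupling is composed with the per-branch couplings via the logical composition Lemma~\ref{setlogic}, and the resulting reorderings are chained by transitivity of reordering (Lemmas~\ref{srtrans} and~\ref{lrtrans}). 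The outer ascription $\evd$ is then discharged by the Ascription Lemma, using Lemma~\ref{static-composition-defined} to guarantee that the evidence composition $(\initReorder{\gtyap}{\gtya}) \trans{} \evd$ performed by rule (D$::\gtya$) never fails on static types.

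A further subtlety is that the induction hypothesis is needed at all related substitutions $(\gamma_1,\gamma_2)$ and all branch values. This is why the induction is on the typing derivation with $\Gamma$ general, rather than only on closed terms.
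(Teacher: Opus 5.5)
Your proposal is correct and follows the paper's proof: induction on the \slang typing derivation, with each case discharged by the matching compatibility lemma (Lemmas~\ref{comx}, \ref{comb}, \ref{comr}, \ref{lemma:comabs}, \ref{lemma:comapp}, \ref{lemma:comoplus}, \ref{lemma:comlet}, \ref{lemma:comif}, \ref{lemma:comadd}) and the two ascription cases by Lemma~\ref{lemma:ascription}. Your separate treatment of the value-to-distribution case, and your discussion of evidence definedness and of the let case, only spell out what the paper already packages into the second clauses of the value compatibility lemmas and into the proofs of Lemmas~\ref{lemma:comlet} and~\ref{static-composition-defined}.
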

\begin{proof} 
   By induction on the typing derivation of $\sm$.
   \begin{case}[$\sm = \sx$]
    $\\$
    $\so{ \\
      \inference{\Gamma(\sx) = \tys}
      {\Gamma |-ss \sx : \tys} 
    } \\
    \so{ \\
      \inference{\Gamma( \sx) = \dt{\tys[][1]} }
      {\Gamma |-ss \sx : \tys} 
    } \\
    \sentence{
      We need to prove the following,
    } \\
    {
      \rappro{\Gamma}{\sx}{\tx}{\tys} \ad \rappro{\Gamma}{\sx}{\tx}{ \dt{\tys[][1]}}
    }\\$
    The result follow directly by Lemma~\ref{comx}.
   \end{case}
   \begin{case}[$\sm = \ssb$]
    $\\$
    $\so{ \\
    \inference{}{\Gamma |-ss \ssb : \btype}
    } \\
    \so{ \\
      \inference{}{\Gamma |-ss \ssb : \tys} 
    } \\
    \sentence{
      We need to prove the following,
    } \\
    {
      \rappro{\Gamma}{\ssb}{\asc{\ev \ttb}{\btype}}{\btype} \ad \rappro{\Gamma}{\ssb}{\asc{\ev \ttb}{\btype}}{ \dt{\btype^{1}}}
    }\\$
    The result follow directly by Lemma~\ref{comb}.
   \end{case}
   \begin{case}[$\sm = \ssr$]
    $\\$
    $\so{ \\
    \inference{}{\Gamma |-ss \ssr : \btype}
    } \\
    \so{ \\
      \inference{}{\Gamma |-ss \ssr : \tys} 
    } \\
    \sentence{
      We need to prove the following,
    } \\
    {
      \rappro{\Gamma}{\ssr}{ \asc{\ev \ttr}{\rtype}}{\rtype} \ad \rappro{\Gamma}{\ssr}{\asc{\ev \ttr}{\rtype}}{ \dt{\rtype^{1}}}
    }\\$
    The result follow directly by Lemma~\ref{comr}.
   \end{case}
   \begin{case}[$\sm = \sv \; \sw $]
    $\\$
    $
    \so{ \\
      \inference{
        \Gamma |-ss \sv : \tys[1] -> \tya  &  
        \Gamma |-ss \sw :  \tys[2] & \tys[1] = \tys[2] 
      }{\Gamma |-ss \sv\; \sw : \tya }
    } \\
    \so{\\
      \inference{
    \ela{\Gamma |-t \sv}{\tv}{\tys[1] -> \tya}  &  
    \ela{\Gamma |-t \sw}{\tw}{\tys[2]} & \tys[1] = \tys[2] \\
    \ev[1] = \tys[2] \meet \tys[1] & \ev[2] = \tys[1] -> \tya \meet \tys[1] -> \tya 
     }{\ela{\Gamma |-d \sv \; \sw}{  \lett{ \tx }{\asc{\ev[1] \tw }{ \tys[1] } }{ 
      \lett{\ty}{\asc{\ev[2] \tv}{ \tys[1] -> \tya }}{\ty \;\tx}} }{\tya}}
    } \\
    % \sentence{
    % we need to show that,
    % }\\
    % {
    %  \rappro{\bga}{ \ga[1](\sv \; \sw) }{  
    %   \ga[2](\lett{ \tx }{\asc{\ev[1] \tw }{ \tys[1] } }{ 
    %     \lett{\ty}{\asc{\ev[2] \tv}{ \tys[1] -> \tya }}{\ty \;\tx}})
    %     }{ \tya}
    % } \\
    % \sentence{By induction hypothesis,} \\
    % \so{
    %   \rappro{\bga}{\sv}{\tv}{\tys[1] -> \tya}
    % } \\
    % \so{
    %   \rappro{\bga}{\sw}{\tw}{\tys[2]}
    % } \\
    % \so{
    %   (\sv, \tv) \in \rlV{\tys[1]}
    % }\\
    % \so{
    %   (\sw, \tw) \in \rlV{\tys[2]}
    % }\\
    % \sentence{By Lemma~\ref{lemma:ascription},} \\
    % \so{ (\sw,\asc{\ev[1] \tw }{ \tys[1] }) \in \rlT{\tys[1]} }\\
    % \so{ (\sv ,\asc{\ev[2] \tv}{ \tys[1] -> \tya}) \in \rlT{\tys[1] -> \tya } }\\
    % \so{ (\sw,\asc{\ev[3] \trans{} \ev[1] \tu[1] }{ \tys[1] }) \in  \rlV{\tys[1]} }\\
    % \so{ (\sv ,\asc{\ev[4] \trans{} \ev[2] \tu[2] }{ \tys[1] -> \tya}) \in \rlT{\tys[1] -> \tya } }\\
    \text{The result follow by Lemma}~\ref{lemma:comapp}
    $
   \end{case}
   \begin{case}[$\sm = \lambda x: \tys.\sm $]
     $\\$
     $
     \so{\\
      \inference{\Gamma, \sx:\tys |-ss \sm : \tya}
      {\Gamma |-ss \lambda \sx:\tys. \sm : \tys -> \tya}
     }\\
     \so{\\
      \inference{\ela{\Gamma, \sx :\tys |-t \sm}{\tmp}{\tya} & \ev = \tys -> \tya  \meet  \tys -> \tya }
      {\ela{\Gamma |-t \lambda \sx : \tys. \sm }{\asc{\ev \lambda \tx: \tys. \tmp}{ \tys -> \tya}}{\tys -> \tya}}
     }\\
     \sentence{By induction hypothesis,}\\
     \so{
      \rappro{\bga, x: \tya}{\sm}{\tmp}{\tya}
     }\\
     $
     The proof follows by Lemma~\ref{lemma:comabs}.
   \end{case}
   \begin{case}[$\sm = \asc{\sv}{\tys} $]
    The proof follows by Lemma~\ref{lemma:ascription}.
   \end{case}
   \begin{case}[$\sm = \asc{\smp}{\tya} $]
    The proof follows by Lemma~\ref{lemma:ascription}.
   \end{case}
   \begin{case}[$\sm = {\sm[1]} \spsum {\sm[2]}  $]
    $\\$
    $\so{\\
      \inference{
    \Gamma |-ss \sm[1] : \tya[1]  &  
    \Gamma |-ss \sm[2] : \tya[2] & 
    }{\Gamma |-ss { \sm[1]} \spsum { \sm[2]} : \ps \cdot \tya[1] + (1- \ps) \cdot \tya[2]} 
    }\\ 
   \so{\\
   \inference{
    \ela{\Gamma |-d \sm[1]}{\tmp[1]}{\tya[1]} &  
    \ela{\Gamma |-d \sm[2]}{\tmp[2]}{\tya[2]} &
    \jtf{\evd}{\ps \cdot \tya[1] + (1- \ps) \cdot \tya[2] \rel \ps \cdot \tya[1] + (1- \ps) \cdot \tya[2]} 
   }{\ela{\Gamma |-d { \sm[1]} \spsum { \sm[2]}}{ \asc{\evd {\tmp[1]} \spsum {\tmp[2]}}{\ps
   \cdot \tya[1] + (1- \ps) \cdot \tya[2] } }{ 
    \ps \cdot \tya[1] + (1- \ps) \cdot \tya[2] } }
   }\\
   \sentence{By induction hypothesis,}\\
   \so{
    \rappro{\bga}{\sm[1]}{\tmp[1]}{\tya[1]}
   }\\
   \so{
    \rappro{\bga}{\sm[2]}{\tmp[2]}{\tya[2]}
   }\\
    $
    The proof follows by Lemma~\ref{lemma:comoplus}.
   \end{case}
   \begin{case}[$\sm = \lett{x}{\sm[1]}{\sm[2]} $]
    $ \\
    \so{\\
      \inference{
        \Gamma |-ss \sm[1] : \d{\tys[i][\ps[i]]}[i \in \iSet]  \\
        \forall i\in\iSet.~\Gamma, x : \tys[i] |-ss \sm[2] : \tya[i]
      }
      {\Gamma |-ss \lett{x}{\sm[1]}{\sm[2]} : \sum_{i \in \iSet} \ps[i] \cdot \tya[i]}
    }\\
    \so{\\
      \inference{
    \ela{\Gamma |-d \sm[1]}{\tmp[1]}{\d{\tys[i][\ps[i]]}[i \in \iSet]}  \\
    \forall i\in\iSet.~\ela{\Gamma, \sx : \tys[i] |-d \sm[2]}{\tmp[2]}{\tya[i]}
    }{\ela{\Gamma |-d \lett{x}{\sm[1]}{\sm[2]}}{\lett{x}{\tmp[1]}{\tmp[2]}}{
      \sum_{i \in \iSet} \ps[i] \cdot \tya[i]}
    }}\\
    $
    The proof follows by Lemma~\ref{lemma:comlet}.
   \end{case}
   \begin{case}[$\sm = \add{\sm[1]}{\sm[2]} $]
    $\\
    \so{\\
      \inference{
        \Gamma |-ss \sv : \tys[1]   &  \tys[1] = \rtype \\
        \Gamma |-ss \sw :  \tys[2]  &  \tys[2] = \rtype \\
      }
      {\Gamma |-ss \add{\sv}{\sw} : \ds{\rtype^1} }
    }\\
    \sentence{\\
    By induction hypothesis,
    }\\
    \so{
      \rappro{\bga}{\sv}{\tvp}{\rtype}
    }\\
    \so{
      \rappro{\bga}{\sw}{\twp}{\rtype}
    }\\
    $
    The proof follows by Lemma~\ref{lemma:comadd}.
   \end{case}
   \begin{case}[$\sm = \text{if else}$]
    $ \\
    \so{\\ 
      \inference{
        \Gamma |-ss \sv : \tys  & \tys = \btype  \\
        \Gamma |-ss \sm : \tya & 
        \Gamma |-ss \sn : \tya 
      }
      {\Gamma |-ss \ite{\sv}{\sm}{\sn} : \tya }
    }\\
    \sentence{\\
    By induction hypothesis,
    }\\
    \so{
      \rappro{\bga}{\sv}{\tvp}{\btype}
    }\\
    \so{
      \rappro{\bga}{\sm}{\tmp}{\tya}
    }\\
    \so{
      \rappro{\bga}{\sn}{\tnp}{\tya}
    }\\
    $
    The proof follows by the Lemma~\ref{lemma:comif}.
   \end{case}
\end{proof}

\begin{theorem}[Static equality and consistency]~ \label{staticconsistency} 
  \begin{enumerate}
    \item $\tys[1] =_{s} \tys[2]$ if and only if $\src{\tys[1]} \rel \src{\tys[2]}$
    \item $\tya[1] =_{s} \tya[2]$ if and only if $\src{\tya[1]} \rel \src{\tya[2]}$
  \end{enumerate}
\end{theorem}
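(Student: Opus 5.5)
The plan is to reduce the statement to Lemma~\ref{couplingeq}, which already identifies $=_{s}$ with the coupling-based equality $=$ on static types. It then suffices to show that, for fully static types viewed as gradual (source) types, the inductive consistency $\rel$ of Definition~\ref{def:consistency-source} coincides with $=$. We prove both parts simultaneously by mutual structural induction on $\tys[1],\tys[2]$ and $\tya[1],\tya[2]$, since simple and distribution types are mutually defined. Alternatively, one could go through Lemma~\ref{agtconsistencyeq}: static types have singleton concretizations, so $\grel$ collapses to $=_{s}$. That route is quick, but it still needs $\crt{\tys}=\{\tys\}$ and $\crta{\tya}=\{\tya\}$, which hold by induction since static probabilities $\ps$ have $\crp{\ps}=\{\ps\}$. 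We take the direct route as the main argument and keep the AGT route as a cross-check.

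For simple types the cases are immediate. The base cases $\rtype,\btype$ match the identical rules. The rules $\? \rel \sgtys$ and $\sgtys \rel \?$ never apply, because static types contain no $\?$. The arrow case follows from the arrow rules on both sides together with the induction hypotheses for domains and codomains.

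For distribution types, $\tya[1]\rel\tya[2]$ holds by definition iff $\liftD{\tya[1]}\rel\liftD{\tya[2]}$. This in turn means there exist values for the lifted variables and a coupling $\{\var{i}{j}\}$ satisfying $\phi[][1]\land\phi[][2]$ and the marginal conditions, with $\var{i}{j}>0\Rightarrow \liftT{\tys[i]}\rel\liftT{\tys[j]}$. Since $\tya[1],\tya[2]$ are static, $\liftD{\cdot}$ produces formulas of the form $\bigwedge_i \cww[i]=\ps[i]\land\sum_i\cww[i]=1$. These pin every variable to its static probability. The existential over $\FV(\phi[][1])\cup\FV(\phi[][2])$ is therefore equivalent to the plain condition of Definition~\ref{def:coupling} on the concrete probabilities, with the sum-to-one side conditions supplied by well-formedness.

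Two facts remain for this case. First, $\liftT{\tys}=\tys$ up to the recursive lifting inside codomains, so that $\liftT{\tys[i]}\rel\liftT{\tys[j]}$ reduces to $\tys[i]\rel\tys[j]$. Second, we apply the induction hypothesis to rewrite this as $\tys[i]=\tys[j]$. After these steps both sides are exactly $\couplingLift{\tya[1]}{\tya[2]}{=}$, which closes the case.

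The main obstacle is this distribution-type step. The lifting introduces formula types and the $\liftD{}$ layer, so the induction must be phrased over a statement that also covers lifted static types; the induction hypothesis is needed on $\liftT{\tys[i]}$ and not on $\tys[i]$. We handle this by proving an auxiliary claim by the same induction: for static types, $\liftT{\tys[1]}\rel\liftT{\tys[2]}$ iff $\tys[1]=\tys[2]$, and $\liftD{\tya[1]}\rel\liftD{\tya[2]}$ iff $\tya[1]=\tya[2]$. We also note that the fresh variable names introduced by the lifting are irrelevant to satisfiability.
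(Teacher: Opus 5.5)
Your proposal is correct and takes the same route as the paper, which dismisses part~1 as trivial and derives part~2 from Lemma~\ref{couplingeq}; you make explicit what that proof leaves implicit, namely the mutual induction over lifted static types and the fact that the lifted formulas pin every variable to its static probability. Keep the well-formedness assumption you invoke, because the direction from $=_{s}$ to $\sim$ genuinely needs it and the statement omits it: for instance $\ds{\rtype^{1/2}} =_{s} \ds{\rtype^{1/2}}$ holds, yet the lifted constraint $\cww = \tfrac{1}{2} \land \cww = 1$ is unsatisfiable, so the result holds only for well-formed types, which is the case wherever the paper applies it (types arising in static typing derivations).
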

\begin{proof}
  $~$
  \begin{enumerate}
    \item  trivial case.
    \item  The proof follows by the Lemma~\ref{couplingeq}.
  \end{enumerate}
\end{proof}

\begin{theorem}[Static meet operator]~ \label{staticmeet} 
  \begin{enumerate}
    \item $\cmb(\tys[1], \tys[2])$ if and only if $\src{\tys[1]} \meet \src{\tys[2]}$
    \item $\cmb(\tya[1],\tya[2])$ if and only if $\src{\tya[1]} \meet \src{\tya[2]}$
  \end{enumerate}
\end{theorem}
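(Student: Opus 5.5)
The proof goes by simultaneous structural induction on the static types, with one clause per part. I read $\cmb(\cdot,\cdot)$ as the static combination judgment, i.e.\ the relation whose lifting is static type equality; on static types it coincides with $=_{s}$ (equivalently with $=$, by Lemma~\ref{couplingeq}). Throughout, $\src{\tys[1]} \meet \src{\tys[2]}$ stands for the meet of the canonical liftings $\liftT{\tys[1]} \meet \liftT{\tys[2]}$ (resp.\ $\liftD{\cdot}$ for distribution types). The key observation is that static types contain no $\?$, so the clauses $\? \meet \gtys$ and $\gtys \meet \?$ never fire. The meet on static simple types is therefore defined exactly when the top-level constructors agree and the components have defined meets.

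For part (1), the base cases $\rtype,\btype$ are immediate: $\rtype \meet \rtype = \rtype$ and $\btype\meet\btype=\btype$, while any mismatch of constructors is undefined on both sides. For arrow types $\tys -> \tya$ versus $\tysp -> \tyap$, the meet is defined iff $\tys \meet \tysp$ and $\tya \meet \tyap$ are both defined. By the induction hypotheses (parts (1) and (2) on the components) this holds iff the components are related by $\cmb$, which by the arrow rule is iff the arrow types are.

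For part (2), let $\liftD{\tya[1]} = \phty{\bigwedge_i \cww[i]=\ps[i]}{\d{\tys[i][\cww[i]]}}$ and $\liftD{\tya[2]} = \phty{\bigwedge_j \cww[j]=\ps[j]}{\d{\tys[j][\cww[j]]}}$. Unfolding $\witness{\cdot}{\cdot}{\meet}$, the meet is defined iff the following formula is satisfiable: there exist weights $\cww[ij]$ on the pairs $(i,j)$ with $\tys[i]\meet\tys[j]$ defined, whose marginals equal $\cww[i]$ and $\cww[j]$ respectively. Because the lifting formulas pin every $\cww[i],\cww[j]$ to the constants $\ps[i],\ps[j]$, this satisfiability is equivalent to the existence of a concrete coupling $\CC$ in the sense of Definition~\ref{def:coupling} for the relation $R = \{(i,j) \mid \tys[i]\meet\tys[j] \text{ defined}\}$. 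Pairs outside $R$ receive no variable, and correspond to setting $\var{i}{j}=0$, which is exactly condition~\ref{cond:relation}. By the induction hypothesis (part (1) on the smaller $\tys[i],\tys[j]$), $R$ coincides with $\cmb$ on simple types. Hence the meet is defined iff $\couplingLift{\tya[1]}{\tya[2]}{\cmb}$ holds, which is the distribution clause of $\cmb$. Lemma~\ref{couplingeq} and Theorem~\ref{staticconsistency} (with Lemma~\ref{eq-defined} for the ``if'' direction) then identify this with $\tya[1]=_{s}\tya[2]$, so both formulations of $\cmb$ agree.

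The main obstacle I expect is the bookkeeping in part (2). The witness formula of $\witnessname$ is indexed only by the pairs in $\dom(\meet)$, with tags $\projl{\cww[ij]},\projr{\cww[ij]}$, whereas Definition~\ref{def:coupling} quantifies over all of $\iSet\times\jSet$ with a positivity side-condition. I would state a small auxiliary claim that the two are equisatisfiable: extend a restricted solution by zeros in one direction, and drop the zero entries forced by condition~\ref{cond:relation} in the other. I would also check that the fresh variables introduced by $\liftD{\cdot}$ are eliminated because they are pinned to constants, so the existential over them is trivial.
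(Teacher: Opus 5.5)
Your proposal is correct and follows the same route as the paper. The paper notes that both sides of part~(2) amount to the coupling-based equality $\tya[1]=\tya[2]$ and appeals to Lemma~\ref{couplingeq}; your structural induction and explicit unfolding of $\witnessname$ make rigorous the step ``meet defined iff a coupling exists'', which the paper leaves implicit. One small caveat: $\liftD{\cdot}$ also adds the conjunct $\sum_i \cww[i]=1$, so, like the paper, you are tacitly assuming $\tya[1]$ and $\tya[2]$ are well-formed, since for sub-distributions $=_{s}$ can hold while the meet is undefined.
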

\begin{proof}
  $~$
  \begin{enumerate}
    \item  trivial case.
    \item 
    $ \cmb(\tya[1],\tya[2]) => \src{\tya[1]} \meet \src{\tya[2]} \\
    \since{\cmb(\tya[1],\tya[2]) }\\
    \so{\tya[1] = \tya[2]} \\
    \sentence{The proof follows by the Lemma}~\ref{couplingeq}.\\
    $ 
    $ \cmb(\tya[1],\tya[2]) <= \src{\tya[1]} \meet \src{\tya[2]} \\
    \since{\src{\tya[1]} \meet \src{\tya[2]} }\\
    \so{\tya[1] = \tya[2]} \\
    \sentence{The proof follows by the Lemma}~\ref{couplingeq}.
    $ 
  \end{enumerate}
\end{proof}

\begin{theorem}[Equivalence for fully-annotated terms(static)]~ \label{staticeq} 
  \begin{enumerate}
    \item $\Gamma |-ss \m : \tys$ if and only if $\src{ \Gamma } |-t {\color{sourcecolor} \m} : \src{\tys}$
    \item $\Gamma |-ss \m : \tya$ if and only if $\src{ \Gamma } |-d {\color{sourcecolor} \m} : \src{\tya}$
  \end{enumerate}
\end{theorem}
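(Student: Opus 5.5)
We prove both directions simultaneously by induction on the typing derivation, using mutual induction over the value judgment (item 1) and the term judgment (item 2), in each direction separately. The key observation is that a fully static \slang term $\m$, viewed as a \glang term, contains no occurrence of $\?$: every type annotation is a static simple type or static distribution type, and every probability annotation in a choice $\m \spsum \n$ is a concrete $\ps \in [0,1]$. The rules of Figure~\ref{fig:source-type-system} are in one-to-one correspondence with those of Figure~\ref{static-typing}. Hence the proof reduces to showing that each side condition of a \glang rule, instantiated on static types, coincides with the corresponding side condition of the \slang rule.

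For the direction from \slang to \glang, we proceed rule by rule. Rules (Tr), (Tb), (Tx), (Tv), (T$\lambda$) and (T$\oplus$) are immediate from the induction hypothesis. For (T$\oplus$) we additionally note that the lifted operations $\pty\cdot\sgtya$ and $1-\pty$ on concrete probabilities are just the static ones. Likewise, the AGT side condition of the lifted $+$ degenerates to the static condition $\sum\ps[i]+\sum\ps[j]\le 1$, because the concretizations $\crp{\ps}=\{\ps\}$ are singletons. For (Tlet) we use the induction hypothesis on $\m$ and on each $\n$ under $\Gamma, x:\tys[i]$. Well-formedness premises transfer because, on static types, gradual well-formedness (Def.~\ref{def:well-formed-source}) coincides with static well-formedness. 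For a static distribution type, the lifting $\liftD{\tya}$ has formula $\bigwedge \cww[i]=\ps[i] \land \sum\cww[i]=1$, whose satisfiability is exactly $\sum\ps[i]=1$. For (Tapp) we observe that $\cdom$ and $\ccod$ agree with $\dom$ and $\cod$ on static function types, and they are undefined on the same inputs since $\?$ never occurs. The remaining premises, namely $=_{s}$ in (T$::\tys$), (T$::\tya$), (Tapp), (T$+$) and (Tif), are turned into $\rel$ by Theorem~\ref{staticconsistency}. That theorem follows from Lemma~\ref{couplingeq} together with Lemma~\ref{agtconsistencyeq}, since the concretization of a static type is a singleton. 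For (Tif), both branches have the same type on both sides, so nothing else is needed.

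The converse direction is symmetric. Given a \glang derivation of $\src{\m}$ at $\src{\tya}$, we first note a small auxiliary fact: if $\Gamma$ is static and the term is static, then every type produced by a \glang derivation is static. This follows by inspecting the rules, since $\?$ can only be introduced through annotations, $\cdom(\?)$ or $\ccod(\?)$. The fact ensures that the induction hypothesis applies, and the consistency premises are then turned back into $=_{s}$ by Theorem~\ref{staticconsistency}.

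The main obstacle is the distribution-type case of the consistency/equality correspondence. Consistency of distribution types is defined via the symbolic lifting $\couplingLift{\liftD{\tya[1]}}{\liftD{\tya[2]}}{\sim}$, whereas $=_{s}$ is the pointwise-mass definition. To bridge them we plan to unfold $\liftD{\cdot}$: its formulas pin every tagged variable to a constant, so the existential over symbolic probabilities collapses and the symbolic coupling becomes an ordinary coupling witnessing $\couplingLift{\tya[1]}{\tya[2]}{=}$, using on simple types that $\sim$ and $=$ agree on static types (by induction). From there, Lemma~\ref{couplingeq} gives $=_{s}$. A secondary subtlety is the (Tapp) case, where the \glang premise is written $\sgtysp\rel\cdom(\sgtys)$ with the order reversed; symmetry of $\rel$ on static types, inherited from symmetry of $=_{s}$, handles this.
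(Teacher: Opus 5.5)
Your proposal is correct and follows the paper's own argument: induction on the typing derivation, with each $=_{s}$ premise exchanged for the corresponding $\rel$ premise via Theorem~\ref{staticconsistency}, which rests on Lemma~\ref{couplingeq}. It is also more thorough than the paper, whose proof only spells out the SPLC-to-GPLC direction and does not discuss the well-formedness premises or the lifted probability operations.

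One point to make explicit. The direction $\tya[1] =_{s} \tya[2] \Rightarrow \tya[1] \rel \tya[2]$ needs both types, and every distribution type nested in them, to be well-formed, because $\liftD{\cdot}$ adds the constraint $\sum \cww[i] = 1$. So your ``pinning'' argument should invoke Lemma~\ref{lemma:welltyped-wellform-static} at each use; the paper's unconditional Theorem~\ref{staticconsistency} glosses over the same point.
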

\begin{proof}
  By induction on the typing derivation. 
  \begin{case}[m = v]
    trivial case.
  \end{case}
  \begin{case}[m = $\asc{v}{\tys}$]
   $\\$
   $\since{ \\
    \inference[(T$::\tys$)]{\Gamma |-ss v : \tysp &  \tysp =_{s} \tys & \jtf{}{\tys} }
    {\Gamma |-ss \asc{v}{\tys} : \ds{\tys[][1]}}  
   }\\
   \sentence{
    we need to show,
   }\\
   {
    \src{ \Gamma } |-d {\color{sourcecolor} \asc{v}{\tys}} : \ds{ \src{\tys[][1]} }
   } \\
   \sentence{By the induction hypothesis,}\\
   \so{
    \src{ \Gamma } |-d {\color{sourcecolor} {\color{sourcecolor} v} } : \src{\tys[][1]}
   }\\
   \so{
    \src{ \Gamma } |-d {\color{sourcecolor} \asc{v}{\tys}} : \ds{ \src{\tys[][1]} }
   }$
  \end{case}
  \begin{case}[m = $v\;w$]
    $\\$
    $\since{ \\
        \inference[(Tapp)]{
          \Gamma |-ss v : \tys[1]  &  
          \Gamma |-ss w :  \tys[2] & \dom(\tys[1]) =_{s} \tys[2] 
        }
        {\Gamma |-ss v\;w : \cod(\tys[1]) }
    }\\
    \sentence{
     we need to show,
    }\\
    {
     \src{ \Gamma } |-d {\color{sourcecolor} v\;w } : \src{cod(\tys[1])} 
    } \\
    \sentence{By the induction hypothesis,}\\
    \so{
     \src{ \Gamma } |-d {\color{sourcecolor} {\color{sourcecolor} v} } : \src{\tys[1]}
    }\\
    \so{
     \src{ \Gamma } |-d {\color{sourcecolor} w} : \src{\tys[2]} 
    } \\
    \sentence{By lemma}~\ref{staticconsistency}, \\
    \so{ 
      \src{ \Gamma } |-d {\color{sourcecolor} v\;w } : \src{cod(\tys[1])} 
    }$
   \end{case}
   \begin{case}[m = $\oplus$]
    $\\$
    $\since{ \\
        \inference[(T$\oplus$)]{
          \Gamma |-ss \m : \tya[1]  &  
          \Gamma |-ss \n : \tya[2] & 
        }
        {\Gamma |-ss { \m} \spsum { \n} :  \ps \cdot \tya[1] + (1- \ps) \cdot \tya[2]}
    }\\
    \sentence{
     we need to show,
    }\\
    {
     \src{ \Gamma } |-d {\color{sourcecolor} { \m} \spsum { \n}  } : \src{ \ps \cdot \tya[1] + (1- \ps) \cdot \tya[2]} 
    } \\
    \sentence{By the induction hypothesis,}\\
    \so{
     \src{ \Gamma } |-d {\color{sourcecolor} {\color{sourcecolor} \m} } : \src{\tya[1]}
    }\\
    \so{
     \src{ \Gamma } |-d {\color{sourcecolor} \n } :\src{\tya[2]}
    } \\
    \so{ 
      \src{ \Gamma } |-d {\color{sourcecolor} { \m} \spsum { \n}  } : \src{ \ps \cdot \tya[1] + (1- \ps) \cdot \tya[2]} 
    }$
   \end{case}
   \begin{case}[m = let]
    $\\$
    $\since{ \\
    \inference[(Tlet)]{
      \Gamma |-ss \m : \d{\tys[i][\ps[i]]}[i \in \iSet]  \\
      \forall i\in\iSet.~\Gamma, x : \tys[i] |-ss \n : \tya[i]
    }
    {\Gamma |-ss \lett{x}{\m}{\n} : \sum_{i \in \iSet} \ps[i] \cdot \tya[i]}
    }\\
    \sentence{
     we need to show,
    }\\
    {
     \src{ \Gamma } |-d {\color{sourcecolor} \lett{x}{\m}{\n} } : \ds{ \src{\sum_{i \in \iSet} \ps[i] \cdot \tya[i]} }
    } \\
    \sentence{By the induction hypothesis,}\\
    \so{
     \src{ \Gamma } |-d {\color{sourcecolor} {\color{sourcecolor} \m} } : \ds{ \src{\d{\tys[i][\ps[i]]}[i \in \iSet]} }
    }\\
    \so{
     \src{  \forall i\in\iSet.~\Gamma, x } |-d {\color{sourcecolor} \n} : \tya[i] 
    } \\
    \so{ 
      \src{ \Gamma } |-d {\color{sourcecolor} \lett{x}{\m}{\n} } : \src{\sum_{i \in \iSet} \ps[i] \cdot \tya[i]}
    }$
   \end{case}

   \begin{case}[m = $::\tya$]
    $\\$
    $\since{ \\
    \inference[(T$::\tya$)]{ \Gamma |-ss \m : \tyap & \tyap =_{s} \tya & \jtf{}{\tya}
    } 
    {\Gamma |-ss \asc{ \m }{\tya} : \tya}
    }\\
    \sentence{
     we need to show,
    }\\
    {
     \src{ \Gamma } |-d {\color{sourcecolor} \asc{ \m }{\tya} } : \src{\tya}
    } \\
    \sentence{By the induction hypothesis,}\\
    \so{
     \src{ \Gamma } |-d {\color{sourcecolor} {\color{sourcecolor} \m } } : \src{\tyap}
    }\\
    \sentence{By lemma}~\ref{staticconsistency}, \\
    \so{ 
      \src{ \Gamma } |-d {\color{sourcecolor} \asc{ \m }{\tya} } : \src{\tya} 
    }$
   \end{case}
   \begin{case}[m = $+$]
    $\\$
    $\since{ \\
    \inference[(T$+$)]{
      \Gamma |-ss v : \tys[1]   &  \tys[1] =_{s} \rtype \\
      \Gamma |-ss w :  \tys[2]  &  \tys[2] =_{s} \rtype \\
    }
    {\Gamma |-ss \add{v}{w} : \ds{\rtype^1} }
    }\\
    \sentence{
     we need to show,
    }\\
    {
     \src{ \Gamma } |-d {\color{sourcecolor} \add{v}{w} } : \ds{\rtype^1}
    } \\
    \sentence{By the induction hypothesis,}\\
    \so{
     \src{ \Gamma } |-d {\color{sourcecolor} {\color{sourcecolor} v} } :  \src{\tys[1]} 
    }\\
    \so{
     \src{ \Gamma } |-d {\color{sourcecolor} w} :  \src{\tys[2]} 
    } \\
    \sentence{By lemma}~\ref{staticconsistency}, \\
    \so{ 
      \src{ \Gamma } |-d {\color{sourcecolor} \add{v}{w} } : \ds{\rtype^1}
    }$
   \end{case}
   \begin{case}[m = if]
    $\\$
    $\since{ \\
        \inference[(Tif)]{
    \Gamma |-ss v : \tys  & \tys =_{s} \btype  \\
    \Gamma |-ss m : \tya & 
    \Gamma |-ss n : \tya
  }
  {\Gamma |-ss \ite{v}{m}{n} : \tya } 
    }\\
    \sentence{
     we need to show,
    }\\
    {
     \src{ \Gamma } |-d {\color{sourcecolor} \ite{v}{m}{n} } : \src{\tya} 
    } \\
    \sentence{By the induction hypothesis,}\\
    \so{
     \src{ \Gamma } |-d {\color{sourcecolor} {\color{sourcecolor} v} } : \src{\tys} 
    }\\
    \so{
     \src{ \Gamma } |-d {\color{sourcecolor} \m} : \src{\tya} 
    } \\
    \so{
      \src{ \Gamma } |-d {\color{sourcecolor} \n} : \src{\tya} 
     } \\
    \sentence{By lemma}~\ref{staticconsistency}, \\
    \so{ 
      \src{ \Gamma } |-d {\color{sourcecolor} \ite{v}{m}{n} } : \src{\tya} 
    }$
   \end{case}
\end{proof}

\begin{theorem}[Equivalence for fully-annotated terms(dynamic)]~ \label{dyneq} 
  \begin{enumerate}
    \item $|-ss \sm : \tys, \sm \leadsto \tmp  : \tys $, then $\rappro{}{\sm}{\tmp}{\tys}$ 
    \item $|-ss \sm : \tys, \sm \leadsto \tmp : \tya $, then $\rappro{}{\sm}{\tmp}{\tya}$ 
  \end{enumerate}
\end{theorem}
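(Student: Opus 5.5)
The statement is the closed-term special case of the Fundamental property (Theorem~\ref{fundamentalprop}), so I would obtain it by instantiating that theorem with the empty environment. Both items have hypotheses $|-ss \sm : \tys$ (resp.\ $\tya$) together with an elaboration $\sm \leadsto \tmp$. By Theorem~\ref{staticeq}, the static typing of $\sm$ in \slang coincides with its typing as a \glang term at the same (fully static) type. Hence the elaboration judgment is well defined and produces a target term typed at the lifted type, by Theorem~\ref{theorem:ep}. Taking $\Gamma = \cdot$, Theorem~\ref{fundamentalprop} then gives $\rappro{\cdot}{\sm}{\tmp}{\tys}$ (resp.\ $\tya$). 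Since $\rlG{\cdot} = \{\emptyset\}$, this unfolds to $(\sm,\tmp) \in \rlT{\tys}$ (resp.\ $\rlT{\tya}$), which is exactly the claim $\rappro{}{\sm}{\tmp}{\cdot}$.

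The only genuine content is therefore in the fundamental property itself, and I would double-check its inductive cases as they bear on closed terms. The case analysis runs on the typing derivation, using the compatibility lemmas: Lemmas~\ref{comx}, \ref{comb}, \ref{comr}, \ref{lemma:comapp}, \ref{lemma:comabs}, \ref{lemma:comoplus}, \ref{lemma:comlet}, \ref{lemma:comadd} and \ref{lemma:comif}, together with the ascription lemma (Lemma~\ref{lemma:ascription}). All of these rely on Lemma~\ref{static-composition-defined}: on static types, consistent transitivity never fails. This guarantees that the target term never steps to an error, so the target side of $\rlT{\cdot}$ always converges to a distribution value. Termination of the source side follows from the fact that \slang is terminating.

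The point I expect to need care is the mismatch between the simple-type form of item (1) and the $\rlT{\cdot}$ interpretation, which is indexed by distribution types. I would handle it by reading item (1) via rule (Tv): a closed value of type $\tys$ is related at $\ds{\tys[][1]}$ through the Dirac distributions $\dt{\pt{v}}$ and $\dt{\pt{\tv}}$, using the trivial coupling. For item (2), the delicate point is the $(\mathit{D}::\gtya)$ reduction step in the let and $\oplus$ cases. There one must exhibit the coupling required by $\rlV{\tya}$. I would build it by composing the reordering coupling with the evidence coupling, as in Lemma~\ref{setlogic} and Lemma~\ref{srtrans}, using products of weights divided by marginals.
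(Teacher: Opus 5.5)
Your proposal is correct and matches the paper, which proves this theorem in one line as the closed instance ($\Gamma = \cdot$, so $\rlG{\cdot} = \{\emptyset\}$) of the fundamental property, Theorem~\ref{fundamentalprop}; the appeals to Theorems~\ref{staticeq} and~\ref{theorem:ep} are not needed, since the elaboration is already a hypothesis. In your optional re-check, note that totality of consistent transitivity on static evidence only rules out runtime errors: convergence of the target side is part of what membership in $\rlT{\cdot}$ asserts, and it is established by the induction itself, not by the absence of errors.
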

\begin{proof}
  A special case of the fundamental property~\ref{fundamentalprop}.
\end{proof}

\section{The Source Language \glang }
This section presents the type 
well-formedness definition (Definition~\ref{def:well-formed-source2}), complete rules (\eg type system Figure~\ref{fig:source-type-system2} and precision Figure~\ref{source-term-precision2})  and proofs (\eg gradual guarantee)
of \glang.

\subsection{Type System}
Figure~\ref{fig:source-type-system2}
shows the complete typing rules.

\begin{definition}[Well-formedness of types] 
  \begin{mathpar}
    \inference{}{\jtf{}{\rtype}} \and
    \inference{}{\jtf{}{\btype}} \and
    \inference{}{\jtf{}{\?}} \and
    \inference{ \jtf{}{\sgtys} & \jtf{}{\sgtya}}
    {\jtf{}{\sgtys -> \sgtya}} \and
    \inference{ \jtf{}{\liftD{\sgtya}}
     } 
     { \jtf{}{\sgtya} }
     \and
    \inference{
      \TV{\{\p[i]  \mid i \in \iSet \}} \subseteq \FV(\phi)  &  \satisfiable{\phi}{\sum_{i \in \iSet} \p[i] = 1} &
      \forall i \in \iSet. \jtf{}{\gtys[i]}  \!
     }
     { \jtf{}{ \phty{\phi[][]}{ \d{\gtys[i][\p[i]]}}[i \in \iSet] } } \and 
    \end{mathpar}
  \label{def:well-formed-source2}
\end{definition}

\begin{definition}[Well-formedness of contexts] 
  \begin{mathpar}
    \inference{}{\jtf{}{ \cdot }} \and
    \inference{ \jtf{}{\sgtys} }{\jtf{ }{ \Gamma  , \sx : \sgtys}} \and
    \end{mathpar} 
  \label{def:well-formed-source-ct}
\end{definition}

\begin{lemma}[Lifting well-formedness]~ \label{lemma:wellformed-lifting}
  % \change{
  \begin{enumerate}
  \item If $\justify{\sgtys}$ then $\justify{\liftT{\sgtys}}.$ 
  \item If $\justify{\sgtya}$ then $\justify{\liftD{\sgtya}}.$ 
  \end{enumerate}
  % }
\end{lemma}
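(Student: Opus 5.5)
The proof goes by simultaneous structural induction on $\sgtys$ and $\sgtya$, following the mutually recursive definitions of $\liftT{\cdot}$ and $\liftD{\cdot}$. In each case we invert the well-formedness derivation of the gradual type and rebuild a derivation for its lifting.

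For simple types, the base cases $\rtype$, $\btype$ and $\?$ are immediate, because $\liftT{\cdot}$ is the identity on them and each of the three is well-formed by an axiom. For $\sgtys -> \sgtya$, inverting $\justify{\sgtys -> \sgtya}$ yields $\justify{\sgtys}$ and $\justify{\sgtya}$. The induction hypotheses then give $\justify{\liftT{\sgtys}}$ and $\justify{\liftD{\sgtya}}$. Since $\liftT{\sgtys -> \sgtya} = \liftT{\sgtys} -> \liftD{\sgtya}$, the arrow rule closes this case.

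For a distribution type $\sgtya = \d{\sgtys[i][\pty[i]]}[i \in \iSet]$, the only applicable rule for $\justify{\sgtya}$ has premise $\justify{\liftD{\sgtya}}$, so at first sight the conclusion follows by inversion. That argument is circular in the mutual definition, however, so I would instead check the formula-type rule directly for $\liftD{\sgtya} = \dctx[\phii]\d{\liftT{\sgtys[i]}^{\cww[i]}}[i \in \iSet]$, where $\phii = \bigwedge_i \liftP{\pty[i]}[\cww[i]] \land \sum_i \cww[i] = 1$. There are three obligations.
\begin{enumerate}
\item $\TV{\{\cww[i]\}} \subseteq \FV(\phii)$. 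This holds by construction, since each $\cww[i]$ occurs in $\liftP{\pty[i]}[\cww[i]]$.
\item Each $\liftT{\sgtys[i]}$ is well-formed. This follows from the induction hypothesis on $\sgtys[i]$, using that $\justify{\sgtys[i]}$ is available from the premises of the derivation of $\justify{\sgtya}$.
\item $\satisfiable{\phii}{\sum_i \cww[i] = 1}$. This reduces to $\satisfiablee{\phii}$, because the sum constraint is already a conjunct of $\phii$.
\end{enumerate}

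The main obstacle is the third obligation. Satisfiability of $\phii$ says exactly that there is a choice $\ps[i] \in \crp{\pty[i]}$ with $\sum_i \ps[i] = 1$, so we need the plausibility of summing to one as a property of the gradual type itself, not of its lifting. The plan is to read the hypothesis $\justify{\sgtya}$ as carrying this AGT plausibility condition, i.e.\ that some concretization sums to one, which is the intended meaning stated in the prose before Definition~\ref{def:well-formed-source}. Given such concretizations, we assign $\cww[i] := \ps[i]$. Then each conjunct $\liftP{\pty[i]}[\cww[i]]$ holds, as it is either $\cww[i] = \ps[i]$ or $\cww[i] \in [0,1]$, and the sum constraint holds as well. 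Hence $\phii$ is satisfiable.

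Finally, the fresh variables introduced by $\liftD{\cdot}$ are distinct from those of the nested liftings $\liftT{\sgtys[i]}$. Freshness is therefore not an issue when the induction hypotheses are combined.
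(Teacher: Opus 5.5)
Your treatment of the simple-type cases matches the paper, which calls item 1 ``trivial'': structural induction, using item 2 for codomains.

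For item 2, however, you set aside the paper's actual proof on a mistaken ground. Inverting $\vdash \sgtya$ is not circular. Well-formedness is an inductively defined judgment. For a gradual distribution type $\sgtya$, which carries no closing formula, the only rule whose conclusion can match is the one with sole premise $\vdash \liftD{\sgtya}$; the formula-type rule only concludes judgments about formula distribution types. Hence every derivation of $\vdash \sgtya$ has a derivation of $\vdash \liftD{\sgtya}$ as its immediate subderivation. That one-step inversion is the paper's entire proof. The lemma holds by definition, which makes it a tautology, not a circular argument.

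Your replacement argument does not actually avoid this step, and without it, it has no footing in the formal hypothesis.
\begin{itemize}
\item In obligation 2 you appeal to ``the premises of the derivation of $\vdash \sgtya$''. That is inversion. What it yields is $\vdash \liftT{\sgtys[i]}$ directly, as a premise of the formula-type rule inside $\vdash \liftD{\sgtya}$. It does not yield $\vdash \sgtys[i]$, so you cannot apply the induction hypothesis there, and you do not need it.
\item In obligation 3 you replace the stated hypothesis with the informal AGT reading from the prose. Under the formal definition, the only way to get that plausibility condition from $\vdash \sgtya$ is again to invert to $\vdash \liftD{\sgtya}$ and read off its premise $\mathit{sat}(\Phi \land \sum_i \omega_i = 1)$. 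At that point the conclusion is already in hand.
\end{itemize}
So either you use inversion, and the three-obligation check is redundant, or you do not, and obligations 2 and 3 rest on a hypothesis different from the one in the statement. Drop the circularity worry and conclude item 2 by inversion.
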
 
\begin{proof}~
  \begin{enumerate}
    \item This is the trivial case. 
    \item $
    \since{\justify{\sgtya}}\\
    \sentence{By the definition of well-formedness,}\\
    \so{\justify{\liftD{\sgtya}}}$
  \end{enumerate}
\end{proof}

% \begin{lemma}[Well-formed (consistency)]~ \label{lemma:wellformed-consistency-source}
%   % \change{
%   \begin{enumerate}
%   \item If $\justify{\sgtys} \ad \sgtys \rel \sgtysp $ then $\justify{\sgtysp}.$ 
%   \item If $\justify{\sgtya} \ad \sgtya \rel \sgtyb $ then $\justify{\sgtyb}.$ 
%   \end{enumerate}
%   % }
% \end{lemma} 
% \begin{proof}
%   The proof follows by Lemma~\ref{lemma:wellformed-consistency-target}.
% \end{proof}

\begin{lemma}[Well-formed types]~ \label{lemma:welltyped-wellform-source}
  % \change{
  \begin{enumerate}
  \item If $ \Gamma |-t \sv : \sgtys $ then $\justify{\sgtys}.$ 
  \item If $ \Gamma |-d  \sm : \sgtya $ then $\justify{\sgtya}.$
  \end{enumerate}
  % }
\end{lemma}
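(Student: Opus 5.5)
The plan is a simultaneous induction on the typing derivations of $\Gamma |-t \sv : \sgtys$ and $\Gamma |-d \sm : \sgtya$, under the standing assumption that $\Gamma$ is well-formed in the sense of Definition~\ref{def:well-formed-source-ct}. This assumption is preserved whenever we extend the context: in the $\lambda$ rule the annotation carries the premise $\jtf{}{\sgtys}$, and in the $\<let>$ rule each $\sgtys[i]$ is a component of a well-formed distribution type given by the induction hypothesis.

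\textbf{Value rules.} Literals have types $\rtype$ and $\btype$, which are trivially well-formed. For variables, well-formedness comes from the context. For $\lambda$, combine the annotation premise with the induction hypothesis on the body. The value-to-distribution rule yields $\ds{\sgtys[][1]}$; its lifting has formula $\cww = 1 \land \cww = 1$, which is satisfiable and contains every variable, and $\liftT{\sgtys}$ is well-formed by Lemma~\ref{lemma:wellformed-lifting}.

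\textbf{Ascription rules.} Both ascription rules carry $\jtf{}{\cdot}$ of the target type as a premise, so these cases are immediate.

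\textbf{Application.} By induction, $\sgtys$ is well-formed. If $\sgtys = \sgtys' -> \sgtya$, then $\ccod(\sgtys) = \sgtya$ is well-formed by inversion. If $\sgtys = \?$, then $\ccod(\sgtys) = \ds{\?^\?}$, which lifts to a satisfiable formula.

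\textbf{Addition and conditional.} Addition yields $\ds{\rtype^1}$. The conditional returns the type of a branch, which is well-formed by induction.

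\textbf{Choice and let.} These are the main work. For $\pty \cdot \sgtya + (1-\pty)\cdot\sgtyb$, the induction hypothesis together with Lemma~\ref{lemma:wellformed-lifting} gives satisfying assignments $\{\ps[i]\}$ for $\liftD{\sgtya}$ and $\{q_j\}$ for $\liftD{\sgtyb}$, each summing to $1$. Pick any concretization $p \in \crp{\pty}$; when $\pty = \?$ pick, say, $\tfrac12$. The combined type has entries $\pty\cdot\pty[i]$ and $(1-\pty)\cdot\pty[j]$. Each entry is either the numeric product or $\?$, and either way the value $p\ps[i]$, respectively $(1-p)q_j$, lies in its concretization. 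These values sum to $p + (1-p) = 1$, so the lifted formula of the result is satisfiable. Every variable appears in it by construction of $\liftD{\cdot}$. The simple-type components are unchanged, so they remain well-formed.

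The $\<let>$ case proceeds the same way. Take a satisfying assignment $\ps[i]$ for $\liftD{\d{\sgtys[i][\pty[i]]}}$ and assignments $q^i_j$ for each $\liftD{\sgtya[i]}$ (well-formed by induction under $\Gamma, \sx:\sgtys[i]$). Then $\sum_i \sum_j \ps[i] q^i_j = \sum_i \ps[i] = 1$.

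\textbf{Main obstacle.} The delicate step is this last check: showing that concretizations chosen independently for each gradual probability assemble into one witness of the lifted formula. I would handle it by stating a small auxiliary fact. A gradual distribution type $\d{\sgtys[i][\pty[i]]}$ has a satisfiable lifted formula iff there exist $\ps[i] \in \crp{\pty[i]}$ with $\sum_i \ps[i] = 1$. In addition, the lifted product $\pty\cdot\pty'$ satisfies $p p' \in \crp{\pty\cdot\pty'}$ for all $p \in \crp{\pty}$ and $p' \in \crp{\pty'}$. The second fact follows because a product is either exact or $\?$. The side condition $\le 1$ on $+$ is then discharged by the same witness.
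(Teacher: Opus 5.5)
Your proposal is correct and follows essentially the same route as the paper: an induction on the typing derivations, which must be simultaneous, with the choice and let cases discharged by exhibiting probabilities that sum to $1$. It is in fact more careful than the paper's proof, which leaves the well-formed-context assumption implicit (``variables come from lambda and let terms''), skips the value-to-distribution and unknown-codomain cases, and treats unknown probabilities only through an informal identity about lifted probabilities summing to one, where you give an explicit concretization witness.
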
 
\begin{proof}~
  \begin{enumerate}
    \item The proof follows by induction on the typing derivation.
    \begin{case}[$\sv = \src{r, b} $]
      $\rtype$ and $\btype$ types are well-formed. 
    \end{case}
    \begin{case}[$\sv = \src{\lambda x:\sgtys. \m} $]
      $\\
      \since{ \\
      \inference{\Gamma, \sx:\sgtys |-t \sm : \sgtya & \jtf{}{\sgtys}}
      {\Gamma |-t \src{\lambda \sx:\sgtys. \sm} : \sgtys -> \sgtya}
      } \\
      \sentence{
        By the induction hypothesis,
      } \\
      \so{
       \justify{\sgtya}
      } \\
      \so{
        \justify{\sgtys -> \sgtya}
      }$
    \end{case}
    \begin{case}[$ \sv = \sx$]
      $ 
      \sentence{
        variables x come from lambda and let terms with well-formed types.
      }
      $
    \end{case}

    \item The proof follows by induction on the typing derivation.
    \begin{case}[$ \sm = \asc{\sv}{\sgtys} $]
      $ \\
      \since{
        \inference{\Gamma |-t \sv : \sgtys &  \sgtys \rel  \sgtysp &  \jtf{}{\sgtysp} }
        {\Gamma |-t \src{\asc{\sv}{\sgtysp}} : \ds{\sgtysp[][1]}} 
      } \\
      % \sentence{
      %   By the induction hypothesis,
      % } \\
      % \so{
      %   \justify{\sgtysp}
      % }\\
      % \since{
      %   \sgtysp \rel \sgtys
      % }\\
      % \sentence{By Lemma}~\ref{lemma:wellformed-consistency-source}, \\
      \so{
        \justify{\sgtys}
      }
      $
    \end{case}
    \begin{case}[$ \sm = \asc{\sv}{\sgtya} $]
      $ \\
      \since{
        \inference{\Gamma |-d \sm : \sgtya & \sgtya \rel  \sgtyb &  & \jtf{}{\sgtyb}
        } 
        {\Gamma |-d \src{\asc{ \sm}{\sgtyb}} : \sgtyb}    
      } \\
      % \sentence{
      %   By the induction hypothesis,
      % } \\
      % \so{
      %   \justify{\sgtyb}
      % }\\
      % \since{
      %   \sgtya \rel \sgtyb
      % }\\
      % \sentence{By Lemma}~\ref{lemma:wellformed-consistency-source}, \\
      \so{
        \justify{\sgtya}
      }
      $
    \end{case}
    \begin{case}[$ \sm = \sv\; \sw $]
      $ \\
      \since{
        \inference{
          \Gamma |-t \sv : \sgtys  &  
          \Gamma |-t \sw :  \sgtysp &   \sgtysp \rel  \cdom(\sgtys)
        }
        {\Gamma |-d \sv \; \sw : \ccod(\sgtys)} 
      } \\
      \sentence{
        By the induction hypothesis,
      } \\
      \so{
        \justify{\sgtys}
      }\\
      \so{
        \justify{\sgtysp}
      }\\
      \so{
        \justify{\cod(\sgtys)}
      }
      $
    \end{case}
    \begin{case}[$ \sm = { \sm} \pssum { \srcn} $]
      $ \\
      \since{
        \inference{
          \Gamma |-d \sm : \sgtya  &  
          \Gamma |-d \srcn : \sgtyb &  
        }
        {\Gamma |-d \src{{\sm}\, \pssum\,  {\srcn}} : \pty \cdot \sgtya + (1 {-} \pty) \cdot \sgtyb}
      } \\
      \sentence{
        By the induction hypothesis,
      } \\
      \so{
        \justify{\sgtya}
      }\\
      \so{
        \justify{\sgtyb}
      }\\
      \since{
        \liftP{\pty} + (1- \liftP{\pty}) = 1
      }\\
      \so{
        \justify{ \pty \cdot \sgtya + (1- \pty) \cdot \sgtyb }
      } 
      $
    \end{case}
    \begin{case}[$ \sm = \lett{\sx}{\sm}{\srcn} $]
      $ \\
      \since{
        \inference{
          \Gamma |-d \sm : \d{\sgtys[i][\pty[i]]}[i \in \iSet]  \\
          \forall i\in\iSet.~\Gamma, \sx : \sgtys[i] |-d \srcn : \sgtya[i]
        }
        {\Gamma |-d \src{\lett{ \sx }{\sm}{\srcn}} : \sum_{i \in \iSet} \pty[i] \cdot \sgtya[i]}
      } \\
      \sentence{
        By the induction hypothesis,
      } \\
      \so{
        \justify{\d{\sgtys[i][\pty[i]]}[i \in \iSet]  }
      }\\
      \so{
        \justify{\liftT{\sgtys[i]}}
      }\\
      \so{
        \justify{\sgtya[i]}
      }\\
      \since{
        \sum_{i \in \iSet} \liftP{\pty[i]} = 1 
      }\\
      \so{
        \justify{ \sum_{i \in \iSet} \pty[i] \cdot \sgtya[i] }
      } 
      $
    \end{case}
    \begin{case}[$\sm = \src{\add{\sv}{\sw}}$]
    $\\
    \since{
        \inference{
        \Gamma |-t \sv : \sgtys & \sgtys \rel \rtype   &  
        \Gamma |-t \sw : \sgtysp & \sgtysp \rel \rtype 
      }
      {\Gamma |-d \src{\add{\sv}{\sw}} : \ds{\rtype^1} }
    }\\
    \since{\justify{\rtype} } \\
    \so{\justify{ \ds{\rtype^1}} }
    $
    \end{case}
    \begin{case}[$\sm = if$]
      $\\
      \since{
            \inference{
            \Gamma |-t \sv : \sgtys & \sgtys \rel \btype \\
            \Gamma |-d \sm : \sgtya & 
            \Gamma |-d \srcn : \sgtya
          }
          {\Gamma |-d \src{\ite{\sv}{\sm}{\srcn}} : \sgtya } 
      }\\
      \sentence{By the induction hypothesis,}\\
      \so{\justify{\sgtya}}
      $
      \end{case}
  \end{enumerate}
\end{proof}

\paragraph{Lifting.}
Formally, the lifting is captured by three mutually
recursive functions that act over gradual simple types, gradual
probabilities and gradual distribution types respectively as follows: 
\[
  \begin{array}{l}
    \liftT{\rtype} = \rtype \qquad  \liftT{\btype} = \btype \qquad
    \liftT{\?} = \? \qquad   \liftT{\sgtys -> \sgtya} = \liftT{\sgtys}
    -> \liftD{\sgtya}\\[1ex]
     \liftP{\ps}[\cww] = (\cww = \ps) \qquad \liftP{\?}[\cww] = (\cww \in
    [0,1] )\\[1ex]
     \liftD{\d{\sgtys[i][\pty[i]]}[i \in \iSet]} = 
  \dctx[ \bigwedge_{i \in \iSet} \liftP{\pty[i]} \land \sum_{i \in
                                                   \iSet} \cww[i] =
                                                   1
                                                   ][\d{\liftT{\sgtys[i]}^{\cww[i]}}[i
                                                   \in \iSet]] \qquad
                                                   \text{$\cww[i] = \pr{\varr[i], i,i}$ and $\varr[i]$ is fresh}
  \end{array}
\] %
The interesting case is the lifting of gradual distribution types
(third line above). First, for each gradual distribution type we
generate fresh variables $\varr[i]$. Second, we replace every pair of
gradual simple type and gradual probability at index, say $i$, with
the pair of the lifting of the gradual simple type and $\cww[i]$
(intuitively, we are relating the distribution type with
itself). Third, the formula is computed as the conjunction of the
lifting of all gradual probabilities, together with the equation that
states that probability variables sum up to $1$.  The lifting of a
gradual probability (second line above) is indexed by a
variable $\cww$, and outputs a formula that restricts $\projv{\cww}$: The lifting
of a static probability restricts $\projv{\cww}$ to be exactly that probability,
and for the unknown probability it restricts the variable to lie in
the interval $[0,1]$.

\begin{figure}
  \begin{flushleft}
  \framebox{$\Gamma |-t \sv: \sgtys$, $\quad\Gamma |-d \sm: \sgtya$}
  \end{flushleft}
   \def \MathparLineskip {\lineskip=1.4ex}
  \begin{mathpar}
  \inference{}
  {\Gamma |-t \ssr : \rtype} \and
  \inference{}
  {\Gamma |-t \ssb : \btype} \and
  \inference{\Gamma(\sx) = \sgtys}
  {\Gamma |-t \sx : \sgtys} \and
  \inference{\Gamma |-t \sv : \sgtys}
  {\Gamma |-t \sv : \ds{\sgtys[][1]} } \and
  \inference{\Gamma, \sx:\sgtys |-t \sm : \sgtya & \jtf{}{\sgtys}}
  {\Gamma |-t \src{\lambda \sx:\sgtys. \sm} : \sgtys -> \sgtya} \and
  \inference{\Gamma |-t \sv : \sgtys &  \sgtys \rel  \sgtysp &  \jtf{}{\sgtysp} }
  {\Gamma |-t \src{\asc{\sv}{\sgtysp}} : \ds{\sgtysp[][1]}}  \and
  \inference{
    \Gamma |-t \sv : \sgtys  &  
    \Gamma |-t \sw :  \sgtysp &   \sgtysp \rel  \cdom(\sgtys)
  }
  {\Gamma |-d \sv \; \sw : \ccod(\sgtys)} \and
  \inference{
    \Gamma |-d \sm : \sgtya  &  
    \Gamma |-d \srcn : \sgtyb &  
  }
  {\Gamma |-d \src{{\sm}\, \pssum\,  {\srcn}} : \pty \cdot \sgtya + (1 {-} \pty) \cdot \sgtyb} \and
  \inference{
    \Gamma |-d \sm : \d{\sgtys[i][\pty[i]]}[i \in \iSet]  \\
    \forall i\in\iSet.~\Gamma, \sx : \sgtys[i] |-d \srcn : \sgtya[i]
  }
  {\Gamma |-d \src{\lett{ \sx }{\sm}{\srcn}} : \sum_{i \in \iSet} \pty[i] \cdot \sgtya[i]} \and
  \inference{\Gamma |-d \sm : \sgtya & \sgtya \rel  \sgtyb &  & \jtf{}{\sgtyb}
  } 
  {\Gamma |-d \src{\asc{ \sm}{\sgtyb}} : \sgtyb} \\
    \inference{
      \Gamma |-t \sv : \sgtys & \sgtys \rel \rtype   &  
      \Gamma |-t \sw : \sgtysp & \sgtysp \rel \rtype 
    }
    {\Gamma |-d \src{\add{\sv}{\sw}} : \ds{\rtype^1} } \and
    \inference{
      \Gamma |-t \sv : \sgtys & \sgtys \rel \btype \\
      \Gamma |-d \sm : \sgtya & 
      \Gamma |-d \srcn : \sgtya
    }
    {\Gamma |-d \src{\ite{\sv}{\sm}{\srcn}} : \sgtya } 
  \end{mathpar} % \\[1.5ex]
  %   % 
  \begin{tabular}{ll}
    $\cdom:\GType \rightarrow \GType$ & $\ccod:\GType \rightarrow \GType$  \\
    $\cdom(\sgtys -> \sgtya) = \sgtys $ & $\ccod(\sgtys -> \sgtya) = \sgtya $ \\
    $\cdom(?) = \? $  &  $ \ccod(?) = \?$  \\
    $\cdom(\sgtys)~\text{undef.~otherwise} $  &  $
                                                \ccod(\sgtys)~\text{undef.~otherwise}$  \\[1.5ex]
    % $\cdot : \GProbability \times\GDType \rightharpoonup \GDType$  & $ -: \GProbability \times \GProbability \rightharpoonup \GProbability  $\\
  
    %$\ps \consistent{\cdot} \psp = \ps \cdot \psp$, $\pty \consistent{\cdot} \ptyp = \?$ otherwise\\
    $\pty[1] \mathbin{\mathit{op}} \pty[2] = 
     {\begin{cases}
      \ps[1] \mathbin{\mathit{op}} \ps[2] & \pty[1] \in \rtype \land \pty[2] \in \rtype \\
      \? & \text{otherwise} 
     \end{cases}}$ & $\mathbin{\mathit{op}} \in \{\cdot, -\}$\\
    $\pty \cdot \d{\sgtys[i][\pty[i]]}[i \in \iSet] = 
    \d{\sgtys[i][\ps\cdot\ps[i]]}[i \in \iSet]$ & %
    % $+:\GDType \times \GDType \rightharpoonup \GDType $ & \\
    % $\d{\sgtys[i][\ps[i]]}[i \in \iSet] + \d{\sgtys[j][\ps[j]]}[j \in \jSet] $ \\
    % $
    % = 
    %   \d{\sgtys[i][\ps[i]]}[i \in \iSet] \union \d{\sgtys[j][\ps[j]]}[j \in \jSet] 
    %   \ssum[i \in \iSet] \ps[i] + \ssum[j \in \jSet] \ps[j] \leq 1 
    % $ 
  \end{tabular} 
  \caption{Type system of  \glang.}
  \label{fig:source-type-system2}
  \end{figure}

  \begin{figure}[t]
    \def \MathparLineskip {\lineskip=1.4ex}  
    \begin{mathpar}
      \inference{}
      {\rtype \gprec \rtype} \and
      \inference{}
      {\btype \gprec \btype} \and
      \inference{}
      {\sgtys \gprec \? } \and
      \inference{
        \sgtys \gprec \sgtysp &
        \sgtya \gprec \sgtyb
      }
      {\sgtys -> \sgtya \gprec \sgtysp -> \sgtyb } \and
      \inference{\liftD{\sgtya[1]} \gprec \liftD{\sgtya[2]}}
      {\sgtya \gprec \sgtyb}
      \and
      \inference{
      \forall\: \FV(\phi[][1]).~  \phi[][1] \implies \exists~\FV(\phi[][2]) \cup
      \{\cww[ij] ~|~ i \in \iSet \land j \in \jSet\}. % \\ \phi[][2] \land 
      % \bigl(\cjudg{\d{\cww[ij]}[i \in \iSet \land j \in \jSet]}{\d{\gtys[i][\p[i]]}[i \in \iSet]}{\gprec}{\d{\gtys[j][\p[j]]}[j \in \jSet]}\bigr) 
      \\
      \cjudgext{\d{\cww[ij]}[i \in \iSet \land j \in \jSet]}{\d{\gtys[i][\p[i]]}[i \in \iSet]}{\, \gprec\,}{\d{\gtys[j][\p[j]]}[j \in \jSet]}{\phi[][1]}{\phi[][2]}{}%
    }
    {\db{\phi[][1]}{\gtys[i][\p[i]]}[i \in \iSet] \gprec
      \db{\phi[][2]}{\gtys[j][\p[j]]}[j \in \jSet]}
      \end{mathpar}
   \begin{mathpar}
   \inference{
   }
   {
      \sx \gprec \sx
   } \and
   \inference{
   }
   {
     \ssr \gprec \ssr
   } \and
   
   \inference{
   }
   {
     \ssb \gprec \ssb
   } \and
   \inference{
   \sgtys \gprec \sgtysp &
   \sm \gprec \srcn & 
   }
   {
    \src{(\lambda x:\sgtys. m)} \gprec \src{(\lambda x:\sgtysp. n)}
   } \and
   \inference{
   \sv \gprec \src{\svp } &
   \sgtys \gprec \sgtysp
   }
   {
   \src{\asc{\sv}{\sgtys}} \gprec \src{\asc{ \svp }{\sgtysp}}
   } \and
   \inference{
   \sm \gprec \srcn &
   \sgtya \gprec \sgtyb
   }
   {
   \src{\asc{ \sm}{\sgtya}} \gprec \src{\asc{ \srcn}{\sgtyb}}
   } \and
   \inference{}
   {\ps \gprec \ps} \and
   \inference{}
   {\pty \gprec \? } \and
   \inference{
   \sm \gprec \src{m'} &
   \srcn \gprec \src{n'} & \pty \gprec \src{\ptyp}
   }
   {
   \src{\sm \pssum \srcn} \gprec \src{m' \pssum[\ptyp] n'}
   } 
   \and
   \inference{
   \sv \gprec \src{v'} &
   \sw \gprec \src{w'} &
   }
   {
     \sv \; \sw \gprec \src{v'} \; \src{w'}
   } \and
   \inference{
   \sm \gprec \src{m'} &
   \srcn \gprec \src{n'} &
   }
   {
     \src{\lett{x}{m}{n}} \gprec \src{\lett{x}{m'}{n'}}
   } \and
   \inference{
   \sv \gprec \src{w'} &
   \sw \gprec \src{w'} &
   }
   {
     \src{\add{v}{w}} \gprec \src{\add{v'}{w'}}
   } \and
   \inference{
   \sv \gprec \src{v'} &
   \sm \gprec \src{m'} &
   \srcn \gprec \src{n'} &
   }
   {
     \src{\ite{v}{m}{n}} \gprec \src{\ite{v}{m'}{n'}}
   }
   \end{mathpar}
   \begin{mathpar}
   \inference{}{
      \cdot \gprec \cdot
    } 
    \and
    \inference{
      \Gamma_{1} \gprec \Gamma_{2} & 
      \sgtys \gprec \sgtysp
    }{
      \Gamma_{1},\sx:\sgtys \gprec \Gamma_{2}, \sx:\sgtysp
    } 
  \end{mathpar}
   \caption{Precision of \glang.}
   \label{source-term-precision2}
   \end{figure}

% \begin{figure}[t]
%   \begin{mathpar}
%     \inference{}
%     {
%       \cdot \gprec \cdot
%     } \and
%   %   
%     \inference{
%       \Gamma_{1} \gprec \Gamma_{2} & 
%       \sgtys \gprec \sgtysp
%     }
%     {
%       \Gamma_{1},\sx:\sgtys \gprec \Gamma_{2}, \sx:\sgtysp
%     } 
%   \end{mathpar}
%   \caption{Environment precision of \glang.}
% \label{fig:source-env-precision}
% \end{figure}

\begin{definition}[Context Elaboration] 
  \begin{align*}
  \liftE{ \cdot } & = \cdot \\
  \liftT{ \Gamma , \sx : \sgtys } & = \liftE{ \Gamma} , \tx : \liftT{ \sgtys }  
\end{align*}
\label{def:context-elaboration}
\end{definition}

\begin{lemma}[Meet operator with precision]\label{meet-more-precise}
  $~$
  \begin{enumerate}
    \item $\ift{ \gtys[1] \meet \gtys[2] = \gtys[3] }{  \gtys[3] \gprec \gtys[1] 
    \land  \gtys[3] \gprec \gtys[2]  }.$
    \item $\ift{ \gtya[1] \meet \gtya[2] = \gtya[3] }{  \gtya[3] \gprec \gtya[1] 
    \land  \gtya[3] \gprec \gtya[2]  }.$
  \end{enumerate}
\end{lemma}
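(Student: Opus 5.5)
The proof goes by simultaneous structural induction on the definition of $\meet$: over formula simple types for part (1), and via the $\witnessname$ operator for part (2). The two parts are mutually dependent. Function types contain distribution types in their codomains, and the entries of a distribution meet are simple-type meets. So I would do a single induction on the combined size of $\gtys[1],\gtys[2]$ (resp.\ $\gtya[1],\gtya[2]$).

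For part (1), I would case-split on the clause of $\meet$ that fired.
\begin{itemize}
\item If $\gtys[1]=\gtys[2]=\rtype$ (or $\btype$), both goals follow by the reflexive precision rules.
\item If $\gtys[1]=\?$, then $\gtys[3]=\gtys[2]$. We get $\gtys[3]\gprec\?$ by the rule $\gtys\gprec\?$. For $\gtys[2]\gprec\gtys[2]$ I would first prove a small reflexivity lemma for $\gprec$ by structural induction. Its distribution case takes the diagonal coupling $\cww[ij]=\p[i]$ if $i=j$ and $0$ otherwise, and then uses the same reflexivity inductively on the simple types. The case $\gtys[2]=\?$ is symmetric.
\item If both are function types, $\gtys[3]=(\gtys[1]'\meet\gtys[2]')->(\gtya[1]'\meet\gtya[2]')$. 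The induction hypotheses give precision of the domain and codomain components, and the function precision rule concludes.
\end{itemize}

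For part (2), let $\gtya[1]=\dctx[\phi[][1]][\d{\gtys[i][\cww[i]]}[i\in\iSet]]$, $\gtya[2]=\dctx[\phi[][2]][\d{\gtysp[j][\cww[j]]}[j\in\jSet]]$, and $\gtya[3]=\dctx[\phii]\d{(\gtys[i]\meet\gtysp[j])^{\cww[ij]}}$, where $(i,j)$ ranges over pairs with the meet defined. To show $\gtya[3]\gprec\gtya[1]$ with the distribution precision rule, I fix any assignment satisfying the closing formula $\phii$ of $\gtya[3]$. I must then exhibit values for $\FV(\phi[][1])$ and a coupling $\{\cww[(ij),i']\}$ between $\gtya[3]$ and $\gtya[1]$ satisfying the marginals, $\phi[][1]$, and the relation condition. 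Since $\phii$ contains $\phi[][1]$ as a conjunct, I reuse the same values for the variables of $\phi[][1]$. For the coupling I take the projection coupling: $\cww[(ij),i'] = \cww[ij]$ if $i'=i$, and $0$ otherwise. The row marginal is then trivially $\cww[ij]$. The column marginal for $i'$ is $\sum_j \cww[i'j]$, which equals $\cww[i']$ because $\phii$ asserts the coupling condition (\ref{cond:coupling}) for the witness. Entries are nonzero only where $\gtys[i]\meet\gtysp[j]$ is defined, and there the induction hypothesis of part (1) gives $\gtys[i]\meet\gtysp[j]\gprec\gtys[i]$. The symmetric projection onto $j$ gives $\gtya[3]\gprec\gtya[2]$.

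I expect the main obstacle to be bookkeeping around the index set of the witness. The pairs $(i,j)$ with undefined meet are excluded from $\gtya[3]$, yet the marginal equations in $\phii$ range over the index set $\kSet$ of defined pairs (footnote on $\phii$). So I must check that the sums $\sum_j \cww[ij]$ used above range over exactly the same pairs. Equivalently, I treat an absent pair as having weight $0$, which is harmless because condition (\ref{cond:relation}) forces $\cww[ij]=0$ whenever $(\gtys[i],\gtysp[j])\notin\dom(\meet)$. A second, smaller subtlety is that the precision rule quantifies universally over $\FV(\phi)$ of the more precise type, and these free variables include the fresh $\cww[ij]$. The plan handles this by reading all needed values directly off the satisfying assignment, which requires no extra existence argument.
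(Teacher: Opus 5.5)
Your proof is correct and takes the same route as the paper. Part (1) goes by induction on the meet. Part (2) exhibits, for each operand, a coupling between $\gtya[3]$ and that operand, and checks its marginals against the coupling constraints recorded in the witness formula.

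Your version is more careful than the paper's, which calls part (1) trivial, writes the coupling for part (2) in a garbled form, and checks neither the relation condition nor the quantifier structure of the precision rule. Your projection coupling, the explicit mutual induction through function codomains, and the reflexivity lemma needed in the unknown-type case supply exactly those missing details.
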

\begin{proof}
  $~$
  \begin{enumerate}
    \item By induction on $ \gtys[1] \meet \gtys[2] = \gtys[3]$, this case 
     is trivial. 
    \item Suppose $\gtya[1] = \phty{\phi[][1]}{\d{\gtys[i][\p[i]]}}[i \in \iSet]$,
    $\gtya[2] = \phty{\phi[][2]}{\d{\gtys[j][\p[j]]}}[j \in \jSet]$ and 
    $\gtya[3] = \phty{\phi[][3]}{\d{\gtys[k][\cww[k]]}}[k \in \kSet]$. \\
    $\sentence{We need to show,} \\
    {\ssum[k] \cww[jk] = \p[j] \;  \ssum[j] \cww[jk] = \cww[k] }\\
    {\ssum[i] \cww[ik]  = \cww[k] \; \ssum[k] \cww[ik]  = \p[i] }\\
    \sentence{Suppose}~{\cww[jk] = (\ssum[i] \cww[k]) \cdot \cww[k] } \\
    \sentence{Suppose}~{\cww[ik]  = (\ssum[j] \cww[k]) \cdot \cww[k]} \\
    \so{\ssum[k] \cww[jk] }\\
    \eq{\ssum[k] (\ssum[i] \cww[k]) \cdot \cww[k] }\\
    \since{\ssum[i] \cww[k] = \p[j] } \\
    \so{\\} 
    \eq{ \p[j] } \\
    \so{\ssum[j] \cww[jk]} \\ 
    \since{\ssum[j] \p[j] = 1 } \\
    \eq{ \p[k] } \\
    \\
    \so{\ssum[k] \cww[ik]  }\\
    \eq{\ssum[k] (\ssum[j] \cww[k]) \cdot \cww[k]  }\\
    \since{\ssum[j] \cww[k] = \p[i] } \\
    \so{\\} 
    \eq{ \p[i] } \\
    \so{\ssum[i] \cww[ik] } \\ 
    \since{\ssum[i] \p[i] = 1 } \\
    \eq{ \p[i] } $ \\
    The result holds. 
  \end{enumerate}
\end{proof}

\begin{lemma}\label{lift-probability}
  $\liftP{\pty[1] \cdot \pty[2]}[\cww] <=> \liftP{\pty[1]}[\cww[1]] \cdot \liftP{\pty[2]}[\cww[2]] .$
\end{lemma}
\begin{proof}
  $\since{\cww = \cww[1] \cdot \cww[2]}$,
  The result holds. 
\end{proof}

\begin{lemma}\label{lift-formula-time}
  $\liftD{\pty \cdot \sgtya} <=>  \liftP{\pty}[\cww] \cdot \liftD{ \sgtya} .$
\end{lemma}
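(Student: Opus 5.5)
The plan is to unfold both sides into formula distribution types and compare them component by component. The proof is a direct calculation with the two liftings. Write $\sgtya = \d{\sgtys[i][\pty[i]]}[i \in \iSet]$.

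By the definition of scaling on gradual distribution types, $\pty \cdot \sgtya = \d{\sgtys[i][\pty\cdot\pty[i]]}[i \in \iSet]$. Applying $\liftD{\cdot}$ then gives a formula distribution type $\d{\liftT{\sgtys[i]}^{\cww'_i}}[i \in \iSet]$ with fresh variables $\cww'_i = \pr{\varr'_i, i, i}$. It is closed by the formula
\[
\textstyle\bigwedge_{i} \liftP{\pty\cdot\pty[i]}[\cww'_i] \land \sum_i \cww'_i = 1 .
\]

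On the other side, $\liftD{\sgtya}$ is $\d{\liftT{\sgtys[i]}^{\cww[i]}}[i \in \iSet]$ under $\bigwedge_i \liftP{\pty[i]}[\cww[i]] \land \sum_i \cww[i] = 1$. Scaling by the symbolic probability $\cww$ constrained by $\liftP{\pty}[\cww]$ uses the metafunction $\p \cdot \gtya$ from Figure~\ref{fig:target-type-system}. It produces fresh $\cww''_i$ carrying the same tags $(i,i)$, together with the conjuncts $\cww''_i = \cww \cdot \cww[i]$.

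The key step is to show that the two closing formulas are equisatisfiable, in fact equivalent after projecting out the auxiliary variables, under the identification $\cww'_i = \cww''_i$. For each $i$ I will invoke Lemma~\ref{lift-probability} with $\cww[1] := \cww$ and $\cww[2] := \cww[i]$. This rewrites $\liftP{\pty\cdot\pty[i]}[\cww'_i]$ as $\exists \cww, \cww[i].\ \cww'_i = \cww\cdot\cww[i] \land \liftP{\pty}[\cww] \land \liftP{\pty[i]}[\cww[i]]$. It is important that a single shared $\cww$ witnesses all $i$. This holds when $\pty$ is static, where $\cww = \ps$ is forced. When $\pty = \?$, every $\liftP{\pty\cdot\pty[i]}$ is the interval $[0,1]$, and I must exhibit a common $\cww$.

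The simple-type components coincide syntactically, since both sides use $\liftT{\sgtys[i]}$ at the same index with the same tags. So once the formulas are matched, the claimed equivalence is literally equality up to renaming of fresh variables.

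The main obstacle is the normalisation conjunct. The left formula requires $\sum_i \cww'_i = 1$. The right one instead has $\sum_i \cww[i] = 1$, hence $\sum_i \cww''_i = \cww$, which is not $1$ in general. I therefore read $\Leftrightarrow$ as equivalence of the constraints on the component probabilities, relative to the total mass. This is exactly how the lemma is used in the typing of $\oplus$ and $\<let>$: the scaled summands are recombined by the $\sugarconj{\phi}{\sum\cdot}$ metafunction, which re-imposes the normalisation. Concretely, I will drop the $\sum = 1$ conjunct on both sides, as the $+$/sum metafunction does, and prove the equivalence of the remaining conjunctions.

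For $\pty = \?$ with unconstrained $\cww'_i \in [0,1]$, I will pick $\cww := \max_i \cww'_i$ and $\cww[i] := \cww'_i/\cww$ (with the degenerate case $\cww = 0$ handled separately). These need not satisfy $\sum_i \cww[i] = 1$. I will try to avoid this issue by relying on the normalisation having been dropped as above, and by checking the degenerate cases by hand.
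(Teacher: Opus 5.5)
Your decomposition is the one the paper uses: unfold both liftings, then rewrite $\liftP{\pty\cdot\pty[i]}[{\cwwp[i]}]$ index by index as in Lemma~\ref{lift-probability}. You are right that the normalisation conjuncts do not match; the paper's proof equates them without comment. But dropping those conjuncts does not save the equivalence, because the direction from $\liftD{\pty\cdot\sgtya}$ to $\liftP{\pty}[\cww]\cdot\liftD{\sgtya}$ is false.

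Lemma~\ref{lift-probability} only holds right-to-left, because the lifted product collapses $\tfrac12\cdot\?$ to $\?$. Take $\pty=\tfrac12$ and $\pty[i]=\?$. The formula $\liftP{\pty\cdot\pty[i]}[{\cwwp[i]}]$ allows $\cwwp[i]=\tfrac34$, whereas $\exists\cww,\cww[i].\ \cwwp[i]=\cww\cdot\cww[i]\land\cww=\tfrac12\land\cww[i]\in[0,1]$ forces $\cwwp[i]\le\tfrac12$. So your rewrite already fails when $\pty$ is static.

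The lifted product also forgets that all components share a single factor. Take $\pty=\?$ and $\sgtya=\ds{\rtype^{\frac12},\btype^{\frac12}}$. Then $\pty\cdot\sgtya=\ds{\rtype^{\?},\btype^{\?}}$, whose lifting admits the masses $(\tfrac13,\tfrac23)$, normalisation included. In $\liftP{\?}[\cww]\cdot\liftD{\sgtya}$, by contrast, the constraints force $\cww[1]=\cww[2]=\tfrac12$ and hence equal masses $\cww/2$. No handling of the $\sum=1$ conjunct reconciles these. Your witness $\cww:=\max_i\cwwp[i]$, $\cww[i]:=\cwwp[i]/\cww$ breaks on exactly this example: it yields $\cww[2]=1$, violating $\liftP{\tfrac12}[{\cww[2]}]$. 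The cause is that you only required $\cww[i]\in[0,1]$ instead of $\liftP{\pty[i]}[{\cww[i]}]$.

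What you can prove is the other direction. If $\cww\in\crp{\pty}$ and $\cww[i]\in\crp{\pty[i]}$, then $\cww\cdot\cww[i]\in\crp{\pty\cdot\pty[i]}$: the values are equal when both are static, and otherwise the right-hand set is $[0,1]$. Hence every solution of the formula of $\liftP{\pty}[\cww]\cdot\liftD{\sgtya}$ satisfies $\bigwedge_i\liftP{\pty\cdot\pty[i]}[{\cwwpp[i]}]$, with the same simple type $\liftT{\sgtys[i]}$ at each index.

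That entailment is all Lemmas~\ref{lift-formula-defined} and~\ref{lift-formula-consistency} actually use, provided normalisation is re-imposed only by the enclosing sum (e.g.\ through $\cww[1]+\cww[2]=1$ in (E$\oplus$)). The diagonal coupling then witnesses consistency with $\liftD{\pty\cdot\sgtya+(1-\pty)\cdot\sgtyb}$. So restate the lemma as this one-directional entailment, or prove those two lemmas directly, rather than as an equivalence.
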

\begin{proof}
  Suppose $\sgtya = \d{\sgtys[i][\pty[i]]}[i \in \iSet].$ \\
  $\since{
    \pty \cdot \sgtya
  } \\
  \eq{
    \d{\sgtys[i][\pty \cdot \pty[i]]}[i \in \iSet]
  }\\ 
  \so{
    \liftD{ \d{\sgtys[i][\pty \cdot \pty[i]]}[i \in \iSet] }
  }\\
  \eq{
   \phty{\bigwedge_{i \in \iSet} \liftP{\pty \cdot \pty[i] }[\cww[i]] \land \ssum[i \in \iSet] \cww[i] = 1}{
    \d{\liftT{\sgtys[i]}^{\cww[i]}}[i \in \iSet]
   } 
  }\\
  \eq{
    \phty{\bigwedge_{i \in \iSet} \liftP{\pty}[\cww] \land \liftP{\pty[i]}[\cwwp[i]] 
    \land \cww[i] = \cwwp[i] \cdot \cww \land \ssum[i \in \iSet] \cww[i] = 1}{
      \d{\liftT{\sgtys[i]}^{\cww[i]}}[i \in \iSet]
     } 
  }\\
  \since{\liftP{\pty}[\cww] \cdot \liftD{ \sgtya} }\\
  \eq{
    \liftP{\pty}[\cww] \cdot \liftD{  \d{\sgtys[i][\pty[i]]}[i \in \iSet]} 
  }\\
  \eq{
    \phty{ \bigwedge_{i \in \iSet} \liftP{\pty[i] }[\cwwp[i]] \land  \liftP{\pty}[\cww]  
    \ssum[i \in \iSet] \cwwp[i] = 1}{
      \d{\liftT{\sgtys[i]}^{\cww \cdot \cwwp[i]}}[i \in \iSet]
      }
  }\\
  \so{ \\
    \phty{\bigwedge_{i \in \iSet} \liftP{\pty}[\cww] \land \liftP{\pty[i]}[\cwwp[i]] 
    \land \cww[i] = \cwwp[i] \cdot \cww \land \ssum[i \in \iSet] \cww[i] = 1}{
      \d{\liftT{\sgtys[i]}^{\cww[i]}}[i \in \iSet]
     }  
  } \\
  <=> \\ 
  { \phty{ \bigwedge_{i \in \iSet} \liftP{\pty[i] }[\cwwp[i]] \land  \liftP{\pty}[\cww]  
  \ssum[i \in \iSet] \cwwp[i] = 1}{
    \d{\liftT{\sgtys[i]}^{\cww \cdot \cwwp[i]}}[i \in \iSet]
    }} \\ 
  \so{
    \liftD{\pty \cdot \sgtya} <=>  \liftP{\pty}[\cww] \cdot \liftD{ \sgtya}
  }$
\end{proof}

\begin{lemma}\label{lift-formula-sum}
  $ \liftD{\ssum[i] \sgtya[i]} <=> \ssum[i] \liftD{ \sgtya[i] }.$
\end{lemma}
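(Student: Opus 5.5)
The plan is to follow the same pattern as the proof of Lemma~\ref{lift-formula-time}: unfold both sides into explicit formula distribution types, identify the fresh tagged variables of one side with those of the other, and show that the two closing formulas are logically equivalent under this identification. As there, $\Leftrightarrow$ is read as ``same multiset of formula simple types, and equivalent closing formulas up to renaming of the fresh variables''.

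\textbf{Step 1: unfold the left-hand side.} Write $\sgtya[i] = \d{\sgtys[ij][\pty[ij]]}[j \in \jSet_i]$. The gradual sum $\ssum[i] \sgtya[i]$ is the multiset union $\d{\sgtys[ij][\pty[ij]]}[i \in \iSet, j \in \jSet_i]$, defined when the side condition on probabilities is plausible. Hence, by definition of $\liftD{\cdot}$,
\[
\liftD{\ssum[i] \sgtya[i]} = \dctx[\bigl(\bigwedge_{i,j} \liftP{\pty[ij]}[\cww[ij]] \land \ssum[i,j] \cww[ij] = 1\bigr)][\d{\liftT{\sgtys[ij]}^{\cww[ij]}}[i \in \iSet, j \in \jSet_i]],
\]
with fresh $\cww[ij]$. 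Their tags are obtained by enumerating the pairs $(i,j)$ consecutively, i.e.\ offset by $\sum_{k<i}|\jSet_k|$.

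\textbf{Step 2: unfold the right-hand side.} Each summand gives $\liftD{\sgtya[i]} = \dctx[\phi[][i]][\d{\liftT{\sgtys[ij]}^{\cwwp[ij]}}[j \in \jSet_i]]$, where $\phi[][i]$ collects the liftings $\liftP{\pty[ij]}[\cwwp[ij]]$ and the normalisation constraint of the $i$-th component. The formula-level sum (the $\sugarconj{\phi}{\sum_i \cdot}$ metafunction of Figure~\ref{fig:target-type-system}) then unions the supports. In doing so it shifts the tags of $\cwwp[ij]$ by exactly the same offset $\sum_{k<i}|\jSet_k|$, and conjoins $\bigwedge_i \phi[][i]$ with the global constraint $\ssum[i]\ssum[j] \cwwp[ij] = 1$.

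\textbf{Step 3: match the two sides.} Both index sets are the same set of pairs $(i,j)$, in the same order. The renaming $\cww[ij] \mapsto \cwwp[ij]$ preserves tags (by the offset computation) and maps $\liftT{\sgtys[ij]}$ to itself, so the supports coincide syntactically. It remains to compare the formulas.
\begin{itemize}
\item The conjuncts $\liftP{\pty[ij]}[\cdot]$ match one-to-one.
\item The global normalisation $\ssum[i,j] \cww[ij] = 1$ appears on both sides.
\end{itemize}

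\textbf{Step 4: the per-component normalisation constraints (main obstacle).} The only discrepancy is the constraint carried by each $\phi[][i]$. I will handle it as the paper's summation does: when $\liftD{\cdot}$ is applied to a summand of a sum, its closing normalisation is relative to that component's mass, so it is the sub-distribution constraint. Under this reading, each component constraint is implied by the per-type constraints $\liftP{\pty[ij]}$ together with the global sum, and is therefore redundant.

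I expect this to be the main obstacle. Taken literally, each $\phi[][i]$ forces $\ssum[j]\cwwp[ij]=1$, and conjoining these for every $i$ with the global constraint is inconsistent whenever $|\iSet|>1$. I would therefore make the intended convention explicit first: the summands are the scaled components produced by the typing rules, which is exactly how the lemma is used together with Lemma~\ref{lift-formula-time}. Once that convention is fixed, equivalence in both directions follows by eliminating the redundant conjuncts and applying the renaming.
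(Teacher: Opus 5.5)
Your Steps~1--3 are exactly the paper's proof, which unfolds $\liftD{\sum_i \sgtya[i]}$ and $\sum_i \liftD{\sgtya[i]}$ and writes both as the same formula distribution type, with closing formula $\bigl(\bigwedge_{i,j}\liftP{\pty[ij]}[{\cww[ij]}]\bigr)\land \sum_{i,j}\cww[ij]=1$. Your Step~4 worry is well founded and is precisely what that proof passes over: its unfolding of the right-hand side silently drops the per-component conjuncts $\sum_{j}\cww[ij]=1$ contributed by each $\liftD{\sgtya[i]}$, so the literal statement does fail for $|\iSet|>1$ and the paper tacitly uses the convention you make explicit; fixing that convention concretely as $\sum_j\cww[ij]\le 1$ (a component's ``mass'' is not a fixed number once some $\pty[ij]$ is unknown) makes your redundancy claim follow immediately from nonnegativity of the lifted-probability conjuncts together with the global equation.
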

\begin{proof}
  Suppose $\sgtya[i] = \d{\sgtys[j][\pty[j]]}[j \in \jSet].$ \\
  $\since{
    \liftD{\ssum[i] \sgtya[i]}
  } \\
  \eq{
    \liftD{\ssum[i] \d{\sgtys[j][\pty[j]]}[j \in \jSet]}
  }\\
  \eq{
    \phty{(\bigwedge_{i} \bigwedge_{j} \liftP{\pty[j]}[\cww[jj]]) 
    \land (\sum_i \sum_{j \in \jSet_i} \liftP{\pty[j]}[\cww[jj]]  = 1)}{
      \bigcup_{i}  \d{ \liftD{\sgtys[j][\pty[j]]} }[j \in \jSet]}
    }\\
  % \eq{
  %   \phty{\phi \land (\bigwedge_{i} \phi[][i]) \land (\sum_i \sum_{j \in \jSet_i} \p[j]  = 1)
  %  }{ \bigcup_{i} \d{\gtys[j][\p[j]]}[j \in \jSet_i] }
  % }\\
  \since{
    \ssum[i] \liftD{ \sgtya[i] }
  }\\
  \eq{
    \phty{ (\bigwedge_{i} \bigwedge_{j} \liftP{\pty[j]}[\cww[jj]])  
   \land (\sum_i \sum_{j \in \jSet_i} \liftP{\pty[j]}[\cww[jj]]  = 1)
      }{ 
        \bigcup_{i}  \d{ \liftD{\sgtys[j][\pty[j]]} }[j \in \jSet]} 
  }\\
  \so{
    \liftD{\ssum[i] \sgtya[i]} <=> \ssum[i] \liftD{ \sgtya[i] }
  }\\
  $
  The result holds.
\end{proof}

\begin{lemma}\label{lift-formula-defined}
  $\ssum[i] \liftP{\pty}[\cww] \cdot \liftD{\sgtya[i] } \meet \liftD{\ssum[i] \pty \cdot \sgtya[i] }$ is defined. 
\end{lemma}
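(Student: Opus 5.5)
The plan is to reduce definedness of the meet to the existence of a witness coupling between the two formula distribution types. We then exhibit that coupling explicitly, using the two previous lemmas (Lemmas~\ref{lift-formula-time} and~\ref{lift-formula-sum}) to see that both sides describe ``the same'' distribution up to renaming of variables.

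Concretely, write $\sgtya[i] = \d{\sgtys[j][\pty[j]]}[j \in \jSet_i]$. By Lemma~\ref{lift-formula-time}, $\liftP{\pty}[\cww] \cdot \liftD{\sgtya[i]}$ is equivalent to $\liftD{\pty \cdot \sgtya[i]}$. Summing over $i$ and applying Lemma~\ref{lift-formula-sum}, the left operand is equivalent to $\liftD{\ssum[i] \pty \cdot \sgtya[i]}$, which is exactly the right operand up to renaming of the fresh variables. So both operands have the same multiset of simple types $\liftT{\sgtys[j]}$, indexed by pairs $(i,j)$ with $j \in \jSet_i$. Call the left variables $\cww[ij]$ and the right ones $\cwwp[ij]$.

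By the definition of $\witnessname$, the meet $\gtya[1] \meet \gtya[2]$ is defined iff the formula $\phi['][] \land \phii$ is satisfiable. Here $\phii$ requires a coupling whose positive entries relate pairs in $\dom(\meet)$. I will choose the diagonal coupling: weight $\cww[ij]$ on the pair formed by left index $(i,j)$ and right index $(i,j)$, and $0$ elsewhere. The marginal conditions then collapse to $\cww[ij] = \cwwp[ij]$. The relational condition needs $\liftT{\sgtys[j]} \meet \liftT{\sgtys[j]}$ to be defined. I will show this by a small auxiliary induction on formula simple types: $\gtys \meet \gtys$ is always defined. The base and $\?$ cases are immediate, the arrow case uses the IH, and the distribution case is again the diagonal coupling.

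It remains to give a satisfying assignment. Take any assignment of the variables of the right formula, which exists because $\liftD{\ssum[i] \pty\cdot\sgtya[i]}$ is well-formed: the typing of the let expression plus Lemma~\ref{lemma:wellformed-lifting2} yield satisfiability. Set $\cww[ij]$ equal to the corresponding $\cwwp[ij]$, and choose the intermediate variables introduced by scaling ($\cww$ for $\pty$ and the factors $\cwwp[j]$) consistently. By the equivalence from Lemmas~\ref{lift-formula-time} and~\ref{lift-formula-sum}, such a choice exists. The tag constraints $\phi['][]$ are satisfied by fixing the fresh coupling variables' tags to match.

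I expect the main obstacle to be this last step. The equivalences in Lemmas~\ref{lift-formula-time} and~\ref{lift-formula-sum} hold only up to existential quantification of auxiliary variables. Factoring a product $\cww \cdot \cwwp[j]$ back into factors satisfying $\liftP{\pty}$ and $\liftP{\pty[j]}$ needs care when $\pty = \?$ or when some products are $0$. I would handle this by extracting the factor assignment directly from the premises of the typing derivation, namely the well-formedness of each $\sgtya[i]$ and of $\d{\sgtys[i][\pty[i]]}$, rather than by inverting the products.
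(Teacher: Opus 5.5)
Your skeleton is the paper's: its proof also uses Lemmas~\ref{lift-formula-time} and~\ref{lift-formula-sum} to identify the two operands (same simple types indexed by pairs $(i,j)$, formulas claimed equivalent) and then takes the diagonal coupling. The gap is in your satisfiability step. There you start from an \emph{arbitrary} solution of the right operand's formula and try to factor it back into a solution of the left one, and that step fails.

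Only one inclusion holds: every solution of the left operand, restricted to the entry weights, solves the right operand's formula. The converse fails because the lifted gradual product collapses to $\?$. For instance, $\liftP{\?\cdot\frac{1}{2}}[\cww]$ is just $\cww\in[0,1]$, whereas $\exists a,b.\ \cww=ab\land a\in[0,1]\land b=\frac{1}{2}$ forces $\cww\le\frac{1}{2}$. The paper's proof asserts $\Phi_1\iff\Phi_2$ at exactly this point via the same two lemmas, so citing them does not close the step.

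Concretely, take the $\oplus$ instance with $\pty=\?$, $\sgtya=\ds{\rtype^{1/2},\btype^{1/2}}$ and $\sgtyb=\ds{\rtype^{1}}$. The left operand's three weights are $a/2$, $a/2$, $1-a$. The right operand is $\liftD{\ds{\rtype^{\?},\btype^{\?},\rtype^{\?}}}$, which admits $(0.1,0.6,0.3)$. No choice of intermediate variables matches that along the diagonal.

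The repair is to reverse the direction. Write the entries of $\sgtya[i]$ as $\sgtys[ij][\pty[ij]]$ and build a left solution from the typing premises:
\begin{itemize}
\item choose numbers $a_i\in\crp{\pty[i]}$ with $\sum_i a_i=1$; in the let case these come from well-formedness of the bound term's type, and in the choice case take $a_1=p$, $a_2=1-p$, or any split when $\pty=\?$;
\item for each $i$, choose numbers $b_{ij}$ solving the formula of $\liftD{\sgtya[i]}$, which exist by its well-formedness.
\end{itemize}
Then set the scaled variables, the right operand's variables and the diagonal coupling weights all to $a_i b_{ij}$. These values satisfy the right operand's per-entry constraints: the exact product when $\pty[i]$ and $\pty[ij]$ are both static, and membership in $[0,1]$ otherwise. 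They also sum to $1$.

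Your closing remark about extracting the factors from the typing premises points the right way. However, it is incompatible with having fixed the right-hand assignment first; that assignment must be \emph{defined} as the image of the left one.

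Separately, your auxiliary claim that $\gtys\meet\gtys$ is always defined needs two adjustments. It requires well-formedness, since an unsatisfiable nested formula makes the meet undefined. It must also be stated up to renaming, since the two operands lift each $\sgtys[ij]$ independently with distinct fresh variables. Both conditions hold here by Lemma~\ref{lemma:wellformed-lifting2}.
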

\begin{proof}
  $\\$
  Suppose $\ssum[i] \liftP{\pty}[\cww] \cdot \liftD{\sgtya[i] } = 
  \phty{\phi[][1]}{ \d{ \gtys[j][\p[j]] }[j \in \jSet] }$  \\
  \ad $\liftD{\ssum[i] \pty \cdot \sgtya[i] } = 
  \phty{\phi[][2]}{ \d{ \gtys[j][\pp[j]] }[j \in \jSet] }$. \\
  By Lemma ~ \ref{lift-formula-time} \ad \ref{lift-formula-sum} \\ 
  $\so{
    \phi[][1] <=> \phi[][2]
  } \\
  \so{
    \p[j] = \pp[j]
  }\\
  \sentence{ we need to show that,} \\
  {\ssum[j] \cww[jj] = \p[j]}\\
  \sentence{Suppose}~ \cww[jj] = \p[j] \cdot \p[j] \\
  \so {\ssum[j] \cww[jj] = \p[j]}\\
  $
  The result holds.
\end{proof}

\begin{lemma}\label{lift-formula-consistency}
  $\ssum[i] \liftP{\pty}[\cww] \cdot \liftD{\sgtya[i] } \rel \liftD{\ssum[i] \pty \cdot \sgtya[i] }$. 
\end{lemma}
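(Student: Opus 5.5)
The plan is to show consistency by exhibiting the identity coupling between the two formula distribution types, reusing the bookkeeping already developed for Lemma~\ref{lift-formula-defined}. Write $\ssum[i] \liftP{\pty}[\cww] \cdot \liftD{\sgtya[i]} = \phty{\phi[][1]}{\d{\gtys[j][\p[j]]}[j \in \jSet]}$ and $\liftD{\ssum[i] \pty \cdot \sgtya[i]} = \phty{\phi[][2]}{\d{\gtys[j][\pp[j]]}[j \in \jSet]}$.

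\emph{First step.} Check that both sides range over the same multiset of lifted simple types, in the same order. This holds by construction: scaling only modifies probabilities, and both the formula-level sum and $\liftD{\cdot}$ take the union of the components in the order $i,j$. By Lemma~\ref{lift-formula-time} and Lemma~\ref{lift-formula-sum}, $\phi[][1] \Leftrightarrow \phi[][2]$ once the symbolic probabilities are identified, that is, once we impose $\p[j] = \pp[j]$.

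\emph{Second step.} By the last rule of Definition~\ref{def:consistency-target}, it suffices to establish $\couplingLift{\cdot}{\cdot}{\sim}$, meaning satisfiability of $\phi[][1] \land \phi[][2]$ together with a coupling. Choose the diagonal coupling: $\cww[jj'] = \p[j]$ when $j = j'$, and $\cww[jj'] = 0$ otherwise. The marginal conditions reduce to $\p[j] = \pp[j]$, which is exactly the identification from the first step. The relational condition only has to be checked on the diagonal, where it reads $\gtys[j] \sim \gtys[j]$. For that, prove by a straightforward induction on formula simple types that $\sim$ is reflexive on well-formed types:
\begin{itemize}
  \item base types and $\?$ are immediate;
  \item function types follow from the induction hypothesis;
  \item distribution types again use the diagonal coupling, whose satisfiability follows from well-formedness.
\end{itemize}

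\emph{Main obstacle.} The remaining obligation is satisfiability of the combined formula. Take any concrete $\ps \in \crp{\pty}$ together with witnesses of well-formedness of each $\sgtya[i]$; these exist by Lemma~\ref{lemma:wellformed-lifting2} and the typing premises. They give a common assignment satisfying $\phi[][1]$, and the same assignment satisfies $\phi[][2]$ by the equivalence from the first step. The most delicate point is lining up the fresh variables of the two sides. I would fix the renaming explicitly using the index-shifting convention in the definition of the formula-level sum (Fig.~\ref{fig:target-type-system}), so that the $\p[j]$ and $\pp[j]$ can genuinely be set equal.
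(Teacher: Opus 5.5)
Your proposal follows the paper's own proof. You relate the two sides through Lemmas~\ref{lift-formula-time} and~\ref{lift-formula-sum} and witness consistency by the diagonal coupling $\omega_{jj} = \varrho_j$ (the paper's $\varrho_j \cdot \varrho_j$ is a slip). You also discharge the reflexivity of $\sim$ and the satisfiability obligation, which the paper's proof leaves implicit.

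One caution. What you actually use, and what holds in general, is only $\Phi_1 \Rightarrow \Phi_2$ under $\varrho_j = \varrho'_j$: the source-level product of probabilities is $\?$ as soon as one factor is $\?$, so $\Phi_2$ drops correlations that $\Phi_1$ retains, and the converse fails. Your satisfiability paragraph relies only on this direction, which is fine. However, the weights in the satisfying assignment of $\Phi_1$ must be chosen jointly so that they sum to $1$, as well-formedness of the bound term's type guarantees in the let instance, rather than independently.
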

\begin{proof}
  $\\$
  Suppose $\ssum[i] \liftP{\pty}[\cww] \cdot \liftD{\sgtya[i] } = 
  \phty{\phi[][1]}{ \d{ \gtys[j][\p[j]] }[j \in \jSet] }$  \\
  \ad $\liftD{\ssum[i] \pty \cdot \sgtya[i] } = 
  \phty{\phi[][2]}{ \d{ \gtys[j][\pp[j]] }[j \in \jSet] }$. \\
  By Lemma ~ \ref{lift-formula-time} \ad \ref{lift-formula-sum} \\ 
  $\so{
    \phi[][1] <=> \phi[][2]
  } \\
  \so{
    \p[j] = \pp[j]
  }\\
  \sentence{ we need to show that,} \\
  {\ssum[j] \cww[jj] = \p[j]}\\
  \sentence{Suppose}~ \cww[jj] = \p[j] \cdot \p[j] \\
  \so {\ssum[j] \cww[jj] = \p[j]}\\
  $
  The result holds.
\end{proof}

\begin{lemma}[Elaboration  preserve consistency]\label{elaborate-prv-consistency}
  $~$
  \begin{enumerate}
    \item $\ift{\sgtys \rel \sgtysp}{\liftT{\sgtys} \rel \liftT{\sgtysp} }$
    \item $\ift{\sgtya \rel \sgtyb}{\liftD{\sgtya} \rel \liftD{\sgtyb}}$
  \end{enumerate} 
\end{lemma}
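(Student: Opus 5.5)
The plan is to prove both items simultaneously, by induction on the derivation of the consistency judgment $\sgtys \rel \sgtysp$ (resp.\ $\sgtya \rel \sgtyb$) from Definition~\ref{def:consistency-source}. Since $\liftT{\cdot}$ is structural on simple types, item~1 follows by cases.
\begin{itemize}
\item \textbf{Base-type case.} If $\sgtys = \sgtysp = \rtype$ (or $\btype$), the lifted types coincide and the same base rule of Definition~\ref{def:consistency-target} applies.
\item \textbf{Unknown-type case.} If one side is $\?$, its lifting is again $\?$, so the rule $\gtys \sim \?$ (or $\? \sim \gtys$) closes the case directly.
\item \textbf{Function-type case.} Here $\liftT{\sgtys[1] -> \sgtya[1]} = \liftT{\sgtys[1]} -> \liftD{\sgtya[1]}$. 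We apply the induction hypothesis for item~1 to the domains and for item~2 to the codomains, then use the function rule for formula types.
\end{itemize}

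For item~2, the only source rule concluding $\sgtya \rel \sgtyb$ has the premise $\liftD{\sgtya} \sim \liftD{\sgtyb}$, judged with the formula-type consistency of Definition~\ref{def:consistency-source}. The goal is the same judgment, but read in Definition~\ref{def:consistency-target}. So the real content is that the two definitions of $\sim$ on formula types agree.
\begin{itemize}
\item \textbf{Distribution rule.} In both definitions this rule is literally $\couplingLift{\gtya}{\gtyb}{\sim}$. It quantifies over the same symbolic probabilities, coupling variables and closing formulas $\phi[][1], \phi[][2]$.
\item \textbf{Simple-type rules.} These coincide as well, except that Definition~\ref{def:consistency-source} phrases them over gradual simple types $\sgtys$ while the target phrases them over $\GFType$.
\end{itemize}

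I would therefore prove an auxiliary claim: for formula types, the source and target consistency relations coincide. This goes by a nested induction on the formula-type structure. The function case uses the mutual induction, and the distribution case is immediate, because the same witness coupling $\CC$ with $\var{i}{j} > 0 \Rightarrow \gtys[i] \sim \gtysp[j]$ transfers pointwise by the inner induction hypothesis. Given this claim, item~2 follows at once from the premise.

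The main obstacle is a bookkeeping point rather than a mathematical one. Definition~\ref{def:consistency-source} mixes gradual and formula types: the function rule there is stated on $\sgtys[1] -> \sgtya[1]$, and it is unclear whether formula function types with codomains in $\FGDType$ are covered. I would resolve this by reading the rules as implicitly instantiated at $\FGType$ (as the paper's statement ``defined in the same way as in \glang'' suggests), and then argue that $\liftD{\cdot}$ introduces only fresh variables. Freshness means renaming cannot change satisfiability of the existential coupling formula. If needed, Lemma~\ref{agtconsistencyeq} supplies an alternative route: from $\sgtya \rel \sgtyb$ obtain static concretizations that are equal, then build an explicit witness coupling between $\liftD{\sgtya}$ and $\liftD{\sgtyb}$ by instantiating the fresh variables with those concrete probabilities.
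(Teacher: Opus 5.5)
Your proposal is correct and is essentially the paper's argument. The paper simply says the lemma is trivial by the definition of consistency, and your structural induction for item~1, together with the observation that the only rule concluding $\sgtya \rel \sgtyb$ has premise $\liftD{\sgtya} \rel \liftD{\sgtyb}$ (with formula-type consistency read as in Definition~\ref{def:consistency-target}), is exactly that unfolding made explicit. The fallback via Lemma~\ref{agtconsistencyeq} is unnecessary, and used as a black box it would not quite work: the static concretizations it yields need not have probabilities summing to $1$, which the $\sum_i \omega_i = 1$ conjunct in $\liftD{\cdot}$ requires.
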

\begin{proof}
   The proof is trivial by the definition of consistency.
\end{proof}

\begin{lemma}[Consistency defined]\label{consistency-defined}
  $~$
  \begin{enumerate}
    \item $\ift{\gtys \rel \gtysp}{\gtys \meet \gtysp }$ is defined.
    \item $\ift{\gtya \rel \gtyb}{\gtya \meet \gtyb}$ is defined.
  \end{enumerate} 
\end{lemma}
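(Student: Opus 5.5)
The proof goes by simultaneous induction on the derivations of $\gtys \rel \gtysp$ and $\gtya \rel \gtyb$ (Definition~\ref{def:consistency-target}), following the clauses of the meet operator.

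For simple types the cases are direct. If the derivation ends with $\rtype \sim \rtype$ or $\btype \sim \btype$, the matching meet clause applies. If it ends with $\gtys \sim \?$ or $\? \sim \gtys$, the clauses $\gtys \meet \? = \gtys$ and $\? \meet \gtys = \gtys$ apply. For a function type $\gtys[1] -> \gtya[1] \sim \gtys[2] -> \gtya[2]$, the induction hypotheses make $\gtys[1] \meet \gtys[2]$ and $\gtya[1] \meet \gtya[2]$ defined, so the arrow clause yields the meet. Here the simple-type part of the induction calls the distribution part, which is why the two are proved together.

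The distribution case carries the substance. Let $\gtya = \dctx[\phi[][1]]\d{\gtys[i][\p[i]]}[i \in \iSet]$ and $\gtyb = \dctx[\phi[][2]]\d{\gtysp[j][\p[j]]}[j \in \jSet]$. The hypothesis $\couplingLift{\gtya}{\gtyb}{\sim}$ provides witnesses: a satisfying assignment of $\FV(\phi[][1]) \cup \FV(\phi[][2])$ together with $\CC = \d{\var{i}{j}}$ meeting the two marginal equations of Definition~\ref{def:coupling}. It also gives $\var{i}{j} > 0 \Rightarrow \gtys[i] \sim \gtysp[j]$. By the induction hypothesis, each related pair with positive weight satisfies $(\gtys[i],\gtysp[j]) \in \dom(\meet)$.

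To show that $\witness{\gtya}{\gtyb}{\meet}$ is defined, we instantiate its existential as follows. The free variables of $\phi[][1]$ and $\phi[][2]$ take the same assignment. Each fresh $\cww[ij]$ with $(\gtys[i],\gtysp[j]) \in \dom(\meet)$ takes the value $\var{i}{j}$, and its tags are set to $\projl{\cww[i]}$ and $\projr{\cww[j]}$, which discharges $\phi[][']$. The meet's own coupling is indexed only over $\kSet = \{(i,j) \mid (\gtys[i],\gtysp[j]) \in \dom(\meet)\}$. A pair outside $\kSet$ is not consistent (its meet is undefined, and by the induction hypothesis consistent pairs have a defined meet), so its weight $\var{i}{j}$ must be $0$. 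Dropping these pairs therefore leaves the marginal sums unchanged, and the restricted family still satisfies condition~\ref{cond:coupling}. Condition~\ref{cond:relation} holds for $R = \dom(\meet)$ by construction of $\kSet$.

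The main obstacle is the interaction between the existential quantification in the lifting and the shared tagged variables, which must be kept consistent. In the meet they are bound jointly with the fresh $\cww[ij]$, rather than being existentially closed separately inside $\phi[][1]$ and $\phi[][2]$. I would first check that Definition~\ref{def:symcoupling} quantifies the distribution probabilities and the coupling variables in one block. If so, the single witness tuple transfers verbatim. The remaining work is bookkeeping of indices, and freshness of the $\cww[ij]$ rules out any clashes.
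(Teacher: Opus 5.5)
Your proposal is correct and follows the paper's approach: the coupling and satisfying assignment that witness $\gtya \rel \gtyb$ are reused as the witness that makes $\gtya \meet \gtyb$ defined, and the simple-type case follows directly from the matching meet clauses. The paper only remarks that the marginal equations coincide. You add the details it leaves implicit: the induction hypothesis discharges the relation condition (positive weight implies consistency, which implies the meet is defined), pairs outside the domain of the meet therefore carry zero weight, and the tag constraints are set.
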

\begin{proof}
  $~$
  \begin{enumerate}
    \item  trivial case.
    \item Suppose $\gtya = \phty{\phi[][1]}{\d{\gtys[i][\p[i]]}}[i \in \iSet]$ and
    $\gtyb = \phty{\phi[][2]}{\d{\gtys[j][\p[j]]}}[j \in \jSet]$. \\
    $\sentence{We need to show,} \\
    {\ssum[i] \cww[ij] = \p[j] \;  \ssum[j] \cww[ij] = \p[i] }\\
    \since{
      \gtya \rel \gtyb
    } \\
    \so{
      \ssum[i] \cww[ij] = \p[j] \;  \ssum[j] \cww[ij] = \p[i]
    }$\\
    The result holds.  
  \end{enumerate} 
\end{proof}

\begin{lemma}[Elaboration preserve typing]\label{ep}
  $~$
  \begin{enumerate}
    \item $\ift{ \Gamma |-d \sm : \sgtys }{ \elaborate{ \Gamma |-d \sm }{\tm}{\sgtys} \ad
    \liftE{ \Gamma } |-d \elab[\tm] : \liftD{\sgtys}  }.$
    \item $\ift{ \Gamma |-d \sm : \sgtya }{ \elaborate{ \Gamma |-d \sm }{\tm}{\sgtya} \ad
    \liftE{ \Gamma } |-d \elab[\tm] : \liftD{\sgtya}  }.$
  \end{enumerate}
\end{lemma}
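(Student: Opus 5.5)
The proof proceeds by simultaneous induction on the \glang typing derivations of $\Gamma |-t \sv : \sgtys$ and $\Gamma |-d \sm : \sgtya$, following the elaboration rules of Figure~\ref{fig:elaboration} case by case. In each case the induction hypotheses supply target terms for the subderivations, typed at the lifted types under $\liftE{\Gamma}$. It then suffices to show two things. First, every evidence that the elaboration rule demands is defined. Second, the corresponding \tlang typing rule (Figure~\ref{fig:target-type-system}) applies and yields the lifted type.

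The base cases (E$\ssr$), (E$\ssb$) and (E$\sx$) are direct. For instance, $\rtype \meet \rtype = \rtype$, and $\rtype |- \rtype \rel \rtype$ holds by Lemma~\ref{meet-more-precise}, so rule (G$\mathord{::}\gtys$) applies. For (E$\lambda$), the evidence $\liftT{\sgtys -> \sgtya}$ is reflexively more precise than itself. Well-formedness of the lifted annotation comes from Lemma~\ref{lemma:wellformed-lifting}.

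The ascription cases (E$\mathord{::}\sgtys$) and (E$\mathord{::}\sgtya$), together with the guard ascriptions in (E$+$) and (Eif), follow one pattern. From $\sgtya \rel \sgtyb$, Lemma~\ref{elaborate-prv-consistency} gives $\liftD{\sgtya} \rel \liftD{\sgtyb}$. Lemma~\ref{consistency-defined} then makes $\liftD{\sgtya}\meet\liftD{\sgtyb}$ defined, and Lemma~\ref{meet-more-precise} makes this meet a valid evidence for that judgment. Rule (G$\mathord{::}\gtya$) then gives type $\liftD{\sgtyb}$.

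For (Eapp), the same argument justifies the ascription of $\tw$ to $\liftT{\cdom(\sgtys)}$. The ascription of $\tv$ to $\liftT{\cdom(\sgtys)} -> \liftD{\ccod(\sgtys)}$ is justified by case analysis on $\sgtys$, which is either an arrow or $\?$; in both cases it is consistent with its own domain/codomain arrow. The two nested lets are then typed with (Glet), and the application with (Gapp). The let types reduce to $\liftD{\ccod(\sgtys)}$ because each bound term has a Dirac type of mass $1$.

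The main obstacle is the two probabilistic cases, (E$\oplus$) and (Elet). There the \tlang rules (G$\oplus$)/(Glet) produce the formula type $\phi \land (\cww[1]\cdot\liftD{\sgtya} + \cww[2]\cdot\liftD{\sgtyb})$, respectively $\sum_i \liftP{\pty[i]}[\cw{i}{i}]\cdot\liftD{\sgtya[i]}$. These types are syntactically different from $\liftD{\pty\cdot\sgtya + (1-\pty)\cdot\sgtyb}$, because they carry fresh variables and shifted indices.

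The plan in these cases is as follows. First, Lemmas~\ref{lift-formula-time} and~\ref{lift-formula-sum} show that the computed formula type and the lifting of the source type are equivalent, with the same simple types componentwise and equivalent constraints. Second, Lemma~\ref{lift-formula-defined} shows that the meet inserted by the elaboration rule is defined, and Lemma~\ref{lift-formula-consistency} shows that the two types are consistent. Third, Lemma~\ref{meet-more-precise} turns the meet into valid evidence, so that the outer ascription, typed by (G$\mathord{::}\gtya$), gives exactly $\liftD{\cdot}$ of the source type.

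For (E$\oplus$) one must additionally discharge the side condition $\satisfiablee{\phi => \cww[1]+\cww[2]=1}$. This holds immediately, since $\phi$ contains the conjunct $\cww[1]+\cww[2]=1$ and $\liftP{\pty}$, $\liftP{1-\pty}$ are jointly satisfiable, for example by $\cww[1]=\pty$ or any $[0,1]$ value when $\pty=\?$. Making the equivalence step rigorous, up to renaming of fresh variables and index reshuffling in the $\sum$ metafunction, is where most of the effort will go. I would handle it by treating formula types up to $\alpha$-renaming and reordering, noting that (G$\mathord{::}\gtya$) only needs consistency, not syntactic equality.
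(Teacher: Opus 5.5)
Your case analysis follows the paper's proof closely. It uses the same induction on the source typing derivation, Lemmas~\ref{elaborate-prv-consistency}, \ref{consistency-defined} and~\ref{meet-more-precise} for the ascription-like cases, and Lemmas~\ref{lift-formula-defined} and~\ref{lift-formula-consistency} for (E$\oplus$) and (Elet). You are also more explicit than the paper about the side condition of (G$\oplus$), and about the types produced by (Glet) being only equivalent, not identical, to the lifted ones.

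The step that fails is in (Elet). That rule, like (Glet), uses a \emph{single} body $\tn$ for every $i\in\mathcal{I}$. Your induction hypotheses instead give one elaboration of the body per context $\Gamma, x{:}\sigma_i$. Elaboration inserts evidence that depends on the type of $x$, so these elaborations need not coincide. Concretely, $\mathsf{let}\ x = (1 \oplus_{1/2} \mathsf{true})\ \mathsf{in}\ (x :: \?)$ is well typed at $\ds{\?^{1/2}, \?^{1/2}}$. Under $x{:}\rtype$, rule (E$\mathord{::}\sigma$) elaborates its body to $(\rtype)\,x :: \?$, since $\rtype\meet\? = \rtype$. Under $x{:}\btype$ the body elaborates to $(\btype)\,x :: \?$. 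So no instance of (Elet) exists for this term. Nor could any single body of that shape be typed by (Glet): an evidence $\varepsilon$ justifying both $\rtype\sim\?$ and $\btype\sim\?$ would need $\varepsilon\gprec\rtype$ and $\varepsilon\gprec\btype$.

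Hence the existence half of the statement is false in the let case. Your sentence ``the induction hypotheses supply target terms for the subderivations'' is exactly where this is hidden. The paper's proof does not escape the problem either. It starts every case from a given instance of the elaboration rule, checking only that the meets it contains are defined, so in (Elet) the shared body is assumed rather than constructed. What the paper really establishes is the typing half: if $\sm$ elaborates to $\tm$, then $\tm$ has the lifted type (up to renaming of fresh variables). For that weaker claim your argument goes through as well. To obtain existence, you would have to either restrict the lemma (e.g.\ to lets whose body elaborates identically under every $\sigma_i$), or change (Elet)/(Glet) so that the target let may carry a different body for each outcome type.
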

\begin{proof}
  The proof proceed by induction on the typing derivation of 
  $|-d \sm : \sgtya.$
  \begin{case}[$\sm = \sv $]
    This is the trivial case.
  \end{case}
  \begin{case}[$\sm = \sv \; \sw $]
    $\\$
    $\since{\inference{
      \Gamma |-t \sv : \sgtys  &  
      \Gamma |-t \sw :  \sgtysp &  \sgtysp \rel \dom(\sgtys)
    }
    {\Gamma |-d \sv \; \sw : \cod(\sgtys)}} \\
    \sentence{then} \\
    \since{ \\
    \inference[\src{(Eapp)}]{
      \ela{\Gamma |-t \sv}{\tv}{\sgtys}  &  
      \ela{\Gamma |-t \sw}{\tw}{\sgtysp} &  \sgtysp  \rel \cdom(\sgtys) \\
      \ev[1] = \liftT{\sgtysp} \meet \liftT{ \cdom(\sgtys) }  & \ev[2] = \liftT{\sgtys} \meet \liftT{\cdom(\sgtys)->\ccod(\sgtys)}
    }
    {\ela{\Gamma |-d \sv \; \sw}{ \trg{\lett{ \tx }{\asc{\ev[1] \tw }{ \liftD{\cdom(\sgtys)} } }{ \lett{\ty}{\asc{\ev[2] \tv}{ \liftT{\cdom(\sgtys) -> \cod(\sgtys)} }}{\ty \;\tx} } } }{\ccod(\sgtys)}} 
    } \\
    \since{ \Gamma |-t \sv : \sgtys \\}  
    \since{
    \elaborate{\Gamma |-t \sv}{\tv}{\sgtys} \\ 
   } \\
   \sentence{By the induction hypothesis and Lemma}~\ref{meet-more-precise}: \\
   \so{
    \liftE{ \Gamma } |-t \elab[\tv] : \liftT{\sgtys}, \ev[2] |- \liftT{\sgtys} \rel \dom(\liftT{\sgtys})->\cod(\liftT{\sgtys}) 
   } \\
   \sentence{Similarly} \\
   \since{  \liftE{ \Gamma } |-t \sw :  \sgtysp ,  \sgtysp \rel \dom(\sgtys) } \\ 
   \since{ \elaborate{\Gamma |-t \sw}{\tw}{\sgtysp} ,  \sgtysp \rel \dom(\sgtys) ,
   \ev[1] = \liftT{\sgtysp} \meet \dom(\liftT{\sgtys}) , \ev[2] = \liftT{\sgtys} \meet \dom(\liftT{\sgtys})->\cod(\liftT{\sgtys})
   } \\
   \sentence{By the induction hypothesis and Lemma}~\ref{meet-more-precise} \\
   \so{  \liftE{ \Gamma } |-t \tw :  \liftT{\sgtysp}, \ev[1] |- \liftT{\sgtys} \rel \dom(\liftT{\sgtys}) }\\
   \sentence{By Lemma}~ \ref{meet-more-precise} \\
   \so{ $\\$
    \inference{
      \liftE{ \Gamma } |-d \elab[\asc{\ev[1] w}{\dom(\liftT{\sgtys})}]:  \dom(\liftT{\sgtys}) \\
      \liftE{ \Gamma } |-d \elab[\asc{\ev[2] v}{\dom(\liftT{\sgtys})->\cod(\liftT{\gtys})}]: \dom(\liftT{\sgtys})->\cod(\liftT{\sgtys}) \\
      \liftE{ \Gamma }, \elab[\tx] : \dom(\liftT{\sgtys}) , \elab[y] : \dom(\sgtys)->\cod(\gtys) |-d \elab[ y\;x] : \cod(\sgtys) \\
      \liftE{ \Gamma }, \elab[\tx] : \dom(\liftT{\sgtys}) |-d \elab[ \lett{y}{\asc{\ev[2] v}{\dom(\sgtys)->\cod(\sgtys)}}{y\;x}] : \cod(\sgtys) \\
      }{ \liftE{ \Gamma } |-d \elab[\lett{x}{\asc{\ev[1] \tw}{\dom(\liftT{\sgtys})} }{ \lett{\ty}{\asc{\ev[2] \tv}{\dom(\liftT{\sgtys})->\cod(\liftT{\sgtys})}}{\ty\;\tx}}] :\cod(\liftT{\sgtys})
       }
   }
  $
   \end{case}
   \begin{case}[$\sm = \smp \pssum \srcnp $]
    $\\$
    $\since{
      \inference{
    \Gamma |-d \smp : \sgtya  &  
    \Gamma |-d \srcnp : \sgtyb &  
    }
    {\Gamma |-d \src{{ \smp} \pssum { \srcnp}} : \pty \cdot \sgtya + (1 - \pty) \cdot \sgtyb}
    } \\
    \since{\\
    \inference[\src{(E\oplus)}]{
      \ela{\Gamma |-d \sm}{\tm}{\sgtya} &  
      \ela{\Gamma |-d \srcn}{\tn}{\sgtyb} &
      \evd[1] = \liftD{\sgtya} \meet \liftD{\sgtya} \\
      \evd[2] = \liftD{\sgtyb} \meet \liftD{\sgtyb} &
      \cww[1], \cww[2] \; \text{fresh} &
      \liftP{\pty}[\cww[1]] = \phi[][1] & \liftP{(1-\pty)}[\cww[2]] = \phi[][2] 
      \\ \phi = \phi[][1] \land \phi[][2] \land (\cww[1] + \cww[2] = 1) &
      \evd = \jtf{\phi}{(\cww[1] \cdot \evd[1] + \cww[2] \cdot \evd[2])} \meet \liftD{\pty \cdot \sgtya + (1- \pty) \cdot \sgtyb} 
    }
    {\ela{\Gamma |-d \src{ { \sm} \pssum { \srcn} } }{ \trg{\asc{\evd {\tm} \ppsum[\phi][\cww[1]][\cww[2]] {\tn}}{ \liftD{\pty \cdot \sgtya + (1- \pty) \cdot \sgtyb} } } }{ 
      \pty \cdot \sgtya + (1- \pty) \cdot \sgtyb}}
    } \\
    \sentence{we need to show:}\\
    { \liftE{ \Gamma } |-d \trg{\asc{\evd {\tmp} \ppsum[][\cww[1]][\cww[2]] {\tnp}}{  \liftD{ \pty \cdot \sgtya + (1- \pty) \cdot \sgtyb } } } : \liftD{\pty \cdot \sgtya + (1- \pty) \cdot \sgtyb} }\\
    \sentence{By the induction hypothesis,} \\
    \so{ \liftE{ \Gamma } |-d \tmp : \liftD{\sgtya}  } \\
    \so{ \liftE{ \Gamma } |-d \tnp : \liftD{\sgtyb}  } \\
    \sentence{By Lemma}~\ref{lift-formula-defined}, \\
    \so{ \jtf{\phi}{(\cww[1] \cdot \evd[1] + \cww[2] \cdot \evd[2])} \meet \liftD{\pty \cdot \sgtya + (1- \pty) \cdot \sgtyb} 
    } \sentence{ is defined.} \\
    \sentence{By Lemma}~\ref{lift-formula-consistency}, \\
    \so{
      \jtf{ \phi}{ \cww[1] \cdot \liftD{\sgtya} + \cww[2] \cdot \liftD{\sgtyb} } \rel \liftD{ \pty \cdot \sgtya + (1- \pty) \cdot \sgtyb }
    } \\
    \sentence{By Lemma}~ \ref{meet-more-precise} \\
    \so{ 
      \evd |- \jtf{ \phi}{ \cww[1] \cdot \liftD{\sgtya} + \cww[2] \cdot \liftD{\sgtyb} } \rel \liftD{ \pty \cdot \sgtya + (1- \pty) \cdot \sgtyb }
      }\\
    \so { \liftE{ \Gamma } |-d \trg{\asc{\evd {\tmp} \ppsum[][\cww[1]][\cww[2]] {\tnp}}{  \liftD{ \pty \cdot \sgtya + (1- \pty) \cdot \sgtyb } } } : \liftD{\pty \cdot \sgtya + (1- \pty) \cdot \sgtyb} }$
  \end{case}
  \begin{case}[$\sm = \lett{\sx}{\srcnp}{\smp}$]
    $\\$
    $
    \since{
      \inference[\src{(let)}]{
        \ela{\Gamma |-d \sm}{\tm}{\d{\sgtys[i][\pty[i]]}[i \in \iSet]}  \\
        \forall i\in\iSet.~\ela{\Gamma, \sx : \sgtys[i] |-d \srcn}{\tn}{\sgtya[i]} &
        \cww[i] \; \text{fresh} &  \evd = \sum_{i \in \iSet} \liftP{\pty[i]}[\cww[i]] \cdot \liftD{\sgtya[i]} \meet \liftD{\sum_{i \in \iSet} \pty[i] \cdot \sgtya[i]} 
      }
      {\elaborate{\Gamma |-d \src{\lett{\sx}{\sm}{\srcn}} }{ \trg{ \asc{ \evd \lett{\tx}{\tm}{\tn}}{ \liftD{\sum_{i \in \iSet} \pty[i] \cdot \sgtya[i]} }  } }{\sum_{i \in \iSet} \pty[i] \cdot \sgtya[i]}}
    } \\
    \sentence{We need to show,} \\
    {
      \liftE{ \Gamma } |-d \trg{ \asc{ \evd \lett{\tx}{\tm}{\tn}}{ \liftD{\sum_{i \in \iSet} \pty[i] \cdot \sgtya[i]} } } : \liftD{ \sum_{i \in \iSet} \pty[i] \cdot \sgtya[i] }
    } \\
    \sentence{By the induction hypothesis,} \\
    \so{
      \liftE{ \Gamma } |-d \tm : \liftD{ \d{\sgtys[i][\pty[i]]}[i \in \iSet] } 
     }\\
    \so{ 
      \forall i\in\iSet.~  \liftE{ \Gamma } , \tx : \liftT{ \sgtys[i] } |-d \tn : \liftD{ \sgtya[i] }
    }\\
    \sentence{By Lemma}~\ref{lift-formula-defined}, \\
    \so{
      \sum_{i \in \iSet} \liftP{\pty[i]}[\cww[i]] \cdot \liftD{\sgtya[i]} \meet \liftD{\sum_{i \in \iSet} \pty[i] \cdot \sgtya[i]}
    } \\
    \sentence{By Lemma}~\ref{lift-formula-consistency}, \\
    \so{
      \sum_{i \in \iSet} \liftP{\pty[i]}[\cww[i]] \cdot \liftD{\sgtya[i]} \rel \liftD{\sum_{i \in \iSet} \pty[i] \cdot \sgtya[i]}
    }\\
    \sentence{By Lemma}~ \ref{meet-more-precise} \\
    \so{
      \evd |- \sum_{i \in \iSet} \liftP{\pty[i]}[\cww[i]] \cdot \liftD{\sgtya[i]} \rel \liftD{\sum_{i \in \iSet} \pty[i] \cdot \sgtya[i]}
    }\\
    \so{
      \liftE{ \Gamma } |-d \trg{ \asc{ \evd \lett{\tx}{\tm}{\tn}}{ \liftD{\sum_{i \in \iSet} \pty[i] \cdot \sgtya[i]} } } : \liftD{ \sum_{i \in \iSet} \pty[i] \cdot \sgtya[i] }
     }$
  \end{case}
  \begin{case}[$\sm = \add{\sv}{\sw}$]
    $\\$
    $
    \since{\\
      \inference{
        \Gamma |-t \sv : \sgtys & \sgtys \rel \rtype   &  
        \Gamma |-t \sw : \sgtysp & \sgtysp \rel \rtype 
      }
      {\Gamma |-d \src{\add{\sv}{\sw}} : \ds{\rtype^1} } 
    } \\
    \since{\\
       \inference[\src{(E+)}]{
      \ela{\Gamma |-t \sv}{\tv}{\sgtys}   & \sgtys \rel \rtype   &  
      \ela{\Gamma |-t \sw}{\tw}{\sgtysp} & \sgtysp \rel \rtype \\
      \ev[1] = \liftT{\sgtys} \meet \rtype & \ev[2] = \liftT{\sgtysp} \meet \rtype
    }
    {\ela{\Gamma |-d \src{\add{\sv}{\sw}}}{\trg{  \lett{\tx}{ \asc{\ev[1] \tv }{\rtype} }{ \lett{\ty}{ \asc{\ev[2] \tw }{\rtype} }{\add{\tx}{\ty}}  } }}{\ds{\rtype^1}}  }
     }\\
     \sentence{
      we need to show,
     } \\
     {
      \liftE{\Gamma} |-d \trg{\add{\tv}{\tw}} : \ds{\rtype^1}
     } \\
     \sentence{By the induction hypothesis,}\\
     \so{ \liftE{ \Gamma } |-t \tv : \liftT{\sgtys}  } \\
     \so{ \liftE{ \Gamma } |-t \tw : \liftT{\sgtysp}  } \\
     \since{ \sgtys \rel \rtype } \\
     \so{ \liftT{\sgtys} \rel \rtype } \\
     \sentence{By Lemma}~ \ref{meet-more-precise} \\
     \so{
      \liftE{\Gamma} |-d \trg{  \lett{\tx}{ \asc{\ev[1] \tv }{\rtype} }{ \lett{\ty}{ \asc{\ev[2] \tw }{\rtype} }{\add{\tx}{\ty}}  } } : \ds{\rtype^1}
     } \\$
  \end{case}
  \begin{case}[$\sm =$ if]
    $\\$
    $
    \since{\\
      \inference{
        \Gamma |-t \sv : \sgtys & \sgtys \rel \btype \\
        \Gamma |-d \sm : \sgtya & 
        \Gamma |-d \srcn : \sgtya
      }
      {\Gamma |-d \src{\ite{\sv}{\sm}{\srcn}} : \sgtya  }
    } \\
    \since{\\
    \inference[\src{(Eif)}]{
      \ela{\Gamma |-t \sv}{\tv}{\sgtys} & \sgtys \rel \btype \\
      \ela{\Gamma |-d \sm }{\tm}{\sgtya} & 
      \ela{\Gamma |-d \srcn }{\tn}{\sgtya} & \ev= \liftT{\sgtys} \meet \btype
    }
    { \ela{\Gamma |-d \src{\ite{\sv}{\sm}{\srcn}}}{
      \trg{\lett{\tx}{ \asc{\ev \tv }{\btype}}{ \ite{\tx}{\tm}{\tn}} }
    }{ \sgtya}  }
    } \\
    \sentence{we need to show,} \\
    { 
      \liftE{\Gamma} |-d  \trg{\lett{\tx}{ \asc{\ev \tv }{\btype}}{ \ite{\tx}{\tm}{\tn}} } : \liftT{ \sgtya} 
     }\\
     \sentence{
      By the induction hypothesis,
     } \\
     \so{
      \liftE{\Gamma} |-t \tv : \liftT{ \sgtys }
     } \\
     \so{
        \liftE{\Gamma} |-d \tm : \liftD{ \sgtya}
     } \\
     \so{
      \liftE{\Gamma} |-d \tn : \liftD{ \sgtya}
     } \\
     \sentence{By Lemma}~ \ref{meet-more-precise} \\
     \so{
      \liftE{\Gamma} |-d  \trg{\lett{\tx}{ \asc{\ev \tv }{\btype}}{ \ite{\tx}{\tm}{\tn}} } : \liftT{ \sgtya} 
     }$
  \end{case}
  \begin{case}[$\sm = \asc{\sv}{\sgtysp}$]
    $\\$
    $
    \since{\\
    \inference{\Gamma |-t \sv : \sgtys &  \sgtys \rel  \sgtysp &  \jtf{}{\sgtysp} }
    {\Gamma |-t \src{\asc{\sv}{\sgtysp}} : \ds{\sgtysp[][1]}}
    }\\
    \since{\\
    \inference[\src{(E\mathord{::}\sgtys)}]{\ela{\Gamma |-t \sv}{\tv}{\sgtys} &  \sgtys \rel \sgtysp & \ev = \liftT{\sgtys} \meet \liftT{\sgtysp}  & \jtf{}{\sgtysp} }
    {\ela{\Gamma |-t \asc{\sv}{\sgtysp}}{\asc{\ev \tv}{ \liftT{\sgtysp} }}{\ds{\sgtysp[][1]}}}
    }\\
    \sentence{
      we need to show, 
    }\\
    {
      \liftE{\Gamma} |-t \trg{\asc{\ev \tv}{\liftT{\sgtysp}}} : \liftD{\ds{\sgtysp[][1]}}
    }\\
    \sentence{
    by the induction hypothesis,
    }\\
    \so{
      \liftE{\Gamma} |-t \tv : \liftD{\sgtysp}
    }\\
    \sentence{By Lemma}~\ref{elaborate-prv-consistency},\\
    \so{
      \liftT{\sgtys} \rel \liftT{\sgtysp}
    }\\
    \sentence{By Lemma}~\ref{consistency-defined},\\
    \so{
      \ev = \liftT{\sgtys} \meet \liftT{\sgtysp}
    }\\
    \sentence{By Lemma}~\ref{meet-more-precise}, \\
    \so{
      \ev |-  \liftT{\sgtys} \rel \liftT{\sgtysp}
    }\\
    \so{ 
      \liftE{\Gamma} |-t \trg{\asc{\ev \tv}{\liftT{\sgtysp}}} : \liftD{\ds{\sgtysp[][1]}}
    }
    $
  \end{case}
  \begin{case}[$\sm =  \src{\asc{ \sm}{\sgtyb} }$]
    $\\$
    $
    \since{\\
    \inference{\Gamma |-d \sm : \sgtya & \sgtya \rel  \sgtyb &  & \jtf{}{\sgtyb}
    } 
    {\Gamma |-d \src{\asc{ \sm}{\sgtyb}} : \sgtyb} 
    }\\
    \since{\\
    \inference[\src{(E\mathord{::}\sgtya)}]{\elaborate{\Gamma |-d \sm }{\sm }{\sgtya} &  \sgtya \rel  \sgtyb & \evd = \liftD{\sgtya} \meet \liftD{\sgtyb}  & \jtf{}{\sgtyb}
    }
    {\elaborate{\Gamma |-d \src{\asc{ \sm}{\sgtyb} } }{ \trg{ \asc{\evd \tm}{ \liftD{\sgtyb} } } }{\sgtyb} }
    }\\
    \sentence{
      we need to show, 
    }\\
    {
      \liftE{\Gamma} |-t \trg{ \asc{\evd \tm}{ \liftD{\sgtyb} } } : \liftD{\sgtyb}
    }\\
    \sentence{
    by the induction hypothesis,
    }\\
    \so{
      \liftE{\Gamma} |-t \tm : \liftD{\sgtya}
    }\\
    \sentence{By Lemma}~\ref{elaborate-prv-consistency},\\
    \so{
      \liftT{\sgtya} \rel \liftT{\sgtyb}
    }\\
    \sentence{By Lemma}~\ref{consistency-defined},\\
    \so{
      \evd = \liftT{\sgtya} \meet \liftT{\sgtyb}
    }\\
    \sentence{By Lemma}~\ref{meet-more-precise}, \\
    \so{
      \ev |-  \liftT{\sgtya} \rel \liftT{\sgtyb}
    }\\
    \so{ 
      \liftE{\Gamma} |-t \trg{ \asc{\evd \tm}{ \liftD{\sgtyb} } } : \liftD{\sgtyb}
    }
    $
  \end{case}
  The result holds. 
\end{proof}

\begin{lemma}[Consistency precision]\label{conp}
  $\\$
  \begin{enumerate}
    \item  $\ift{\gtys[1]  \rel \gtysp[1], \gtys[1] \gprec \gtys[2]
    \ad \gtysp[1] \gprec \gtysp[2]
    }{ \gtys[2] \rel \gtysp[2] }.$
    \item  $\ift{\gtya[1]  \rel \gtyb[1], \gtya[1] \gprec \gtya[2]
    \ad \gtyb[1] \gprec \gtyb[2]
    }{ \gtya[2] \rel \gtyb[2] }.$
  \end{enumerate}
\end{lemma}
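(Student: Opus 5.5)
We prove both items simultaneously by induction on the derivations of $\gtys[1] \gprec \gtys[2]$ and $\gtysp[1] \gprec \gtysp[2]$ (equivalently, on the structure of $\gtys[2]$ and $\gtysp[2]$), with an inner case analysis on the derivation of $\gtys[1] \rel \gtysp[1]$. The simple-type case is routine. If either $\gtys[2]$ or $\gtysp[2]$ is $\?$, the consistency $\gtys[2] \rel \gtysp[2]$ holds immediately by the rules $\gtys \sim \?$ and $\? \sim \gtys$.

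Otherwise neither is $\?$. Precision into a non-$\?$ type forces the same top-level constructor, so $\gtys[1]$ and $\gtysp[1]$ are not $\?$ either. Then $\gtys[1] \rel \gtysp[1]$ must come from the base rules or from the arrow rule. For base types, precision gives $\gtys[2] = \gtys[1]$ and $\gtysp[2] = \gtysp[1]$, and we are done. For arrows, we apply the induction hypothesis to the domains (item 1) and to the codomains (item 2), then recombine with the arrow rule.

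The real content is item 2. Write $\gtya[1] = \phty{\phi[][1]}{\d{\gtys[i][\p[i]]}[i \in \iSet]}$, $\gtyb[1] = \phty{\phi[][1']}{\d{\gtysp[j][\p[j]]}[j \in \jSet]}$, $\gtya[2] = \phty{\phi[][2]}{\d{\gtys[h][\p[h]]}[h \in \mathcal{H}]}$ and $\gtyb[2] = \phty{\phi[][2']}{\d{\gtysp[l][\p[l]]}[l \in \lSet]}$. The three hypotheses supply the following.
\begin{itemize}
\item From $\gtya[1] \rel \gtyb[1]$: a satisfying assignment $\sigma$ of $\phi[][1] \land \phi[][1']$ together with a coupling $\cww[ij]$ whose positive entries relate $\gtys[i] \rel \gtysp[j]$.
\item From $\gtya[1] \gprec \gtya[2]$, instantiated at $\sigma$ (the $\forall\exists$ form of the precision rule allows this, since $\sigma$ satisfies $\phi[][1]$): an extension satisfying $\phi[][2]$ and a coupling $\cww[ih]$ whose positive entries relate $\gtys[i] \gprec \gtys[h]$.
\item From $\gtyb[1] \gprec \gtyb[2]$, likewise: an extension satisfying $\phi[][2']$ and a coupling $\cww[jl]$ whose positive entries relate $\gtysp[j] \gprec \gtysp[l]$.
\end{itemize}
Since the free variables of the four types are disjoint (fresh), these assignments combine into one assignment satisfying $\phi[][2] \land \phi[][2']$.

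Next we compose the three couplings into a coupling between $\gtya[2]$ and $\gtyb[2]$, by the same gluing construction used in Lemma~\ref{srtrans} and Lemma~\ref{static-composition-defined}:
\[
\cww[hl] \;=\; \sum_{i \in \iSet}\sum_{j \in \jSet} \frac{\cww[ih]\,\cww[ij]\,\cww[jl]}{\p[i]\,\p[j]},
\]
with terms where $\p[i] = 0$ or $\p[j] = 0$ taken as $0$. The marginal checks reduce to the three coupling equations. Summing over $l$ uses $\sum_l \cww[jl] = \p[j]$, then $\sum_j \cww[ij] = \p[i]$, then $\sum_i \cww[ih] = \p[h]$; summing over $h$ is symmetric. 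For the relational condition, if $\cww[hl] > 0$ then some $i, j$ have all three weights positive. That gives $\gtys[i] \gprec \gtys[h]$, $\gtys[i] \rel \gtysp[j]$ and $\gtysp[j] \gprec \gtysp[l]$, so the item-1 induction hypothesis (on the strictly smaller simple types) yields $\gtys[h] \rel \gtysp[l]$. This establishes $\couplingLift{\gtya[2]}{\gtyb[2]}{\sim}$. For gradual distribution types, the corresponding case follows by unfolding through the lifting $\liftD{\cdot}$.

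The main obstacle is this composition step. The division by $\p[i]$ and $\p[j]$ must be handled where those probabilities vanish, which is why zero-mass indices are dropped before composing. More importantly, the extensions from the two precision hypotheses must be chosen over the single fixed witness $\sigma$ from consistency, so that all marginals refer to the same numeric values. I would make this explicit by first fixing $\sigma$ and only then invoking the $\forall\exists$ quantifier structure of the precision rule.
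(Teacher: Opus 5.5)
Your proposal follows the same route as the paper: a routine constructor-by-constructor argument for simple types and, for distribution types, composing the consistency coupling with the two precision couplings to obtain a coupling between $\gtya[2]$ and $\gtyb[2]$. Your normalized gluing $\sum_{i,j}\cww[ih]\,\cww[ij]\,\cww[jl]/(\p[i]\,\p[j])$ is what makes the marginal check work; the paper's unnormalized product $\sum_{i,j}\cww[ij]\,\cww[ii']\,\cww[jj']$ does not have the required marginals. Your proposal also makes explicit several steps the paper leaves unstated: checking the relational condition via item~1, aligning all couplings over a single witness assignment, and assuming fresh variables.
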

\begin{proof}
  $\\$
  \begin{itemize}
    \item
    (non-distribution types) By definition of consistency
    and the definition of precision.
    \item
    (distribution types) 
    Suppose $ \gtya[1] = \phty{\phi[][i]}{ \d{\gtys[i][\p[i]]}[i \in \iSet] }, 
    \gtya[2] = \phty{\phi[][i']}{ \d{\ggtys[i'][\p[i']]}[i' \in \iSet] }, 
    \gtyb[1] = \phty{\phi[][j]}{ \d{\gtysp[j][\p[j]]}[j \in \jSet] }\ad
    \gtyb[2] = \phty{\phi[][j']}{\d{\ggtysp[j'][\p[j']]}[j' \in \jSet] }. $ \\
    $
    \since{
      \gtya[1]  \rel \gtyb[1]
    } \\
    \so{
      \ssum[i] \cww[ij] = \p[j] \; \ssum[j] \cww[ij] = \p[i]
    }\\
    \since{
      \gtya[1] \gprec \gtya[2]
    }\\
    \so{
      \ssum[i] \cww[ii'] = \p[i'] \; \ssum[i'] \cww[ii'] = \p[i]
    }\\
    \since{
      \gtyb[1] \gprec \gtyb[2]
    }\\
    \so{
      \ssum[j] \cww[jj'] = \p[j'] \; \ssum[j'] \cww[jj'] = \p[j]
    }\\
    \sentence{
    we need to show that,
    }\\
    {
      \ssum[i'] \cww[i'j'] = \p[j'] \; \ssum[j'] \cww[i'j'] = \p[i']
    }\\
    \sentence{Suppose}~{\cww[i'j'] = \ssum[i] \ssum[j] \cww[ij] \cdot \cww[ii'] \cdot \cww[jj']  } \\
    \so{
      \ssum[i'] \cww[i'j']
    }\\
    \eq{
      \ssum[i'] \ssum[i] \ssum[j] \cww[ij] \cdot \cww[ii'] \cdot \cww[jj'] 
    }\\
    \eq{
      \ssum[i'] \ssum[i] \p[i] \cdot \cww[ii'] \cdot \p[j']
    }\\
    \eq{
      \ssum[i']  \p[i'] \cdot \p[j']
    }\\
    \eq{
       \p[j']
    }\\
    \so{
      \ssum[j'] \cww[i'j']
    }\\
    \eq{
      \ssum[j'] \ssum[i] \ssum[j] \cww[ij] \cdot \cww[ii'] \cdot \cww[jj'] 
    }\\
    \eq{
      \ssum[j'] \ssum[i] \p[i] \cdot \cww[ii'] \cdot \p[j']
    }\\
    \eq{
      \ssum[j']  \p[i'] \cdot \p[j']
    }\\
    \eq{
       \p[i']
    }\\
    $
    The result holds.
  \end{itemize}
\end{proof}

\begin{lemma}[Source Consistency precision]\label{sconp}
  $\\$
  \begin{enumerate}
    \item  $\ift{\sgtys[1]  \rel \sgtysp[1], \sgtys[1] \gprec \sgtys[2]
    \ad \sgtysp[1] \gprec \sgtysp[2]
    }{ \sgtys[2] \rel \sgtysp[2] }.$
    \item  $\ift{\sgtya[1]  \rel \sgtyb[1], \sgtya[1] \gprec \sgtya[2]
    \ad \sgtyb[1] \gprec \sgtyb[2]
    }{ \sgtya[2] \rel \sgtyb[2] }.$
  \end{enumerate}
\end{lemma}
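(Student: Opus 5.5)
The plan is to reduce the source-level statement to its formula-type counterpart, Lemma~\ref{conp}, via the liftings $\liftT{\cdot}$ and $\liftD{\cdot}$. The source consistency and precision relations on gradual distribution types are defined through these liftings: $\sgtya \rel \sgtyb$ iff $\liftD{\sgtya} \rel \liftD{\sgtyb}$, and $\sgtya \gprec \sgtyb$ iff $\liftD{\sgtya} \gprec \liftD{\sgtyb}$. So both parts should follow by transporting all hypotheses to formula types, applying Lemma~\ref{conp}, and transporting the conclusion back. Because simple and distribution types are mutually recursive (function codomains are distribution types), I would prove (1) and (2) simultaneously by induction on the structure of $\sgtys[1]$ and $\sgtya[1]$.

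\textbf{Part (1), simple types.} I would case on the derivation of $\sgtys[1] \rel \sgtysp[1]$.
\begin{itemize}
\item If either side is $\?$, say $\sgtys[1] = \?$: precision forces $\sgtys[2] = \?$, since $\?$ is only less precise than itself. Then $\sgtys[2] \rel \sgtysp[2]$ holds by the axiom $\? \sim \sgtys$. The case $\sgtysp[1] = \?$ is symmetric.
\item The base cases $\rtype$ and $\btype$ are immediate. Precision leaves each either unchanged or turns it into $\?$, and every resulting pair is consistent.
\item For function types $\sgtys[1] = \sgtys -> \sgtya$ and $\sgtysp[1] = \sgtysp -> \sgtyb$, precision yields either $\?$ (handled as above) or $\sgtys[2] = \sgtys' -> \sgtya'$ with componentwise precision, and similarly on the other side. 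I would apply the induction hypothesis to the domains via (1) and to the codomains via (2), then close with the function rule of consistency.
\end{itemize}

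\textbf{Part (2), distribution types.} Unfolding the definitions gives $\liftD{\sgtya[1]} \rel \liftD{\sgtyb[1]}$, $\liftD{\sgtya[1]} \gprec \liftD{\sgtya[2]}$ and $\liftD{\sgtyb[1]} \gprec \liftD{\sgtyb[2]}$. Lemma~\ref{conp}(2) then gives $\liftD{\sgtya[2]} \rel \liftD{\sgtyb[2]}$, which is by definition $\sgtya[2] \rel \sgtyb[2]$. Lemma~\ref{conp} itself handles the nested simple types through its own use of simple-type consistency on formula types.

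To make this transport legitimate I need two small facts.
\begin{itemize}
\item \emph{Lifting preserves and reflects the simple-type relations}: for simple types, $\sgtys \rel \sgtysp$ iff $\liftT{\sgtys} \rel \liftT{\sgtysp}$, and likewise for $\gprec$. This follows by a straightforward structural induction, since $\liftT{\cdot}$ is homomorphic and only replaces codomain distribution types by their lifts. The forward direction for consistency is Lemma~\ref{elaborate-prv-consistency}.
\item \emph{Freshness of variables}: the free tagged variables of the different lifted types must be disjoint, so that the universally and existentially quantified formulas in the precision and consistency judgments do not clash.
\end{itemize}

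\textbf{Main obstacle.} The delicate point is that the fresh-variable choice inside $\liftD{\cdot}$ does not affect the relations. I would handle this by noting that both consistency and precision on formula distribution types are invariant under consistent renaming of the tagged variables. All variables are bound either by the existential in $\couplingLiftName$ or by the $\forall/\exists$ in the precision rule, so the liftings used in different hypotheses can be renamed apart without changing truth values, and Lemma~\ref{conp} can then be applied directly.
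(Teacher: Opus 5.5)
Your proposal is correct and takes the paper's route: the paper's entire proof is that the result follows from Lemma~\ref{conp}, which is exactly your transport of the hypotheses through $\liftT{\cdot}$ and $\liftD{\cdot}$. Your direct structural induction for simple types, and your point that the lifted types must be renamed apart, spell out details the paper leaves implicit. The renaming matters because it lets the consistency witness for $\liftD{\sgtya[1]}$ instantiate the universal quantifier in the precision judgment, and lets the values for $\liftD{\sgtya[2]}$ and $\liftD{\sgtyb[2]}$ be chosen independently.
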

\begin{proof}
 The proof follows by Lemma~\ref{conp}.
\end{proof}

\begin{lemma}[Environment precision]\label{enp}
  $\ift{\Gamma |-d \sm : \sgtya \ad \Gamma \gprec \Gamma' 
   }{ \Gamma' |-d \sm : \sgtyb, \text{for some} ~ \sgtya \gprec \sgtyb }.$
\end{lemma}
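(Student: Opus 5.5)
The proof goes by induction on the typing derivation of $\Gamma |-d \sm : \sgtya$, proving simultaneously the analogous statement for values: if $\Gamma |-t \sv : \sgtys$ and $\Gamma \gprec \Gamma'$, then $\Gamma' |-t \sv : \sgtysp$ for some $\sgtys \gprec \sgtysp$. Each case proceeds the same way. We apply the induction hypothesis to every premise to obtain less precise types under $\Gamma'$. We then show that the side conditions of the rule still hold for these types, and that the resulting type is less precise than $\sgtya$.

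\textbf{Base and structural cases.} For variables, $\Gamma(\sx) = \sgtys$ and $\Gamma'(\sx) = \sgtysp$ with $\sgtys \gprec \sgtysp$ follow directly from the inductive definition of environment precision. Constants are typed identically in both environments, and precision is reflexive on types.

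For $\lambda$-abstractions, we extend both environments with the same annotation $\sx : \sgtys$, which preserves $\Gamma, \sx:\sgtys \gprec \Gamma', \sx:\sgtys$. The induction hypothesis then gives a body type $\sgtyb \sqsupseteq \sgtya$, so $\sgtys -> \sgtya \gprec \sgtys -> \sgtyb$ by the arrow rule. The value-to-distribution rule follows from the lifting of precision for singleton distributions.

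\textbf{Cases with consistency side conditions.} These are application, value and term ascription, addition and conditionals. Here we use Lemma~\ref{sconp}, which states that consistency is preserved when both sides are made less precise. For ascription, the annotated type is unchanged, so the result type is identical.

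For application, we also need that $\cdom$ and $\ccod$ are monotone with respect to precision, including the case where the less precise function type becomes $\?$. In that case $\cdom(\?) = \?$ and $\ccod(\?) = \ds{\?^\?}$, and every distribution type is below $\ds{\?^\?}$.

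The conditional rule requires both branches to have the same type $\sgtya$. The induction hypothesis, however, may yield two different types $\sgtyb_1, \sgtyb_2 \sqsupseteq \sgtya$. I would handle this as follows. Since the term itself does not change, I expect the induction to produce a canonical type. We could strengthen the statement to pick that type deterministically, or alternatively pick a common upper bound. If this fails, the fallback is to note that the source typing is syntax-directed modulo the value rule, so both branches are typed from the same subderivation shape.

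\textbf{Probabilistic choice and let (main obstacle).} Here the result types are built from the scaling $\pty \cdot \sgtya$ and the sum of distribution types. We need monotonicity lemmas for these: if $\sgtya_i \gprec \sgtyb_i$, then $\pty \cdot \sgtya \gprec \pty \cdot \sgtyb$ and $\sum_i \pty[i] \cdot \sgtya[i] \gprec \sum_i \ptyp[i] \cdot \sgtyb[i]$.

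For let, the bound term's type $\d{\sgtys[i][\pty[i]]}$ may become a less precise distribution type with a different index set. The body must then be retyped under each new component type $\sgtysp[j]$. To handle this, I would take the coupling witnessing the precision between the two distribution types. For each pair $(i,j)$ with positive weight we have $\sgtys[i] \gprec \sgtysp[j]$, so the induction hypothesis applies under $\Gamma', \sx:\sgtysp[j]$.

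I would then prove the sum lemma by composing couplings, following the constructions in Lemma~\ref{conp} and the logical composition lemma. The witness for the resulting precision is obtained by scaling each inner coupling by the outer weight $\cww[ij]$, under the formula quantifier structure of the distribution-precision rule. Getting this $\forall/\exists$ structure right over the lifted formulas (Lemmas~\ref{lift-formula-time} and \ref{lift-formula-sum}) is the delicate step.
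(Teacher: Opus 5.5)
Your overall route is the one the paper takes: induction on the typing derivation, done jointly for values, applying the induction hypothesis to the premises and re-establishing the consistency side conditions with Lemma~\ref{sconp}. Your variable, constant, abstraction, ascription, application and addition cases are fine.

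The gap is in the cases you flag yourself, and none of your remedies closes it. For the conditional, typing is already syntax-directed, so making the type ``deterministic'' changes nothing. There is no subsumption rule that would let you move both branches to a common upper bound. And the two branches are different terms, so a shared derivation shape says nothing about their types. In fact this case is false as the rules stand. Take $\Gamma = x{:}\mathsf{Real} \sqsubseteq \Gamma' = x{:}?$ and the term $\mathsf{if}\;\mathsf{true}\;\mathsf{then}\;x\;\mathsf{else}\;1$. Under $\Gamma$ it has type $\{\!\{\mathsf{Real}^{1}\}\!\}$. Under $\Gamma'$ its branches get $\{\!\{?^{1}\}\!\}$ and $\{\!\{\mathsf{Real}^{1}\}\!\}$, and the if rule requires identical branch types, so the term is untypable. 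Placed under a binder $\lambda x{:}\mathsf{Real}$ versus $\lambda x{:}?$, the same example also contradicts the static gradual guarantee.

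The scaling and sum monotonicity you plan to use for $\oplus$ and let also fails, for a reason your coupling composition does not account for. The lifted product gives $p\cdot ? = ?$, so the lifting of $p\cdot\mu + (1-p)\cdot\nu$ no longer records that the two blocks carry masses $p$ and $1-p$. Consequently the universally quantified concretization on the left need not split along any outer weight $\omega_{ij}$ by which you could scale the inner couplings.

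Here is a concrete instance.
\begin{itemize}
\item Let $\Gamma$ map $z$ to $\mathsf{Real}\to\{\!\{\mathsf{Real}^{?},\mathsf{Real}^{?}\}\!\}$ and $w$ to $\mathsf{Real}\to\{\!\{\mathsf{Bool}^{?},\mathsf{Bool}^{?}\}\!\}$.
\item Let $\Gamma'$ use the codomains $\{\!\{\mathsf{Real}^{1}\}\!\}$ and $\{\!\{\mathsf{Bool}^{1}\}\!\}$ instead.
\item $\Gamma \sqsubseteq \Gamma'$ holds by the coupling rule, since every admissible split of mass $1$ between the two $\mathsf{Real}$ entries couples with $\mathsf{Real}^{1}$.
\item The term $(z\,1)\oplus_{1/2}(w\,1)$ has type $\{\!\{\mathsf{Real}^{?},\mathsf{Real}^{?},\mathsf{Bool}^{?},\mathsf{Bool}^{?}\}\!\}$ under $\Gamma$ and $\{\!\{\mathsf{Real}^{1/2},\mathsf{Bool}^{1/2}\}\!\}$ under $\Gamma'$.
\item The concretization $(1,0,0,0)$ of the former admits no $\sqsubseteq$-coupling with the latter, because it would have to send mass $1/2$ from $\mathsf{Real}$ to $\mathsf{Bool}$.
\end{itemize}
The term $\mathsf{let}\;x = 1\oplus_{1/2}\mathsf{true}\;\mathsf{in}\;(\mathsf{let}\;y = z\,1\;\mathsf{in}\;x)$ fails in the same way.

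The paper's proof has exactly these holes. It concludes the conditional case from two possibly different branch types. In the $\oplus$ and let cases it simply asserts precision of the scaled sum, and for let it also assumes the bound term keeps the same index set and probabilities. These cases therefore need a change to the rules, not a more careful proof. For example, the if rule could combine branch types via consistency and meet, and scaling could keep block masses in the formula.
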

\begin{proof}
  By induction on typing derivations.
  % The proof follows the type precision and induction hypothesis.  
  \begin{case}[$\ssb, \ssr$]
    trivial cases. 
  \end{case}
  \begin{case}[$\src{\lambda x:\sgtys. \sm}$]
   $\\
   \since{\\
    \inference{\Gamma, \sx:\sgtys |-t \sm : \sgtya & \jtf{}{\sgtys}}
    {\Gamma |-t \src{\lambda x:\sgtys. \sm} : \sgtys -> \sgtya}
   }\\
   \sentence{we need to show, }\\
   {
    \ift{\Gamma \gprec \Gamma' }{\Gamma' |-t \src{\lambda x:\sgtys. \sm} :
    \sgtyap \ad \sgtyap \gprec \sgtyb.}
   }\\
   \sentence{By the induction hypothesis,} \\
   \so{
    \Gamma', \sx:\sgtys |-t \sm : \sgtyap \ad \sgtyap \gprec \sgtya
   } \\
   \so{
    \ift{\Gamma \gprec \Gamma' }{\Gamma' |-t \src{\lambda x:\sgtys. \sm} :
    \sgtyap \ad \sgtyap \gprec \sgtyb.}
   }
   $
  \end{case}
  \begin{case}[$\src{\asc{\sv}{\sgtysp}}$]
    $\\
    \since{\\
      \inference{\Gamma |-t \sv : \sgtys &  \sgtys \rel  \sgtysp &  \jtf{}{\sgtysp} }
     {\Gamma |-t \src{\asc{\sv}{\sgtysp}} : \ds{\sgtysp[][1]}} 
    }\\
    \sentence{we need to show, }\\
    {
     \ift{\Gamma \gprec \Gamma' }{\Gamma' |-t \src{\asc{\sv}{\sgtysp}} :
     \sgtysp \ad \sgtysp \gprec \sgtysp.}
    }\\
    \sentence{By the induction hypothesis,} \\
    \so{
     \Gamma' |-t \sv : \sggtys \ad \sggtys \gprec \sgtys
    }\\
    \since{
      \sgtys \rel  \sgtysp
    } \\
    \sentence{By Lemma}~\ref{sconp}, \\
    \so{
      \sggtys \rel  \sgtysp
    } \\
    \so{
      \ift{\Gamma \gprec \Gamma' }{\Gamma' |-t \src{\asc{\sv}{\sgtysp}} :
     \sgtysp \ad \sgtysp \gprec \sgtysp.}
    }$
   \end{case}
   \begin{case}[$\sv \; \sw$]
    $\\
    \since{\\
        \inference{
          \Gamma |-t \sv : \sgtys  &  
          \Gamma |-t \sw :  \sgtysp &   \sgtysp \rel  \cdom(\sgtys)
        }
        {\Gamma |-d \sv \; \sw : \ccod(\sgtys)}
    }\\
    \sentence{we need to show, }\\
    {
     \ift{\Gamma \gprec \Gamma' }{\Gamma' |-t \sv \; \sw :
     \ccod(\sggtys) \ad \ccod(\sggtys) \gprec \ccod(\sgtys).}
    }\\
    \sentence{By the induction hypothesis,} \\
    \so{
     \Gamma' |-t \sv : \sggtys \ad \sggtys \gprec \sgtys
    }\\
    \so{
     \Gamma' |-t \sw : \sggtysp \ad \sggtysp \gprec \sgtysp
    }\\
    \since{
      \sgtysp \rel  \cdom(\sgtys)
    } \\
    \sentence{By Lemma}~\ref{sconp}, \\
    \so{
      \sggtysp \rel  \cdom(\sggtys)
    } \\
    \so{
      \ift{\Gamma \gprec \Gamma' }{\Gamma' |-t \sv \; \sw :
      \ccod(\sggtys) \ad \ccod(\sggtys) \gprec \ccod(\sgtys).}
    }$
   \end{case}
   \begin{case}[$\src{{ \sm} \pssum { \srcn}}$]
    $\\
    \since{\\
        \inference{
          \Gamma |-d \sm : \sgtya  &  
          \Gamma |-d \srcn : \sgtyb &  
        }
        {\Gamma |-d \src{{ \sm} \pssum { \srcn}} : \pty \cdot \sgtya + (1 - \pty) \cdot \sgtyb}
    }\\
    \sentence{we need to show, }\\
    {
     \ift{\Gamma \gprec \Gamma' }{\Gamma' |-t \src{{ \sm} \pssum { \srcn}} :
     \pty \cdot \sgtyap + (1 - \pty) \cdot \sgtybp \ad 
     \pty \cdot \sgtyap + (1 - \pty) \cdot \sgtybp \gprec \pty \cdot \sgtya + (1 - \pty) \cdot \sgtyb.}
    }\\
    \sentence{By the induction hypothesis,} \\
    \so{
     \Gamma' |-t \sm : \sgtyap \ad \sgtyap \gprec \sgtya
    }\\
    \so{
     \Gamma' |-t \srcn : \sgtybp \ad \sgtybp \gprec \sgtyb
    }\\
    \so{
      \ift{\Gamma \gprec \Gamma' }{\Gamma' |-t \src{{ \sm} \pssum { \srcn}} :
     \pty \cdot \sgtyap + (1 - \pty) \cdot \sgtybp \ad 
     \pty \cdot \sgtyap + (1 - \pty) \cdot \sgtybp \gprec \pty \cdot \sgtya + (1 - \pty) \cdot \sgtyb.}
    }$
   \end{case}
   \begin{case}[$\src{\lett{ \sx }{\sm}{\srcn}} $]
    $\\
    \since{\\
    \inference{
      \Gamma |-d \sm : \d{\sgtys[i][\pty[i]]}[i \in \iSet]  \\
      \forall i\in\iSet.~\Gamma, \sx : \sgtys[i] |-d \srcn : \sgtya[i]
    }
    {\Gamma |-d \src{\lett{ \sx }{\sm}{\srcn}} : \sum_{i \in \iSet} \pty[i] \cdot \sgtya[i]} 
    }\\
    \sentence{we need to show, }\\
    {
     \ift{\Gamma \gprec \Gamma' }{\Gamma' |-t \src{\lett{ \sx }{\sm}{\srcn}} :
     \sum_{i \in \iSet} \pty[i] \cdot \sgtyap[i] \ad 
     \sum_{i \in \iSet} \pty[i] \cdot \sgtyap[i] \gprec \sum_{i \in \iSet} \pty[i] \cdot \sgtya[i].}
    }\\
    \sentence{By the induction hypothesis,} \\
    \so{
     \Gamma' |-t \sm : \d{\sggtys[i][\pty[i]]}[i \in \iSet]  \ad \d{\sggtys[i][\pty[i]]}[i \in \iSet]  \gprec \d{\sgtys[i][\pty[i]]}[i \in \iSet] 
    }\\
    \so{
      \forall i\in\iSet.~\Gamma, \sx : \sggtys[i] |-t \srcn : \sgtyap[i] \ad \sgtyap[i] \gprec \sgtya[i]
    }\\
    \so{
      \ift{\Gamma \gprec \Gamma' }{\Gamma' |-t \src{\lett{ \sx }{\sm}{\srcn}} :
     \sum_{i \in \iSet} \pty[i] \cdot \sgtyap[i] \ad 
     \sum_{i \in \iSet} \pty[i] \cdot \sgtyap[i] \gprec \sum_{i \in \iSet} \pty[i] \cdot \sgtya[i].}
    }$
   \end{case}
   \begin{case}[$\asc{ \sm}{\sgtyb}$]
    $\\
    \since{\\
     \inference{\Gamma |-d \sm : \sgtya & \sgtya \rel  \sgtyb &  & \jtf{}{\sgtyb}
        } 
        {\Gamma |-d \asc{ \sm}{\sgtyb} : \sgtyb} 
    }\\
    \sentence{we need to show, }\\
    {
     \ift{\Gamma \gprec \Gamma' }{\Gamma' |-t \src{\asc{ \sm}{\sgtyb}} :
     \sgtybp \ad \sgtybp \gprec \sgtyb.}
    }\\
    \sentence{By the induction hypothesis,} \\
    \so{
     \Gamma' |-t \sm : \sgtyap \ad \sgtyap \gprec \sgtya
    }\\
    \since{
      \sgtya \rel  \sgtyb
    } \\
    \sentence{By Lemma}~\ref{sconp}, \\
    \so{
      \sgtyap \rel  \sgtybp
    } \\
    \so{
      \ift{\Gamma \gprec \Gamma' }{\Gamma' |-t \src{\asc{ \sm}{\sgtyb}} :
     \sgtybp \ad \sgtybp \gprec \sgtyb.}
    }$
   \end{case}
   \begin{case}[$\src{\add{\sv}{\sw}}$]
    $\\
    \since{\\
    \inference{
      \Gamma |-t \sv : \sgtys & \sgtys \rel \rtype   &  
      \Gamma |-t \sw : \sgtysp & \sgtysp \rel \rtype 
    }
    {\Gamma |-d \src{\add{\sv}{\sw}} : \ds{\rtype^1} } 
    }\\
    \sentence{we need to show, }\\
    {
     \ift{\Gamma \gprec \Gamma' }{\Gamma' |-t \src{\add{\sv}{\sw}} :
     \ds{\rtype^1} \ad \ds{\rtype^1} \gprec \ds{\rtype^1}.}
    }\\
    $
    The result holds. 
   \end{case}
   \begin{case}[if]
    $\\
    \since{\\
    \inference{
      \Gamma |-t \sv : \sgtys & \sgtys \rel \btype \\
      \Gamma |-d \sm : \sgtya & 
      \Gamma |-d \srcn : \sgtya
    }
    {\Gamma |-d \src{\ite{\sv}{\sm}{\srcn}} : \sgtya }
    }\\
    \sentence{we need to show, }\\
    {
     \ift{\Gamma \gprec \Gamma' }{\Gamma' |-t \src{\ite{\sv}{\sm}{\srcn}} :
     \sgtyap \ad \sgtyap  \gprec \sgtya.}
    }\\
    \sentence{By the induction hypothesis,} \\
    \so{
     \Gamma' |-t \sv : \sggtys \ad \sggtys \gprec \sgtys
    }\\
    \so{
     \Gamma' |-t \sm : \sgtyap \ad \sgtyap \gprec \sgtya
    }\\
    \so{
     \Gamma' |-t \srcn : \sgtybp \ad \sgtybp \gprec \sgtyb
    }\\
    \since{
      \sgtys \rel \btype
    } \\
    \sentence{By Lemma}~\ref{sconp}, \\
    \so{
      \sggtys \rel \btype
    } \\
    % \sentence{By Lemma}~\ref{smono}, \\
    \so{
      \sgtyap \gprec \sgtya 
    }\\
    \so{
      \ift{\Gamma \gprec \Gamma' }{\Gamma' |-t \src{\ite{\sv}{\sm}{\srcn}} :
      \sgtyap \ad \sgtyap  \gprec \sgtya .}
    }$
   \end{case}
\end{proof}

\begin{lemma}[Static gradual guarantee]\label{sgg}
  $\ift{ |-d \sm : \sgtya,  \ad  \sm \gprec \srcn 
   }{ |-d \sm : \sgtyb, \text{for some} ~ \sgtyb \\
   \text{such that}~\sgtya \gprec \sgtyb }.$
\end{lemma}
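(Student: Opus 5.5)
The plan is to prove a strengthened, open-context version of the statement by simultaneous induction on the typing derivation of $\sm$ (and of values $\sv$). Concretely: if $\Gamma |-d \sm : \sgtya$, $\sm \gprec \srcn$ and $\Gamma \gprec \Gamma'$, then $\Gamma' |-d \srcn : \sgtyb$ for some $\sgtyb$ with $\sgtya \gprec \sgtyb$, and analogously for values and simple types. The closed statement is then the instance $\Gamma = \Gamma' = \cdot$. Since the term-precision rules of Figure~\ref{source-term-precision2} are syntax-directed, inverting $\sm \gprec \srcn$ in each case fixes the shape of $\srcn$. The context generalization is needed for the binders, $\lambda$ and $\<let>$, and it mirrors Lemma~\ref{enp}.

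Most cases are routine.
\begin{itemize}
\item Constants and variables are immediate; for variables, $\Gamma \gprec \Gamma'$ gives the required precision directly.
\item For $\lambda$, extend the contexts with $\sgtys \gprec \sgtysp$ and use the function rule of type precision.
\item For ascriptions $\asc{\sm}{\sgtya}$ and $\asc{\sv}{\sgtys}$, the inductive hypothesis gives a less precise type for the body, and the ascribed type is itself less precise. Lemma~\ref{sconp} (consistency is preserved upward along precision) then re-establishes the consistency premise. Well-formedness of the less precise annotation should follow from well-formedness being upward closed, since $\liftP{\?}$ is weaker than $\liftP{\ps}$.
\item For application, add, and if, use Lemma~\ref{sconp} on the premises $\sgtysp \rel \cdom(\sgtys)$, $\sgtys \rel \rtype$ and $\sgtys \rel \btype$. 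For application we also need $\cdom$ and $\ccod$ to be monotone and defined on the less precise type. This holds because a type less precise than an arrow is either an arrow or $\?$, and $\ccod(\?) = \ds{\?^\?}$ is above any distribution type.
\item For if, the two branches may get different types $\sgtyb_1,\sgtyb_2$ on the right even though they share $\sgtya$ on the left. I would handle this by appealing to the fact that the right-hand term in the paper's system must still typecheck with equal branch types, and would note that this case may need the branch types to coincide.
\end{itemize}

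The main obstacle is the distribution-combining cases, $\src{\sm \pssum \srcn}$ and $\<let>$. Here I must show that type precision, i.e.\ the coupling-based rule for formula distribution types, is a congruence for scaling and summing:
\begin{itemize}
\item If $\sgtya_1 \gprec \sgtyb_1$, $\sgtya_2 \gprec \sgtyb_2$ and $\pty \gprec \ptyp$, then $\pty\cdot\sgtya_1 + (1-\pty)\cdot\sgtya_2 \gprec \ptyp\cdot\sgtyb_1 + (1-\ptyp)\cdot\sgtyb_2$.
\item If $\d{\sgtys[i][\pty[i]]}[i \in \iSet] \gprec \d{\sggtys[j][\ptyp[j]]}[j \in \jSet]$ and the $\sgtya_i$ are suitably related to the per-binding result types, then the weighted sums are related.
\end{itemize}
I would prove an auxiliary lemma by working on the lifted formula types. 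Fix any solution of the left formula; it determines $\ps$ and the left probabilities. The hypotheses then yield couplings $\cww[ij]$ for each component, and I would build the coupling for the sum blockwise. For the choice case it is $\ps\cdot\cww[ij]$ on the first block and $(1-\ps)\cdot\cww[ij]$ on the second, which satisfies the marginal equations by linearity; choosing $\ptyp$'s witness equal to $\ps$ is allowed because $\pty \gprec \ptyp$. For $\<let>$ the difficulty is that the bound term's type is only related by a coupling, not index-by-index. So the coupling for the result is the composition $\sum_{i,j} c_{ij}\cdot \cww^{(ij)}_{kl}$, where $c_{ij}$ couples the bound types and $\cww^{(ij)}$ is the coupling for the body types obtained from the inductive hypothesis under $x:\sgtys_i$ versus $x:\sggtys_j$. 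This inductive hypothesis needs the context generalization with $\sgtys_i \gprec \sggtys_j$ whenever $c_{ij}>0$. Its marginals telescope as in the proof of Lemma~\ref{conp}.

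The subtle point I expect to fight with is the quantifier structure. For every solution of the left formula there must exist right-hand variables and couplings; the existential witnesses from the sub-derivations depend on that solution, and the sum must remain $1$. I would address this by fixing the universally quantified left solution first and then choosing all right-hand witnesses pointwise.
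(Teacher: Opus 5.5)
Your skeleton is exactly the paper's: the open-context strengthening with $\Gamma \gprec \Gamma'$, induction on the typing derivation, Lemma~\ref{sconp} for the consistency premises and Lemma~\ref{enp} for the binders. The trouble lies in the cases you single out yourself, and it is not a missing auxiliary lemma. With the paper's definitions as written, the statement is false in those cases, so the steps you propose cannot be completed. The paper's own proof passes over the same cases: it dismisses \emph{if} as following directly from the induction hypothesis, and never establishes precision of the combined result types for choice and let.

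\emph{If.} Appealing to the right-hand term ``still typechecking with equal branch types'' assumes the conclusion, and it is false. Take $\ite{\ttt}{(1 :: \rtype)}{(2 :: \rtype)} \gprec \ite{\ttt}{(1 :: \?)}{(2 :: \rtype)}$. The left side has type $\ds{\rtype^{1}}$, but the right branches get $\ds{\?^{1}}$ and $\ds{\rtype^{1}}$, while the if rule demands a single $\sgtya$.

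\emph{Choice.} The step ``a solution of the left formula determines $\ps$ and the left probabilities'' fails. The left type is $\liftD{\pty \cdot \sgtya[1] + (1-\pty) \cdot \sgtya[2]}$. It is computed with the lossy lifted arithmetic ($\? \cdot q = \?$, $p \cdot \? = \?$) and then re-lifted with fresh per-entry variables tied only by the global sum. So neither $\ps$ nor the internal constraints of $\liftD{\sgtya[1]}$ and $\liftD{\sgtya[2]}$ survive. A solution therefore need not split as $\ps \cdot s_1 \uplus (1-\ps) \cdot s_2$ with $s_i$ a solution of $\liftD{\sgtya[i]}$, and the hypotheses $\sgtya[i] \gprec \sgtyb[i]$ give you no coupling. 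Concretely:
\begin{itemize}
\item $(1 :: \ds{\rtype^{1}, \btype^{0}}) \oplus_{\?} 2 \gprec (1 :: \ds{\rtype^{1}}) \oplus_{\?} 2$, since $\ds{\rtype^{1}, \btype^{0}} \gprec \ds{\rtype^{1}}$.
\item Both sides are well typed, at $\ds{\rtype^{\?}, \btype^{\?}, \rtype^{\?}}$ and $\ds{\rtype^{\?}, \rtype^{\?}}$ respectively.
\item The left solution putting mass $1$ on $\btype$ has no coupling, because no right-hand entry is above $\btype$.
\end{itemize}

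\emph{Let.} Besides the same splitting problem, the right-hand let rule needs the body typed under $x : \sggtys[j]$ for \emph{every} $j$, not only for those with $c_{ij} > 0$. Distribution precision allows right-hand entries to which no left entry is coupled. For example, $\lett{x}{(1 :: \ds{\rtype^{1}})}{(x :: \rtype)} \gprec \lett{x}{(1 :: \ds{\rtype^{1}, \btype^{0}})}{(x :: \rtype)}$, and the right side is ill typed because $\btype \not\rel \rtype$.

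\emph{Well-formedness.} The same zero-mass phenomenon refutes your claim that well-formedness is upward closed, on which the $\lambda$ and ascription cases rely. $\ds{\rtype^{1}} \gprec \ds{\rtype^{1}, (\rtype \to \ds{\rtype^{1/2}})^{0}}$ holds, because the second entry can be given mass $0$. Yet the right-hand type is ill formed, so $1 :: \ds{\rtype^{1}} \gprec 1 :: \ds{\rtype^{1}, (\rtype \to \ds{\rtype^{1/2}})^{0}}$ is another counterexample.

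So nothing inside your induction can close these cases; the definitions have to change. For instance: take a join of the branch types in the if rule; use a precision on distribution types that demands well-formedness and that every right-hand entry be covered; and define scaling and summation of gradual distribution types so that they keep the component formulas conjoined, as the formula-level operations of \tlang do, instead of passing through lossy lifted arithmetic. With such structured types, your blockwise coupling for the choice, the composed coupling $\sum_{i,j} c_{ij} \cdot \omega^{(ij)}_{kl}$ for let, and your treatment of the quantifier order (fix the universally quantified left solution, then choose all right-hand witnesses) are the right argument. The constant, variable, application and addition cases go through as you describe.
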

\begin{proof}
  We prove on open terms instead of closed terms which is :
  $\ift{ \Gamma |-d \sm : \sgtya, \Gamma \gprec \Gamma' \ad  \sm \gprec \srcn 
  }{ \Gamma' |-d \sm : \sgtyb, \text{for some} ~ \sgtyb ~ 
  \text{such that}~\sgtya \gprec \sgtyb. } \\
  \sentence{Then we prove by induction on the typing derivation}~\Gamma |-d \sm : \sgtya.$
  \begin{case}[$\sm = \ssr, \ssb, \sx $] These case is trivial by the precision definition.
  \end{case}
  \begin{case}[$ \sm =  \src{\lambda \sx:\sigma_{1}} . \smp $] 
    $ \\
    \since{\inference{\Gamma, \sx:\sgtys[1] |-t \smp : \sgtya[1]}
    {\Gamma |-t \src{\lambda x:\sgtys[1]. \smp} : \sgtys[1] -> \sgtya[1]}}\\
    \sentence{By the definition of precision,}\\
    \so{\srcn = \lambda \sx:\sgtys[2]. \srcnp}\\
    \so{\inference{
      \sgtys[1] \gprec \sgtys[2] &
      \smp \gprec \srcnp & 
      }{
       (\lambda \sx:\sgtys[1]. \smp) \gprec (\lambda \sx:\sgtys[2]. \srcnp)
      }} \\
      \since{\Gamma, \sx:\sgtys[1] |-t \smp : \sgtya[1]} \\
      \since{\smp \gprec \srcnp} \\
      \sentence{By the induction hypothesis:} \\
      \so{\Gamma, \sx:\sgtys[1] |-t \srcnp : \sgtya[2], \; \sgtya[1  ] \gprec \sgtya[2]} \\
      \since{\sgtys[1] \gprec \sgtys[2]} \\
      \sentence{By Lemma~\ref{enp} } \\
      \so{ \Gamma, \sx:\sgtys[2] |-t \srcnp : \sgtya[3], \; \sgtya[2] \gprec \sgtya[3] } \\
      \so{ \\
        \inference{\Gamma, \sx:\sgtys[2] |-t \smp : \sgtya[3]}{ 
          \Gamma |-t \lambda \sx:\sgtys[2]. \smp : \sgtys[2] -> \sgtya[3]}
      } \\
      \sentence{By the definition of type precision:
      } \\
      \so{\sgtys[1] -> \sgtya[1] \gprec \sgtys[2] -> \sgtya[3]}. $
  \end{case}
  \begin{case}[$ \sm = \asc{\sv[1]}{\sgtysp[1]} $] 
    $ \\ $
    $\since{
      \inference{\Gamma |-t \sv[1] : \sgtys[1] &  \sgtys[1] \rel \sgtysp[1]}
      {\Gamma |-t \asc{\sv[1]}{\sgtysp[1]} : \ds{\sgtysp[1][1]}}
    } \\
    \sentence{By the definition of precision,}\\
    \so{\srcn = \asc{\sv[2]}{\sgtysp[2]}}\\
    \so{
      \inference{
      \sv[1] \gprec \sv[2] &
      \sgtysp[1] \gprec \sgtysp[2]
      }{
      \asc{\sv[1]}{\sgtysp} \gprec \asc{\sv[2]}{\sgtysp[2]}
      }
    } \\
    \since{\Gamma |-t \sv[1] : \sgtys[1]} \\
    \since{\sv[1] \gprec \sv[2]} \\
    \sentence{By the induction hypothesis:} \\
    \so{\Gamma |-t \sv[2] : \sgtys[2], \; \sgtys[1] \gprec \sgtys[2]} \\
    \sentence{By Lemma~\ref{sconp}:} \\
    \so{\sgtys[2] \rel \sgtysp[2]} \\
    \so{\\
    \inference{\Gamma |-t \sv[2] : \sgtys[2] &  \sgtys[2] \rel 
    \sgtysp[2]}{\Gamma |-t \asc{\sv[2]}{\sgtysp[2]} : \ds{\sgtysp[2][1]}}}$ 
  \end{case}

  \begin{case}[$ \sm = \asc{\sm[1]}{\sgtya[1]} $] 
    $ \\ $
    $\since{
      \inference{\Gamma |-t \sm[1] : \sgtyb[1] &  \sgtyb[1] \rel \sgtya[1]}
      {\Gamma |-t \asc{\sm[1]}{\sgtya[1]} : \sgtya[1] }
    } \\
    \sentence{By the definition of precision,}\\
    \so{\srcn = \asc{\srcn[1]}{\sgtya[2]}}\\
    \so{
      \inference{
      \sm[1] \gprec \srcn[1] &
      \sgtya[1] \gprec \sgtya[2]
      }{
      \asc{\sm[1]}{\sgtya[1]} \gprec \asc{\srcn[1]}{\sgtya[2]}
      }
    } \\
    \since{\Gamma |-t \sm[1] : \sgtyb[1]} \\
    \since{\sm[1] \gprec \srcn[1]} \\
    \sentence{By the induction hypothesis:} \\
    \so{\Gamma |-t \srcn[1] : \sgtyb[2], \; \sgtyb[1] \gprec \sgtyb[2]} \\
    \sentence{By Lemma~\ref{sconp}:} \\
    \so{\sgtyb[2] \rel \sgtya[2]} \\
    \so{\\
    \inference{\Gamma |-t \srcn[1] : \sgtyb[2] &  \sgtyb[2] \rel 
    \sgtya[2]}{\Gamma |-t \asc{\srcn[1]}{\sgtya[2]} : \sgtya[2]}}$ 
  \end{case}

  \begin{case}[$\sm = \sv_{1} \; \sw_{1} $]
    $ \\ $
    $\since{
      \inference{
   \Gamma |-t \sv[1] : \sgtys[1]  &  
   \Gamma |-t \sw[1] :  \sgtysp[1] &  \sgtysp[1] \rel \dom(\sgtys[1])}{
    \Gamma |-d \sv[1] \; \sw[1] : \cod(\sgtys[1])}
    } \\
    \sentence{By the definition of precision,}\\
    \so{\srcn = \sv[2] \; \sw[2]}\\
    \so{
      \inference{
      \sv[1] \gprec \sv[1] &
      \sw[1] \gprec \sw[2]
      }{
        \sv[1] \; \sw[1] \gprec \sv[2] \; \sw[2]
      }
    } \\
    \since{\Gamma |-t \sv[1] : \sgtys[1]} \\
    \since{\sv[1] \gprec \sv[2]} \\
    \sentence{By the induction hypothesis:} \\
    \so{\Gamma |-t \sv[2] : \sgtys[2], \; \sgtys[1] \gprec \sgtys[2]} \\
    \sentence{Similarly : } \\
    \since{\Gamma |-t \sw[1] : \sgtysp[1]} \\
    \since{\sw[1] \gprec \sw[2]} \\
    \sentence{By the induction hypothesis:} \\
    \so{\Gamma |-t \sw[2] : \sgtysp[2], \; \sgtysp[1] \gprec \sgtysp[2]} \\
    \since{\sgtys[1] \gprec \sgtys[2], \; \sgtysp[1] \gprec \sgtysp[2] \ad
    \sgtysp[1] \rel \dom(\sgtys[1])} \\
    \sentence{By Lemma~\ref{sconp}:} \\
    \so{\sgtysp[2] \rel \dom(\sgtys[2])} \\
    \so{\\
    \inference{
   \Gamma |-t \sv[2] : \sgtys[2]  &  
   \Gamma |-t \sw[2] :  \sgtysp[2] &  \sgtysp[2] \rel \dom(\sgtys[2])}{
    \Gamma |-d \sv[2] \; \sw[2] : \cod(\sgtys[2])} }$ 
  \end{case}
  
  \begin{case}[$\sm =  { \sm_{1}} \pssum { \srcn_{1} } $]
    $ \\ $
    $\since{
      \inference{
        \Gamma |-d \sm[1] : \sgtya[1]  &  
        \Gamma |-d \srcn[1] : \sgtyb[1] & 
      }{\Gamma |-d { \sm[1]} \pssum { \srcn[1]} : \pty \cdot \sgtya[1] + (1- \pty) \cdot \sgtyb[1]}
    } \\
    \sentence{By the definition of precision,}\\
    \so{\srcn =  { \sm[2]} \pssum { \srcn[2]} }\\
    \so{
      \inference{
      \sm[1] \gprec \sm[1] &
      \srcn[1] \gprec \srcn[2]
      }{
        { \sm[1]} \pssum { \srcn[1]} \gprec  { \sm[1]} \pssum { \srcn[1]}
      }
    } \\
    \since{\Gamma |-d \sm[1] : \sgtya[1] } \\
    \since{\sm[1] \gprec \sm[2]} \\
    \sentence{By the induction hypothesis:} \\
    \so{\Gamma |-d \sm[2] : \sgtya[2], \; \sgtya[1] \gprec \sgtya[2]} \\
    \since{\Gamma |-d \srcn[1] : \sgtyb[1] } \\
    \since{\srcn[1] \gprec \srcn[2]} \\
    \sentence{By the induction hypothesis:} \\
    \so{\Gamma |-d \srcn[2] : \sgtyb[2], \; \sgtyb[1] \gprec \sgtyb[2]} \\
    \so{\\
    \inference{
      \Gamma |-d \sm[2] : \sgtya[2]  &  
      \Gamma |-d \srcn[2] : \sgtyb[2] & 
    }{\Gamma |-d { \sm[2]} \pssum { \srcn[2]} : \pty \cdot \sgtya[1] + (1- \pty) \cdot \sgtyb[2]}
    }$ 
  \end{case}

  \begin{case}[$\sm = \lett{\sx}{\sm_1}{\srcn_1} $]
    $ \\ $
    $\since{
      \inference{
    \Gamma |-d \sm[1] : \d{\sgtys[i][\pty[i]]}[i \in \iSet]  \\
   \forall i\in\iSet.~\Gamma, \sx : \sgtys[i] |-d \srcn[1] : \sgtya[i]
   }{\Gamma |-d \lett{\sx}{\sm[1]}{\srcn[1]} : \sum_{i \in \iSet} \pty[i] \cdot \sgtya[i]}
    } \\
    \sentence{By the definition of precision,}\\
    \so{\srcn =  \lett{\sx}{\sm[2]}{\srcn[2]} }\\
    \so{
      \inference{
      \sm[1] \gprec \sm[1] &
      \srcn[1] \gprec \srcn[2]
      }{
        \lett{x}{\sm[1]}{\srcn[1]} \gprec   \lett{x}{\sm[2]}{\srcn[2]}
      }
    } \\
    \since{\Gamma |-d \sm[1] : \d{\sgtys[i][\pty[i]]}[i \in \iSet]} \\
    \since{\sm[1] \gprec \sm[2]} \\
    \sentence{By the induction hypothesis:} \\
    \so{\Gamma |-d \sm[2] : \d{\sgtysp[i][\ptyp[i]]}[i \in \iSet], \; \d{\sgtys[i][\pty[i]]}[i \in \iSet] \gprec \d{\sgtysp[i][\ptyp[i]]}[i \in \iSet]} \\
    \since{\forall i\in\iSet.~\Gamma, x : \sgtys[i] |-d \srcn[1] : \sgtya[i] } \\
    \since{\srcn[1] \gprec \srcn[2]} \\
    \sentence{By the induction hypothesis:} \\
    \so{\forall i\in\iSet.~\Gamma, x : \sgtys[i] |-d \srcn[2] : \sgtyb[i], \;  \sgtya[i] \gprec \sgtyb[i]} \\
    \sentence{By Lemma~\ref{enp} } \\
      \so{ \Gamma, x:\sgtysp[i] |-t \srcn[2] : \sgtyb[i]['], \; \sgtyb[i] \gprec \sgtyb[i]['] } \\
    \sentence{By the definition of type precision:} \\
    \so{\sgtya[i] \gprec \sgtyb[i][']} \\
    \so{\\
    \inference{
    \Gamma |-d \sm[1] : \d{\sgtysp[i][\ptyp[i]]}[i \in \iSet]  \\
   \forall i\in\iSet.~\Gamma, \sx : \sgtysp[i] |-d \srcn[1] : \sgtyb[i][']
   }{\Gamma |-d \lett{\sx}{\sm[1]}{\srcn[1]} : \sum_{i \in \iSet} \ptyp[i] \cdot \sgtyb[i][']}
    }$ 
  \end{case}
  \begin{case}[$ \sm = if, +$]
    The proof follow directly by the induction hypothesis
    and Lemma~\ref{conp}. 
  \end{case} 
\end{proof}

\begin{lemma}[Elaboration  preserve Type precision]\label{tpp}
  $\\$
  \begin{enumerate}
    \item $\ift{\sgtys \gprec \sgtysp}{\liftT{\sgtys} \gprec \liftT{\sgtysp} }$
    \item $\ift{\sgtya \gprec \sgtyb}{\liftD{\sgtya} \gprec \liftD{\sgtyb}}$
  \end{enumerate} 
\end{lemma}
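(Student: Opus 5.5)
The proof goes by mutual structural induction on the derivations of $\sgtys \gprec \sgtysp$ and $\sgtya \gprec \sgtyb$, following the inductive rules of Figure~\ref{source-type-precision}. The first statement covers simple types and the second covers distribution types, and the two parts call each other through the function-type case.

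For simple types, the base cases $\rtype \gprec \rtype$ and $\btype \gprec \btype$ are immediate, since $\liftT{\cdot}$ is the identity on base types. In the case $\sgtys \gprec \?$ we have $\liftT{\?} = \?$, so the rule $\gtys \gprec \?$ for formula types closes it. For $\sgtys_1 -> \sgtya_1 \gprec \sgtys_2 -> \sgtya_2$, we unfold $\liftT{\sgtys -> \sgtya} = \liftT{\sgtys} -> \liftD{\sgtya}$. The induction hypothesis (part 1 on the domains, part 2 on the codomains) then gives both premises of the arrow rule for formula types.

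For distribution types, the only rule that concludes $\sgtya \gprec \sgtyb$ for gradual distribution types has the premise $\liftD{\sgtya} \gprec \liftD{\sgtyb}$. That premise is exactly what we need, so part 2 holds essentially by inversion. The one subtlety is the fresh variables. Each call to $\liftD{\cdot}$ picks fresh tagged variables, so the lifting in the premise and the lifting in the goal may use different names. I would first note that formula-type precision (the last rule of Figure~\ref{source-type-precision}) is invariant under consistent renaming of the free variables of $\phi_1$ and $\phi_2$. This holds because the rule universally quantifies $\FV(\phi_1)$ and existentially quantifies $\FV(\phi_2)$ together with the coupling variables $\cww[ij]$, so those names are bound. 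Renaming therefore transports the premise to the goal.

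The step I expect to be the main obstacle is making sure this inversion is not circular. The rule for gradual distribution types is \emph{defined} through precision of the lifted formula types, and those in turn compare lifted simple types $\liftT{\sgtys_i} \gprec \liftT{\sgtys_j}$ inside the coupling condition $\var{i}{j} > 0 \implies \ldots$. If one instead reads the source-level precision relation through the AGT definition, via the equivalence-of-type-precision lemma, the argument must build the formula-level coupling directly. Concretely, for every valuation of the $\cww[i]$ satisfying the lifted formula of $\sgtya$, we get a static concretization of $\sgtya$. By the AGT precision there is an equal static element of $\crta{\sgtyb}$, and the coupling witnessing that equality, composed with the identity couplings given by $\cww[i] = \pr{\varr[i], i, i}$, supplies the existential witnesses $\FV(\phi_2)$ and $\cww[ij]$ required by the rule. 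On the positive-weight pairs, the needed simple-type precisions $\liftT{\sgtys_i} \gprec \liftT{\sgtys_j}$ then come from part 1 of the induction hypothesis.
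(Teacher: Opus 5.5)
Your argument is correct for the definition stated in Figure~\ref{source-type-precision}. There, the only rule concluding $\sgtya \gprec \sgtyb$ has premise $\liftD{\sgtya} \gprec \liftD{\sgtyb}$. So part 2 is inversion plus invariance of formula-level precision under renaming of the fresh variables; these are bound by the rule, including inside nested codomains. Part 1 is then a routine induction that calls part 2 on codomains.

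The paper argues differently. Its proof reads source precision of distribution types as an index-aligned componentwise rule: $\forall i.\ \sgtys[i] \gprec \sgtysp[i] \land \pty[i] \gprec \ptyp[i]$ over a common index set. It then builds the formula-level witness directly. Since $\ptyp[i]$ is either $\pty[i]$ or $\?$, any valuation satisfying the left formula can be copied to the right variables. The diagonal coupling $\cww[ii] := \cww[i]$ then works, with $\liftT{\sgtys[i]} \gprec \liftT{\sgtysp[i]}$ supplied by part 1.

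Your route is shorter, matches the figure, and isolates the only real issue, the choice of fresh names. It also shows that, under that definition, part 2 has no content of its own. The paper's route proves something substantive: replacing component types or probabilities by $\?$ yields precision of the lifted types, with an explicit witness. However, it only covers index-aligned pairs. It does not match the figure's coupling-based rule, which also relates, e.g., $\ds{\rtype^{1}}$ and $\ds{\rtype^{1/2}, \rtype^{1/2}}$.

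There is no circularity to worry about. Formula-level precision is defined by induction on formula types and never refers back to source precision, so inversion suffices. You should drop the AGT detour, because it would not go through:
\begin{itemize}
\item A valuation of the $\cww[i]$ fixes only probabilities. When components contain $\?$, it does not yield an element of $\crta{\sgtya}$.
\item The equalities of static concretizations that AGT precision provides do not give the source premises $\sgtys[i] \gprec \sgtysp[j]$ that part 1 needs.
\item The coupling obtained may depend on how each $\?$ is concretized. The rule, by contrast, needs one coupling per probability valuation.
\item The detour's first step, passing from the figure's relation to the AGT one, fails in general. $\ds{\?^{1/2}, \?^{1/2}} \gprec \ds{\?^{1}}$ holds by the figure's rule: send both weights to the single right component, using $\? \gprec \?$. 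Yet the concretization $\ds{\rtype^{1/2}, \btype^{1/2}}$ equals no $\ds{\tau^{1}}$.
\end{itemize}
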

\begin{proof}
  $\\$
  The proof follows by two main cases.
  \begin{itemize}
    \item (non-distributions types) The proof is trivial by the definition of 
    precision. 
    \item (distributions types) \\
    $\since{
    \inference{
      \forall i.~ \sgtys[i] \gprec \sgtysp[i] \land \pty[i] \gprec \ptyp[i]
    }
    {
      \d{\sgtys[i][\pty[i]]}[i\in \iSet] \gprec 
      \d{\sgtysp[i][\ptyp[i]]}[i\in \iSet]
    }} \\
    \sentence{After the translated function, each}~ \p[i] 
    \ad \pp[i]$ ~ \text{are equal or variables}, ~ \\
    we instantiate same probability for 
    the variables. Then the result holds.
  \end{itemize}
\end{proof}

\begin{lemma}[Probability implication]\label{pty-imply}
  $\ift{ \pty \gprec \ptyp }{ |- \liftP{\pty}[\cww[1]] => \liftP{\ptyp}[\cww[2]] }.$
\end{lemma}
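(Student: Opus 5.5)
The statement says: if $\pty \gprec \ptyp$, then $\liftP{\pty}[\cww[1]]$ entails $\liftP{\ptyp}[\cww[2]]$. I read the implication as ranging over a common valuation of the (alpha-renamed) variable, so $\projv{\cww[1]}$ and $\projv{\cww[2]}$ are identified. This is the same convention used in rule $\trg{(\mathord{\gprec}\oplus)}$, where both probabilistic choices share variable names.

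The proof goes by case analysis on the derivation of $\pty \gprec \ptyp$. Only two precision rules apply to gradual probabilities.

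\begin{itemize}
\item \textbf{Reflexivity on a static probability.} Here $\pty = \ptyp = \ps$. By the definition of the lifting, both sides are the formula $\cww = \ps$, and the entailment is immediate.
\item \textbf{Unknown on the right.} Here $\ptyp = \?$ and $\pty$ is arbitrary. The target formula is $\liftP{\?}[\cww] = (\cww \in [0,1])$. We split on the form of $\pty$.
  \begin{itemize}
  \item If $\pty = \?$, both formulas coincide.
  \item If $\pty = \ps$ with $\ps \in [0,1]$, the hypothesis is $\cww = \ps$. Then $0 \leq \cww \leq 1$ follows because $\ps$ lies in $[0,1]$ by the syntax of probabilities in Figure~\ref{fig:source-syntax}. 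So the formula $\cww \in [0,1]$, which is sugar for $0 \leq \cww \land \cww \leq 1$, holds.
  \end{itemize}
\end{itemize}

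A lifted probability is always a single atomic comparison or interval constraint, so no induction is needed. The proof amounts to unfolding the definition of $\liftP{\cdot}[\cdot]$ and checking elementary arithmetic facts.

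The only subtle point is the variable bookkeeping: the tagged variables $\cww[1]$ and $\cww[2]$ must denote the same underlying variable for the entailment to be meaningful. I would state this explicitly, appealing to the alpha-renaming convention the paper adopts for term precision, and then quantify universally over that shared variable. Beyond this, I expect no real obstacle. The lemma is a small auxiliary fact feeding the $\oplus$ case of the dynamic gradual guarantee, where it discharges the premise $\forall \FV(\phi[][1]). \phi[][1] => \phi[][2]$. The conjunct $\cww[1] + \cww[2] = 1$ appears on both sides of that premise and is handled by the same argument conjunctwise.
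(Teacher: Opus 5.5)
Your proof is correct and takes the same approach as the paper, whose entire proof is that the result holds by the definition of probability lifting. Your case split on the two precision rules for gradual probabilities, together with your explicit convention that the two tagged variables are identified, simply spells out that one-line argument in detail.
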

\begin{proof}
  The result holds by the definition of probability lifting.
\end{proof}

\begin{lemma}[Elaboration preserve precision]\label{epp}
  $\ift{ \sm \gprec \srcn,  \elaborate{  \Gamma_{1} |-d \sm }{\tm}{\sgtya} \ad 
  \elaborate{ \Gamma_{2} |-d \srcn }{\tn}{\gtyb}}{ \elab[\tm] \gprec \elab[\tn]}.$
\end{lemma}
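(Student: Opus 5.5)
The proof goes by induction on the derivation of $\sm \gprec \srcn$, simultaneously with inversion on the two elaboration derivations. Elaboration is syntax-directed, so each precision rule fixes a unique pair of elaboration rules. We strengthen the statement to open terms with $\Gamma_1 \gprec \Gamma_2$ (precision of contexts, Figure~\ref{source-term-precision2}), so that the cases for $\lambda$ and $\<let>$ can extend the contexts. By Lemma~\ref{enp} and the static gradual guarantee, the types $\sgtya$ and $\sgtyb$ computed by the two elaborations then satisfy $\sgtya \gprec \sgtyb$; we carry this as an auxiliary invariant. The target relation to establish is the term precision of Figure~\ref{fig:term-precision}.

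The base and structural cases are routine. Variables elaborate to themselves. Constants are ascribed with the same evidence $\rtype\meet\rtype$ (resp.\ $\btype\meet\btype$) on both sides, and $\rtype \gprec \rtype$. For $\lambda$, Lemma~\ref{tpp} gives $\liftT{\sgtys -> \sgtya} \gprec \liftT{\sgtysp -> \sgtyb}$, which serves both as the annotation and as the evidence, and the body is handled by the induction hypothesis. For application, $+$, if, and both kinds of ascription, the elaborated terms are $\<let>$-nests of ascriptions. We close each such case with the structural precision rules once two facts are in place. First, the ascribed types satisfy $\liftT{\cdom(\sgtys)} \gprec \liftT{\cdom(\sgtysp)}$ and similarly for $\ccod$; this follows from $\sgtys \gprec \sgtysp$ via monotonicity of $\cdom$ and $\ccod$ together with Lemma~\ref{tpp}. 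Second, the inserted evidences are related by precision. Each evidence has the form $\liftT{G_1}\meet\liftT{G_2}$, with both arguments more precise than their counterparts, so Lemma~\ref{lemma:mono} (monotonicity of $\meet=\trans{}$) gives both definedness on the right and precision of the results. Definedness on the left holds because the elaboration is assumed to succeed.

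The probabilistic choice and $\<let>$ cases carry the real content. For $\pssum$, we alpha-rename the fresh $\cww[1],\cww[2]$ to coincide on both sides, as the paper allows. Lemma~\ref{pty-imply} gives $\liftP{\pty}[\cww[1]] \Rightarrow \liftP{\ptyp}[\cww[1]]$. For $1-\pty$, we case on whether $\pty$ is static. If both are static they are equal. Otherwise the right-hand side is $\?$, and $[0,1]$ contains everything. This discharges the entailment premise of rule $(\mathord{\gprec}\oplus)$. We must also show that the two evidences $\evd = (\cww[1]\cdot\evd[1]+\cww[2]\cdot\evd[2]) \meet \liftD{\pty\cdot\sgtya+(1-\pty)\cdot\sgtyb}$ are related. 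The plan is to prove a small auxiliary lemma that scaling by a symbolic probability and the $\sum$ metafunction are monotone with respect to precision of formula distribution types, with the scaled formula entailing the less precise one. Lemma~\ref{lift-formula-time} and Lemma~\ref{lift-formula-sum} reduce the lifted right operand to this same shape, and Lemma~\ref{lemma:mono} finishes. The $\<let>$ case is identical, summing over $i$.

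I expect the main obstacle to be this monotonicity of scaling and sum. For distribution types, precision is the $\forall\exists$ coupling condition. Given a solution of the more precise formula, we must produce a coupling between the scaled multisets. The natural witness multiplies the given coupling entries $\cww[ij]$ by the common factor $\cww$. Row and column sums then scale correctly, and positivity is preserved, so the $\gprec$ side condition is inherited. The delicate part is that the concrete value of the less precise probability must be chosen equal to that of the more precise one. This is exactly what the entailment from Lemma~\ref{pty-imply} licenses, and it is also where the reindexing of tagged variables in $\sum$ has to be tracked carefully.
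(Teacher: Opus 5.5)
Your plan follows the paper's proof: induction on $\sm \gprec \srcn$, using Lemma~\ref{tpp}, the static gradual guarantee, Lemma~\ref{pty-imply} and Lemma~\ref{lemma:mono}. The cases for values, $\lambda$, application, $+$, conditionals, ascriptions and probabilistic choice go through, modulo one simplification noted at the end. Your extra hypothesis $\Gamma_1 \gprec \Gamma_2$ is in fact necessary: for $\asc{\sx}{\?}$ under $x:\rtype$ versus $x:\btype$, the inserted evidences are $\rtype$ and $\btype$, which are not related. It is also harmless, since the lemma is only used on closed terms.

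The gap is the \emph{let} case, which is not ``identical, summing over $i$''. Term precision of ascriptions relates distribution types whose index sets differ, or are matched non-bijectively (e.g.\ $\ds{\rtype^{\frac{1}{2}},\rtype^{\frac{1}{2}}} \gprec \ds{\rtype^{1}}$). So the two bound terms may have types $\d{\sgtys[i][\pty[i]]}[i \in \iSet]$ and $\d{\sgtysp[i'][\ptyp[i']]}[i' \in \iiSet]$ that are related only by a coupling. Consider the left operands of the two meets, $\sum_{i \in \iSet} \liftP{\pty[i]}[\cw{i}{i}] \cdot \liftD{\sgtya[i]}$ and $\sum_{i' \in \iiSet} \liftP{\ptyp[i']}[\cw{i'}{i'}] \cdot \liftD{\sgtyap[i']}$. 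Their summands cannot be paired or alpha-renamed. There is no common scaling factor and no entailment between the scaling formulas. Your auxiliary lemma, whose witness is a shared factor times a component coupling, therefore does not apply.

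What is missing is a composition argument in the style of Lemma~\ref{setprec}. Given a solution on the left, let $c_{ii'}$ be the coupling witnessing precision of the bound types. For every pair with $c_{ii'} > 0$, obtain $\sgtya[i] \gprec \sgtyap[i']$, with a component coupling $d^{ii'}$, by applying the static gradual guarantee to the body under $\Gamma_1, x : \sgtys[i] \gprec \Gamma_2, x : \sgtysp[i']$. Then use entries $c_{ii'} \cdot d^{ii'}_{jj'}$. Because several $i$ may be coupled to the same $i'$, the component witnesses can disagree on the variables of $\liftD{\sgtyap[i']}$. For example, bodies of types $\ds{\rtype^{\frac{1}{3}},\btype^{\frac{2}{3}}}$ and $\ds{\rtype^{\frac{2}{3}},\btype^{\frac{1}{3}}}$, both coupled to $\ds{\rtype^{\?},\btype^{\?}}$, ask for $(\frac13,\frac23)$ and $(\frac23,\frac13)$ respectively. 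You must take the convex combination of the witnesses with weights $c_{ii'}/\sum_{k} c_{ki'}$. This is again a solution because lifted formulas are conjunctions of linear constraints. The paper's proof passes over this case too, but your plan as stated does not cover it.

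Separately, for the right operand $\liftD{\pty \cdot \sgtya + (1-\pty) \cdot \sgtyb}$ (and its let analogue) you do not need Lemmas~\ref{lift-formula-time} and~\ref{lift-formula-sum}. Those lemmas do not actually produce the scaled shape when $\pty = \?$, because $\? \cdot \pty[i] = \?$ forgets the correlation. Your invariant on result types together with Lemma~\ref{tpp} already relates the two sides.
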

\begin{proof}
  We proceed by induction on ~$ \sm \gprec \srcn $.
  \begin{case}[$::\sgtya$]
    $\\$
    $
    \since{\\
    \inference[\src{(E\mathord{::}\sgtya)}]{\elaborate{\Gamma |-d \sm }{\sm }{\sgtya} &  \sgtya \rel  \sgtyb & \evd = \liftD{\sgtya} \meet \liftD{\sgtyb}  & \jtf{}{\sgtyb}
    }
    {\elaborate{\Gamma |-d \src{\asc{ \sm}{\sgtyb} } }{ \trg{ \asc{\evd \tm}{ \liftD{\sgtyb} } } }{\sgtyb} }
    }\\
    \sentence{
    we need to show,
    }\\
    {
    \asc{\evd \tm[1]}{\liftD{\sgtyb[1]}} \gprec \asc{\evdp \tn[1]}{\liftD{\sgtyb[2]}} 
    }\\
    \since{
      \inference{
     \sm[1] \gprec \srcn[1] &
     \sgtya[1] \gprec \sgtya[2]
     }{
   \asc{ \sm[1]}{\sgtya[1]} \gprec \asc{ \srcn[1]}{\sgtya[1]}} 
     } \\
    \since{
      \elaborate{  \Gamma_{1} |-d \sm[1] }{\tm[1]}{\sgtyb[1]}
    } \\
    \since{
      \elaborate{  \Gamma_{2} |-d \srcn[1] }{\tn[1]}{\sgtyb[2]}
    } \\
    \sentence{By the induction hypothesis : } \\
    \so{\elab[\tm[1]] \gprec \elab[\tn[1]]} \\
    \sentence{By Lemma}~\ref{tpp}, \ref{sgg} \ad \ref{mono}, \\
    \so{\liftD{\sgtya[1]} \gprec \liftD{\sgtya[2]}}. \\
    \so{\liftD{\sgtyb[1]} \gprec \liftD{\sgtyb[2]}}. \\
    \so{ \evd[1] \gprec \evd[2]} \\
    \so{
      \asc{\evd \tm[1]}{\liftD{\sgtyb[1]}} \gprec \asc{\evdp \tn[1]}{\liftD{\sgtyb[2]}} 
     } $
  \end{case}
  \begin{case}[$::\sgtys$]
    $\\$
    $
    \since{\\
   \inference[\src{(E\mathord{::}\sgtys)}]{\ela{\Gamma |-t \sv}{\tv}{\sgtys} &  \sgtys \rel \sgtysp & \ev = \liftT{\sgtys} \meet \liftT{\sgtysp}  & \jtf{}{\sgtysp} }
    {\ela{\Gamma |-t \asc{\sv}{\sgtysp}}{\asc{\ev \tv}{ \liftT{\sgtysp} }}{\ds{\sgtysp[][1]}}} 
    }\\
    \sentence{
    we need to show,
    }\\
    {
      \asc{\ev[1] \tv[1]}{ \liftT{\sgtysp[1]} } \gprec \asc{\ev[2] \tv[2]}{ \liftT{\sgtysp[2]} }
    }\\
    \since{
      \inference{
        \sv[1] \gprec \sv[2] &
        \sgtysp[1] \gprec \sgtysp[2]
        }
        {
        \src{\asc{\sv[1]}{\sgtysp[1]}} \gprec \src{\asc{ \sv[2] }{\sgtysp[2]}}
        }
     } \\
    \sentence{By the induction hypothesis : } \\
    \so{\elab[\tv[1]] \gprec \elab[\tv[2]]} \\
    \sentence{By Lemma}~\ref{tpp}, \ref{sgg} \ad \ref{mono}, \\
    \so{\liftD{\sgtys[1]} \gprec \liftD{\sgtys[2]}}. \\
    \so{\liftD{\sgtysp[1]} \gprec \liftD{\sgtysp[2]}}. \\
    \so{ \ev[1] \gprec \ev[2]} \\
    \so{
      \asc{\ev[1] \tv[1]}{ \liftT{\sgtysp[1]} } \gprec \asc{\ev[2] \tv[2]}{ \liftT{\sgtysp[2]} }
     } $
  \end{case}
  \begin{case}[$\pssum$]
    $\\$
    $
    \since{\\
    \inference[\src{(E\oplus)}]{
      \ela{\Gamma |-d \sm}{\tm}{\sgtya} &  
      \ela{\Gamma |-d \srcn}{\tn}{\sgtyb} &
      \evd[1] = \liftD{\sgtya} \meet \liftD{\sgtya} \\
      \evd[2] = \liftD{\sgtyb} \meet \liftD{\sgtyb} &
      \cww[1], \cww[2] \; \text{fresh} &
      \liftP{\pty}[\cww[1]] = \phi[][1] & \liftP{(1-\pty)}[\cww[2]] = \phi[][2] 
      \\ \phi = \phi[][1] \land \phi[][2] \land (\cww[1] + \cww[2] = 1) &
      \evd = \jtf{\phi}{(\cww[1] \cdot \evd[1] + \cww[2] \cdot \evd[2])} \meet \liftD{\pty \cdot \sgtya + (1- \pty) \cdot \sgtyb} 
    }
    {\ela{\Gamma |-d \src{ { \sm} \pssum { \srcn} } }{ \trg{\asc{\evd {\tm} \ppsum[\phi][\cww[1]][\cww[2]] {\tn}}{ \liftD{\pty \cdot \sgtya + (1- \pty) \cdot \sgtyb} } } }{ 
      \pty \cdot \sgtya + (1- \pty) \cdot \sgtyb}}
    }\\
    \sentence{
    we need to show,
    }\\
    {
      \trg{\asc{\evd[1] {\tm[1]} \ppsum[\phi[][1]][\cww[1]][\cww[2]] {\tn[1]}}{ \liftD{\pty \cdot \sgtya[1] + (1- \pty) \cdot \sgtyb[1]} } }
      \gprec 
      \trg{\asc{\evd[2] {\tm[2]} \ppsum[\phi[][2]][\cww[1]][\cww[2]] {\tn[2]}}{ \liftD{\ptyp \cdot \sgtya[2] + (1- \ptyp) \cdot \sgtyb[2]} } }
    }\\
    \since{
      \inference{
        \sm \gprec \src{m'} &
        \srcn \gprec \src{n'} & \pty \gprec \src{\ptyp}
        }
        {
        \src{\sm \pssum \srcn} \gprec \src{m' \pssum[\ptyp] n'}
        }
     } \\
    \sentence{By the induction hypothesis : } \\
    \so{\elab[\tm[1]] \gprec \elab[\tm[2]]} \\
    \so{\elab[\tn[2]] \gprec \elab[\tn[2]]} \\
    \sentence{By Lemma}~\ref{tpp}, \ref{sgg}, \ref{pty-imply} \ad \ref{mono}, \\
    \so{\liftD{\sgtya[1]} \gprec \liftD{\sgtya[2]}}. \\
    \so{\liftD{\sgtyb[1]} \gprec \liftD{\sgtyb[2]}}. \\
    \so{ \evd[1] \gprec \evd[2]} \\
    \so{  \forall \FV(\phi[][1]). \phi[][1] => \phi[][2]} \\
    \so{
      \trg{\asc{\evd[1] {\tm[1]} \ppsum[\phi[][1]][\cww[1]][\cww[2]] {\tn[1]}}{ \liftD{\pty \cdot \sgtya[1] + (1- \pty) \cdot \sgtyb[1]} } }
      \gprec 
      \trg{\asc{\evd[2] {\tm[2]} \ppsum[\phi[][2]][\cww[1]][\cww[2]] {\tn[2]}}{ \liftD{\ptyp \cdot \sgtya[2] + (1- \ptyp) \cdot \sgtyb[2]} } }
     } $
  \end{case}
  \begin{case}[let]
    $\\$
    $
    \since{\\
    \inference[\src{(Elet)}]{
      \ela{\Gamma |-d \sm}{\tm}{\d{\sgtys[i][\pty[i]]}[i \in \iSet]}  \\
      \forall i\in\iSet.~\ela{\Gamma, \sx : \sgtys[i] |-d \srcn}{\tn}{\sgtya[i]} &
      \cww[i] \; \text{fresh} &  \evd = \sum_{i \in \iSet} \liftP{\pty[i]}[\cww[i]] \cdot \liftD{\sgtya[i]} \meet \liftD{\sum_{i \in \iSet} \pty[i] \cdot \sgtya[i]} 
    }
    {\elaborate{\Gamma |-d \src{\lett{\sx}{\sm}{\srcn}} }{ 
      \trg{ \asc{ \evd \lett{\tx}{\tm}{\tn}}{ \liftD{\sum_{i \in \iSet} \pty[i] \cdot \sgtya[i]} }  } }{
        \sum_{i \in \iSet} \pty[i] \cdot \sgtya[i]}}
    }\\
    \sentence{
    we need to show,
    }\\
    {
      \trg{ \asc{ \evd[1] \lett{\tx}{\tm[1]}{\tn[1]}}{ \liftD{\sum_{i \in \iSet} \pty[i] \cdot \sgtya[i]} }  }
      \gprec 
      \trg{ \asc{ \evd[2] \lett{\tx}{\tm[2]}{\tn[2]}}{ \liftD{\sum_{i' \in \iiSet} \ptyp[i'] \cdot \sgtyap[i']} }  }
    }\\
    \since{
      \inference{
        \sm \gprec \src{m'} &
        \srcn \gprec \src{n'} &
        }
        {
          \src{\lett{x}{m}{n}} \gprec \src{\lett{x}{m'}{n'}}
        } 
     } \\
    \sentence{By the induction hypothesis : } \\
    \so{\elab[\tm[1]] \gprec \elab[\tm[2]]} \\
    \so{\elab[\tn[1]] \gprec \elab[\tn[2]]} \\
    \sentence{By Lemma}~\ref{tpp}, \ref{sgg}, \ref{pty-imply} \ad \ref{mono}, \\
    \so{
      \liftD{\sum_{i \in \iSet} \pty[i] \cdot \sgtya[i]}
      \gprec 
      \liftD{\sum_{i' \in \iiSet} \ptyp[i'] \cdot \sgtyap[i']}
    }\\
    \so{
      \sum_{i \in \iSet} \liftP{\pty[i]}[\cww[i]] \cdot \liftD{\sgtya[i]}
      \gprec
      \sum_{i' \in \iiSet} \liftP{\ptyp[i]}[\cww[i']] \cdot \liftD{\sgtyap[i']}
    }\\
    \so{
      \evd[1] \gprec \evd[2]
    }\\
    \so{
      \trg{ \asc{ \evd[1] \lett{\tx}{\tm[1]}{\tn[1]}}{ \liftD{\sum_{i \in \iSet} \pty[i] \cdot \sgtya[i]} }  }
      \gprec 
      \trg{ \asc{ \evd[2] \lett{\tx}{\tm[2]}{\tn[2]}}{ \liftD{\sum_{i' \in \iiSet} \ptyp[i'] \cdot \sgtyap[i']} }  }
     } $
  \end{case}
  \begin{case}[app]
    $\\$
    $
    \since{\\
    \inference[\src{(Eapp)}]{
      \ela{\Gamma |-t \sv}{\tv}{\sgtys}  &  
      \ela{\Gamma |-t \sw}{\tw}{\sgtysp} &  \sgtysp  \rel \cdom(\sgtys) \\
      \ev[1] = \liftT{\sgtysp} \meet \liftT{ \cdom(\sgtys) }  & \ev[2] = \liftT{\sgtys} \meet \liftT{\cdom(\sgtys)->\ccod(\sgtys)}
    }
    {\ela{\Gamma |-d \sv \; \sw}{ \trg{\lett{ \tx }{\asc{\ev[1] \tw }{ \liftD{\cdom(\sgtys)} } }{ \lett{\ty}{\asc{\ev[2] \tv}{ \liftT{\cdom(\sgtys) -> \cod(\sgtys)} }}{\ty \;\tx} } } }{\ccod(\sgtys)}}
    }\\
    \sentence{
    we need to show,
    }\\
    {
      \trg{\lett{ \tx }{\asc{\ev[1] \tw[1] }{ \liftD{\cdom(\sgtys[1])} } }{ \lett{\ty}{\asc{\ev[2] \tv[1]}{ \liftT{\cdom(\sgtys[1])} -> \liftD{\cod(\sgtys[1])} }}{\ty \;\tx} } } }
      \\
     { \gprec} 
      \\
    {
      \trg{\lett{ \tx }{\asc{\ev[3] \tw[2] }{ \liftD{\cdom(\sgtys[2])} } }{ \lett{\ty}{\asc{\ev[4] \tv[2]}{ \liftT{\cdom(\sgtys[2])} -> \liftD{\cod(\sgtys[2])} }}{\ty \;\tx} } }
    }\\
    \since{
      \inference{
        \sv[1] \gprec \sv[2] &
        \sw[1] \gprec \sw[2] &
        }
        {
          \sv[1] \; \sw[1] \gprec \sv[2] \; \sw[2]
        }
     } \\
    \sentence{By the induction hypothesis : } \\
    \so{\elab[\tv[1]] \gprec \elab[\tw[2]]} \\
    \so{\elab[\tw[2]] \gprec \elab[\tw[2]]} \\
    \sentence{By Lemma}~\ref{tpp}, \ref{sgg} \ad \ref{mono}, \\
    \so{
      \ev[1] \gprec \ev[3]
    }\\
    \so{
      \ev[2] \gprec \ev[4]
    }\\
    \so{
      \liftD{\cdom(\sgtys[1])} \gprec \liftD{\cdom(\sgtys[2])}
    }\\
    \so{
      \liftT{\cdom(\sgtys[1])} -> \liftD{\cod(\sgtys[1])} \gprec \liftT{\cdom(\sgtys[2])} -> \liftD{\cod(\sgtys[2])}
    }\\
    \so{
      {
        \trg{\lett{ \tx }{\asc{\ev[1] \tw[1] }{ \liftD{\cdom(\sgtys[1])} } }{ \lett{\ty}{\asc{\ev[2] \tv[1]}{ \liftT{\cdom(\sgtys[1])} -> \liftD{\cod(\sgtys[1])} }}{\ty \;\tx} } } }
        \\
       { \gprec} 
        \\
      {
        \trg{\lett{ \tx }{\asc{\ev[3] \tw[2] }{ \liftD{\cdom(\sgtys[2])} } }{ \lett{\ty}{\asc{\ev[4] \tv[2]}{ \liftT{\cdom(\sgtys[2])} -> \liftD{\cod(\sgtys[2])} }}{\ty \;\tx} } }
      }
     } $
  \end{case}
  \begin{case}[v]
    This is the trivial case.
  \end{case}
  \begin{case}[+]
    The proof follows by induction hypothesis 
    and Lemma~\ref{tpp}, \ref{sgg} \ad \ref{pty-imply}. 
  \end{case}
  \begin{case}[if]
    The proof follows by induction hypothesis 
    and Lemma~\ref{tpp}, \ref{sgg} \ad \ref{pty-imply}. 
  \end{case}
\end{proof}

\subsection{Gradual guarantee of \glang}
Figure~\ref{source-term-precision2} presents the complete 
precision and the complete elaboration rules 
are presented in Figure~\ref{elaboration2}.

\begin{theorem}[Gradual Guarantee]
  Suppose~$ |-d \sm : \sgtya \ad  \sm \gprec \srcn $
  \begin{enumerate}
    \item\label{SGG} $  |-d \srcn : \sgtyb \ad \sgtya \gprec \sgtyb. $
    \item\label{DGG} $ \ift{\converge[\sm][\phty{\pphi[1]}{\dset[1]}] }{\converge[\srcn][\phty{\pphi[2]}{\dset[2]}] 
    \ad \phty{\pphi[1]}{\dset[1]} \gprec \phty{\pphi[2]}{\dset[2]}.} \\
    \ift{\diverge[\sm] }{\diverge[\srcn].}$
  \end{enumerate}
\end{theorem}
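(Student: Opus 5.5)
The plan is to treat the two parts separately. Both reduce to results already established, and the dynamic part goes through elaboration into \tlang.

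\emph{Part~\ref{SGG}.} This is the static gradual guarantee for closed terms. We would invoke Lemma~\ref{sgg}, which was proved for open terms under environments related by precision, instantiating $\Gamma = \Gamma' = \cdot$. Since $\cdot \gprec \cdot$ holds by the first environment-precision rule, this yields $|-d \srcn : \sgtyb$ with $\sgtya \gprec \sgtyb$ directly.

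\emph{Part~\ref{DGG}.} We would first elaborate both terms. From $|-d \sm : \sgtya$ and the just-obtained $|-d \srcn : \sgtyb$, Theorem~\ref{theorem:ep} (Lemma~\ref{ep}) gives elaborations $\elaborate{\sm}{\tm}{\sgtya}$ and $\elaborate{\srcn}{\tn}{\sgtyb}$. It also gives $|-d \tm : \liftD{\sgtya}$ and $|-d \tn : \liftD{\sgtyb}$. Next, Lemma~\ref{epp} (elaboration preserves precision), applied to $\sm \gprec \srcn$ with empty environments, yields $\tm \gprec \tn$.

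The hypotheses of the dynamic gradual guarantee for \tlang (Theorem~\ref{theorem:dgg}) are then satisfied, so we apply it. For convergence, $\converge[\sm][\phty{\pphi[1]}{\dset[1]}]$ unfolds by definition to $\rrctx[\cdot] \tm \nreds{}{k_1} \rctx[\pphi[1]] \dset[1]$. Theorem~\ref{theorem:dgg}(1) then provides $k_2$, $\pphi[2]$ and $\dset[2]$ with $\rrctx[\cdot] \tn \nreds{}{k_2} \rctx[\pphi[2]] \dset[2]$ and $\phty{\pphi[1]}{\dset[1]} \gprec \phty{\pphi[2]}{\dset[2]}$. Refolding the definition gives $\converge[\srcn][\phty{\pphi[2]}{\dset[2]}]$. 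For divergence, $\diverge[\sm]$ means no $k$ makes $\tm$ reduce, and Theorem~\ref{theorem:dgg}(2) transfers this to $\tn$, i.e.\ $\diverge[\srcn]$.

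The main obstacle is a mismatch in well-definedness. The $\Downarrow/\Uparrow$ notation refers to \emph{the} elaboration of a term, but fresh-variable choices make elaboration unique only up to renaming of tagged variables. We must therefore ensure that the elaborations used in Lemma~\ref{epp} are the same (up to $\alpha$-renaming) as those appearing in the convergence hypothesis. We would handle this by observing that elaboration is syntax-directed, so any two elaborations differ only by a consistent renaming of fresh variables. Both reduction and precision, including the formula entailment in rule $\trg{(\mathord{\gprec}\oplus)}$, are invariant under such renaming, provided the two compared probabilistic choices share variable names as the paper stipulates. Once this invariance is noted, the rest is purely compositional.
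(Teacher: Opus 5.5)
Your proposal is correct and follows the paper's proof: part~1 is Lemma~\ref{sgg} with empty environments, and part~2 elaborates both terms via Lemma~\ref{ep}, obtains target precision from Lemma~\ref{epp}, and applies the dynamic gradual guarantee for \tlang, which the paper cites directly as Lemmas~\ref{dgga} and~\ref{dggb}. Your extra remark, that elaborations are unique only up to renaming of fresh tagged variables, addresses a point the paper leaves implicit and is a sensible way to make the $\Downarrow/\Uparrow$ notation well-defined.
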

\begin{proof}
 The proof of (\ref{SGG}) follows by Lemma~\ref{sgg}.
 By Lemma~\ref{ep}, we could get $\elaborate{ |-d \sm }{\tm}{\sgtya} \ad \elaborate{ |-d \srcn}{\tn}{\sgtyb}$
 and then the  proof of (\ref{DGG}) follows by Lemma~\ref{epp}, Lemma~\ref{dgga} and Lemma~\ref{dggb}.
\end{proof}
 \section{The Target Language \tlang }
This section presents the type 
well-formedness definition (Definition~\ref{fig:target-type-system2}), complete rules (\eg dynamic semantic) and proofs (\eg type safety and gradual guarantee)
of \tlang.
(We use black and red colors for target terms in this section proof).

\begin{figure}[t]
  \begin{flushleft}
  \framebox{$\Gamma |-t \tm: \gtys$, $\Gamma |-d \tm: \gtya$}
  \end{flushleft}
  \begin{mathpar}
  % \inference[(Gr)]{}
  % {\Gamma |-t \ttr : \rtype} \and
  % %
  % \inference[(Gb)]{}
  % {\Gamma |-t \ttb : \btype} \and
  % %
  % \inference[(Gx)]{\Gamma(\tx) = \gtys}
  % {\Gamma |-t \tx : \gtys} \and
  %
  \inference[(Gerr$_{\gtys}$)]{
    \justify{\gtys}
  }
  {\Gamma |-t \errort{\gtys} : \gtys } \and
  \inference[(Gerr$_{\gtya}$)]{
    \justify{\gtya}
  }
  {\Gamma |-t \errort{\gtya} : \gtya} \and
  \inference[(Gv)]{\Gamma |-t \tv : \gtys}
  {\Gamma |-t \tv : \ds{\gtys[][1]} } \and
  \inference[(G$\lambda$)]{\Gamma, \tx :\gtys |-t \tm : \gtya  & \jtf{}{\gtys}}
  {\Gamma |-t \trg{ \lambda \tx:\gtys. \tm : \gtys -> \gtya} } \and
  \inference[(G$\mathord{::}\gtys$)]{\Gamma |-t \tv : \gtys &  \ev |-t \gtys \rel \gtysp  & \jtf{}{\gtysp}}
  {\Gamma |-t \trg{ \asc{\ev \tv}{\gtysp}} : \ds{\gtysp[][1]}  }  \and
  \inference[(Gapp)]{
    \Gamma |-t \tv : \gtys -> \gtya  &  
    \Gamma |-t \tw : \gtys & 
  }
  {\Gamma |-d \tv\;\tw : \gtya} \and
  \inference[(Glet)]{
    \Gamma |-d \tm : \phty{\phi}{\d{\gtys[i][\p[i]]}[i \in \iSet]}   \\
    \forall i\in\iSet.~\Gamma, \tx : \gtys[i] |-d \tn : \gtya[i]
  }
  {\Gamma |-d \trg{ \lett{\tx}{\tm}{\tn} } : \jtf{\phi}{\sum_{i \in \iSet} \p[i] \cdot \gtya[i]}} \and
  \inference[(G$\mathord{::}\gtya$)]{\Gamma |-d \tm : \gtya & \evd |-d \gtya \rel \gtyb  & \jtf{}{\gtyb}
  }
  {\Gamma |-d \trg{ \asc{\evd \tm}{\gtyb} } : \gtyb} \and
    \inference[(G$+$)]{
      \Gamma |-t \tv :  \rtype   &  
      \Gamma |-t \tw :  \rtype 
    }
    {\Gamma |-d \trg{ \add{\tv}{\tw} } : \ds{\rtype^1} } \and
    \inference[(Gif)]{
      \Gamma |-t \tv : \btype \\
      \Gamma |-d \tm : \gtya & 
      \Gamma |-d \tn : \gtya
    }
    {\Gamma |-d \trg{ \ite{\tv}{\tm}{\tn} } : \gtya } \and
  \inference[(G$\oplus$)]{
    \Gamma |-d \tm : \gtya  &  
    \Gamma |-d \tn : \gtyb & \satisfiablee{\phi => \p[1]+\p[2] = 1}
  }
  {\Gamma |-d \trg{ { \tm} \ppsum { \tn} } : \jtf{\phi}{\p[1] \cdot \gtya + \p[2] \cdot \gtyb}} \and
  \end{mathpar}
  
  \begin{align*}
  \p \cdot \phty{\phi}{\d{\gtys[i][\p[i]]}[i \in \iSet]} &= \phty{\phi}{\d{\gtys[i][\p \cdot \p[i]]}[i \in \iSet]}\\
  \jtf{\phi}{\sum_i \phty{\phi[][i]}{\d{\gtys[j][\p[j]]}[j \in \jSet_i]}} &= \phty{\phi \land (\bigwedge_{i} \phi[][i]) \land (\sum_i \sum_{j \in \jSet_i} \p[j]  = 1)
     }{ \bigcup_{i} \d{\gtys[j][\p[j]]}[j \in \jSet_i] }
  \end{align*}

\begin{flushleft}
      \framebox{$ \reoder{\gtys}{\gtysp}, \reoder{\gtya}{\gtyb}$}
\end{flushleft}
  \begin{mathpar}
   \inference{}{\reoder{\rtype}{\rtype}} \and
    \inference{}{\reoder{\btype}{\btype}} \and
    \inference{}{\reoder{\?}{\?}} \and
    \inference{\reoder{\gtys[2]}{\gtys[1]} & \reoder{\gtya[1]}{\gtya[2]} }
    {\reoder{\gtys[1] -> \gtya[1]}{\gtys[2] -> \gtya[2]}} \and
      \inference{
        \couplingLift{\gtya}{\gtyb}{\reoderSym}
      }
      {\reoder{\gtya}{\gtyb}}
    \and
    \end{mathpar}

  \begin{definition}[Well-formedness of types] 
    \begin{mathpar}
      \inference{}{\jtf{}{\rtype}} \and
      \inference{}{\jtf{}{\btype}} \and
      \inference{}{\jtf{}{\?}} \and
      \inference{ \jtf{}{\gtys} & \jtf{}{\gtya}}
      {\jtf{}{\gtys -> \gtya}} \and
      \inference{
       \TV{\{\p[i]  \mid i \in \iSet \}} \subseteq \FV(\phi)  &  \satisfiable{\phi}{\sum_{i \in \iSet} \p[i] = 1} &
       \forall i \in \iSet. \jtf{}{\gtys[i]}  \!
      }
      { \jtf{}{ \phty{\phi[][]}{ \d{\gtys[i][\p[i]]}}[i \in \iSet] } } \and 
      \end{mathpar}
    \label{def:well-formed-target2}
  \end{definition}
  \caption{\tlang: Type system.}
  \label{fig:target-type-system2}
  \end{figure}

\begin{figure}[t]
  \begin{displaymath}
  \begin{array}{r@{\hspace{0.3em}}c@{\hspace{0.8em}}l@{\hspace{0.8em}}l}
     \V & ::= &  \dt{\pt{\tv[i]}[\p[i]]~|~ i \in \iSet} & \text{(distribution values)}
  \end{array}
  \end{displaymath}
  \begin{flushleft}
    \framebox{$\rrctx[\phi[]] \tm \nreds{}{\j} \rctx[\phi] \V$}
    \end{flushleft}
    \begin{mathpar}
      \inference[\trg{(\mathit{Dlet})}]{\rrctx[\phii[]] \tm \nreds{}{\j[1]} \rrctx[\phii[']] \d{ \tv[i][\p[i]]}[i \in \iSet] &
       \forall i. ~\rrctx[\phii[']] \ssub{\tn[]}{\tv[i]}{\tx} \nreds{}{\j[2]} \rctx[\phi[][i]] \V[i] 
      }
      { \rrctx[\phii[]] \trg{\lett{\tx}{\tm}{\tn}} \nreds{}{\j[1]+\j[2]+1} \rctx[(\bigwedge_{i \in \iSet} \phi[][i])] \ssum[i \in \iSet] \p[i] \cdot \V[i]  }
      \and
      \inference[\trg{(\mathit{D}\oplus)}]{\rrctx[\phi[][]] {\tm} \nreds{}{\j[1]} \rctx[\phi[][1]] \V[1] &
      \rrctx[\phi[][]]  {\tn} \nreds{}{\j[2]} \rctx[\phi[][2]] \V[2] & 
      \phip = \phi[][1] \land \phi[][2] \land \phi }
      {\rrctx[\phi[][]]  \trg{ {\tm} \ppsum[\phi] {\tn} } \nreds{}{\j[1]+\j[2]+1} \rctx[\phip] \p[1] \cdot \V[1] + \p[2] \cdot \V[2] }\and
      \inference[\trg{(\mathit{Derr})}]{
        \gtya = \dctx[\phi] \d{\gtys[i][\p[i]]}[i \in \iSet]
      }{
        \rrctx[\phi[]] \errort{\gtya} \nreds{}{1}
        \rctx[\phi] \d{\errort{\gtys[i]}^{\p[i]}}[i  \in \iSet] 
      }
      \and
      \inference[\trg{(+)}]{ \ev[1] \trans{} \ev[2] = \ev[3] & \trg{r_3} = \trg{r_1} + \trg{r_2}   }
      { \rrctx[\phii] \trg{\add{\ev[1] \asc{r_1}{\rtype} }{ \asc{\ev[2] r_2}{\rtype} }} \nreds{}{1}  \rctx[\cdot] \dt{\pt{\trg{\asc{\ev[3] r_3}{\rtype}}}}
      } \and 
      \inference[\trg{(Dift)}]{
         \tm \nreds{}{\j} \rctx[\phi]  \V 
      }{ \rrctx[\phii] \trg{\ite{ \asc{\ev \ttt}{\btype} }{\tm}{\tn}} 
          \nreds{}{\j+1}  \rctx[\phi]  \V
       } \and 
      \inference[\trg{(Diff)}]{
         \tn \nreds{}{\j} \rctx[\phi]  \V 
      }{ \rrctx[\phii] \trg{\ite{ \asc{\ev \fff}{\btype} }{\tm}{\tn}} 
          \nreds{}{\j+1}  \rctx[\phi]  \V
       } \and 
      \inference[\trg{(Dv)}]{}
      {\rrctx[\phii[]] \tv \nreds{}{1} \rctx[\cdot] \dt{\pt{\tv}} }
      \and
      \inference[\trg{(\mathit{Dapp})}]{ \rrctx[\phii[]] \trg{ \asc{\cdom(\ev[1])\tv}{\gtysp} } \nreds{}{1} \rctx[\cdot] \dt{\pt{\tw}} &
      \rrctx[\phii[']]  \ssub{\trg{(\asc{\ccod(\ev[1])\tm[]}{\gtya})}}{\tw}{\tx}  \nreds{}{\j}  \rctx[\phi] \V }
      {\rrctx[\phii[]] \trg{(\asc{\ev[1] (\lambda \tx: \gtysp. \tm)}{\gtys -> \gtya}) \; \tv} \nreds{}{\j+1} \rctx[\phi] \V } \and
      \inference[\trg{(\mathit{D}\mathord{::}\gtys)}]{}
      { \rrctx[\phii[]] \trg{\asc{\ev[2](\asc{\ev[1] \tu}{\gtys})}{\gtysp}} \nreds{1}{1}  \rctx[\cdot]
        \begin{cases}
          \dt{\pt{\trg{(\asc{\ev[3] \tu }{\gtysp})}}} &  \text{If}~ \ev[1] \trans{} \ev[2] = \ev[3]  \\
          \dt{\pt{\errort{\gtys}}} & \text{otherwise}
        \end{cases}
      } 
      \and 
       \inference[\trg{(\mathit{D}\mathord{::}\gtya)}]{
        \rrctx[\phi[][1]] \tm \nreds{}{\kp} \rctx[\phi[][1]] \d{\tv[i][\p[i]]}[i \in \iSet]
        &  |-d \phty{\phi[][1]}{\d{\tv[i][\p[i]]}[i \in \iSet] }: \gtyap &
        %\evdp = \initReorder{\gtyap}{\gtya} &
        \evd |- \gtya \rel \gtyb
         &
         \gtyb = \phty{\pphi[3]}{ \d{\gtysp[j][\p[j]]}[j \in \jSet]}
        }
        { \rrctx[\phi[][1]] \trg{ (\asc{\evd \tm}{ \gtyb }) } \nreds{}{\kp+1}
        \rctx[\phi[][2] ]  
        \begin{cases}
          \ssum[\kk \in \kSet] 
          \cww[k] \cdot \V[\kk]  
          & \begin{block}
            \text{If}~ (\initReorder{\gtyap}{\gtya}) \trans{} \evd = \phty{\pphi[2]}{\d{\ev[k][\cww[k]]}[k \in \kSet]}\\
            \text{where}~  \forall k \in \kSet,   
             i = \projl{\cww[k]}, j = \projr{\cww[k]}. 
             \trg{ (\asc{\ev[\kk] \tv[i]}{\gtysp[j]}) } \nreds{}{1} \rctx[\cdot] \V[\kk] 
          \end{block}\\
          \errort{\gtyb} & \text{otherwise}
        \end{cases}
        } 
        \and 
        % \change{
        \inference[\trg{(Dmon)}]{ \tm \nreds{}{k} \V}
     { \tm \nreds{}{k+1} \V}
    %  }
        %
      \end{mathpar}
\caption{\tlang: Distribution Semantics}
\label{fig:target-reduction-dis2}
\end{figure}

\begin{figure}[t]
  \begin{flushleft}
    \framebox{$\elaborate{\Gamma |-t \sm}{\tm}{\sgtys}$, $\elaborate{\Gamma |-d \sm}{\tm}{\sgtya}$}
    \end{flushleft}
    \begin{mathpar}
      \inference[\src{(E\ssr)}]{ \ev = \rtype \meet \rtype }
      {\ela{\Gamma |-t \ssr }{\asc{\ev \ttr }{\rtype}}{\rtype}} \and
      % %
      \inference[\src{(E\ssb)}]{\ev = \btype \meet \btype}
      {\ela{\Gamma |-t \ssb }{\asc{\ev \ttb }{\btype}}{\btype}} \and
      % %
      \inference[\src{(E\sx)}]{\Gamma(\sx) = \sgtys}
      {\ela{\Gamma |-t \sx }{ \tx }{\sgtys}} \and
      % %
      \inference[\src{(E\sv)}]{\ela{\Gamma |-t \sv}{\tv}{\sgtys} }
      {\ela{\Gamma |-t \sv}{\tv}{\ds{\sgtys[][1]}} } \and
      \inference[\src{(E\lambda)}]{\ela{\Gamma, \sx :\sgtys |-t \sm}{\tm}{\sgtya} 
      & \ev = \liftT{\sgtys -> \sgtya} \meet \liftT{\sgtys -> \sgtya}
      & \jtf{}{\sgtys} }
      {\ela{\Gamma |-t \src{\lambda \sx : \sgtys. \sm} }{  \trg{ \asc{\ev \lambda \tx: \liftT{\sgtys}. \tm}{ \liftT{\sgtys -> \sgtya}} } }{\sgtys -> \sgtya}} \and
      %
      % %
      \inference[\src{(Eapp)}]{
        \ela{\Gamma |-t \sv}{\tv}{\sgtys}  &  
        \ela{\Gamma |-t \sw}{\tw}{\sgtysp} &  \sgtysp  \rel \cdom(\sgtys) \\
        \ev[1] = \liftT{\sgtysp} \meet \liftT{ \cdom(\sgtys) }  & \ev[2] = \liftT{\sgtys} \meet \liftT{\cdom(\sgtys)->\ccod(\sgtys)}
      }
      {\ela{\Gamma |-d \sv \; \sw}{ \trg{\lett{ \tx }{\asc{\ev[1] \tw }{ \liftD{\cdom(\sgtys)} } }{ \lett{\ty}{\asc{\ev[2] \tv}{ \liftT{\cdom(\sgtys) -> \cod(\sgtys)} }}{\ty \;\tx} } } }{\ccod(\sgtys)}} \and
      \inference[\src{(E\oplus)}]{
        \ela{\Gamma |-d \sm}{\tm}{\sgtya} &  
        \ela{\Gamma |-d \srcn}{\tn}{\sgtyb} &
        \evd[1] = \liftD{\sgtya} \meet \liftD{\sgtya} \\
        \evd[2] = \liftD{\sgtyb} \meet \liftD{\sgtyb} &
        \cww[1], \cww[2] \; \text{fresh} &
        \liftP{\pty}[\cww[1]] = \phi[][1] & \liftP{(1-\pty)}[\cww[2]] = \phi[][2] 
        \\ \phi = \phi[][1] \land \phi[][2] \land (\cww[1] + \cww[2] = 1) &
        \evd = \jtf{\phi}{(\cww[1] \cdot \evd[1] + \cww[2] \cdot \evd[2])} \meet \liftD{\pty \cdot \sgtya + (1- \pty) \cdot \sgtyb} 
      }
      {\ela{\Gamma |-d \src{ { \sm} \pssum { \srcn} } }{ \trg{\asc{\evd {\tm} \ppsum[\phi][\cww[1]][\cww[2]] {\tn}}{ \liftD{\pty \cdot \sgtya + (1- \pty) \cdot \sgtyb} } } }{ 
        \pty \cdot \sgtya + (1- \pty) \cdot \sgtyb}} \and
      \inference[\src{(Elet)}]{
        \ela{\Gamma |-d \sm}{\tm}{\d{\sgtys[i][\pty[i]]}[i \in \iSet]}  \\
        \forall i\in\iSet.~\ela{\Gamma, \sx : \sgtys[i] |-d \srcn}{\tn}{\sgtya[i]} &
        \cww[i] \; \text{fresh} &  \evd = \sum_{i \in \iSet} \liftP{\pty[i]}[\cww[i]] \cdot \liftD{\sgtya[i]} \meet \liftD{\sum_{i \in \iSet} \pty[i] \cdot \sgtya[i]} 
      }
      {\elaborate{\Gamma |-d \src{\lett{\sx}{\sm}{\srcn}} }{ 
        \trg{ \asc{ \evd \lett{\tx}{\tm}{\tn}}{ \liftD{\sum_{i \in \iSet} \pty[i] \cdot \sgtya[i]} }  } }{
          \sum_{i \in \iSet} \pty[i] \cdot \sgtya[i]}} \and
      \inference[\src{(E\mathord{::}\sgtya)}]{\elaborate{\Gamma |-d \sm }{\sm }{\sgtya} &  \sgtya \rel  \sgtyb & \evd = \liftD{\sgtya} \meet \liftD{\sgtyb}  & \jtf{}{\sgtyb}
      }
      {\elaborate{\Gamma |-d \src{\asc{ \sm}{\sgtyb} } }{ \trg{ \asc{\evd \tm}{ \liftD{\sgtyb} } } }{\sgtyb} } \and
      \inference[\src{(E\mathord{::}\sgtys)}]{\ela{\Gamma |-t \sv}{\tv}{\sgtys} &  \sgtys \rel \sgtysp & \ev = \liftT{\sgtys} \meet \liftT{\sgtysp}  & \jtf{}{\sgtysp} }
      {\ela{\Gamma |-t \src{\asc{\sv}{\sgtysp}}}{\asc{\ev \tv}{ \liftT{\sgtysp} }}{\ds{\sgtysp[][1]}}}  \and
      \inference[\src{(E+)}]{
        \ela{\Gamma |-t \sv}{\tv}{\sgtys}   & \sgtys \rel \rtype   &  
        \ela{\Gamma |-t \sw}{\tw}{\sgtysp} & \sgtysp \rel \rtype \\
        \ev[1] = \liftT{\sgtys} \meet \rtype & \ev[2] = \liftT{\sgtysp} \meet \rtype
      }
      {\ela{\Gamma |-d \src{\add{\sv}{\sw}}}{\trg{  \lett{\tx}{ \asc{\ev[1] \tv }{\rtype} }{ \lett{\ty}{ \asc{\ev[2] \tw }{\rtype} }{\add{\tx}{\ty}}  } }}{\ds{\rtype^1}}  } \and
      % %  
      \inference[\src{(Eif)}]{
        \ela{\Gamma |-t \sv}{\tv}{\sgtys} & \sgtys \rel \btype \\
        \ela{\Gamma |-d \sm }{\tm}{\sgtya} & 
        \ela{\Gamma |-d \srcn }{\tn}{\sgtya} & \ev= \liftT{\sgtys} \meet \btype
      }
      { \ela{\Gamma |-d \src{\ite{\sv}{\sm}{\srcn}}}{
        \trg{\lett{\tx}{ \asc{\ev \tv }{\btype}}{ \ite{\tx}{\tm}{\tn}} }
      }{ \sgtya}  } \and
    % 
    % \begin{tabular}{ll}
    %   $\evd[1] + \evd[2]  $ &   $ \phi \land \evd $ \\
    %   $ = \dctx[\phi[][1]][ \d{\ev[i][\p[i]]}[i \in \iSet]] + \dctx[\phi[][2]][\d{\ev[j][\p[j]]}[j \in \jSet]] $  &    $ = \phi \land (\dctx[\phip][\dt{ \ev[i][\p[i]] }] ) $ \\                                                                                                        
    %   $ = \dctx[\phi[][1] \land \phi[][2] \land (\ssum[i] \p[i] + \ssum[j] \p[j] = 1)][ \dt{\ev[i][\p[i]]} \union  \dt{\ev[j][\p[j]]} ] $ & $ =  \dctx[\phi \land \phip][\dt{ \ev[i][\p[i]] }]  $\\
    % \end{tabular} 
    \end{mathpar}
  \caption{Elaboration from \glang to \tlang.}
  \label{elaboration2}
  \end{figure}

\begin{figure}[t]
  \begin{mathpar}
    \inference{}
    {\rtype \gprec \rtype} \and
    \inference{}
    {\btype \gprec \btype} \and
    \inference{}
    {\gtys \gprec \? } \and
    \inference{
      \gtys \gprec \gtysp &
      \gtya \gprec \gtyb
    }
    {\gtys -> \gtya \gprec \gtysp -> \gtyb } 
    \and
    \inference{
  \forall\: \FV(\phi[][1]).~  \phi[][1]  \implies \exists~\FV(\phi[][2]) \cup
  \{\cww[ij] ~|~ i \in \iSet \land j \in \jSet\}. % \\ \phi[][2] \land 
  % \bigl(\cjudg{\d{\cww[ij]}[i \in \iSet \land j \in \jSet]}{\d{\gtys[i][\p[i]]}[i \in \iSet]}{\gprec}{\d{\gtys[j][\p[j]]}[j \in \jSet]}\bigr) 
  \\
  \cjudgext{\d{\cww[ij]}[i \in \iSet \land j \in \jSet]}{\d{\gtys[i][\p[i]]}[i \in \iSet]}{\, \gprec\,}{\d{\gtys[j][\p[j]]}[j \in \jSet]}{\phi[][1]}{\phi[][2]}{}%
 }
 {\db{\phi[][1]}{\gtys[i][\p[i]]}[i \in \iSet] \gprec
  \db{\phi[][2]}{\gtys[j][\p[j]]}[j \in \jSet]}
    \end{mathpar}
    \begin{mathpar}
      % \inference[\trg{(\mathord{\gprec}\V)}]{
      % \exists \fcw{i}{j}. ~ \forall i \in \iSet, \sum_{ j \in \jSet} \fcw{i}{j} = \p[i] &  
      % ~ \forall j \in \jSet, \sum_{ i \in \iSet} \fcw{i}{j} = \p[j]  \\
      % \forall i,j \in \iSet \times \jSet, (\fcw{i}{j} > 0 =>  \tv[i] \gprec \tvp[j] )
      % }
      % {
      %  \dt{\tv[i][\p[i]] ~|~ i \in \iSet}   \gprec 
      %  \dt{\tvp[j][\p[j]] ~|~ j \in \jSet}
      % } \and
      \inference{
  \forall\: \FV(\phi[][1]).~  \phi[][1]  \implies \exists~\FV(\phi[][2]) \cup
  \{\cww[ij] ~|~ i \in \iSet \land j \in \jSet\}. % \\ \phi[][2] \land 
  % \bigl(\cjudg{\d{\cww[ij]}[i \in \iSet \land j \in \jSet]}{\d{\gtys[i][\p[i]]}[i \in \iSet]}{\gprec}{\d{\gtys[j][\p[j]]}[j \in \jSet]}\bigr) 
  \\
  \cjudgext{\d{\cww[ij]}[i \in \iSet \land j \in \jSet]}{\d{\tv[i][\p[i]]}[i \in \iSet]}{\, \gprec\,}{\d{\tv[j][\p[j]]}[j \in \jSet]}{\phi[][1]}{\phi[][2]}{}%
 }
 {\db{\phi[][1]}{\tv[i][\p[i]]}[i \in \iSet] \gprec
  \db{\phi[][2]}{\tv[j][\p[j]]}[j \in \jSet]} \and 
      \inference{}
      {
        \tx \gprec \tx
      } \and
      \inference{
      }
      {
        \ttr \gprec \ttr
      } \and
      \inference{
      }
      {
        \ttb \gprec \ttb
      } \and
    %   %
      \inference{
        |-d \tm : \gtysp &  \gtys \gprec \gtysp
      }
      {
        \errort{\gtys} \gprec \tm
      } \and
     %  %
      \inference{
        |-d \tm : \gtyb &   \gtya \gprec \gtyb
      }
      {
        \errort{\gtya} \gprec \tm
      } \and
      \inference{
        \gtys \gprec \gtysp &
        \tm \gprec \tmp & 
      }
      {
        \trg{(\lambda \tx:\gtys. \tm)} \gprec \trg{(\lambda \tx:\gtysp. \tmp)}
      } \and
      \inference{
      \ev \gprec \evp &
      \tv \gprec \tvp &
      \gtys \gprec \gtysp
      }
      {
       \trg{ \asc{\ev \tv}{\gtys} } \gprec \trg{\asc{\evp \tvp}{\gtysp}
      }} \and
      \inference{
      \evd \gprec \evdp &
      \tm \gprec \tn &
      \gtya \gprec \gtyb
      }
      {
        \trg{\asc{\evd \tm}{\gtya}} \gprec \trg{\asc{\evdp \tn}{\gtyb}}
      } \and
      \inference[\trg{(\mathord{\gprec}\oplus)}]{
         \tm  \gprec \tmp  &
        \tn \gprec \tnp
       &  \forall \FV(\phi[][1]). \phi[][1] => \phi[][2]
       }
      {
        \trg{\tm \ppsum[\phi[][1]] \tn}  \gprec 
        \trg{\tmp \ppsum[\phi[][2]] \tnp} 
      } \and
      \inference{
        \tv \gprec \tvp &
        \tw \gprec \twp &
      }
      {
        \tv \; \tw \gprec \tvp \; \twp 
      } \and
      \inference{
        \tm \gprec \tmp &
        \tn \gprec \tnp &
      }
      {
        \trg{\lett{\tx}{\tm}{\tn}}  \gprec \trg{\lett{\tx}{\tmp}{\tnp}} 
      } \and
      \inference{
      \tv \gprec \tw &
      \tw \gprec \twp &
      }
      {
        \trg{\add{\tv}{\tw}} \gprec \trg{\add{\tvp}{\twp}}
      } \and
      \inference{
      \tv \gprec \tvp &
      \tm \gprec \tmp &
      \tn \gprec \tnp &
      }
      {
        \trg{\ite{\tv}{\tm}{\tn}} \gprec \trg{\ite{\tvp}{\tmp}{\tnp}}
      } 
      \end{mathpar}
      \begin{mathpar}
        \inference{}{
           \cdot \gprec \cdot
         } 
         \and
         \inference{
           \Gamma_{1} \gprec \Gamma_{2} & 
           \gtys \gprec \gtysp
         }{
           \Gamma_{1},\tx:\gtys \gprec \Gamma_{2}, \tx:\gtysp
         } 
       \end{mathpar}
  \caption{Precision of \tlang.}
  \label{fig:term-precision2}
  \end{figure}

\subsection{Type System}

The type system of \tlang is presented in Figure~\ref{fig:target-type-system2}.

\begin{definition}[Well-formedness of contexts] 
  \begin{mathpar}
    \inference{}{\jtf{}{ \cdot }} \and
    \inference{ \jtf{}{\gtys} }{\jtf{ }{ \Gamma  , \tx : \gtys}} \and
    \end{mathpar} 
  \label{def:well-formed-target-ct}
\end{definition}

% \begin{lemma}[Well-formed (consistency)]~ \label{lemma:wellformed-consistency-target}
%   % \change{
%   \begin{enumerate}
%   \item If $\justify{\gtys} \ad \gtys \rel \gtysp $ then $\justify{\gtysp}.$ 
%   \item If $\justify{\gtya} \ad \gtya \rel \gtyb $ then $\justify{\gtyb}.$ 
%   \end{enumerate}
%   % }
% \end{lemma} 
% \begin{proof}~
%   \begin{enumerate}
%     \item This case is trivial by the induction hypothesis. 
%     \item Suppose $\gtya = \phty{\pphi[i]}{\d{ \gtys[i][\p[i]] }[i \in \iSet]} \ad
%     \gtyb =  \phty{\pphi[j]}{\d{ \gtysp[j][\p[j]] }[j \in \jSet]}$ \\
%     $\since{
%       \justify{\gtya}
%     }\\
%     \so{
%       \jtf{\pphi[i]}{ \ssum[i] \p[i] = 1} 
%     }\\
%     \so{
%       \forall i. \justify{\gtys[i]}
%     }\\
%     \since{
%       \gtya \rel \gtyb
%     }\\
%     \so{
%       \ssum[i] \p[ij] = \p[j]
%     }\\
%     \so{
%       \ssum[j] \p[ij] = \p[i] 
%     }\\
%     \so{
%       \ssum[j] \p[j] 
%     }\\
%     \eq{
%       \ssum[j] \ssum[i] \p[ij]
%     }\\
%     \eq{
%       \ssum[i] \ssum[j] \p[ij]
%     }\\
%     \eq{
%       1
%     }\\
%     \sentence{By the induction hypothesis,} \\
%     \so{
%       \forall j. \justify{\gtysp[j]}
%     } \\
%     \so{
%      \justify{\gtyb}
%     }$
%   \end{enumerate}
% \end{proof}

\begin{lemma}~ \label{lemma:welltyped-wellform-target}
  % \change{
  \begin{enumerate}
  \item If $ \Gamma |-t \tv : \gtys $ then $\justify{\gtys}.$ 
  \item If $ \Gamma |-d  \tm : \gtya $ then $\justify{\gtya}.$
  \end{enumerate}
  % }
\end{lemma}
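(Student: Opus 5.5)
The proof goes by simultaneous induction on the typing derivations $\Gamma |-t \tv : \gtys$ and $\Gamma |-d \tm : \gtya$ of \tlang (Figure~\ref{fig:target-type-system2}), mirroring Lemma~\ref{lemma:welltyped-wellform-static} and Lemma~\ref{lemma:welltyped-wellform-source}. We carry along the invariant that $\Gamma$ is well-formed (Definition~\ref{def:well-formed-target-ct}); it is preserved when typing extends $\Gamma$ in (G$\lambda$) by the premise $\jtf{}{\gtys}$, and in (Glet) by the induction hypothesis on the bound term. With this invariant, the variable case is immediate.

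Several cases are direct:
\begin{itemize}
\item The error rules (Gerr$_{\gtys}$) and (Gerr$_{\gtya}$) carry well-formedness as a premise.
\item The ascription rules (G$\mathord{::}\gtys$) and (G$\mathord{::}\gtya$) require $\jtf{}{\gtysp}$ and $\jtf{}{\gtyb}$. For (G$\mathord{::}\gtys$) we also need $\jtf{}{\ds{\gtysp[][1]}}$, which holds because $\cww = 1$ is satisfiable and the single tagged variable occurs in the closing formula.
\item (Gv) is handled the same way, using the induction hypothesis on the simple type.
\item (G$+$) yields $\ds{\rtype^1}$, which is trivially well-formed.
\item (Gif) follows directly from the induction hypothesis on a branch.
\item (G$\lambda$) combines its premise $\jtf{}{\gtys}$ with the induction hypothesis on the body.
\item (Gapp) uses the induction hypothesis $\jtf{}{\gtys -> \gtya}$, whose inversion gives $\jtf{}{\gtya}$.
\item (GV) follows from the induction hypotheses on each value, together with satisfiability of the closing formula. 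Where that formula constraint is not explicit in the rule, we assume the configuration is produced by reduction, and hence by typing.
\end{itemize}

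The substantive cases are (Glet) and (G$\oplus$), where the result type is built by the metafunctions $\p \cdot \gtya$ and $\sugarconj{\phi}{\sum_i \gtya[i]}$.

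Each of the three conditions of well-formedness must be checked for the combined type.

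First, every tagged variable of the result must appear in its formula. The scaling metafunction introduces fresh variables $\cww[i]$ together with the conjuncts $\cww[i] = \p \cdot \p[i]$. The summation metafunction only retags variables and conjoins all component formulas, so every variable is still mentioned.

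Second, every simple type in the result is well-formed. The summation only collects the simple types $\gtys[j]$ of the $\gtya[i]$, and these are well-formed by the induction hypothesis.

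Third, the formula must be satisfiable together with the sum-to-one constraint. This is the main obstacle. For (Glet), the induction hypothesis on $\tm$ gives a satisfying assignment of $\phi$ with $\sum_i \p[i] = 1$. The induction hypotheses on $\tn$ give, for each $i$, a satisfying assignment of $\phi[][i]$ with the inner probabilities summing to $1$.

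These assignments must be combined into one joint model. They can be merged because the variable sets are disjoint: the sub-derivations use fresh names, up to $\alpha$-renaming of the tagged variables. We then set each fresh scaled variable to the product $\p[i] \cdot \p[j]$, which lies in $[0,1]$. The total is then
\[
\sum_i \sum_j \p[i] \cdot \p[j] = \sum_i \p[i] = 1 .
\]

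(G$\oplus$) is the same argument with two components. Its premise $\satisfiablee{\phi => \p[1]+\p[2]=1}$ should be read as supplying a model of $\phi$ with $\p[1] + \p[2] = 1$. If the rule is read literally, as plausibility of an implication, we instead invoke well-typedness via elaboration, since (E$\oplus$) builds $\phi$ containing $\cww[1]+\cww[2]=1$.

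The delicate points are the disjointness of variables across sub-derivations, which we justify by freshness, and the interpretation of the satisfiability side condition.
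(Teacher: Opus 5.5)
Your proposal is correct and follows the paper's proof. Both are an induction on the \tlang typing derivation in which every case except (Glet) and (G$\oplus$) follows from a premise or the induction hypothesis. You are more explicit than the paper, which dispatches variables by remarking that they are bound by $\lambda$/let at well-formed types and handles (Glet) and (G$\oplus$) in a line each. The conventions you flag are exactly what the paper's argument uses silently: well-formed contexts, variables renamed apart, and reading the (G$\oplus$) side condition as $\mathit{sat}(\Phi \land \varrho_1+\varrho_2=1)$.

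Two small remarks. The (GV) case is not needed, since distribution values are not terms. Your elaboration fallback for the literal reading of (G$\oplus$) would cover only elaborated terms, not arbitrary well-typed \tlang terms; under the literal reading an unsatisfiable annotation formula makes the lemma fail, so that reading is essential rather than optional.
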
 
\begin{proof}~
  \begin{enumerate}
    \item The proof follows by induction on the typing derivation.
    \begin{case}[$\tv = \trg{r, b} $]
      $\rtype$ and $\btype$ types are well-formed. 
    \end{case}
    \begin{case}[$\tv = \trg{\lambda \tx:\gtys. \tm} $]
      $\\
      \since{ \\
      \inference[(G$\lambda$)]{\Gamma, \tx :\gtys |-t \tm : \gtya  & \jtf{}{\gtys}}
      {\Gamma |-t \trg{ \lambda \tx:\gtys. \tm : \gtys -> \gtya} }
      } \\
      \sentence{
        By the induction hypothesis,
      } \\
      \so{
       \justify{\gtya}
      } \\
      \so{
        \justify{\gtys -> \gtya}
      }$
    \end{case}
    \begin{case}[$ \tv = \tx$]
      $ 
      \sentence{
        variables x come from lambda and let terms with well-formed types.
      }
      $
    \end{case}

    \item The proof follows by induction on the typing derivation.
    \begin{case}[$ \tm = \asc{\tv}{\gtysp} $]
      $ \\
      \since{
        \inference[(G$\mathord{::}\gtys$)]{\Gamma |-t \tv : \gtys &  \ev |-t \gtys \rel \gtysp  & \jtf{}{\gtysp}}
        {\Gamma |-t \trg{ \asc{\ev \tv}{\gtysp}} : \ds{\gtysp[][1]}  }
      } \\
      % \sentence{
      %   By the induction hypothesis,
      % } \\
      % \so{
      %   \justify{\gtysp}
      % }\\
      % \since{
      %   \gtysp \rel \gtys
      % }\\
      % \sentence{By Lemma}~\ref{lemma:wellformed-consistency-target}, \\
      \so{
        \justify{\gtysp}
      }
      $
    \end{case}
    \begin{case}[$ \tm = \asc{\tv}{\gtyb} $]
      $ \\
      \since{
        \inference[(G$\mathord{::}\gtya$)]{\Gamma |-d \tm : \gtya & \evd |-d \gtya \rel \gtyb  & \jtf{}{\gtyb}
          }
          {\Gamma |-d \trg{ \asc{\evd \tm}{\gtyb} } : \gtyb}
      } \\
      % \sentence{
      %   By the induction hypothesis,
      % } \\
      % \so{
      %   \justify{\gtyb}
      % }\\
      % \since{
      %   \gtya \rel \gtyb
      % }\\
      % \sentence{By Lemma}~\ref{lemma:wellformed-consistency-target}, \\
      \so{
        \justify{\gtyb}
      }
      $
    \end{case}
    \begin{case}[$ \tm = \tv\; \tw $]
      $ \\
      \since{
        \inference[(Gapp)]{
          \Gamma |-t \tv : \gtys -> \gtya  &  
          \Gamma |-t \tw : \gtys & 
        }
        {\Gamma |-d \tv\;\tw : \gtya}
      } \\
      \sentence{
        By the induction hypothesis,
      } \\
      \so{
        \justify{\gtys}
      }\\
      \so{
        \justify{\gtys -> \gtya}
      }\\
      \so{
        \justify{\gtya}
      }
      $
    \end{case}
    \begin{case}[$ \tm = { \tm} \ppsum { \tn} $]
      $ \\
      \since{
        \inference[(G$\oplus$)]{
          \Gamma |-d \tm : \gtya  &  
          \Gamma |-d \tn : \gtyb & \satisfiablee{\phi => \p[1]+\p[2] = 1}
        }
        {\Gamma |-d \trg{ { \tm} \ppsum { \tn} } : \jtf{\phi}{\p[1] \cdot \gtya + \p[2] \cdot \gtyb}} 
      } \\
      \sentence{
        By the induction hypothesis,
      } \\
      \so{
        \justify{\gtya}
      }\\
      \so{
        \justify{\gtyb}
      }\\
      \since{
        \satisfiablee{\phi => \p[1]+\p[2] = 1}
      }\\
      \so{
        \justify{ (\jtf{\phi}{\p[1] \cdot \gtya + \p[2] \cdot \gtyb}) }
      } 
      $
    \end{case}
    \begin{case}[$ \tm = \lett{\tx}{\tm}{\tn} $]
      $ \\
      \since{
        \inference[(Glet)]{
          \Gamma |-d \tm : \phty{\phi}{\d{\gtys[i][\p[i]]}[i \in \iSet]}   \\
          \forall i\in\iSet.~\Gamma, \tx : \gtys[i] |-d \tn : \gtya[i]
        }
        {\Gamma |-d \trg{ \lett{\tx}{\tm}{\tn} } : \jtf{\phi}{\sum_{i \in \iSet} \p[i] \cdot \gtya[i]}}
      } \\
      \sentence{
        By the induction hypothesis,
      } \\
      \so{
        \justify{\d{\gtys[i][\p[i]]}[i \in \iSet]  }
      }\\
      \so{
       \forall i. \justify{\gtys[i]}
      }\\
      \so{
        \sum_{i \in \iSet} \p[i] = 1 
      }\\
      \so{
        \justify{\gtya[i]}
      }\\
      \so{
        \justify{ \sum_{i \in \iSet} \p[i] \cdot \gtya[i] }
      } 
      $
    \end{case}
    \begin{case}[$\tm = \trg{\add{\tv}{\tw}}$]
    $\\
    \since{
    \inference[(G$+$)]{
    \Gamma |-t \tv :  \rtype   &  
    \Gamma |-t \tw :  \rtype 
    }
    {\Gamma |-d \trg{ \add{\tv}{\tw} } : \ds{\rtype^1} }
    }\\
    \since{\justify{\rtype} } \\
    \so{\justify{ \ds{\rtype^1} } }
    $
    \end{case}
    \begin{case}[$\tm = if$]
      $\\
      \since{
                      \inference[(Gif)]{
            \Gamma |-t \tv : \btype \\
            \Gamma |-d \tm : \gtya & 
            \Gamma |-d \tn : \gtya
          }
          {\Gamma |-d \trg{ \ite{\tv}{\tm}{\tn} } : \gtya }
      }\\
      \sentence{By the induction hypothesis,}\\
      \so{\justify{\gtya}}
      $
      \end{case}
    \begin{case}[$\tm = (Gerr)$]
      $\\
      \since{
        \inference[(Gerr)]{
         \jtf{ }{\gtys}
        }{ \Gamma |-t \errort{\gtys} : \gtys  }
      } \\
      \so{
        \jtf{}{\gtys}
      }
      $
    \end{case}
      \begin{case}[$\tm = Gerr$]
       $\\ 
       \since{
          \inference[(Gerr)]{
            \jtf{}{\gtya}
          }
          {\Gamma |-t \errort{\gtya} : \gtya}
      } \\
      \so{
        \jtf{}{\gtya}
      }
      $
      \end{case}
  \end{enumerate}
\end{proof}

\subsection{Equivalences with AGT Definition}
\couplingeq*
\begin{proof} 
  $~$
\begin{enumerate}
  \item This case is trivial.
  \item
  \begin{itemize}
  \item $\forall \tys \in \supp(\tya[1]). \tya[1](\tys) = \tya[2](\tys) 
  \; => \; 
 \exists \CC. \cjudg{\CC}{\tya[1]}{=}{\tya[2]}$ \\
  Suppose $ \tya[1] = \dt{\tys[i][\ps[i]] } \ad \tya[2] = \dt{\tys[j][q_j]}. $ \\
  $
  \since{\forall \tys \in \supp(\tya[1]). \tya[1](\tys) = \tya[2](\tys) } \\
  \so{ \forall \tys \in \supp(\tya[i][\ps[i]]). 
       \ssum[i | \tys[i] = \tys] \ps[i] = 
       \ssum[j | \tys[j] = \tys] q_j = \ps[\tys] } \\
  \since{\\
      \inference{
        \CC = \d{\cw{i}{j}}[i \in \iSet \land j \in \jSet] &
        \DA = \d{\pt{a_i}[\ps[i]]}[i \in \iSet] &
        \DB = \d{\pt{b_j}[q_j]}[j \in \jSet]\\
        \forall i \in \iSet. \sum_{j \in \jSet} \cw{i}{j} = \ps[i] \\ 
        \forall j \in \jSet. \sum_{i \in \iSet} \cw{i}{j} = q_j & 
        \forall i \in \iSet, j \in \jSet. \cw{i}{j} > 0 => a_i R b_j
      }{\cjudg{\CC}{\DA}{\mathbin{\clift{R}}}{\DB}}
  }\\
  \so{ \\
  \sentence{We need to show the following:}
  }\\
  {
    \exists \cw{i}{j}, 
    \forall i, \ssum[j] \cw{i}{j} = \ps[i] \ad \forall j, \ssum[i] \cw{i}{j} = q_j
  } \\
  \sentence{Suppose} \;
  \cw{i}{j} = 
  {
  \begin{cases}
    (\ps[i] \cdot q_j) / \ps[\tys] & \tys[i] = \tys[j] = \tys \\
    0 & \text{otherwise}
  \end{cases}
  } \\
  \so{\ssum[j] \cw{i}{j}} \\
  \eq{
   \ssum[j | \tys[i] = \tys[j]] (\ps[i] \cdot q_j)/ \ps[\tys] 
  } \\
  \since{
    \ssum[i | \tys[i] = \tys] \ps[i] = 
       \ssum[j | \tys[j] = \tys] q_j = \ps[\tys]
  } \\
  \eq{
    (\ps[i] \cdot  \ssum[j | \tys[i] = \tys[j]] q_j) / ( \ssum[j | \tys[i] = \tys[j]] q_j) 
  }  \\
  \eq{
    \ps[i]
  } \\$
  $
  \so{\ssum[i] \cw{i}{j}} \\
  \eq{
   \ssum[i | \tys[i] = \tys[j]] (\ps[i] \cdot q_j)/ \ps[\tys] 
  } \\
  \since{
    \ssum[i | \tys[i] = \tys] \ps[i] = 
       \ssum[j | \tys[j] = \tys] q_j = \ps[\tys]
  } \\
  \eq{
    (q_j \cdot  \ssum[i | \tys[i] = \tys[j]] \ps[i]) / ( \ssum[i | \tys[i] = \tys[j]] \ps[i]) 
  }  \\
  \eq{
    q_j
  } \\
  $
 \item  $\forall \tys \in \supp(\tya[1]). \tya[1](\tys) = \tya[2](\tys) 
 \; <= \; 
\exists \CC. \cjudg{\CC}{\tya[1]}{=}{\tya[2]}$ \\
$\since{ 
  \exists \CC. \cjudg{\CC}{\tya[1]}{=}{\tya[2]}
} \\
\so{
  \exists \cw{i}{j}, 
  \forall i, \ssum[j] \cw{i}{j} = \ps[i] \ad \forall j, \ssum[i] \cw{i}{j} = q_j
} \\
\sentence{
We need to show the following: 
} \\
{
  \forall \tys \in \supp(\tya[i][\ps[i]]). 
       \ssum[i | \tys[i] = \tys] \ps[i] = 
       \ssum[j | \tys[j] = \tys] q_j 
} \\
\since{
  \ssum[j] \cw{i}{j} = \ps[i]
} \\
\so{
  \ssum[i | \tys[i] = \tys] \ps[i]
} 
\eq{
  \ssum[i | \tys[i] = \tys] \ssum[j] \cw{i}{j}
} \\
\eq{
  \ssum[i | \tys[i] = \tys] \ssum[j | \tys[i] = \tys[j]] \cw{i}{j}
} \\
\eq{
  \ssum[i | \tys[i] = \tys] \ssum[j | \tys = \tys[j]] \cw{i}{j}
} \\
\eq{
  \ssum[j | \tys[j] = \tys] \ssum[i | \tys = \tys[i]] \cw{i}{j}
} \\ 
\eq{
  \ssum[j | \tys[j] = \tys] \ssum[i | \tys[i] = \tys[j]] \cw{i}{j}
} \\ 
\sentence{if} \; \tys[i] \neq \tys \; \sentence{then} \; \cw{i}{j} = 0 \\
\so{
  \ssum[j | \tys[j] = \tys] (\ssum[i | \tys[i] = \tys[j]] \cw{i}{j} + 
  \ssum[i | \tys[i] \eq \tys[j]] \cw{i}{j})
} \\
\eq{
  \ssum[j | \tys[j] = \tys] \ssum[i] \cw{i}{j}
} \\
\since{
  \ssum[i] \cw{i}{j} = q_j
} \\
\so{ \\
\eq{
  \ssum[j | \tys[j] = \tys] q_j
}
}\\
\so{
  \ssum[i | \tys[i] = \tys] q_j = \ssum[j | \tys[j] = \tys] q_j
}\\$
\end{itemize}
Our result holds. 
\end{enumerate}
\end{proof}

\begin{lemma}[Lifting probability]\label{lemma:liftp}
  $\liftP{\pty} = \crp{\pty}$.
\end{lemma}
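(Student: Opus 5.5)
The statement $\liftP{\pty} = \crp{\pty}$ compares a formula, namely $\liftP{\pty}[\cww]$, with a set of probabilities. I read it as: the set of solutions of $\liftP{\pty}[\cww]$ coincides with the concretization. Concretely, for every $\cww$, the value $\projv{\cww}$ satisfies $\liftP{\pty}[\cww]$ iff $\projv{\cww} \in \crp{\pty}$. Equivalently, $\{ r \in [0,1] \mid \liftP{\pty}[\cww]\ \text{holds with}\ \projv{\cww} = r \} = \crp{\pty}$. The first step of the proof is to state this reading explicitly, with tagged variables always interpreted over $[0,1]$ as stipulated in the definition of formula types.

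The proof then proceeds by case analysis on the grammar of gradual probabilities, $\pty ::= \ps \mid \?$.

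\emph{Case $\pty = \ps$ with $\ps \in [0,1]$.} By definition, $\liftP{\ps}[\cww] = (\cww = \ps)$. Its unique solution is $\projv{\cww} = \ps$, so the solution set is $\{\ps\}$, which is exactly $\crp{\ps}$.

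\emph{Case $\pty = \?$.} By definition, $\liftP{\?}[\cww] = (\cww \in [0,1])$, which is sugar for $0 \leq \cww \land \cww \leq 1$. Its solution set is $[0,1] = \crp{\?}$.

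Both inclusions are immediate in each case, so no induction is needed.

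The only real obstacle is the mismatch of sorts in the statement (formula versus set), which is resolved by the semantic reading above. It is also worth recording the corollary that will actually be used in later proofs. For a gradual distribution type $\d{\sgtys[i][\pty[i]]}[i \in \iSet]$, the conjunction $\bigwedge_{i} \liftP{\pty[i]}[\cww[i]]$ has as solutions exactly the tuples $(\ps[i])_{i}$ with $\ps[i] \in \crp{\pty[i]}$. This follows pointwise from the lemma, because the variables $\cww[i]$ are fresh and distinct. Adding the conjunct $\sum_i \cww[i] = 1$ then picks out exactly the well-formed concretizations. This is the bridge needed for the equivalences with the AGT definitions (e.g.\ Lemma~\ref{agtconsistencyeq}).
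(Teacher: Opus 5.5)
Your proof is correct and takes the same route as the paper, which simply says the lemma follows by unfolding the definitions of the probability lifting and of the concretization $\crpsymbol$; your case split on $\pty = \ps$ versus $\pty = \?$ is exactly that unfolding. Your explicit reading of the equation, as saying that the solution set of $\liftP{\pty}[\cww]$ in $\projv{\cww}$ equals $\crp{\pty}$, spells out what the paper leaves implicit. Your remark on conjunctions over fresh variables matches how the lemma is later used in the AGT-equivalence proofs.
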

\begin{proof}
  This can be derived from the definition of probability lifting function 
  and concretization function.
\end{proof}

\agtconsistencyeq*
\begin{proof}
  $\\$
  \begin{enumerate}
    \item This case is trivial.
    \item
  Suppose $ \sgtya = \dt{\sgtys[i][\pty[i]] } , \sgtyb = \dt{\sgtys[j][\pty[j]]}, \liftD{\sgtya} = \dt{\gtys[i][\p[i]] } \ad 
  \liftD{\sgtyb} = \dt{\gtys[j][\p[j]] } $ \\
\begin{itemize}
  \item $\sgtya \grel \sgtyb =>  \sgtya \rel \sgtyb$ \\
  $
  \since{\sgtya \grel \sgtyb } \\
  \text{By the consistency definition of AGT} \ad 
  \text{Lemma} \; \ref{couplingeq}\\
  \so{ 
    \exists \cw{i}{j}, \forall \ssum[j]  \cw{i}{j} = \ps[i]
    \ad 
    \forall \ssum[i]  \cw{i}{j} = \ps[j]
  } \\
  \sentence{We need to show the following: } \\
  { 
    \exists \cwp{i}{j}, \forall \ssum[j]  \cwp{i}{j} = \p[i]
    \ad 
    \forall \ssum[i]  \cwp{i}{j} = \p[j]
  } \\
  \sentence{Set} \; 
  { \cwp{i}{j} = \cw{i}{j}} \\
  \so{
    \ssum[j]  \cwp{i}{j}
  } \\
  \eq{
    \ps[i]
   } \\
   \so{
    \ssum[i]  \cwp{i}{j}
  } \\
  \eq{
    \ps[j]
   } \\
  \sentence{By Lemma} \; \ref{lemma:liftp} \\
  \so{
    \ps[i] =\p[i] \ad \ps[j] = \p[j]
  } \\
  \so{
    \exists \cwp{i}{j}, \forall \ssum[j]  \cwp{i}{j} = \p[i]
    \ad 
    \forall \ssum[i]  \cwp{i}{j} = \p[j]
  }
  $ \\
  The result holds.
  \item $\sgtya \grel \sgtyb <=  \sgtya \rel \sgtyb$ \\
  $
  \since{\sgtya \rel \sgtyb } \\
  \so{ 
    \exists \cw{i}{j}, \forall \ssum[j]  \cw{i}{j} = \p[i]
    \ad 
    \forall \ssum[i]  \cw{i}{j} = \p[j]
  } \\
  \text{By the consistency definition of AGT} \ad 
  \text{Lemma} \; \ref{couplingeq}\\
  \sentence{we need to show the following: } \\
  { 
    \exists \cwp{i}{j}, \forall \ssum[j]  \cwp{i}{j} = \ps[i]
    \ad 
    \forall \ssum[i]  \cwp{i}{j} = \ps[j]
  } \\
  \sentence{Set} \; 
  { \cwp{i}{j} = \cw{i}{j}} \\
  \so{
    \ssum[j]  \cwp{i}{j}
  } \\
  \eq{
    \p[i]
   } \\
   \so{
    \ssum[i]  \cwp{i}{j}
  } \\
  \eq{
    \p[j]
   } \\
  \sentence{By Lemma} \; \ref{lemma:liftp} \\
  \so{
    \ps[i] =\p[i] \ad \ps[j] = \p[j]
  } \\
  \so{
    \exists \cwp{i}{j}, \forall \ssum[j]  \cwp{i}{j} = \ps[i]
    \ad 
    \forall \ssum[i]  \cwp{i}{j} = \ps[j]
  }
  $ \\
  The result holds.
\end{itemize}
\end{enumerate}
\end{proof}

\precisioneq*
\begin{proof}
  $\\$
  \begin{enumerate}
    \item This case is trivial.
    \item
  Suppose $ \sgtya = \dt{\sgtys[i][\pty[i]] } , \sgtyb = \dt{\sgtys[j][\pty[j]]} , \liftD{\sgtya} = \dt{\gtys[i][\p[i]] } \ad 
  \liftD{\sgtyb} = \dt{\gtys[j][\p[j]] } $ \\
\begin{itemize}
  \item $\sgtya \ggprec \sgtyb =>  \sgtya \gprec \sgtyb$ \\
  $
  \since{
    \sgtya \ggprec \sgtyb 
  }\\
  \so{
    \forall \dt{\tys[i][\ps[i]]} \in \crta{\sgtya},
    \exists \dt{\tys[j][\ps[j]]} \in \crta{\sgtyb} \ad 
    \dt{\tys[i][\ps[i]]}  = \dt{\tys[j][\ps[j]]}
  }\\
  \so{
    \ssum[i] \cww[ij] = \ps[j] \ad \ssum[j] \cww[ij] = \ps[i]
  }\\
  \sentence{we need to show the following,} \\
  {
    \exists  \{ \cwwp[ij] \}, \ssum[i] \cwwp[ij] = \p[j] \ad \ssum[j] \cwwp[ij] = \p[i]
   } \\
   \sentence{Set} \; 
   {  \cwwp[ij] =  \cww[ij] } \\
   \so{
     \ssum[j]   \cwwp[ij]
   } \\
   \eq{
     \ps[i]
    } \\
    \so{
     \ssum[i]   \cwwp[ij]
   } \\
   \eq{
     \ps[j]
    } \\
   \sentence{By Lemma} \; \ref{lemma:liftp} \\
   \so{
     \ps[i] =\p[i] \ad \ps[j] = \p[j]
   } \\
   \so{
     \exists  \{ \cwwp[ij] \},  \ssum[j]   \cwwp[ij] = \p[i]
     \ad 
      \ssum[i]   \cwwp[ij] = \p[j]
   }
   $ \\
   The result holds.
  \item $\sgtya \ggprec \sgtyb <=  \sgtya \gprec \sgtyb$ \\
  $
  \since{
    \sgtya \gprec \sgtyb 
  }\\
  \so{
    \ssum[i] \cwwp[ij] = \p[j] \ad \ssum[j] \cwwp[ij] = \p[i]
   } \\
  \sentence{we need to show the following,} \\
  {
    \forall \dt{\tys[i][\ps[i]]} \in \crt{\sgtya},
    \exists \dt{\tys[j][\ps[j]]} \in \crt{\sgtyb} \ad 
    {\dt{\tys[i][\ps[i]]} } = {\dt{\tys[j][\ps[j]]}}
  }\\
  \sentence{that is,} \\
  {
    \exists \{ \cww[ij] \},  \ssum[j]  \cww[ij] = \ps[i]
    \ad 
     \ssum[i]  \cww[ij] = \ps[j]
  }\\
   \sentence{Set} \; 
   { \cww[ij] = \cwwp[ij] } \\
   \so{
     \ssum[j]  \cwwp[ij]
   } \\
   \eq{
     \p[i]
    } \\
    \so{
     \ssum[i]  \cwwp[ij]
   } \\
   \eq{
     \p[j]
    } \\
   \sentence{By Lemma} \; \ref{lemma:liftp} \\
   \so{
     \ps[i] =\p[i] \ad \ps[j] = \p[j]
   } \\
   \so{
     \exists \{ \cww[ij] \},  \ssum[j]  \cww[ij] = \ps[i]
     \ad 
      \ssum[i]  \cww[ij] = \ps[j]
   }
   $ \\
   The result holds.
\end{itemize}
\end{enumerate}
\end{proof}

\subsection{\tlang: Type Safety}

\paragraph{Dynamic semantics.}
We now present the complete dynmamic semantics of \tlang in 
Figure~\ref{fig:target-reduction-dis2}.

\transinvariant*
\begin{proof}
  $\\$
  \begin{enumerate}
    \item This case is trivial.
    \item
  Suppose
  $\gtya[1] = \phty{ \phi[][i] }{ \dt{ \gtys[i][\p[i]] } }$,
  $\gtya[2] =  \phty{ \phi[][j] }{ \dt{ \gtysp[j][\p[j]] } }$,
  $\evd[1] =  \phty{ \phi[][il] }{ \dt{ \gtys[\kk[1]][\cw[\kk[1]]{\lft{\kk[1]}}{\rgt{\kk[1]}}]} }$
  and  $\evd[2] =  \phty{ \phi[][ij] }{ \dt{ \gtys[\kk[2]][\cw[\kk[2]]{\lft{\kk[2]}}{\rgt{\kk[2]}}]} }$ \\
  $\since{
    \evd[1] \trans{} \evd[2] = \evd[1] \meet \evd[2]
  } \\
  \since{
    \jtf{\evd[1]}{\gtya[1] \rel \gtyb} \ad  \jtf{\evd[2]}{\gtybp \rel \gtya[2]} 
  } \\
  \so{\\
    \exists \cw[\kk[1]\kk[2]]{\lft{\kk[1]}}{\rgt{\kk[2]}},
    \ssum[\kk[1]] \cw[\kk[1]\kk[2]]{\lft{\kk[1]}}{\rgt{\kk[2]}} 
    = \cw[\kk[2]]{\lft{\kk[2]}}{\rgt{\kk[2]}}
  } \\
  {
    \ssum[\kk[2]] \cw[\kk[1]\kk[2]]{\lft{\kk[1]}}{\rgt{\kk[2]}} 
    = \cw[\kk[1]]{\lft{\kk[1]}}{\rgt{\kk[1]}}
  } \\ 
  \so{
    \ssum[\kk[1] | \lft{\kk[1]} = i] \cw[\kk[1]]{\lft{\kk[1]}}{\rgt{\kk[1]}} =
    \p[i]
  } \\
  \so{
    \ssum[\kk[2] | \lft{\kk[2]} = j] \cw[\kk[2]]{\lft{\kk[2]}}{\rgt{\kk[2]}} =
    \p[j]
  } \\
  \sentence{We need to show the following,} \\
  {
    \ssum[\kk[1]\kk[2] | \lft{\kk[1]\kk[2]} = i] \cw[\kk[1]\kk[2]]{\lft{\kk[1]\kk[2]}}{\rgt{\kk[1]\kk[2]}} =
    \p[i]
  } \\
  {
    \ssum[\kk[1]\kk[2] | \rgt{\kk[1]\kk[2]} = j] \cw[\kk[1]\kk[2]]{\lft{\kk[1]\kk[2]}}{\rgt{\kk[1]\kk[2]}} =
    \p[j]
  } \\
  \\
  \since{   
    \ssum[\kk[1]\kk[2] | \lft{\kk[1]\kk[2]} = i] \cw[\kk[1]\kk[2]]{\lft{\kk[1]\kk[2]}}{\rgt{\kk[1]\kk[2]}}
  }\\
  \eq{
    \ssum[\kk[1] | \lft{\kk[1]} = i] 
    \ssum[\kk[2]] \cw[\kk[1]\kk[2]]{\lft{\kk[1]\kk[2]}}{\rgt{\kk[1]\kk[2]}}
  }\\
  \eq{
    \ssum[\kk[1] | \lft{\kk[1]} = i] 
    \cw[\kk[1]]{\lft{\kk[1]}}{\rgt{\kk[1]}}
  }\\
  \eq{
    \p[i]
  }\\
  \\
  \since{ 
    \ssum[\kk[1]\kk[2] | \rgt{\kk[1]\kk[2]} = j] \cw[\kk[1]\kk[2]]{\lft{\kk[1]\kk[2]}}{\rgt{\kk[1]\kk[2]}}     
  }\\
  \eq{
    \ssum[\kk[2] | \rgt{\kk[2]} = j] 
    \ssum[\kk[1]] \cw[\kk[1]\kk[2]]{\lft{\kk[1]\kk[2]}}{\rgt{\kk[1]\kk[2]}}
  }\\
  \eq{
    \ssum[\kk[2] | \rgt{\kk[2]} = j] 
    \cw[\kk[2]]{\lft{\kk[2]}}{\rgt{\kk[2]}}
  }\\
  \eq{
    \p[j]
  }\\
  $
  So the result holds.
  \end{enumerate}
\end{proof}

% \simpletypeprecision*
% \begin{proof}
%   By the definition of type precision. 
% \end{proof}

\begin{lemma}[Reorder transitivity]\label{rtrans}
  $\\$
  \begin{itemize}
    \item 
    $\ift{\gtys[1] = \gtys[2] \ad \gtys[2] = \gtys[3]}{\gtys[1] = \gtys[3]}$
    \item  
    $\ift{ \reoder{  \phty{ \phi[][i] }{ \dt{\gtys[i][\p[i]]} } }{   \phty{ \phi[][j] }{ \dt{\gtys[j][\p[j]]} } } \ad 
    \reoder{  \phty{ \phi[][j] }{  \dt{\gtys[j][\p[j]]} } }{  \phty{ \phi[][k] }{ \dt{\gtys[k][\p[k]]} } } }{
      \reoder{  \phty{ \phi[][i] }{ \dt{\gtys[i][\p[i]]} } }{  \phty{ \phi[][k] }{ \dt{\gtys[k][\p[k]]}} } }$
  \end{itemize}
\end{lemma}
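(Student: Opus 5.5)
The plan is to argue by simultaneous induction on the structure of the formula types, following the mutual definition of the reordering relation $\reoder{}{}$ in Definition~\ref{def:reorder}. For simple types, the base cases $\rtype$, $\btype$ and $\?$ are immediate, since reordering on them is syntactic identity. For function types $\gtys[1] -> \gtya[1]$ I will invert both hypotheses to obtain reorderings of the domains and of the codomains. The induction hypothesis then applies componentwise, on the simple part for the domains and on the distribution part for the codomains, and the function rule recombines the results. (The rule lists the domain premise in swapped order, but since the relation on simple types will be shown transitive in both orientations, this causes no trouble.)

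The substantive case is distribution types, where reordering is the coupling lifting of $\reoderSym$ on simple types. From the two hypotheses I get witnesses $\cww[ij]$ for the first lifting and $\cww[jk]$ for the second. These come with the closing formulas $\phi[][i]$, $\phi[][j]$, $\phi[][k]$ and satisfy marginal equations of the form $\ssum[j]\cww[ij]=\p[i]$, $\ssum[i]\cww[ij]=\p[j]$, $\ssum[k]\cww[jk]=\p[j]$, $\ssum[j]\cww[jk]=\p[k]$, together with positivity implications into $\reoderSym$. I then compose the couplings in the standard way, exactly as in the proofs of Lemma~\ref{srtrans} and the static composition lemma (Lemma~\ref{static-composition-defined}), by setting
\[
\cww[ik] \;=\; \ssum[j \mid \p[j] > 0] \frac{\cww[ij]\cdot\cww[jk]}{\p[j]}.
\]
Summing over $k$ gives $\ssum[j]\cww[ij]=\p[i]$, once one notes that $\p[j]=0$ forces $\cww[ij]=0$ because all weights are nonnegative. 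Symmetrically, summing over $i$ gives $\p[k]$. For the relational condition: if $\cww[ik]>0$, then some $j$ has $\cww[ij]>0$ and $\cww[jk]>0$. Hence $\gtys[i] \reoderSym \gtys[j] \reoderSym \gtys[k]$, and the induction hypothesis on simple types gives $\gtys[i] \reoderSym \gtys[k]$.

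The step I expect to be delicate is the symbolic layer. The lifting $\couplingLift{\cdot}{\cdot}{\reoderSym}$ is defined in Definition~\ref{def:symcoupling} as an existential over the probability variables and the coupling weights, with $\phi[][i]$, $\phi[][j]$, $\phi[][k]$ conjoined. The two hypotheses each provide an assignment satisfying their formula, but I need a single assignment that satisfies $\phi[][i]\land\phi[][k]$ and agrees on the shared middle variables $\p[j]$. My first attempt is to observe that the two hypotheses share the middle type and its formula $\phi[][j]$. For the composition I would therefore either fix one satisfying assignment of the middle variables and argue that both couplings can be chosen relative to it, or else, where the paper's reading of $\couplingLiftName$ leaves this underdetermined, interpret the hypotheses as holding for a common solution of $\phi[][j]$ (as the meet operator $\witnessname$ implicitly does by sharing free variables). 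Under that reading, the existential witnesses for $\p[i]$, $\p[k]$, $\p[j]$ combine into one assignment, and the constructed $\cww[ik]$ depends only on them. Division by $\p[j]$ is then legitimate because the sum is restricted to $\p[j]>0$.
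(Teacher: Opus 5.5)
Your composition of couplings is right at the numeric level, and it is the correct version of the paper's step. The paper uses the unnormalised $\cww[ik]=\sum_j\cww[ij]\cdot\cww[jk]$, whose marginals are $\sum_j\p[j]\cww[jk]$ and $\sum_j\p[j]\cww[ij]$ rather than $\p[k]$ and $\p[i]$; its passage to $\sum_i\p[i]\cdot\p[k]$ treats $\sum_j\cww[ij]\cww[jk]$ as $(\sum_j\cww[ij])(\sum_j\cww[jk])$. The paper also never checks the support condition. The genuine gap is the symbolic step, and your first option cannot be carried out (fixing one solution of $\phi[][j]$ and choosing both couplings relative to it). Under Definition~\ref{def:symcoupling} each hypothesis is an independent existential over the probability variables, so the two witnesses may need incompatible values of the middle ones. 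Take $\gtya[1]=\liftD{\ds{\rtype^{\frac{1}{2}},\btype^{\frac{1}{2}}}}$, $\gtya[2]=\liftD{\ds{\rtype^{\?},\btype^{\?}}}$ and $\gtya[3]=\liftD{\ds{\rtype^{\frac{1}{3}},\btype^{\frac{2}{3}}}}$. Then $\reoder{\gtya[1]}{\gtya[2]}$ holds with the $\rtype$-weight of $\gtya[2]$ set to $\tfrac{1}{2}$, and $\reoder{\gtya[2]}{\gtya[3]}$ holds with it set to $\tfrac{1}{3}$. However, $\reoder{\gtya[1]}{\gtya[3]}$ fails: reordering relates $\rtype$ only to $\rtype$, and the $\rtype$-masses $\tfrac{1}{2}\neq\tfrac{1}{3}$ cannot be coupled. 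So the distribution clause is false as literally stated; with unknown probabilities, reordering inherits the non-transitivity of consistency. The triple $\rtype\to\gtya[1]$, $\rtype\to\gtya[2]$, $\rtype\to\gtya[3]$ likewise refutes the simple-type clause (read with $\reoderSym$, as you do). Your function case, and the support step of your distribution case, both invoke the induction hypothesis and break for the same reason.

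Your second option (reading the hypotheses as sharing a common solution of $\phi[][j]$) is therefore not a reading of the paper's definition but an additional hypothesis, and you should state it as such. It must also be stronger than you formulate it. A common solution of the top-level $\phi[][j]$ is not enough: wrap the three function types above as singleton distribution types. The middle type's top-level formula then has a unique solution, yet transitivity still fails, because the support condition sends you to codomains that carry their own independent existentials. Similarly, if $\phi[][i]$ and $\phi[][k]$ share variables absent from $\phi[][j]$, agreement on the middle does not let you merge the two witnesses. What your argument needs is that reordering be read relative to one assignment of all probability variables, at every depth, that satisfies all the formulas and witnesses all the couplings involved. A clean sufficient condition is that all formulas, at every depth, use fresh variables and fix their values, as for liftings of static types (cf.\ Lemmas~\ref{srtrans} and~\ref{lrtrans}). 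Then everything is numeric and your normalised composition goes through verbatim. The paper's proof looks valid only because it silently treats $\p[i]$, $\p[j]$, $\p[k]$ as fixed numbers, which is this same unstated assumption.
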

\begin{proof}
  $\\$
  \begin{itemize}
    \item (non-distribution types) trivial cases.
    \item (distribution types) \\
    $ 
    \since{\reoder{  \phty{ \phi[][i] }{ \dt{\gtys[i][\p[i]]} } }{  \phty{ \phi[][j] }{ \dt{\gtys[j][\p[j]]} }}
    } \\
    \so{\ssum[i] \cw{i}{j} = \p[j] \ad \ssum[j] \cw{i}{j} = \p[j]}\\
    \since{\reoder{  \phty{ \phi[][j] }{ \dt{\gtys[j][\p[j]]} } }{  \phty{ \phi[][k] }{ \dt{\gtys[k][\p[k]]} } }
    } \\
    \so{\ssum[j] \cw{j}{k} = \p[k] \ad \ssum[k] \cw{j}{k} = \p[j]}\\
    \sentence{we need to show, 
    }\\
    {\ssum[i] \cww[ik] = \p[k] \ad \ssum[k] \cww[ik] = \p[i] } \\
    \sentence{Suppose} \; 
    {\cww[ik] = \ssum[j] \cw{i}{j} \cdot \cw{j}{k} }\\
    \so{
      \ssum[i] \cww[ik] = 
    }\\ 
    \eq{
      \ssum[i] \ssum[j] \cw{i}{j} \cdot \cw{j}{k}
    }\\ 
    \since{
      \ssum[j] \cw{i}{j} = \p[i] \ad \ssum[j] \cw{j}{k} = \p[k]
    }\\ 
    \so{\\
    \eq{
      \ssum[i] \p[i] \cdot \p[k]
    }
    }\\
    \eq{
     \p[k]
    }\\
    \\
    \so{
      \ssum[k] \cww[ik] = 
    }\\ 
    \eq{
      \ssum[k] \ssum[j] \cw{i}{j} \cdot \cw{j}{k}
    }\\ 
    \since{
      \ssum[j] \cw{i}{j} = \p[i] \ad \ssum[j] \cw{j}{k} = \p[k]
    }\\ 
    \so{\\
    \eq{
      \ssum[k] \p[i] \cdot \p[k]
    }
    }\\
    \eq{
     \p[i]
    }\\
    \sentence{The result holds.}
     $ 
  \end{itemize}
\end{proof}

\begin{lemma}[Distribution composition]\label{setprec}
  $ \ift{ \phi[][1] \land \phi[][2] |- ~ \coupling[\cw{i}{j}][\ssum[i] \cw{i}{j} = \p[j]][\ssum[j] \cw{i}{j} = \p[i]][ \phty{\phi[][1]}{\dset[i]} \gprec 
  \phty{\phi[][2]}{\dset[j]}]}{
    \phty{\phi[][1]}{\ssum[i] \p[i] \cdot \dset[i]} \gprec  \phty{\phi[][2]}{\ssum[j] \p[j] \cdot \dset[j]}} $.
\end{lemma}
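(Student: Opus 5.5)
The plan is to build the witness coupling for the composite precision judgment by \emph{gluing} the outer coupling $\{\cw{i}{j}\}$ with the inner couplings witnessing each $\phty{\phi[][1]}{\dset[i]} \gprec \phty{\phi[][2]}{\dset[j]}$. This mirrors the proof of Lemma~\ref{setlogic}, where the same construction was carried out for the logical relation. Write $\dset[i] = \d{\tv[il][\p[il]]}[l \in \lSet_i]$ and $\dset[j] = \d{\tv[jm][\p[jm]]}[m \in \kSet_j]$. The weighted sums then unfold, by the definitions of scaling and $+$ on distribution values, to the multisets $\d{\tv[il][\p[i]\cdot\p[il]]}$ and $\d{\tv[jm][\p[j]\cdot\p[jm]]}$ indexed by pairs $(i,l)$ and $(j,m)$. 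By rule $(\mathord{\gprec}\V)$, the goal is this: for every valuation of $\FV(\phi[][1])$ satisfying $\phi[][1]$, exhibit values of $\FV(\phi[][2])$ and weights $\cww[(il)(jm)]$ with the two marginal equations such that positive weight implies $\tv[il] \gprec \tv[jm]$.

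First I fix a valuation satisfying $\phi[][1]$ and extend it, using the hypothesis, to one satisfying $\phi[][2]$ together with the outer coupling $\cw{i}{j}$. For every pair $(i,j)$ with $\cw{i}{j}>0$, the hypothesis gives an inner coupling $\cwp{il}{jm}$ between $\dset[i]$ and $\dset[j]$ whose positive entries relate only precision-related values. I define
\[
\cww[(il)(jm)] = \cw{i}{j}\cdot \cwp{il}{jm} \quad\text{if } \cw{i}{j}>0, \qquad \cww[(il)(jm)] = 0 \text{ otherwise.}
\]
The relational side condition is then immediate. If $\cww[(il)(jm)]>0$, both factors are positive, so $\tv[il] \gprec \tv[jm]$ follows from the inner coupling.

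Next I check the marginals. Summing over $(j,m)$ gives $\sum_j \cw{i}{j}\sum_m \cwp{il}{jm} = \sum_j \cw{i}{j}\,\p[il] = \p[i]\cdot\p[il]$, using the inner left marginal and then the outer left marginal. The symmetric computation gives $\p[j]\cdot\p[jm]$. One subtlety arises for pairs with $\cw{i}{j}=0$, where no inner coupling is supplied. Those terms contribute $0$ on both sides, so I only need the inner marginals where $\cw{i}{j}>0$, and the computation goes through unchanged.

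I expect the main obstacle to be the quantifier and variable-scoping bookkeeping rather than the arithmetic. The inner precision judgments are each stated with their own ``$\forall \FV(\phi[][1]).\ \phi[][1] \implies \exists \FV(\phi[][2])\ldots$'' prefix. All of them, together with the outer coupling, must be instantiated at a \emph{single} common valuation of $\phi[][2]$'s variables, which is what the hypothesis $\phi[][1]\land\phi[][2] |- \ldots$ is meant to provide. I would first make explicit that the hypothesis is read under one shared satisfying assignment. I would then note that the existential witnesses for the inner couplings may be chosen after that assignment is fixed. Finally, I would use the tag convention for tagged variables in the combined sum to rename indices and avoid clashes among the fresh $\cww$'s.
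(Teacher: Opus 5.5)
Your proposal is correct and is essentially the paper's proof. The paper likewise takes the composite weight to be the product of the outer coupling weight and the inner coupling weight, and checks the two marginals by the same computation. Your added care makes explicit what the paper leaves implicit: you set the weight to $0$ when $\omega(i,j)=0$, and you fix one shared valuation of $\mathit{FV}(\Phi_2)$ at which the outer and all inner couplings are taken. That shared-valuation reading is genuinely needed, because if the inner judgments' $\mathit{FV}(\Phi_2)$-witnesses are chosen independently, the right-hand marginals need not line up and the statement can fail.
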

\begin{proof}

 $\since{ \dset[i] = \dt{\pt{\v[ii']}[\pp[ii']]} \land \dset[j] = \dt{\pt{\v[jj']}[\pp[jj']]} } \\
  \since{\ssum[ii'] \p[i] \cdot \dset[vi] =  \dt{\pt{\v[ii']}[\p[i] \cdot \pp[ii']]} \land 
  \ssum[jj'] \p[j] \cdot \dset[vj] = \dt{\pt{\v[jj']}[\p[j] \cdot \pp[jj']]} } \\
   \so{ \text{we need to show the following :} } \\
   \so{   \coupling[\cwp{ii'}{jj'}][\ssum[ii'] \cwp{ii'}{jj'} = \p[j] \cdot \pp[jj']][\ssum[jj'] \cwp{ii'}{jj'} = \p[i] \cdot \pp[ii']][\v[ii'] \gprec \v[jj']] } \\
   \since{ \dset[i] \gprec \dset[j]} \\
   \so{\coupling[\cwpp{ii'}{jj'}][\ssum[ii'] \cwpp{ii'}{jj'} =  \pp[jj']][\ssum[jj'] \cwpp{ii'}{jj'} = \pp[ii']][\v[ii'] \gprec \v[jj']] } \\
   \so{ Suppose ~ \cwp{ii'}{jj'} = \cw{ii'}{jj'} \cdot \p[ij] } \\
   \so{ \ssum[ii'] \cwp{ii'}{jj'} \cdot \p[ij]  } \\
   \eq{ \ssum[i] \p[ij] \cdot \ssum[i']  \cwp{ii'}{jj'} } \\
   \since{\ssum[i]  \p[ij] = \p[j], \ssum[i']  \cwp{ii'}{jj'} = \p[jj'] } \\
   \sentence{then} \\
   \eq{ \p[j] \cdot \pp[jj'] } \\
   \so{ \ssum[jj'] \cwp{ii'}{jj'} \cdot \p[ij]  } \\
   \eq{ \ssum[j] \p[ij] \cdot \ssum[j']  \cwp{ii'}{jj'} } \\
   \since{\ssum[j]  \p[ij] = \p[i], \ssum[j']  \cwp{ii'}{jj'} = \p[ii'] } \\
   \sentence{then} \\
   \eq{ \p[i] \cdot \pp[ii'] } \\
   \so{ \coupling[\cwpp{ii'}{jj'}][\ssum[ii'] \cwpp{ii'}{jj'} = \p[j] \cdot \pp[jj']][\ssum[jj'] \cwpp{ii'}{jj'} = \p[i] \cdot \pp[ii']][\v[ii'] \gprec \v[jj']]. }$
 \end{proof}

 \begin{lemma}[Substitution preserve typing]\label{subtyp}
  $ \ift{ \Gamma, \tx : \gtys  |- \tm : \gtya \ad \Gamma |- \tv : \gtys }{
    \Gamma |- \ssub{\tm[]}{\tv}{\tx} : \gtya  }$.
\end{lemma}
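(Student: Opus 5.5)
The plan is to split on the shape of the value $\tv$, since $\mathit{sub}$ is defined by cases on it. In the error case $\tv = \errort{\gtys}$, the definition gives $\ssub{\tm[]}{\errort{\gtys}}{\tx} = \errort{\gtya}$, where $\gtya$ is the type with $\tx : \gtys |- \tm : \gtya$. By Lemma~\ref{lemma:welltyped-wellform-target}, applied to the hypothesis $\Gamma, \tx : \gtys |- \tm : \gtya$, we have $\justify{\gtya}$. Rule (Gerr$_{\gtya}$) then gives $\Gamma |- \errort{\gtya} : \gtya$ directly. One small point needs checking here: the $\gtya$ fixed by the definition of $\mathit{sub}$ coincides with the type in the hypothesis. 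I will argue this by weakening the typing of $\tm$ to $\Gamma$, or else simply read the side condition of $\mathit{sub}$ relative to the ambient $\Gamma$.

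In the remaining case $\tv = \trg{\asc{\ev \tu}{\gtysp}}$ (variables cannot occur here, since $\tv$ is substituted at runtime and is closed), the claim reduces to the ordinary substitution lemma for $\tm[][\tv/\tx]$. I will prove this by induction on the derivation of $\Gamma, \tx : \gtys |- \tm : \gtya$, generalised to a context $\Gamma, \tx : \gtys, \Gamma'$, and simultaneously for the value judgment $|-t$ and the term judgment $|-d$.
\begin{itemize}
\item In the variable case, if the variable is $\tx$, the result follows from the hypothesis $\Gamma |- \tv : \gtys$, weakened to $\Gamma, \Gamma'$. Weakening for closed values is immediate.
\item The cases (Gerr$_{\gtys}$) and (Gerr$_{\gtya}$) are trivial, since these terms contain no variables.
\item The structural cases (G$\lambda$), (Gapp), (G$+$), (Gif) and (G$\oplus$) follow by applying the induction hypothesis to each premise and reapplying the same rule. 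The side conditions $\jtf{}{\gtys}$ and $\satisfiablee{\cdot}$ do not mention term variables, so they are unchanged.
\item In the ascription cases (G$::\gtys$) and (G$::\gtya$), the evidence judgment $\ev |- \gtys \rel \gtysp$ involves only types, so it is preserved verbatim.
\item In (Glet), I use the induction hypothesis on $\tm$ and, for each $i$, on $\tn$ under the extended context $\Gamma, \tx:\gtys, \Gamma', y:\gtys[i]$, assuming $y$ is chosen fresh up to $\alpha$-renaming. The resulting type $\sugarconj{\phi}{\sum_i \p[i]\cdot\gtya[i]}$ is then literally the same.
\end{itemize}

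I expect the main obstacle to be a subtle one rather than a computational one. Because $\mathit{sub}$ is defined via a typing side condition, the induction must be phrased for plain capture-avoiding substitution $\tm[\tv/\tx]$, not for $\mathit{sub}$ itself. Otherwise nested error substitutions would not compose. I will therefore first prove the plain substitution lemma as stated above, and then derive the statement for $\mathit{sub}$ by the two-case split from the first paragraph. A second point to watch is that the formula-level variables $\cww$ in types are unaffected by term substitution, so no renaming of tagged variables is needed.
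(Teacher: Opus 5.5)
Your proof is correct, but it is organised differently from the paper's, and the reorganisation is an improvement.

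\textbf{How the routes differ.} The paper does strong induction on the size of $m$, with one case per syntactic form. Inside every case it splits on $v$: for an ascribed raw value it appeals to the induction hypothesis for plain substitution, and for an error it concludes ``by the definition of substitution''. You split on $v$ once, up front. This makes explicit that the error case needs no induction at all, because the error clause of $\mathit{sub}$ never recurses: it replaces the whole term by $\mathbf{error}_{\mu}$. That case is then closed by (Gerr$_{\mu}$) together with Lemma~\ref{lemma:welltyped-wellform-target}, a well-formedness premise the paper leaves implicit.

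\textbf{What your route buys.} For the ascribed case you prove an ordinary substitution lemma by induction on the typing derivation, generalised to $\Gamma, x:\sigma, \Gamma'$ and stated for both judgments at once. This has three advantages over the paper's argument:
\begin{itemize}
\item It handles the non-syntax-directed rule (Gv) without any inversion.
\item It covers the case where $m$ is a variable (in particular $x$ itself), which is missing from the paper's case list.
\item It makes the $\lambda$ and let cases sound. The paper keeps the context fixed at $\Gamma, x:\sigma$ and reuses the name $x$ for the inner binders.
\end{itemize}
The paper's version is only shorter and stays phrased in terms of $\mathit{sub}$ throughout.

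\textbf{Small repairs.}
\begin{itemize}
\item Weakening a derivation $x:\sigma \vdash m : \mu'$ to the ambient context does not identify $\mu'$ with $\mu$. That step would also need uniqueness of typing. Here typing is unique at best up to the choice of fresh tagged variables made by the metafunctions in (Glet) and (G$\oplus$). Moreover, if $m$ mentions variables of $\Gamma$, no such $\mu'$ exists at all. So adopt your second option: read the type in the definition of $\mathit{sub}$ as the one given by the hypothesis.
\item When $\Gamma$ is non-empty, $v$ need not be closed. The variable case therefore needs the standard general weakening lemma rather than weakening for closed values.
\item Add (Gv) to your list of structural cases.
\end{itemize}
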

\begin{proof}
  $ ~ $
  \sentence{By strong induction on the size of} $\tm$.
  \begin{case}[$\tm = \asc{\ev \ttr }{\gtys} / \asc{\ev \ttb}{\gtysp} / \errort{\gtys/ \gtya} $]
   This is the trivial case.
  \end{case}
  \begin{case}[$\tm = \asc{\ev \lambda x: \gtys. \tmp }{\gtysp} $]
    $\\$
    $\sentence{We need to show, }\\
    {  \Gamma, \tx : \gtys  |- \ssub{\tmp[]}{\tv}{\tx} : \gtya} \\
    \sentence{if}~{\tv= \trg{\asc{\ev u}{\gtys}} }\\
    \sentence{By induction hypothesis, }\\
    \so{  \Gamma, \tx : \gtys |- \tmp[][\tv / \tx] : \gtya} \\
    \sentence{if}~{\tv= \errort{\gtys}}\\ 
    \sentence{By the definition of substitution, it holds.}$
   \end{case}
   \begin{case}[$\tm = \tvp \; \tw /  \trg{\tvp + \tw} $]
    $\\$
    $\sentence{We need to show, }\\
    {  \Gamma, \tx : \gtys |- \ssub{\tvp[]}{\tv}{\tx} : \gtya} \\
    {  \Gamma, \tx : \gtys |- \ssub{\tw[]}{\tv}{\tx} : \gtyb} \\
    \sentence{We have shown in above first two cases.}$
   \end{case}
   \begin{case}[$\tm = \tmp \ppsum \tnp $]
    $\\$
    $\sentence{We need to show, }\\
    {  \Gamma, \tx : \gtys |- \ssub{\tmp[]}{\tv}{\tx} : \gtya} \\
    {  \Gamma, \tx : \gtys |- \ssub{\tnp[]}{\tv}{\tx} : \gtyb} \\
    \sentence{if}~{\tv= \trg{\asc{\ev u}{\gtys}} }\\
    \sentence{By induction hypothesis, }\\
    \so{  \Gamma, \tx : \gtys |- \tmp[][\tv / \tx] : \gtya} \\
    \so{  \Gamma, \tx : \gtys |- \tnp[][\tv / \tx] : \gtya} \\
    \sentence{if}~{\tv= \errort{\gtys}}\\ 
    \sentence{By the definition of substitution, it holds.}$
   \end{case}
   \begin{case}[$\tm = \trg{\lett{\tx}{\tmp}{\tnp}} $]
    $\\$
    $\sentence{We need to show, }\\
    {  \Gamma, \tx : \gtys |- \ssub{\tmp[]}{\tv}{\tx} : \gtyb} \\
    {  \Gamma, \tx : \gtys[i] |- \ssub{\tnp[]}{\tv}{\tx} : \gtya[i]} \\
    \sentence{if}~{\tv= \trg{\asc{\ev u}{\gtys}} }\\
    \sentence{By induction hypothesis, }\\
    \so{  \Gamma, \tx : \gtys |- \tmp[][\tv / \tx] : \gtyb} \\
    \so{  \Gamma, \tx : \gtys[i] |- \tnp[][\tv / \tx] : \gtya[i]} \\
    \sentence{if}~{\tv= \errort{\gtys}}\\ 
    \sentence{By the definition of substitution, it holds.}$
   \end{case}
   \begin{case}[$\tm = \trg{\asc{\tmp}{\gtya}} $]
    $\\$
    $\sentence{We need to show, }\\
    {  \Gamma, \tx : \gtys |- \ssub{\tmp[]}{\tv}{\tx} : \gtya} \\
    \sentence{if}~{\tv= \trg{\asc{\ev u}{\gtys}} }\\
    \sentence{By induction hypothesis, }\\
    \so{  \Gamma, \tx : \gtys |- \tmp[][\tv / \tx] : \gtya} \\
    \sentence{if}~{\tv= \errort{\gtys}}\\ 
    \sentence{By the definition of substitution, it holds.}$
   \end{case}
   \begin{case}[$\tm$ $=$ if]
    $\\$
    $\sentence{We need to show, }\\
    {  \Gamma, \tx : \gtys |- \ssub{\tvp[]}{\tv}{\tx} : \dt{\pt{\btype}} } \\
    {  \Gamma, \tx : \gtys |- \ssub{\tmp[]}{\tv}{\tx} : \gtyb} \\
    {  \Gamma, \tx : \gtys[i] |- \ssub{\tnp[]}{\tv}{\tx} : \gtya[i]} \\
    \sentence{if}~{\tv= \trg{\asc{\ev u}{\gtys}} }\\
    \sentence{By induction hypothesis, }\\
    \so{  \Gamma, \tx : \gtys |- \tmp[][\tv / \tx] : \gtya} \\
    \so{  \Gamma, \tx : \gtys |- \tnp[][\tv / \tx] : \gtya} \\
    \sentence{if}~{\tv= \errort{\gtys}}\\ 
    \sentence{By the definition of substitution, it holds.}$
   \end{case}
\end{proof}

\begin{lemma}[Reordering meet defined]\label{reordermeetdefine}~
  \begin{itemize}
    \item 
    $\ift{\jtf{\ev}{\reoder{\gtys}{\ggtys}}, \jtf{\evp}{\ggtys \rel \gtysp} \ad \ev \trans{} \evp \text{ is defined, } }{ \jtf{\ev \trans{} \evp}{\gtys \rel \gtysp}}$.
    \item  
    $\ift{\jtf{\evd}{\reoder{\gtya}{\gtyap}},\jtf{\evdp}{\gtyap \rel \gtyb}  \ad \evd \trans{} \evdp  \text{ is defined, }}{  \jtf{\ev \trans{} \evp}{\gtya \rel \gtyb}}$.
  \end{itemize}
\end{lemma}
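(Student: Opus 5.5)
The plan is to argue by induction on the structure of the types, mirroring the proof of Lemma~\ref{lemma:transinvariant}, since $\ev \trans{} \evp$ is by definition the meet $\ev \meet \evp$. Note that the conclusion of the second item should read $\jtf{\evd \trans{} \evdp}{\gtya \rel \gtyb}$; this is the statement of Lemma~\ref{lemma:rmd}, which I take as the intended claim. By the definition of evidence, it suffices to show that the combined evidence is at least as precise as both endpoint types: $\ev \trans{} \evp \gprec \gtys$ and $\ev \trans{} \evp \gprec \gtysp$, and similarly for distributions.

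For simple types, I proceed by induction on the derivation of $\reoder{\gtys}{\ggtys}$. In the base cases ($\rtype$, $\btype$, $\?$) we have $\gtys = \ggtys$ syntactically, so $\ev \gprec \gtys$ already holds. Lemma~\ref{meet-more-precise} then gives
\[
\ev \meet \evp \gprec \evp \gprec \gtysp
\quad\text{and}\quad
\ev \meet \evp \gprec \ev \gprec \gtys,
\]
where the first chain uses transitivity of precision. In the arrow case, the meet is computed componentwise, so we apply the induction hypothesis to the domain (whose reordering is oriented contravariantly, which is harmless because reordering is symmetric) and to the codomain distribution types. The distribution case of the induction hypothesis is needed here, so the two items are proved by mutual induction.

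For distribution types, write
\[
\gtya = \phty{\phi[][i]}{\dt{\gtys[i][\p[i]]}},\qquad
\gtyap = \phty{\phi[][h]}{\dt{\ggtys[h][\p[h]]}},\qquad
\gtyb = \phty{\phi[][j]}{\dt{\gtysp[j][\p[j]]}}.
\]
The evidence $\evd$ carries weights $\cww[k]$ tagged by pairs $(i,h)$, and $\evdp$ carries weights $\cww[k']$ tagged by pairs $(h,j)$. By definition of $\witnessname$, $\evd \meet \evdp$ introduces fresh weights $\cww[kk']$ that form a coupling between the weights of $\evd$ and those of $\evdp$, with simple components $\ev[k] \meet \ev[k']$. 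These weights inherit the left tag of $\cww[k]$ and the right tag of $\cww[k']$, so they are tagged by $(i,j)$.

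I then verify the premises of the sufficient rule (wd$\evd$).
\begin{itemize}
\item[(1)] The left marginals are computed by summing over $k'$ and then over $k$ with $\projl{\cww[k]} = i$:
\[
\ssum[k'] \cww[kk'] = \cww[k],
\qquad
\ssum[k \mid \projl{\cww[k]} = i] \cww[k] = \p[i],
\]
where the second equality comes from $\evd$ justifying the reordering. The right marginals equal $\p[j]$ symmetrically, using that $\evdp$ justifies consistency.
\item[(2)] For each positive weight $\cww[kk']$, we have $\cww[k] > 0$ and $\cww[k'] > 0$, so $\ev[k]$ justifies $\reoder{\gtys[i]}{\ggtys[h]}$ and $\ev[k']$ justifies $\ggtys[h] \rel \gtysp[j]$. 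The meet is defined because the pair lies in the domain of $\meet$. The simple-type case of the induction hypothesis then yields $\jtf{\ev[k] \meet \ev[k']}{\gtys[i] \rel \gtysp[j]}$.
\end{itemize}
Finally, the formula parts $\phi[][i]$, $\phi[][h]$, $\phi[][j]$, together with the satisfiability condition recorded by the meet, give the required existential formula of (wd$\evd$).

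The main obstacle I expect is step (1). The meet only couples $\evd$ with $\evdp$ where the inner index $h$ matches, that is, where $\projr{\cww[k]} = \projl{\cww[k']}$. One must therefore check that restricting to the domain of $\meet$ does not lose mass from the marginals. I would first argue that any pair whose inner tags disagree carries weight zero in the coupling, since such simple types are not related. Failing that, I would fall back on the explicit witness used in Lemma~\ref{static-composition-defined}, $\cww[kk'] = \cww[k]\cdot\cww[k']/\p[h]$, to exhibit satisfiability directly.
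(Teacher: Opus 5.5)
Your marginal computation in step (1) is correct, but step (2) has a genuine gap, and the obstacle you anticipate is in the wrong place. The witness $\witnessname$ that defines $\evd \meet \evdp$ imposes no constraint $\projr{\cww[k]} = \projl{\cww[k']}$. It only requires that $\cww[kk'] > 0$ forces $\ev[k] \meet \ev[k']$ to be defined, and it tags the result by $(\projl{\cww[k]}, \projr{\cww[k']})$. Hence a positive weight may combine $\ev[k]$, tagged $(i,h)$, with $\ev[k']$, tagged $(h',j)$, where $h \neq h'$. For instance, when $\gtya$, $\gtyap$ and $\gtyb$ each list $\rtype$ twice, all evidence components are $\rtype$ and the meet's coupling may cross them freely. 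For such pairs there is no common middle type $\ggtys[h]$, so the simple-type induction hypothesis you invoke, which composes a reordering and a consistency through the same type, does not apply.

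Conversely, nothing goes wrong in step (1). Pairs outside the domain of the meet simply get weight zero, and the coupling constraints give $\sum_{k'} \cww[kk'] = \cww[k]$ exactly. Neither fallback repairs step (2). The claim that pairs with disagreeing inner tags carry weight zero is false by the example above. The witness $\cww[k]\cdot\cww[k']/\p[h]$ of Lemma~\ref{static-composition-defined} is tailored to static types. For gradual evidence, two components below the same middle type need not have a defined meet (e.g.\ $\rtype$ and $\btype$ below $\?$), so that assignment need not even solve the meet's formula. In any case, the conclusion concerns the specific evidence $\evd \meet \evdp$, whose formula admits every coupling allowed by $\witnessname$, crossing ones included.

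The fix is the argument from your own base case, which never needs a common middle type. By Lemma~\ref{meet-more-precise} and transitivity of precision, $\ev[k] \meet \ev[k'] \gprec \ev[k] \gprec \gtys[i]$ and $\ev[k] \meet \ev[k'] \gprec \ev[k'] \gprec \gtysp[j]$, whatever $h$ and $h'$ are. Used once at the top level, the same argument proves the whole lemma with no induction. The reordering evidence satisfies $\evd \gprec \initReorder{\gtya}{\gtyap}$, which lies below $\gtya$ by Lemma~\ref{reorderevidence}. So $\evd \meet \evdp \gprec \evd \gprec \gtya$ and $\evd \meet \evdp \gprec \evdp \gprec \gtyb$. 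The paper's proof is of this kind: it treats the reordering evidence as consistency evidence for $\gtya \rel \gtyap$ (cf.\ Lemma~\ref{initreorder}) and invokes Lemma~\ref{lemma:transinvariant} directly, instead of redoing the coupling bookkeeping. Your correction of the conclusion to $\jtf{\evd \trans{} \evdp}{\gtya \rel \gtyb}$ is right.
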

\begin{proof}~
  \begin{itemize}
    \item (non-distribution types) trivial cases.
    \item (distribution types) \\
    %Suppose~
    $
    \sentence{By Lemma}~\ref{lemma:transinvariant} \\
     \so{\evd \trans{} \evdp |- \gtyap \rel \gtyb }
     $
  \end{itemize}
\end{proof}

\begin{lemma}[Reordering evidence]\label{reorderevidence}~
  \begin{itemize}
    \item 
    $\ift{\reoder{\gtys[1]}{\gtys[2]} }{ \initReorder{ \gtys[1]}{ \gtys[2]} \sentence{is defined, } \ad \initReorder{ \gtys[1]}{ \gtys[2]} |- \reoder{\gtys[1]}{\gtys[2]} }$
    \item  
    $\ift{\reoder{\gtya[1]}{\gtya[2]} }{ \initReorder{ \gtya[1]}{ \gtya[2]} \sentence{is defined, } \ad \initReorder{ \gtya[1]}{ \gtya[2]} |- \reoder{\gtya[1]}{\gtya[2]} }$
  \end{itemize}
\end{lemma}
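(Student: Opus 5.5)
The plan is a simultaneous induction on the derivation of $\reoder{\gtys[1]}{\gtys[2]}$ and $\reoder{\gtya[1]}{\gtya[2]}$, following the clauses of Definition~\ref{def:reorder} and the matching clauses of $\initReorder{}{}$. Recall that $\egtys |- \reoder{\gtys[1]}{\gtys[2]}$ holds when $\egtys \gprec \initReorder{\gtys[1]}{\gtys[2]}$. So, once definedness is established, the justification part reduces to reflexivity of precision, $\initReorder{\gtys[1]}{\gtys[2]} \gprec \initReorder{\gtys[1]}{\gtys[2]}$. We therefore prove reflexivity of $\gprec$ on formula types as a small auxiliary fact. For distribution types this uses the diagonal coupling $\cww[ij] = \p[i]$ if $i=j$ and $0$ otherwise, which satisfies the marginal conditions under any solution of the closing formula.

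For simple types, the cases $\rtype$, $\btype$ and $\?$ are immediate, since $\initReorder{}{}$ returns the type itself. In the function case $\reoder{\gtys[1] -> \gtya[1]}{\gtys[2] -> \gtya[2]}$, we take the premises $\reoder{\gtys[2]}{\gtys[1]}$ and $\reoder{\gtya[1]}{\gtya[2]}$ and apply the induction hypothesis to each. This gives definedness of the domain and codomain components, and hence of $(\initReorder{\gtys[1]}{\gtys[2]}) -> (\initReorder{\gtya[1]}{\gtya[2]})$. One orientation mismatch needs care here: the domain premise is flipped. We resolve it by noting that $\reoder{}{}$ is symmetric on simple types (by induction, flipping the coupling matrix in the distribution case), and that $\initReorder{}{}$ is symmetric up to this transposition.

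The main case is distribution types. Here $\initReorder{\gtya[1]}{\gtya[2]} = \witness{\gtya[1]}{\gtya[2]}{\initReorder{}{}}$, which is defined provided the formula $\exists\,\FV(\phi[][1]) \cup \FV(\phi[][2]) \cup \{\cww[ij]\}.\ \phi['][] \land \phii$ is satisfiable. The hypothesis $\couplingLift{\gtya[1]}{\gtya[2]}{\reoderSym}$ provides solutions of $\phi[][1]$ and $\phi[][2]$ together with a coupling $\cww[ij]$ whose marginals are the $\p[i]$ and $\p[j]$, and such that $\cww[ij] > 0$ implies $\reoder{\gtys[i]}{\gtysp[j]}$. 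The plan is to reuse this same assignment as the witness.

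The obstacle is the relation $R$ used in $\witnessname$. It is membership of $(\gtys[i], \gtysp[j])$ in $\dom(\initReorder{}{})$, whereas the hypothesis only gives $\reoder{}{}$. By the induction hypothesis, $\reoder{\gtys[i]}{\gtysp[j]}$ implies that $\initReorder{\gtys[i]}{\gtysp[j]}$ is defined, so every pair with positive weight lies in $\dom(\initReorder{}{})$. Pairs outside the domain receive weight $0$ and are dropped from the index set without changing the marginals. The tag constraints $\phi['][]$ are satisfiable because the fresh $\cww[ij]$ can be given tags $(\projl{\cww[i]}, \projr{\cww[j]})$. This settles definedness, and reflexivity then yields $\initReorder{\gtya[1]}{\gtya[2]} |- \reoder{\gtya[1]}{\gtya[2]}$.
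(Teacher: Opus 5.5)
Your definedness argument is the paper's: the coupling and the solutions of $\phi[][1]$, $\phi[][2]$ supplied by $\couplingLift{\gtya[1]}{\gtya[2]}{\reoderSym}$ are reused as the witness for $\witness{\gtya[1]}{\gtya[2]}{\initReorder{}{}}$. You are more careful than the paper's ``trivial cases'' on three points it skips:
\begin{itemize}
\item the side relation of $\witnessname$ is $\dom(\initReorder{}{})$ rather than $\reoderSym$, which you close with the induction hypothesis on positive-weight pairs;
\item the tag constraints;
\item the flipped domain premise of the function rule.
\end{itemize}

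Where you part ways with the paper is the second conjunct. You take the main-text gloss at face value, namely that $\egtys \vdash \reoder{\gtys[1]}{\gtys[2]}$ iff $\egtys \gprec \initReorder{\gtys[1]}{\gtys[2]}$. The claim then collapses to reflexivity of $\gprec$. The paper's proof instead reads $\initReorder{\gtya[1]}{\gtya[2]} \vdash \reoder{\gtya[1]}{\gtya[2]}$ as the analogue of evidence for consistency: $\initReorder{\gtya[1]}{\gtya[2]} \gprec \gtya[1]$ and $\initReorder{\gtya[1]}{\gtya[2]} \gprec \gtya[2]$. It proves this by building couplings from the witness to each side.

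That stronger reading is what the rest of the development consumes. Lemma~\ref{initreorder} turns it into $\initReorder{\gtya}{\gtyb} \vdash \gtya \rel \gtyb$. That is what lets Lemma~\ref{lemma:rmd} apply Lemma~\ref{lemma:transinvariant} when rule $(\mathit{D}\mathord{::}\gtya)$ composes the reordering evidence with $\evd$.

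Under the literal gloss your proof is complete, but that conjunct carries no information. Under the paper's reading, reflexivity does not suffice. For each side you need a projection coupling. Take any solution of the witness formula; it contains $\phi[][1] \land \phi[][2]$, so keep the values it assigns to $\FV(\phi[][1])$. Put weight $\cww[ij]$ on the pair formed by witness entry $(i,j)$ and entry $i$ of $\gtya[1]$, and $0$ elsewhere. Its marginals are $\cww[ij]$ and $\sum_{j} \cww[ij] = \p[i]$.

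At positive entries you then need $\initReorder{\gtys[i]}{\gtysp[j]} \gprec \gtys[i]$. This is the simple-type half of the same strengthened statement, so it must be carried through your induction, including the flipped-domain function case. The argument for $\gtya[2]$ is symmetric.
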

\begin{proof}~
  \begin{itemize}
    \item (non-distribution types) trivial cases.
    \item (distribution types) \\
    Suppose~$\gtya[1] = \d{ \gtys[i][\p[i]] }[i \in \iSet] \ad \gtya[2] = \d{ \gtys[j][\p[j]] }[j \in \jSet]$ \\
    $\so{\sentence{we need to show, }} \\
    \initReorder{ \gtya[1]}{ \gtya[2]} \sentence{is defined, } \\
    \sentence{that is,}\\
    {
      \ssum[i] \cww[ij] = \p[j]
    }\\
    {
      \ssum[j] \cww[ij] = \p[i]
    }\\
    \since{
      \reoder{\gtya[1]}{\gtya[2]} 
    }\\
    \so{
      \ssum[i] \cww[ij] = \p[j]
    }\\
    \so{
      \ssum[j] \cww[ij] = \p[i]
    }\\
    \so{
      \initReorder{ \gtya[1]}{ \gtya[2]} \sentence{is defined, } 
    }\\
    \so{\sentence{we need to show, }} \\
    \initReorder{ \gtya[1]}{ \gtya[2]} |- \reoder{\gtya[1]}{\gtya[2]} \\
    \sentence{that is,}\\
    { \initReorder{ \gtya[1]}{ \gtya[2]}  \gprec \gtya[1]}\\
    { \initReorder{ \gtya[1]}{ \gtya[2]}  \gprec \gtya[2]}\\
    \sentence{Suppose}~ \initReorder{ \gtya[1]}{ \gtya[2]} = 
    \dt{\gtys[k][\cww[k]]} \\
    \sentence{that is,}\\
    {
      \ssum[i] \cww[ik] = \cww[k]
    }\\
    {
      \ssum[k] \cww[ik] = \p[i]
    }\\
    {
      \ssum[j] \cww[jk] = \cww[k]
    }\\
    {
      \ssum[k] \cww[jk] = \p[j]
    }\\
    \sentence{Suppose}~\cww[ik] = (\ssum[j] \cww[k]) \cdot \cww[k] \\
    \so{ \ssum[i] \cww[ik] }\\
    \eq{  \ssum[i] (\ssum[j] \cww[k]) \cdot \cww[k] }\\
    \eq{  \cww[k] }\\
    \so{ \ssum[k] \cww[ik] }\\
    \eq{  \ssum[k] (\ssum[j] \cww[k]) \cdot \cww[k] }\\
    \eq{  \p[i] }\\
    \sentence{Suppose}~\cww[jk] = (\ssum[i] \cww[k]) \cdot \cww[k] \\
    \so{ \ssum[j] \cww[jk] }\\
    \eq{  \ssum[j] (\ssum[i] \cww[k]) \cdot \cww[k] }\\
    \eq{  \cww[k] }\\
    \so{ \ssum[k] \cww[jk] }\\
    \eq{  \ssum[k] (\ssum[i] \cww[k]) \cdot \cww[k] }\\
    \eq{  \p[j] }\\
    \so{
      \initReorder{ \gtya[1]}{ \gtya[2]} |- \reoder{\gtya[1]}{\gtya[2]}
    }
    $
  \end{itemize}
\end{proof}

\begin{theorem}[Type Safety]
  If $ |-d \ela{\sm}{m}{\sgtya} $ then 
  $ m \nreds{}{k}  \phty{\pphip}{\V}$, 
  $ |-d \phty{\pphip}{\V} : \gtyap $ and
  $ \reoder{  \gtyap }{ \liftD{\sgtya} }$ or $m \Uparrow$.
\end{theorem}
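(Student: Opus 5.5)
The plan is to reduce the statement to a preservation property of the big-step distribution semantics of \tlang, and then use it for the elaborated term. By Theorem~\ref{theorem:ep} (elaboration preserves types), $\ela{\sm}{m}{\sgtya}$ gives $|-d m : \liftD{\sgtya}$. Because the semantics is big-step and step-indexed, a term either has some $k$ with $m \nreds{}{k} \phty{\pphip}{\V}$ or it does not; in the second case $m \Uparrow$ holds by definition. The real content is therefore the following claim: whenever $|-d m : \gtya$ and $m \nreds{}{k} \phty{\pphip}{\V}$, there is $\gtyap$ with $|-d \phty{\pphip}{\V} : \gtyap$ and $\reoder{\gtyap}{\gtya}$. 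Instantiating $\gtya = \liftD{\sgtya}$ then gives the theorem.

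I prove the claim by induction on the derivation of $m \nreds{}{k} \phty{\pphip}{\V}$, with inversion on the typing of $m$ in each case.

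\emph{Simple cases.}
\begin{itemize}
\item (Dv): rule (GV) types the Dirac distribution, and reordering is the identity coupling.
\item (Derr): the type of $\errort{\gtya}$ is split componentwise into errors $\errort{\gtys[i]}$. Each is typed by (Gerr$_{\gtys}$), using well-formedness from Lemma~\ref{lemma:welltyped-wellform-target}, again with the identity coupling.
\item (Dmon), (Dit), (Dif): immediate from the induction hypothesis.
\item (D$+$): Lemma~\ref{lemma:transinvariant} shows that the composed evidence justifies $\rtype\rel\rtype$.
\item (D$\mathord{::}\gtys$): in the success branch Lemma~\ref{lemma:transinvariant} types the re-ascribed raw value. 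In the failure branch the error value is typed by (Gerr$_{\gtys}$).
\item (Dapp): first ascribe the argument via $\cdom(\ev)$ and use the simple case above. Then apply Lemma~\ref{subtyp}, which covers both the value and the error substitution. Finally apply the induction hypothesis to the body ascribed via $\ccod(\ev)$, which again has type $\gtya$.
\end{itemize}

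\emph{Compositional cases.} For (Dlet), the induction hypothesis on $\tm$ gives a value distribution of type $\gtyap \reoder{}{} \phty{\phi}{\dt{\gtys[i][\p[i]]}}$. For each outcome $\tv[i]$, its simple type is matched by the reordering coupling to some $\gtys[i]$, so Lemma~\ref{subtyp} types $\ssub{\tn}{\tv[i]}{\tx}$ at $\gtya[i]$. The induction hypothesis then yields typed distributions reordering $\gtya[i]$. The weighted sum is typed by (GV). To get the reordering witness, compose the outer coupling with the inner ones: the weight on the pair of (outcome $i$, component $j$) is the product of the outer coupling weight and the inner coupling weight, which is essentially Lemma~\ref{setprec} and Lemma~\ref{rtrans} with $\gprec$ replaced by $\reoderSym$. (D$\oplus$) is the two-branch instance of the same construction. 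The formulas conjoined by the rules are exactly those appearing in the (Glet)/(G$\oplus$) result types, so the closing formula $\pphip$ matches.

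\emph{Main obstacle: (D$\mathord{::}\gtya$).} Here I first obtain $\V$ typed at $\gtyap$ with $\reoder{\gtyap}{\gtya}$ by the induction hypothesis. Lemma~\ref{reorderevidence} makes $\initReorder{\gtyap}{\gtya}$ defined and a reordering evidence. There are two branches.
\begin{itemize}
\item \emph{Success branch.} Lemma~\ref{lemma:rmd} shows that $(\initReorder{\gtyap}{\gtya})\trans{}\evd = \phty{\pphi[2]}{\dt{\ev[k][\cww[k]]}}$ justifies $\gtyap\rel\gtyb$. Each component $\ev[k]$ with positive weight therefore justifies $\gtys[\projl{\cww[k]}]\rel\gtysp[\projr{\cww[k]}]$. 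Using the simple ascription case, each $\asc{\ev[k]\tv[i]}{\gtysp[j]}$ reduces to a Dirac distribution typed at $\gtysp[j]$, either a value or $\errort{\gtysp[j]}$.

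The delicate step is showing that $\sum_k \cww[k]\cdot\V[k]$ has a type reordering $\gtyb$. The constraint $\phi_R$ from (wd$\evd$) says the weights with right tag $j$ sum to $\p[j]$. So the coupling relating component $k$ to $j=\projr{\cww[k]}$ with weight $\cww[k]$ is a reordering witness. I must also check that $\pphi[2]$ implies $\pphi[3]$ and keeps the type well-formed. I would argue this from satisfiability of the meet formula, which conjoins $\pphi[3]$.
\item \emph{Error branch.} Typing $\errort{\gtyb}$ by (Gerr$_{\gtya}$) and (Derr) gives the conclusion directly.
\end{itemize}
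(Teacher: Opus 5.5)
Your proposal is essentially the paper's proof. After using elaboration-preserves-typing to get $\vdash m : \liftD{\sgtya}$, the paper also proceeds by strong induction on the step index of the reduction, with case analysis on the typed term; divergence is just the absence of a reduction. It uses the same ingredients you name: the substitution lemma and reorder transitivity/coupling composition for let and choice, and the reordering-evidence and reordering-meet lemmas (Lemma~\ref{reorderevidence}, Lemma~\ref{lemma:rmd}) for the distribution ascription.

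Your version is more explicit than the paper's. You state the preservation invariant for arbitrary well-typed target terms, you treat (Derr), (Dmon) and the error branch of (D$\mathord{::}\gtya$), and you build the right-tag coupling explicitly.

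One point to fix in the let case: a reordering coupling only matches simple types up to $\reoderSym$, not syntactic equality, so Lemma~\ref{subtyp} does not apply directly. Strengthen the invariant to say that each result value is typed at exactly the component type it is coupled with. This holds, since (Dv), (Derr) and the two ascription rules all produce values at exactly such types; the paper also leaves this point implicit.
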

\begin{proof}
  By strong induction on the step number and case analysis on typing judgement.
  By lemma~\ref{ep}, we have $|-d m : \gtya$ and $\gtya = \liftD{ \sgtya }$.
  \begin{case}[$m = v$]
   This is the trivial case.
 \end{case}

  \begin{case}[$m = (\asc{\ev[2] v}{\gtysp})$] 
  $\\$
  $
  \since{
    v = 
  {
  \begin{cases}
    \asc{\ev[1] u}{\gtys} \\
    \errort{\gtys} 
  \end{cases} }
  } \\
  \since{
    \inference{ |-t v : \gtys \;  \ev[1] |-t \gtys \rel \gtysp}{ |-t 
    \asc{\ev[2] \v}{\gtysp} : \ds{\gtysp[][1]}}
  } \\
  \sentence{Suppose} \; \v =  \asc{\ev[1] u}{\gtys} \\
  \ift{ \ev[1] \trans{} \ev[2] = \ev[3]
  }{
    \rrctx[\phii[]] \asc{\ev[2](\asc{\ev[1] u}{\gtys})}{\gtysp} \nreds{}{1} \rctx[\phii[']] \dt{\pt{ {\asc{\ev[3] u }{\gtysp}} }}
  } \\
  \so{
    \reoder{\ds{\gtysp[][1]}}{\ds{\gtysp[][1]}}
  } \\
  \ift{
    \ev[1] \trans{} \ev[2]~\text{undefined}
  }{
    \rrctx[\phii[]] \asc{\ev[2] (\asc{\ev[1] u}{\gtys})}{\gtysp} \nreds{}{1} \rctx[\cdot] \dt{\pt{\errort{\gtysp}}}
  }\\
  \so{
    \reoder{\ds{\gtysp[][1]}}{\ds{\gtysp[][1]}}
  }\\
  {\v = \errort{\gtys} } \\
  \since{
    \rrctx[\phii[]] \asc{\ev[2] \errort{\gtys}}{\gtysp} \nreds{}{1} \rctx[\phii[]] \dt{\pt{\errort{\gtysp}}}
  } \\
  \since{
   |-t \errort{\gtysp} : \gtysp
  } \\
  \so{
    \reoder{\ds{\gtysp[][1]}}{\ds{\gtysp[][1]}}
  } \\$
  \end{case}

  \begin{case}[$m = (\lett{x}{\mp}{\np}) $] 
   $\\$
   $
   \since{ \\
    \inference{
         |-d \mp : \phty{\phi[][i] }{\d{\gtys[i][\p[i]]}[i \in \iSet] }  \\
        \forall i\in\iSet.~\Gamma, x : \gtys[i] |-d \np : \gtya[i]
      }
      {|-d \lett{x}{\mp}{\np} :  \phi |- \sum_{i \in \iSet} \p[i] \cdot \gtya[i]}
   } \\
   \since{ \\
   \inference[\trg{(\mathit{Dlet})}]{\rrctx[\phii[]] \tmp \nreds{}{\j[1]} \rctx[\phii[']] \d{ \tvp[i][\p[i']]}[i' \in \iiSet] \\
    \forall i'. ~\rrctx[\phii[']] \ssub{\tnp[]}{\tvp[i]}{\tx} \nreds{}{\j[2]} \rctx[\phi[][i]] \V[i'] 
    }
    { \rrctx[\phii[]] \trg{\lett{\tx}{\tmp}{\tnp}} \nreds{}{\j[1]+\j[2]+1} \rctx[(\bigwedge_{i' \in \iiSet} \phi[][i])] \ssum[i' \in \iiSet] \p[i'] \cdot \V[i']  }
   } \\
   \so{
    \\
    \sentence{
      we need to show that,}
     } \\
    \ift{
      \V[i] =  \dt{ \pt{\v[ii']}[\p[ii']]}
    }{
      \ol{
        |-d \v[ii'] : \gtysp[ii']
      } \ad 
     \reoder{ 
      \phty{\phipp}{ \sum_{ii' \in \iiSet} \dt{\gtysp[ii'][\p[ii']]} } 
      }{
       \sum_{i \in \iSet} \p[i] \cdot \gtya[i] } 
    } \\
    \sentence{By the induction hypothesis,
    } \\
    \so{
      \ol{ |-d \v[i'] : \gtys[i'] } \ad 
      \reoder{ \phty{\phi[][i']}{\dt{\gtys[i'][\p[i']]}}     }{ 
        \phty{\phi[][i] }{\dt{\gtys[i][\p[i]]}}  }  
    } \\
    \so{
     \\
         |-d  \d{ \pt{\v[i']}[\p[i']] }[i' \in \iiSet] :  \phty{\phi[][i']}{ \d{\gtys[i'][\p[i']]}[i' \in \iiSet] }  \\
        \forall i' \in\iiSet.~\Gamma, x : \gtys[i'] |-d \np : \gtya[i']
    } \\
    \so{
      \reoder{\sum_{i \in \iSet} \p[i] \cdot \gtya[i]}{\sum_{i' \in \iiSet} \p[i'] \cdot \gtya[i']} 
    } \\
    \since{\sentence{By substitution lemma}~\ref{subtyp}, 
    } \\
    \so{
      x : \gtys[i'] |-d \np[][(\asc{\ev[i'] \u[i']}{\gtys[i']}) /x] : \gtya[i']
    } \\ 
    \sentence{By the induction hypothesis,
    } \\
    \ift{
      \V[i] =  \dt{\pt{\v[ii']}[\p[ii']]}
    }{
      \ol{
        |-d \v[ii'] : \gtysp[ii']
      } \ad 
     \reoder{  \phty{\phipp}{ \sum_{ii' \in \iiSet} \dt{\gtysp[ii'][\p[ii']]} } }{  \phty{\phi[][i']}{ \sum_{i' \in \iiSet} \p[i'] \cdot \gtya[i'] } } 
    } \\
    \sentence{By the transitivity of reorder,
    } \\
    \so{
      \ift{
      \V[i] =  \dt{\pt{\v[ii']}[\p[ii']]}
    }{
      \ol{
        |-d \v[ii'] : \gtysp[ii']
      } \ad 
     \reoder{  \phty{\phipp}{\sum_{ii' \in \iiSet} \dt{\gtysp[ii'][\p[ii']]} } }{\sum_{i \in \iiSet} \p[i] \cdot \gtya[i]} 
    }
    } \\
    \sentence{The result holds.}$ 
  \end{case}

  \begin{case}[$ m = (v\;w) $] 
    $\\$
    $\since{
      \\
      \inference{
        |-t v : \gtys -> \gtya  &  
        |-t w : \gtys & 
       }{ |-d v\;w : \gtya}
    } \\
    \v \; \sentence{has the form}~ \asc{\ev[1] (\lambda x: \gtysp. \mp)}{\gtys -> \gtya} \\
    \since{
      \inference{x:\gtysp |-t \mp : \gtyb}{ 
        |-t \lambda x:\gtysp. \mp : \gtysp -> \gtyb}
    }\\
    \since{ 
      \\
      \inference{ \rrctx[\phii[]] \asc{\dom(\ev[1](\asc{\ev[2] u}{\gtys})}{\gtysp} \nreds{}{1} \rctx[\cdot] \dt{\pt{w}} \\
    \rrctx[\phii[']] \asc{\cod(\ev[1](\mp[][w/x])}{\gtya} \nreds{}{\j}  \rctx[\phii['']] \V }{
      \rrctx[\phii[]] (\asc{\ev[1] (\lambda x: \gtysp. \mp)}{\gtys -> \gtya}) (\asc{\ev[2] u}{\gtys}) 
      \nreds{}{\j+1} \rctx[\phii['']] \V} 
    } 
    \\
    \sentence{we need to show, 
    }\\
    \ift{
      \V = \dt{ \pt{\v[i]}[\p[i]] }
    }{
     \ol{ |-d \v[i] : \ggtysp[i]} \ad 
     \reoder{  \phty{\phipp}{\dt{\ggtysp[i][\p[i]]}} }{\gtya} 
    } \\
    \sentence{By the induction hypothesis,
    }\\
    \so{
      |-d w : \gtysp
    }\\
    \sentence{By substitution lemma}~\ref{subtyp}, \\
    \so{
      |-d m[w/x] : \gtyb
    } \\
    \sentence{By the induction hypothesis,
    }\\
    \so{ \\
    \ift{
      \V = \dt{ \pt{\v[i]}[\p[i]] }
    }{
     \ol{ |-d \v[i] : \ggtysp[i]} \ad 
     \reoder{ \phty{\phipp}{\dt{\ggtysp[i][\p[i]]}}}{\gtya} 
    } 
    }\\$
  \end{case}
  
  \begin{case}[$ m = (\asc{\evd \mp}{\gtya}) $] 
    $\\$
    $
    \since{ \\
      \inference{ |-d \mp : \gtya & \evd |-d \gtya \rel \gtyb 
      }
     { |-d \asc{\evd \mp}{\gtyb} : \gtyb}
    } \\
    \since{ \\
    \inference[\trg{(\mathit{D}\mathord{::}\gtya)}]{
      \rrctx[\phi[][1]] \tm \nreds{}{\kp} \rctx[\phi[][1]] \d{\tv[i][\p[i]]}[i \in \iSet]
      &  |-d \phty{\phi[][1]}{\d{\tv[i][\p[i]]}[i \in \iSet] }: \gtyap &
      %\evdp = \initReorder{\gtyap}{\gtya} &
      \evd |- \gtya \rel \gtyb
       &
       \gtyb = \phty{\pphi[3]}{ \d{\gtysp[j][\p[j]]}[j \in \jSet]}
      }
      { \rrctx[\phi[][1]] \trg{ (\asc{\evd \tm}{ \gtyb }) } \nreds{}{\kp+1}
      \rctx[\phi[][2] ]  
      \begin{cases}
        \ssum[\kk \in \kSet] 
        \cww[k] \cdot \V[\kk]  
        & \begin{block}
          \text{If}~ (\initReorder{\gtyap}{\gtya}) \trans{} \evd = \phty{\pphi[2]}{\d{\ev[k][\cww[k]]}[k \in \kSet]}\\
          \text{where}~  \forall k \in \kSet,   
           i = \projl{\cww[k]}, j = \projr{\cww[k]}. 
           \trg{ (\asc{\ev[\kk] \tv[i]}{\gtysp[j]}) } \nreds{}{1} \rctx[\cdot] \V[\kk] 
        \end{block}\\
        \errort{\gtyb} & \text{otherwise}
      \end{cases}
      } 
    } \\
    \sentence{Suppose}~{\evdp}={\initReorder{\gtyap}{\gtya}}, 
    \gtya = \dt{\gtys[l][\p[l]]} \ad \gtyap= \dt{\gtys[i][\p[i]]} \\
    \so{
    \sentence{we need to show the following, } }\\
    \ift{\ssum[\kk] \cww[k] \cdot \V[\kk]  = \dt{ \pt{\v[k]}[\cww[k]] } }{ \overline{\empty |-d  \v[k] : \gtysp[k]} 
    \ad \reoder{ \phty{\phi[][2]}{\dt{ \gtysp[k][\cww[k]]} } }{ \phty{\phi[][3]}{ \dt{\gtysp[j][\p[j]]}} } } \\
    \since{  m \nreds{}{\kp} \rctx[\phi[][ 1]] \V
     } \\
     \since{ \V = \d{ \pt{\v[i]}[\p[i]] }[i \in \iSet] } \\
    \since{ |-d m : \gtya } \\
    \so{ \sentence{By the induction hypothesis, } }\\
    \sentence{Suppose}\\
    {\gtyap = \phty{\phi[][1]}{ \dt{ \gtys[i][\p[i]]} } } \\
    {\gtya = \phty{\phi[][4]}{\dt{ \gtys[l][\p[l]]} } } \\
    \so{
      \reoder{ \gtya }{ \gtyap }
      }\\
    \so{
    \reoder{
      \phty{\phi[][1]}{ \dt{ \gtys[i][\p[i]]} } }{\phty{\phi[][4]}{\dt{ \gtys[l][\p[l]]} } }
    }\\
    \since{
      \evd |- \gtya \rel \gtyb
    }\\
    \since{
      \evdp |- \gtya \rel \gtyap
    }\\
    \sentence{By Lemma}~\ref{reordermeetdefine} \\
    \so{\evd \trans{} \evdp |- \gtyap \rel \gtyb } \\
    \since{ \sentence{By the definition of coupling} } \\
    \so{\gtys[i] = \gtys[l] \ad \gtys[i] \rel \gtysp[j]} \\
    \so{ |-d \asc{\ev[\kk] \v[i]}{\gtysp[j]} : \dt{\gtysp[j][1]}} \\
    \since{ (\asc{\ev[k] \v[i]}{\gtysp[j]}) \nreds{}{1} \rrctx[\phi[][k]] \V[k]
     } \\
    \so{ \V[k] \sentence{has the form} : \dt{\pt{\v[k]}[1]}} \\
    \so{ \sentence{By induction hypothesis,} }\\ 
    \so{ |-t \v[k] : \gtysp[j]} \\
     \so{
      \ift{\ssum[\kk] \cww[k] \cdot \V[\kk]  = \dt{ \pt{\v[k]}[\cww[k]] } }{ \overline{\empty |-d  \v[k] : \gtysp[k]} 
      \ad \reoder{ \phty{\phi[][2]}{\dt{ \gtysp[k][\cww[k]]} } }{ \phty{\phi[][3]}{ \dt{\gtysp[j][\p[j]]}} } } 
     } \\
     \so{
      \sentence{The result holds.}
     }
    $ 
  \end{case}

  \begin{case}[$ m = ({\mp} \ppsum {\np}) $] 
    $\\$
    $
    \since{\\
    \inference{
    |-d \mp : \gtya  &  
    |-d \np : \gtyb & 
    }{ |-d { \mp} \ppsum { \np} : \p[1] \cdot \gtya + (\p[2]) \cdot \gtyb}}
     \\
    \since{ \\
      \inference{\rrctx[\phi] {m} \nreds{}{\j[1]} \rctx[\phip] \V[1] &
     \rrctx[\phi]  {n} \nreds{}{\j[2]} \rctx[\phipp] \V[2] }{
      \rrctx[\phi]  {m} \ppsum {n} \nreds{}{\j[1]+\j[2]+1} \phip \land 
      \rctx[\phipp] \p[1] \cdot \V[1] + \p[2] \cdot \V[2] }
    }\\
    \sentence{Suppose} \;
      \V[1] = \dt{ \pt{\v[i]}[\p[i]] } \ad  \V[2] = \dt{ \pt{\v[j]}[\p[j]] } \\
    \sentence{we need to show, }\\
    {
      \ol{|-d \v[i] : \gtys[i]} , \ol{|-d \v[j] : \gtys[j]} 
      \ad 
      \reoder{ \p[1] \cdot \phty{\phi[][i]}{\dt{\gtys[i][\p[i]]}}  +  \p[2]  \cdot \phty{\phi[][j]}{\dt{\gtys[j][\p[j]]}} }{
        \p[1] \cdot \gtya + \p[2]  \cdot \gtyb} 
    } \\
    \sentence{Suppose} \;
    {
      \gtya = \phty{\phi[][i']}{\dt{\gtys[i'][\p[i']]}} \ad \gtyb = \phty{\phi[][j']}{\dt{\gtys[j'][\p[j']]}}
    }\\
    \so{ \\
      \sentence{we need to show, }\\
    }\\
    {
      \ssum[ij] \cww[ii'jj'] = \p[i'j'] \ad \ssum[i'j'] \cww[ii'jj'] = \p[ij]
    }\\
    \so{
      \ssum[ij] \cww[ii'jj'] =  \p[1] \cdot \p[i'] + \p[2] \cdot  \p[j'] \ad 
      \ssum[i'j'] \cww[ii'jj'] =  \p[1] \cdot \p[i] + \p[2] \cdot  \p[j]  
    }\\
    \sentence{By the induction hypothesis, } \\
    {\ol{|-d \v[i] : \gtys[i]} , \;
    \reoder{\p[1] \cdot \dt{\gtys[i][\p[i]]}}{\p[1] \cdot \gtya}  
    \ad 
    \ol{|-d \v[j] : \gtys[j]}, \;
    \reoder{\p[2]  \cdot \dt{\gtys[j][\p[j]]}}{\p[2]  \cdot \gtyb} } \\
    \so{
      \ssum[i] \p[ii'] =  \p[1] \cdot \p[i'] , \; 
      \ssum[i'] \p[ii'] =  \p[1] \cdot \p[i] 
      \ad 
      \ssum[j] \p[jj'] =  \p[2]  \cdot  \p[j'] , \;
      \ssum[j'] \p[jj'] =  \p[2]  \cdot  \p[j]
    }\\
    \sentence{Suppose } \; 
    { \cww[ii'jj']  = \p[ii'] + \p[jj'] } \\
    \so{
      \ssum[ij] (\p[ii'] + \p[jj'])
    }\\
    \eq{
      \ssum[ij] \p[ii'] + \ssum[ij] \p[jj']
    } \\
    \eq{
      \ssum[i] \p[ii'] + \ssum[j] \p[jj']
    } \\
    \since{
      \ssum[i] \p[ii'] =  \p[1] \cdot \p[i']
    } \\
    \since{
      \ssum[j] \p[jj'] =  \p[2]  \cdot  \p[j']
    }\\
    \so{ \\
    \eq{
      \p[1] \cdot \p[i'] + \p[2]  \cdot  \p[j']
    }
    }\\
    \\
    \so{
      \ssum[i'j'] (\p[ii'] + \p[jj'])
    }\\
    \eq{
      \ssum[i'j'] \p[ii'] + \ssum[i'j'] \p[jj']
    } \\
    \eq{
      \ssum[i'] \p[ii'] + \ssum[j'] \p[jj']
    } \\
    \since{
      \ssum[i'] \p[ii'] =  \p[1] \cdot \p[i]
    } \\
    \since{
      \ssum[j'] \p[jj'] =  \p[2]  \cdot  \p[j]
    }\\
    \so{ \\
    \eq{
       \p[1] \cdot \p[i] + \p[2]  \cdot  \p[j]
    }
    }\\
    \sentence{The result holds.}$
\end{case}
\begin{case}[$\tm = \add{\tv}{\tw}$]
  $\\$
  $\since{ \\  
    \inference[(G$+$)]{
     |-t \tv :  \rtype   &  
     |-t \tw :  \rtype 
  }
  { |-d \trg{ \add{\tv}{\tw} } : \ds{\rtype^1} }}\\
  \since{\\
    \inference[\trg{(+)}]{ \ev[1] \trans{} \ev[2] = \ev[3] & \trg{r_3} = \trg{r_1} + \trg{r_2}   }
  { \rrctx[\phii] \trg{\add{\ev[1] \asc{r_1}{\rtype} }{ \asc{\ev[2] r_2}{\rtype} }} \nreds{}{1}  \rctx[\cdot] \dt{\pt{\trg{\asc{\ev[3] r_3}{\rtype}}}}
  } 
  } \\
  \since{
    \reoder{\dt{\pt{\rtype}}}{\dt{\pt{\rtype}}}
  }\\$
  So the result holds.
\end{case}

\begin{case}[$\tm = $ if]
  $\\$
  $
  \since{ \\
    \inference[(Gif)]{
       |-t \tv : \btype \\
       |-d \tm : \gtya & 
       |-d \tn : \gtya
    }
    { |-d \trg{ \ite{\tv}{\tm}{\tn} } : \gtya }
  }\\
  \since{ \\
          \inference[\trg{(ift)}]{
            \tm \nreds{}{\j} \rctx[\phi]  \V
      }{  \trg{\ite{ \asc{\ev \ttt}{\btype} }{\tm}{\tn}}     
          \nreds{}{\j+1}  \rctx[\phi]  \V \\
      }
  }\\
  \since{ \\
          \inference[\trg{(iff)}]{
            \tn \nreds{}{\j} \rctx[\phi]  \V
      }{  \trg{\ite{ \asc{\ev \fff}{\btype} }{\tm}{\tn}}     
          \nreds{}{\j+1}  \rctx[\phi]  \V \\
      }
  }\\
  \sentence{
   we need to show,
  } \\
  \sentence{if}~ \ttt, \\
  |- \V : \gtyb \ad \reoder{\gtya}{\gtyb} \\
  \sentence{if}~ \fff, \\
  |- \V : \gtyb \ad \reoder{\gtya}{\gtyb} \\
  \sentence{By the induction hypothesis,} \\
  \sentence{if}~ \ttt, \\
  |- \V : \gtyb \ad \reoder{\gtya}{\gtyb} \\
  \sentence{if}~ \fff, \\
  |- \V : \gtyb \ad \reoder{\gtya}{\gtyb}$
\end{case}
\end{proof}

\subsection{\tlang: Gradual Guarantee}
Figure~\ref{fig:term-precision2} presents the complete precision rules. 

 \begin{lemma}[Substitution preserve precision]\label{subep}
  $ \ift{ \octx{}{\tm \gprec \tn}  ~ and ~ \octx{\ome}{\tv \gprec \tvp} }{
  \octx{\ome}{\ssub{\tm}{\tv}{\tx} \gprec \ssub{\tn}{\tvp}{\tx}}}$.
\end{lemma}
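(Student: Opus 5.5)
The plan is to prove the statement by structural induction on the derivation of $\tm \gprec \tn$ from Figure~\ref{fig:term-precision2}. Precision is kept typed, so $\tm$ and $\tn$ are typed under $\Gamma,\tx:\gtys$ and $\Gamma',\tx:\gtysp$ with $\gtys \gprec \gtysp$. Before the induction, I split on the shapes of the two substituted values, because $\mathit{sub}$ is defined by cases on them.

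\emph{Case (a):} both $\tv$ and $\tvp$ are ascribed raw values, say $\asc{\ev \tu}{\gtys}$ and $\asc{\evp \tup}{\gtysp}$. Then $\mathit{sub}$ is ordinary capture-avoiding substitution on both sides. The argument is the standard substitution lemma for a congruence relation.
\begin{itemize}
\item The variable case $\tx \gprec \tx$ gives exactly the hypothesis $\tv \gprec \tvp$.
\item The other variable cases, and the constant cases $\ttr$ and $\ttb$, are unchanged by substitution.
\item The error cases $\errort{\gtys} \gprec \tm$ and $\errort{\gtya} \gprec \tm$ need typing preserved on the right side. I would get this from Lemma~\ref{subtyp}.
\item Every remaining rule (for $\lambda$, both ascriptions, application, $\<let>$, $\trg{\mathord{\gprec}\oplus}$, $+$, and if) only requires precision of immediate subterms. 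These hold by the induction hypothesis. The type, evidence and formula side conditions (such as $\forall\FV(\phi_1).\phi_1 \implies \phi_2$) do not mention $\tx$, so they carry over unchanged.
\item For binders I would alpha-rename so that they avoid $\tx$ and the free variables of the values.
\end{itemize}

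\emph{Case (b):} $\tv = \errort{\gtys}$. Then $\ssub{\tm}{\tv}{\tx} = \errort{\gtya}$, where $\tx:\gtys |- \tm : \gtya$. I would apply the rule $\errort{\gtya} \gprec \tn'$ with $\tn' = \ssub{\tn}{\tvp}{\tx}$. This needs two facts.
\begin{itemize}
\item $\tn'$ is well typed at some $\gtyb$. This comes from Lemma~\ref{subtyp}, or from $\tn' = \errort{\gtyb}$ directly if $\tvp$ is also an error.
\item $\gtya \gprec \gtyb$. For this I would use an auxiliary fact: the typing of target terms is monotone under precision, i.e.\ if $\tm \gprec \tn$, $\gtys \gprec \gtysp$, $\tx:\gtys |- \tm:\gtya$ and $\tx:\gtysp |- \tn:\gtyb$, then $\gtya \gprec \gtyb$. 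This is the \tlang counterpart of Lemma~\ref{enp} and Lemma~\ref{sgg}, proved by the same induction.
\end{itemize}

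\emph{Case (c):} $\tv$ is an ascribed value but $\tvp = \errort{\gtysp}$. This is impossible. The only precision rule with an error on the right is $\errort{\gtys} \gprec \errort{\gtysp}$, so the left side would also have to be an error.

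I expect case (b) to be the main obstacle. The substituted terms have completely different shapes, so the induction hypothesis cannot be used, and the argument rests entirely on the monotonicity fact. For its distribution-type cases (for example $\<let>$ and $\oplus$, where result types are built with $\sum$ and scaling) I would combine the couplings witnessing precision of the components. This would mirror the construction in Lemma~\ref{setprec}, together with monotonicity of $\cdot$ and $+$ over formula types.
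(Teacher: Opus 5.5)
Your proof follows the paper's: split on whether the substituted left value is an error (where $\mathit{sub}$ collapses the term to $\errort{\gtya}$) or an ascribed raw value (ordinary substitution, by induction on the derivation of $\tm \gprec \tn$, every case a congruence). You are more explicit than the paper, which silently assumes $\tvp$ is ascribed in the second case and dismisses the error case with ``unfold $\mathit{sub}$, the result holds''; your observations that the mixed combination is impossible and that the error case needs the result types to be related by precision (a typing-monotonicity fact for TPLC that the paper never states) are correct sharpenings, not a different route.
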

\begin{proof}
  $~$
  \begin{itemize}
  \item  if $\tv = \errort{\gtys / \gtya}$, unfold $\ssub{\tm}{\errort{\gtys / \gtya}}{x},$
         the result holds. 
  \item  if $\tv = \asc{ \ev \tu }{\gtys}$ \\
         we need to show, \\ 
         $ 
         \ift{ \octx{}{\tm \gprec \tn}  ~ and ~ \octx{\ome}{\asc{ \ev[1] \tu }{\gtys} \gprec 
         \asc{ \ev[2] \tup }{\gtysp}} }{
         \octx{\ome}{ \tm[][ ( \asc{ \ev[1] \tu }{\gtys}) / \tx] \gprec 
         \tn[][( \asc{ \ev[2] \tup }{\gtysp}) / \tx] }} \\
         \sentence{By induction on the derivation of}~\tm \gprec \tn, 
         $
         \begin{case}[$\tx \gprec \tx, \ttr \gprec \ttr, \ttb \gprec \ttb, \errort{\gtys/ \gtya} \gprec \tm$]
          trivial cases. 
         \end{case}
         \begin{case}[$ \trg{(\lambda \tx:\gtys. \tm)} \gprec \trg{(\lambda \tx:\gtysp. \tmp)}$]
          $\\
          \sentence{If}~ {\tx \neq \tx,} \sentence{the result holds.}\\
          \sentence{If}~ {\tx = \tx,}
          \sentence{
            we need to show,
          }\\
          {
            \tm[][ ( \asc{ \ev[1] \tu }{\gtys}) / \tx] \gprec 
             \tmp[][( \asc{ \ev[2] \tup }{\gtysp}) / \tx]
          } \\
          \sentence{By the induction hypothesis,}\\
          \so{
            \tm[][ ( \asc{ \ev[1] \tu }{\gtys}) / \tx] \gprec 
             \tmp[][( \asc{ \ev[2] \tup }{\gtysp}) / \tx]
          } $
         \end{case}
         \begin{case}[$ \trg{ \asc{\ev \tv}{\gtys} } \gprec \trg{\asc{\evp \tvp}{\gtysp}}$]
          $ \\
          \sentence{
            we need to show,
          }\\
          {
            \trg{ \asc{\ev \tv[][][ ( \asc{ \ev[1] \tu }{\gtys}) / \tx ] }{\gtys} } \gprec \trg{\asc{\evp \tvp[][][ ( \asc{ \ev[2] \tup }{\gtysp}) / \tx] }{\gtysp}}
          } \\
          \sentence{By the induction hypothesis,}\\
          \so{
            \tv[][][ ( \asc{ \ev[1] \tu }{\gtys}) / \tx ]  \gprec \tvp[][][ ( \asc{ \ev[2] \tup }{\gtysp}) / \tx] 
          } $
         \end{case}
         \begin{case}[$ \trg{ \asc{\evd \tm}{\gtya} } \gprec \trg{\asc{\evdp \tmp}{\gtyb}}$]
          $ \\
          \sentence{
            we need to show,
          }\\
          {
            \trg{ \asc{\evd \tm[][ ( \asc{ \ev[1] \tu }{\gtys}) / \tx ] }{\gtya} } \gprec \trg{\asc{\evdp \tmp[][ ( \asc{ \ev[2] \tup }{\gtysp}) / \tx] }{\gtysp}}
          } \\
          \sentence{By the induction hypothesis,}\\
          \so{
            \tm[][ ( \asc{ \ev[1] \tu }{\gtys}) / \tx ]  \gprec \tmp[][ ( \asc{ \ev[2] \tup }{\gtysp}) / \tx] 
          } $
         \end{case}
         \begin{case}[$ \trg{\tm \ppsum[\phi[][1]] \tn} \gprec \trg{\tmp \ppsum[\phi[][2]] \tnp}$]
          $ \\
          \sentence{
            we need to show,
          }\\
          {
            \trg{\tm[][ ( \asc{ \ev[1] \tu }{\gtys}) / \tx ] \ppsum[\phi[][1]] \tn[][ ( \asc{ \ev[1] \tu }{\gtys}) / \tx ]} \\
            \gprec \\
            \trg{\tmp[][ ( \asc{ \ev[2] \tup }{\gtysp}) / \tx ] \ppsum[\phi[][2]] \tnp[][ ( \asc{ \ev[2] \tup }{\gtysp}) / \tx ]} \\
          } \\
          \sentence{By the induction hypothesis,}\\
          \so{
            \tm[][ ( \asc{ \ev[1] \tu }{\gtys}) / \tx ]  \gprec \tmp[][ ( \asc{ \ev[2] \tup }{\gtysp}) / \tx] 
          } \\
          \so{
            \tn[][ ( \asc{ \ev[1] \tu }{\gtys}) / \tx ]  \gprec \tnp[][ ( \asc{ \ev[2] \tup }{\gtysp}) / \tx] 
          } $
         \end{case}
         \begin{case}[$  \tv \; \tw \gprec \tvp \; \twp $]
          $ \\
          \sentence{
            we need to show,
          }\\
          {
            \trg{\tv[][][ ( \asc{ \ev[1] \tu }{\gtys}) / \tx ] \; \tw[][ ( \asc{ \ev[1] \tu }{\gtys}) / \tx ]} \\
            \gprec \\
            \trg{\tvp[][ ( \asc{ \ev[2] \tup }{\gtysp}) / \tx ] \; \twp[][ ( \asc{ \ev[2] \tup }{\gtysp}) / \tx ]} \\
          } \\
          \sentence{By the induction hypothesis,}\\
          \so{
            \tm[][ ( \asc{ \ev[1] \tu }{\gtys}) / \tx ]  \gprec \tmp[][ ( \asc{ \ev[2] \tup }{\gtysp}) / \tx] 
          } \\
          \so{
            \tn[][ ( \asc{ \ev[1] \tu }{\gtys}) / \tx ]  \gprec \tnp[][ ( \asc{ \ev[2] \tup }{\gtysp}) / \tx] 
          } $
         \end{case}
         \begin{case}[$ \trg{\lett{\tx}{\tm}{\tn}  \gprec \trg{\lett{\tx}{\tmp}{\tnp}} } $]
          $ \\
          \sentence{
            we need to show,
          }\\
          {
            \trg{
              \lett{\tx}{\tm[][ ( \asc{ \ev[1] \tu }{\gtys}) / \tx ]}{\tn[][ ( \asc{ \ev[1] \tu }{\gtys}) / \tx ]}
              } 
          }\\
            \gprec \\
          {
            \trg{
              \lett{\tx}{\tmp[][ ( \asc{ \ev[2] \tup }{\gtysp}) / \tx ]}{\tnp[][ ( \asc{ \ev[2] \tup }{\gtysp}) / \tx ]}
              } 
          } \\
          \sentence{By the induction hypothesis,}\\
          \so{
            \tm[][ ( \asc{ \ev[1] \tu }{\gtys}) / \tx ]  \gprec \tmp[][ ( \asc{ \ev[2] \tup }{\gtysp}) / \tx] 
          } \\
          \so{
            \tn[][ ( \asc{ \ev[1] \tu }{\gtys}) / \tx ]  \gprec \tnp[][ ( \asc{ \ev[2] \tup }{\gtysp}) / \tx] 
          } $
         \end{case}
         \begin{case}[$ \trg{\add{\tv}{\tw}} \gprec \trg{\add{\tvp}{\twp}} $]
          $ \\
          \sentence{
            we need to show,
          }\\
          {
            \trg{
              \add{\tv[][][ ( \asc{ \ev[1] \tu }{\gtys}) / \tx ]}{\tw[][ ( \asc{ \ev[1] \tu }{\gtys}) / \tx ]}
              } 
          }\\
            \gprec \\
          {
            \trg{
              \add{\tvp[][][ ( \asc{ \ev[2] \tup }{\gtysp}) / \tx ]}{\twp[][ ( \asc{ \ev[2] \tup }{\gtysp}) / \tx ]}
              } 
          } \\
          \sentence{By the induction hypothesis,}\\
          \so{
            \tv[][][ ( \asc{ \ev[1] \tu }{\gtys}) / \tx ]  \gprec \tvp[][][ ( \asc{ \ev[2] \tup }{\gtysp}) / \tx] 
          } \\
          \so{
            \tw[][ ( \asc{ \ev[1] \tu }{\gtys}) / \tx ]  \gprec \twp[][ ( \asc{ \ev[2] \tup }{\gtysp}) / \tx] 
          } $
         \end{case}
         \begin{case}[if]
          $ \\
          \sentence{
            we need to show,
          }\\
          {
            \trg{
              \ite{\tv[][][ ( \asc{ \ev[1] \tu }{\gtys}) / \tx ]}{\tm[][ ( \asc{ \ev[1] \tu }{\gtys}) / \tx ]}{\tn[][ ( \asc{ \ev[1] \tu }{\gtys}) / \tx ]}
              } 
          }\\
            \gprec \\
          {
            \trg{
              \ite{\tvp[][][ ( \asc{ \ev[2] \tup }{\gtysp}) / \tx ]}{\tmp[][ ( \asc{ \ev[2] \tup }{\gtysp}) / \tx ]}{\tnp[][ ( \asc{ \ev[2] \tup }{\gtysp}) / \tx ]}
              } 
          } \\
          \sentence{By the induction hypothesis,}\\
          \so{
            \tv[][][ ( \asc{ \ev[1] \tu }{\gtys}) / \tx ]  \gprec \tvp[][][ ( \asc{ \ev[2] \tup }{\gtysp}) / \tx] 
          } \\
          \so{
            \tm[][ ( \asc{ \ev[1] \tu }{\gtys}) / \tx ]  \gprec \tmp[][ ( \asc{ \ev[2] \tup }{\gtysp}) / \tx] 
          } \\
          \so{
            \tn[][ ( \asc{ \ev[1] \tu }{\gtys}) / \tx ]  \gprec \tnp[][ ( \asc{ \ev[2] \tup }{\gtysp}) / \tx] 
          }$
         \end{case}
  \end{itemize}
\end{proof}

\begin{lemma}[Monotonicity of evidence]\label{mono}
  $~$
  \begin{enumerate}
    \item $\ift{\ev[1] \gprec \ev[2] , \ev[3] \gprec \ev[4] ~and~ \ev[1] \trans{} \ev[3] ~ \sentence{is defined} }{
      \ev[1] \trans{} \ev[3] \gprec \ev[2] \trans{} \ev[4]}$
    \item $\ift{\evd[1] \gprec \evd[2] , \evd[3] \gprec \evd[4] ~and~ \evd[1] \trans{} \evd[3] ~ \sentence{is defined} }{
      \evd[1] \trans{} \evd[3] \gprec \evd[2] \trans{} \evd[4]}$
  \end{enumerate}
\end{lemma}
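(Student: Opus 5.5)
The proof goes by induction on the structure of the evidences, splitting into the simple-type part (1) and the distribution part (2). The two parts are mutually dependent because function evidences contain distribution codomains and distribution evidences contain simple types. Since the consistent transitivity operator is the meet, we work directly with the clauses of $\meet$.

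For part (1), we do a case analysis on the clause that makes $\ev[1] \meet \ev[3]$ defined.
\begin{itemize}
\item If $\ev[1] = \?$, then $\ev[2] = \?$, because $\?$ is only below $\?$. Hence $\ev[2] \meet \ev[4] = \ev[4]$, and $\ev[1] \meet \ev[3] = \ev[3] \gprec \ev[4]$. The case $\ev[3] = \?$ is symmetric.
\item If both are base types, then $\ev[2], \ev[4]$ are each either that same base type or $\?$. The meet is then defined and again equal to the base type or $\?$, and precision is immediate.
\item If both are arrows $\gtys[1] -> \gtya[1]$ and $\gtys[3] -> \gtya[3]$, then $\ev[2]$ and $\ev[4]$ are either $\?$ (handled as above) or arrows with componentwise less precise parts. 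We apply the induction hypothesis to the domains (part (1)) and to the codomains (part (2)), then use the arrow rules of $\meet$ and $\gprec$.
\end{itemize}

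For part (2), write $\evd[1] = \phty{\phi[][1]}{\d{\gtys[i][\p[i]]}[i\in\iSet]}$, $\evd[2] = \phty{\phi[][2]}{\d{\gtys[i'][\p[i']]}[i'\in\iSet']}$, and similarly $\evd[3]$ (indices $j$) and $\evd[4]$ (indices $j'$). We have couplings $\cww[ii']$ witnessing $\evd[1]\gprec\evd[2]$ and $\cww[jj']$ witnessing $\evd[3]\gprec\evd[4]$. Definedness of $\evd[1]\meet\evd[3]$ gives a satisfying assignment and weights $\cww[ij]$, supported only on pairs where $\gtys[i]\meet\gtys[j]$ is defined.

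Definedness of $\evd[2]\meet\evd[4]$ is shown by composing couplings, in the style already used in Lemma~\ref{static-composition-defined} and Lemma~\ref{conp}. For a fixed assignment of the variables, we set
\[\cww[i'j'] = \ssum[i]\ssum[j] \frac{\cww[ii']\,\cww[ij]\,\cww[jj']}{\p[i]\,\p[j]}\]
with the convention that terms with a zero denominator are $0$. The marginals then telescope to $\p[i']$ and $\p[j']$. Positivity of $\cww[i'j']$ forces some $i,j$ with $\gtys[i]\gprec\gtys[i']$, $\gtys[j]\gprec\gtys[j']$ and $\gtys[i]\meet\gtys[j]$ defined. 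By the induction hypothesis of part (1), $\gtys[i']\meet\gtys[j']$ is then defined.

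For the precision $\evd[1]\meet\evd[3] \gprec \evd[2]\meet\evd[4]$, the elements of the left meet are indexed by pairs $(i,j)$ and those of the right meet by pairs $(i',j')$. We use the coupling
\[\cww[(ij)(i'j')] = \frac{\cww[ij]\,\cww[ii']\,\cww[jj']}{\p[i]\,\p[j]}.\]
Summing over $(i',j')$ gives $\cww[ij]$, and summing over $(i,j)$ gives $\cww[i'j']$ by construction. Wherever this weight is positive, $\gtys[i]\meet\gtys[j] \gprec \gtys[i']\meet\gtys[j']$ holds by the induction hypothesis of part (1).

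The main obstacle is the quantifier structure of $\gprec$ on formula distribution types. Every solution of the left formula (here the meet's formula, which conjoins $\phi[][1]$, $\phi[][3]$ and the coupling constraints) must be extendable to a solution of the right formula together with a coupling. Given such a solution, I would first use $\evd[1]\gprec\evd[2]$ and $\evd[3]\gprec\evd[4]$ to obtain values for $\FV(\phi[][2])$ and $\FV(\phi[][4])$ together with the couplings $\cww[ii']$ and $\cww[jj']$. I would then instantiate the fresh meet variables of the right-hand side with the composed $\cww[i'j']$ above. Finally, I would check that the tag constraints $\phi[']$ are satisfied, which holds because tags are inherited from the $i'$ and $j'$ components. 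Care with zero probabilities, via the $0/0 = 0$ convention, is the delicate part.
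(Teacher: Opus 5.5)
Your proposal is correct and follows the same route as the paper: the simple-type case goes by the defining clauses. The distribution case has two steps. First, definedness of the right-hand meet is shown via the same composed coupling $\omega_{i'j'}=\sum_{i,j}\omega_{ii'}\omega_{ij}\omega_{jj'}/(p_i p_j)$. Second, precision is shown via an explicit coupling between the two meets.

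Your precision coupling $\omega_{ij}\omega_{ii'}\omega_{jj'}/(p_i p_j)$ is actually the correct one, since its sum over $(i,j)$ is the composed weight by construction. The paper's version additionally multiplies by $\omega_{i'j'}$ and divides by the probabilities of $i'$ and $j'$, and it does not have the required marginals in general. For example, it yields weight $1$ where $\frac{1}{2}$ is needed when all couplings involved are identities on uniform two-point distributions. Your explicit handling of the $\forall\exists$ quantifier structure and of zero weights is also more careful than the paper's.
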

\begin{proof}
  $ ~ $
  \begin{itemize}
    \item
    (non-distribution types) By definition of consistent transitivity 
    and the definition of precision.
    \item 
    (distribution types) \\
    Suppose $ \evd[1] = \dctx[\phi[][\kk[1]]][\dt{\gtys[\kk[1]][ \cw[\kk[1]]{\lft{\kk[1]}}{\rgt{\kk[1]}} ]}], 
    \evd[2] = \dctx[\phi[][\kkp[1]]][\dt{\ggtys[\kkp[1]][ \cw[\kkp[1]]{\lft{\kkp[1]}}{\rgt{\kkp[1]}}  ]}],
    \evd[3] = \dctx[\phi[][\kk[2]]][\dt{\gtysp[\kk[2]][ \cw[\kk[2]]{\lft{\kk[2]}}{\rgt{\kk[2]}} ]}] \ad 
    \evd[4] = \dctx[\phi[][\kkp[2]]][\dt{\ggtysp[\kkp[2]][ \cw[\kkp[2]]{\lft{\kkp[2]}}{\rgt{\kkp[2]}} ]}] $
    The proof follows by two parts. \\
   \begin{itemize}
    \item[1)] $\ift{ (\dctx[\phi[][\kk[1]]][\dt{\gtys[\kk[1]][ \cw[\kk[1]]{\lft{\kk[1]}}{\rgt{\kk[1]}} ]}]) \meet 
    (\dctx[\phi[][\kk[2]]][\dt{\gtysp[\kk[2]][ \cw[\kk[2]]{\lft{\kk[2]}}{\rgt{\kk[2]}} ]}])~ \sentence{is define}}{
      (\dctx[\phi[][\kkp[1]]][\dt{\ggtys[\kkp[1]][ \cw[\kkp[1]]{\lft{\kkp[1]}}{\rgt{\kkp[1]}}  ]}]) \meet 
      (\dctx[\phi[][\kkp[2]]][\dt{\ggtysp[\kkp[2]][ \cw[\kkp[2]]{\lft{\kkp[2]}}{\rgt{\kkp[2]}} ]}]) ~ \sentence{is define}.} \\
      \since{(\dctx[\phi[][\kk[1]]][\dt{\gtys[\kk[1]][ \cw[\kk[1]]{\lft{\kk[1]}}{\rgt{\kk[1]}} ]}]) \gprec 
      (\dctx[\phi[][\kkp[1]]][\dt{\ggtys[\kkp[1]][ \cw[\kkp[1]]{\lft{\kkp[1]}}{\rgt{\kkp[1]}}  ]}])} \\
      \so{
        \ssum[\kk[1]] \cw[\kk[1]\kkp[1]]{\lft{\kk[1]\kkp[1]}}{\rgt{\kk[1]\kkp[1]}} = 
      \cw[\kkp[1]]{\lft{\kkp[1]}}{\rgt{\kkp[1]}} } \\
      \so{
      \ssum[\kkp[1]] \cw[\kk[1]\kkp[1]]{\lft{\kk[1]\kkp[1]}}{\rgt{\kk[1]\kkp[1]}} = 
      \cw[\kk[1]]{\lft{\kk[1]}}{\rgt{\kk[1]}} } \\
      \since{(\dctx[\phi[][\kk[2]]][\dt{\gtysp[\kk[2]][ \cw[\kk[2]]{\lft{\kk[2]}}{\rgt{\kk[2]}} ]}]) 
      \gprec  (\dctx[\phi[][\kkp[2]]][\dt{\ggtysp[\kkp[2]][ \cw[\kkp[2]]{\lft{\kkp[2]}}{\rgt{\kkp[2]}} ]}])} \\
      \so{
        \ssum[\kk[2]] \cw[\kk[2]\kkp[2]]{\lft{\kk[2]\kkp[2]}}{\rgt{\kk[2]\kkp[2]}} = 
      \cw[\kkp[2]]{\lft{\kkp[2]}}{\rgt{\kkp[2]}} } \\
      \so{
      \ssum[\kkp[2]] \cw[\kk[2]\kkp[2]]{\lft{\kk[2]\kkp[2]}}{\rgt{\kk[2]\kkp[2]}} = 
      \cw[\kk[2]]{\lft{\kk[2]}}{\rgt{\kk[2]}} } \\
      \since{ (\dctx[\phi[][\kk[1]]][\dt{\gtys[\kk[1]][ \cw[\kk[1]]{\lft{\kk[1]}}{\rgt{\kk[1]}} ]}]) \meet 
      (\dctx[\phi[][\kk[2]]][\dt{\gtysp[\kk[2]][ \cw[\kk[2]]{\lft{\kk[2]}}{\rgt{\kk[2]}} ]}]) ~ \sentence{is define}} \\
      \so{
        \ssum[\kk[1]] \cw[\kk[1]\kk[2]]{\lft{\kk[1]\kk[2]}}{\rgt{\kk[1]\kk[2]}} = 
      \cw[\kk[2]]{\lft{\kk[2]}}{\rgt{\kk[2]}} } \\
      \so{
      \ssum[\kk[2]] \cw[\kk[1]\kk[2]]{\lft{\kk[1]\kk[2]}}{\rgt{\kk[1]\kk[2]}} = 
      \cw[\kk[1]]{\lft{\kk[1]}}{\rgt{\kk[1]}} } \\  
      \so{\forall \rgt{\kk[2]},
        \ssum[\kk[2] \in \rgt{\kk[2]} ] \ssum[\kk[1]] \cw[\kk[1]\kk[2]]{\lft{\kk[1]\kk[2]}}{\rgt{\kk[1]\kk[2]}} = 
        \ssum[\kk[2] \in \rgt{\kk[2]} ] \cw[\kk[2]]{\lft{\kk[2]}}{\rgt{\kk[2]}} } \\  
        \so{
          \forall \lft{\kk[1]},
          \ssum[\kk[1] \in \lft{\kk[1]} ] \ssum[\kk[2]] \cw[\kk[1]\kk[2]]{\lft{\kk[1]\kk[2]}}{\rgt{\kk[1]\kk[2]}} = 
          \ssum[\kk[1] \in \rgt{\kk[1]} ] \cw[\kk[1]]{\lft{\kk[1]}}{\rgt{\kk[1]}} } \\  
      %  \since{
      %   \\
      %   \inference[]{
      %     \phi =  \forall \rgt{\kk[2]} \in \kSett_2, \ssum[\kk[2] \in \rgt{\kk[2]}]\ssum[\kk[1]]
      %             \cw[\kk[1]\kk[2]]{\lft{\kk[1]}}{\rgt{\kk[2]}} 
      %             = \ssum[\kk[2] \in \rgt{\kk[2]}] \cw[\kk[2]]{\lft{\kk[2]}}{\rgt{\kk[2]}} \\
      %             \forall \lft{\kk[1]} \in \kSett_1, \ssum[\kk[1] \in \lft{\kk[1]}]\ssum[\kk[2]]
      %             \cw[\kk[1]\kk[2]]{\lft{\kk[1]}}{\rgt{\kk[2]}} 
      %             = \ssum[\kk[1] \in \lft{\kk[1]}] \cw[\kk[1]]{\lft{\kk[1]}}{\rgt{\kk[1]}} \\
      %     \exists \cw[\kk[1]\kk[2]]{\lft{\kk[1]}}{\rgt{\kk[2]}}.
      %    \forall \kk[1] \in \kSett_1, \ssum[\kk[2]] \cw[\kk[1]\kk[2]]{\lft{\kk[1]}}{\rgt{\kk[2]}} 
      %           = \cw[\kk[1]]{\lft{\kk[1]}}{\rgt{\kk[1]}} \\
      %     \forall \kk[2] \in \kSett_2, \ssum[\kk[1]] \cw[\kk[1]\kk[2]]{\lft{\kk[1]}}{\rgt{\kk[2]}} 
      %           = \cw[\kk[2]]{\lft{\kk[2]}}{\rgt{\kk[2]}} \\
      %        \forall \kk[1], \kk[2] \in \kSett_1 \times \kSett_2, 
      %        (\cw[\kk[1]\kk[2]]{\lft{\kk[1]}}{\rgt{\kk[2]}} >0 =>  
      %         \gtys[\kk[1]\kk[2]] = \gtys[\kk[1]] \meet \gtysp[\kk[1]]) \land 
      %         \phi[][3] = \phi[][1] \land \phi[][2] \land \phi
      %   }{ \dctx[\phi[][1]][\d{\gtys[\kk[1]][\cw[\kk[1]]{\lft{\kk[1]}}{\rgt{\kk[1]}}]}[\kk[1] \in \kSett_1]]
      %      \meet 
      %      \dctx[\phi[][2]][\d{\gtys[\kk[2]][\cw[\kk[2]]{\lft{\kk[2]}}{\rgt{\kk[2]}}]}[\kk[2] \in \kSett_2]] 
      %      = \dctx[\phi[][3]] \dt{ \gtys[\kk[1]\kk[2]][\cw[\kk[1]\kk[2]]{\lft{\kk[1]}}{\rgt{\kk[2]}}]} }
      %  }\\
      \sentence{then we need to show the following: } \\
      {
        \ssum[\kkp[1]] \cw[\kkp[1]\kk[2]]{\lft{\kkp[1]\kkp[2]}}{\rgt{\kkp[1]\kkp[2]}} = 
      \cw[\kkp[2]]{\lft{\kkp[2]}}{\rgt{\kkp[2]}} } \\
      {
      \ssum[\kkp[2]] \cw[\kkp[1]\kkp[2]]{\lft{\kkp[1]\kkp[2]}}{\rgt{\kkp[1]\kkp[2]}} = 
      \cw[\kkp[1]]{\lft{\kkp[1]}}{\rgt{\kkp[1]}} } \\  
      {\forall \rgt{\kkp[2]},
        \ssum[\kkp[2] \in \rgt{\kkp[2]} ] \ssum[\kkp[1]] \cw[\kkp[1]\kkp[2]]{\lft{\kkp[1]\kkp[2]}}{\rgt{\kkp[1]\kkp[2]}} = 
        \ssum[\kkp[2] \in \rgt{\kkp[2]} ] \cw[\kkp[2]]{\lft{\kkp[2]}}{\rgt{\kkp[2]}} } \\  
      {
          \forall \lft{\kkp[1]},
          \ssum[\kkp[1] \in \lft{\kkp[1]} ] \ssum[\kkp[2]] \cw[\kkp[1]\kkp[2]]{\lft{\kkp[1]\kkp[2]}}{\rgt{\kkp[1]\kkp[2]}} = 
          \ssum[\kkp[1] \in \rgt{\kkp[1]} ] \cw[\kkp[1]]{\lft{\kkp[1]}}{\rgt{\kkp[1]}} } \\  
      \sentence{Suppose} \\
      \cw[\kkp[1]\kkp[2]]{\lft{\kkp[1]\kkp[2]}}{\rgt{\kkp[1]\kkp[2]}} = 
      \ssum[\kk[1]\kk[2]](\cw[\kk[1]\kkp[1]]{\lft{\kk[1]\kkp[1]}}{\rgt{\kk[1]\kkp[1]}} 
      \cdot \cw[\kk[1]\kk[2]]{\lft{\kk[1]\kk[2]}}{\rgt{\kk[1]\kk[2]}} 
      \cdot \cw[\kk[2]\kkp[2]]{\lft{\kk[2]\kkp[2]}}{\rgt{\kk[2]\kkp[2]}})
      / (\cw[\kk[1]]{\lft{\kk[1]}}{\rgt{\kk[1]}} \cdot \cw[\kk[2]]{\lft{\kk[2]}}{\rgt{\kk[2]}}) \\
      \sentence{where}~\cw[\kk[1]\kk[2]]{\lft{\kk[1]\kk[2]}}{\rgt{\kk[1]\kk[2]}} 
      = \ssum[k | \projl{\cww[k]} = \kk[1] \land \projr{\cww[k]} = \kk[2] ] \cww[k]\\
      \so{\ssum[\kkp[1]] \cw[\kkp[1]\kk[2]]{\lft{\kkp[1]\kkp[2]}}{\rgt{\kkp[1]\kkp[2]}} }\\
      \eq{
      \ssum[\kkp[1]] \ssum[\kk[1]\kk[2]](\cw[\kk[1]\kkp[1]]{\lft{\kk[1]\kkp[1]}}{\rgt{\kk[1]\kkp[1]}} 
      \cdot \cw[\kk[1]\kk[2]]{\lft{\kk[1]\kk[2]}}{\rgt{\kk[1]\kk[2]}} 
      \cdot \cw[\kk[2]\kkp[2]]{\lft{\kk[2]\kkp[2]}}{\rgt{\kk[2]\kkp[2]}})
      / (\cw[\kk[1]]{\lft{\kk[1]}}{\rgt{\kk[1]}} \cdot \cw[\kk[2]]{\lft{\kk[2]}}{\rgt{\kk[2]}}) }\\
      \eq{
        \ssum[\kkp[1]] \ssum[\kk[1]] \ssum[\kk[2]] 
        (\cw[\kk[1]\kkp[1]]{\lft{\kk[1]\kkp[1]}}{\rgt{\kk[1]\kkp[1]}} 
      \cdot \cw[\kk[1]\kk[2]]{\lft{\kk[1]\kk[2]}}{\rgt{\kk[1]\kk[2]}} 
      \cdot \cw[\kk[2]\kkp[2]]{\lft{\kk[2]\kkp[2]}}{\rgt{\kk[2]\kkp[2]}})
      / (\cw[\kk[1]]{\lft{\kk[1]}}{\rgt{\kk[1]}} \cdot \cw[\kk[2]]{\lft{\kk[2]}}{\rgt{\kk[2]}})
      }\\
      \eq{
        \ssum[\kk[2]] \cw[\kk[2]\kkp[2]]{\lft{\kk[2]\kkp[2]}}{\rgt{\kk[2]\kkp[2]}} 
        \ssum[\kkp[1]] \ssum[\kk[1]] 
        (\cw[\kk[1]\kkp[1]]{\lft{\kk[1]\kkp[1]}}{\rgt{\kk[1]\kkp[1]}} 
      \cdot \cw[\kk[1]\kk[2]]{\lft{\kk[1]\kk[2]}}{\rgt{\kk[1]\kk[2]}} )
      / (\cw[\kk[1]]{\lft{\kk[1]}}{\rgt{\kk[1]}} \cdot \cw[\kk[2]]{\lft{\kk[2]}}{\rgt{\kk[2]}}) 
        }\\
      \since{
        \ssum[\kk[1]] \cw[\kk[1]\kk[2]]{\lft{\kk[1]\kk[2]}}{\rgt{\kk[1]\kk[2]}} = 
        \cw[\kk[2]]{\lft{\kk[2]}}{\rgt{\kk[2]}}
        } \\
      \since{
        \ssum[\kk[1]] \cw[\kk[1]\kkp[1]]{\lft{\kk[1]\kkp[1]}}{\rgt{\kk[1]\kkp[1]}} = 
        \cw[\kkp[1]]{\lft{\kkp[1]}}{\rgt{\kkp[1]}}
       } \\ 
       \since{
        \ssum[\kk[1]] \cw[\kk[1]]{\lft{\kk[1]}}{\rgt{\kk[1]}} = 1
        } \\
      \sentence{then} \\
      \eq{
        \ssum[\kk[2]] \cw[\kk[2]\kkp[2]]{\lft{\kk[2]\kkp[2]}}{\rgt{\kk[2]\kkp[2]}} 
        \ssum[\kkp[1]] 
        \cw[\kkp[1]]{\lft{\kkp[1]}}{\rgt{\kkp[1]}}
        }\\
      \since{\ssum[\kkp[1]] \cw[\kkp[1]]{\lft{\kkp[1]}}{\rgt{\kkp[1]}}  = 1} \\
      \sentence{then} \\
      \eq{
        \ssum[\kk[2]] \cw[\kk[2]\kkp[2]]{\lft{\kk[2]\kkp[2]}}{\rgt{\kk[2]\kkp[2]}} 
      }\\
      \since{ 
        \ssum[\kk[2]] \cw[\kk[2]\kkp[2]]{\lft{\kk[2]\kkp[2]}}{\rgt{\kk[2]\kkp[2]}} = 
      \cw[\kkp[2]]{\lft{\kkp[2]}}{\rgt{\kkp[2]}}
      }\\
      \sentence{then} \\
      \eq{\cw[\kkp[2]]{\lft{\kkp[2]}}{\rgt{\kkp[2]}}} \\ 
      \\
      \so{\ssum[\kkp[2]] \cw[\kkp[1]\kk[2]]{\lft{\kkp[1]\kkp[2]}}{\rgt{\kkp[1]\kkp[2]}} }\\
      \eq{
      \ssum[\kkp[2]] \ssum[\kk[1]\kk[2]](\cw[\kk[1]\kkp[1]]{\lft{\kk[1]\kkp[1]}}{\rgt{\kk[1]\kkp[1]}} 
      \cdot \cw[\kk[1]\kk[2]]{\lft{\kk[1]\kk[2]}}{\rgt{\kk[1]\kk[2]}} 
      \cdot \cw[\kk[2]\kkp[2]]{\lft{\kk[2]\kkp[2]}}{\rgt{\kk[2]\kkp[2]}})
      / (\cw[\kk[1]]{\lft{\kk[1]}}{\rgt{\kk[1]}} \cdot \cw[\kk[2]]{\lft{\kk[2]}}{\rgt{\kk[2]}}) }\\
      \eq{
        \ssum[\kkp[2]] \ssum[\kk[1]] \ssum[\kk[2]] 
        (\cw[\kk[1]\kkp[1]]{\lft{\kk[1]\kkp[1]}}{\rgt{\kk[1]\kkp[1]}} 
      \cdot \cw[\kk[1]\kk[2]]{\lft{\kk[1]\kk[2]}}{\rgt{\kk[1]\kk[2]}} 
      \cdot \cw[\kk[2]\kkp[2]]{\lft{\kk[2]\kkp[2]}}{\rgt{\kk[2]\kkp[2]}})
      / (\cw[\kk[1]]{\lft{\kk[1]}}{\rgt{\kk[1]}} \cdot \cw[\kk[2]]{\lft{\kk[2]}}{\rgt{\kk[2]}})
      }\\
      \eq{
        \ssum[\kk[1]] \cw[\kk[1]\kkp[1]]{\lft{\kk[1]\kkp[1]}}{\rgt{\kk[1]\kkp[1]}} 
        \ssum[\kkp[2]] \ssum[\kk[2]] 
        (\cw[\kk[2]\kkp[2]]{\lft{\kk[2]\kkp[2]}}{\rgt{\kk[2]\kkp[2]}} 
      \cdot \cw[\kk[1]\kk[2]]{\lft{\kk[1]\kk[2]}}{\rgt{\kk[1]\kk[2]}} )
      / (\cw[\kk[1]]{\lft{\kk[1]}}{\rgt{\kk[1]}} \cdot \cw[\kk[2]]{\lft{\kk[2]}}{\rgt{\kk[2]}}) 
        }\\
      \since{
        \ssum[\kk[2]] \cw[\kk[1]\kk[2]]{\lft{\kk[1]\kk[2]}}{\rgt{\kk[1]\kk[2]}} = 
        \cw[\kk[1]]{\lft{\kk[1]}}{\rgt{\kk[1]}}
        } \\
      \since{
        \ssum[\kk[2]] \cw[\kk[2]\kkp[2]]{\lft{\kk[2]\kkp[2]}}{\rgt{\kk[2]\kkp[2]}} = 
        \cw[\kkp[2]]{\lft{\kkp[2]}}{\rgt{\kkp[2]}}
       } \\ 
       \since{
        \ssum[\kk[2]] \cw[\kk[2]]{\lft{\kk[2]}}{\rgt{\kk[2]}} = 1
        } \\
      \sentence{then} \\
      \eq{
        \ssum[\kk[1]] \cw[\kk[1]\kkp[1]]{\lft{\kk[1]\kkp[1]}}{\rgt{\kk[1]\kkp[1]}} 
        \ssum[\kkp[2]] 
        \cw[\kkp[2]]{\lft{\kkp[2]}}{\rgt{\kkp[2]}}
        }\\
      \since{\ssum[\kkp[2]] \cw[\kkp[2]]{\lft{\kkp[2]}}{\rgt{\kkp[2]}}  = 1} \\
      \sentence{then} \\
      \eq{
        \ssum[\kk[1]] \cw[\kk[1]\kkp[1]]{\lft{\kk[1]\kkp[1]}}{\rgt{\kk[1]\kkp[1]}} 
      }\\
      \since{ 
        \ssum[\kk[1]] \cw[\kk[1]\kkp[1]]{\lft{\kk[1]\kkp[1]}}{\rgt{\kk[1]\kkp[1]}} = 
      \cw[\kkp[1]]{\lft{\kkp[1]}}{\rgt{\kkp[1]}}
      }\\
      \sentence{then} \\
      \eq{\cw[\kkp[1]]{\lft{\kkp[1]}}{\rgt{\kkp[1]}}} \\ 
      \\
      \so{
        \ssum[\kkp[1]] \cw[\kkp[1]\kk[2]]{\lft{\kkp[1]\kkp[2]}}{\rgt{\kkp[1]\kkp[2]}} = 
      \cw[\kkp[2]]{\lft{\kkp[2]}}{\rgt{\kkp[2]}} } \\
      \so{
      \ssum[\kkp[2]] \cw[\kkp[1]\kkp[2]]{\lft{\kkp[1]\kkp[2]}}{\rgt{\kkp[1]\kkp[2]}} = 
      \cw[\kkp[1]]{\lft{\kkp[1]}}{\rgt{\kkp[1]}} } \\  
      \\
      \since{
        \\
        \forall \rgt{\kkp[2]},
        \ssum[\kkp[2] \in \rgt{\kkp[2]} ] \ssum[\kkp[1]] \cw[\kkp[1]\kkp[2]]{\lft{\kkp[1]\kkp[2]}}{\rgt{\kkp[1]\kkp[2]}} 
         } \\  
       \eq{
        \ssum[\kkp[2] \in \rgt{\kkp[2]} ] \ssum[\kkp[1]] \ssum[\kk[1]\kk[2]](\cw[\kk[1]\kkp[1]]{\lft{\kk[1]\kkp[1]}}{\rgt{\kk[1]\kkp[1]}} 
        \cdot \cw[\kk[1]\kk[2]]{\lft{\kk[1]\kk[2]}}{\rgt{\kk[1]\kk[2]}} 
        \cdot \cw[\kk[2]\kkp[2]]{\lft{\kk[2]\kkp[2]}}{\rgt{\kk[2]\kkp[2]}})
        / (\cw[\kk[1]]{\lft{\kk[1]}}{\rgt{\kk[1]}} \cdot \cw[\kk[2]]{\lft{\kk[2]}}{\rgt{\kk[2]}})
      }\\
      \eq{
        \ssum[\kkp[2] \in \rgt{\kkp[2]} ] \ssum[\kkp[1]] \ssum[\kk[1]]\ssum[\kk[2]](\cw[\kk[1]\kkp[1]]{\lft{\kk[1]\kkp[1]}}{\rgt{\kk[1]\kkp[1]}} 
        \cdot \cw[\kk[1]\kk[2]]{\lft{\kk[1]\kk[2]}}{\rgt{\kk[1]\kk[2]}} 
        \cdot \cw[\kk[2]\kkp[2]]{\lft{\kk[2]\kkp[2]}}{\rgt{\kk[2]\kkp[2]}})
        / (\cw[\kk[1]]{\lft{\kk[1]}}{\rgt{\kk[1]}} \cdot \cw[\kk[2]]{\lft{\kk[2]}}{\rgt{\kk[2]}})
      }\\
      \since{
        \ssum[\kk[2]] \cw[\kk[1]\kk[2]]{\lft{\kk[1]\kk[2]}}{\rgt{\kk[1]\kk[2]}} =
        \cw[\kk[1]]{\lft{\kk[1]}}{\rgt{\kk[1]}} 
      }\\
      \since{
        \ssum[\kk[2]] \cw[\kk[2]\kkp[2]]{\lft{\kk[2]\kkp[2]}}{\rgt{\kk[2]\kkp[2]}} =
        \cw[\kkp[2]]{\lft{\kkp[2]}}{\rgt{\kkp[2]}}
      }\\
      \since{
        \ssum[\kk[2]] \cw[\kk[2]]{\lft{\kk[2]}}{\rgt{\kk[2]}} = 1
      }\\
      \eq{
        \ssum[\kkp[2] \in \rgt{\kkp[2]} ] \ssum[\kkp[1]] \ssum[\kk[1]](\cw[\kk[1]\kkp[1]]{\lft{\kk[1]\kkp[1]}}{\rgt{\kk[1]\kkp[1]}} 
        \cdot \cw[\kkp[2]]{\lft{\kkp[2]}}{\rgt{\kkp[2]}})
      }\\
      \since{
        \ssum[\kk[1]]\cw[\kk[1]\kkp[1]]{\lft{\kk[1]\kkp[1]}}{\rgt{\kk[1]\kkp[1]}}
        =
        \cw[\kkp[1]]{\lft{\kkp[1]}}{\rgt{\kkp[1]}}
       }\\
       \since{
         \ssum[\kkp[1]]\cw[\kkp[1]]{\lft{\kkp[1]}}{\rgt{\kkp[1]}} =1
        }\\
       \eq{
        \ssum[\kkp[2] \in \rgt{\kkp[2]} ]  \cw[\kkp[2]]{\lft{\kkp[2]}}{\rgt{\kkp[2]}}
       }\\
       \\ 
       \since{
        \\
        \forall \lft{\kkp[1]},
        \ssum[\kkp[1] \in \lft{\kkp[1]}]  \ssum[\kkp[2]] \cw[\kkp[1]\kkp[2]]{\lft{\kkp[1]\kkp[2]}}{\rgt{\kkp[1]\kkp[2]}} 
         } \\  
       \eq{
        \ssum[\kkp[1] \in \lft{\kkp[1]}] \ssum[\kkp[2]] \ssum[\kk[1]\kk[2]](\cw[\kk[1]\kkp[1]]{\lft{\kk[1]\kkp[1]}}{\rgt{\kk[1]\kkp[1]}} 
        \cdot \cw[\kk[1]\kk[2]]{\lft{\kk[1]\kk[2]}}{\rgt{\kk[1]\kk[2]}} 
        \cdot \cw[\kk[2]\kkp[2]]{\lft{\kk[2]\kkp[2]}}{\rgt{\kk[2]\kkp[2]}})
        / (\cw[\kk[1]]{\lft{\kk[1]}}{\rgt{\kk[1]}} \cdot \cw[\kk[2]]{\lft{\kk[2]}}{\rgt{\kk[2]}})
      }\\
      \eq{
        \ssum[\kkp[1] \in \lft{\kkp[1]}] \ssum[\kkp[2]] \ssum[\kk[1]]\ssum[\kk[2]](\cw[\kk[1]\kkp[1]]{\lft{\kk[1]\kkp[1]}}{\rgt{\kk[1]\kkp[1]}} 
        \cdot \cw[\kk[1]\kk[2]]{\lft{\kk[1]\kk[2]}}{\rgt{\kk[1]\kk[2]}} 
        \cdot \cw[\kk[2]\kkp[2]]{\lft{\kk[2]\kkp[2]}}{\rgt{\kk[2]\kkp[2]}})
        / (\cw[\kk[1]]{\lft{\kk[1]}}{\rgt{\kk[1]}} \cdot \cw[\kk[2]]{\lft{\kk[2]}}{\rgt{\kk[2]}})
      }\\
      \since{
        \ssum[\kk[2]] \cw[\kk[1]\kk[2]]{\lft{\kk[1]\kk[2]}}{\rgt{\kk[1]\kk[2]}} =
        \cw[\kk[1]]{\lft{\kk[1]}}{\rgt{\kk[1]}} 
      }\\
      \since{
        \ssum[\kk[2]] \cw[\kk[2]\kkp[2]]{\lft{\kk[2]\kkp[2]}}{\rgt{\kk[2]\kkp[2]}} =
        \cw[\kkp[2]]{\lft{\kkp[2]}}{\rgt{\kkp[2]}}
      }\\
      \since{
        \ssum[\kk[2]] \cw[\kk[2]]{\lft{\kk[2]}}{\rgt{\kk[2]}} = 1
      }\\
      \eq{
       \ssum[\kkp[1] \in \lft{\kkp[1]}] \ssum[\kkp[2]] \ssum[\kk[1]](\cw[\kk[1]\kkp[1]]{\lft{\kk[1]\kkp[1]}}{\rgt{\kk[1]\kkp[1]}} 
        \cdot \cw[\kkp[2]]{\lft{\kkp[2]}}{\rgt{\kkp[2]}})
      }\\
      \since{
       \ssum[\kk[1]]\cw[\kk[1]\kkp[1]]{\lft{\kk[1]\kkp[1]}}{\rgt{\kk[1]\kkp[1]}}
       =
       \cw[\kkp[1]]{\lft{\kkp[1]}}{\rgt{\kkp[1]}}
      }\\
      \since{
        \ssum[\kkp[1]]\cw[\kkp[1]]{\lft{\kkp[1]}}{\rgt{\kkp[1]}} =1
       }\\
       \eq{
        \ssum[\kkp[2] \in \rgt{\kkp[2]} ]  \cw[\kkp[1]]{\lft{\kkp[2]}}{\rgt{\kkp[2]}}
       }\\
      \so{  (\dctx[\phi[][\kkp[1]]][\dt{\ggtys[\kkp[1]][ \cw[\kkp[1]]{\lft{\kkp[1]}}{\rgt{\kkp[1]}}  ]}]) \meet 
      (\dctx[\phi[][\kkp[2]]][\dt{\ggtysp[\kkp[2]][ \cw[\kkp[2]]{\lft{\kkp[2]}}{\rgt{\kkp[2]}} ]}]) ~ \sentence{is define}.} $
      \\
    \item[2)] 
    Suppose $\kk[1] = i$, $\kkp[1] = i'$,  $\kk[2] = j$ $\ad$ $\kkp[2] = j'$.
    \\
    \\
    $(\mctx[\phi[][i]][\mmap[i]][\dt{\gtys[i][\alphai[i]]}] \meet \mctx[\phi[][i]][\mmap[i]][\dt{\gtysp[j][\alphai[j]]}]) \gprec 
    (\mctx[\phi[][i]][\mmap[i]][\dt{\ggtys[i'][\alphai[i']]}] \meet \mctx[\phi[][i]][\mmap[i]][\dt{\ggtysp[j'][\alphai[j']]}]) \\
    \since{(\mctx[\phi[][i]][\mmap[i]][\dt{\gtys[i][\alphai[i]]} ]) \gprec (\mctx[\phi[][i']][\mmap[i']][\dt{\ggtys[i'][\alphai[i']]}])} \\
    \so{\ssum[i] \cww[ii']  = \alphai[i'], \ssum[i'] \cww[ii']  = \alphai[i] } \\
    \since{(\mctx[\phi[][j]][\mmap[j]][\dt{\gtysp[j][\alphai[j]]}]) \gprec  (\mctx[\phi[][j']][\mmap[j']][\dt{\ggtysp[j'][\alphai[j']]}])} \\
    \so{\ssum[j] \cww[jj'] = \alphai[j'], \ssum[j'] \cww[jj'] = \alphai[j]} \\
    \since{ (\mctx[\phi[][i]][\mmap[i]][\dt{\gtys[i][\alphai[i]]}]) \meet 
    (\mctx[\phi[][i]][\mmap[i]][\dt{\gtysp[j][\alphai[j]]}]) ~ \sentence{is define}} \\
    \so{\ssum[i] \cww[ij] = \alphai[j], \ssum[j] \cww[ij] = \alphai[i]} \\
    \since{ (\mctx[\phi[][i']][\mmap[i']][\dt{\ggtys[i'][\alphai[i']]}]) \meet 
    (\mctx[\phi[][j']][\mmap[j']][\dt{\ggtysp[j'][\alphai[j']]}]) ~ \sentence{is define}} \\
    \so{\ssum[i'] \cww[i'j'] = \alphai[j'], \ssum[j'] \cww[i'j'] = \alphai[i']} \\
    % \since{\\ 
    % \inference{
    %   \exists \cw{ij}{i'j'}. \forall i \in \iSet, j \in \jSet, \ssum[i'j'] \cw{ij}{i'j'} = \cww[ij] &
    %   \forall i' \in \iiSet, j' \in \jjSet, \ssum[ij] \cw{ij}{i'j'}  = \cww[i'j'] \\
    %   \forall i, j \in \iSet \times \jSet, (\cw{ij}{i'j'} >0 => \gtys[ij] \gprec \gtysp[i'j'])
    % }{
    %   \dctx[\phi[][1]][\d{\gtys[ij][\cww[ij]]}[i\in \iSet, j \in \jSet]] \gprec 
    %   \dctx[\phi[][2]][\d{\gtysp[i'j'][\cww[i'j']]}[i'\in \iiSet, j' \in \jjSet]]
    % }} \\
    \sentence{then we need to show the following: } \\
    \ssum[i'j'] \cw{ij}{i'j'} = \cww[ij], \ssum[ij] \cw{ij}{i'j'} = \cww[i'j'] \\
    \sentence{Suppose} ~ \cw{ij}{i'j'} = 
     (\cww[ij] \cdot \cww[i'j'] \cdot \cww[ii']  \cdot \cww[jj'])/ 
    (\alphai[i] \cdot \alphai[j] \cdot \alphai[i'] \cdot \alphai[j']) \\ 
    \so{\ssum[i'j'] \cw{ij}{i'j'} } \\
    \eq{
    \ssum[i'j'] (\cww[ij] \cdot \cww[i'j'] \cdot \cww[ii']  \cdot \cww[jj'])/ 
    (\alphai[i] \cdot \alphai[j] \cdot \alphai[i'] \cdot \alphai[j'])} \\
    \eq{\cww[ij] \ssum[i'j'] (\cww[ii']  \cdot \cww[jj'])/ 
    (\alphai[i] \cdot \alphai[j] \cdot \alphai[i'] \cdot \alphai[j']) } \\
    \eq{\cww[ij] \ssum[i']\ssum[j'] ( \cww[ii']  \cdot \cww[jj'])/ 
    (\alphai[i] \cdot \alphai[j] \cdot \alphai[i'] \cdot \alphai[j'])} \\
    \since{\ssum[j'] \cww[jj'] = \alphai[j], \ssum[j'] \alphai[j'] = 1} \\
      \sentence{then} \\
    \eq{\cww[ij] \ssum[i'] \cww[ii']  / (\alphai[i] \cdot \alphai[i'] )} \\
    \since{\ssum[i'] \cww[ii']  = \alphai[i], \ssum[i'] \alphai[i'] = 1} \\
      \sentence{then} \\
    \eq{\cww[ij]} \\
    \so{\ssum[ij] \cw{ij}{i'j'} } \\
    \eq{
    \ssum[ij] (\cww[ij] \cdot \cww[i'j'] \cdot \cww[ii']  \cdot \cww[jj'])/ 
    (\alphai[i] \cdot \alphai[j] \cdot \alphai[i'] \cdot \alphai[j'])} \\
    \eq{\cww[i'j'] \ssum[ij] (\cww[ij] \cdot \cww[ii']  \cdot \cww[jj'])/ 
    (\alphai[i] \cdot \alphai[j] \cdot \alphai[i'] \cdot \alphai[j']) } \\
    \eq{\cww[i'j'] \ssum[i]\ssum[j] ( \cww[ij] \cdot \cww[ii']  \cdot \cww[jj'])/ 
    (\alphai[i] \cdot \alphai[j] \cdot \alphai[i'] \cdot \alphai[j'])} \\
    \since{\ssum[j] \cww[ij] = \alphai[i], \ssum[j] \cww[jj'] = \alphai[j'], \ssum[j] \alphai[j] = 1} \\
    \sentence{then} \\
    \eq{\cww[i'j'] \ssum[i] \cww[ii'] / \alphai[i'] } \\
    \since{\ssum[i] \cww[ii']  = \alphai[i'], \ssum[i'] \alphai[i'] = 1} \\
    \sentence{then} \\
    \eq{\cww[i'j']} \\
    \so{\ssum[i'j'] \cw{ij}{i'j'} = \cww[ij], \ssum[ij] \cw{ij}{i'j'} = \cww[i'j'].} \\
    \so{(\mctx[\phi[][i]][\mmap[i]][\dt{\gtys[i][\alphai[i]]}] \meet \mctx[\phi[][i]][\mmap[i]][\dt{\gtys[j][\alphai[j]]}]) \gprec 
    (\mctx[\phi[][i]][\mmap[i]][\dt{\gtysp[i'][\alphai[i']]}] \meet \mctx[\phi[][i]][\mmap[i]][\dt{\gtysp[j'][\alphai[j']]}]) }$ 
  \end{itemize} 
\end{itemize}
\end{proof}

\begin{lemma}[Source Monotonicity]\label{smono}
  $~$
  \begin{enumerate}
    \item $\ift{\sgtys[1] \gprec \sgtys[2] , \sgtys[3] \gprec \sgtys[4] ~and~ \sgtys[1] \meet \sgtys[3] ~ \sentence{is defined} }{
      \sgtys[1] \meet \sgtys[3] \gprec \sgtys[2] \meet \sgtys[4]}$
    \item $\ift{\sgtya[1] \gprec \sgtya[2] , \sgtya[3] \gprec \sgtya[4] ~and~ \sgtya[1] \meet \sgtya[3] ~ \sentence{is defined} }{
      \sgtya[1] \meet \sgtya[3] \gprec \sgtya[2] \meet \sgtya[4]}$
  \end{enumerate}
\end{lemma}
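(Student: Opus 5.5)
The plan is to reduce the source statement to the target-level monotonicity result (Lemma~\ref{mono}) through the lifting functions $\liftT{\cdot}$ and $\liftD{\cdot}$. On source types, $\meet$ is understood as the meet of the lifted types; this is how the elaboration rules of Figure~\ref{fig:elaboration} use it. So it suffices to check three things: precision is transported along the lifting, the meet of liftings is defined whenever the source meet is, and precision between the lifted meets transfers back to precision at the source level.

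For part (1), I will argue by induction on the structure of $\sgtys[1]$ and $\sgtys[3]$, casing on the clauses of $\meet$.
\begin{itemize}
\item If one side is $\?$, say $\sgtys[1]=\?$, then $\sgtys[2]=\?$ since $\?$ is only below $\?$. The meets reduce to $\sgtys[3]$ and $\sgtys[4]$ respectively, and the claim is just the hypothesis $\sgtys[3]\gprec\sgtys[4]$.
\item If the base types coincide, $\sgtys[2]$ and $\sgtys[4]$ are either the same base type or $\?$. The meet on the right is then that base type, and reflexivity of precision, or the rule $\sgtys\gprec\?$, closes the case.
\item For function types, the result follows componentwise from the induction hypothesis on domains and from part (2) on codomains. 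Part (2) is therefore proved simultaneously with part (1).
\end{itemize}
The cases where $\sgtys[2]$ or $\sgtys[4]$ is $\?$ but the originals are function types are handled by the clause $\gtys\meet\?=\gtys$ together with the already-known fact $\sgtys[1]\meet\sgtys[3]\gprec\sgtys[1]$ (Lemma~\ref{meet-more-precise}) and transitivity of $\gprec$.

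For part (2), I first apply Lemma~\ref{tpp} to obtain $\liftD{\sgtya[1]}\gprec\liftD{\sgtya[2]}$ and $\liftD{\sgtya[3]}\gprec\liftD{\sgtya[4]}$. I then invoke Lemma~\ref{mono}(2) with $\evd[1]=\liftD{\sgtya[1]}$, $\evd[2]=\liftD{\sgtya[2]}$, $\evd[3]=\liftD{\sgtya[3]}$ and $\evd[4]=\liftD{\sgtya[4]}$. Since $\trans{}$ coincides with $\meet$, this yields both that $\liftD{\sgtya[2]}\meet\liftD{\sgtya[4]}$ is defined and that
\[
\liftD{\sgtya[1]}\meet\liftD{\sgtya[3]}\gprec\liftD{\sgtya[2]}\meet\liftD{\sgtya[4]}.
\]
The componentwise premises that Lemma~\ref{mono} needs for simple types inside the couplings come from the induction hypothesis of part (1).

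The main obstacle is the final transfer back from formula types to source distribution types. The result of $\meet$ on lifted types is a formula distribution type carrying fresh coupling variables, not syntactically a lifting of any source type. My plan is to read the source-level statement $\sgtya[1]\meet\sgtya[3]\gprec\sgtya[2]\meet\sgtya[4]$ as precision between these lifted meets, which is exactly how precision on distribution types is defined (via $\liftD{\cdot}$ in Figure~\ref{source-type-precision}). If a genuine source-level type is required instead, I would pick any concretization-respecting reading and argue via Lemma~\ref{rtrans} (reordering transitivity) that reordering preserves precision. For that I would compose the reordering coupling with the precision coupling, using the same product-over-marginal construction used in the proofs of Lemma~\ref{conp} and Lemma~\ref{mono}.
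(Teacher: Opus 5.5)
Your proposal is correct and follows the paper, whose entire proof of this lemma is a one-line reduction to Lemma~\ref{mono} on the lifted types, with the source meet read as the meet of the liftings, exactly as you read it. Your separate structural induction for part (1), which leans on a transitivity of $\gprec$ that the paper never proves, is avoidable: Lemma~\ref{mono}(1) applied to the liftings of the simple types gives part (1) directly, and Lemma~\ref{mono}(2) is self-contained, so it needs no componentwise premises supplied by part (1).
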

\begin{proof}
  The proof follows by Lemma~\ref{mono}.
\end{proof}

\begin{lemma}[Reordering consistency]\label{reorder-consistency}~
  \begin{itemize}
    \item 
    $\ift{ \reoder{\gtys}{\ggtys}  }{ 
      {\gtys} \rel {\ggtys}
     }$
    \item  
    $\ift{\reoder{\gtya}{\gtyb}  }{ 
      {\gtya} \rel {\gtyb}
     }$
  \end{itemize}
\end{lemma}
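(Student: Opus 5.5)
We argue by simultaneous induction on the derivation of the reordering judgment, mirroring the structure of the consistency rules in Definition~\ref{def:consistency-target}. For simple types, the base cases $\reoder{\rtype}{\rtype}$, $\reoder{\btype}{\btype}$ and $\reoder{\?}{\?}$ are immediate. The first two follow from the corresponding consistency axioms. The third follows from either rule $\gtys \rel \?$ or $\? \rel \gtys$.

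For the function case, we have $\reoder{\gtys[1] -> \gtya[1]}{\gtys[2] -> \gtya[2]}$ with premises $\reoder{\gtys[2]}{\gtys[1]}$ and $\reoder{\gtya[1]}{\gtya[2]}$. The induction hypothesis gives $\gtys[2] \rel \gtys[1]$ and $\gtya[1] \rel \gtya[2]$. We need the domains oriented as $\gtys[1] \rel \gtys[2]$, so we first record the routine fact that consistency of formula types is symmetric. This fact is proved by a separate induction on the consistency derivation:
\begin{itemize}
\item the two $\?$ rules swap into each other;
\item the function rule follows from symmetry of its premises;
\item the distribution rule follows by transposing the witness coupling, i.e.\ using $\var{j}{i} := \var{i}{j}$, which preserves both marginal conditions and the side formulas $\phi[][1] \land \phi[][2]$.
\end{itemize}
With symmetry in hand, the function consistency rule closes the case.

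For the distribution case, the hypothesis is $\couplingLift{\gtya}{\gtyb}{\reoderSym}$. Unfolding Definition~\ref{def:symcoupling}, this gives a satisfying assignment of the variables of $\phi[][1]$, $\phi[][2]$ and a coupling $\CC = \d{\var{i}{j}}$. The coupling has the correct marginals, and every pair with $\var{i}{j} > 0$ satisfies $\reoder{\gtys[i]}{\gtysp[j]}$. We keep exactly the same assignment and the same $\CC$ as the witness for $\couplingLift{\gtya}{\gtyb}{\sim}$. Conditions~\ref{cond:coupling} and the side constraints are literally unchanged. For condition~\ref{cond:relation}, each positive-weight pair has $\reoder{\gtys[i]}{\gtysp[j]}$, hence $\gtys[i] \rel \gtysp[j]$ by the induction hypothesis on simple types. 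The statement is therefore the monotonicity of relation lifting under inclusion of relations, $\reoderSym \subseteq \rel$.

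The main obstacle is making the induction well-founded, since the distribution case appeals to the simple-type statement on component types, and these in turn contain distribution types. We handle this by inducting on the reordering derivation itself, where every premise is a strictly smaller subderivation. In the distribution case, the derivations of $\reoder{\gtys[i]}{\gtysp[j]}$ for the positive-weight pairs are premises of the lifting, so the hypothesis applies to each of them. The only other delicate point is the contravariant domain in the function rule, which is discharged by the symmetry lemma above.
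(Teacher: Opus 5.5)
Your proof is correct and takes essentially the same route as the paper: induction on the reordering judgment, with the distribution case closed by keeping the same satisfying assignment and coupling and applying the induction hypothesis to the positive-weight pairs. The paper's two-line proof calls the non-distribution cases trivial, so your symmetry-of-consistency step for the flipped domain premise of the function rule fills in a detail the paper leaves implicit rather than changing the argument.
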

\begin{proof}~
  \begin{itemize}
    \item (non-distribution types) trivial cases.
    \item (distribution types) \\
    By the induction hypothesis, this proof is trivial.
  \end{itemize}
\end{proof}

\begin{lemma}[Reordering evidence]\label{initreorder}~
  \begin{itemize}
    \item 
    $\ift{\initReorder{\gtys}{\ggtys} |- \reoder{\gtys}{\ggtys}  }{ 
      \initReorder{\gtys}{\ggtys} |- {\gtys} \rel {\ggtys}
     }$
    \item  
    $\ift{\initReorder{\gtya}{\gtyb} |- \reoder{\gtya}{\gtyb}  }{ 
      \initReorder{\gtya}{\gtyb} |- {\gtya} \rel {\gtyb}
     }$
  \end{itemize}
\end{lemma}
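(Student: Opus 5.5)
The implication reduces to unfolding the definition of evidence: $\ev |- \gtys \rel \ggtys$ holds iff $\ev \gprec \gtys$ and $\ev \gprec \ggtys$. The hypothesis $\initReorder{\gtys}{\ggtys} |- \reoder{\gtys}{\ggtys}$ means, by the remark after Lemma~\ref{reorderevidence2}, that the evidence is at least as precise as the initial reordering evidence. So it suffices to show that $\initReorder{\gtys}{\ggtys}$ is itself at least as precise as both $\gtys$ and $\ggtys$; in other words, that the reordering operator is reductive, exactly as the meet is in Lemma~\ref{meet-more-precise}. I would state and prove this reductivity as an auxiliary claim and then conclude directly from the definition of evidence. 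Lemma~\ref{reorder-consistency} guarantees that the consistency judgment being justified is well-posed.

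For simple types I would proceed by induction on the definition of $\initReorder{}{}$. The base cases $\rtype$, $\btype$ and $\?$ are immediate by reflexivity of $\gprec$. The function case follows from the induction hypotheses on domain and codomain, via the arrow rule of precision.

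The distribution case is the main work. Let $\gtya = \phty{\phi[][1]}{\d{\gtys[i][\p[i]]}[i \in \iSet]}$ and $\gtyb = \phty{\phi[][2]}{\d{\gtys[j][\p[j]]}[j \in \jSet]}$. By definition of $\witnessname$, the operator produces $\phty{\phii}{\d{(\initReorder{\gtys[i]}{\gtys[j]})^{\cww[ij]}}}$, whose formula $\phii$ asserts that the $\cww[ij]$ form a coupling between the two marginals, supported on reorderable pairs, and also includes $\phi[][1]\land\phi[][2]$.

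To show this evidence is $\gprec \gtya$, I would use the precision rule for formula distribution types. For every solution of $\phii$, I exhibit the witness coupling $\cww[(ij),i'] = \cww[ij]$ when $i'=i$ and $0$ otherwise. The row sums are then trivially $\cww[ij]$, and the column sums equal $\sum_j \cww[ij] = \p[i]$ by the coupling condition in $\phii$. The support condition $\initReorder{\gtys[i]}{\gtys[j]} \gprec \gtys[i]$ follows from the induction hypothesis. The case for $\gtyb$ is symmetric, using the other marginal.

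The main obstacle is the quantifier structure of the precision rule: the left formula is universally quantified, while the right variables are existential. Here the right formula is $\phi[][1]$ (respectively $\phi[][2]$), and since $\phii$ entails it, the existential witnesses for the free variables of $\gtya$ can be taken to be the same values. With that observation, the diagonal coupling above discharges the obligation.
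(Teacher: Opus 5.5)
Your proof is correct, but the one step that carries the weight is done differently from the paper. Both you and the paper use the definition of evidence to reduce the claim to $\initReorder{\gtya}{\gtyb} \gprec \gtya$ and $\initReorder{\gtya}{\gtyb} \gprec \gtyb$.

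The paper discharges these in one line, citing the hypothesis $\initReorder{\gtya}{\gtyb} \vdash \reoder{\gtya}{\gtyb}$ together with Lemma~\ref{reorder-consistency}. That only works because its proof of Lemma~\ref{reorderevidence} unfolds the reordering-evidence judgment as exactly these two precisions. Lemma~\ref{reorder-consistency} by itself yields only $\gtya \sim \gtyb$.

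You instead take the main-text reading, under which the hypothesis is just reflexivity and tells you nothing. You then prove reductivity of the initial reordering evidence directly, mirroring Lemma~\ref{meet-more-precise}. For simple types you use structural induction. For distribution types you use the projection coupling $\omega_{(ij),i'} = \omega_{ij}$ if $i' = i$ and $0$ otherwise, with the existential witnesses for $\FV(\Phi_1)$ set equal to their universally quantified values.

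What each route buys: yours is self-contained and works whichever reading of the reordering-evidence judgment is in force. It also gives an explicit, checkable coupling, whereas the paper's coupling in Lemma~\ref{reorderevidence} is only sketched. The paper's version is shorter, but it makes the lemma almost definitional and pushes all the work into that earlier lemma.

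Two small remarks:
\begin{itemize}
\item Your column-sum computation sums only over the evidence's index set of reorderable pairs. It still equals $\varrho_i$ because the support condition in the evidence's formula $\Phi$ forces $\omega_{ij} = 0$ off that set.
\item The appeal to Lemma~\ref{reorder-consistency} in your first paragraph is not needed for your argument.
\end{itemize}
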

\begin{proof}~
  \begin{itemize}
    \item (non-distribution types) trivial cases.
    \item (distribution types) \\
    Suppose~$\gtya = \d{ \gtys[i][\p[i]] }[i \in \iSet] \ad \gtyb = \d{ \gtys[j][\p[j]] }[j \in \jSet]$ \\
    $\sentence{
    we need to show that,
    }\\
    {
      \initReorder{\gtya}{\gtyb} \gprec \gtya 
    }\\
    {
      \initReorder{\gtya}{\gtyb} \gprec \gtyb
    }\\
    \since{
      \initReorder{\gtya}{\gtyb} |- \reoder{\gtya}{\gtyb}
    } \\
    \sentence{By Lemma}~\ref{reorder-consistency}, \\
    \so{
      \initReorder{\gtya}{\gtyb} |- {\gtya} \rel {\gtyb}
    }
    $
  \end{itemize}
\end{proof}

\begin{theorem}[Dynamic Gradual Guarantee(1)]\label{dgga}
  $\forall \kk, {\m} \gprec {\n} , 
    |-d \m : \gtya, |-d \n : \gtyb,
    \rrctx[\phi[][1]] \m \nreds{}{\kk} \rctx[\phip[][1]] \V $ 
    then 
    $ \rrctx[\phi[][2]] \n \nreds{}{*}  \rctx[\phip[][2]] \Vp \land \phty{\pphip[1]}{\V} \gprec \phty{\pphip[2]}{\Vp} $.
\end{theorem}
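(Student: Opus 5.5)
We prove the statement by strong induction on the step index $\kk$, with a case analysis on the last rule of the derivation $\rrctx[\phi[][1]] \m \nreds{}{\kk} \rctx[\phip[][1]] \V$ and inversion of the precision derivation $\m \gprec \n$. Rule $\trg{(\mathit{Dmon})}$ follows directly from the induction hypothesis at $\kk-1$. For the case where $\m$ is an error $\errort{\gtya}$ (rule $\trg{(\mathit{Derr})}$), the right-hand term $\n$ is any term whose type is less precise. We then show that $\n$ either diverges (which is excluded in this half, so it is handled by the companion divergence lemma) or reduces. By type safety it reduces to some $\Vp$ whose type is a reordering of $\gtyb$. The coupling required by rule $\trg{(\mathord{\gprec}\V)}$ is built from the precision witness $\gtya \gprec \gtyb$ composed with that reordering, using Lemma~\ref{rtrans}. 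Since every $\errort{\gtys[i]}$ is more precise than any term of less precise type, the pointwise condition holds.

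The structural cases follow the shape of the reduction rules.

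\textbf{$\trg{(\mathit{Dv})}$ and $\trg{(\mathit{D+})}$.} These are immediate. For $\trg{(\mathit{D+})}$ we use Lemma~\ref{mono} to show that $\n$'s composition $\ev[2]\trans{}\ev[4]$ is defined and less precise.

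\textbf{$\trg{(\mathit{Dit})}$ and $\trg{(\mathit{Dif})}$.} Both sides branch identically, since precision on raw booleans is equality, so the induction hypothesis applies.

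\textbf{$\trg{(\mathit{D}\oplus)}$.} We apply the induction hypothesis to both branches and combine the two value-distribution precisions with Lemma~\ref{setprec}. The coupling is the diagonal $\cw{1}{1}=\p[1]$, $\cw{2}{2}=\p[2]$. This is justified under $\phi[][1]$ because the side condition of $\trg{(\mathord{\gprec}\oplus)}$ gives $\phi[][1]\Rightarrow\phi[][2]$ on the shared, alpha-renamed variables.

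\textbf{$\trg{(\mathit{Dlet})}$.} The induction hypothesis on $\tm$ yields a coupling between the outcomes $\tv[i]$ and $\tvp[j]$. For each positively weighted pair, Lemma~\ref{subep} gives $\ssub{\tn}{\tv[i]}{\tx}\gprec\ssub{\tnp}{\tvp[j]}{\tx}$, and Lemma~\ref{subtyp} keeps both sides typed. A second use of the induction hypothesis then relates the resulting $\V[i]$ and $\Vp[j]$. Lemma~\ref{setprec} glues these into a precision between the weighted sums.

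\textbf{$\trg{(\mathit{Dapp})}$.} We first handle the argument ascription, which is a $\trg{(\mathit{D}\mathord{::}\gtys)}$ step with evidence $\cdom(\ev)$. By Lemma~\ref{mono}, if the left composition is defined then so is the right one, and the result is less precise. If the left side errors, the right side's result is still related, because errors are below everything of less precise type. After substitution via Lemma~\ref{subep}, the induction hypothesis on the body completes the case.

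\textbf{$\trg{(\mathit{D}\mathord{::}\gtys)}$.} This case is exactly Lemma~\ref{mono}.

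The main obstacle is $\trg{(\mathit{D}\mathord{::}\gtya)}$. By the induction hypothesis, $\tm \gprec \tn$ gives $\V\gprec\Vp$ with some coupling $\CC$, and the right side's computed type $\gtyap'$ is a reordering of its declared type. We must show four things:
\begin{enumerate}
\item The right-hand evidence $(\initReorder{\gtyap'}{\gtya'})\trans{}\evdp$ is defined whenever the left-hand one is. We get this by first relating $\initReorder{\gtyap}{\gtya}\gprec\initReorder{\gtyap'}{\gtya'}$: the precision coupling on types is induced by $\CC$ on values, and is transported through Lemma~\ref{reorderevidence} and Lemma~\ref{initreorder}. Lemma~\ref{mono} then gives definedness and the precision $\phty{\pphi[2]}{\d{\ev[k][\cww[k]]}}\gprec\phty{\pphi[2]'}{\d{\evp[k'][\cww[k']]}}$, with a coupling $\CC'$ between the indices $k$ and $k'$.
\item Each pushed ascription on the left is related to the corresponding one on the right. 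For every positively weighted pair $(k,k')$, we have $\asc{\ev[k]\tv[i]}{\gtysp[j]}\gprec\asc{\evp[k']\tvp[i']}{\gtysp[j']}$, so the single-step $\trg{(\mathit{D}\mathord{::}\gtys)}$ case, i.e.\ Lemma~\ref{mono} again, relates the resulting distributions.
\item The weighted sums $\ssum[k]\cww[k]\cdot\V[k]$ and $\ssum[k']\cww[k']\cdot\Vp[k']$ are related. Lemma~\ref{setprec} with coupling $\CC'$ gives this.
\item If the left side errors while the right side does not, the outcome is still related. This is covered by the error-precision rule together with type safety.
\end{enumerate}
The delicate point is the index bookkeeping. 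We must check that the tags $\projl{\cdot}$ and $\projr{\cdot}$ of the composed coupling $\CC'$ agree with the value coupling $\CC$ on the left component and with $\gtyb\gprec\gtyb'$ on the right component. I would carry this out with the explicit product-and-normalise coupling used in the proof of Lemma~\ref{mono}.
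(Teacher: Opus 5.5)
\textbf{There is a genuine gap in the error case, and the statement as written is false there.} Your case analysis follows the paper's proof closely: strong induction on the step index, with Lemmas~\ref{mono}, \ref{subep} and \ref{setprec} doing the work in the structural cases. The problem comes after $\trg{(\mathit{Derr})}$ on the left, where you must exhibit a \emph{terminating} reduction of the right-hand term. Type safety (Theorem~\ref{lemma:typesafety}) only says that term terminates or diverges. The companion result, Theorem~\ref{dggb} (whose contrapositive is the divergence half of the guarantee), transfers termination from the less precise term to the more precise one, i.e.\ in the wrong direction. Nothing in the hypotheses excludes divergence on the right, and it really occurs. The error-precision rule only asks that the right-hand term be well typed at a less precise type. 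So $\errort{\gtyb}\gprec\Omega$, where $\Omega$ is the well-typed, divergent elaboration of $(\lambda x{:}\?.\,x\,x)\,(\lambda x{:}\?.\,x\,x)$ and $\gtyb$ is its type, even though the left side terminates in one step.

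\textbf{The same failure sits inside your $\trg{(\mathit{Dapp})}$ case.} There, ``errors are below everything'' presupposes that the right side terminates. Consider (after elaboration to \tlang) $\mathsf{let}\;z=\mathit{true}::\?\;\mathsf{in}\;(\lambda x{:}\rtype.\,\Omega)\,z\;\gprec\;\mathsf{let}\;z=\mathit{true}::\?\;\mathsf{in}\;(\lambda x{:}\?.\,\Omega)\,z$. On the left the domain cast fails and $\mathit{sub}$ replaces the body by an error; on the right the cast succeeds and $\Omega$ runs. The paper's own proof closes the error case with ``So the result holds'' and has the same hole. A provable version must be weakened in the style of standard graduality. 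For example, claim termination of the right-hand term only when the left outcome is error-free, and otherwise relate the outcomes only when both sides terminate.

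\textbf{A second, smaller hole is in your $\trg{(\mathit{Dlet})}$ case.} The right-hand $\trg{(\mathit{Dlet})}$ must reduce $\ssub{\tnp}{\tvp[j]}{\tx}$ for \emph{every} outcome $\tvp[j]$ of $\tmp$. Your induction hypothesis is the theorem's own conclusion, so it only relates pairs that some coupling weights positively. An outcome whose symbolic probability is forced to $0$ is related to nothing, and ``for each positively weighted pair'' leaves it untreated. The paper's let case makes the same unjustified jump from $\forall i\,\exists i'$ to $\forall i'$. You need an induction invariant that also controls these zero-weight outcomes on the right, for instance that each of them is related to some left-hand outcome.
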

\begin{proof}
  By strong induction on the step number and case analysis on precision judgement.
 
 \begin{case}[$m = v ; n = w $]
    This is the trivial case.
 \end{case}
 
 \begin{case}[$m = \asc{\evd \mp}{\gtya} ; n = \asc{\evd' \np}{\gtyb}$] 
  $ \\
   \since{  \\
   \inference[\trg{(\mathit{D}\mathord{::}\gtya)}]{
    \rrctx[\phi[][1]] \tm \nreds{}{\kp} \rctx[\phi[][1]] \d{\tv[i][\p[i]]}[i \in \iSet]
    &  |-d \phty{\phi[][1]}{\d{\tv[i][\p[i]]}[i \in \iSet] }: \gtyap \\
    %\evdp = \initReorder{\gtyap}{\gtya} &
    \evd |- \gtya \rel \gtyb
     &
     \gtyb = \phty{\pphi[3]}{ \d{\gtysp[j][\p[j]]}[j \in \jSet]}
    }
    { \rrctx[\phi[][1]] \trg{ (\asc{\evd \tm}{ \gtyb }) } \nreds{}{\kp+1}
    \rctx[\phi[][2] ]  
    \begin{cases}
      \ssum[\kk \in \kSet] 
      \cww[k] \cdot \V[\kk]  
      & \begin{block}
        \text{If}~ (\initReorder{\gtyap}{\gtya}) \trans{} \evd = \phty{\pphi[2]}{\d{\ev[k][\cww[k]]}[k \in \kSet]}\\
        \text{where}~  \forall k \in \kSet,   
         i = \projl{\cww[k]}, j = \projr{\cww[k]}. \\
         \trg{ (\asc{\ev[\kk] \tv[i]}{\gtysp[j]}) } \nreds{}{1} \rctx[\cdot] \V[\kk] 
      \end{block}\\
      \errort{\gtyb} & \text{otherwise}
    \end{cases}   
    } 
   } \\
   \sentence{Suppose}~{\evdp}={\initReorder{\gtyap}{\gtya}}, 
   \gtya = \dt{\gtys[l][\p[l]]} \ad \gtyap= \dt{\gtys[i][\p[i]]} \\
  \so{ \sentence{we need to show the following :} } \\
  \ift{
    \rrctx[\phi[][1]] \trg{ (\asc{\evd \tm}{ \gtyb }) } \nreds{}{\kp+1}
    \rctx[\phi[][2]]  \ssum[\kk] \cww[k] \cdot \V[\kk] 
   }{ \rrctx[\phip[][1]] (\asc{\evd \np}{ \dt{\gtysp[j'][\p[j']]~|~ j' \in \jjSet} }) \nreds{}{*}
   \rctx[\phip[][2] ]  \sum_{\kkp} \cww[k'] \cdot \Vp[\kkp] 
   \ad \phty{\phi[][2] }{\sum_{\kk} \cww[k] \cdot \V[\kk]}  \gprec
   \phty{\phip[][2] }{\sum_{\kkp} \cww[k'] \cdot \Vp[\kkp]}
   } \\
   \since{m \gprec n }\\
   \since{\rrctx[\phi[][1]]  m \nreds{}{\j[1]} \rctx[\phi[][1]]
   \d{ \pt{\v[i]}[\p[i]] }[i \in \iSet] } \\
    \sentence{we could get the following from the induction hypothesis : } \\
    \so{
      \rrctx[\phip[][1]]  \np \nreds{}{\jp[1]} \rctx[\phip[][1]]
   \d{ \pt{\v[i']}[\p[i']] }[i' \in \iiSet]
    } \\
     \so{ \phty{\phi[][2]}{\dt{\pt{\v[i]}[\p[i]]} } \gprec 
     \phty{\phip[][2]}{\dt{ \pt{\v[i']}[\p[i']] }} } \\
     \so{  \phty{\phi[][2]}{\dt{ \gtys[i][\p[i]] }} \gprec  \phty{\phip[][2]}{\dt{ \gtys[i'][\p[i']] } }} \\
     \since{ \reoder{ \phty{\phi[][1]}{\ds{\gtys[i][\p[i]]} } }{ \phty{\pphi[l]}{\ds{ \gtys[l][\p[l]] }} }   } \\
     \so{ \phty{\phi[][l]}{\dt{ \gtys[l][\p[l]]}} \gprec  \phty{\phip[][l']}{\dt{ \gtys[l'][\p[l']] }} } \\
     \since{\evd \gprec \evdp} \\
     \since{ \phty{\phi[][3]}{\dt{\gtysp[j][\p[j]] }}  \gprec \phty{\phip[][3]}{\dt{\gtysp[j'][\p[j']]}} } \\
     \since{ \evd |- \phty{\phi[][l]}{\dt{\gtys[l][\p[l]]}} \sim \phty{\phi[][3]}{\dt{\gtysp[j][\p[j]]}}  } \\
     \since{ \evd' |- \phty{\phip[][l']}{\dt{\gtys[l'][\p[l']]}} \sim \phty{\phip[][3]}{\dt{\gtysp[j'][\p[j']]}}  } \\
     \since{\evd[1] \trans{} \evdp[1]} ~ \sentence{is defined}\\
     \sentence{By Lemma~\ref{initreorder} \ad Lemma~\ref{mono}} \\
     \so{\evd[2] \trans{} \evdp[2]} ~ \sentence{is defined}\\
     \so{
      \evd[1] \trans{} \evdp[1] \gprec \evd[2] \trans{} \evdp[2]
     }\\
     \sentence{from the coupling,} \\
     \so{\forall i, \gtys[i] \gprec \gtys[i'],} \\
     \sentence{there exists}
     ~{\ev[k], \ev[k'],
     \gtysp[j], \gtysp[j']}\\
     \so{\ev[k] \gprec \ev[k']}\\
     \so{\gtysp[j] \gprec \gtysp[j'] } \\
    % \since{\phi[][3] |- (\sproj{ \gtya }{\gtys[i]}>0 \land \p[\leftvar] >0)} \\
    % \sentence{there is a coupling between}~{\gtya \ad \gtyap}
    % \sentence{justified by}~{\phi[][3] \ad \phip[][3],}
    %  \\ 
    % \so{
    %   \phip[][3] |- (\sproj{ \gtyap }{\ggtys[i]}>0 \land \pp[\leftvar] >0)
    % }\\ 
    % \sentence{there is a coupling between}~{\dt{ \gtys[i][\p[i]] } \ad \dt{ \gtys[i'][\p[i']] }}
    % \sentence{justified by}~{\phi[][2] \ad \phip[][2],} \\
    % \since{\jtf{\phi[][2]}{\p[i] > 0}}\\
    % \so{\jtf{\phip[][2]}{\p[i'] > 0}}\\
    % \sentence{there is a coupling between}~{\evd } \ad \dt{ \evdp }
    % \sentence{justified by}~{\phi[][5] \ad \phip[][5],} \\
    % \since{\jtf{\phi[][5]}{\cww[k] > 0}}\\
    % \so{\jtf{\phip[][5]}{\cww[k'] > 0}}\\
    % \since{
    %   (\jtf{\phi[][6] \land \phi[][8]}{\cww[ik]>0})
    % }\\
    % \so{
    %   (\jtf{\phip[][6] \land \phip[][8]}{\cww[i'k']>0})
    % }\\
     \so{\sentence{we could get the following from the hypothesis: }}\\ 
     \so{   
      \trg{ (\asc{\evp[\kkp] \tvp[i']}{\gtysp[j']}) } \nreds{}{1} \rrctx[\phi[][\kkp]] \Vp[\kkp] 
     } \\
     \so{
      \phty{\cdot}{\V[k]} \gprec \phty{\cdot}{\V[k']}
     }\\
    %  \since{\sentence{Suppose} ~ \cww[iki'k'] = (\cww[ili'l'] \cdot \cww[ljl'j'])/(\p[l]\cdot \p[l']) } \\
    %  \so{\ssum[i'k'] \cww[iki'k']  = (\cww[il] \cdot \cww[lj])/ \p[l] = \cww[ik]} \\
    %  \eq{\ssum[i'k'] (\cww[ili'l'] \cdot\cww[ljl'j'])/(\p[l]\cdot \p[l']) } \\
    %  \since{\ssum[i'k'] \cww[ili'l']  = \cww[il], \ssum[i'k'] \cww[ljl'j'] = \cw{l}{j} , \ssum[i'k'] \p[l'] = 1 } \\
    %  \sentence{then} \\
    %  \eq{(\cww[il]\cdot\cw{l}{j})/(\p[l])} \\
    %  \so{\ssum[ik] \cww[iji'k'] = \cww[i'k'] = (\cw{i'}{l'} \cdot \cww[l'j'])/ \p[l'] } \\
    %  \eq{\ssum[ik] (\cww[ili'l'] \cdot\cww[ljl'j'])/(\p[l]\cdot \p[l']) } \\
    %  \eq{(\cww[i'l']\cdot\cww[l'j'])/(\p[l'])} \\
    \since{
      \evd[1] \trans{} \evdp[1] \gprec \evd[2] \trans{} \evdp[2]
    }\\
    \so{
      \ssum[\kk] \cww[\kk\kkp] = \cww[\kkp] 
    }\\
    \so{
      \ssum[\kkp] \cww[\kk\kkp] = \cww[\kk] 
    }\\
     \sentence{then the proof follows by Lemma~\ref{setprec}.} \\
     \so{ \phty{\phi[][2] }{\sum_{\kk } \cww[k] \cdot \V[\kk]}  \gprec
     \phty{\phip[][2] }{\sum_{\kkp } \cww[k'] \cdot \Vp[\kkp]} }\\
     \so{
      \rrctx[\phip[][1]] (\asc{\evd \np}{ \dt{\gtysp[j'][\p[j']]~|~ j' \in \jjSet} }) \nreds{}{*}
   \rctx[\phip[][2]]  \sum_{\kkp} \cww[k'] \cdot \Vp[\kkp] 
    }\\
    \ift{
      \rrctx[\phi[][1]] (\asc{\evd \mp}{ \dt{\gtysp[j][\p[j]]~|~ j \in \jSet} }) \nreds{}{\jp+1}
      \rctx[\phi[][2] ]  \sum_{i\kk} \cww[k] \cdot \V[\kk] 
     }{ \rrctx[\phip[][1]] (\asc{\evd \np}{ \dt{\gtysp[j'][\p[j']]~|~ j' \in \jjSet} }) \nreds{}{*}
     \rctx[\phip[][2]]  \sum_{\kkp} \cww[k'] \cdot \Vp[\kkp] 
     \ad  \phty{\phi[][2]}{\sum_{\kk } \cww[k] \cdot \V[\kk]}  \gprec
     \phty{\phip[][2]}{\sum_{\kkp } \cww[k'] \cdot \Vp[\kkp]}
     } \\$
 \end{case}

 \begin{case}[$m = (\lett{x}{\m[1]}{\n[1]}) ; n = (\lett{y}{\m[2]}{\n[2]})$]
  $\\$
  $\since{ \\
    \inference{\rrctx[\phii[]] \m[1] \nreds{}{\j[1]} \rctx[\phii[']] \d{ \pt{\v[i]}[\p[i]] }[i \in \iSet] \\
 \forall i. ~\rrctx[\phii[']] \n[1][\v[i] /x] \nreds{}{\j[2]} \rctx[\phi[][i]] \V[i] }{ 
  \rrctx[\phii[]] \lett{x}{\m[1]}{\n[1]} \nreds{}{\j[1]+\j[2]+1} \rctx[(\bigwedge_{i \in \iSet} \phi[][i])] \ssum[i \in \iSet] 
  \p[i] \cdot \V[i] }
  } \\
  \sentence{
    we need to show,
  }\\
  \ift{
    \rrctx[\phii[]] \lett{x}{\m[1]}{\n[1]} \nreds{}{\j[1]+\j[2]+1} \rctx[\phi[][i]] \ssum[i \in \iSet] 
    \p[i] \cdot \V[i]
  }{\\
    \rrctx[\phii[]] \lett{y}{\m[2]}{\n[2]} \nreds{}{\jp[1]+\jp[2]+1} \rctx[\phi[][i]] \ssum[i' \in \iiSet] 
  \p[i'] \cdot \V[i'] \ad \p[i] \cdot \V[i] \gprec  \p[i'] \cdot \V[i'] 
  }\\
  \sentence{By induction hypothesis,} \\
  \so{ 
    \rrctx[\phi[][i']] \m[2] \nreds{}{\jp[1]} \rctx[\phi['][i']] \d{ \pt{\v[i']}[\p[i']] }[i' \in \iiSet]
    \ad 
    \phty{\phi[][i]}{ \d{ \pt{\v[i]}[\p[i]] }[i \in \iSet]} \gprec  
    \phty{\phi[][i']}{ \d{ \pt{\v[i']}[\p[i']] }[i' \in \iiSet]}
  } \\
  \sentence{By substitution preserve precision lemma} \ref{subep}, \\
  \so{
   \forall i, \exists i', \n[1][\v[i] /x]  \gprec  \n[2][\v[i'] /y] 
  }\\
  \sentence{By induction hypothesis,} \\
  \so{ 
    \forall i'. ~\rrctx[\phi['][i']] \n[2][\v[i'] /y] \nreds{}{*} \rctx[\phi[][i']] \V[i']
    \ad 
    \phty{\phi[][i]}{ \V[i] } \gprec \phty{\phi[][i]}{\V[i']}
  }\\
  \so{
    \rrctx[\phii[]] \lett{y}{\m[2]}{\n[2]} \nreds{}{*} \rctx[(\bigwedge_{i' \in \iiSet} \phi[][i'])] \ssum[i' \in \iiSet] 
    \p[i'] \cdot \V[i'] \ad 
    \phty{(\bigwedge_{i \in \iSet} \phi[][i])}{\p[i] \cdot \V[i]} \gprec  
    \phty{(\bigwedge_{i' \in \iiSet} \phi[][i'])}{\p[i'] \cdot \V[i']}
  } \\
  $
 \end{case}

 \begin{case}[$m = v\;w ; n = v'\;w' $]
  $\\$
  $\since{\\
    \inference{ \rrctx[\phi[][1]] \asc{\dom(\ev[1](\asc{\ev[2] u}{\gtys})}{\gtysp} \nreds{}{1} \rctx[\cdot] \dt{\pt{\w}} &
    \rrctx[\phi['][1]] \asc{\cod(\ev[1](m[w/x])}{\gtya} \nreds{}{\j}  \rctx[\phi[''][1]] \V }
    {\rrctx[\phi[][1]] (\asc{\ev[1] (\lambda x: \gtysp. \m[1])}{\gtys -> \gtya) (\asc{\ev[2] u}{\gtys})} \nreds{}{\j+1} \rctx[\phi[''][1]] \V }
  } \\
  \sentence{we need to show,} \\
  \ift{
    \rrctx[\phi[][1]] \asc{\dom(\ev[1](\asc{\ev[2] u}{\gtys})}{\gtysp} \nreds{}{1} \rctx[\cdot] \dt{\pt{w}} 
    \ad 
    \rrctx[\phi['][1]] \asc{\cod(\ev[1](m[w/x])}{\gtya} \nreds{}{\j}  \rctx[\phi[''][1]] \V
  }{
    \rrctx[\phi[][2]] \asc{\dom(\evp[1](\asc{\evp[2] u'}{\ggtys})}{\ggtysp} \nreds{}{1} \rctx[\cdot] \dt{\pt{w'}} 
    , \; 
    \rrctx[\phi['][2]] \asc{\cod(\ev[1](\mp[][w'/y])}{\gtyap} \nreds{}{\jp}  \rctx[\phipp[][1]] \Vp
    \ad \phty{\phipp[][1]}{\V} \gprec \phty{\phipp[][2]}{\Vp}
  }\\
  \sentence{By induction hypothesis,}\\
  {
    \rrctx[\phi[][2]] \asc{\dom(\evp[1](\asc{\evp[2] u'}{\ggtys})}{\ggtysp} \nreds{}{1} \rctx[\cdot] \dt{\pt{w'}} 
    \ad \phty{\cdot}{\dt{\pt{w}}} \gprec \phty{\cdot}{\dt{\pt{w'}}}
  }\\
  \sentence{By substitution preserve precision lemma} \; \ref{subep} \\
  \so{
    \m[] [\w /x] \gprec \mp[][\wp/y]
  }\\
  \sentence{By induction hypothesis,}\\
  \so{
    \rrctx[\phi['][2]] \asc{\cod(\ev[1](\mp[][w'/y])}{\gtyap} \nreds{}{\jp}  \rctx[\cdot] \Vp
    \ad \phty{\phipp[][1]}{\V} \gprec  \phty{\phipp[][2]}{\Vp}
  }\\
  \sentence{The result holds.
  }\\
  $
 \end{case}

 \begin{case}[$m$=choice]
  $\\$
  $
  \since{ \\
    \inference{\rrctx[\phii[]] {m} \nreds{}{\j[1]} \rctx[\pphi[1]] \V[1] &
    \rrctx[\phii[]]  {n} \nreds{}{\j[2]} \rctx[\pphi[2]] \V[2] }
    {\rrctx[\phii[]]  {m} \ppsum {n} \nreds{}{\j[1]+\j[2]+1} \pphi \land \pphi[1] \land \rctx[\pphi[2]] \p[1] \cdot \V[1] + \p[2] \cdot \V[2] }
  } \\$
  This is derived by the induction hypothesis.
 \end{case}

 \begin{case}[$m = (\asc{\ev[2](\asc{\ev[1] u}{\gtys[1]})}{\gtysp[1]}); n = (\asc{\evp[2](\asc{\evp[1] u'}{\gtys[2]})}{\gtysp[2]})$]
  The proof follows by Lemma~\ref{mono}.
 \end{case}
 \begin{case}[$m =  \errort{\gtya} $]
  $\\$
  $\since{\\
    \inference[\trg{(\mathit{err})}]{
      \gtya = \dctx[\phip] \d{\gtys[i][\p[i]]}[i \in \iSet]
    }{
      \rrctx[\phi[]] \errort{\gtya} \nreds{\p[i]}{1}
      \rctx[\phip] \d{ \pt{\errort{\gtys[i]}}[\p[i]] }[i \in \iSet]
    }
  }\\
  \since{ \\
    \inference[(Gerr)]{}
    {\Gamma |-t \errort{\gtys / \gtya} : \gtys / \gtya}
  }\\
  \since{\\
  \reoder{\gtya}{ \phty{\pphi}{ \dt{\gtys[i][i]}} } }\\
  $
  So the result holds. 
 \end{case}

 \begin{case}[$\tm = \add{\tv}{\tw}$]
  $\\$
  $\since{
    \inference[(G$+$)]{
    \Gamma |-t \tv :  \rtype   &  
    \Gamma |-t \tw :  \rtype 
  }
  {\Gamma |-d \trg{ \add{\tv}{\tw} } : \ds{\rtype^1} } 
  }\\
  \since{
    \inference[\trg{(+)}]{ \ev[1] \trans{} \ev[2] = \ev[3] & \trg{r_3} = \trg{r_1} + \trg{r_2} }
  { \rrctx[\phii] \trg{\add{ \asc{\ev[1] r_1}{\rtype} }{ \asc{\ev[2] r_2}{\rtype} }}\nreds{1}{1}  \rctx[\cdot] \trg{\asc{\ev[3] r_3}{\rtype}}
  } 
  }\\
  \since{
    \reoder{\ds{\rtype^1}}{\ds{\rtype^1}}
  }\\$
  So the result holds. 
\end{case}

\begin{case}[$\tm = $ if]
  $\\$
  $
  \since{
    \inference[(Gif)]{
      \Gamma |-t \tv : \btype \\
      \Gamma |-d \tm : \gtya & 
      \Gamma |-d \tn : \gtya
    }
    {\Gamma |-d \trg{ \ite{\tv}{\tm}{\tn} } : \gtya } 
  }\\
  \since{
  \inference[\trg{(if)}]{
      \rrctx[\phii]  \tm \nreds{}{\j} \rctx[\phi]  \V
   }{ \rrctx[\phii] \trg{\ite{ \asc{\ev \ttt}{\btype} }{\tm}{\tn}} 
      \nreds{}{\j+1}  \rctx[\phi]  \V \\
   }
  } \\
  \since{
  \inference[\trg{(if)}]{
      \rrctx[\phii]  \tn \nreds{}{\j} \rctx[\phi]  \V
   }{ \rrctx[\phii] \trg{\ite{ \asc{\ev \fff}{\btype} }{\tm}{\tn}} 
      \nreds{}{\j+1}  \rctx[\phi]  \V \\
   }
  } \\
  \sentence{if}~\ttt, \\
  \sentence{By the induction hypothesis, }\\
  {
    |- \V : \gtyb \ad \reoder{\gtyb}{\gtya}
  }\\
  \sentence{if}~\fff, \\
  \sentence{By the induction hypothesis, }\\
  {
    |- \V : \gtyb \ad \reoder{\gtyb}{\gtya}
  }\\
  $
  So the result holds. 
\end{case}
\end{proof}

\begin{theorem}[Dynamic Gradual Guarantee(2)]\label{dggb}
  $\forall \kk, \tm \gprec \tn , \rrctx[\phi[][1]] \tn \nreds{}{\kk}  \rctx[\phip[][1]] \Vp $ 
  then 
  $ \rrctx[\phi[][2]] \tm \nreds{}{*} \rctx[\phip[][2]] \V$.
\end{theorem}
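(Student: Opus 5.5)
The plan is to prove the contrapositive-free direct statement by strong induction on the step index $\kk$ of the derivation $\rrctx[\phi[][1]] \tn \nreds{}{\kk} \rctx[\phip[][1]] \Vp$, with an inner case analysis on the precision derivation $\tm \gprec \tn$, mirroring the structure of Theorem~\ref{dgga} but in the opposite direction. The claim is that whenever the less precise term terminates, the more precise one terminates too (possibly to an error distribution). Together with Theorem~\ref{dgga}, this yields the divergence clause of Theorem~\ref{theorem:dgg}: if $\tm$ diverged while $\tn$ converged, the present lemma would be contradicted. Rule $\trg{(\mathit{Dmon})}$ is handled immediately by the induction hypothesis at $\kk-1$.

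The first case is when $\tm$ is $\errort{\gtys}$ or $\errort{\gtya}$, which covers all situations where the left side is an error regardless of $\tn$. Here $\tm$ reduces in one step by $\trg{(\mathit{Derr})}$ or $\trg{(\mathit{Dv})}$, so the conclusion holds trivially. In all remaining cases the precision rule fixes the head constructor of $\tm$ to match that of $\tn$, and the cases go as follows.

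\emph{Values.} These are immediate by $\trg{(\mathit{Dv})}$.

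\emph{Let and $\oplus$.} Apply the induction hypothesis to the subterms. For let, I first obtain termination of $\tm[1]$ to some $\d{\tv[i][\p[i]]}$. Theorem~\ref{dgga} then gives the coupling with $\tn$'s first outcome distribution. For each $i$ with positive weight, I pick a coupled $i'$, use Lemma~\ref{subep} to get $\ssub{\tm[2]}{\tv[i]}{\tx} \gprec \ssub{\tn[2]}{\tv[i']}{\tx}$, and invoke the induction hypothesis on the smaller index $\j[2]$. Outcomes with zero weight are a nuisance. I would argue they still have well-typed substitutions, and since they have no coupled partner I would appeal to type safety (Theorem~\ref{lm:typesafety}) or treat them uniformly via the $\mathit{sub}$/error clause. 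This is a gap I would need to close carefully.

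\emph{Application, $+$, if.} Application is analogous to let: the argument ascription $\trg{(\mathit{D}\mathord{::}\gtys)}$ always yields a value or an error in one step, and the body case follows by Lemma~\ref{subep} and the induction hypothesis. For $+$, failure of $\ev[1]\trans{}\ev[2]$ on the left side would block $\trg{(+)}$. I would argue that the static typing $|-t \tv : \rtype$ forces $\rtype$ evidences, so composition is always defined. For if, precision preserves the boolean guard, so the same branch is taken.

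\emph{Ascriptions.} These are the main obstacle. For $\trg{(\mathit{D}\mathord{::}\gtys)}$ the rule always fires, giving either a value or $\errort{\gtys}$, so left-side termination is unconditional. For $\trg{(\mathit{D}\mathord{::}\gtya)}$, the induction hypothesis first gives termination of the inner term. The rule then either takes the ``otherwise'' branch, yielding $\errort{\gtyb}$, which terminates, or succeeds. In the success case each pushed ascription $\trg{(\asc{\ev[\kk] \tv[i]}{\gtysp[j]})}$ again reduces in one step by $\trg{(\mathit{D}\mathord{::}\gtys)}$. The premise $|-d \phty{\phi[][1]}{\cdots} : \gtyap$ is supplied by type safety, and the reorder evidence $\initReorder{\gtyap}{\gtya}$ is defined by Lemma~\ref{reorderevidence}. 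The key point is that every branch of $\trg{(\mathit{D}\mathord{::}\gtya)}$ produces a result, so unlike in Theorem~\ref{dgga} no monotonicity of the meet (Lemma~\ref{mono}) is needed here. Only totality of the rule is needed, and that is the property I expect to require the most careful checking.
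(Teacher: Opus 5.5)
Your skeleton (strong induction on the step index of $\tn$'s derivation, case analysis on $\tm \gprec \tn$, the induction hypothesis on subterms) is the paper's. The paper handles every compound case, let included, by writing ``by the induction hypothesis'', and for $\trg{(\mathit{D}\mathord{::}\gtys)}$ and $\trg{(\mathit{D}\mathord{::}\gtya)}$ it simply exhibits the rule. The gap you flag in the let case is genuine, and neither of your proposed repairs closes it.

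$\trg{(\mathit{Dlet})}$ requires $\ssub{\tm[2]}{\tv[i]}{\tx}$ to reduce for \emph{every} outcome $\tv[i]$ of $\tm[1]$, whatever its weight. The precision between outcome distributions supplied by Theorem~\ref{dgga} is only a coupling statement. It yields a partner $\tvp[i']$ with $\tv[i] \gprec \tvp[i']$ only when $\cww[ii'] > 0$ in some solution. It is even vacuous when the closing formula is unsatisfiable, which (G$\oplus$) does not exclude, since it only asks for $\satisfiablee{\phi \Rightarrow \p[1]+\p[2]=1}$. Type safety cannot provide the missing termination, because it only says a well-typed term converges \emph{or} diverges. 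The error clause of $\mathit{sub}$ fires only when the substituted value is itself an error.

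What you would need is an invariant stronger than a coupling, relating every outcome of the more precise result (zero-weight ones included) to some outcome of the less precise one. That is not easy to maintain either. $\trg{(\mathit{D}\mathord{::}\gtya)}$ can create zero-weight outcomes on the more precise side with no counterpart, because a less precise target type may omit entries that have weight $0$ on the more precise side. Even for positively weighted outcomes you also need determinism of the big-step relation, up to renaming of fresh symbolic variables. Without it you cannot identify the distribution that Theorem~\ref{dgga} produces for $\tn[1]$ with the one in the given derivation, whose substitution instances are known to terminate within $\j[2]$ steps.

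Separately, the totality you rely on fails when a value position holds an error. This affects $\trg{(\mathit{D}\mathord{::}\gtys)}$, application, $+$, if, and the pushed ascriptions $\asc{\ev[k] \tv[i]}{\gtysp[j]}$ of $\trg{(\mathit{D}\mathord{::}\gtya)}$. The rule $\errort{\gtys} \gprec \tm$ allows, for example, $\add{\errort{\rtype}}{\tw} \gprec \add{\asc{\ev \ttr}{\rtype}}{\twp}$, and your top-level error case does not cover this. With the rules as listed, nothing fires on $\add{\errort{\rtype}}{\tw}$, on an error guard, on an error function or argument, or on $\asc{\ev\, \errort{\gtys}}{\gtysp}$. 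The last holds because $\trg{(\mathit{D}\mathord{::}\gtys)}$ only matches $\asc{\ev[2](\asc{\ev[1] \tu}{\gtys})}{\gtysp}$. Error outcomes, for instance those produced by $\trg{(\mathit{Derr})}$ or by a failed $\trg{(\mathit{D}\mathord{::}\gtys)}$, do reach the pushed ascriptions of $\trg{(\mathit{D}\mathord{::}\gtya)}$. For such pairs the less precise term terminates while the more precise one has no derivation at all. You therefore need error-propagation rules, or you must exclude such terms. The paper's type-safety proof tacitly uses one such rule, $\asc{\ev\, \errort{\gtys}}{\gtysp} \nreds{}{1} \dt{\pt{\errort{\gtysp}}}$.
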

\begin{proof}
  By strong induction on the step number. 
  % The proof follows the analogous way as~Theorem~\ref{dggb}. 
  \begin{case}[$\tv \gprec \tvp$]
    $\\
    \since{
      \inference[\trg{(Dv)}]{}
       {\rrctx[\phii[]] \tvp \nreds{}{1} \rctx[\cdot] \dt{\pt{\tvp}} }\\
    }\\
    \so{
      \rrctx[\phii[]] \tv \nreds{}{1} \rctx[\cdot] \dt{\pt{\tv}} 
    } $
  \end{case}
  \begin{case}[$\trg{ \asc{\ev \tv}{\gtys[2]} } \gprec \trg{\asc{\evp \tvp}{\gtysp}}$]
    $\\
    \since{\\
      \inference[\trg{(\mathit{D}\mathord{::}\gtys)}]{}
      { \rrctx[\phii[]] \trg{\asc{\evp[2](\asc{\evp[1] \tup}{\gtys[2]})}{\gtysp[2]}} \nreds{1}{1}  \rctx[\cdot]
        \begin{cases}
          \dt{\pt{\trg{(\asc{\evp[3] \tup }{\gtysp[2]})}}} &  \text{If}~ \evp[1] \trans{} \evp[2] = \evp[3]  \\
          \dt{\pt{\errort{\gtys[2]}}} & \text{otherwise}
        \end{cases}
      } \\
    }
    \so{
      \rrctx[\phii[]] \trg{\asc{\ev[2](\asc{\ev[1] \tup}{\gtys[1]})}{\gtysp[1]}} \nreds{1}{1}  \rctx[\cdot]
        \begin{cases}
          \dt{\pt{\trg{(\asc{\evp[3] \tup }{\gtysp[1]})}}} &  \text{If}~ \ev[1] \trans{} \ev[2] = \ev[3]  \\
          \dt{\pt{\errort{\gtys}}} & \text{otherwise}
        \end{cases}
    } $
  \end{case}
  \begin{case}[$\trg{\tm \ppsum[\phi[][1]] \tn} \gprec \trg{\tmp \ppsum[\phi[][2]] \tnp}$]
    $\\
    \since{\\
    \inference[\trg{(\mathit{D}\oplus)}]{\rrctx[\phi[][]] {\tmp} \nreds{}{\j[1]} \rctx[\phi[][1]] \V[1] &
    \rrctx[\phi[][]]  {\tnp} \nreds{}{\j[2]} \rctx[\phi[][2]] \V[2] & 
    \phi[][3] = \phi[][1] \land \phi[][2] \land \phi[][r] }
     {\rrctx[\phi[][]]  \trg{ {\tmp} \ppsum[\phi[][r]] {\tnp} } \nreds{}{\j[1]+\j[2]+1} \rctx[\phi[][3]] \p[1] \cdot \V[1] + \p[2] \cdot \V[2] }
    } \\
    \sentence{By the inductions hypothesis,}\\
    \so{
      \rrctx[\phip[][]] {\tm} \nreds{}{*} \rctx[\phip[][1]] \Vp[1]
    }\\
    \so{
      \rrctx[\phip[][]]  {\tn} \nreds{}{*} \rctx[\phip[][2]] \Vp[2]
    }\\
    \so{
      \rrctx[\phip[][]]  \trg{ {\tm} \ppsum[\phi[][l]] {\tn} } 
      \nreds{}{*} \rctx[\phi[][3]] \p[1] \cdot \Vp[1] + \p[2] \cdot \Vp[2]
    } $
  \end{case}
  \begin{case}[$\tv \; \tw \gprec \tvp \; \twp $]
      $\\
      \since{\\
       \inference[\trg{(\mathit{Dapp})}]{ 
      \rrctx[\phi[][2]] \trg{ \asc{\cdom(\evp[1])\tvp}{\gtysp[2]} } \nreds{}{1} \rctx[\phi['][2]] \dt{\pt{\twp}} &
      \rrctx[\phi['][2]]  \trg{(\asc{\ccod(\evp[1])\tmp[]}{\gtya[2]})[\twp / \tx] }  \nreds{}{*}  \rctx[\phi[''][2]] \V[2] }
     {\rrctx[\phi[][2]] \trg{(\asc{\evp[1] (\lambda \tx: \gtysp[2]. \tmp)}{\gtys[2] -> \gtya[2]}) \tvp} \nreds{}{*} \rctx[\phi[''][2]] \V[2] }
      } \\
      \sentence{By the inductions hypothesis,}\\
      \so{
        \rrctx[\phi[][1]] \trg{ \asc{\cdom(\ev[1])\tv}{\gtysp[1]} } \nreds{}{*} \rctx[\phi['][1]] \dt{\pt{\tw}}
      }\\
      \so{
        \rrctx[\phi['][1]]  \trg{(\asc{\ccod(\ev[1])\tm[]}{\gtya[1]})[\tw / \tx] }  \nreds{}{*}  \rctx[\phi[''][1]] \V[1]
      }\\
      \so{
        \rrctx[\phi[][1]] \trg{(\asc{\evp[1] (\lambda \tx: \gtysp[1]. \tm)}{\gtys[1] -> \gtya[1]}) \tv} \nreds{}{*} \rctx[\phi[''][1]] \V[1] 
      } $
    \end{case}
    \begin{case}[$ \trg{\lett{\tx}{\tm}{\tn}}  \gprec \trg{\lett{\tx}{\tmp}{\tnp}}$]
        $\\
        \since{\\
        \inference[\trg{(\mathit{Dlet})}]{
          \rrctx[\phi[][2]] \tmp \nreds{}{\j[1]} \rctx[\phi['][2]] \d{ \tvp[i'][\p[i']]}[i' \in \iiSet] &
        \forall i'. ~\rrctx[\phi['][2]] \ssub{\tnp[]}{\tvp[i']}{\tx} \nreds{}{\j[2]} \rctx[\phi[][i']] \Vp[i'] 
        }
        { \rctx[\phi[][2]] \trg{\lett{\tx}{\tmp}{\tnp}} \nreds{}{\j[1]+\j[2]+1} \rctx[(\bigwedge_{i \in \iSet} \phi[][i'])] \ssum[i' \in \iiSet] \p[i'] \cdot \Vp[i']  }
        } \\
        \sentence{By the inductions hypothesis,}\\
        \so{
          \rrctx[\phi[][1]] \tm \nreds{}{*} \rctx[\phi['][1]] \d{ \tv[i][\p[i]]}[i \in \iSet] 
        }\\
        \so{
          \forall i. ~\rctx[\phi['][1]] \ssub{\tn[]}{\tv[i]}{\tx} \nreds{}{*} \rctx[\phi[][i]] \V[i] 
        }\\
        \so{
          \rrctx[\phi[][2]] \trg{\lett{\tx}{\tm}{\tn}} \nreds{}{*} \rctx[(\bigwedge_{i \in \iSet} \phi[][i])] \ssum[i \in \iSet] \p[i] \cdot \V[i]
        } $
  \end{case}
  \begin{case}[$  \trg{\add{\tv}{\tw}} \gprec \trg{\add{\tvp}{\twp}} $]
    $\\
    \since{\\
     \inference[\trg{(+)}]{ \evp[1] \trans{} \evp[2] = \evp[3] & \trg{r_3} = \trg{r_1} + \trg{r_2}   }
     { \rrctx[\phip] \trg{\add{\evp[1] \asc{r_1}{\rtype} }{ \asc{\evp[2] r_2}{\rtype} }} \nreds{}{1}  \rctx[\phip] \dt{\pt{\trg{\asc{\evp[3] r_3}{\rtype}}}}
     } 
    } \\
    \so{
      \rrctx[\phi] \trg{\add{\ev[1] \asc{r_1}{\rtype} }{ \asc{\ev[2] r_2}{\rtype} }} \nreds{}{1}  \rctx[\phi] \dt{\pt{\trg{\asc{\ev[3] r_3}{\rtype}}}}
    } $
\end{case}
\begin{case}[if]
  $\\
  \since{\\
        \inference[\trg{(if)}]{
        \rrctx[\phip]  \tmp \nreds{}{\j} \rctx[\phip]  \Vp 
    }{ \rrctx[\phip] \trg{\ite{ \asc{\evp \ttt}{\btype} }{\tmp}{\tnp}} 
        \nreds{}{*}  \rctx[\phip]  \Vp \\
    }
  } \\
  \since{\\
  \inference[\trg{(if)}]{
  \rrctx[\phip]  \tnp \nreds{}{\j} \rctx[\phip]  \Vp 
  }{ \rrctx[\phip] \trg{\ite{ \asc{\evp \fff}{\btype} }{\tmp}{\tnp}} 
    \nreds{}{*}  \rctx[\phip]  \Vp \\
  }
  } \\
  \sentence{By the inductions hypothesis,}\\
  \so{
    \rrctx[\phi]  \tm \nreds{}{*} \rctx[\phi]  \V ~ \text{if} \;  \trg{\text{true}}
  }\\
  \so{
    \rrctx[\phi]  \tn \nreds{}{*} \rctx[\phi]  \V ~ \text{if} \; \trg{\text{false}}
  }\\$
  The result holds. 
\end{case}
\begin{case}[$\trg{ \asc{\evd \tm}{\gtyb} } \gprec \trg{\asc{\evdp \tmp}{\gtybp}}$]
  $\\
  \since{\\
  \inference[\trg{(\mathit{D}\mathord{::}\gtya)}]{
    \rrctx[\phip[][1]] \tmp \nreds{}{\kp} \rctx[\phip[][1]] \d{\tv[i'][\p[i']]}[i' \in \iiSet]
    &  |-d \phty{\phip[][1]}{\d{\tv[i'][\p[i']]}[i' \in \iiSet] }: \gtyap[2] \\
    %\evdp = \initReorder{\gtyap}{\gtya} &
    \evd[2] |- \gtya[2] \rel \gtybp
     &
     \gtybp = \phty{\pphip[3]}{ \d{\gtysp[j'][\p[j']]}[j' \in \jjSet]}
    }
    { \rrctx[\phip[][1]] \trg{ (\asc{\evd \tmp}{ \gtybp }) } \nreds{}{\kp+1}
    \rctx[\phip[][2] ]  
    \begin{cases}
      \ssum[k' \in \kSet] 
      \cww[k'] \cdot \V[k']  
      & \begin{block}
        \text{If}~ (\initReorder{\gtyap[2]}{\gtya[2]}) \trans{} \evd[2] = \phty{\pphip[2]}{\d{\ev[k'][\cww[k']]}[k' \in \kSet]}\\
        \text{where}~  \forall k' \in \kSet,   
         i' = \projl{\cww[k']}, j' = \projr{\cww[k']}. \\
         \trg{ (\asc{\ev[k'] \tv[i']}{\gtysp[j']}) } \nreds{}{1} \rctx[\cdot] \V[k'] 
      \end{block}\\
      \errort{\gtybp} & \text{otherwise}
    \end{cases}
    } 
  } \\
  \sentence{Suppose}~{\evdp}={\initReorder{\gtyap[2]}{\gtya[2]}}, 
  \gtya[1] = \dt{\gtys[l][\p[l]]} \ad \gtyap= \dt{\gtys[i'][\p[i']]} \\
  \sentence{By the induction hypothesis,} \\
  \so{
    \rrctx[\phi[][1]] \tm \nreds{}{\kp} \rctx[\phi[][1]]
    \d{ \tv[i][\p[i]] }[i \in \iSet]
  } \\
  \so{
    \rrctx[\phi[][1]] \trg{ (\asc{\evd \tm}{ \gtyb }) } \nreds{}{\kp+1}
    \rctx[\phi[][2] ]  
    \begin{cases}
      \ssum[\kk \in \kSet] 
      \cww[k] \cdot \V[\kk]  
      & \begin{block}
        \text{If}~ (\initReorder{\gtyap[1]}{\gtya[1]}) \trans{} \evd[1] = \phty{\pphi[2]}{\d{\ev[k][\cww[k]]}[k \in \kSet]}\\
        \text{where}~  \forall k \in \kSet,   
         i = \projl{\cww[k]}, j = \projr{\cww[k]}. 
         \trg{ (\asc{\ev[\kk] \tv[i]}{\gtysp[j]}) } \nreds{}{1} \rctx[\cdot] \V[\kk] 
      \end{block}\\
      \errort{\gtyb} & \text{otherwise}
    \end{cases}
  }\\
 $
  The result holds.
\end{case}
\end{proof}

\begin{theorem}[Dynamic Gradual Guarantee of \tlang]
  $\forall \kk, \tm \gprec \tn ,$
  \begin{enumerate}
    \item $\rrctx[\phi[][1]] \tm \nreds{}{\kk} \rctx[\phi[][1]] \V $
    then 
    $ \rrctx[\phi[][2]] \tn \nreds{}{*}  \rctx[\phi[][2]] \Vp \land \phty{\phi[][1]}{\V} \gprec \phty{\phi[][2]}{\Vp} $.
    \item 
    $\rrctx[\phi[][1]] \tm \Uparrow$
    then
    $\rrctx[\phi[][2]] \tn \Uparrow$
   \end{enumerate}
  \label{theorem:target-dgg}
\end{theorem}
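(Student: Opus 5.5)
The plan is to obtain the two parts of the statement directly from the two auxiliary results just established, Theorem~\ref{dgga} and Theorem~\ref{dggb}. These have to be combined with the typing facts needed to invoke them. Throughout, I assume (as in Theorem~\ref{theorem:dgg}) that $\tm$ and $\tn$ are well-typed, say $|-d \tm : \gtya$ and $|-d \tn : \gtyb$. Where precision does not directly provide well-typedness of $\tn$, I would recover it by the standard argument: induction on the precision derivation of Figure~\ref{fig:term-precision2}, using Lemma~\ref{conp} for the ascription cases and Lemma~\ref{meet-more-precise} together with Lemma~\ref{mono} for evidences.

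For part (1), fix $\kk$ and suppose $\rrctx[\phi[][1]] \tm \nreds{}{\kk} \rctx[\phi[][1]] \V$. Theorem~\ref{dgga} then gives directly some $\phi[][2]$ and $\Vp$ with $\rrctx[\phi[][2]] \tn \nreds{}{*} \rctx[\phi[][2]] \Vp$ and $\phty{\phi[][1]}{\V} \gprec \phty{\phi[][2]}{\Vp}$. Underneath, that is a strong induction on $\kk$ by cases on $\tm \gprec \tn$. Its substantive cases are the following.
\begin{itemize}
\item $\<let>$ and application: these use Lemma~\ref{subep}, that substitution preserves precision.
\item $\trg{(\mathit{D}\mathord{::}\gtys)}$: this uses Lemma~\ref{mono}, so that definedness of $\ev[1] \trans{} \ev[2]$ transfers to the less precise side.
\item $\trg{(\mathit{D}\mathord{::}\gtya)}$: this combines Lemma~\ref{initreorder}, Lemma~\ref{mono} and Lemma~\ref{setprec}, which lift the per-branch precision of results to the weighted sums.
\end{itemize}
Error terms on the left are handled by the rule $\errort{\gtya} \gprec \tm$ together with rule (Derr).

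For part (2), I argue by contraposition using Theorem~\ref{dggb}. Suppose $\rrctx[\phi[][1]] \tm \Uparrow$ but $\tn$ does not diverge, so $\rrctx[\phi[][2]] \tn \nreds{}{\kk} \rctx[\phip[][1]] \Vp$ for some $\kk$. Theorem~\ref{dggb} then yields a reduction $\rrctx \tm \nreds{}{*} \rctx \V$, contradicting divergence of $\tm$. This theorem is again a strong induction on the step count of the less precise term, by cases on precision. It mirrors each reduction rule applied to $\tn$ with the corresponding rule on $\tm$; in the ascription cases the more precise term may instead step to an error distribution, which is still a terminating reduction.

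The main obstacle is the $\trg{(\mathit{D}\mathord{::}\gtya)}$ case underlying part (1), because it must align three couplings. These are the reordering witness $\initReorder{\gtyap}{\gtya}$, the precision coupling between the two result distributions, and the coupling witnessing $\evd[1] \trans{} \evdp[1] \gprec \evd[2] \trans{} \evdp[2]$. The difficulty is to show that the indices $\kk$, $\kkp$ of the combined evidences are related by a coupling whose marginals are $\cww[k]$ and $\cww[k']$, so that Lemma~\ref{setprec} applies to $\sum_{\kk}\cww[k]\cdot\V[\kk]$ versus $\sum_{\kkp}\cww[k']\cdot\Vp[\kkp]$. I would first try taking the product-normalised composition of the three couplings, as in the proof of Lemma~\ref{mono}, and check its marginals. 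A further risk is that the more precise side fails $\trans{}$ while the less precise side succeeds; there the left result is $\errort{\gtyb}$, and precision follows from the error rule together with the typing of the right result obtained from type safety (Theorem~\ref{lemma:typesafety}).
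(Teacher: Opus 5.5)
Your proposal follows the paper's proof: part (1) is read off directly from Theorem~\ref{dgga}, and part (2) is the contrapositive of Theorem~\ref{dggb}. The paper cites both auxiliary theorems for part (2), but only the latter is needed.

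Drop your fallback of recovering $\vdash \tn : \gtyb$ by induction on $\tm \gprec \tn$, because it fails. The ascription rule of Figure~\ref{fig:term-precision2} only requires $\ev \gprec \evp$ and $\gtys \gprec \gtysp$. For example, $\ev = \gtys = \gtysp = \rtype$ with $\evp = \?$ yields a less precise ascription whose evidence does not justify its consistency judgment ($\? \not\gprec \rtype$). Typability of $\tn$ must therefore stay a hypothesis, as in Theorem~\ref{theorem:dgg}.
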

\begin{proof}
  $~$
  \begin{enumerate}
    \item The proof follows by Theorem~\ref{dgga}.
    \item It is the corollary of  Theorem~\ref{dgga} and  Theorem~\ref{dggb}.
    We do not prove directly because it is easier to prove the equivalent formulation.
  \end{enumerate}
\end{proof}

\label{lastpage01}

\end{document}